\newcommand{\btp}{\begin{tikzpicture}[baseline=-.25em,scale=0.25,line width=0.7pt]}
\newcommand{\etp}{\end{tikzpicture}}
\numberwithin{equation}{section}
\newtheorem{thrm}{Theorem}[section]
\newtheorem{prop}[thrm]{Proposition}
\newtheorem{crl}[thrm]{Corollary}
\newtheorem{lemma}[thrm]{Lemma}
\newtheorem{result}[thrm]{Result}	
\newtheorem{conj}[thrm]{Conjecture}
\theoremstyle{definition}
\newtheorem{defn}[thrm]{Definition}
\newtheorem{exam}[thrm]{Example}
\theoremstyle{remark}
\newtheorem{rmk}[thrm]{Remark}
\newcommand{\rmkend}{\ensuremath{\diameter}}
\newcommand{\examend}{\ensuremath{\diameter}}
\newcommand{\defnend}{\ensuremath{\diameter}}
\setlist[itemize,1]{leftmargin=.4in}
\setlist[enumerate,1]{leftmargin=.4in,label=(\roman*)}
\setlist[description,1]{leftmargin=.4in,font=\normalfont\itshape}
\newcommand{\nc}{\newcommand}
\newcommand{\rnc}{\renewcommand}
\nc{\al}{\alpha}
\nc{\eps}{\epsilon}
\nc{\veps}{\varepsilon}
\nc{\ga}{\gamma}
\nc{\Ga}{\Gamma}
\nc{\ka}{\kappa}
\nc{\la}{\lambda}
\nc{\La}{\Lambda}
\nc{\del}{\delta}
\nc{\om}{\omega}
\nc{\si}{\sigma}
\nc{\Ups}{\upsilon}
\nc{\vphi}{\varphi}
\nc{\id}{\mathrm{id}}
\nc{\Id}{\mathrm{Id}}
\nc{\gr}{\mathrm{gr}}
\rnc{\t}{\mathrm{t}}
\nc{\rk}{\mathrm{rank}}
\nc{\flL}{\phi_1}
\nc{\flR}{\phi_2}
\nc{\flLR}{\phi_{12}}
\nc{\bmodN}{\bmod \hspace{-3pt} '}
\nc{\Ug}{U\mathfrak{g}}
\nc{\Ub}{U\mathfrak{b}}
\nc{\ud}{\underline}
\nc{\tl}{\tilde}
\nc{\mbA}{\mathbf{A}}
\nc{\mbb}{\mathbf{b}}
\nc{\mbB}{\mathbf{B}}
\nc{\mbc}{\mathbf{c}}
\nc{\mbC}{\mathbf{C}}
\nc{\mbd}{\mathbf{d}}
\nc{\mbD}{\mathbf{D}}
\nc{\mbe}{\mathbf{e}}
\nc{\mbE}{\mathbf{E}}
\nc{\mbf}{\mathbf{f}}
\nc{\mbF}{\mathbf{F}}
\nc{\mbg}{\mathbf{g}}
\nc{\mbH}{\mathbf{H}}
\nc{\mbh}{\mathbf{h}}
\nc{\mbi}{\mathbf{i}}
\nc{\mbI}{\mathbf{I}}
\nc{\mbj}{\mathbf{j}}
\nc{\mbJ}{\mathbf{J}}
\nc{\mbk}{\mathbf{k}}
\nc{\mbK}{\mathbf{K}}
\nc{\mbL}{\mathbf{L}}
\nc{\mbM}{\mathbf{M}}
\nc{\mbQ}{\mathbf{Q}}
\nc{\mbq}{\mathbf{q}}
\nc{\mbr}{\mathbf{r}}
\nc{\mbT}{\mathbf{T}}
\nc{\mbu}{\mathbf{u}}
\nc{\mbU}{\mathbf{U}}
\nc{\mbv}{\mathbf{v}}
\nc{\mbV}{\mathbf{V}}
\nc{\mbw}{\mathbf{w}}
\nc{\mbW}{\mathbf{W}}
\nc{\mbX}{\mathbf{X}}
\nc{\mbY}{\mathbf{Y}}
\nc{\mbZ}{\mathbf{Z}}
\nc{\mbbA}{\mathbb{A}}
\nc{\mbbB}{\mathbb{B}}
\nc{\mbbD}{\mathbb{D}}
\nc{\mbbF}{\mathbb{F}}
\nc{\mbbV}{\mathbb{V}}
\nc{\mbbH}{\mathbb{H}}
\nc{\mbbK}{\mathbb{K}}
\nc{\mbbL}{\mathbb{L}}
\nc{\mbbP}{\mathbb{P}}
\nc{\mbbU}{\mathbb{U}}
\nc{\mcA}{\mathcal{A}}
\nc{\mcB}{\mathcal{B}}
\nc{\mcC}{\mathcal{C}}
\nc{\mcD}{\mathcal{D}}
\nc{\mcE}{\mathcal{E}}
\nc{\mcF}{\mathcal{F}}
\nc{\mcH}{\mathcal{H}}
\nc{\mcK}{\mathcal{K}}
\nc{\mcO}{\mathcal{O}}
\nc{\mcQ}{\mathcal{Q}}
\nc{\mcS}{\mathcal{S}}
\nc{\mcP}{\mathcal{P}}
\nc{\mcU}{\mathcal{U}}
\nc{\mcT}{\mathcal{T}}
\nc{\mcV}{\mathcal{V}}
\nc{\mcY}{\mathcal{Y}}
\nc{\mcZ}{\mathcal{Z}}
\nc{\mfa}{\mathfrak{a}}
\nc{\mfA}{\mathfrak{A}}
\nc{\mfb}{\mathfrak{b}}
\nc{\mfB}{\mathfrak{B}}
\nc{\mfC}{\mathfrak{C}}
\nc{\mfd}{\mathfrak{d}}
\nc{\mfD}{\mathfrak{D}}
\nc{\mfe}{\mathfrak{e}}
\nc{\mfE}{\mathfrak{E}}
\nc{\mff}{\mathfrak{f}}
\nc{\mfF}{\mathfrak{F}}
\nc{\mfg}{\mathfrak{g}}
\nc{\mfgl}{\mathfrak{g}\mathfrak{l}}
\nc{\mfh}{\mathfrak{h}}
\nc{\mfH}{\mathfrak{H}}
\nc{\mfJ}{\mathfrak{J}}
\nc{\mfk}{\mathfrak{k}}
\nc{\mfK}{\mathfrak{K}}
\nc{\mfl}{\mathfrak{l}}
\nc{\mfL}{\mathfrak{L}}
\nc{\mfM}{\mathfrak{M}}
\nc{\mfm}{\mathfrak{m}}
\nc{\mfn}{\mathfrak{n}}
\nc{\mfN}{\mathfrak{N}}
\nc{\mfo}{\mathfrak{o}}
\nc{\mfP}{\mathfrak{P}}
\nc{\mfQ}{\mathfrak{Q}}
\nc{\mfS}{\mathfrak{S}}
\nc{\mfsl}{\mathfrak{s}\mathfrak{l}}
\nc{\mfso}{\mathfrak{s}\mathfrak{o}}
\nc{\mfsp}{\mathfrak{s}\mathfrak{p}}
\nc{\mft}{\mathfrak{t}}
\nc{\mfU}{\mathfrak{U}}
\nc{\mfu}{\mathfrak{u}}
\nc{\mfUqsl}{\mathfrak{U}_q\mathfrak{sl}}
\nc{\mfUsl}{\mathfrak{Usl}}
\nc{\mfV}{\mathfrak{V}}
\nc{\mfX}{\mathfrak{X}}
\nc{\mfY}{\mathfrak{Y}}
\nc{\mfz}{\mathfrak{z}}
\nc{\mfgf}{\mathfrak{g}^{\rm fin}}
\nc{\mrmd}{\mathrm{d}}
\nc{\sal}{\check{\alpha}}
\nc{\cbeta}{\check{\beta}}
\nc{\cd}{\check{d}}
\nc{\cf}{\check{f}}
\nc{\cdelta}{\check{\delta}}
\nc{\ccr}{\check{r}}
\nc{\cs}{\check{s}}
\nc{\bv}{\breve{v}}
\nc{\tc}{\tilde{c}}
\nc{\tr}{\tilde{r}}
\nc{\ts}{\tilde{s}}
\nc{\tv}{\tilde{v}}
\nc{\Aut}{{\rm Aut}}
\nc{\Out}{{\rm Out}}
\nc{\sgn}{{\rm sgn}}
\rnc{\mod}{{\rm mod}}
\nc{\msA}{\mathsf{A}}
\nc{\msB}{\mathsf{B}}
\nc{\msC}{\mathsf{C}}
\nc{\msc}{\mathsf{c}}
\nc{\msD}{\mathsf{D}}
\nc{\msd}{\mathsf{d}}
\nc{\mse}{\mathsf{e}}
\nc{\msw}{\mathsf{w}}
\nc{\msq}{\mathsf{q}}
\nc{\msg}{\mathsf{g}}
\nc{\msE}{\mathsf{E}}
\nc{\msf}{\mathsf{f}}
\nc{\msF}{\mathsf{F}}
\nc{\msh}{\mathsf{h}}
\nc{\msH}{\mathsf{H}}
\nc{\msI}{\mathsf{I}}
\nc{\msJ}{\mathsf{J}}
\nc{\msK}{\mathsf{K}}
\nc{\msL}{\mathsf{L}}
\nc{\msP}{\mathsf{P}}
\nc{\msQ}{\mathsf{Q}}
\nc{\msR}{\mathsf{R}}
\nc{\mss}{\mathsf{s}}
\nc{\msS}{\mathsf{S}}
\nc{\msT}{\mathsf{T}}
\nc{\msU}{\mathsf{U}}
\nc{\msV}{\mathsf{V}}
\nc{\msX}{\mathsf{X}}
\nc{\msY}{\mathsf{Y}}
\nc{\msZ}{\mathsf{Z}}
\nc{\End}{\mathrm{End}}
\nc{\Ext}{\mathrm{Ext}}
\nc{\GL}{\mathrm{GL}}
\nc{\Hom}{\mathrm{Hom}}
\nc{\Ima}{\mathrm{Image}}
\nc{\Ind}{\mathrm{Ind}}
\nc{\Ker}{\mathrm{Ker}}
\nc{\RHom}{\mathrm{RHom}}
\nc{\EndCq}{\mathrm{End}\bb{C}(q)}
\nc{\GSat}{\mathrm{GSat}}
\nc{\Sat}{\mathrm{Sat}}
\nc{\WSat}{\mathrm{WSat}}
\nc{\Cyc}{\mathrm{Cyc}}
\nc{\Dih}{\mathrm{Dih}}
\nc{\Sym}{\mathrm{Sym}}
\nc{\gim}{{g.i.m.}}
\nc{\mtc}{\mathtt{c}}
\nc{\mtD}{\mathtt{D}}
\nc{\mte}{\mathtt{e}}
\nc{\mtE}{\mathtt{E}}
\nc{\mtf}{\mathtt{f}}
\nc{\mtF}{\mathtt{F}}
\nc{\mth}{\mathtt{h}}
\nc{\mtH}{\mathtt{H}}
\nc{\mtV}{\mathtt{V}}
\nc{\mtX}{\mathtt{X}}
\nc{\mty}{\mathtt{y}}
\nc{\ddeg}{\mathtt{deg}}
\nc{\dimm}{\mathtt{dim}}
\nc{\lmod}{\mathtt{lmod}}
\nc{\opp}{\mathtt{opp}}
\nc{\rmod}{\mathtt{rmod}}
\nc{\mmod}{\mathrm{mod}}
\nc{\nbh}{\mathrm{nbh}}
\nc{\mf}{\mathfrak}
\nc{\mc}{\mathcal}
\nc{\ms}{\mathsf}
\nc{\bb}{\mathbb}
\nc{\lrh}{\leftrightharpoons}
\nc{\iso}{\stackrel{\sim}{\longrightarrow}}
\nc{\liso}{\stackrel{\sim}{\longleftarrow}}
\nc{\wh}{\widehat}
\nc{\wt}{\widetilde}
\nc{\lra}{\longrightarrow}
\nc{\ra}{\rightarrow}
\nc{\into}{\hookrightarrow}
\nc{\onto}{\twoheadrightarrow}
\nc{\N}{\mathbb{N}}
\nc{\Z}{\mathbb{Z}}
\nc{\Q}{\mathbb{Q}}
\nc{\R}{\mathbb{R}}
\nc{\C}{\mathbb{C}}
\nc{\K}{\mathbb{K}}
\nc{\A}{\mathbb{A}}
\nc{\RT}{\mathscr{T}}
\nc{\SW}{\mathsf{SW}}
\nc{\ot}{\otimes}
\nc{\op}{\oplus}
\nc{\ol}{\overline}
\nc{\un}{\underline}
\nc{\lan}{\langle}
\nc{\ran}{\rangle}
\nc{\Xp}{x_{j_1,\ldots,j_p}^{-1}}
\nc{\Xeta}{x_{\eta_1,\ldots,\eta_e}^{-1}}
\nc{\Dj}{{\text{\it\tiny DJ}}}
\nc{\UQ}{U_q(\mfh')_\Theta}
\nc{\Usl}{\mfU_q\mfsl_2}
\nc{\Asl}{\mfU_q\wh{\mfsl}_2}
\nc{\Ugl}{\mfU_q\mfgl_2}
\nc{\Agl}{\mfU_q\wh{\mfgl}_2}
\nc{\UEgl}{\mfU_q^{ext}\mfgl_2}
\nc{\AEgl}{\mfU_q^{ext}\wh{\mfgl}_2}
\nc{\UDsl}{\mfU_q^{\text{\tiny DJ}}\mfsl_2}
\nc{\ADsl}{\mfU_q^{\text{\tiny DJ}}\wh{\mfsl}_2}
\nc{\bT}{\bar{T}}
\nc{\bt}{\bar{t}}
\nc{\cT}{\mathcal{T}}
\nc{\bcT}{\bar{\mathcal{T}}}
\nc{\bct}{\bar{\tau}}
\nc{\spl}[1]{\begin{align}\begin{split}#1\end{split}\end{align}}
\nc{\eqd}[1]{\begin{equation}\begin{aligned}#1\end{aligned}\end{equation}}
\nc{\eqa}[1]{\begin{align}#1\end{align}}
\nc{\eqg}[1]{\begin{gather}#1\end{gather}}
\nc{\eq}[1]{\begin{equation}#1\end{equation}}
\nc{\eqn}[1]{\begin{align*}#1\end{align*}}
\nc{\casesl}[3][1.2]{\ensuremath{\begingroup \rnc{\arraystretch}{#1} \, {\left\{ \begin{array}{@{}#2@{}} #3 \end{array} \right.} \, \endgroup}}
\nc{\casesm}[3][1.2]{\ensuremath{\begingroup \rnc{\arraystretch}{#1} {\begin{array}{@{}#2@{}} #3 \end{array}} \endgroup }}
\nc{\casesr}[3][1.2]{\ensuremath{\begingroup \rnc{\arraystretch}{#1} \, {\left. \begin{array}{@{}#2@{}} #3 \end{array} \right\}} \,  \endgroup}}
\nc{\tx}[1]{\qu\text{#1}\qu}
\nc{\eqrefs}[2]{\text{(\ref{#1}-\ref{#2})}}
\nc{\didi}[2]{\ensuremath{[#1\!:\!#2]}}
\nc{\red}{\color{red}}
\nc{\blu}{\color{blue}}
\nc{\br}{\color{brown}}
\nc{\grn}{\color{green!55!black}}
\nc{\gry}{\color{gray}}
\nc\el{\nonumber\\}
\nc\nn{\nonumber}
\nc\Tr{{\rm Tr}}
\nc{\sm}[1]{\text{\tiny{\rm #1}}}
\nc{\npb}{\nopagebreak}
\nc{\smallsum}{\text{\small$\sum\hspace{.5mm}$}}
\nc\tdeg{\stackrel{\sim}{\smash{\deg}\rule{0pt}{1.1ex}}}
\nc{\adr}{{\rm ad}_\mathrm{r}\,}
\nc{\adl}{{\rm ad}_\mathrm{l}\,}
\nc{\ad}{{\rm ad}}
\nc{\Ad}{{\rm Ad}}
\nc{\qu}{\quad}
\nc{\qq}{\qquad}
\rnc{\d}[1]{\frac{\rm d}{{\rm d}#1}}
\nc{\litem}[1]{\medskip\noindent\hspace{-13pt} $\circ$ #1:}
\renewcommand{\,}{\kern 0.1em} 
\nc{\shape}[1]{\begin{minipage}[c]{10mm} \vspace{1pt} \begin{tikzpicture}[scale=.18] #1 \end{tikzpicture} \vspace{1pt} \end{minipage}}
\def\evenAalt{\draw (0,0) rectangle (3,3) ;\foreach \x in {0,.5,1,1.5,2,2.5} {\fill (\x,3-\x) +(.25,0) rectangle ++(.5,-.25);\fill (\x,3-\x) +(0,-.25) rectangle ++(.25,-.5);}}
\def\evenArot{\draw (0,0) rectangle (3,3) ;\foreach \x in {0,.25,.5,.75,1,1.25} {\fill (1.5+\x,3-\x) +(0,0) rectangle ++(.25,-.25);\fill (\x,1.5-\x) +(0,0) rectangle ++(.25,-.25);}}
\def\evenantidiag{\draw (0,0) rectangle (3,3) ;\foreach \x in {0,.25,.5,.75,1,1.25,1.5,1.75,2,2.25,2.5,2.75} {\fill (\x,\x) +(0,0) rectangle ++(.25,.25);}}
\def\evendiag{\draw (0,0) rectangle (3,3) ;\foreach \x in {0,.25,.5,.75,1,1.25,1.5,1.75,2,2.25,2.5,2.75} {\fill (\x,3-\x) +(0,0) rectangle ++(.25,-.25);}}
\def\evendiagoutblocks{\draw (0,0) rectangle (3,3) ;\foreach \x in {.5,.75,1,1.25,1.5,1.75,2,2.25} {\fill (\x,3-\x) +(0,0) rectangle ++(.25,-.25);}; \foreach \x in {.0,.25,2.5,2.75} {\fill (\x,0) +(0,0) rectangle ++(.25,.25);\fill (\x,.25) +(0,0) rectangle ++(.25,.25);\fill (\x,2.5) +(0,0) rectangle ++(.25,.25);\fill (\x,2.75) +(0,0) rectangle ++(.25,.25);}}
\def\evendiagmidblocks{\draw (0,0) rectangle (3,3) ;\foreach \x in {0,.25,1,1.25,1.5,1.75,2.5,2.75} {\fill (\x,3-\x) +(0,0) rectangle ++(.25,-.25);}; \foreach \x in {.5,.75,2,2.25} {\fill (\x,.5) +(0,0) rectangle ++(.25,.25);\fill (\x,.75) +(0,0) rectangle ++(.25,.25);\fill (\x,2) +(0,0) rectangle ++(.25,.25);\fill (\x,2.25) +(0,0) rectangle ++(.25,.25);}}
\def\evendiagalt{\draw (0,0) rectangle (3,3) ;\foreach \x in {0,.25,.5,.75,1,1.25,1.5,1.75,2,2.25,2.5,2.75} {\fill (\x,3-\x) +(0,0) rectangle ++(.25,-.25);} \foreach \x in {0,.5,1,1.5,2,2.5}{\fill(\x,\x+.25) +(0,0) rectangle ++(.25,.25);\fill(\x+.25,\x) +(0,0) rectangle ++(.25,.25);}}
\def\evendiagaltin{\draw (0,0) rectangle (3,3) ;\foreach \x in {0,.25,.5,.75,1,1.25,1.5,1.75,2,2.25,2.5,2.75} {\fill (\x,3-\x) +(0,0) rectangle ++(.25,-.25);} \foreach \x in {0,.5,2,2.5}{\fill(\x,\x+.25) +(0,0) rectangle ++(.25,.25);\fill(\x+.25,\x) +(0,0) rectangle ++(.25,.25);}}
\def\evendiagaltinout{ \draw (0,0) rectangle (3,3) ;\foreach \x in {0,.25,.5,.75,1,1.25,1.5,1.75,2,2.25,2.5,2.75} {\fill (\x,3-\x) +(0,0) rectangle ++(.25,-.25);} \foreach \x in {.25,.75,1.75,2.25}{\fill(\x,\x+.25) +(0,0) rectangle ++(.25,.25);\fill(\x+.25,\x) +(0,0) rectangle ++(.25,.25);} }
\def\evenalt{\draw (0,0) rectangle (3,3); \foreach \x in {0,.5,1,1.5,2,2.5}{\fill(\x,\x+.25) +(0,0) rectangle ++(.25,.25);\fill(\x+.25,\x) +(0,0) rectangle ++(.25,.25);}}
\def\evencross{\draw (0,0) rectangle (3,3) ;\foreach \x in {0,.25,.5,.75,1,1.25,1.5,1.75,2,2.25,2.5,2.75} {\fill (\x,3-\x) +(0,0) rectangle ++(.25,-.25);\fill (\x,\x) +(0,0) rectangle ++(.25,.25);}}
\def\evencrossin{\draw (0,0) rectangle (3,3) ;\foreach \x in {0,.25,.5,.75,2,2.25,2.5,2.75} {\fill (\x,3-\x) +(0,0) rectangle ++(.25,-.25);\fill (\x,\x) +(0,0) rectangle ++(.25,.25);}; \foreach \x in {1,1.25,1.5,1.75} {\fill (\x,3-\x) +(0,0) rectangle ++(.25,-.25);}}
\def\evencrossinout{\draw (0,0) rectangle (3,3) ;\foreach \x in {.5,.75,2,2.25} {\fill (\x,3-\x) +(0,0) rectangle ++(.25,-.25);\fill (\x,\x) +(0,0) rectangle ++(.25,.25);}; \foreach \x in {0,.25,1,1.25,1.5,1.75,2.5,2.75} {\fill (\x,3-\x) +(0,0) rectangle ++(.25,-.25);} }
\def\evencrossinshift{\draw (0,0) rectangle (3,3) ;\foreach \x in {0,.25,.5,2,2.25,2.5} {\fill (\x,3-\x) +(0,0) rectangle ++(.25,-.25);\fill (\x,.25+\x) +(0,0) rectangle ++(.25,.25);};\foreach \x in {.75,1,1.25,1.5,1.75,2.75} {\fill (\x,3-\x) +(0,0) rectangle ++(.25,-.25);}}
\def\evendoublecross{\draw (0,0) rectangle (3,3) ;\foreach \x in {0,.25,.5} {\fill (\x,1.5+\x) +(0,0) rectangle ++(.25,.25); \fill(1.5-\x,3-\x) +(0,0) rectangle ++(-.25,-.25); \fill (1.5+\x,\x) + (0,0) rectangle ++ (.25,.25);\fill (3-\x,1.5-\x) + (0,0) rectangle ++ (-.25,-.25);} \foreach \x in {0,.25,.5,.75,1,1.25,1.5,1.75,2,2.25,2.5,2.75} {\fill (\x,3-\x) +(0,0) rectangle ++(.25,-.25);}}
\def\evendoublecrossin{\draw (0,0) rectangle (3,3) ;\foreach \x in {0,.25} {\fill (\x,1.5+\x) +(0,0) rectangle ++(.25,.25); \fill(1.5-\x,3-\x) +(0,0) rectangle ++(-.25,-.25); \fill (1.5+\x,\x) + (0,0) rectangle ++ (.25,.25);\fill (3-\x,1.5-\x) + (0,0) rectangle ++ (-.25,-.25);} \foreach \x in {0,.25,.5,.75,1,1.25,1.5,1.75,2,2.25,2.5,2.75} {\fill (\x,3-\x) +(0,0) rectangle ++(.25,-.25);}}
\def\evendoublecrossinbig{\draw (0,0) rectangle (3,3) ;\foreach \x in {0} {\fill (\x,1.5+\x) +(0,0) rectangle ++(.25,.25); \fill(1.5-\x,3-\x) +(0,0) rectangle ++(-.25,-.25); \fill (1.5+\x,\x) + (0,0) rectangle ++ (.25,.25);\fill (3-\x,1.5-\x) + (0,0) rectangle ++ (-.25,-.25);} \foreach \x in {0,.25,.5,.75,1,1.25,1.5,1.75,2,2.25,2.5,2.75} {\fill (\x,3-\x) +(0,0) rectangle ++(.25,-.25);}}
\def\odddiag{\draw (0,0) rectangle (3.25,3.25) ;\foreach \x in {0,.25,.5,.75,1,1.25,1.5,1.75,2,2.25,2.5,2.75,3} {\fill (\x,3.25-\x) +(0,0) rectangle ++(.25,-.25);}}
\def\odddiagoutblocks{\draw (0,0) rectangle (3.25,3.25) ;\foreach \x in {.5,.75,1,1.25,1.5,1.75,2,2.25,2.5} {\fill (\x,3.25-\x) +(0,0) rectangle ++(.25,-.25);}; \foreach \x in {.0,.25,2.75,3} {\fill (\x,0) +(0,0) rectangle ++(.25,.25);\fill (\x,.25) +(0,0) rectangle ++(.25,.25);\fill (\x,2.75) +(0,0) rectangle ++(.25,.25);\fill (\x,3) +(0,0) rectangle ++(.25,.25);}}
\def\odddiagmidblocks{\draw (0,0) rectangle (3.25,3.25) ;\foreach \x in {0,.25,.5,.75,1,1.25,1.5,1.75,2,2.25,2.5,2.75,3} {\fill (\x,3.25-\x) +(0,0) rectangle ++(.25,-.25);}; \foreach \x in {.75,1,2,2.25} {\fill (\x,.75) +(0,0) rectangle ++(.25,.25);\fill (\x,1) +(0,0) rectangle ++(.25,.25);\fill (\x,2) +(0,0) rectangle ++(.25,.25);\fill (\x,2.25) +(0,0) rectangle ++(.25,.25);}}
\def\odddiaginblock{\draw (0,0) rectangle (3.25,3.25) ;\foreach \x in {0,.25,.5,.75,1,2,2.25,2.5,2.75,3} {\fill (\x,3.25-\x) +(0,0) rectangle ++(.25,-.25);}; \foreach \x in {1.25,1.5,1.75} {\fill (\x,1.25) +(0,0) rectangle ++(.25,.25);\fill (\x,1.5) +(0,0) rectangle ++(.25,.25);\fill (\x,1.75) +(0,0) rectangle ++(.25,.25);}}
\def\oddcrossin{\draw (0,0) rectangle (3.25,3.25) ;\foreach \x in {0,.25,.5,.75,1,2,2.25,2.5,2.75,3} {\fill (\x,3.25-\x) +(0,0) rectangle ++(.25,-.25);\fill (\x,\x) +(0,0) rectangle ++(.25,.25);} ;\foreach \x in {1,1.25,1.5,1.75} {\fill (\x,3.25-\x) +(0,0) rectangle ++(.25,-.25);}}
\def\oddcrossinout{\draw (0,0) rectangle (3.25,3.25) ; \foreach \x in {0,.25,1,1.25,1.5,1.75,2,2.25,2.5,2.75,3} {\fill (\x,3.25-\x) +(0,0) rectangle ++(.25,-.25);}  \foreach \x in {.5,.75,1,2,2.25,2.5} {\fill (\x,3.25-\x) +(0,0) rectangle ++(.25,-.25);\fill (\x,\x) +(0,0) rectangle ++(.25,.25);}}
\def\odddiagaltin{\draw (0,0) rectangle (3.25,3.25);\foreach \x in {0,.25,.5,.75,1.75,2,2.25,2.5,2.75,3} {\fill (\x,3.25-\x) +(0,0) rectangle ++(.25,-.25);};\foreach \x in {0,.5,2.25,2.75}{\fill (\x,\x+.25) +(0,0) rectangle ++(.25,.25);\fill (\x+.25,\x) +(0,0) rectangle ++(.25,.25);};\foreach \x in {1,1.25,1.5} {\fill (\x,3.25-\x) +(0,0) rectangle ++(.25,-.25);}}
\def\odddiagaltinout{\draw (0,0) rectangle (3.25,3.25);\foreach \x in {0,.25,.5,.75,1.75,2,2.25,2.5,2.75,3} {\fill (\x,3.25-\x) +(0,0) rectangle ++(.25,-.25);};\foreach \x in {0.25,.75,2,2.5}{\fill (\x,\x+.25) +(0,0) rectangle ++(.25,.25);\fill (\x+.25,\x) +(0,0) rectangle ++(.25,.25);};\foreach \x in {1,1.25,1.5} {\fill (\x,3.25-\x) +(0,0) rectangle ++(.25,-.25);}}
\newcommand*\rel@kern[1]{\kern#1\dimexpr\macc@kerna}
\newcommand*\widebar[1]{%
  \begingroup
  \def\mathaccent##1##2{%
    \rel@kern{0.8}%
    \overline{\rel@kern{-0.8}\macc@nucleus\rel@kern{0.2}}%
    \rel@kern{-0.2}%
  }%
  \macc@depth\@ne
  \let\math@bgroup\@empty \let\math@egroup\macc@set@skewchar
  \mathsurround\z@ \frozen@everymath{\mathgroup\macc@group\relax}%
  \macc@set@skewchar\relax
  \let\mathaccentV\macc@nested@a
  \macc@nested@a\relax111{#1}%
  \endgroup
}
\def\nobreakhline{%
  \noalign{\ifnum0=`}\fi
    \penalty\@M
    \futurelet\@let@token\LT@@nobreakhline}
\def\LT@@nobreakhline{%
  \ifx\@let@token\hline
    \global\let\@gtempa\@gobble
    \gdef\LT@sep{\penalty\@M\vskip\doublerulesep}
  \else
    \global\let\@gtempa\@empty
    \gdef\LT@sep{\penalty\@M\vskip-\arrayrulewidth}
  \fi
  \ifnum0=`{\fi}%
  \multispan\LT@cols
     \unskip\leaders\hrule\@height\arrayrulewidth\hfill\cr
  \noalign{\LT@sep}%
  \multispan\LT@cols
     \unskip\leaders\hrule\@height\arrayrulewidth\hfill\cr
  \noalign{\penalty\@M}%
  \@gtempa}
\rnc\appendixname{}
\begin{document}

\begin{flushright}
DMUS-MP-16/05
\end{flushright}

\bigskip

\bigskip

\title[Reflection matrices, coideal subalgebras and generalized Satake diagrams]
{Reflection matrices, coideal subalgebras \\ and generalized Satake diagrams of affine type}

\author{Vidas Regelskis}
\address{Department of Mathematics, University of Surrey, Guildford, GU2 7YX, UK and \newline \mbox{\hspace{.31cm}} Department of Mathematics, University of York, York, YO10 5DD, UK}
\email{vidas.regelskis@york.ac.uk}

\author{Bart Vlaar}
\address{School of Mathematical Sciences, University of Nottingham, Nottingham, NG7 2RD, UK and \newline \mbox{\hspace{.31cm}} Department of Mathematics, University of York, York, YO10 5DD, UK}
\email{bart.vlaar@york.ac.uk}

\begin{abstract}
We present a generalization of the theory of quantum symmetric pairs as developed by Kolb and Letzter. We introduce a class of generalized Satake diagrams that give rise to (not necessarily involutive) automorphisms of the second kind of symmetrizable Kac-Moody algebras $\mfg$. These lead to right coideal subalgebras $B_{\bm c,\bm s}$ of quantized enveloping algebras $U_q(\mfg)$. 

In the case that $\mfg$ is a twisted or untwisted affine Lie algebra of classical type Jimbo found intertwiners (equivariant maps) of the vector representation of $U_q(\mfg)$ yielding trigonometric solutions to the parameter-dependent quantum Yang-Baxter equation. 
In the present paper we compute intertwiners of the vector representation restricted to the subalgebras $B_{\bm c,\bm s}$ when $\mfg$ is of type ${\rm A}^{(1)}_n$, ${\rm B}^{(1)}_n$, ${\rm C}^{(1)}_n$ and ${\rm D}^{(1)}_n$. These intertwiners are matrix solutions to the parameter-dependent quantum reflection equation known as trigonometric reflection matrices. 
They are symmetric up to conjugation by a diagonal matrix and in many cases satisfy a certain sparseness condition: there are at most two nonzero entries in each row and column.  
Conjecturally, this classifies all such solutions in vector spaces carrying this representation.

A group of Hopf algebra automorphisms of $U_q(\mfg)$ acts on these reflection matrices, allowing us to show that each reflection matrix found is equivalent to one with at most two additional free parameters. 
Additional characteristics of the reflection matrices such as eigendecompositions and affinization relations are also obtained. 
The eigendecompositions suggest that for all these matrices there should be a natural interpretation in terms of representations of Hecke-type algebras.
\end{abstract}

\subjclass[2010]{Primary 81R10, 81R12; Secondary 16T05, 16T25}

\maketitle


\setcounter{tocdepth}{1} 
\tableofcontents


\section{Introduction} \label{Sec:1}


\subsection{The Yang-Baxter equation}

Let $\mfg$ be a derived symmetrizable Kac-Moody algebra defined over an algebraically closed field of characteristic $0$, say $\C$. 
Let $q$ be an indeterminate.
The quantized enveloping algebra $U_q(\mfg)$ of $\mfg$ over $\C(q)$, discovered independently by Drinfeld \cite{Dr1,Dr2} and Jimbo \cite{Ji1} plays an important role in representation theory and the theory of quantum integrable models, see {\it e.g.}~\cite{CRAS,JiMi,KlSg,STS}. 

One of the key properties of $U_q(\mfg)$ is the existence of the universal R-matrix. 
More precisely, one extends $U_q(\mfg)$ by certain Cartan elements to obtain an extended algebra $U_q(\mfg^{\rm ext})$ which is, up to a completion of the tensor product, a quasitriangular Hopf algebra \cite{Dr2,KhTo}.
Its universal R-matrix yields various important relations and structures in the theory of quantum groups. 
Subsequently it can be used to generate different classes of quantum integrable models and explore their properties, see {\it e.g.}~\cite{BGKNR,FrHz,FrRt,GmTL}. 
However to construct a universal R-matrix for a particular choice of $\mfg$ is a rather nontrivial task; as an alternative we highlight the so-called \emph{method of the intertwining equation}, pioneered by Jimbo in \cite{Ji3}.

Let $\K$ be an algebraic closure of $\C(q)$ and let $\RT_u: U_q(\mfg) \to \K^N$ be any finite-dimensional representation of $U_q(\mfg)$ over $\K$ depending rationally on $u\in\K$, called the spectral parameter. 
Furthermore, let $\Delta :  U_q(\mfg) \to  U_q(\mfg) \ot  U_q(\mfg)$ be the coproduct of $ U_q(\mfg)$.
 
It is well-known \cite{Ji2} that, assuming certain irreducibility and regularity conditions, an intertwiner $\hat R(u/v) \in \End(\K^N)^{\ot 2}$ of the tensor product of two such representations,
\eq{
\hat R(\tfrac uv)\, (\RT_u \ot \RT_v)(\Delta(a)) = (\RT_v \ot \RT_u)(\Delta(a))\, \hat R(\tfrac uv) , \label{intro:2}
}
is unique, up to a scalar, and satisfies the quantum Yang-Baxter equation (YBE)
\[ 
(\hat R(u/v) \ot \Id)(\Id \ot \hat R(u))(\hat R(v) \ot \Id) = (\Id \ot \hat R(v))(\hat R(u) \ot \Id)(\Id \ot \hat R(u/v)), 
\]
an identity of $\End(\K^N)^{\ot 3}$-valued rational functions; here $\Id : \K^N \to \K^N$ is the identity operator.
Such solutions $\hat R(u)$ are known as ``trigonometric'' solutions since, roughly speaking, a rational dependence on $u$ can be restated as a trigonometric dependence on $x$ by setting $u={\rm e}^{2 \sqrt{-1}\, x}$.
In this way Jimbo found trigonometric solutions of the YBE when $\mfg$ is an (untwisted or twisted) affine Lie algebra of classical Lie type and $\RT_u$ is the vector (first fundamental) representation of $U_q(\mfg)$.


\subsection{The reflection equation}

The reflection equation (RE), also called the boundary quantum Yang-Baxter equation, is a quaternary analogue of the quantum Yang-Baxter equation and plays a fundamental role in the representation theory of coideal quantum groups \cite{CGM,GoMo,Mo,MoRa,MRS} and quantum integrable systems with open boundary conditions \cite{Sk,Ch1,Ch3}. 
In general it is not straightforward to classify solutions of the RE, called reflection matrices or K-matrices. 
First of all, in contrast to the YBE there are several reflection equations, such as twisted and untwisted left and right reflection equations. 
However, owing to the symmetry properties of the R-matrices we consider in this paper, left and right versions and in many cases also twisted and untwisted versions are related via certain transformation rules. 
Hence it is enough to find a solution for one of them. 
More significantly, the scale of the task of solving a particular reflection equation lies in the fact that given an R-matrix there exists a large number of inequivalent solutions. 
In the theory of quantum integrable systems each K-matrix corresponds to a different choice of boundary conditions that are compatible with the underlying integrability \cite{Mk}.

Many K-matrices for various R-matrices are known. 
The task is much simpler when R-matrices involved in the reflection equation are of rational type. 
In this case they are associated to Yangians and exhibit additional symmetries that do not extend to the trigonometric case in general; thus, classifying solutions of the rational reflection equation is largely well understood \cite{AACDFR1,DMS,GuRg1,MoRa}. 
When R-matrices are of trigonometric type, K-matrices that are solutions of the \emph{constant} reflection equation have been thoroughly studied in \cite{KSS,NoSu}. 
Solutions of the (nonconstant) reflection equation with a particular choice for the R-matrix have been studied by many authors, see {\it e.g.}~\cite{AbRi,BCDR,BCR,BeFo,Ch1,DeGe,DeMk,dVGR,FNR,Gb,JKKKM}. 
In \cite{MLS}, the \emph{densest} solutions to the untwisted reflection equation are found: those with most nonzero entries. 
However it appears that they are not the most \emph{general}: not every possible solution to the untwisted reflection equation can be obtained from these by specializations of their free parameters.
In other words, a unifying framework for classifying trigonometric K-matrices does not exist so far. 
Consequently, also the specialization to the rational case requires further investigation.

In this paper we address the classification problem of trigonometric K-matrices from the point of view of quantum symmetric pair (QSP) algebras. Let us explain what we mean by this. 
Let $\theta : \mfg \to \mfg$ be an involutive Lie algebra automorphism. The pair of algebras $(\mfg,\mfg^\theta)$ is called a symmetric pair. 
Let $\theta$ be of the second kind, that is the standard Borel subalgebra $\mfb^+$ of $\mfg$ is required to satisfy $\dim (\theta(\mfb^+)\cap\mfb^+)<\infty$. 
Then the works \cite{Le1,Le2,Ko1} by Letzter and Kolb provide quantized versions of such symmetric pairs, denoted $B_{\bm c,\bm s}$, which are right coideal subalgebras of $U_q(\mfg)$ depending on parameter tuples $\bm c$ and $\bm s$.

Note that the Kac-Moody algebra $\mfg$, as well as the quantum group $U_q(\mfg)$, can be defined combinatorially, by means of its Drinfeld-Jimbo realization, in terms of a Dynkin diagram. In the same way these fixed-point Lie subalgebras and coideal subalgebras, as well as the involution $\theta$ and its quantum analogon, can be defined in terms of a decoration of this diagram, called a Satake diagram.
Furthermore, only for suitable (combinatorially conditioned) values of the parameters $\bm c$ and $\bm s$, does the algebra $B_{\bm c,\bm s}$ specialize to the universal enveloping algebra of $\mfg^\theta$; in this case $B_{\bm c,\bm s}$ is called a QSP algebra.

By a suitable adjustment to the setting of coideal subalgebras, Jimbo's method of the intertwining equation can be used to find K-matrices. 
More precisely, we will be interested in pairs $(K(u),\eta)$, where $K(u) \in \End(\K^N)$, depending rationally on $u$, is invertible and $\eta \in \K^\times$, satisfying the untwisted boundary intertwining equation
\begin{flalign}
\hspace{40mm} K( u )\, \RT_{\eta u}(b) &= \RT_{\eta/u}(b)\, K( u) && \text{for all  } b \in B_{\bm c,\bm s}, \hspace{25mm} \label{intro:3} \\
\intertext{or the twisted boundary intertwining equation}
\hspace{40mm} K( u )\, \RT_{\eta u}(b) &= \RT^\t_{\eta/u}(S(b))\, K( u) && \text{for all  } b \in B_{\bm c,\bm s}, \hspace{25mm} \label{intro:4}
\end{flalign} 
where $\t$ is the usual transposition of matrices and $S$ is the antipode map. Then, following an argument made in \cite{DeGe,DeMk}, provided the representation $\RT_u\ot \RT_v$ of $U_q(\mfg)$ restricts to an irreducible representation of $B_{\bm c,\bm s}$ and an additional regularity condition is satisfied, $K(u)$ satisfies the appropriate (untwisted or twisted) reflection equation. 
This method of finding trigonometric K-matrices was used, for example, in \cite{DeGe,DeMk,Gb} for affine Toda field theories on the half-line, and in \cite{dLMR,dLR} for integrable open spin-chain models, where certain superalgebra analogues of $B_{\bm c, \bm s}$ were constructed.

Note that the fact that a quantum group has many inequivalent coideal subalgebras is responsible for the multitude of inequivalent K-matrices. 
For the coideal subalgebras considered in this paper this is reflected in the fact that to a Dynkin diagram are associated many different Satake diagrams.
It is also worth noting that the problem of classifying K-matrices is closely related to the problem of constructing universal K-matrices, {\it i.e.}
~analogues of universal R-matrices in the setting of coideal subalgebras. 
Certain universal K-matrices associated with coideal subalgebras $B_{\bm c,\bm s} \subseteq U_q(\mfg)$ were recently shown to exist by Balagovi\'c and Kolb in \cite{BgKo2,Ko2}; also see \cite{BaWa}. 
The constant \mbox{K-matrices} classified in \cite{NDS,NoSu} are, up to scalar factors, images of these universal K-matrices of $U_q(\mfg)$ under its vector representation, when $\mfg$ is of finite type. 
The constant K-matrices can also be obtained as $\lim_{u \to 0} K(u)$ whenever $K(u)$ satisfies \eqref{intro:3} or \eqref{intro:4} for all $b$ in a natural affinization of $B_{\bm c,\bm s}$.
The main drawback of this method is that there is not yet an explicit expression for the universal K-matrix, even if $\mfg$ is of finite type.


\subsection{Summary of present work} 

We study coideal subalgebras $B_{\bm c,\bm s} \subseteq U_q(\mfg)$ defined in terms of more general decorations of Dynkin diagrams called \emph{generalized Satake diagrams}.
These diagrams yield automorphisms $\theta$, which are not always involutive, and Lie subalgebras $\mfk_{\bm c} \subseteq \mfg$ depending on the tuple $\bm c$, which are not always fixed-point subalgebras. However, the subalgebras $\mfk_{\bm c}$ preserve a property of $\mfg^\theta$ (where it is trivially true): the only Cartan elements in $\mfk_{\bm c}$ are those fixed by $\theta$. 
We call this the \emph{intersection property}. 
The coideal subalgebras $B_{\bm c,\bm s}$ enjoy a quantum version of this property; this in turn is essential to showing that these algebras at $q=1$ specialize to the universal enveloping algebra of $\mfk_{\bm c}$. 
We call these $B_{\bm c,\bm s}$ the \emph{quantum pair (QP) algebras}.

The diagrammatic approach allows us to easily construct the algebra $B_{\bm c,\bm s}$ for any generalized Satake diagram. Given a generalized Satake diagram, the corresponding algebra $B_{\bm c,\bm s}$ is parametrized by tuples $\bm c$ and $\bm s$ with entries in $\K^\times$ and $\K$, respectively. 
If $B_{\bm c,\bm s}$ is to be a suitable quantum deformation of $U(\mfk')$, conditions must be imposed on $\bm c$ and $\bm s$. In this paper we will only consider the constraints as derived in \cite[Sec.~5.1]{Ko1}. 

We restrict our attention to the case where $\mfg$ is an untwisted affine Kac Moody algebra of classical Lie type, {\it i.e.}~of type ${\rm A}^{(1)}_n$, ${\rm B}^{(1)}_n$, ${\rm C}^{(1)}_n$ or ${\rm D}^{(1)}_n$, and $\RT_u$ is the vector representation of $U_q(\mfg)$.
The solutions $K(u)$ of the corresponding boundary intertwining equations \eqref{intro:3} or \eqref{intro:4} have elegant representation-theoretic properties.
In particular, with respect to the standard basis of $\K^N$, these matrices are rather sparse; the number of nonzero entries in each row and column of $K(u)$ is at most 4. 
In fact, this number is at most 2 if the algebra $B_{\bm c,\bm s}$ is \emph{quasistandard}.
This amounts to an additional combinatorial condition on the tuple $\bm s$, see Definition \ref{D:quasi}, which fails only for certain coideal subalgebras of $U_q(\wh{\mfso}_N)$ and $U_q(\wh{\mfsp}_N)$. 
It is worth noting that less sparse K-matrices can be obtained by imposing mixed constraints on the tuples $\bm c$ and $\bm s$, namely combinations of those given in \cite[Sec.~5.1]{Ko1} and the so-called $q$-Onsager type constraints studied by Baseilhac and Belliard in \cite[Sec.~2]{BsBe1}. 
We leave the analysis of such cases for a further study.

The main result of this paper is Theorem \ref{T:all-K}, which provides explicit formulas of K-matrices for QP algebras when $\mfg = \wh\mfsl_N$, $\wh\mfso_N$ or $\wh\mfsp_N$. 
We conjecture that this classifies all matrix solutions to the untwisted and twisted reflection equations that have at most two nonzero entries in each row or column and are symmetric (with respect to the main diagonal or counter-diagonal, respectively) up to conjugation by a diagonal matrix. 
We also study supplementary properties of the K-matrices such as unitarity and regularity relations, eigendecompositions and, when $B_{\bm c,\bm s}$ is an affine extension of a coideal subalgebra of the corresponding quantum group of finite type, affinization identities. 
This eigendecomposition allows us to obtain the minimal polynomial of such a K-matrix and leads to connections with the representation theory of Hecke-type algebras.


\subsection{Outline}

\enlargethispage{1em}

This paper is organized as follows. 
In section \ref{sec:alg} we recall the necessary preliminaries regarding Kac-Moody Lie algebras $\mfg$. 
In particular we introduce the notion of a \emph{generalized} Satake diagram and use it to define an automorphism and a subalgebra of $\mfg$. 
The motivation for this is that certain properties of Satake diagrams and the associated symmetric pairs hold for these more general diagrams.
Then in Section \ref{sec:QG} we consider the associated quantized enveloping algebras following \cite{KlSg,KaWa,ChPr1} for the quantum groups and following \cite{Ko1, BgKo2} for the coideal subalgebras.

In Section \ref{sec:natrep} we discuss the vector representation of $U_q(\mfg)$, the evaluation module obtained by affinization (in the homogeneous grading) of the vector (first fundamental) representation of the corresponding Hopf subalgebra of finite type.

In Section \ref{sec:Rmat} we discuss the quantum Yang-Baxter equation and review properties of quantum R-matrices associated with the vector representation. The main references for these sections are \cite{Ji2,FRT} and \cite[Sec.~8.4 and 8.7]{KlSg}. 

In Section \ref{sec:Kmat} we survey properties of twisted and untwisted reflection equations and discuss their solutions: twisted and untwisted reflection matrices. 
In particular, we explain how solutions to the boundary intertwining equations \eqref{intro:3} and \eqref{intro:4} for arbitrary coideal subalgebras yield solutions of the quantum reflection equation. 

In Section \ref{sec:rotdress} we describe the ``rotational'' symmetries which relate K-matrices associated with QP algebras that are equivalent under the Hopf algebra automorphisms corresponding to the symmetries of the Satake diagrams, {\it cf.}~\cite[Def.~9.1]{Ko1}. 
This means we only need to study one representative for each family of equivalent QP algebras. 
In addition, we discuss the ``dressing'' method that allows us to write a K-matrix as being diagonally equivalent to a {\it bare} K-matrix, which contains at most two free parameters; the other free parameters appear in the diagonal change-of-basis matrices. Also dressing is underpinned by an equivalence of QP algebras by particular Hopf algebra automorphisms.

In Section \ref{sec:Satdiagclassification} we extend the classification of affine Satake diagrams in \cite{BBBR} to generalized Satake diagrams (restricting to untwisted affine diagrams of classical type). We introduce diagrammatic terminology and a new notation for the generalized Satake diagrams that are directly in aid of the classification of reflection matrices.

In Section \ref{sec:Results} we present the main results of this paper. In Section \ref{sec:mainresult} we make general statements regarding the obtained K-matrices for QP algebras when $\mfg = \wh\mfsl_N$, $\wh\mfso_N$ or $\wh\mfsp_N$. 
Sections \ref{sec:K:tw}--\ref{sec:K:low-rank} provide details of the computations that were performed with the computer algebra system {\it Wolfram Mathematica}. 

Section \ref{sec:conclusions} contains a brief overview of some further directions of study that are implied by the results of the present paper. 
In Section \ref{sec:qintsys} we review how reflection equations naturally underpin integrability {in quantum integrable systems with reflecting boundary conditions expressed in terms of qKZ equations, transfer matrices as well as Hamiltonians. In Section \ref{sec:RTT} we give a synopsis of twisted quantum loop algebras in the RTT presentation of quantum groups. Such algebras have been constructed in \cite{CGM,MRS}, and their representations were studied in \cite{GoMo,Mo}. In Section~\ref{sec:Hecke} we demonstrate that some of the obtained K-matrices provide representations of baxterized operators in cyclotomic Hecke algebras studied in \cite{IsOg,KuMv}. This part is aimed to motivate a similar study for cyclotomic Birman-Wenzl-Murakami algebras. 

Appendix \ref{App:Satakediagrams} contains the classification of generalized Satake diagrams associated to affine Dynkin diagrams of classical Lie type. In Appendix \ref{App:LowRank} we list isomorphisms of affine generalized Satake diagrams of low rank. Appendix \ref{App:summary} contains a summary of key properties of the K-matrices studied in this paper.
In Appendix \ref{App:qOns} we present an example of a coideal subalgebra of a generalized $q$-Onsager type and its reflection matrix.
\smallskip


\subsection{Notation}

We will use the following conventions throughout the document:
\begin{itemize} [itemsep=0.25em]
\item the symbol \rmkend \ indicates the end of a definition, remark or example;
\item for any proposition $P$, we write
\[ \del_P := \begin{cases} 1 & \text{if } P \text{ is true}, \\ 0 & \text{if } P \text{ is false}; \end{cases} \]
\item we abbreviate $\del_{i=j}$ by the usual Kronecker delta symbol $\del_{ij}$.
\end{itemize}


\subsection{Acknowledgements}

This collaboration was established at the international conference ``New Trends in Quantum Integrability 2014'' held at the University of Surrey in August, 2014. B.V. thanks the organizers for the hospitality. 
The authors are grateful to Stefan Kolb for valuable comments, for pointing out the reference \cite{BBBR} and for sharing the results of \cite{Ko2} ahead of publication. 
The authors also thank Pascal Baseilhac for a helpful discussion involving the publication \cite{BsKz1}. 
They also thank Martina Balagovi\'{c}, Nicolas Guay, Maxim Nazarov, Nicolai Reshetikhin and Jasper Stokman for useful comments and suggestions. 

Part of this work was done during V.R.'s visit to the University of Alberta. 
V.R. thanks the University of Alberta for the hospitality and also gratefully acknowledges the Engineering and Physical Sciences Research Council (EPSRC) of the United Kingdom for the Postdoctoral Fellowship under the grant EP/K031805/1. 
B.V. is thankful for financial support from EPSRC under the grants EP/L000865/1 and EP/N023919/1 and from the Netherlands Organization for Scientific Research (NWO) under a Free Competition grant (``Double affine Hecke algebras, Integrable Models and Enumerative Combinatorics").

Preliminary results of this work have been announced at the mini-workshops ``Coideal Subalgebras of Quantum Groups'' in Oberwolfach in February 2015 and ``Vertex Algebras and Quantum Groups'' in Banff in February 2016. The authors are grateful to the organizers.


\section{Kac-Moody Lie algebras and generalized Satake diagrams} \label{sec:alg}


\subsection{Kac-Moody Lie algebras} \label{sec:KM}

Let $I$ be a finite set and let $A=(a_{ij})_{i,j \in I}$ be a symmetrizable generalized Cartan matrix,
{\it i.e.}~for all $i,j \in I$ we have $a_{ii}=2$, $a_{ij} \in \Z_{\le 0}$ if $i \ne j$ and 
\eq{ 
\label{DAisAD} d_i\,a_{ij} = d_j\,a_{ji}
} 
for some $d_i \in \Q_{>0}$.
For any $X \subseteq I$, consider the submatrix $A_X:=(a_{ij})_{i,j \in X}$.
We will assume that $A$ is indecomposable: for all $\emptyset \ne X \subset I$ we have $A \ne A_X \oplus A_{I \backslash X}$ (up to any permutation of $I$), in other words there exists $(i,j) \in X \times I \backslash X$ such that $a_{ij} \ne 0$.
For most of Sections \ref{sec:alg} and \ref{sec:QG}, we will consider general $A$, but for the remainder of this paper we will assume that $A$ is of \emph{finite type}, {\it i.e.}~$\det(A_X) \ne 0$ for all $X \subseteq I$, or of \emph{affine type}, {\it i.e.}~$\det(A)=0$ and $\det(A_X) \ne 0$ for all $X \subset I$.

Consider a complex vector space $\mfh$ with basis $\{h_i\}_{i\in I}$.
It can be extended to a Lie algebra in which $\mfh$ appears as a Cartan subalgebra: 
\begin{defn}
The \emph{derived Kac-Moody Lie algebra} associated to $A$ is the Lie algebra $\mfg 
= \langle e_i,f_i,h_i \rangle_{i \in I}$, subject to the following relations for $i,j \in I$:
\begin{gather}
\label{mfg:rels1} [h_i,h_j]=0, \qq [h_i,e_j]=a_{ij}e_j, \qq [h_i,f_j]=-a_{ij}f_j, \qq [e_i,f_j]=\del_{ij} h_i, \\
\label{mfg:Serre} \ad(e_i)^{1-a_{ij}}(e_j) = \ad(f_i)^{1-a_{ij}}(f_j) = 0 \qq \text{if } i \ne j,
\end{gather}
where 
\eq{ \label{mfg:adj} \ad(x)(y):=[x,y].} 
We denote the upper and lower nilpotent subalgebras by $\mfn^+ := \langle e_i \rangle_{i \in I}$ and $\mfn^- := \langle f_i \rangle_{i \in I}$, respectively, and the upper and lower Borel subalgebras by $\mfb^\pm := \langle \mfh, \mfn^\pm \rangle$.
\hfill \defnend
\end{defn}

We briefly review some additional properties of $\mfg$, see {\it e.g.}~\cite{Ca} for more detail. 
Choose simple roots $\{\al_j \}_{j \in I}$, {\it i.e.}~a linearly independent subset of $\mfh^*$ satisfying $\al_j(h_i) = a_{ij}$ for all $i,j \in I$. 
There exists a symmetric bilinear form $(\cdot,\cdot)$ on $\mfh$ such that $(h_i,h_j)= d_j^{-1}a_{ij}$ for $i,j \in I$.
The induced bilinear form on $\mfh^*$ is given by $(\al_i,\al_j)= d_i a_{ij}$ for $i,j \in I$ so that $d_i=(\al_i,\al_i)/2$. 

The Weyl group $W \subset GL(\mfh)$ is generated by the simple reflections $\{ r_i \}_{i \in I}$; they act on $\mfh$ and dually on $\mfh^*$ according to
\eq{ \label{simplereflections}
r_j(h_i) = h_i - a_{ij} h_j, \qq r_i(\al_j) = \al_j - a_{ij} \al_i \qq \text{for all } i,j \in I;
}
in particular, the bilinear form $(\cdot,\cdot)$ is $W$-invariant.
It follows that $W$ is a Coxeter group, {\it i.e.}~$r_i^2=1$ for all $i \in I$ and the \emph{braid relations associated to $A$} are satisfied:
\eq{ \label{braidrelations}
\underbrace{r_i r_j r_i \cdots}_{m_{ij} \text{ factors}} = 
\underbrace{r_j r_i r_j \cdots}_{m_{ij} \text{ factors}} 
\qq \text{for } i,j \in I \text{ such that } i \ne j,
}
where $m_{ij}=2,3,4,6$ and $\infty$ if $a_{ij}a_{ji}=0,1,2,3$ and $\ge4$, respectively. 

As an $\mfh$-module, $\mfg$ has the triangular decomposition
\eq{ \label{mfg:triangulardecomposition}
\mfg = \mfn^+ \oplus \mfh \oplus \mfn^-
}
and, more precisely, is a direct sum of root spaces:
\eq{ \label{mfg:rootspacedecomposition} 
\mfg = \bigoplus_{\mu \in \mfh^*} \mfg_\mu, \qq \text{where } \mfg_\mu = \{ x \in \mfg \, | \, \forall i \in I: \, [h_i,x]=\mu(h_i)x \} \qu \text {for } \mu \in \mfh^*,
}
with $\pi_\mu: \mfg \to \mfg_\mu$ the corresponding projection.
Note that $\mfg_0 = \mfh$. 
Let $Q = \sum_{i \in I} \Z \al_i \subset \mfh^*$ be the root lattice.
Then \eqref{mfg:rootspacedecomposition} induces a $Q$-grading: 
\eq{ \label{mfg:grading}
[\mfg_\mu,\mfg_\nu] \subseteq \mfg_{\mu+\nu} \qq \text{for all } \mu,\nu \in Q. 
}
Furthermore the root system $\Phi = \{ \mu \in \mfh^* \backslash \{ 0 \} \, | \, \mfg_\mu \ne \{ 0 \} \}$ is contained in $Q^+ \cup (-Q^+)$, where $Q^+:=\sum_{i \in I} \Z_{\ge 0} \al_i$.
We have $r_i(\Phi)=\Phi$ for all $i \in I$; call $\beta \in \Phi$ \emph{real} if $\beta = w(\al_i)$ for some $w \in W$, $i \in I$. 
If $\beta \in \Phi$ is real then $\dim(\mfg_{\beta})=1$, $(\beta,\beta)>0$ and $m \beta \in \Phi$ for $m \in \Z$ implies $m \in \{\pm 1 \}$.

\medskip

We will consider various Lie algebra automorphisms on $\mfg$, closely following \cite[Sec.~2.1-2.3]{Ko1}.
First of all, denote by $\om$ the Chevalley involution on $\mfg$:
\eq{ \label{def:Chev} \om(e_i) = -f_i, \qq \om(f_i) = -e_i, \qq \om(h_i)=-h_i \qq \text{for all } i \in I. }
Denote by $\Aut(A)$ the group of \emph{diagram automorphisms} of $A$:
\[ \Aut(A) = \{ \si: I \to I \text{ bijective} \, | \, \forall i,j \in I: a_{ij} = a_{\si(i)\si(j)} \}. \]
Note that $\Aut(A) < \Aut(\mfg)$ by setting, for arbitrary $\si \in \Aut(A)$, 
\[ \si(e_i) = e_{\si(i)}, \qq \si(f_i) = f_{\si(i)}, \qq \si(h_i) = h_{\si(i)} \qq \text{for all } i \in I. \] 
The induced map on $\mfh^*$, also denoted $\si$, satisfies $\si(\al_i) = \al_{\si(i)}$ for all $i \in I$; note that $\si(Q)=Q$.

Consider the Kac-Moody group $G$ associated to $A$. The adjoint action of $\mfg$ on itself corresponds to the group homomorphism $\Ad: G \to \Aut(\mfg)$ (see \cite[1.3]{KaWa} for more details):
\[ 
\Ad(\exp(x)) = \exp(\ad(x)) \qquad \text{for all } x \in \mfg. 
\]
This allows us to extend the action of the Weyl group $W$ on $\mfh$ to an action on $\mfg$ which preserves Lie brackets.
For $i \in I$ define $m_i \in G$ by
\eq{ \label{def:mX} m_i = \exp(e_i)\exp(-f_i)\exp(e_i) }
and consider $\Ad(m_i) \in \Aut(\mfg)$ (note that $\ad(e_i)$ and $\ad(-f_i)$ act locally nilpotently on $\mfg$ for all $i \in I$).
According to \cite[\S 3.8]{Ka}, $\Ad(m_i)(\mfg_{\mu}) = \mfg_{r_i(\mu)}$ and $\Ad(m_i)|_{\mfh} = r_i$.
Crucially, the $m_i$ satisfy the braid relations \eqref{braidrelations} associated to $A$.
Hence given $w \in W$ with reduced decomposition $w=r_{i_1}r_{i_2}\cdots r_{i_\ell}$ with $i_1,\ldots,i_\ell \in I$, $m_w := m_{i_1} m_{i_2} \cdots m_{i_\ell} \in G$ is well-defined.


\subsection{Generalized Satake diagrams} \label{sec:Satdiag}

Given further combinatorial data, namely a subset $X \subseteq I$ and $\tau \in \Aut(A)$ satisfying certain additional properties, there is a canonical way to construct an involutive Lie algebra automorphism of $\mfg$ and an associated fixed point Lie subalgebra of $\mfg$, see \cite[Sec. 2]{Ko1} and references therein.
By relaxing some conditions on $(X,\tau)$, essentially the same construction produces a Lie algebra automorphism $\theta$, not necessarily involutive, and a Lie subalgebra $\mfk$ whose elements are not all fixed by $\theta$; nevertheless these objects retain crucial properties.

It is customary to associate to $(I,A)$ a Dynkin diagram as follows:
\begin{itemize} [itemsep=0.25em]
\item Identify the set of vertices (nodes) with $I$.
\item Between nodes $i$ and $j$, draw an edge of multiplicity $\max(-a_{ij},-a_{ji})$ provided $a_{ij}a_{ji} \le 4$ and draw an edge with the label $(-a_{ij},-a_{ji})$ otherwise. 
\item Furthermore, if $d_i>d_j$ orient the edge towards $j$.
\end{itemize}
Then pairs $(X,\tau)$ with $X \subseteq I$ and $\tau \in \Aut(A)$ such that $\tau^2 = \id$ will be referred to as \emph{decorated diagrams} with the decorations applied to the Dynkin diagram associated to $(I,A)$ as follows:
\begin{itemize} [itemsep=0.25em]
\item Fill all nodes labelled by elements of $X$.
\item For all nontrivial orbits $\{ i , \tau(i) \}$ (i.e. $i \ne \tau(i)$) a bidirectional arrow is drawn between the nodes $i$ and $\tau(i)$.
\end{itemize}

\begin{exam}
Let $A$ be the Cartan matrix of type ${\rm B}^{(1)}_3$ with $I=\{ 0,1,2,3 \}$ (assuming the standard labelling, {\it i.e.}~$a_{02}=a_{20}=a_{12}=a_{21}=a_{23}=-1$, $a_{32}=-2$ and $a_{ij}=0$ for all other values $i,j \in I$ with $i \ne j$). 
Then the pair $(X,\tau) = (\{ 2 \},(01))$ corresponds to the decorated Dynkin diagram

\begin{center}\begin{tikzpicture}[line width=0.7pt,scale=1]
\draw[thick] (-.5,.3) -- (0,0) -- (-.5,-.3);
\draw[double,->] (0,0) -- (.4,0);
\draw[<->,gray] (-.5,.2) -- (-.5,-.2);
\filldraw[fill=white] (-.5,.3) circle (.1) node[left=1pt]{\scriptsize $0$};
\filldraw[fill=white] (-.5,-.3) circle (.1) node[left=1pt]{\scriptsize $1$};
\filldraw[fill=black] (0,0) circle (.1) node[above=1pt]{\scriptsize $2$};
\filldraw[fill=white] (.5,0) circle (.1) node[below=1pt]{\scriptsize $3$};
\end{tikzpicture}\end{center} 

\noindent This diagram does not satisfy the condition \cite[Def. 2.3 (3)]{Ko1} so does not correspond to an involutive Lie algebra automorphism of $\wh\mfso_7$. \hfill \examend
\end{exam}

Let $X \subseteq I$. 
Recall the notation $A_X$ and in the same vein denote
\[
\mfg_X = \langle e_i,f_i,h_i \rangle_{ i \in X}, \qq \mfh_X = \langle h_i \rangle_{i \in X}, \qq W_X = \langle r_i \rangle_{i \in X}, \qq Q_X = \sum_{i\in X} \Z \al_i.
\]
Furthermore, we introduce some terminology that is natural from the interpretation of $(I,A)$ as a Dynkin diagram.
For $i,j \in X$, a \emph{path} from $i$ to $j$ in $X$ is a tuple $\bm k \in X^{D+1}$ for some $D \in \Z_{\ge 0}$, called the \emph{length} of $\bm k$, such that $k_1=i$, $k_{D+1}=j$ and for all $1 \le r \le D$ we have $a_{k_r k_{r+1}} \ne 0$.
We call $X$ \emph{connected} if it is nonempty and for all $i,j \in X$ there exists a path from $i$ to $j$ in $X$.

A \emph{component} of $X$ is a subset $Y \subseteq X$ such that $a_{ij} =0$ for all $i \in Y$, $j \in X\backslash Y$ (note that components may be disconnected or empty).
A \emph{decomposition} of $X$ is a collection $\{X_1,\ldots,X_k \}$ of disjoint components of $X$ for some $k \in \Z_{\geq 0}$ such that $X = \cup_{t=1}^k X_t$; note that components may be disconnected or empty, which can be useful in decompositions.

If $A_X$ is of finite type and $\{X_1,\ldots,X_k\}$ is a decomposition of $X$, then 
\[ 
w_X  = \prod_{t=1}^k w_{X_t} \in W_X, \qquad \rho^\vee_X = \sum_{t=1}^j \rho^\vee_{X_t} \in \mfh_X 
\]
where $w_X$ is the longest element in $W_X$ and $\rho^\vee_X$ is the half-sum of positive coroots (note that $w_{X_s}$ commutes with $w_{X_t}$ as elements of $W_X$, for all $1 \leq s,t \leq m$).
Also, for $j \in I \backslash X$, denote 
\eq{ \label{def:X(j)} X(j) = \bigcup_{t=1 \atop \exists i \in X_t \, a_{ij} \ne 0}^k X_t, } 
{\it i.e.}~$X(j)$ is the smallest component of $X$ containing all $i \in X$ such that $a_{ij} \ne 0$ or, equivalently, $X(j)$ is the largest component of $X$ such that $X(j) \cup \{j \}$ is connected.
Note that $X(j)=\emptyset$ if $a_{ij}=0$ for all $i \in X$.

\begin{lemma} \label{lem:Xjdecomposition}
Let $X \subseteq I$ with $A_X$ of finite type and let $j \in I \backslash X$.
For all $i \in X(j)$ there exist $v_i(\al_j) \in \Z_{>0}$ such that
\eq{ \label{Xjdecomposition} w_X(\al_j)=\al_j+\sum_{i \in X(j)} v_i(\al_j) \al_i.}
\end{lemma}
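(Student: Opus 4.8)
The plan is to prove the formula \eqref{Xjdecomposition} by analysing the action of $w_X$, the longest element of $W_X$, on the simple root $\al_j$ where $j \notin X$. First I would reduce to the case where $X$ is connected: since $w_X = \prod_t w_{X_t}$ and the factors $w_{X_t}$ commute, and since $r_i(\al_j)$ for $i$ in a component $X_t$ with $a_{ij}=0$ leaves $\al_j$ fixed, only the component $X(j)$ actually acts nontrivially. Thus $w_X(\al_j) = w_{X(j)}(\al_j)$, and if $X(j)=\emptyset$ the statement is vacuous (empty sum, $w_X(\al_j)=\al_j$). So assume $X$ connected of finite type and $a_{ij}\neq 0$ for at least one $i \in X$; we must show $w_X(\al_j) = \al_j + \sum_{i\in X} v_i \al_i$ with all $v_i \in \Z_{>0}$.

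The key structural input is that $w_X$ is the longest element of the finite Weyl group $W_X$, hence $w_X$ sends the set $\Phi_X^+$ of positive roots of the root subsystem $\Phi_X = \Phi \cap Q_X$ to $\Phi_X^-$, and $w_X^2 = 1$. The next step is to observe that $\al_j + Q_X$ meets $\Phi$ in a ``string'' of roots: since $A_X$ is of finite type, $\al_j$ together with $\{\al_i\}_{i\in X}$ spans a finite-type (or at worst finite-dimensional) root subsystem $\Phi_{X\cup\{j\}}$, and $w_X$ acts on this as an element of its Weyl group. Writing $\beta = w_X(\al_j)$, I would argue $\beta \in \Phi$ (since $w_X \in W$ and $W$ preserves $\Phi$), and $\beta \equiv \al_j \pmod{Q_X}$, so $\beta = \al_j + \sum_{i\in X} v_i\al_i$ for some $v_i \in \Z$ a priori. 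To pin down the sign: because $\al_j$ has a nonzero $\al_j$-coefficient which is unchanged by $W_X$, $\beta$ has $\al_j$-coefficient $1$, so $\beta$ is a positive root of $\Phi$, not a negative one; since $\Phi^+ \subseteq Q^+$, all $v_i \geq 0$. This gives $v_i \in \Z_{\ge 0}$; the content of the lemma is the strict positivity.

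For strict positivity I would argue as follows. Suppose $v_i = 0$ for some $i \in X$, and (toward a contradiction) choose such an $i$ adjacent to the ``support'' appropriately — or better, use connectedness directly. Apply $w_X$ again: $w_X(\beta) = w_X^2(\al_j) = \al_j$. Now $\beta = \al_j + \sum v_i \al_i$; if the support $\{i : v_i > 0\}$ were a proper nonempty subset $Y \subsetneq X$, then since $X$ is connected there is an edge from $Y$ to $X\setminus Y$, say $a_{i'k}\neq 0$ with $i'\in Y$, $k\in X\setminus Y$. Then $r_k(\beta) = \beta - (\beta,\al_k^\vee)\al_k$, and $(\beta,\al_k^\vee) = a_{jk} + \sum_{i\in Y} v_i a_{ik} < 0$ (both terms $\le 0$, the sum $\sum v_i a_{ik} < 0$ since $a_{i'k}<0$ and $v_{i'}>0$), so $r_k(\beta)$ has a strictly positive $\al_k$-coefficient, hence so does $w_X(\beta)$ after left-multiplying $r_k$ by the rest of a reduced word — more carefully, one shows $w_X(\beta)$ must then have nonzero $\al_k$-coefficient, contradicting $w_X(\beta) = \al_j$ (which has all $\al_i$-coefficients zero for $i\in X$). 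A cleaner route: the element $w_X$ restricted to the span of $\{\al_i\}_{i\in X}$ together with $\al_j$ corresponds to the longest element $w_0$ of the finite Weyl group of the diagram $X\cup\{j\}$ with $j$ ``frozen,'' and a standard fact (e.g.\ from the theory of minuscule/cominuscule weights or direct induction on $|X|$ peeling off a leaf of the tree $X$) gives that $w_X(\al_j) - \al_j$ has full support on $X(j)$. The main obstacle is exactly this full-support claim: ruling out accidental cancellations requires either an inductive argument on the connected subdiagrams of $X$ (removing a leaf vertex and tracking how the coefficient there is generated) or invoking the explicit description of $w_X(\al_j)$ in terms of the marks of the affine/finite diagram; I expect the induction on a leaf of $X$ to be the most robust, using that for a leaf $\ell \in X$ with unique neighbour $m$ in $X$, $w_X = r_\ell w_{X'} r_\ell \cdots$ can be arranged so the $\al_\ell$-coefficient is forced positive once the $\al_m$-coefficient (handled by the inductive hypothesis on $X' = X\setminus\{\ell\}$, or on the component containing $m$) is positive.
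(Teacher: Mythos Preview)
Your reduction to connected $X$ and your argument that all $v_i \ge 0$ (via the $\al_j$-coefficient of $\beta = w_X(\al_j)$ being $1$, forcing $\beta \in Q^+$) are correct and match the paper. The gap is in the strict positivity step. The claim ``$r_k(\beta)$ has a strictly positive $\al_k$-coefficient, hence so does $w_X(\beta)$ after left-multiplying $r_k$ by the rest of a reduced word'' does not follow: writing $w_X = w' r_k$, the element $w'$ is a long product of simple reflections in $X$ (generically including further copies of $r_k$), and there is no reason it should preserve the sign of the $\al_k$-coefficient. Indeed $w_X(\beta) = \al_j$ is just the tautology $w_X^2(\al_j)=\al_j$, so no contradiction can be extracted this way without further input. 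Your fallback suggestions (minuscule weights, leaf-peeling) are not developed into an argument.

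The paper closes this gap by a different, quite short device. Using $w_X(h_i) = -h_{\si_X(i)}$ for the diagram permutation $\si_X$ induced by $-w_X$, one pairs both sides of $w_X(\al_j) = \al_j + \sum_{i'} v_{i'}\al_{i'}$ with $h_i$ to obtain
\[
2 v_i \;=\; -a_{\si_X(i)\,j} \;-\; a_{ij} \;-\; \sum_{i' \in X(j)\setminus\{i\}} v_{i'}\, a_{i\,i'},
\]
in which every term on the right is nonnegative. One then inducts on the graph distance $D_i$ from $i$ to $j$ inside $X(j)\cup\{j\}$: if $D_i=1$ the term $-a_{ij}$ is strictly positive; if $D_i>1$ there is a neighbour $i'$ with $D_{i'}=D_i-1$, and by induction $v_{i'}>0$ makes $-v_{i'}a_{i\,i'}$ strictly positive. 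This is exactly the ``propagation from the support'' idea you were reaching for, but the coroot identity turns it into a one-line recursion rather than a contradiction argument.
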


\begin{proof}
As a consequence of \eqref{simplereflections}, \eqref{Xjdecomposition} holds with $v_i(\al_j) \in \Z_{\ge 0}$. It remains to prove that for all $i \in X(j)$, $v_i(\al_j) > 0$. If $X(j)=\emptyset$ then $w_X(\al_j)=\al_j$ and we are done.
For all $i \in X(j)$ we have
\[ \al_j(w_X(h_i)) = w_X(\al_j)(h_i) = \al_j(h_i) + \sum_{i' \in X(j)} v_{i'}(\al_j) \al_{i'}(h_i). \]
Because $w_X$ sends the fundamental Weyl alcove in $\mfh_X$ to minus itself, {\it i.e.}~$-w_X$ permutes the sets $\{ \al_i \}_{i \in X}$ and $\{ h_i \}_{i \in X}$, we have $w_X(h_i)=-h_{\si_X(i)}$ for some bijection $\si_X:X \to X$.
It follows that
\eq{ \label{Xjeqn}  -a_{\si_X(i) \, j} - a_{ij} - \sum_{i' \in X(j)\backslash \{i\}} v_{i'}(\al_j) a_{i \, i'} = 2 v_i(\al_j),}
with each term nonnegative.
Define the distance $D_i$ of the node $i \in X(j)$ to the node $j$ to be the minimum of the lengths of all paths from $i$ to $j$ in $X(j) \cup \{j \}$.
By definition of $X(j)$ the $D_i$ are well-defined positive integers and there exists $D_{\rm max} \in \{ 1,2,\ldots,|I|-1\}$ such that $D_i \le D_{\rm max}$ for all $i \in X(j)$ and, conversely, for all $D \in \Z_{>0}$ there exists $i \in X(j)$ with $D_i=D$ precisely if $D \le D_{\rm max}$.

We now proceed by induction with respect to $D_i$.
The initial case is when $D_i=1$, {\it i.e.}~$a_{ij} \ne 0$.
Then the left-hand side of \eqref{Xjeqn} is positive as it contains the positive term $-a_{ij}$, yielding $v_i(\al_j)>0$ as required.
Suppose now that the statement is true for all $i \in X(j)$ with $D_i<D$ for some $D \in \Z_{>0}$. 
If there are no $i \in X(j)$ with $D_i=D$, then $D>D_{\rm max}$ and we are done.
Hence assume that $D_i=D$ for some $i \in X(j)$; then there exists $i' \in X(j)$ such that $a_{i'i}\ne 0$, $D_{i'}=D-1$ and, by the induction hypothesis, $v_{i'}(\al_j)>0$.
In this case the left-hand side of \eqref{Xjeqn} is positive as it contains the positive term $-v_{i'}(\al_j)a_{i \, i'}$, so that $v_i(\al_j)>0$ as required.
\end{proof}

We will only be interested in pairs $(X,\tau)$ where $X$ and $\tau$ satisfy certain compatibility criteria.
If $A_X$ is of finite type, $-w_X$ is an involution of the set $\{ \al_i \}_{i \in X}$. 
A useful compatibility criterion on the pair $(X,\tau)$ is that $-w_X$ and $\tau$ coincide on $Q_X$, i.e.
\eq{ \label{Satdiag1a} 
-w_X(\al_i) = \al_{\tau(i)} \qq \text{for all } i \in X.
}
Note that \eqref{Satdiag1a} implies that $\tau(X)=X$ and furthermore $\tau(Y)=Y$ if $Y$ is a component of $X$.

\begin{defn}
Let $X \subseteq I$ with $A_X$ of finite type and let $\tau\in \Aut(A)$ be an involution such that \eqref{Satdiag1a} is satisfied.
The \emph{Lie algebra automorphism associated to $(X,\tau)$} is 
\begin{flalign} \label{def:theta} 
&& \theta = \theta(X,\tau) := \Ad(m_{w_X}) \, \tau \, \omega \in \Aut(\mfg). && \defnend
\end{flalign}
\end{defn}

Hence $\theta$ acts on $\mfh$ as the involution $-w_X \tau$.
The dual involution on $\mfh^*$ also acts as $-w_X \tau$ and we also denote it by $\theta$.
We collect several basic properties of $\theta$ for later use.

\begin{prop} \label{prop:theta:properties}
Let $X \subseteq I$ with $A_X$ of finite type and let $\tau\in \Aut(A)$ be an involution such that \eqref{Satdiag1a} is satisfied.
Then 
\eqa{ 
\label{theta:gX} && \theta(x) &= x && \text{for all } x \in \mfg_X, \\
\label{theta:rootspace} && \theta(\mfg_\mu) &= \mfg_{\theta(\mu)} && \text{for all } \mu \in Q, \\
\label{theta:squared} && \theta^2(x) &= (-1)^{2\mu(\rho^\vee_X)}x \hspace{-10mm} && \text{for all } x \in \mfg_\mu, \\
\label{htheta:decomposition} && \mfh^\theta &= \bigoplus_{i \in X} \C h_i \; \oplus && \hspace{-28mm} \bigoplus_{j \in I^* \atop j \ne \tau(j)} \C (h_j - h_{\tau(j)}), 
}
where $I^*$ is any subset of $I \backslash X$ that intersects each $\tau$-orbit in a singleton. 
\end{prop}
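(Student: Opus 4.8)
The plan is to establish the four identities in turn, using the factorisation $\theta=\Ad(m_{w_X})\,\tau\,\omega$ and disposing of the grading statement first. Identity \eqref{theta:rootspace} is immediate: the Chevalley involution reverses the $Q$-grading, $\omega(\mfg_\mu)=\mfg_{-\mu}$; the diagram automorphism $\tau$ preserves it, $\tau(\mfg_\mu)=\mfg_{\tau(\mu)}$; and $\Ad(m_{w_X})(\mfg_\nu)=\mfg_{w_X(\nu)}$ was recalled above (from \cite{Ka}). Composing the three facts gives $\theta(\mfg_\mu)=\mfg_{-w_X\tau(\mu)}=\mfg_{\theta(\mu)}$, in accordance with $\theta$ acting on $\mfh^*$ as the involution $-w_X\tau$.

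Next I would prove \eqref{theta:gX}. Using $W$-invariance of the bilinear form and the invertibility of $A_X$, the identities $-w_X(\al_i)=\al_{\tau(i)}$ ($i\in X$) dualise to $w_X(h_i)=-h_{\tau(i)}$ for all $i\in X$, whence $\theta(h_i)=\Ad(m_{w_X})\tau\omega(h_i)=-w_X(h_{\tau(i)})=h_i$ and $\mfh_X\subseteq\mfh^\theta$. For the Chevalley generators, \eqref{theta:rootspace} together with $\theta(\al_i)=\al_i$ ($i\in X$) forces $\theta(e_i)=c_i e_i$ and $\theta(f_i)=c_i^{-1}f_i$ for some $c_i\in\C^\times$ (the relation between the two scalars coming from $\theta(h_i)=h_i$ and $[e_i,f_i]=h_i$). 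It remains to show $c_i=1$, equivalently that $\Ad(m_{w_X})(e_i)=-f_{\tau(i)}$ for $i\in X$. I would deduce this from the rank-one identity $\Ad(m_j)(e_j)=-f_j$ in the $\mfsl_2$-subalgebra attached to $j$, by induction on the length of a reduced word for $w_X$, using that the $m_j$ satisfy the braid relations (so $m_{w_X}$ is well defined) and the Tits formulas for $\Ad(m_j)$ on the remaining generators of $\mfg_X$; alternatively one can cite the analogous statement from the structure theory of Kac-Moody groups as in \cite{Ko1,KaWa}. Once all $c_i=1$, $\theta$ fixes every generator of $\mfg_X$ and \eqref{theta:gX} follows. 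This scalar normalisation is the one genuinely delicate step; the remaining identities are formal manipulations with $\Ad(m_{w_X})$, $\tau$ and $\omega$.

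For \eqref{theta:squared} I would compute $\theta^2$ directly. Since $\tau\omega=\omega\tau$ is an involution and $\phi\,\Ad(g)\,\phi^{-1}=\Ad(\phi(g))$ for any automorphism $\phi$ of the Kac-Moody group $G$, we get $\theta^2=\Ad(m_{w_X})\,\Ad\!\big((\tau\omega)(m_{w_X})\big)$. Here $\omega(m_j)=m_j$ by a rank-one check (using $\omega(\exp x)=\exp(\omega x)$), and $\tau(m_{w_X})=m_{w_X}$ because $\tau$ permutes $W_X$ preserving length, so fixes its longest element, while $m_w$ depends only on $w$; hence $(\tau\omega)(m_{w_X})=m_{w_X}$ and $\theta^2=\Ad(m_{w_X}^2)$. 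Finally $m_{w_X}^2$ centralises $\mfh$ (as $w_X^2=1$) and so lies in the maximal torus of $G$, and the standard evaluation of the square of the Tits lift of the longest element of $W_X$ (applied componentwise along a decomposition of $X$) shows that it acts on $\mfg_\mu$ by $(-1)^{2\mu(\rho^\vee_X)}$, where one uses also that $2\mu(\rho^\vee_X)\in\Z$ for $\mu\in Q$.

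Finally, for \eqref{htheta:decomposition} I would pass to $\mfh/\mfh_X$. We already have $\mfh_X\subseteq\mfh^\theta$; and since $r_i(h)\equiv h\pmod{\mfh_X}$ for $i\in X$ and all $h\in\mfh$, we get $\theta(h_j)=-w_X(h_{\tau(j)})\equiv -h_{\tau(j)}\pmod{\mfh_X}$ for $j\in I\setminus X$. Hence on $\mfh/\mfh_X$, with basis $\{\bar h_j\}_{j\in I\setminus X}$, $\theta$ induces minus the permutation given by $\tau$, whose $(+1)$-eigenspace is spanned by the classes $\bar h_j-\bar h_{\tau(j)}$ over $j\in I^*$ with $\tau(j)\ne j$. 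To lift these to actual $\theta$-fixed vectors I would set $u:=\theta(h_j-h_{\tau(j)})-(h_j-h_{\tau(j)})$; one computes $u=w_X(h_j-h_{\tau(j)})-(h_j-h_{\tau(j)})\in\mfh_X$, and applying the involution $\theta|_\mfh$ once more gives $u+\theta(u)=0$, so $u\in\mfh_X\subseteq\mfh^\theta$ forces $u=0$, i.e.\ $h_j-h_{\tau(j)}\in\mfh^\theta$. The natural map $\mfh^\theta\to(\mfh/\mfh_X)^\theta$ then has kernel $\mfh_X$ and is surjective by the lifting just performed, whence $\mfh^\theta=\bigoplus_{i\in X}\C h_i\oplus\bigoplus_{j\in I^*,\,\tau(j)\ne j}\C(h_j-h_{\tau(j)})$, which is \eqref{htheta:decomposition}.
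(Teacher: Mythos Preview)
Your proof is correct and follows essentially the same approach as the paper. For \eqref{theta:gX} and \eqref{theta:squared} the paper simply cites \cite[Lem.~4.9]{BBBR}, \cite[Prop.~2.2(3)]{Ko1} and \cite[Cor.~4.10.3]{BBBR}, whereas you sketch the underlying arguments (the commutation $(\tau\omega)(m_{w_X})=m_{w_X}$ giving $\theta^2=\Ad(m_{w_X}^2)$ is exactly how the paper reasons too); for \eqref{theta:rootspace} both arguments are identical.

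The only noteworthy difference is in \eqref{htheta:decomposition}: the paper passes by duality to the root lattice, defines $A_i(X,\tau):=\al_i-\al_{\tau(i)}+w_X(\al_{\tau(i)})-w_X(\al_i)\in Q_X$, and shows $A_i=0$ by the same ``fixed yet negated'' trick ($\theta(A_i)=A_i$ since $A_i\in Q_X$, but a direct computation gives $\theta(A_i)=-A_i$). Your argument is the dual of this, carried out directly on $\mfh$ via the quotient $\mfh/\mfh_X$; your element $u$ is precisely the $\mfh$-side analogue of $A_j$. Both routes are equally short.
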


\begin{proof}
Note that \eqref{theta:gX} is \cite[Lem. 4.9]{BBBR}. 
Secondly, \eqref{theta:rootspace} follows immediately from the corresponding properties of the component Lie algebra automorphisms, in particular $\Ad(m_{w_X})(\mfg_\mu) = \mfg_{w_X(\mu)}$ for all $\mu \in Q$.
Next, for \eqref{theta:squared} note that $\Ad(m_{w_X})$ commutes with $\tau$ and with $\omega$ (see \cite[Prop. 2.2 (3)]{Ko1}), so that $\theta^2 = \Ad(m_{w_X}^2)$.
Now \cite[Cor. 4.10.3]{BBBR} implies the statement.

We note that, by duality, \eqref{htheta:decomposition} is equivalent to 
\eq{ \label{equivalence1}
\theta(\mu) = \mu \qu \Longleftrightarrow \qu \exists \bm m \in \Z^I: \, \mu = \sum_{i \in X} m_i \al_i + \sum_{j \in I^* \atop j \ne \tau(j)} m_j (\al_j -  \al_{\tau(j)}) 
}
for all $\mu \in Q$.
Consider $A_i(X,\tau) := \al_i - \al_{\tau(i)} + w_X(\al_{\tau(i)}) - w_X(\al_i) \in Q_X$ for $i \in I$.
By applying $\theta$ we obtain 
\eq{ \label{A0}
A_i(X,\tau)=0
} 
(also see \cite[Lem.~3.1]{BgKo1}).
Hence from \eqref{Satdiag1a} we infer that, for all $\mu \in Q$,
\eq{ \label{equivalence2}
\theta(\mu)=\mu \qu \Longleftrightarrow \qu \sum_{j \in I^* \atop j=\tau(j)} m_j(\al_j + w_X(\al_j)) + \sum_{j \in I^* \atop j \ne \tau(j)} (m_j+m_{\tau(j)})(\al_j + w_X(\al_{\tau(j)})) = 0. 
}
Note that for $j \in I \backslash X$, $w_X(\al_j) - \al_j \in Q_X$ so that the right-hand side of \eqref{equivalence2} implies 
\[ \sum_{j \in I^* \atop j=\tau(j)} 2m_j\al_j + \sum_{j \in I^* \atop j \ne \tau(j)} (m_j+m_{\tau(j)})(\al_j + \al_{\tau(j)}) \in Q_X \]
and we obtain the right-hand side of \eqref{equivalence1}. 
The implication $\Leftarrow$ in \eqref{equivalence1} follows from a straightforward calculation using \eqref{Satdiag1a} and \eqref{A0}.  \hfill \qedhere

\end{proof}

\begin{defn}
Let $X \subseteq I$ with $A_X$ of finite type and let $\tau\in \Aut(A)$ be an involution such that \eqref{Satdiag1a} is satisfied.
Furthermore, let $\bm c \in (\C^\times)^{I \backslash X}$ and denote 
\eq{ \label{def:g_i} 
g_i := \begin{cases} f_i & \text{if } i \in X, \\  f_i + c_i \theta(f_i) & \text{otherwise} \end{cases}
}
and
\eq{ \label{def:nplusX}
\mfn^+_X := \mfn^+ \cap \mfg_X = \langle e_i \rangle_{i \in X}.
}
The \emph{Lie algebra associated to the triple $(X,\tau,\bm c)$} is
\begin{flalign}
\label{def:kc} 
&& \mfk_{\bm c} = \mfk_{\bm c}(X,\tau) = \langle \mfn^+_X, \mfh^\theta, \{ g_i \}_{i \in I} \rangle \subseteq \mfg. && \defnend
\end{flalign}
\end{defn}

Now we recall the notion of a Satake diagram (also called ``admissible pair'') for symmetrizable Kac-Moody algebras as per \cite[Defn.~2.3]{Ko1} (but compare {\it e.g.}~\cite[Def. 4.10 (b)]{BBBR}).

\begin{defn} \label{def:Satdiag}
Let $X \subseteq I$ with $A_X$ of finite type and let $\tau\in \Aut(A)$ be an involution such that \eqref{Satdiag1a} is satisfied.
The pair $(X,\tau)$ is called a \emph{Satake diagram (associated to $A$)} if
\eq{ \label{Satdiag2a} 
\tau(j)=j \; \implies \; \alpha_j(\rho^\vee_X) \in \Z \qq \text{for all } j \in I \backslash X
}
where $\rho^\vee_X$ is the half-sum of positive coroots associated to $\mfg_X$.
The class of all Satake diagrams associated to $A$ is denoted $\Sat(A)$. \hfill \defnend
\end{defn}

For a complete listing of Satake diagrams associated to affine Cartan matrices (both twisted and untwisted, both of classical and exceptional Lie type), see \cite[Sec.~6]{BBBR}.
If $(X,\tau)$ is a Satake diagram, then the automorphism $\theta$ can be made into an involution and the Lie subalgebra $\mfk_{\bm c}$ into a fixed point subalgebra of this involution, as will become clear in the proof of the following proposition. 
More relevant to us is the conclusion drawn with respect to the Cartan elements contained in the algebra $\mfk_{\bm c}$.

\begin{prop} \label{prop:kc:intersection1}
If $(X,\tau) \in \Sat(A)$ then there exists $\bm c \in (\C^\times)^{I \backslash X}$ such that the \emph{intersection property} holds:
\eq{ \label{intersection} 
\mfk_{\bm c} \cap \mfh = \mfh^\theta.
}
\end{prop}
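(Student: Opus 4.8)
The inclusion $\mfh^\theta\subseteq\mfk_{\bm c}\cap\mfh$ holds for every $\bm c$ directly from \eqref{def:kc}, so the content of the statement is the reverse inclusion for a suitably chosen $\bm c$. The plan is to twist $\theta$ into an honest involution $\tl\theta$ by a torus automorphism, to pick $\bm c$ so that all the generators listed in \eqref{def:kc} become $\tl\theta$-fixed, and then to read off $\mfk_{\bm c}\cap\mfh\subseteq\mfg^{\tl\theta}\cap\mfh=\mfh^{\tl\theta}=\mfh^\theta$.

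For the twist, I would look for a group homomorphism $\chi\colon Q\to\C^\times$ with $\chi|_{Q_X}=1$, let $\phi_\chi\in\Aut(\mfg)$ be the automorphism scaling each root space $\mfg_\mu$ by $\chi(\mu)$ (well defined by \eqref{mfg:grading}, and trivial on $\mfh=\mfg_0$), and set $\tl\theta:=\phi_\chi\circ\theta$. Using \eqref{theta:rootspace} and \eqref{theta:squared}, the scalar by which $\tl\theta^2$ acts on $\mfg_\mu$ is $\chi(\mu)\,\chi(\theta(\mu))\,(-1)^{2\mu(\rho^\vee_X)}$, which is multiplicative in $\mu$; hence $\tl\theta^2=\id$ is equivalent to $\chi(\alpha_i)\,\chi(\theta(\alpha_i))=(-1)^{2\alpha_i(\rho^\vee_X)}$ for all $i\in I$. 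For $i\in X$ this is automatic, while for $j\in I\setminus X$ Lemma \ref{lem:Xjdecomposition} together with \eqref{Satdiag1a} gives $\theta(\alpha_j)=-w_X(\alpha_{\tau(j)})\equiv-\alpha_{\tau(j)}\pmod{Q_X}$, so (using $\chi|_{Q_X}=1$) the condition becomes $\chi(\alpha_j)\,\chi(\alpha_{\tau(j)})^{-1}=(-1)^{2\alpha_j(\rho^\vee_X)}=:\epsilon_j\in\{\pm1\}$. Since $\tau$ preserves each component of $X$ it fixes $\rho^\vee_X$, whence $\epsilon_j=\epsilon_{\tau(j)}$, so this system is solvable precisely when $\epsilon_j=1$ for every $\tau$-fixed $j\in I\setminus X$ — which is exactly the defining condition \eqref{Satdiag2a} of a Satake diagram. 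Thus one can choose such a $\chi$ (say $\chi(\alpha_i)=1$ for $i\in X$ and for one chosen element $j$ of each $\tau$-orbit in $I\setminus X$, and $\chi(\alpha_{\tau(j)})=\epsilon_j$ for the other element of each two-element orbit); this is the only place the Satake property is used.

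Given $\chi$, I would set $c_i:=\chi(\alpha_{\tau(i)})$ for $i\in I\setminus X$, so that $\bm c\in(\C^\times)^{I\setminus X}$ and $c_i=\epsilon_i\chi(\alpha_i)$, and then verify that the generators of $\mfk_{\bm c}$ are $\tl\theta$-fixed: the $e_i$ and $g_i=f_i$ with $i\in X$, and $h\in\mfh^\theta$, are fixed because $\phi_\chi$ is trivial on $\mfg_X$ and on $\mfh$ while $\theta$ fixes these by \eqref{theta:gX}; and for $g_i=f_i+c_i\theta(f_i)$ with $i\notin X$ one computes, using $\theta^2(f_i)=\epsilon_i f_i$ from \eqref{theta:squared} and $\phi_\chi(\theta(f_i))=\chi(\alpha_{\tau(i)})\theta(f_i)=c_i\theta(f_i)$ (again via Lemma \ref{lem:Xjdecomposition}), that $\tl\theta(g_i)=c_i\theta(f_i)+\epsilon_i\chi(\alpha_i)^{-1}c_i f_i=g_i$ by the choice of $c_i$. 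Hence $\mfk_{\bm c}\subseteq\mfg^{\tl\theta}$; intersecting with $\mfh$ and noting $\tl\theta|_\mfh=\theta|_\mfh$ (since $\phi_\chi|_\mfh=\id$) gives $\mfk_{\bm c}\cap\mfh\subseteq\mfh^{\tl\theta}=\mfh^\theta$, which with the trivial inclusion yields \eqref{intersection}. The one genuinely delicate step is the construction of the twist: recognising that $\theta$ need not be involutive by \eqref{theta:squared} and that the obstruction to correcting it by a torus twist lives precisely in the signs $\epsilon_j$ for $\tau$-fixed $j\in I\setminus X$, whose vanishing is the Satake condition; everything after that is bookkeeping with the $Q$-grading. (One may alternatively realise $\phi_\chi$ as $\Ad(s)$ for a suitable element $s$ of the maximal torus of the Kac–Moody group $G$, but working with the grading automorphism directly is cleaner.)
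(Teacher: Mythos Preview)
Your proof is correct and follows essentially the same strategy as the paper's: twist $\theta$ by a character of $Q$ to obtain an involution $\tl\theta$, choose $\bm c$ so that the generators of $\mfk_{\bm c}$ are $\tl\theta$-fixed, and conclude via $\mfk_{\bm c}\cap\mfh\subseteq\mfg^{\tl\theta}\cap\mfh=\mfh^{\tl\theta}=\mfh^\theta$. The paper delegates the involutivity of the twist and the fixed-point identification to \cite{Ko1} (Thm.~2.5 and Lem.~2.8 there), whereas you work these steps out directly from \eqref{theta:squared} and the $Q$-grading; your version is more self-contained but otherwise identical in structure.
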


\begin{proof}
Fix $(X,\tau) \in \Sat(A)$. 
Consider $\wt H:=\Hom(Q,\C^\times)$, the group of characters of $Q$ over $\C$. 
For $\chi \in \wt H$, define $\Ad(\chi) \in \Aut(\mfg)$ by
\[ 
\Ad(\chi)(a) = \chi(\mu)(a) \qq \text{for all } a \in \mfg_\mu, \, \mu \in Q. 
\]
As $X=\tau(X)$ we have $\al_j(2 \rho^\vee_X) = \al_{\tau(j)}(2 \rho^\vee_X) \in \Z$ for all $j \in I \backslash X$.
Choose $\chi \in \wt H$ such that
\begin{alignat*}{90}
\frac{\chi(\al_{\tau(j)})}{\chi(\al_j)} &= (-1)^{\al_j(2 \rho^\vee_X)} \qq && \text{if } j \notin X \text{ and } j \ne \tau(j), \\
\chi(\al_j) &= 1 && \text{otherwise,}
\end{alignat*}
see \cite[Eqn. (2.7)]{Ko1} and \cite[(5.1-5.2)]{BgKo2}.
Then $\theta':= \Ad(\chi) \theta$ is an involution as per \cite[Thm. 2.5]{Ko1}. 
Defining $\bm c \in (\C^\times)^{I \backslash X}$ by $c_j = \chi(\theta(\al_j))^{-1}$, we have $c_j \theta(f_j) = \theta'(f_j)$ for all $j \in I \backslash X$ and hence $\mfk_{\bm c}$ equals the fixed-point Lie subalgebra $\mfg^{\theta'}$.
Moreover, owing to \cite[Lem. 2.8]{Ko1}, $\mfg^{\theta'} \cap \mfh = \mfh^{\theta'}$.
Noting that $\Ad(\chi)$ fixes $\mfh$ pointwise, we derive \eqref{intersection}. 
\end{proof}

\begin{rmk}
A Lie algebra automorphism $\psi \in \Aut(\mfg)$ is said to be \emph{of the second kind} if ${\rm dim}(\psi(\mfb^+) \cap \mfb^+)<\infty$.
The automorphism $\theta$ defined by \eqref{def:theta} is of the second kind, since $A_X$ is of finite type.
Hence, if $\psi$ equals the involution $\theta'$ appearing in the proof of Proposition \ref{prop:kc:intersection1}, then $\psi$ is of the second kind and $(\mfg,\mfg^{\psi})$ is called the \emph{symmetric pair associated to $(X,\tau)$}.
All involutive automorphisms of $\mfg$ of the second kind are related to a Satake diagram in this way, see \cite[Thm.~{2.7}]{Ko1}. 
\hfill \rmkend
\end{rmk}

Let us consider some basic examples of pairs $(X,\tau)$ satisfying \eqref{Satdiag1a} and study the Lie algebra automorphism $\theta$ and the Lie subalgebra $\mfk_{\bm c}$.
 
\begin{exam} \label{ex:Satdiags}
Let $A$ be of finite type and rank 2. 
Write $I=\{1,2\}$ for the labelling set. 
Choose $\tau=\id$ and $X=\{2\}$.
Then $-w_X=-r_2$ fixes $\al_2$ so that \eqref{Satdiag1a} is satisfied.  
Also, $\theta = \Ad(r_2) \omega$ (in particular, $\mfh^\theta = \C h_2$) and $\mfk_{\bm c}$ is generated by $e_2$, $g_2 = f_2$ and $g_1 = f_1+c_1 \theta(f_1)$. 

\begin{enumerate}


\item
Suppose $A$ is of type ${\rm B}_2$, {\it i.e.}~$\mfg = \mfso_5$.
The Satake diagram is
$\begin{tikzpicture}[baseline=-0.25em,line width=0.7pt,scale=1]
\draw[double,->] (0,0) -- (0.4,0);
\filldraw[fill=white] (0,0) circle (.1) node[above=1pt]{\footnotesize 1};
\filldraw[fill=black] (.5,0) circle (.1) node[above=1pt]{\footnotesize 2};
\end{tikzpicture}$.
Note that $\al_1(\rho^\vee_X)= \frac{1}{2}\al_1(h_2) = \frac{1}{2}a_{21} = -1$.
Hence $(X,\tau) \in \Sat(A)$.
According to Proposition \ref{prop:kc:intersection1}, for $c_1= \chi(\theta(\al_1))^{-1}=1$ we have \eqref{intersection}. 
In fact, a direct calculation establishes that \eqref{intersection} holds for all $c_1 \in \C^\times$.
Namely, note that
\begin{align*} 
\theta(f_1) &= -\exp(\ad(e_2)) \exp(\ad(-f_2)) \exp(\ad(e_2))(e_1) \\
&= -\exp(\ad(e_2)) \exp(\ad(-f_2))(e_1+[e_2,e_1]+\tfrac{1}{2}[e_2,[e_2,e_1]]) \\
&= -\tfrac{1}{2}\exp(\ad(e_2))([e_2,[e_2,e_1]]) \\
&= -\tfrac{1}{2}[e_2,[e_2,e_1]]
\end{align*}
so that $g_1=f_1-\tfrac{c_1}{2}[e_2,[e_2,e_1]]$.
Defining
\[
g_{(2,1)} := [g_2,g_1] = [f_2,f_1] - c_1 [e_2,e_1], \qq g_{(2,2,1)} := [g_2,g_{(2,1)}] = [f_2,[f_2,f_1]]- 2c_1e_1
\]
we have $[g_{(2,2,1)},f_2]=0$ and 
\begin{align*}
[g_1,h_2] &= -2g_1, & [g_{(2,1)},h_2] &= 0, & [g_{(2,2,1)},h_2] &= 2g_{(2,2,1)}, \\
[g_1,e_2] &= 0, & [g_{(2,1)},e_2] &= -2g_1, & [g_{(2,2,1)},e_2] &= -2g_{(2,1)}, \\
[g_1,g_{(2,1)}] &= 2c_1 e_2, & [g_1,g_{(2,2,1)}] &= -2c_1 h_2, & [g_{(2,1)},g_{(2,2,1)}] &= -4c_1 g_2. 
\end{align*}
Hence 
\[ \mfk_{\bm c} = \C e_2 \oplus \C h_2 \oplus \C g_2 \oplus \C g_1 \oplus \C g_{(2,1)} \oplus \C g_{(2,2,1)} \cong \mfso_4 \]
so that $\mfk_{\bm c} \cap \mfh = \mfh^\theta$ for all $c_1 \in \C^\times$. 


\item
If $A$ is of type ${\rm C}_2$, {\it i.e.}~$\mfg = \mfsp_4$. The Satake diagram is
$\begin{tikzpicture}[baseline=-0.25em,line width=0.7pt,scale=1]
\draw[double,<-] (0.1,0) -- (0.5,0);
\filldraw[fill=white] (0,0) circle (.1) node[above=1pt]{\footnotesize1};
\filldraw[fill=black] (.5,0) circle (.1) node[above=1pt]{\footnotesize2};
\end{tikzpicture}$.
Now $\al_1(\rho^\vee_X)= \frac{1}{2}\al_1(h_2) = \frac{1}{2}a_{21} = -\frac{1}{2}$, so $(X,\tau) \notin \Sat(A)$.
Nevertheless, we have
\eqn{
\theta(f_1) &= -\exp(\ad(e_2))\exp(\ad(-f_2))(e_1+[e_2,e_1]) \\ &= -\exp(\ad(e_2))([e_2,e_1]) = [e_1,e_2]
}
so that $g_1=f_1+c_1[e_1,e_2]$.
Define $g_{(1,2)}:=[g_1,g_2]=[f_1,f_2]+ c_1 e_1$. Then $[g_{(1,2)},g_2]=0$ and
\[ 
\qq [g_1,h_2] = -g_1, \qq [g_{(1,2)},h_2] = g_{(1,2)}, \qq [g_1,e_2] = 0, \qq [g_{(1,2)},e_2] = -[g_1,h_2] = g_1. 
\]
Also define
\[ 
g_{(1,1,2)} := [g_1,g_{(1,2)}] = [f_1,[f_1,f_2]]  -2 c_1 (h_1+h_2) - c_1^2 [e_1,[e_1,e_2]].
\]
Straightforward calculations show that this element is central in $\mfk_{\bm c}$. Hence
\[ \mfk_{\bm c} = \C e_2 \oplus \C h_2 \oplus \C g_2 \oplus \C g_1 \oplus \C g_{(1,2)} \oplus \C g_{(1,1,2)} \]
so that, again, for all $c_1 \in \C^\times$ we have $\mfk_{\bm c} \cap \mfh = \mfh^\theta$. 
Hence, the subalgebra $\mfk_{\bm c}$ has a nontrivial Levi decomposition in terms of the simple Lie algebra $\langle e_2,g_2,h_2 \rangle \cong \mfsl_2$ and the solvable Lie algebra $\langle g_1, g_{(1,2)}, g_{(1,1,2)} \rangle$ (isomorphic to the three-dimensional Heisenberg algebra).


\item
Finally, if $A$ is of type ${\rm A}_2$, {\it i.e.}~$\mfg = \mfsl_3$. the Satake diagram is
$\begin{tikzpicture}[baseline=-0.25em,line width=0.7pt,scale=1]
\draw[thick] (0,0) -- (0.5,0);
\filldraw[fill=white] (0,0) circle (.1) node[above=1pt]{\footnotesize1};
\filldraw[fill=black] (.5,0) circle (.1) node[above=1pt]{\footnotesize2};
\end{tikzpicture}$.
As in the previous example, $(X,\tau) \notin \Sat(A)$.
Also, $\theta(f_1) = [e_1,e_2]$ so that $g_1=f_1+c_1[e_1,e_2]$.
Defining $g_{(1,2)}=[g_1,g_2]=[f_1,f_2]+c_1 e_1$ we obtain
\[ [g_1,g_{(1,2)}] = [f_1+c_1[e_1,e_2],[f_1,f_2]+c_1 e_1] = -c_1(2h_1 + h_2) \in \mfk_{\bm c} \cap \mfh, \]
but $\theta([g_1,g_{(1,2)}])=-[g_1,g_{(1,2)}]$, so that the intersection condition \eqref{intersection} fails for all values of $\bm c \in (\C^\times)^{I \backslash X}$. 
In fact, since $c_1\ne0$ a straightforward computation yields $\mf{k}_{\bm c} = \mfsl_3$. \hfill \examend
\end{enumerate}
\end{exam}

Clearly, Example \ref{ex:Satdiags} (iii) can be generalized to the following statement. 
Suppose $X \subseteq I$ with $A_X$ is of finite type and $\tau\in \Aut(A)$ is an involution such that \eqref{Satdiag1a} and \eqref{intersection} are satisfied. 
Then there are no $(i,j) \in X \times I \backslash X$ such that $\tau(j)=j$, $X(j)= \{ i \}$ and $a_{ij}=a_{ji}=-1$. 
In other words,
\eq{ \label{Satdiag2b}
\theta(\al_j) = -(\al_i+\al_j) \qu \implies \qu a_{ji} \ne -1 \qq \text{for all } (i,j) \in X \times I \backslash X.
}
Here we have used Lemma \ref{lem:Xjdecomposition} to rewrite the condition on $(i,j)$ in a more compact way.
This leads to the following definition.

\begin{defn} \label{def:genSatdiag}
Let $X \subseteq I$ with $A_X$ of finite type and let $\tau\in \Aut(A)$ be an involution such that \eqref{Satdiag1a} is satisfied.
The pair $(X,\tau)$ is called a \emph{generalized Satake diagram (associated to $A$)} if \eqref{Satdiag2b} holds.
We denote the class of all generalized Satake diagrams associated to $A$ by $\GSat(A)$. 
\hfill \defnend
\end{defn}

Diagrammatically, condition \eqref{Satdiag2b} means that among all the connected components of all decorated diagrams obtained from $(X,\tau)$ by repeatedly deleting unfilled $\tau$-orbits (along with all adjacent edges) there are no copies of 
$\begin{tikzpicture}[baseline=-0.25em,line width=0.7pt,scale=1]
\draw[thick] (0,0) -- (0.5,0);
\filldraw[fill=white] (0,0) circle (.1);
\filldraw[fill=black] (.5,0) circle (.1);
\end{tikzpicture}$.

\begin{exam}
The following decorated diagrams are all examples of generalized Satake diagrams:
\begin{flalign*}
\qq && 
\begin{tikzpicture}[baseline=-0.25em,line width=0.7pt,scale=1]
\draw[thick] (-.5,0) -- (.5,0);
\filldraw[fill=black] (-.5,0) circle (.1);
\filldraw[fill=white] (0,0) circle (.1);
\filldraw[fill=black] (.5,0) circle (.1);
\end{tikzpicture} 
&& 
\begin{tikzpicture}[baseline=-0.25em,line width=0.7pt,scale=1]
\draw[double,->] (-.5,0) -- (-.1,0);
\draw[thick] (0,0) -- (0.5,0);
\draw[double,<-] (.6,0) -- (1,0);
\filldraw[fill=black] (-.5,0) circle (.1);
\filldraw[fill=white] (0,0) circle (.1);
\filldraw[fill=white] (.5,0) circle (.1);
\filldraw[fill=black] (1,0) circle (.1);
\end{tikzpicture} 
&& 
\begin{tikzpicture}[baseline=-0.25em,line width=0.7pt,scale=1]
\draw[thick] (0,0) -- (0.5,0);
\draw[double,<-] (.6,0) -- (1,0);
\filldraw[fill=white] (0,0) circle (.1);
\filldraw[fill=black] (.5,0) circle (.1);
\filldraw[fill=black] (1,0) circle (.1);
\end{tikzpicture}
&& 
\begin{tikzpicture}[baseline=-0.25em,line width=0.7pt,scale=1]
\draw[thick] (-.5,.3) -- (0,0) -- (-.5,-.3);
\draw[<->,gray] (-.5,.2) -- (-.5,-.2);
\filldraw[fill=white] (-.5,.3) circle (.1);
\filldraw[fill=white] (-.5,-.3) circle (.1);
\filldraw[fill=black] (0,0) circle (.1);
\end{tikzpicture} 
&&
\qq\examend
\end{flalign*}
\end{exam}

If \eqref{Satdiag2b} does not hold, {\it i.e.}~there exist $(i,j) \in X \times I \backslash X$ such that $\tau(j)=j$, $X(j)=\{i\}$, $a_{ij} = a_{ji}=-1$, then $\al_j(\rho^\vee_X) =  \frac{1}{2} \al_j(h_i) = \frac{a_{ij}}{2} = -\frac{1}{2}$ and we see that \eqref{Satdiag2a} fails too. 
We obtain the following statement.

\begin{lemma}
$\Sat(A) \subseteq \GSat(A)$.
\end{lemma}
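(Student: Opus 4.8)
The plan is to show that the defining condition of a Satake diagram, namely \eqref{Satdiag2a}, is stronger than the defining condition \eqref{Satdiag2b} of a generalized Satake diagram, given the shared hypotheses that $A_X$ is of finite type and $\tau$ is an involution satisfying \eqref{Satdiag1a}. Since both classes consist of pairs $(X,\tau)$ subject to these shared hypotheses, it suffices to prove the implication \eqref{Satdiag2a} $\Rightarrow$ \eqref{Satdiag2b} for such pairs.

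The cleanest route is to argue by contrapositive, exactly as in the paragraph immediately preceding the lemma. Suppose \eqref{Satdiag2b} fails for $(X,\tau)$. Then by definition there exists a pair $(i,j) \in X \times (I \backslash X)$ with $\tau(j)=j$, $X(j) = \{i\}$ and $a_{ij} = a_{ji} = -1$. I would then compute $\al_j(\rho^\vee_X)$ directly: since $X(j) = \{i\}$, the node $j$ is attached only to the single node $i$ of $X$, and $\{i\}$ is itself a component of $X$ of type ${\rm A}_1$; hence $\rho^\vee_X$ restricted to the relevant directions contributes only through $h_i$, giving $\al_j(\rho^\vee_X) = \tfrac12\,\al_j(h_i) = \tfrac12 a_{ij} = -\tfrac12 \notin \Z$. (More carefully: $\rho^\vee_X = \sum_{t} \rho^\vee_{X_t}$ over a decomposition of $X$, and $\al_j(\rho^\vee_{X_t}) = 0$ for any component $X_t$ none of whose nodes is adjacent to $j$; the only component contributing is the one containing $i$, which by $X(j)=\{i\}$ equals $\{i\}$, so $\rho^\vee_{\{i\}} = \tfrac12 h_i$.) Since $\tau(j)=j$ but $\al_j(\rho^\vee_X) = -\tfrac12 \notin \Z$, condition \eqref{Satdiag2a} fails, so $(X,\tau) \notin \Sat(A)$. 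This establishes the contrapositive and hence $\Sat(A) \subseteq \GSat(A)$.

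One should also double-check that the rewriting of \eqref{Satdiag2b} in terms of $X(j)=\{i\}$ and $a_{ij}=a_{ji}=-1$ genuinely matches the condition "$\theta(\al_j) = -(\al_i+\al_j)$ and $a_{ji}\ne -1$" as stated, using Lemma \ref{lem:Xjdecomposition}: when $X(j) = \{i\}$, equation \eqref{Xjdecomposition} reads $w_X(\al_j) = \al_j + v_i(\al_j)\al_i$ with $v_i(\al_j) \in \Z_{>0}$, and a short computation with the ${\rm A}_1$ component (or directly from $r_i(\al_j) = \al_j - a_{ij}\al_i = \al_j + \al_i$) forces $v_i(\al_j)=1$, so $\theta(\al_j) = -w_X\tau(\al_j) = -w_X(\al_j) = -(\al_i+\al_j)$; conversely this form of $\theta(\al_j)$ pins down $X(j)=\{i\}$ and $a_{ij}=-1$. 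This is routine and already implicitly used by the authors.

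I do not anticipate a genuine obstacle here: the lemma is essentially a bookkeeping consequence of the two definitions plus the explicit half-sum-of-coroots computation, and all the needed ingredients (Lemma \ref{lem:Xjdecomposition}, the structure of $\rho^\vee_X$, the observation in the paragraph before the lemma) are already in place. The only point requiring a modicum of care is making sure the components-of-$X$ formalism is invoked correctly so that $\al_j(\rho^\vee_X)$ really does reduce to $\tfrac12 a_{ij}$ and not pick up spurious contributions from other components of $X$ adjacent to $j$ — but this is precisely ruled out by the hypothesis $X(j) = \{i\}$, which says no other component of $X$ is adjacent to $j$.
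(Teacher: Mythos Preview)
Your proposal is correct and follows exactly the same approach as the paper: the argument is given in the paragraph immediately preceding the lemma, which establishes the contrapositive by computing $\al_j(\rho^\vee_X) = \tfrac{1}{2}a_{ij} = -\tfrac{1}{2}$ when \eqref{Satdiag2b} fails. Your version is somewhat more detailed in justifying why only the component $\{i\}$ contributes to $\al_j(\rho^\vee_X)$, but this is the same proof.
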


\begin{defn} \label{def:weakSatdiag}
Let $X \subseteq I$ with $A_X$ of finite type and let $\tau\in \Aut(A)$ be an involution such that \eqref{Satdiag1a} is satisfied.
The pair $(X,\tau)$ is called a \emph{weak Satake diagram (associated to $A$)} if $(X,\tau) \in \WSat(A):=\GSat(A) \backslash \Sat(A)$. 
\hfill \defnend
\end{defn}

Example \ref{ex:Satdiags} (ii) shows that, at least for some $A$, $\WSat(A)$ is non-empty. More precisely, 
if $A$ is of type ${\rm A}_n$ or ${\rm A}_n^{(1)}$ then $\WSat(A)$ is empty; for all other finite or affine Cartan matrices, there exist weak Satake diagrams.
All generalized Satake diagrams for untwisted affine Cartan matrices of classical Lie type are listed in Appendix \ref{App:Satakediagrams}. 

\medskip 

The motivation for considering generalized Satake diagrams is that the corresponding Lie subalgebra $\mfk_{\bm c}$ satisfy the key intersection property \eqref{intersection} (and in the case of Satake diagrams, not just for one specific choice of $\bm c$ as in the proof of Proposition \ref{prop:kc:intersection1}).

\begin{rmk}
We will see that the quantum analogons $B_{\bm c, \bm s}$ of these Lie subalgebras satisfy a similar intersection condition (see Proposition \ref{prop:q-intersectionproperty}) and that the vector representation restricted to $B_{\bm c, \bm s}$ has nontrivial intertwiners, yielding solutions to the reflection equation.
It would be good to find a Lie-algebraic motivation for the notion of generalized Satake diagrams that does not rely on the algebra $\mfk_{\bm c}$, which is defined in a rather \emph{ad hoc} manner. \hfill \rmkend
\end{rmk}

In order to prove \eqref{intersection}, it is convenient to have a vector space basis of $\mfk_{\bm c}$. 
This can be established in a similar way to the arguments in \cite[Secs.~5.3 \& 6]{Ko1}.
First of all, define a partial order on $Q$ in the usual way: $\al \ge \beta$ if and only if $\al - \beta \in Q^+$; we also write $\al > \beta$ if and only if $\al \ge \beta$ and $\al \ne \beta$.
For $\bm i = (i_1,\ldots,i_\ell) \in I^\ell$ with $\ell \in \Z_{>0}$ define
\eq{ \label{def:efalphatuple}
\al_{\bm i} = \sum_{m=1}^\ell \al_{i_m}, \qq
f_{\bm i} = \ad(f_{i_1}) \cdots \ad(f_{i_{\ell-1}}) (f_{i_\ell}), \qq
g_{\bm i} = \ad(g_{i_1}) \cdots \ad(g_{i_{\ell-1}})(g_{i_\ell}).
}
Let $j \in I$.
From \eqref{mfg:rels1}, \eqref{theta:gX} and the explicit formula \eqref{def:g_i} we immediately obtain
\eq{
\label{kc:rels1} [e_i,g_j] = \del_{ij} h_i \in \mfh^\theta \qu \text{for all } i \in X, \qq [h,g_j] = - \al_j(h) g_j \qu \text{for all } h \in \mfh^\theta.
}
It follows that, as vector spaces,
\eq{ 
\label{kc:decomposition1} \mfk_{\bm c} = \mfn^+_X + \mfh^\theta + \langle g_j \rangle_{j \in I} = 
\mfn^+_X + \mfh^\theta + \sum_{\ell \in \Z_{\ge 0}} \sum_{\bm i \in I^\ell} \C g_{\bm i}.
}
Now we establish some relations for elements of $\mfk_{\bm c}$.
As a preliminary result, we have the following.

\begin{lemma} \label{lem:giij}
Let $(X,\tau) \in \GSat(A)$ and $i \in I \backslash X$ be such that $i = \tau(i)$. Also let $m \in \Z_{\ge 0}$.
\begin{enumerate} [itemsep=0.25em]
\item If $j \in X$ is such that $j \in X(i)$ then
\[ \ad(g_i)^m(g_j) - \ad(f_i)^m(f_j) - c_i^m \theta(\ad(f_i)^m(f_j)) 
\in \mfn_X^+ + \mfh_X + \sum_{\al_{\bm j} < m \al_i + \al_j} \C g_{\bm j}. \]
\item If $j \in I \backslash X$ is such that $j \notin \{ i, \tau(i) \}$ and $X(i) = \emptyset$ then
\[ \ad(g_i)^m(g_j) - \ad(f_i)^m(f_j) - c_i^m c_j \theta(\ad(f_i)^m(f_j)) 
\in \sum_{\al_{\bm j} < m \al_i + \al_j} \C g_{\bm j}. \]
\end{enumerate}
\end{lemma}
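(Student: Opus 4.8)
The plan is to prove both parts simultaneously by induction on $m$, tracking the "error terms" that live in lower-degree pieces of $\mfk_{\bm c}$ in the sense of the root-lattice filtration. The base case $m=0$ is trivial: the left-hand side of each identity is $g_j - f_j - c_i^0 (\text{coeff}) \,\theta(f_j)$, which is $0$ by the definition \eqref{def:g_i} of $g_j$ (in part (i), $j\in X$ so $g_j = f_j$ and the $\theta$-term has coefficient $c_i^0=1$ but is subtracted against $\theta(f_j)$ — actually one should be slightly careful here: for $j \in X$, $g_j = f_j$ exactly, so the $m=0$ statement of (i) reads $g_j - f_j - \theta(f_j) = -\theta(f_j)$; by \eqref{theta:gX}, $\theta(f_j) = f_j \in \mfg_X \subseteq \mfn_X^- + \mfh_X$, hmm — so actually the claim should be read with $\mfn_X^+$ replaced appropriately, or the statement is only asserted for $m \geq 1$; I would check the intended range of $m$ and adjust, likely the $m=0$ case needs $\theta(f_j) \in \mfg_X$ which is absorbed, and in part (ii) with $X(i)=\emptyset$ one has $\theta(f_i) = [\ldots]$ supported away from $X$).

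The inductive step is the heart of the argument. Assuming the identity for $m$, I would apply $\ad(g_i)$ to both sides. On the left this produces $\ad(g_i)^{m+1}(g_j)$. On the right, $\ad(g_i)$ applied to $\ad(f_i)^m(f_j) + c_i^m(\text{coeff})\theta(\ad(f_i)^m(f_j))$ must be rewritten: since $g_i = f_i + c_i\theta(f_i)$, we get
\[
\ad(g_i)\bigl(\ad(f_i)^m(f_j)\bigr) = \ad(f_i)^{m+1}(f_j) + c_i\,\ad(\theta(f_i))\bigl(\ad(f_i)^m(f_j)\bigr),
\]
and similarly for the $\theta$-twisted term, where one uses that $\theta$ is a Lie algebra homomorphism so $\ad(g_i)(\theta(y)) = \ad(f_i)(\theta(y)) + c_i\theta(\ad(f_i)(y))$. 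Combining, the "main terms" assemble into $\ad(f_i)^{m+1}(f_j) + c_i^{m+1}(\text{coeff})\theta(\ad(f_i)^{m+1}(f_j))$ as desired, plus \emph{cross terms} of the form $c_i^{k+1}\,\ad(\theta(f_i))\ad(f_i)^{a}\theta(\ad(f_i)^{b}(f_j))$-type expressions with $a+b = m$. One must show every such cross term, together with $\ad(g_i)$ of the inductive error term, lies in the claimed subspace. The $Q$-grading \eqref{mfg:grading} together with \eqref{theta:rootspace} controls the root degrees: $\ad(f_i)^a$ lowers weight by $a\al_i$, while $\theta(f_i)$ has weight $-\theta(\al_i) = w_X\tau(\al_i)$ which by Lemma \ref{lem:Xjdecomposition} equals $\al_i$ plus a nonnegative combination of roots in $X(i)$ (using $\tau(i)=i$). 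The key point: a cross term with at least one "$\theta(f_i)$ next to an $f_i$" picks up a factor $\ad(\theta(f_i))\ad(f_i)(\cdots)$ or $[\theta(f_i),f_i]$, and $\theta(f_i)+f_i$ has weight whose $\al_i$-component is strictly positive relative to... — more precisely the weight of any such cross term is strictly greater (in the partial order on $Q$) than $-(m+1)\al_i - \al_j$, which is exactly the bound $\al_{\bm j} < (m+1)\al_i + \al_j$ after negating. So the cross terms land in $\sum_{\al_{\bm j} < (m+1)\al_i+\al_j}\C g_{\bm j}$ by \eqref{kc:decomposition1}, and $\ad(g_i)$ applied to the old error term (weight strictly above $-m\al_i-\al_j$) lands there too since $\ad(g_i) = \ad(f_i) + c_i\ad(\theta(f_i))$ and both summands raise the weight by at most... — one needs here that $\ad(\theta(f_i))$ raises weight by $\theta(\al_i) \geq \al_i$ componentwise, keeping strict inequality. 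The structural difference between (i) and (ii) is only whether the leftover $\mfh_X + \mfn_X^+$ or purely a span of $g_{\bm j}$'s is needed, which is dictated by whether $j\in X$ (so $\theta(f_j) = f_j$ can contribute an $\mfn_X^+$/$\mfh_X$ piece via the recursion hitting $X$) or $j\notin X$ with $X(i)=\emptyset$ (so everything stays supported off $X$).

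The main obstacle I expect is the bookkeeping of the cross terms in the inductive step — specifically, verifying carefully that \emph{every} term other than the two designated main terms has weight \emph{strictly} below $(m+1)\al_i + \al_j$ in $Q^+$, and that when the recursion generates elements in $\mfg_X$ (because $\theta$ maps $f_i$-strings into combinations involving $\mfg_{X(i)}$), these are correctly absorbed into $\mfn_X^+ + \mfh_X + \sum_{\al_{\bm j} < \cdots}\C g_{\bm j}$ rather than producing an uncontrolled $\mfn_X^-$ component. For this one should exploit that $\theta$ fixes $\mfg_X$ pointwise \eqref{theta:gX} and that the weights appearing are all $\leq -\al_j$ plus multiples of $\al_i$ and roots of $X(i)$; combined with the triangular decomposition \eqref{mfg:triangulardecomposition}, a weight that is "mostly $\al_i$'s and $X(i)$-roots, minus $\al_j$" with $j\notin X$ cannot lie in $\mfn_X^-$. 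I would organize this by writing a clean lemma: for any word $w$ in $f_i$ and $\theta(f_i)$ of length $m+1$ other than $f_i^{m+1}$ and $(\theta f_i)^{?}$... — actually cleaner: expand $\ad(g_i)^{m+1}(g_j)$ directly as $\sum$ over binary strings and isolate the two extreme strings, bounding the rest by a single weight inequality. Care with the coefficient in part (ii), $c_i^m c_j$ versus $c_i^m$ in part (i), traces back to $g_j = f_j + c_j\theta(f_j)$ for $j\notin X$ versus $g_j = f_j$ for $j \in X$, and this propagates cleanly through the recursion since $\theta(c_j\theta(f_j)) = c_j\theta^2(f_j) = \pm c_j f_j$ — one must check the sign via \eqref{theta:squared}, using $\al_j(\rho^\vee_X) \in \Z$ when $\tau(j)=j$, but here in part (ii) $j$ need not be $\tau$-fixed, so I would instead avoid applying $\theta^2$ to $f_j$ directly and keep $\theta(f_j)$ symbolic throughout.
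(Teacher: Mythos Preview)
Your inductive framework is correct, and you correctly identify that the cross terms are the heart of the matter. But there is a real gap in how you propose to dispose of them. You write that a cross term with weight strictly above $-(m{+}1)\al_i-\al_j$ ``lands in $\sum_{\al_{\bm j}<(m{+}1)\al_i+\al_j}\C g_{\bm j}$ by \eqref{kc:decomposition1}''. This does not follow: \eqref{kc:decomposition1} describes $\mfk_{\bm c}$, not $\mfg$. A cross term like $[\theta(f_i),\ad(f_i)^{m-1}(f_j)]$ is an element of a root space of $\mfg$, and there is no reason a priori that it lies in $\mfk_{\bm c}$ at all, let alone in the span of the $g_{\bm j}$ with small $\al_{\bm j}$. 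Your binary-string expansion makes this worse, producing exponentially many such terms.

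The paper's proof resolves this differently in the two parts. In part (i), only two cross terms appear at each inductive step (apply $\ad(g_i)$ once to the two main terms from step $m{-}1$), and they are $\theta$-related, so it suffices to analyse $[\theta(f_i),\ad(f_i)^{m-1}(f_j)]\in\mfg_{w_X(\al_i)-(m-1)\al_i-\al_j}$. Using Lemma~\ref{lem:Xjdecomposition} and $j\in X(i)$, this weight is either not in $\Phi\cup\{0\}$ (so the term vanishes) or, precisely when $m=2$, lies in $Q^+_X$, placing the term in $\mfn_X^+ + \mfh_X$. In part (ii), the hypothesis $X(i)=\emptyset$ gives $\theta(f_i)=-e_i$ exactly, so the cross term is $[e_i,\ad(f_i)^{m-1}(f_j)]$, which an $\mfsl_2$-type calculation shows equals $-(m{-}1)(m{-}2+a_{ij})\ad(f_i)^{m-2}(f_j)$. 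This is a scalar multiple of an $f$-word of degree $m{-}2$, and one must invoke the induction hypothesis \emph{again}, for $m{-}2$, to convert it (together with its $\theta$-twin) back into $\ad(g_i)^{m-2}(g_j)$ plus lower terms. Your weight bound alone cannot substitute for either of these arguments. (Your worry about $m=0$ in part (i) is legitimate but harmless: the lemma is only invoked with $m=1-a_{ij}\ge 1$, and the paper's proof starts the induction at $m=1$.)
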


\begin{proof}
We prove both statements by induction with respect to $m$. 
\begin{enumerate} [itemsep=0.25em] 
\item For $m=1$, the statement follows immediately from $[f_i+c_i \theta(f_i),f_j] = f_{(i,j)} + c_i \theta(f_{(i,j)})$. 
Suppose $m \in \Z_{>1}$ and suppose the statement is satisfied when we replace $m$ by any smaller positive integer.
The induction hypothesis then yields
\eq{ \label{gij2:intermediate}
\begin{aligned}
& \ad(g_i)^m(g_j) - \ad(f_i)^m (f_j) - c_i^m \theta(\ad(f_i)^m (f_j) )  \\
& \qq = c_i \Big( [\theta(f_i),\ad(f_i)^{m-1}(f_j)] + c_i^{m-2} \big[ f_i, \theta(\ad(f_i)^{m-1}(f_j)) \big] \Big) \\
& \qq \qq + \ad(g_i)(\text{an element of } \mfn_X^+ + \mfh_X + \sum_{\al_{\bm j} < (m-1) \al_i + \al_j} \C g_{\bm j}.
\end{aligned}
}
We have $[\theta(f_i),\ad(f_i)^{m-1}(f_j)] \in \mfg_{w_X(\al_i)-(m-1)\al_i-\al_j}$; precisely if $w_X(\al_i) - (m-1)\al_i-\al_j \in \Phi \cup \{0\}$ this is nonzero.
This occurs only if $w_X(\al_i) \ge (m-1)\al_i-\al_j$ or $w_X(\al_i) < (m-1)\al_i-\al_j$. 
Because of $j \in X(i)$ and Lemma \ref{lem:Xjdecomposition} the latter inequality is not satisfied for any $m$. 
The former implies $m \le 2$, again by virtue of Lemma \ref{lem:Xjdecomposition}; by assumption this means $m=2$.
In this case $[\theta(f_i),\ad(f_i)(f_j)] \in \mfg_{w_X(\al_i)-\al_i-\al_j} \subseteq \mfn_X^+ + \mfh_X$. 
Owing to \eqref{theta:squared}, $\big[ f_i, \theta(\ad(f_i)^{m-1}(f_j)) \big]$ equals $\theta([\theta(f_i),\ad(f_i)(f_j)])$ up to a sign, so that it is an element of $\mfn_X^+ + \mfh_X$ as well.
Now \eqref{gij2:intermediate} implies that
\[ \ad(g_i)^m(g_j) - \ad(f_i)^m (f_j) - c_i^m \theta(\ad(f_i)^m (f_j) )  \in \mfn_X^+ + \mfh_X + \sum_{\al_{\bm j} < m \al_i + \al_j} \C g_{\bm j}.  \]
\item We will repeatedly use that $f_i = -\theta(e_i)$ and $\theta(f_i)=-e_i$; in particular $g_i = f_i - c_i f_{\tau(i)}$.
For $m=1$ the statement is simply a consequence of $[g_i,g_j] = [f_i,f_j] + c_i c_j \theta([f_i,f_j])$ since $[\theta(f_i),f_j]=-[e_i,f_j]=0$ and, owing to \eqref{theta:squared}, $[f_i,\theta(f_j)]$ equals $\theta([\theta(f_i),f_j])$ up to a sign.
Suppose $m \in \Z_{>1}$ and suppose the statement is true with $m$ replaced by any smaller positive integer.
By the induction hypothesis,
\eq{ \label{gij1:intermediate}
\begin{aligned}
& \ad(g_i)^m(g_j) - \ad(f_i)^m (f_j) - c_i^m c_j \theta(\ad(f_i)^m (f_j) ) \\
& \qq = - c_i \Big( [e_i,\ad(f_i)^{m-1}(f_j)] + c_i^{m-2} c_j \theta\big(\big[ e_i ,\ad(f_i)^{m-1}(f_j) \big]\big) \Big) \\
& \qq \qq + \ad(g_i)(\text{an element of }\mfn_X^+ + \mfh^\theta + \sum_{\al_{\bm j} < (m-1) \al_i + \al_j} \C g_{\bm j} ).
\end{aligned}
}
We have
\begin{align*}
[e_i,\ad(f_i)^{m-1}(f_j)] &= \sum_{r=1}^{m-1} \ad(f_i)^{r-1} \ad(h_i) \ad(f_i)^{m-1-r} (f_j) \\
&= -\sum_{r=1}^{m-1} (2(m-1-r) + a_{ij}) \ad(f_i)^{m-2} (f_j) \\
&= - (m-1) ( m-2+ a_{ij}  ) \ad(f_i)^{m-2} (f_j)
\end{align*}
so that 
\begin{align*}
&[e_i,\ad(f_i)^{m-1}(f_j)] + c_i^{m-2} c_j \theta\big(\big[ e_i ,\ad(f_i)^{m-1}(f_j) \big]\big) \\
&\qq = - (m-1) ( m-2+ a_{ij}  ) \big( \ad(f_i)^{m-2} (f_j) +  c_i^{m-2} c_j \theta(\ad(f_i)^{m-2} (f_j)) \big) \\
&\qq = - (m-1) ( m-2+ a_{ij}  ) \ad(g_i)^{m-2}(g_j) + (\text{an element of } \sum_{\al_{\bm j} < (m-2) \al_i + \al_j} \C g_{\bm j}) \\
&\qq \in \sum_{\al_{\bm j} < m \al_i + \al_j} \C g_{\bm j},
\end{align*}
because of the induction hypothesis once again and $\theta(\al_{\bm j})<0$ if $0<\al_{\bm j}<m \al_i + \al_j$.
Now \eqref{gij1:intermediate} yields the desired conclusion.\hfill \qedhere
\end{enumerate}
\end{proof}

For $i,j \in I$ denote $\la_{ij}:=(1-a_{ij})\al_i+\al_j \in Q^+$; note that \eqref{mfg:Serre} implies that $\la_{ij} \notin \Phi$. 

\begin{lemma} \label{lem:kc:Serre}
Let $(X,\tau) \in \GSat(A)$. 
Suppose that $\bm c \in (\C^\times)^{I \backslash X}$ satisfies 
\eq{ \label{Idiff}
a_{j,\tau(j)}=0 \qu \text{and} \qu X(j) = \emptyset \qu \implies \qu c_j = c_{\tau(j)} \qq \text{for all } j \in I \backslash X.
}
Then for all $i,j \in I$ such that $i \ne j$ the \emph{modified Serre relations} are satisfied
\[ 
\ad(g_i)^{1-a_{ij}}(g_j) \in \mfn_X^+ + \mfh^\theta + \sum_{\al_{\bm j} < \la_{ij}} \C g_{\bm j}. 
\]
\end{lemma}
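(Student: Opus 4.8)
The plan is to deduce the modified Serre relation from the ordinary Serre relation $\ad(f_i)^{1-a_{ij}}(f_j)=0$ of \eqref{mfg:Serre}, controlling the extra terms generated by the ``$\theta$-part'' of the generators $g_i$. Write $n:=1-a_{ij}$, so $\la_{ij}=n\al_i+\al_j$, and abbreviate the right-hand side of the asserted inclusion by $\mcK_{ij}:=\mfn^+_X+\mfh^\theta+\sum_{\al_{\bm j}<\la_{ij}}\C g_{\bm j}$; recall $\mfh_X\subseteq\mfh^\theta$ by \eqref{htheta:decomposition} and $\mfg_X\subseteq\ker(\theta-\id)$ by \eqref{theta:gX}. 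I would use freely that $\theta$ is a Lie algebra automorphism, so that $\theta(\ad(x)^k(y))=\ad(\theta(x))^k(\theta(y))$, together with \eqref{theta:rootspace} and \eqref{theta:squared}. The easiest case is $i\in X$: then $g_i=f_i$ and $\theta^{\pm1}(f_i)=f_i$, so $\ad(g_i)^n(g_j)=\ad(f_i)^n(f_j)=0$ when $j\in X$, while when $j\notin X$ one has $\ad(f_i)^n(\theta(f_j))=\theta(\ad(f_i)^n(f_j))=0$, so again $\ad(g_i)^n(g_j)=0\in\mcK_{ij}$. Hence assume $i\notin X$, so $g_i=f_i+c_i\theta(f_i)$ with $\theta(f_i)$ a root vector of weight $w_X(\al_{\tau(i)})\in Q^+$ by Lemma \ref{lem:Xjdecomposition}.

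Next I would dispose of the case $i=\tau(i)$. If $j\in X(i)$, Lemma \ref{lem:giij}(i) with $m=n$ applies directly: the $\ad(f_i)^n(f_j)$-terms vanish, so $\ad(g_i)^n(g_j)\in\mfn^+_X+\mfh_X+\sum_{\al_{\bm j}<\la_{ij}}\C g_{\bm j}\subseteq\mcK_{ij}$. If $j\in X\setminus X(i)$ then $a_{ij}=0$, $n=1$, and $[g_i,g_j]=[f_i,f_j]+c_i[\theta(f_i),f_j]$: the first bracket vanishes as $-\al_i-\al_j\notin\Phi$, and $[\theta(f_i),f_j]\in\mfg_{w_X(\al_i)-\al_j}$ has weight with a positive coefficient on the component $X(i)$ and a negative coefficient on the distinct component of $X$ containing $j$, hence is not a root. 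If $j\notin X$ (so $j\neq i$): when $X(i)=\emptyset$, Lemma \ref{lem:giij}(ii) with $m=n$ gives $\ad(g_i)^n(g_j)\in\sum_{\al_{\bm j}<\la_{ij}}\C g_{\bm j}$; when $X(i)\neq\emptyset$, I would run an induction on $m$ analogous to the proof of Lemma \ref{lem:giij}(ii), the point being that the correction terms $[\theta(f_i),\ad(f_i)^{m-1}(f_j)]$ and $[f_i,\theta(\ad(f_i)^{m-1}(f_j))]$ now vanish: the first lies in $\mfg_{w_X(\al_i)-(m-1)\al_i-\al_j}$, whose weight has a strictly positive coefficient on the nonempty set $X(i)$ and coefficient $-1$ on $j$, so is not a root; the second is $\theta$ of the first, up to the sign in \eqref{theta:squared}.

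The remaining case $i\ne\tau(i)$ is the crux. Then $\theta(f_i)$ has weight $w_X(\al_{\tau(i)})$, supported on $\{\tau(i)\}\cup X(\tau(i))$, which is disjoint from $\{i\}$; hence $[f_i,\theta(f_i)]\in\mfg_{w_X(\al_{\tau(i)})-\al_i}$ has coefficients of both signs, is not a root, and vanishes, so $\ad(f_i)$ and $\ad(\theta(f_i))$ commute and $\ad(g_i)^n(g_j)=\sum_{k=0}^n\binom nk c_i^k\,\ad(f_i)^{n-k}\ad(\theta(f_i))^k(g_j)$, with $g_j=f_j$ if $j\in X$ and $g_j=f_j+c_j\theta(f_j)$ otherwise. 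Weight bookkeeping on each monomial $\ad(f_i)^{n-k}\ad(\theta(f_i))^k(x)$, $x\in\{f_j,\theta(f_j)\}$, shows that for $1\le k\le n-1$ the coefficients on $\al_i$ and on $\al_{\tau(i)}$ have opposite signs — so the monomial vanishes — unless $j=\tau(i)$; and when $j\notin\{i,\tau(i)\}$ the extreme terms $k=0,n$ reduce (using $\ad(f_i)^n(f_j)=0$ and $\ad(\theta(f_i))^n(\theta(f_j))=\theta(\ad(f_i)^n(f_j))=0$, with $\theta(f_j)=f_j$ if $j\in X$) to monomials that vanish by the same mixed-sign argument, so $\ad(g_i)^n(g_j)=0$. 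When $j=\tau(i)$: if $a_{i,\tau(i)}=0$ then $n=1$ and $[g_i,g_{\tau(i)}]$ lies in $\mfn^+_X$ when $X(i)\neq\emptyset$ and equals $c_{\tau(i)}h_i-c_ih_{\tau(i)}$ when $X(i)=\emptyset$, which lies in $\mfh^\theta$ precisely because hypothesis \eqref{Idiff} forces $c_i=c_{\tau(i)}$ — the one place \eqref{Idiff} is used; if $a_{i,\tau(i)}\neq0$, a short direct computation groups the surviving terms into an element of $\mfn^+_X+\sum_{\al_{\bm j}<\la_{ij}}\C g_{\bm j}$, the instructive instance being $\mfg=\mfsl_3$, $X=\emptyset$, $\tau$ the transposition, where $n=2$ and $\ad(g_1)^2(g_2)=2(c_1+c_2)g_1\in\C g_1$.

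The main obstacle is precisely this case $i\ne\tau(i)$. For $i\in X$ the relation is just the classical Serre relation, and for a $\tau$-fixed node $i$ it follows (modulo the routine induction when $X(i)\ne\emptyset$) from Lemma \ref{lem:giij}; but when $i$ and $\tau(i)$ form a genuine two-element orbit the cross-terms built from $\theta(f_i)$ and $\theta(f_j)$ need not vanish individually, so one must group them and recognise the outcome as a combination of lower-order $g_{\bm j}$'s together with elements of $\mfn^+_X+\mfh^\theta$. Getting the weight bookkeeping right, and in particular identifying the residual $\mfh$-component of $[g_i,g_{\tau(i)}]$ with an element of $\mfh^\theta$ — which is exactly where the constraint \eqref{Idiff} is forced on us — is the delicate part.
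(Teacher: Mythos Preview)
Your approach is correct and complete, modulo the sub-case $j=\tau(i)$ with $a_{i,\tau(i)}\ne 0$, which you leave as ``a short direct computation''; your $\mfsl_3$ example there is right (indeed your coefficient $2(c_1+c_2)$ is the correct one --- the paper's own version of this computation contains a minor slip and gets $c_i$). The inductive variant of Lemma~\ref{lem:giij}(ii) you sketch for $i=\tau(i)$, $j\notin X$, $X(i)\ne\emptyset$ also goes through, since both correction brackets $[\theta(f_i),\ad(f_i)^{m-1}(f_j)]$ and $[f_i,\theta(\ad(f_i)^{m-1}(f_j))]$ have weights of mixed sign (the second being $\pm\theta$ of the first), exactly as you say.

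The organization, however, is genuinely different from the paper's. You branch first on whether $\tau(i)=i$: for $\tau$-fixed $i$ you reduce directly to Lemma~\ref{lem:giij}, and for $i\ne\tau(i)$ you observe the key commutativity $[f_i,\theta(f_i)]=0$ (since the weight $w_X(\al_{\tau(i)})-\al_i$ has mixed sign), binomially expand $\ad(g_i)^n$, and kill almost every monomial by the same mixed-sign bookkeeping. The paper instead writes a uniform root-space decomposition $\ad(g_i)^m(g_j)\in\sum_{r}(\mfg_{-\beta^{(r)}}+\mfg_{-\theta(\beta^{(r)})})$ valid for all $i\notin X$, then analyses for which $r$ the weight $\beta^{(r)}=(m-r)\al_i+\al_j-rw_X(\al_{\tau(i)})$ can lie in $\Phi\cup\{0\}$; this a posteriori forces either $\tau(i)=i$ (leading back to Lemma~\ref{lem:giij}) or $\tau(i)=j$ with $X(i)=\emptyset$. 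Your route is more structural and makes the role of $\tau$ and the binomial expansion transparent; the paper's is more uniform but the case analysis (``$\beta^{(r)}\in Q^+$ versus $-Q^+$'') is subtler. Both converge on the same delicate point --- the residual $\mfh$-component of $[g_i,g_{\tau(i)}]$ when $a_{i,\tau(i)}=0$ and $X(i)=\emptyset$ --- as the unique place where hypothesis~\eqref{Idiff} enters.
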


\begin{proof}
We will repeatedly use the explicit formula \eqref{def:g_i}.
If $i \in X$, then using \eqref{theta:gX} we obtain
\[ 
\ad(g_i)^{1-a_{ij}}(g_j)  = \begin{cases} \ad(f_i)^{1-a_{ij}}(f_j) & \text{if } j \in X, \\ \ad(f_i)^{1-a_{ij}}(f_j) + c_j \theta(\ad(f_i)^{1-a_{ij}}(f_j)) & \text{otherwise} \end{cases} 
\]
which in both cases vanishes due to the Serre relations \eqref{mfg:Serre}.
Hence we may assume that $i \in I \backslash X$. 
By induction with respect to $m \in \Z_{\ge 0}$ we straightforwardly obtain
\eq{
\label{kc:Serre1} \ad(g_i)^m (g_j) \in \sum_{r=0}^m \Big( \mfg_{-\beta_{ij}^{(r)}} + \mfg_{-\theta(\beta_{ij}^{(r)})} \Big) 
}
where we have used \eqref{theta:rootspace} and denoted
\eq{
\label{def:beta} \beta_{ij}^{(r)} = (m-r) \al_i + \al_j - r w_X(\al_{\tau(i)}) \in Q. 
}
Note that the $r$-th term in \eqref{kc:Serre1} survives precisely if $\beta_{ij}^{(r)} \in \Phi \cup \{0 \}$.
Now fix $m = 1-a_{ij}$. 
In particular, the term with $r=0$ vanishes because of the Serre relations \eqref{mfg:Serre}.
Let $r>0$; because $\Phi \subset Q^+ \cup (-Q^+)$, the $r$-th term in \eqref{kc:Serre1} survives only if one of the following conditions is satisfied:
\eqa{
\label{betaminus}  (1-a_{ij}-r) \al_i + \al_j &< r w_X(\al_{\tau(i)})\\ 
\label{betaplus} (1-a_{ij}-r) \al_i + \al_j &\ge r w_X(\al_{\tau(i)}).
}
Let us first deal with the case \eqref{betaminus}.
Lemma \ref{lem:Xjdecomposition} implies $\{i,j\} \subseteq X(i) \cup \{ \tau(i) \}$; hence $\tau(i)=i$ and $j \in X(i)$.
Furthermore, as a consequence of Lemma \ref{lem:giij} (i) with $m=1-a_{ij}$ we have
\[ \ad(g_i)^{1-a_{ij}}(g_j) \in \mfn_X^+ + \mfh_X + \sum_{\al_{\bm j} < \la_{ij}} \C g_{\bm j} \]

Hence it suffices to consider \eqref{betaplus}. 
Then $X(i) \subseteq \{ j \}$ so there are two possibilities for $X(i)$, which we treat separately below.
\begin{description} [itemsep=.25em]
\item[$X(i) = \{ j \}$]
We must have $\tau(i)=i$ and $w_X(\al_{\tau(i)}) = \al_i - a_{j i} \al_j$, so that we obtain $1 \ge r |a_{j i}|$ and $1-a_{ij} \ge 2r$. 
Because we may assume $r \ne 0$, we obtain $r=1=-a_{ji}$ so that $\beta_{ij}^{(r)} = (-a_{ij}-1) \al_i$.
Since $\al_i$ is real, we must have $a_{ij} \in \{ -1, -2 \}$. 
Because $(X,\tau) \in \GSat(A)$ it must in fact be $a_{ij} = -2$. 
Hence the configuration as in Example \ref{ex:Satdiags} (i) applies; it follows that $\ad(g_i)^3(g_j)=0$.
\item[$X(i) = \emptyset$]
We have $\tau(i) \in \{i,j\}$ and $w_X(\al_{\tau(i)}) = \al_{\tau(i)}$. 
Assume that $\tau(i)=j$ so that $a_{ij}=a_{ji}$.
Then $r=1$, so that $\beta_{ij}^{(r)} = -a_{ij} \al_i$; since $\al_i$ is real it follows that $a_{ij} \in \{0,-1\}$. 
If $a_{ij}=0$, then by assumption we have $c_i = c_j$ so that
\[ \ad(g_i)^{1-a_{ij}}(g_j) = [f_i-c_i e_j,f_j-c_j e_i] = c_i (h_i-h_{\tau(i)}) \in \mfh^\theta. \]
If $a_{ij}=-1$ then a straightforward computation yields
\eqn{
\ad(g_i)^{1-a_{ij}}(g_j) &= [f_i-c_i e_j,[f_i,f_j]-c_i c_j [e_i,e_j]] \\ &= c_i ([h_j,f_i]+ c_j [h_i,e_j]) \\ &= c_i g_i \in \sum_{\al_{\bm j} < \la_{ij}} \C g_{\bm j}.
}
Finally, we must consider the case $\tau(i)=i$.
Lemma \ref{lem:giij} (i) with $m=1-a_{ij}$ implies that 
\[ 
\ad(g_i)^{1-a_{ij}}(g_j) \in \sum_{\al_{\bm j} < \la_{ij}} \C g_{\bm j}. \hfill \qedhere 
\]
\end{description}
\end{proof}

From \eqrefs{mfg:rels1}{mfg:Serre} it follows that $\mfn^-$ is spanned by $\bigcup_{\ell \in\Z_{>0}} \{ f_{\bm i} \}_{\bm i \in I^\ell}$.
Now choose $\mc{J} \subseteq \bigcup_{\ell \in \Z_{>0}} I^\ell$ such that $\{ f_{\bm j} \}_{\bm j \in \mc{J}}$ is a basis of $\mfn^-$; note that for each $\bm j \in \mc{J}$, $\al_{\bm j} \in \Phi \cap Q^+$.

\begin{thrm} \label{thm:kc:basis}
Let $(X,\tau) \in \GSat(A)$. 
Suppose that $\bm c \in (\C^\times)^{I \backslash X}$ satisfies \eqref{Idiff}.
Then, as vector spaces,
\[ 
\mfk_{\bm c}  = \mfn^+_X \oplus \mfh^\theta \oplus \bigoplus_{\bm j \in \mc{J}} \C g_{\bm j}. 
\]
\end{thrm}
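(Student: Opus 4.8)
\emph{Strategy.} The assertion consists of two parts: that $\mfn^+_X$, $\mfh^\theta$ and $\{g_{\bm j}\}_{\bm j\in\mc{J}}$ together span $\mfk_{\bm c}$, and that they are linearly independent. I would argue along the lines of \cite[Secs.~5.3 \& 6]{Ko1}, using the modified Serre relations of Lemma \ref{lem:kc:Serre} for spanning and a grading argument for independence.

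\emph{Spanning.} By \eqref{kc:decomposition1} it suffices to show $g_{\bm i}\in V:=\mfn^+_X+\mfh^\theta+\sum_{\bm j\in\mc{J}}\C g_{\bm j}$ for every tuple $\bm i\in I^\ell$, $\ell\ge 1$, which I would prove by induction on $\al_{\bm i}$ along the partial order $\le$ on $Q^+$ (well-founded, as $\{\mu\in Q^+:\mu\le\nu\}$ is finite). The tool is the Gabber--Kac theorem (see {\it e.g.}~\cite{Ka}): the canonical surjection $\mff\onto\mfn^-$, $x_i\mapsto f_i$, from the free Lie algebra $\mff$ on generators $x_i$ $(i\in I)$ has kernel the ideal generated by the Serre elements $\ad(x_a)^{1-a_{ab}}(x_b)$, $a\ne b$. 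Choosing scalars with $f_{\bm i}=\sum_{\bm j\in\mc{J},\,\al_{\bm j}=\al_{\bm i}}\kappa_{\bm j}f_{\bm j}$ and applying the Lie algebra homomorphism $\mff\to\mfg$, $x_i\mapsto g_i$, gives
\[ g_{\bm i}-\textstyle\sum_{\bm j}\kappa_{\bm j}\,g_{\bm j}\in\sum\C\,\ad(g_{k_1})\cdots\ad(g_{k_p})\bigl(\ad(g_a)^{1-a_{ab}}(g_b)\bigr), \]
summing over $p\ge 0$, $a\ne b$ and $k_1,\dots,k_p\in I$ with $\al_{k_1}+\dots+\al_{k_p}+\la_{ab}=\al_{\bm i}$. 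Lemma \ref{lem:kc:Serre} places each inner factor in $\mfn^+_X+\mfh^\theta+\sum_{\al_{\bm j'}<\la_{ab}}\C g_{\bm j'}$; pushing this through the remaining $\ad(g_{k_r})$ one uses $[\mfh^\theta,g_k]\subseteq\C g_k$ from \eqref{kc:rels1}, $[\mfn^+_X,g_k]=0$ for $k\notin X$ and $[\mfn^+_X,g_k]\subseteq\mfg_X$ for $k\in X$ from \eqref{theta:gX} and \eqref{def:g_i}, $\mfh_X\subseteq\mfh^\theta$ from \eqref{htheta:decomposition}, and $\mfn^-_X:=\mfn^-\cap\mfg_X=\langle f_i\rangle_{i\in X}=\sum_{\bm j\in\mc{J},\,\al_{\bm j}\in Q_X}\C g_{\bm j}$ (the last equality because $\al_{\bm j}\in Q_X$ forces every entry of $\bm j$ into $X$, so $g_{\bm j}=f_{\bm j}$ by \eqref{def:g_i}, and these $f_{\bm j}$ span $\mfn^-_X$ as $A_X$ is of finite type). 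The $g$-terms arising from the $\mfh^\theta$- and $g_{\bm j'}$-parts then have degree strictly below $\al_{\bm i}$, hence lie in $V$ by induction; the $\mfn^+_X$-part, after its excursion through $\mfg_X$, likewise ends up in $V$ (see the last paragraph). Thus $g_{\bm i}\in V$.

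\emph{Linear independence.} Grade $\mfg$ by the group homomorphism $\mathrm{ht}_X:Q\to\Z$, $\al_i\mapsto\del_{i\notin X}$. Then $\mathrm{ht}_X\circ\tau=\mathrm{ht}_X$ since $\tau(X)=X$, and $\mathrm{ht}_X\circ w_X=\mathrm{ht}_X$ by Lemma \ref{lem:Xjdecomposition}; as $\theta$ acts on $\mfh^*$ as $-w_X\tau$, this yields $\mathrm{ht}_X\circ\theta=-\mathrm{ht}_X$. Expanding $g_{\bm j}=\ad(g_{j_1})\cdots(g_{j_\ell})$ by \eqref{def:g_i}, the $\mathrm{ht}_X$-homogeneous component of $g_{\bm j}$ of least degree is therefore $f_{\bm j}$, which sits in degree $-\mathrm{ht}_X(\al_{\bm j})\le 0$; this degree vanishes exactly when every entry of $\bm j$ lies in $X$, and then $g_{\bm j}=f_{\bm j}\in\mfn^-_X$. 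Given a relation $x+h+\sum_{\bm j\in\mc{J}}\la_{\bm j}g_{\bm j}=0$ with $x\in\mfn^+_X$, $h\in\mfh^\theta$: if $\la_{\bm j}\ne 0$ for some $\bm j$ with an entry outside $X$, then the component of the relation in its least degree $-M$, where $M=\max\{\mathrm{ht}_X(\al_{\bm j}):\la_{\bm j}\ne 0\}\ge 1$, reads $\sum_{\mathrm{ht}_X(\al_{\bm j})=M}\la_{\bm j}f_{\bm j}=0$, forcing those $\la_{\bm j}=0$ as $\{f_{\bm j}\}_{\bm j\in\mc{J}}$ is a basis of $\mfn^-$ — a contradiction. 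Hence all such $\la_{\bm j}$ vanish; the surviving relation has each remaining $g_{\bm j}$ equal to $f_{\bm j}$, so $x+h+\sum_{\bm j}\la_{\bm j}f_{\bm j}=0$ with the three terms in $\mfn^+$, $\mfh$, $\mfn^-$ respectively, whence $x=0$, $h=0$ and all $\la_{\bm j}=0$ by \eqref{mfg:triangulardecomposition} and independence of $\{f_{\bm j}\}$. (The same argument applied to an element of $\mfk_{\bm c}\cap\mfh$ reproves \eqref{intersection} for every $\bm c$ satisfying \eqref{Idiff}.)

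\emph{Main obstacle.} The delicate point is termination of the spanning induction. When the $\ad(g_{k_r})$ act on the $\mfn^+_X$-part of a modified Serre element, the errors can travel through the finite-dimensional core $\mfg_X$ and re-emerge as terms $g_{\bm i'}$ whose degree need not be visibly $<\al_{\bm i}$; keeping the induction well-founded requires careful bookkeeping that rests on $A_X$ being of finite type (so $\mfg_X$ is finite-dimensional and $w_X,\rho^\vee_X$ are defined) and on Lemma \ref{lem:Xjdecomposition}. The hypothesis \eqref{Idiff} enters precisely at this stage, through Lemma \ref{lem:kc:Serre}.
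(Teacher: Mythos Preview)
Your approach matches the paper's: modified Serre relations plus induction on $\al_{\bm i}$ for spanning, and a leading-term argument for independence. Your $\mathrm{ht}_X$-grading is a clean alternative to the paper's root-space projection $\pi_{-\al_{\bm j}}$; both rest on the same fact that the unique lowest-weight constituent of $g_{\bm j}$ is $f_{\bm j}$.

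Your ``main obstacle'' dissolves once you sharpen one estimate. You wrote $[\mfn_X^+,g_k]\subseteq\mfg_X$ for $k\in X$, but in fact $[\mfn_X^+,g_k]\subseteq\mfn_X^++\mfh_X$: for $k\in X$ this is $[\mfn_X^+,f_k]$, and for $x\in\mfg_\alpha$ with $\alpha\in Q_X^+$ the bracket $[x,f_k]$ lies in $\mfg_{\alpha-\al_k}$, where $\alpha-\al_k$ is either in $Q_X^+$, zero, or has mixed signs and is hence not a root. Since you already have $[\mfn_X^+,g_k]=0$ for $k\notin X$ and $\mfh_X\subseteq\mfh^\theta$, a short induction on $p$ gives
\[
\ad(g_{k_1})\cdots\ad(g_{k_p})(\mfn_X^++\mfh^\theta)\subseteq\mfn_X^++\mfh^\theta+\sum_{\al_{\bm j'}\le\al_{k_1}+\cdots+\al_{k_p}}\C g_{\bm j'}.
\]
As $\al_{k_1}+\cdots+\al_{k_p}=\al_{\bm i}-\la_{ab}<\al_{\bm i}$, the $\mfn_X^+$-excursion produces only $g$-terms of degree strictly below $\al_{\bm i}$, and your induction closes. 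No subtle bookkeeping beyond this is needed; the paper's own proof is equally terse at this point and relies implicitly on the same fact.
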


\begin{proof}
First we prove that $\mfk_{\bm c} =  \mfn^+_X + \mfh^\theta + \sum_{\bm j \in \mc{J}} \C g_{\bm j}$.
Fix $\ell \in \Z_{>0}$ and fix $\bm k \in I^\ell$. We may assume that $\bm k \notin \mc{J}$.
Considering \eqref{kc:decomposition1} we see that it suffices to prove that 
\eq{ 
\label{eqn:basis} g_{\bm k} \in \mfn^+_X + \mfh^\theta + \sum_{\bm j \in \mc{J}} \C g_{\bm j}. 
}
Since $\{ f_{\bm j} \}_{\bm j \in \mc{J}}$ spans $\mfn^-$, we have $f_{\bm k} = \sum_{\bm j \in \mc{J}} a_{\bm j} f_{\bm j}$ for some $a_{\bm j} \in \C$. 
One obtains such a decomposition for $f_{\bm k}$ by repeatedly applying only Serre relations \eqref{mfg:Serre} for $f_i$ and $f_j$ for some $i,j \in I$ such that $\la_{ij} \le \al_{\bm k}$. 
This defines a finite sequence $((i_1,j_1),(i_2,j_2),\ldots,(i_r,j_r))$ of pairs of nodes for some $r \in \Z_{\ge 0}$.
Now repeatedly apply the modified Serre relations to $g_{\bm k}$ given by the same sequence of pairs of nodes, see Lemma \ref{lem:kc:Serre}.
At each step, using \eqref{kc:rels1} where necessary, we obtain (instead of zero) an element of 
\[ \mfn^+_X + \mfh^\theta + \sum_{\al_{\bm j} < \al_{\bm k}} \C g_{\bm j} = \mfn^+_X + \mfh^\theta + \sum_{\al_{\bm j} < \al_{\bm k} \atop \bm j \in \mc{J}} \C g_{\bm j} + \sum_{\al_{\bm j} < \al_{\bm k} \atop \bm j \notin \mc{J}} \C g_{\bm j} \] 
so that $g_{\bm k}$ lies in this subspace of $\mfk_{\bm c}$.
By induction on $\ell$ we obtain 
\[ g_{\bm k} \in \mfn^+_X + \mfh^\theta + \sum_{\al_{\bm j} < \al_{\bm k} \atop \bm j \in \mc{J}} \C g_{\bm j} \subseteq \mfn^+_X + \mfh^\theta + \sum_{\bm j \in \mc{J}} \C g_{\bm j} \]
as required.
It remains to show that the sum is direct.
Let $\bm j \in \mc{J}$. Then $f_{\bm j}$ is nonzero. 
Because of $\theta(\mfg_{\beta}) = \mfg_{-w_X \tau(\beta)}$ and the explicit formula \eqref{def:g_i}, we have
\eq{ \label{gi:weight}
 g_{\bm j} - f_{\bm j} \in  \mfn_X^+ + \mfh_X + \sum_{\al_{\bm k} < \al_{\bm j}} \C g_{\bm k}.
 }
Hence $f_{\bm j} = \pi_{-\al_{\bm j}}(g_{\bm j})$ for all $\bm j \in \mc{J}$. Thus the linear independence of $\{ f_{\bm k} \}_{\bm k \in \mc{J}}$ together with the triangular decomposition $\mfg = \mfn^+ \oplus \mfh \oplus \mfn^-$ implies that, as required,
\[ 
\mfn^+_X + \mfh^\theta + \sum_{\bm j \in \mc{J}} \C g_{\bm j} = \mfn^+_X \oplus \mfh^\theta \oplus \bigoplus_{\bm j \in \mc{J}} \C g_{\bm j}. \hfill \qedhere 
\]
\end{proof}

\begin{rmk}
It is possible to write explicit expressions for the right-hand side of the modified Serre relations appearing in Lemma \ref{lem:kc:Serre}, at least when $|a_{ij}| \le 4$, {\it i.e.}~at least when $A$ is of finite or affine type. 
Together with the relations \eqref{kc:rels1} and the usual relations for $\langle \mfn^+_X , \mfh^\theta \rangle$ inherited from $\mfg$, we claim that this yields an efficient presentation of $\mfk_{\bm c}$ in terms of generators and relations for $(X,\tau) \in \WSat(A)$.
See \cite[Sec.~7]{Ko2} for the analogous theory for the quantized versions of $U(\mfk_{\bm c})$ for $(X,\tau) \in \Sat(A)$: it can be immediately generalized to $(X,\tau) \in \WSat(A)$ because at no point is condition \eqref{Satdiag2a} necessary, just the weaker condition \eqref{Satdiag2b}. \hfill \rmkend
\end{rmk}

\begin{crl} \label{cor:kc:intersection2}
Let $(X,\tau) \in \GSat(A)$. 
Suppose that $\bm c \in (\C^\times)^{I \backslash X}$ satisfies \eqref{Idiff}.
Then the intersection condition \eqref{intersection} is satisfied.
\end{crl}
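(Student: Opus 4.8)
The plan is to read off the intersection property directly from the explicit basis of $\mfk_{\bm c}$ furnished by Theorem \ref{thm:kc:basis} (whose hypotheses coincide with those of the corollary, \eqref{Idiff} included) together with the triangular decomposition $\mfg = \mfn^+ \oplus \mfh \oplus \mfn^-$. The inclusion $\mfh^\theta \subseteq \mfk_{\bm c} \cap \mfh$ is immediate from the definition \eqref{def:kc}, so everything rests on the reverse inclusion, for which the crucial point is that the generators $g_{\bm j}$ contribute to $\mfn^-$ in a controlled, triangular fashion.

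Concretely, I would first recall from \eqref{gi:weight}, applied as in the proof of Theorem \ref{thm:kc:basis} (using $\theta(\mfg_\beta) = \mfg_{-w_X\tau(\beta)}$, Lemma \ref{lem:Xjdecomposition} so that $w_X(\al_{\tau(i)})$ is a positive root for $i \in I \backslash X$, and \eqref{def:g_i}), that an induction on $\al_{\bm j}$ gives $g_{\bm j} \in \mfg_{-\al_{\bm j}} \oplus \bigoplus_{\nu > -\al_{\bm j}} \mfg_\nu$ with $\pi_{-\al_{\bm j}}(g_{\bm j}) = f_{\bm j}$ for every $\bm j \in \mc{J}$. Let $p^- : \mfg \to \mfn^-$ be the projection along $\mfn^+ \oplus \mfh$. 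Since $p^-$ respects the $Q$-grading and $\{f_{\bm j}\}_{\bm j \in \mc{J}}$ is a basis of $\mfn^-$, this yields $p^-(g_{\bm j}) \in f_{\bm j} + \sum_{\bm k \in \mc{J} \atop \al_{\bm k} < \al_{\bm j}} \C f_{\bm k}$, so the family $\{p^-(g_{\bm j})\}_{\bm j \in \mc{J}}$ is obtained from the basis $\{f_{\bm j}\}_{\bm j \in \mc{J}}$ of $\mfn^-$ by a unitriangular, hence invertible, transformation, and is in particular linearly independent.

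To conclude I would take $x \in \mfk_{\bm c} \cap \mfh$ and, by Theorem \ref{thm:kc:basis}, write $x = n + h + \sum_{\bm j \in \mc{J}} a_{\bm j} g_{\bm j}$ with $n \in \mfn^+_X$, $h \in \mfh^\theta$ and only finitely many $a_{\bm j} \in \C$ nonzero. Applying $p^-$ and using $p^-(n) = p^-(h) = p^-(x) = 0$ gives $\sum_{\bm j \in \mc{J}} a_{\bm j}\, p^-(g_{\bm j}) = 0$, whence all $a_{\bm j} = 0$ by the linear independence just established; then $n = x - h \in \mfn^+ \cap \mfh = \{0\}$, so $n = 0$ and $x = h \in \mfh^\theta$, proving $\mfk_{\bm c} \cap \mfh \subseteq \mfh^\theta$. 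As this is in essence a corollary of Theorem \ref{thm:kc:basis}, there is no real obstacle; the single step requiring attention is the linear independence of the $p^-(g_{\bm j})$, which is just the weight-triangularity of the $g_{\bm j}$ against the $\mc{J}$-indexed basis of $\mfn^-$ and was, in effect, already observed at the end of the proof of Theorem \ref{thm:kc:basis}.
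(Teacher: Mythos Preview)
Your proof is correct and takes essentially the same approach as the paper: both use Theorem \ref{thm:kc:basis} and the triangularity $\pi_{-\al_{\bm j}}(g_{\bm j}) = f_{\bm j}$ to kill the $g_{\bm j}$-coefficients, then finish by intersecting with $\mfh$. The only cosmetic difference is in the last step: you use $\mfn^+ \cap \mfh = \{0\}$ directly, whereas the paper invokes \eqref{theta:gX} (so that $\theta$ fixes both summands of $\mfn^+_X \oplus \mfh^\theta$, forcing $h \in \mfh^\theta$).
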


\begin{proof}
Clearly $\mfh^\theta \subseteq \mfk_{\bm c} \cap \mfh$; it remains to show the reverse inclusion.
Suppose $h \in \mfk_{\bm c} \cap \mfh$. 
From $\pi_{-\al_{\bm j}}(g_{\bm j}) = f_{\bm j} \in \mfn^-$ and the triangular decomposition $\mfg = \mfn^+ \oplus \mfh \oplus \mfn^-$ we immediately deduce that $h \in \mfn^+_X \oplus \mfh^\theta$. 
From \eqref{theta:gX} we obtain $h \in \mfh^\theta$.
\end{proof}

In the next section we will consider an analogue of $\mfk_{\bm c}$ in the quantum setting.
It is worth noting that the universal enveloping algebra corresponding to $\mfk_{\bm c}$ can be modified by scalar terms, allowing us to introduce another tuple $\bm s \in \C^{I \backslash X}$: $U(\mfk_{\bm c})_{\bm s}$ is generated as a unital associative algebra by the subalgebras $U(\mfn^+_X)$ and $U(\mfh^\theta)$ and the elements $g_i := f_i$ (for $i \in X$) and $g_i := f_i + c_i \theta(f_i) + s_i$ (for $j \in I\backslash X$), see \cite[Cor.~2.9]{Ko1}.

%

\section{Quantum groups and quantum pair algebras} \label{sec:QG}


\subsection{Drinfeld-Jimbo quantum groups} \label{sec:DJquantumgroups}

In the Drinfeld-Jimbo presentation \cite{Dr1,Dr2,Ji1} quantum groups are typically defined as unital associative algebras over $\C(q)$, the field of rational expressions in an indeterminate $q$, with specific generators and relations.
As we will be invoking Schur's lemma in Section \ref{sec:Kmat} when discussing intertwiners, instead we will define them over an algebraic closure $\K$ of $\C(q)$. 
Elsewhere one may replace $\K$ by a quadratic closure of $\C(q)$, in which case one must choose the $d_i$ appearing in \eqref{DAisAD} to be dyadic fractions.
Note that $q_i := q^{d_i} \in \K$ for $i \in I$.

\begin{defn} \label{D:DJQG}
We denote by $U_q(\mfg)$ the Hopf algebra over $\K$ with generators $x_i$, $y_i$ and invertible $k_i$ for $i \in I$ satisfying the defining relations
\begin{gather}
\label{Urelations1} k_i k_j = k_j k_i, \qq k_i x_j = q_i^{a_{ij}} x_j k_i, \qq k_i y_j = q_i^{-a_{ij}} y_j k_i, \qq  [x_i,y_j] = \del_{ij} \frac{k_i-k^{-1}_{i}}{q_i-q_i^{-1}}, \\
\label{qSerre} \sum_{r=0}^{1-a_{ij}} (-1)^r x_i^{(1-a_{ij}-r)} x_j x_i^{(r)} = \sum_{r=0}^{1-a_{ij}} (-1)^r y_i^{(1-a_{ij}-r)} y_j y_i^{(r)} = 0
\end{gather}
for all $i, j \in I$.
Here we have introduced the notations, for $i \in I$ and $r \in \Z_{\ge 0}$,
\[
[r]_q = \frac{q^r-q^{-r}}{q-q^{-1}}, \qq 
[r]_q! = [r]_q [r\!-\!1]_q \cdots [2]_q [1]_q, \qq 
x_i^{(r)} = \frac{x_i^r}{[r]_{q_i}!}, \qq 
y_i^{(r)} = \frac{y_i^r}{[r]_{q_i}!}.
\]
The following assignments for $i \in I$ determine the coproduct, counit and antipode:
\begin{flalign}
\nonumber \Delta(x_i) &= x_i \ot 1 + k_i \ot x_i, & \epsilon(x_i) &= 0, & S(x_i) &= - k_i^{-1} x_i, \\
\label{Hopfalg} \qq \qq \Delta(y_i) &= y_i \ot k^{-1}_i  + 1 \ot y_i, & \epsilon(y_i) &= 0, & S(y_i) &= - y_i k_i, \qq   \\
\nonumber \Delta(k_i) &= k_i \ot k_i, & \epsilon(k_i) &=1, & S(k_i) &= k_i^{-1}. && \defnend
\end{flalign}
\end{defn}  

For $ \mu = \sum_{i \in I} m_i \al_i \in Q$ with $\bm m \in \Z^I$ write $k_\mu = \prod_{i \in I} k_i^{m_i}$.
We have the Hopf subalgebra
\eq{
U_q( \mfh) := \langle k_i^{\pm 1} \rangle_{i \in I } = \bigoplus_{\mu \in Q} \K k_\mu.  
}
Define the quantum root space of $U_q( \mfg)$ corresponding to $\mu \in Q$ by
\[ 
U_q( \mfg)_\mu   = \{ a \in U_q( \mfg) \, | \, \forall i \in I \; k_i a k^{-1}_i = q^{(\al_i,\mu)} a \} \subset U_q( \mfg)
\]
and one obtains a $Q$-grading: $U_q(\mfg) = \bigoplus_{\mu \in Q} U_q(\mfg)_\mu$.\\

Following \cite[Sec.~3.2]{Ko1}, for $\chi \in \wt H_q := \Hom(Q,\K^\times)$, define $\Ad(\chi) \in \Aut_{\rm Hopf}(U_q( \mfg))$ by 
\[ \Ad(\chi)(a) = \chi(\mu)(a), \qq \text{for all } a \in U_q(\mfg_\mu), \mu \in Q. \]
In other words, $\Ad(\chi)(x_i) = \chi(\al_i) x_i$  and $\Ad(\chi)(y_i) = \chi(\al_i)^{-1} y_i$ for all $i \in I$ and $\Ad(\chi)$ acts as the identity on $U_q(\mfh)$.
We may also view $\Aut(A) < \Aut_{\rm Hopf}(U_q( \mfg))$ by setting
\eq{\label{sigmaUq} 
\si(x_i) = x_{\si(i)}, \qq \si(y_i) = y_{\si(i)}, \qq \si(k_i) = k_{\si(i)} \qq \text{for all } \sigma \in \Aut(A), \, i \in I.
}
In $\Aut_{\rm Hopf}(U_q( \mfg))$ we may form the semidirect product $\Ad(\wt H_q) \rtimes \Aut(A)$.
According to \cite[Thm.~2.1]{Tw}, in fact
\eq{ \label{Hopfalgautodecomp} \Aut_{\rm Hopf}(U_q( \mfg)) = \Ad(\wt H_q) \rtimes \Aut(A). }
Coideal subalgebras $\mc{B}, \mc{B}' \subseteq U_q(\mfg)$ are called {\it equivalent} if there exists $\phi \in \Aut_{\rm Hopf}(U_q( \mfg))$ such that $\phi(\mc{B}) = \mc{B}'$. 

\begin{rmk} \label{rmk:extendedaffqg}
The full quantum Kac-Moody algebra $U_q(\mfg^{\rm ext})$ is a larger Hopf algebra obtained as follows.
We extend $\mfh$ to a larger vector space $\mfh^{\rm ext}$ by adding basis elements $\{ \la_s \}_{s=1}^{n'}$ with $n'={\rm corank}(A)$ satisfying $\al_j(\la_s) \in \{0,1\}$ for all $j \in I$ and $1 \le s \le n'$. 
Then $U_q(\mfg^{\rm ext})$ is the extension of $U_q(\mfg)$ by the invertible generators $\{ \La_{s} \}_{s=1}^{n'}$ satisfying
\[ 
\La_{s} x_i = q^{ \al_i(\la_s)} x_i \La_{s}, \qq 
\La_{s} y_i = q^{- \al_i(\la_s)} y_i \La_{s}, \qq 
[\La_{s},k_i] = 0, \qq 
[\La_s,\La_{r}]=0 . 
\]
for $i \in I$ and $1 \le s,r \le {n'}$. The Hopf algebra structure on $U_q(\mfg^{\rm ext})$ is the one on $U_q(\mfg)$ extended~by
\[ 
\Delta(\La_{s}) = \La_{s} \ot \La_{s}, \qq \epsilon(\La_{s}) =1, \qq S(\La_{s}) = \La_{s}^{-1} 
\]
for $1\le s \le n'$. Note that a completion of $U_q(\mfg^{\rm ext})$ is used in the construction of the universal R-matrix when $A$ is of affine type, see {\it e.g.}~\cite{Dr1,FrRt,KhTo}.
If the minimal realization $( \mfh^{\rm ext}, \{ h_i \}_{i \in I},\{ \al_i \}_{i \in I})$ of $A$ is compatible with $\tau$ in the sense that there exists a permutation $\wt \tau$ of $\{ s \}_{1 \le s \le n'}$ such that $\al_{\tau(i)}(\la_{\wt \tau(s)}) = \al_i(\la_s)$ for all $i \in I$ and $1 \le s \le n'$, then the theory set out in the next section can be extended to $U_q(\mfg^{\rm ext})$; see \cite[Sec.~2.6]{Ko1} for more detail and note also \cite[Rmk.~8.2]{BgKo2}. \hfill \rmkend
\end{rmk}

We now introduce various algebra automorphisms of $U_q(\mfg)$, again following \cite[Sec.~3.4 \& 4.1]{Ko1}.
Define $\om_q: U_q( \mfg) \to U_q( \mfg)$ by
\eq{ \label{defn:tw}
\om_q(x_i) = - k^{-1}_i y_i, \qq
\om_q(y_i) = - x_i k_i, \qq
\om_q(k_i) = k_i^{-1} \qq \text{for } i \in I.
}
For $i \in I$ denote by $T_i$ the algebra automorphism of $U_q( \mfg)$ called $T''_{i,1}$~in~\cite[37.1]{Lu}:
\begin{equation} \label{defn:Ti}
\begin{aligned}
T_i(x_i) &= -y_i \, k_i , \qu & T_i(x_j) &= \sum_{r=0}^{-a_{ij}} (-q_i)^{-r}  x_i^{(-a_{ij}-r)} x_j \, x_i^{(r)}, 
\\  
T_i(y_i) &= -k_i^{-1} x_i, \qu & T_i(y_j) &= \sum_{r=0}^{-a_{ij}} (-q_i)^{r} y_i^{(r)} y_j\, y_i^{(-a_{ij}-r)} ,
\end{aligned}
\end{equation}
where $j \in I \backslash \{ i \}$, and $T_i(k_\mu) = k_{r_i(\mu)}$ for $\mu \in Q$; in particular $T_i(U_q(\mfg)_\mu) = U_q(\mfg)_{r_i(\mu)}$.
Owing to \cite[39.4.3]{Lu} the $T_i$ satisfy the braid relations \eqref{braidrelations} associated to $A$.
Hence we have a well-defined 
\[ 
T_w := T_{i_1} T_{i_2} \cdots T_{i_\ell} \in U_q( \mfg) 
\]
where $w=r_{i_1}r_{i_2}\cdots r_{i_\ell}$ with $i_1,\ldots,i_\ell \in I$ is a reduced decomposition for $w \in W$.


\subsection{Quantum pair algebras} \label{sec:qsymmpairs}

We consider an algebra automorphism which is a quantum analogon of $\theta(X,\tau)$ following \cite[Def.~4.3]{Ko1}.
\begin{defn}
Let $X \subseteq I$ with $A_X$ of finite type and let $\tau \in \Aut(A)$ be an involution such that \eqref{Satdiag1a} is satisfied. 
The \emph{algebra automorphism associated to $(X,\tau)$} is
\begin{flalign} \label{theta_q} 
&& \theta_q = \theta_q(X,\tau) := T_{w_X} \, \tau \, \om_q. && \defnend
\end{flalign}
\end{defn}

\begin{prop}
Let $X \subseteq I$ with $A_X$ of finite type and let $\tau \in \Aut(A)$ be an involution such that \eqref{Satdiag1a} is satisfied. 
Then
\begin{align}
\label{thetaq:UqgX} && \theta_q(a) &= a  && \text{for all } a \in U_q(\mfg_X), \\
\label{thetaq:rootspace} && \theta_q(U_q( \mfg)_\mu) &= U_q(\mfg)_{\theta(\mu)} && \text{for all } \mu \in Q, \\ 
\label{thetaq:Uqh} && \theta_q(k_\mu) &= k_{\theta(\mu)} && \text{for all } \mu \in Q, \\  
\label{Uqhthetaq:decomposition} &&  U_q(\mfh)^{\theta_q} &= \big\lan \{ k_i^{\pm 1} \}_{i \in X}, \{ k_j \, k_{\tau(j)}^{-1}, k_j^{-1} k_{\tau(j)} \}_{j \in I^*} \big\ran, \hspace{-41.5mm} && \qq && 
\end{align}
where $I^* \subseteq I \backslash X$ is as in Proposition \ref{prop:theta:properties}.
\end{prop}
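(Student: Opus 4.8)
These identities are the quantum counterparts of Proposition~\ref{prop:theta:properties}, and the plan is to prove them along the same lines, treating the four claims in the order \eqref{thetaq:rootspace}, \eqref{thetaq:Uqh}, \eqref{thetaq:UqgX}, \eqref{Uqhthetaq:decomposition}. The first two are purely formal and amount to tracking the weight behaviour of each of the three factors of $\theta_q = T_{w_X}\,\tau\,\om_q$ separately. From \eqref{defn:tw} the map $\om_q$ is a weight-reversing algebra automorphism, so $\om_q(U_q(\mfg)_\mu) = U_q(\mfg)_{-\mu}$ and $\om_q(k_\mu) = k_{-\mu}$; from \eqref{sigmaUq} we get $\tau(U_q(\mfg)_\mu) = U_q(\mfg)_{\tau(\mu)}$ and $\tau(k_\mu) = k_{\tau(\mu)}$; and composing the relation $T_i(U_q(\mfg)_\mu) = U_q(\mfg)_{r_i(\mu)}$ (noted after \eqref{defn:Ti}) and $T_i(k_\mu) = k_{r_i(\mu)}$ along a reduced word for $w_X$ yields $T_{w_X}(U_q(\mfg)_\mu) = U_q(\mfg)_{w_X(\mu)}$ and $T_{w_X}(k_\mu) = k_{w_X(\mu)}$. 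Composing the three factors and recalling that $\theta$ acts on $\mfh^*$ as $-w_X\tau$ (stated after \eqref{def:theta}) gives \eqref{thetaq:rootspace} and \eqref{thetaq:Uqh} at once.

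The crux is \eqref{thetaq:UqgX}. Since $U_q(\mfg_X) = \langle x_i,y_i,k_i^{\pm1}\rangle_{i\in X}$ and $\theta_q$ is an algebra homomorphism, it suffices to check it fixes these generators. For $i \in X$: by \eqref{thetaq:Uqh}, $\theta_q(k_i) = k_{\theta(\al_i)}$, and $\theta(\al_i) = -w_X(\al_{\tau(i)}) = \al_{\tau^2(i)} = \al_i$ by \eqref{Satdiag1a} (applicable because $\tau(X)=X$, so $\tau(i)\in X$). For $x_i$ and $y_i$ I would invoke the standard description of the longest-word braid operator on the finite-type algebra $U_q(\mfg_X)$: for $j\in X$,
\[ T_{w_X}(x_j) = -y_{\tau(j)}\,k_{\tau(j)}, \qquad T_{w_X}(y_j) = -k_{\tau(j)}^{-1}\,x_{\tau(j)}, \]
where \eqref{Satdiag1a} has already been used to identify the relevant permutation of the nodes of $X$ with $\tau|_X$. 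This is classical and follows by induction on $\ell(w_X)$ from the rank-one formulas \eqref{defn:Ti} together with the elementary behaviour of $T_w$ when $w$ sends a simple root to a simple root (see also \cite[Sec.~4]{Ko1}). Granting it, using $T_{w_X}(k_\mu) = k_{w_X(\mu)}$ and \eqref{Satdiag1a} to get $T_{w_X}(k_{\tau(i)}^{-1}) = k_{\al_i}$, one computes
\[ \theta_q(x_i) = T_{w_X}\tau\bigl(-k_i^{-1}y_i\bigr) = T_{w_X}\bigl(-k_{\tau(i)}^{-1}y_{\tau(i)}\bigr) = -k_{\al_i}\bigl(-k_i^{-1}x_i\bigr) = x_i, \]
and similarly $\theta_q(y_i) = y_i$; this is the exact quantum counterpart of \eqref{theta:gX}.

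Finally, for \eqref{Uqhthetaq:decomposition}: since $U_q(\mfh) = \bigoplus_{\mu\in Q}\K k_\mu$ is the group algebra of $Q$ and $\theta_q(k_\mu) = k_{\theta(\mu)}$ by \eqref{thetaq:Uqh}, we have $\theta_q(k_\mu) = k_\mu$ if and only if $\theta(\mu) = \mu$, so $U_q(\mfh)^{\theta_q}$ is spanned by $\{k_\mu : \mu\in Q,\ \theta(\mu)=\mu\}$, i.e. it is the group algebra of the lattice $Q^\theta := \{\mu\in Q : \theta(\mu)=\mu\}$. That lattice was already identified in \eqref{equivalence1} (the dual form of \eqref{htheta:decomposition}) as $\sum_{i\in X}\Z\al_i + \sum_{j\in I^*,\,j\ne\tau(j)}\Z(\al_j-\al_{\tau(j)})$, and translating a $\Z$-generating set of this lattice into the corresponding monomials in the $k$'s (so $\al_j-\al_{\tau(j)}$ becomes $k_jk_{\tau(j)}^{-1}$, with inverse $k_j^{-1}k_{\tau(j)}$, and $\pm\al_i$ becomes $k_i^{\pm1}$) produces exactly the right-hand side of \eqref{Uqhthetaq:decomposition}.

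The only genuinely non-formal ingredient is the explicit action of $T_{w_X}$ on the Chevalley generators of $U_q(\mfg_X)$ used in \eqref{thetaq:UqgX}; I expect this to be the main obstacle, although it is standard and the remaining steps are weight bookkeeping. It is also precisely the place where \eqref{Satdiag1a} is indispensable: that hypothesis is exactly what forces the diagram automorphism $\tau$ in $\theta_q = T_{w_X}\tau\om_q$ to cancel the Weyl twist $-w_X|_X$ emerging from $T_{w_X}$, so that $\theta_q$ restricts to the identity on $U_q(\mfg_X)$ rather than to a nontrivial diagram automorphism of it.
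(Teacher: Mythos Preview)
Your proof is correct and follows essentially the same route as the paper's. The paper handles \eqref{thetaq:rootspace} and \eqref{thetaq:Uqh} exactly as you do, by tracking the effect of each factor $T_{w_X}$, $\tau$, $\om_q$ on weights and on the $k_\mu$; for \eqref{Uqhthetaq:decomposition} it likewise reduces to the lattice computation already carried out in the proof of \eqref{htheta:decomposition}, just as you do. The only difference is in \eqref{thetaq:UqgX}: the paper simply cites \cite[Lem.~3.4]{Ko1}, whereas you unpack its content by writing down the explicit formulas $T_{w_X}(x_j) = -y_{\tau(j)}k_{\tau(j)}$ and $T_{w_X}(y_j) = -k_{\tau(j)}^{-1}x_{\tau(j)}$ for $j \in X$ and checking directly that they combine with $\tau$ and $\om_q$ to give the identity on generators. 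That is precisely what \cite[Lem.~3.4]{Ko1} furnishes, so your argument and the paper's are the same in substance; your version is just more self-contained.
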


\begin{proof}
Identity \eqref{thetaq:UqgX} is a consequence of \cite[Lem.~3.4]{Ko1}.
We obtain \eqref{thetaq:rootspace} as a consequence of the corresponding properties of the three constituent automorphisms $T_{w_X}$, $\tau$ and $\om_q$.
Next, \eqref{thetaq:Uqh} immediately follows from $\om_q(k_\mu) = k_{-\mu}$, $\tau(k_\mu) = k_{\tau(\mu)}$ and $T_i(k_\mu) = k_{r_i(\mu)}$ for $i \in I$.
Finally, \eqref{Uqhthetaq:decomposition} follows from \eqref{thetaq:Uqh} and the proof of \eqref{htheta:decomposition} in Proposition \ref{prop:theta:properties}.
\end{proof}

The following definition is due to \cite[Def.~5.6]{Ko1} (also see \cite[Sec.~7, Variation 2]{Le2}).

\begin{defn} \label{defn:rCSA} 
Let $X \subseteq I$ with $A_X$ of finite type and let $\tau \in \Aut(A)$ be an involution such that \eqref{Satdiag1a} is satisfied. 
Let ${\bm c}=(c_j)_{j\in I\backslash X}\in (\K^\times)^{I\backslash X}$ and ${\bm s}=(s_j)_{j\in I\backslash X}\in\K^{I\backslash X}$ and denote 
\eq{ \label{def:b_j}
b_i  = 
\begin{cases} 
y_i & \text{if } i \in X, \\
y_i + c_i \theta_q(y_i k_i) k^{-1}_i - s_i k^{-1}_i & \text{otherwise} 
\end{cases}
}
and
\[
U_q( \mfn_X^+)=\langle x_i \rangle_{i \in X}.
\]
The \emph{subalgebra associated to $(X,\tau,\bm c,\bm s)$} is
\begin{flalign} \label{def:Bcs} 
&& B_{\bm c,\bm s} = B_{\bm c,\bm s}(X,\tau) := \langle U_q( \mfn_X^+), U_q( \mfh)^{\theta_q}, \{ b_i \}_{i \in I} \rangle.  && \defnend
\end{flalign}
\end{defn}

\begin{prop}
Let $X \subseteq I$ with $A_X$ of finite type and let $\tau \in \Aut(A)$ be an involution such that \eqref{Satdiag1a} is satisfied. 
Let ${\bm c}=(c_j)_{j\in I\backslash X}\in (\K^\times)^{I\backslash X}$ and ${\bm s}=(s_j)_{j\in I\backslash X} \in \K^{I\backslash X}$.
Then $B_{\bm c,\bm s}$ is a right coideal of $U_q( \mfg)$: 
\[ \Delta(B_{\bm c,\bm s}) \subseteq B_{\bm c,\bm s} \otimes  U_q( \mfg). \]
\end{prop}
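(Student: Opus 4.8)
The plan is to use that $\Delta$ is an algebra homomorphism: since $B_{\bm c,\bm s}\ot U_q(\mfg)$ is a subalgebra of $U_q(\mfg)\ot U_q(\mfg)$, the set $\{a\in U_q(\mfg):\Delta(a)\in B_{\bm c,\bm s}\ot U_q(\mfg)\}$ is a subalgebra of $U_q(\mfg)$, so it suffices to verify the coideal condition on the algebra generators of $B_{\bm c,\bm s}$, that is on $U_q(\mfn_X^+)$, on $U_q(\mfh)^{\theta_q}$, and on the $b_i$ for $i\in I$. First I would dispose of the easy generators: $\Delta(x_i)=x_i\ot 1+k_i\ot x_i$ with $x_i\in U_q(\mfn_X^+)$ and $k_i\in U_q(\mfh)^{\theta_q}$ for $i\in X$ (using \eqref{Uqhthetaq:decomposition}); $\Delta(k_\nu)=k_\nu\ot k_\nu$ for $k_\nu\in U_q(\mfh)^{\theta_q}$; and $\Delta(b_i)=\Delta(y_i)=y_i\ot k_i^{-1}+1\ot y_i$ with $y_i=b_i\in B_{\bm c,\bm s}$ for $i\in X$. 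In each case every summand has its left tensor factor in $B_{\bm c,\bm s}$. Equivalently one checks that $\mathcal H:=U_q(\mfn_X^+)\,U_q(\mfh)^{\theta_q}$ is a Hopf subalgebra of $U_q(\mfg)$, so that only the generators $b_i$ with $i\in I\backslash X$ remain.

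For $i\in I\backslash X$ put $Z_i:=\theta_q(y_i k_i)$, so that $b_i=y_i+c_i Z_i k_i^{-1}-s_i k_i^{-1}$. A short computation from \eqref{defn:tw} and \eqref{defn:Ti} gives $Z_i=-T_{w_X}(x_{\tau(i)})$. Since every reduced word for $w_X$ uses only letters of $X$ while $\tau(i)\notin X$, one has $\ell(w_X r_{\tau(i)})=\ell(w_X)+1$; hence $w_X(\al_{\tau(i)})$ is a positive root, $Z_i$ lies in $U_q(\mfn^+):=\langle x_j\rangle_{j\in I}$, and $Z_i\in U_q(\mfg)_{w_X(\al_{\tau(i)})}$. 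By Lemma \ref{lem:Xjdecomposition}, $w_X(\al_{\tau(i)})=\al_{\tau(i)}+\sum_{k\in X(\tau(i))}v_k(\al_{\tau(i)})\al_k$ involves $\al_{\tau(i)}$ with coefficient $1$ and no other simple root outside $X$. The crucial step is the structural statement
\[ \Delta(Z_i)\in Z_i\ot 1+(\mathcal H\,k_i)\ot U_q(\mfg). \]
Granting it, $\Delta(b_i)=\Delta(y_i)+c_i\,\Delta(Z_i)(k_i^{-1}\ot k_i^{-1})-s_i(k_i^{-1}\ot k_i^{-1})$; the $Z_i\ot 1$ part of $\Delta(Z_i)$ produces $c_iZ_ik_i^{-1}\ot k_i^{-1}$, which together with $y_i\ot k_i^{-1}$ and $-s_ik_i^{-1}\ot k_i^{-1}$ assembles into $b_i\ot k_i^{-1}$, so that
\[ \Delta(b_i)-b_i\ot k_i^{-1}-1\ot y_i\in(\mathcal H\,k_i)(k_i^{-1}\ot k_i^{-1})\,\ot\,U_q(\mfg)=\mathcal H\ot U_q(\mfg)\subseteq B_{\bm c,\bm s}\ot U_q(\mfg), \]
and since $b_i,1\in B_{\bm c,\bm s}$ this finishes the proof.

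To establish the structural statement I would use the $Q$-grading on $U_q(\mfg)$ together with the coarser grading by $Q/Q_X$ (in which $x_j$ has degree $\bar\al_j$, $y_j$ degree $-\bar\al_j$ and $k_\mu$ degree $0$); both are respected by $\Delta$, because $\Delta(x_j)=x_j\ot 1+k_j\ot x_j$. As $Z_i\in U_q(\mfn^+)$ we get $\Delta(Z_i)\in U_q(\mfb^+)\ot U_q(\mfn^+)$ with $U_q(\mfb^+):=\langle x_j,k_j^{\pm1}\rangle_{j\in I}$; combined with the $Q$-grading, every summand of $\Delta(Z_i)$ has the form $p\ot q$ with $p\in U_q(\mfn^+)_{\mu_1}k_{\mu_2}$, $q\in U_q(\mfn^+)_{\mu_2}$, $\mu_1+\mu_2=w_X(\al_{\tau(i)})$ and $\mu_1,\mu_2\in Q^+$. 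Because $w_X(\al_{\tau(i)})$ has $\al_{\tau(i)}$-coefficient $1$ and no other simple root outside $X$, each summand satisfies $\mu_1\in Q_X$ or $\mu_2\in Q_X$. If $\mu_1\in Q_X$ then $U_q(\mfn^+)_{\mu_1}=U_q(\mfn_X^+)_{\mu_1}$ and $\mu_2=\al_{\tau(i)}+\nu$ with $\nu\in Q_X\cap Q^+$, so $pk_i^{-1}\in U_q(\mfn_X^+)\,k_{\mu_2-\al_i}$ with $k_{\mu_2-\al_i}=k_{\tau(i)}k_i^{-1}\,k_\nu\in U_q(\mfh)^{\theta_q}$ by \eqref{Uqhthetaq:decomposition}, whence $pk_i^{-1}\in\mathcal H$. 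The hard part, which I expect to be the main obstacle, is to rule out the summands with $\mu_2\in Q_X$ and $\mu_1\notin Q_X$, i.e.\ to show that the $(\bar\al_{\tau(i)},0)$-component of $\Delta(Z_i)$ for the $(Q/Q_X)\times(Q/Q_X)$-grading equals $Z_i\ot 1$ exactly; equivalently $\delta_0(Z_i)=Z_i\ot 1$, where $\delta_0:=(\id\ot\mathrm{pr}_0)\circ\Delta\colon U_q(\mfn^+)\to U_q(\mfb^+)\ot U_q(\mfn_X^+)$ is the algebra homomorphism ($\mathrm{pr}_0$ being the projection onto the $Q/Q_X$-degree-$0$ component, which on $U_q(\mfn^+)$ is an algebra map) determined by $x_k\mapsto\Delta(x_k)$ for $k\in X$ and $x_j\mapsto x_j\ot 1$ for $j\in I\backslash X$. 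This last identity is verified from the explicit action of $T_{w_X}$ on $U_q(\mfn^+)$, as in the proof of the coideal property in \cite[\S5]{Ko1}, the algebra $B_{\bm c,\bm s}$ here being identical to Kolb's.
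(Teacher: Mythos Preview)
Your proposal is correct and follows the same approach as the paper, which simply cites \cite[Prop.~5.2]{Ko1}; you have unpacked Kolb's argument, correctly reducing to the structural statement $\Delta(Z_i)\in Z_i\ot 1+(\mathcal H\,k_i)\ot U_q(\mfg)$ and handling the $\mu_1\in Q_X$ terms via the weight bookkeeping, while deferring the remaining identity $\delta_0(Z_i)=Z_i\ot 1$ to Kolb exactly as the paper does. Your observation that $\mathrm{pr}_0$ is an algebra map on $U_q(\mfn^+)$ (because the $Q/Q_X$-grading there takes values in a free abelian monoid) is a clean way to package the reduction, but it does not by itself yield $\delta_0(T_{w_X}(x_{\tau(i)}))=T_{w_X}(x_{\tau(i)})\ot 1$, since $T_k$ for $k\in X$ does not preserve $U_q(\mfn^+)$ and hence does not intertwine $\delta_0$; so the appeal to \cite{Ko1} at that point is genuinely needed, not merely a convenience.
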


\begin{proof}
This is \cite[Prop.~5.2]{Ko1}.
We emphasize that neither \eqref{Satdiag2a} nor the weaker condition \eqref{Satdiag2b} is needed to establish this.
\end{proof} 

Note that a left coideal subalgebra is obtained by applying the antipode or its inverse to $B_{\bm c,\bm s}$. 
The arguments in \cite[Sec. 10]{Ko1}, which crucially do not rely on \eqref{Satdiag2a}, imply the following.

\begin{prop} \label{prop:q-intersectionproperty}
Let $(X,\tau) \in \GSat(A)$.
Suppose $\bm c \in (\K^\times)^{I\backslash X}$ and $\bm s \in \K^{I\backslash X}$ are such that
\eq{ \label{q-intersectionproperty}
B_{\bm c,\bm s} \cap U_q( \mfh) = U_q( \mfh)^{\theta_q}
}
holds.
Then $B_{\bm c,\bm s}(X,\tau)$ specializes to $U(\mfk_{\bm c}(X,\tau))_{\bm s}$ at $q=1$.
\end{prop}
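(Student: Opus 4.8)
The plan is to adapt the $q=1$ specialization argument of \cite[Sec.~10]{Ko1}, checking that it uses only \eqref{Satdiag1a} together with (in one place) the weaker condition \eqref{Satdiag2b}, never \eqref{Satdiag2a}, and that the role played there by the intersection property for Satake diagrams is played here by the hypothesis \eqref{q-intersectionproperty}. First I would fix the integral form. Let $\mcA=\C[q,q^{-1}]_{(q-1)}$, with reduction $a\mapsto\bar a$ at $q=1$; after a harmless extension of scalars and, using the Hopf algebra automorphisms $\Ad(\wt H_q)$ of \eqref{Hopfalgautodecomp} to normalize the parameters, we may assume $c_i\in\mcA^\times$ and $s_i\in\mcA$. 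Let $U_\mcA$ be the $\mcA$-subalgebra of $U_q(\mfg)$ generated by the $x_i$, $y_i$, $k_i^{\pm1}$ and $\tilde k_i:=(q_i-q_i^{-1})^{-1}(k_i-k_i^{-1})$; recall $U_\mcA/(q-1)U_\mcA\cong U(\mfg)$ via $x_i\mapsto e_i$, $y_i\mapsto f_i$, $\tilde k_i\mapsto h_i$, $k_i\mapsto 1$, and that $U_\mcA$ is stable under each $T_i$, $\tau$, $\om_q$, hence under $\theta_q=T_{w_X}\tau\om_q$. By \eqref{def:b_j} then $b_i\in U_\mcA$ for all $i\in I$, so we may form the $\mcA$-subalgebra $B_\mcA\subseteq U_\mcA$ generated by $\{x_i\}_{i\in X}$, by $U_q(\mfh)^{\theta_q}\cap U_\mcA$ and by $\{b_i\}_{i\in I}$; by definition $\bar B:=B_\mcA/(q-1)B_\mcA$ is the specialization of $B_{\bm c,\bm s}$ at $q=1$, and the claim is $\bar B\cong U(\mfk_{\bm c})_{\bm s}$.

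On the classical side, the standing parameter constraints of \cite[Sec.~5.1]{Ko1} ensure \eqref{Idiff}, so Theorem \ref{thm:kc:basis} and Corollary \ref{cor:kc:intersection2} give the triangular basis $\mfk_{\bm c}=\mfn^+_X\oplus\mfh^\theta\oplus\bigoplus_{\bm j\in\mcJ}\C g_{\bm j}$, with $g_{\bm j}-f_{\bm j}$ strictly lower in weight as in \eqref{gi:weight}, and $\mfk_{\bm c}\cap\mfh=\mfh^\theta$; hence by PBW, $U(\mfk_{\bm c})_{\bm s}$ has a basis of ordered monomials in a basis of $\mfn_X^+$, a basis of $\mfh^\theta$, and the $g_{\bm j}$, and $U(\mfk_{\bm c})_{\bm s}\cap U(\mfh)=U(\mfh^\theta)$. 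On the quantum side, Kolb's structure theory of $B_{\bm c,\bm s}$ \cite[Sec.~6]{Ko1}, which — as noted in the remark after Theorem \ref{thm:kc:basis} — carries over to $\GSat(A)$ since it never uses \eqref{Satdiag2a}, supplies a $\K$-basis of $B_{\bm c,\bm s}$ of parallel shape: ordered monomials in $\{x_i\}_{i\in X}$, in a fixed basis of a $\K$-form of $U_q(\mfh)^{\theta_q}$ (reducing to a PBW basis of $U(\mfh^\theta)$), and in root vectors $B_{\bm j}$ ($\bm j\in\mcJ$) with $B_{\bm j}-y_{\bm j}$ strictly lower in weight, along with the quantum modified Serre relations satisfied by the $b_i$ (cf.~\cite[Sec.~7]{Ko2}). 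Since the $B_{\bm j}$ are built from the $b_i$ by $T_w$-type operators preserving $U_\mcA$, and $U_q(\mfh)^{\theta_q}\cap U_\mcA$ is $\mcA$-free reducing to $U(\mfh^\theta)$, this $\K$-basis is in fact an $\mcA$-basis of $B_\mcA$; so $B_\mcA$ is $\mcA$-free, $B_\mcA\otimes_\mcA\K=B_{\bm c,\bm s}$, and $\bar B$ has $\C$-basis the images of these monomials.

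I would then define $\psi\colon U(\mfk_{\bm c})_{\bm s}\to\bar B$ on generators by $e_i\mapsto\bar x_i$ ($i\in X$), each $h\in\mfh^\theta$ to the class of its counterpart in $U_q(\mfh)^{\theta_q}\cap U_\mcA$, and $g_i\mapsto\bar b_i$ ($i\in I$). Well-definedness: every defining relation of $U(\mfk_{\bm c})_{\bm s}$ is the reduction mod $q-1$ of a valid relation of $B_{\bm c,\bm s}$ — the relations internal to $U(\mfn^+_X)$ and $U(\mfh^\theta)$ and the mixed relations \eqref{kc:rels1} from \eqref{Urelations1} and \eqref{def:b_j}, and the modified Serre relations of Lemma \ref{lem:kc:Serre} as the images at $q=1$ of the quantum modified Serre relations (here \eqref{Idiff}, valid by the standing constraints, guarantees that Lemma \ref{lem:kc:Serre} truly describes the limit). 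Then $\psi$ is surjective, its image containing all generators of $\bar B$; and $\psi$ is injective because it carries the PBW basis of $U(\mfk_{\bm c})_{\bm s}$ onto the reduced PBW basis of $\bar B$ up to strictly lower-weight corrections (a unitriangular base change in each weight space), the weight-zero parts matching precisely because $B_\mcA\cap U_q(\mfh)=U_q(\mfh)^{\theta_q}\cap U_\mcA$ — a consequence of \eqref{q-intersectionproperty} — reduces to $U(\mfh^\theta)=U(\mfk_{\bm c}\cap\mfh)$. Hence $\psi$ is an isomorphism, which is the assertion.

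I expect the main obstacle to lie in this last matching of bases: one must confirm that Kolb's $\K$-basis of $B_{\bm c,\bm s}$, together with the triangularity statements controlling the lower-weight corrections in the $B_{\bm j}$ and in the right-hand sides of the quantum modified Serre relations, really is integral over $\mcA$ and reduces as expected, so that $\psi$ is genuinely unitriangular with respect to the two PBW bases and no spurious terms appear on specialization. Equivalently, one must check that the one point at which \eqref{Satdiag2a} might enter \cite[Sec.~10]{Ko1} — the statement that $B_{\bm c,\bm s}\cap U_q(\mfh)$ is no larger than $U_q(\mfh)^{\theta_q}$ — is cleanly replaced by the hypothesis \eqref{q-intersectionproperty}, with the corresponding Lie-theoretic input provided by Corollary \ref{cor:kc:intersection2} via \eqref{Idiff}.
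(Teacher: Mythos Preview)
Your approach is the same as the paper's: both defer to the specialization machinery of \cite[Sec.~10]{Ko1} and observe that nothing there uses the full Satake condition \eqref{Satdiag2a}. The paper's proof is in fact a one-line citation to this effect; you have supplied a careful fleshing-out of what that citation entails, which is useful and essentially correct.

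There is one unjustified step. You write that ``the standing parameter constraints of \cite[Sec.~5.1]{Ko1} ensure \eqref{Idiff}'', and then use \eqref{Idiff} to invoke Theorem~\ref{thm:kc:basis} and Corollary~\ref{cor:kc:intersection2}. But the hypothesis of the proposition is only the quantum intersection property \eqref{q-intersectionproperty}; it does \emph{not} assume $\bm c\in\mc C$, so you cannot simply import those constraints. What you need instead is that \eqref{q-intersectionproperty} itself forces \eqref{Idiff}. This is true and not hard: if $j\in I\backslash X$ satisfies $j\ne\tau(j)$, $a_{j,\tau(j)}=0$ and $X(j)=\emptyset$, then $\theta_q(y_jk_j)k_j^{-1}$ and $\theta_q(y_{\tau(j)}k_{\tau(j)})k_{\tau(j)}^{-1}$ are (up to scalars) $x_{\tau(j)}k_j^{-1}$ and $x_jk_{\tau(j)}^{-1}$, and a suitable $q$-commutator of $b_j$ and $b_{\tau(j)}$ produces an element of $U_q(\mfh)$ lying in $U_q(\mfh)^{\theta_q}$ only when $c_j=c_{\tau(j)}$. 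Inserting this short argument closes the gap; the rest of your outline then goes through as written.
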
 

Equation \eqref{q-intersectionproperty} holds if the tuples $\bm c$~and~$\bm s$ satisfy certain constraints.

\begin{rmk}
The same constraints also arise when trying to find a nontrivial intertwiner for the restrictions of the vector representation $\RT_u$ of $U_q(\mfg)$ to $B_{\bm c,\bm s}$, see Theorem \ref{thrm:solexistence}.
This can be understood roughly as follows. 
The vector representation $\RT_u$ of $U_q(\mfg)$ is a particular representation on a finite-dimensional vector space, which has the property that elements of $U_q(\mfh)$ are mapped to diagonal matrices; on the other hand the images of the generators $b_j$ are non-diagonal matrices.
An intertwiner of $\RT_u|_{B_{\bm c,\bm s}}$ is a matrix $K(u)$ that commutes with the action of $B_{\bm c,\bm s}$ given by the representation $\RT_u$.
If there are too many elements of $U_q(\mfh)$ in $B_{\bm c,\bm s}$ compared to the number of $b_j$ (given by the cardinality of $I \backslash X$) then this puts too many restrictions on the intertwiner $K(u)$ and forces is to be as zero matrix.
It appears that the subalgebra $B_{\bm c,\bm s} \cap U_q(\mfh)$ is of the ``correct'' size precisely if it is as small as possible, {\it i.e.}~if it equals $U_q(\mfh)^{\theta_q}$. \hfill \rmkend
\end{rmk}

To describe these constraints on $\bm c$~and~$\bm s$ one needs to single out certain $\tau$-orbits of $I \backslash X$.
As in Proposition \ref{prop:theta:properties}, choose a subset $I^* \subseteq I \backslash X$ containing precisely one element of every $\tau$-orbit. 
Consider the following subsets of $I^*$:
\eqa{
I_{\rm diff} &= \{ j \in I^* \, | \, j \ne \tau(j) \text{ and }  (\al_j,\theta(\al_j)) \neq 0 \} \label{Idiffdefn}, \\
I_{\rm ns} &= \{ j \in I^* \,|\, j=\tau(j) \text{ and } a_{ij} = 0 \; \forall i \in X \} \label{Insdefn}, \\
I_{\rm nsf} & = 
\{ j \in I_{\rm ns} \, | \, a_{ij} \in 2\Z \;\forall i\in I_{\rm ns} \}. 
} 

\begin{defn}
Let $(X,\tau) \in \GSat(A)$.
We call a $\tau$-orbit $Y \subseteq I \backslash X$ \emph{special} if 
\begin{flalign*}
&& Y = \{j\} & \qu \text{such that } j \in I_{\rm nsf},\qq \text{or}\\
&& Y = \{j,\tau(j)\} & \qu \text{such that one of } j, \tau(j) \text{ is in } I_{\rm diff}. && \defnend
\end{flalign*}
\end{defn}

\begin{exam} \label{exam:distinguishedindexsets} 
In these examples we choose the standard labelling $I=\{0,1,\ldots,n\}$ for untwisted affine Dynkin diagrams; in particular $\{1,\ldots, n\}$ labels the associated finite Dynkin diagram.
\begin{enumerate}
\item Consider the Cartan matrix of type ${\rm A}_3^{(1)}$ with $I=\{0,1,2,3\}$. 
It can be straightforwardly verified that $(X,\tau) = (\{2\},(13)) \in \Sat(A) \subseteq \GSat(A)$. 
The diagram is

\begin{center}\begin{tikzpicture}[line width=0.7pt,scale=1]
\draw[thick] (1.5,.4) -- (1,0) -- (1.5,-.4) -- (2,0) -- (1.5,.4);
\draw[<->,gray] (1.5,.3) -- (1.5,-.3);
\filldraw[fill=white] (1,0) circle (.1) node[left=1pt]{\scriptsize $0$};
\filldraw[fill=white] (1.5,.4) circle (.1) node[above=1pt]{\scriptsize $1$};
\filldraw[fill=white] (1.5,-.4) circle (.1) node[below=1pt]{\scriptsize $3$};
\filldraw[fill=black] (2,0) circle (.1) node[right=1pt]{\scriptsize ${2}$};
\end{tikzpicture}\end{center}

\noindent We have $w_X = r_2$ and hence
\[ 
\theta(\al_0) = \al_0, \qu \theta(\al_1) = -\al_2 - \al_3, \qu \theta(\al_2) = \al_2, \qu\theta(\al_3) = -\al_1-\al_2. 
\]
Choose $I^*=\{0,1\}$.
We obtain $I_{\rm diff} = \{ 1 \}$ and $I_{\rm ns} = I_{\rm nsf} = \{ 0 \}$.
\item Fix $n \ge 4$ and consider the Cartan matrix of type ${\rm C}_n^{(1)}$ with $I=\{0,1,\ldots,n\}$.
Then $(X,\tau) = (\{0,1,\ldots,n-3\},\id) \in \GSat(A) \backslash \Sat(A)$. 
The diagram in this case is

\begin{center}\begin{tikzpicture}[line width=0.7pt,scale=1]
\draw[double,<-] (-.1,0) -- (-.5,0);
\draw[thick,dashed] (0,0) -- (1,0);
\draw[thick] (1,0) -- (2,0);
\draw[double,<-] (2.1,0) --  (2.5,0);
\filldraw[fill=black] (-.5,0) circle (.1) node[left=1pt]{\scriptsize $0$};
\filldraw[fill=black] (0,0) circle (.1) node[above=1pt]{\scriptsize $1$};
\filldraw[fill=black] (1,0) circle (.1) node[above=1pt]{\scriptsize $n\!\!-\!\!3$};
\filldraw[fill=white] (1.5,0) circle (.1) node[below=1pt]{\scriptsize $n\!\!-\!\!2$};
\filldraw[fill=white] (2,0) circle (.1) node[above=1pt]{\scriptsize $n\!\!-\!\!1$};
\filldraw[fill=white] (2.5,0) circle (.1) node[right=1pt]{\scriptsize $n$};
\draw[] (0,.75) -- (0,.75);
\draw[] (0,-.75) -- (0,-.75);
\end{tikzpicture}\end{center}

\noindent Necessarily, $I^*=I\backslash X$. 
From $a_{n-1,n} = -2$ and $a_{n,n-1} = -1$ we derive
\begin{flalign*}
&&&& I_{\rm diff} = \emptyset, \qq  I_{\rm ns} = \{n-1,n\}, \qq I_{\rm nsf} = \{ n \}. \qq\qq\qu && \examend
\end{flalign*}
\end{enumerate}
\end{exam}

The special $\tau$-orbits in $I \backslash X$ are characterized diagrammatically as follows. 
Note that for $j \in I \backslash X$, by virtue of \cite[Lem.~5.3]{Ko1}, $\tau(X)=X$ and Lemma \ref{lem:Xjdecomposition}, the condition $j \ne \tau(j) \text{ and }  (\al_j,\theta(\al_j)) \neq 0$ appearing in the definition of $I_{\rm diff}$ is equivalent to the premise in \eqref{Idiff}.
Hence $I_{\rm diff}$ corresponds to those nontrivial $\tau$-orbits of unfilled nodes which neighbour each other or at least one filled node.
Next, $I_{\rm ns}$ corresponds precisely to all unfilled nodes which are fixed by $\tau$ and do not neighbour any filled nodes.
One obtains $I_{\rm nsf}$ by taking those nodes $j \in I_{\rm ns}$ all of whose neighbours $i \in I_{\rm ns}$ are connected to $j$ by edges with even multiplicity and not directed towards $j$ (in particular $d_i \le d_j$, {\it i.e.}~$\al_i$ must not be longer than~$\al_j$). 

The sets $I_{\rm diff}$ and $I_{\rm nsf}$ will be important for $\Ad(\wt{H}_q)$-equivalences of algebras $B_{\bm c,\bm s}$, see Corollary \ref{cor:nonremovable}.
For now, following \cite[(5.9) and (5.11)]{Ko1}, we define the sets of suitable parameters:
\eqa{
\mc{C} &= \mc{C}(X,\tau) = \{ {\bm c}\in(\K^\times)^{I\backslash X}\,|\, \forall j \in I^*:\, c_j \ne c_{\tau(j)} \Rightarrow j \in I_{\rm diff}
 \}, \label{C-family} \\
\mc{S} &= \mc{S}(X,\tau) = \{ {\bm s} \in \K^{I\backslash X}\,|\, \forall j \in I^*:\, s_j\ne 0 \Rightarrow j \in I_{\rm nsf} \} \label{S-family}.
}

\begin{defn}
Fix $(X,\tau) \in \GSat(A)$.
If $\bm c \in \mc{C}$ and $\bm s \in \mc{S}$ we call $B_{\bm c,\bm s}$ a \emph{quantum pair (QP) algebra}. 
If in addition $(X,\tau)\in\Sat(A)$, then $B_{\bm c,\bm s}$ is known as a {\it quantum symmetric pair (QSP) algebra}, see \cite[Def. 5.6]{Ko1}. 
\hfill \defnend
\end{defn}

\begin{prop} \label{prop:QPalgebras}
Let $(X,\tau) \in \GSat(A)$, $\bm c \in \mc{C}$ and $\bm s \in \mc{S}$. 
\begin{enumerate}
\item[(i)] The number of free parameters in $B_{\bm c,\bm s}$ is $|I^*|+|I_{\rm diff} \cup I_{\rm nsf}|$.
\item[(ii)] The intersection condition \eqref{q-intersectionproperty} holds.
\end{enumerate}
\end{prop}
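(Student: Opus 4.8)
The plan is to prove both parts by reducing to the corresponding statements in \cite{Ko1}, where $(X,\tau)\in\Sat(A)$ is assumed, and carefully checking that the weaker hypothesis $(X,\tau)\in\GSat(A)$ suffices. For part (i), I would first recall that by Definition \ref{defn:rCSA} the algebra $B_{\bm c,\bm s}$ depends on the parameter tuples $\bm c\in(\K^\times)^{I\backslash X}$ and $\bm s\in\K^{I\backslash X}$, but that $\bm c$ and $\bm s$ range only over the sets $\mc{C}$ and $\mc{S}$ of \eqref{C-family}--\eqref{S-family}. The count of ``free'' parameters is then obtained by counting the independent coordinates: a tuple $\bm c\in\mc{C}$ is determined by the values $c_j$ for $j\in I^*$ (which is already $|I^*|$ parameters if $\tau$ acts freely, but note that for $j\ne\tau(j)$ with $j\notin I_{\rm diff}$ one must have $c_j=c_{\tau(j)}$) together with the freedom to vary $c_j$ versus $c_{\tau(j)}$ independently precisely on the orbits indexed by $I_{\rm diff}$; similarly $\bm s$ contributes a free parameter exactly for each $j\in I_{\rm nsf}$. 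The key point is that, as remarked after Definition \ref{def:genSatdiag} and in the characterization of $I_{\rm diff},I_{\rm nsf}$ given just before \eqref{C-family}, these index sets make sense and the counting argument of \cite[Sec.~5]{Ko1} goes through verbatim for generalized Satake diagrams. One must also be slightly careful about whether different parameter choices give genuinely different (non-equal, as opposed to merely inequivalent) subalgebras; here I would invoke the basis/PBW-type results underlying \cite{Ko1} together with the specialization statement from Proposition \ref{prop:q-intersectionproperty} to rule out collapses.

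For part (ii), the strategy is to apply the argument of \cite[Sec.~10]{Ko1}, which establishes \eqref{q-intersectionproperty} under hypotheses on $\bm c,\bm s$ that amount to $\bm c\in\mc{C}$, $\bm s\in\mc{S}$. As already flagged in the proof of Proposition \ref{prop:q-intersectionproperty}, the arguments in \cite[Sec.~10]{Ko1} do not use condition \eqref{Satdiag2a}, only the data $(X,\tau)$ with $A_X$ of finite type and \eqref{Satdiag1a}; so they apply to any $(X,\tau)\in\GSat(A)$. The role of \eqref{Satdiag2b} (i.e. of restricting to $\GSat(A)$ rather than arbitrary $(X,\tau)$ satisfying \eqref{Satdiag1a}) is to guarantee, via Corollary \ref{cor:kc:intersection2}, that at the classical level $\mfk_{\bm c}\cap\mfh=\mfh^\theta$, which is the $q=1$ shadow of \eqref{q-intersectionproperty}; this is what makes the quantum intersection property the ``correct'' one rather than failing. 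Concretely I would: (a) recall that $B_{\bm c,\bm s}\cap U_q(\mfh)\supseteq U_q(\mfh)^{\theta_q}$ is trivial from the definition; (b) for the reverse inclusion, use the filtered/graded structure of $B_{\bm c,\bm s}$ with respect to the $Q$-grading of $U_q(\mfg)$ together with the quantum analogue of the modified Serre relations (the version of Lemma \ref{lem:kc:Serre} for $B_{\bm c,\bm s}$, which holds precisely when $\bm c\in\mc{C}$, $\bm s\in\mc{S}$); (c) conclude that any element of $B_{\bm c,\bm s}$ lying in $U_q(\mfh)$ must in fact lie in the span of the $k_\mu$ with $\theta(\mu)=\mu$, using \eqref{thetaq:UqgX} to kill the $U_q(\mfn_X^+)$-contribution just as in the classical Corollary \ref{cor:kc:intersection2}.

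The main obstacle I anticipate is not the ideas, which are essentially a transcription of \cite{Ko1}, but verifying that every lemma invoked from \cite{Ko1} genuinely avoids the Satake condition \eqref{Satdiag2a}; in particular the quantum modified Serre relations and the associated PBW-type basis of $B_{\bm c,\bm s}$ must be re-examined to confirm they hold for $(X,\tau)\in\WSat(A)$ as well. The delicate cases are exactly those identified in the proof of Lemma \ref{lem:kc:Serre}: the configuration of type ${\rm B}_2$ (Example \ref{ex:Satdiags}(i)) embedded in $X\cup\{i\}$, where one must check that the relevant $q$-commutator still lands in $U_q(\mfn_X^+)+U_q(\mfh)^{\theta_q}+\sum_{\al_{\bm j}<\la_{ij}}\K b_{\bm j}$ rather than producing a spurious Cartan element. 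Once this is secured, both the parameter count and the intersection property follow as in the classical setting with only notational changes; the subtlety of the bare ${\rm A}_2$ configuration is automatically excluded because $(X,\tau)\in\GSat(A)$ forbids it via \eqref{Satdiag2b}.
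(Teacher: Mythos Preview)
Your approach is essentially the same as the paper's: count parameters directly from the definitions of $\mc{C}$ and $\mc{S}$ for (i), and for (ii) reduce to \cite{Ko1} while verifying that the full Satake condition \eqref{Satdiag2a} is never needed, only the weaker \eqref{Satdiag2b}. Two points of comparison are worth making. First, you reference \cite[Sec.~10]{Ko1} for the intersection property, but that section treats the specialization statement (it is what underlies Proposition \ref{prop:q-intersectionproperty}); the intersection property \eqref{q-intersectionproperty} itself is established in \cite[Secs.~5--7]{Ko1}, so your plan would send you to the wrong place. Second, you anticipate having to re-examine ``every lemma invoked from \cite{Ko1}'' and flag the quantum modified Serre relations as the delicate spot; the paper is considerably more surgical here, observing that within \cite[Secs.~5--7]{Ko1} the only place where \eqref{Satdiag2a} is used is \cite[Proof of Lem.~5.11, Step 1]{Ko1}, and that precisely there the weaker condition \eqref{Satdiag2b} is what is actually needed. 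This pinpointing obviates the broader re-examination you propose and in particular makes the ${\rm B}_2$-type case you worry about unnecessary to treat separately at the quantum level.
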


\begin{proof}
To prove statement (i), note that for each $j \in I^*$ there is a free parameter $c_j \in \K^\times$. The defining conditions of the sets $\mc{C}$ and $\mc{S}$ in \eqrefs{C-family}{S-family} imply that there can only be one additional free parameter for each $j \in I_{\rm diff} \cup I_{\rm nsf}$.

Statement (ii) follows from the arguments in \cite[Secs. 5--7]{Ko1}. 
In these sections the only place in \cite{Ko1} where \eqref{Satdiag2a} is used is in \cite[Proof of Lem.~5.11, Step 1]{Ko1}. 
However, it is clear that the necessary condition in there is precisely \eqref{Satdiag2b}.
\end{proof}

\begin{rmk} \mbox{} 
\begin{enumerate}
\item 
We will see in Section \ref{sec:dressing} that $B_{\bm c,\bm s}$ is equivalent to a coideal subalgebra with $|I_{\rm diff} \cup I_{\rm nsf}|$ free parameters. 
This can be accomplished by applying a suitable Hopf algebra automorphism, which carries the remaining $|I^*|$ degrees of freedom.
\item For $(X,\tau)=(\emptyset,\id)$ Baseilhac and Belliard in \cite[Prop.~2.1]{BsBe1} considered the algebra $B_{\bm c,\bm s}$ with, according to some combinatorial prescription, either $s_j^2\,c_j^{-1}$ a particular rational expression in $q_j$, $s_j=0$ or $s_j \in\K$ free, for each $j \in I_{\rm ns} = I$.
In this case the algebra $B_{\bm c,\bm s}$ is called a \emph{generalized q-Onsager algebra}. 
However it appears that for arbitrary $(X,\tau) \in \GSat(A)$ the same prescription for $s_j$ with $j \in I_{\rm ns}$ yields a coideal subalgebra $B_{\bm c,\bm s}$ satisfying \eqref{q-intersectionproperty}; in this case it is convenient to redefine $\mc{S}$ and $I_{\rm nsf}$.
In order to obtain an algebra inequivalent to any QP algebra, one must have $I_{\rm nsf} \ne I_{\rm ns}$. 
A unified theoretical treatment of QP algebras and such generalized $q$-Onsager algebras will be presented elsewhere. \hfill \rmkend
\end{enumerate}
\end{rmk}


\subsection{Untwisted affine Dynkin diagrams of classical Lie type} \label{sec:affine}

We now restrict to the case where $A$ is of affine type. 
Then the kernel of $A$ is one-dimensional; in other words, there exists a unique tuple $(a_i)_{i \in I}$ of coprime positive integers such that
\eq{ \label{eq:gj} \sum_{j \in I} a_{ij} a_j = 0 }
(see {\it e.g.}~\cite[Ch.~17]{Ca}). 
Hence \eqref{DAisAD} implies that 
\eq{ k_c :=\prod_{j \in I} k_j^{a_j} }
is central in $U_q(\mfg)$.
The quotient $U_q(\mfg)/(k_c-1) U_q(\mfg)$ is called the \emph{quantum loop algebra}.
A theory of coideal subalgebras of $U_q(\mfg)/(k_c-1) U_q(\mfg)$ can be developed (see \cite[Sec. 11]{Ko1} and references therein) which is very similar to the one for $U_q(\mfg)$.
Furthermore, all finite-dimensional representations $\rho: U_q(\mfg) \to \End(V)$, including $\RT_u$, satisfy $\rho(k_c) = \Id_V$ (level-zero representations) so that $\rho$ factors through a representation of $U_q(\mfg)/(k_c-1) U_q(\mfg)$.

We furthermore assume that $A$ is an untwisted affine Cartan matrix.
We choose $I= \{0,\ldots, n\}$ such that $n= {\rm rank}(A)$ and $I \backslash \{ 0\}$ indexes the corresponding Cartan matrix $A^{\rm fin} := (a_{ij})_{1 \le i,j \le n}$ of finite type; in particular $a_0=1$.
Finally, we assume $A^{\rm fin}$ is one of the classical Lie types ${\rm A}_{n \ge 1}$, ${\rm B}_{n \ge 1}$, ${\rm C}_{n \ge 1}$, ${\rm D}_{n \ge 3}$.
Accordingly, the associated (semi)simple Lie algebra $\mfgf := \langle e_i,f_i \rangle_{1 \le i \le n} \subset \mfg$ equals $\mfsl_N$ with $N:=n+1$, $\mfso_N$ with $N:=2n+1$, $\mfsp_N$ with $N:=2n$ or $\mfso_N$ with $N:=2n$, respectively.
In Table~\ref{DynkinDiagrams} we list the untwisted affine Cartan matrices $A$ of classical Lie type and our choice of the corresponding $(d_i)_{i \in I}$, see \eqref{DAisAD}.

\begin{table}[t]
\caption{Dynkin diagrams of untwisted affine Lie algebras of classical types, the generalized Cartan matrix $A$ and the positive rationals $d_i$. 
We refer the reader to \cite[Appendix]{Ca} for more details on the Cartan matrices of finite and affine type. } \label{DynkinDiagrams}
\begin{tabular}{lllm{34mm}} 
\hline
\multirow{2}{*}{Name} & \multirow{2}{*}{Diagram} & Cartan matrix & \multirow{2}{*}{$d_i$ ($0 \le i \le n$)} \\
& & $A=(a_{ij})_{0 \le i,j \le n}$ & \\
\hline
${\rm A}^{(1)}_1$  
& 
\begin{tikzpicture}[baseline=-0.25em,line width=0.7pt,scale=1]
\draw[double] (0,0) -- (0.5,0);
\filldraw[fill=white] (0,0) circle (.1) node[left=1pt]{\scriptsize $0$};
\filldraw[fill=white] (.5,0) circle (.1) node[right=1pt]{\scriptsize $1$};
\end{tikzpicture} 
& 
\begin{minipage}{65mm} \vspace{3pt} \footnotesize $\begin{pmatrix}
2& \!-2 \\
-2& 2
\end{pmatrix}$ \end{minipage}
&
$d_0=d_1=1$
\\
${\rm A}^{(1)}_{n\ge2}$
&
\begin{tikzpicture} [baseline=-0.25em,scale=0.8,line width=0.7pt]
\draw[thick,domain=0:135] plot ({cos(\x)},{sin(\x)});
\draw[thick,dashed,domain=135:360] plot ({cos(\x)},{sin(\x)});
\filldraw[fill=white] ({-sqrt(2)/2},{sqrt(2)/2}) circle (.1) node[left]{\scriptsize $n$};
\filldraw[fill=white] (0,1) circle (.1) node[above]{\scriptsize 0};
\filldraw[fill=white] ({sqrt(2)/2},{sqrt(2)/2}) circle (.1) node[right]{\scriptsize 1};
\filldraw[fill=white] (1,0) circle (.1) node[right]{\scriptsize 2};
\end{tikzpicture} 
&
\begin{minipage}{65mm} \footnotesize $ \begin{pmatrix}
2 & \!-1 & & & \!-1 \\
-1 & 2 & \!-1 & & \\
 & \!-1 & 2 & \cdot & \\
 & & \cdot & \cdot & \!-1 \\
-1 & & & \!-1 & 2
\end{pmatrix} $ \vspace{2pt} \end{minipage}
&
$d_0= \ldots = d_n = 1$
\\
\hline
${\rm B}^{(1)}_{n\ge3}$
&
\begin{tikzpicture}[baseline=-0.25em,line width=0.8pt,scale=0.9]
\draw[thick] (-.6,.3) -- (0,0) -- (-.4,-.3);
\draw[thick] (0,0) -- (0.5,0);
\draw[thick,dashed] (.5,0) -- (1.5,0);
\draw[thick] (1.5,0) -- (2,0);
\draw[double,->] (2,0) -- (2.4,0);
\filldraw[fill=white] (-.6,.3) circle (.1) node[left=1pt]{\scriptsize $0$};
\filldraw[fill=white] (-.4,-.3) circle (.1) node[left=1pt]{\scriptsize $1$};
\filldraw[fill=white] (0,0) circle (.1) node[above]{\scriptsize $2$};
\filldraw[fill=white] (.5,0) circle (.1) node[above]{\scriptsize $3$};
\filldraw[fill=white] (1.5,0) circle (.1) node[above]{\scriptsize $n\!-\!2$ };
\filldraw[fill=white] (2,0) circle (.1) node[below]{\scriptsize $ n\!-\!1$};
\filldraw[fill=white] (2.5,0) circle (.1) node[right=1pt]{\scriptsize $n$};
\end{tikzpicture}
&
\begin{minipage}{65mm} \vspace{3pt} \footnotesize $\begin{pmatrix}
2 &  & \!-1 & & & & \\
 & 2 & \!-1 & & & & \\
-1 & \!-1 & 2 & \!-1 & & & \\
 & & \!-1 & 2 & \cdot & & \\
 & & & \cdot & \cdot & \!-1 & \\
 & & & & \!-1 & 2 & \!-1 \\
 & & & & & \!-2 & 2\\ 
\end{pmatrix} $ \vspace{2pt}
\end{minipage}
& 
$d_0 = \ldots = d_{n-1} = 1$ \newline
$d_n = \frac{1}{2}$
\\
\hline
${\rm C}^{(1)}_{n\ge2}$
&
\begin{tikzpicture}[baseline=-0.25em,line width=0.8pt,scale=.9]
\draw[double,<-] (-.1,0) -- (-.5,0);
\draw[thick] (0,0) -- (.5,0);
\draw[thick,dashed] (.5,0) -- (1.5,0);
\draw[thick] (1.5,0) -- (2,0);
\draw[double,<-] (2.1,0) --  (2.5,0);
\filldraw[fill=white] (-.5,0) circle (.1) node[left=1pt]{\scriptsize $0$};
\filldraw[fill=white] (0,0) circle (.1) node[above]{\scriptsize $1$};
\filldraw[fill=white] (.5,0) circle (.1) node[above]{\scriptsize $2$};
\filldraw[fill=white] (1.5,0) circle (.1) node[above]{\scriptsize $n\!-\!2$};
\filldraw[fill=white] (2,0) circle (.1) node[below=1pt]{\scriptsize $\! n\!-\!1$};
\filldraw[fill=white] (2.5,0) circle (.1) node[right=1pt]{\scriptsize $n$};
\end{tikzpicture}
&
\begin{minipage}{65mm} \vspace{3pt} \footnotesize $ \begin{pmatrix}
2 & \!-1 &  & & &  \\
-2 & 2 & \!-1 & & & \\
 & -1 & 2 & \cdot & & \\
  &  & \cdot & \cdot & \!-1 & \\
 & & & \!-1 & 2 & \!-2 \\
 & & & & \!-1 & 2\\ 
\end{pmatrix} $ \vspace{2pt} \end{minipage}
&
$d_0 = d_n = 2$ \newline
$d_1 = \ldots = d_{n-1} = 1$
\\
\hline
${\rm D}^{(1)}_{n\ge4}$
&
\begin{tikzpicture}[baseline=-0.35em,line width=0.8pt,scale=.9]
\draw[thick] (-.6,.3) -- (0,0) -- (-.4,-.3);
\draw[thick] (0,0) -- (.5,0);
\draw[thick,dashed] (.5,0) -- (1.5,0);
\draw[thick] (1.5,0) -- (2,0);
\draw[thick] (2.4,.3) -- (2,0) -- (2.6,-.3);
\filldraw[fill=white] (-.6,.3) circle (.1) node[left=1pt]{\scriptsize $0$};
\filldraw[fill=white] (-.4,-.3) circle (.1) node[left=1pt]{\scriptsize $1$};
\filldraw[fill=white] (0,0) circle (.1) node[above]{\scriptsize $2$};
\filldraw[fill=white] (0.5,0) circle (.1) node[above]{\scriptsize $3$};
\filldraw[fill=white] (1.5,0) circle (.1) node[above]{\scriptsize $n\!-\!3$};
\filldraw[fill=white] (2,0) circle (.1) node[below=1pt]{\scriptsize $\!\! n\!-\!2$};
\filldraw[fill=white] (2.4,.3) circle (.1) node[right=1pt]{\scriptsize $n\!-\!1$};
\filldraw[fill=white] (2.6,-.3) circle (.1) node[right=1pt]{\scriptsize $n$};
\end{tikzpicture} 
&
\begin{minipage}{65mm} \vspace{3pt} \footnotesize $ \begin{pmatrix}
2 &  & \!-1 & & & & & \\
 & 2 & \!-1 & & & & & \\
-1 & \!-1 & 2 & \!-1 & & & & \\
 & & \!-1 & 2 & \cdot & & & \\
 & & & \cdot & \cdot & \!-1 & & \\
 & & & & \!-1 & 2 & \!-1 & \!-1 \\
 & & & & & \!-1 & 2 &  \\
 & & & & & \!-1 & & 2\\ 
\end{pmatrix} $ \vspace{2pt}
\end{minipage}
&
$d_0= \ldots = d_n = 1$
\\
\hline
\end{tabular} 

\end{table}

Define the following elements of $\Aut(A)$:
\eq{ \label{AutAelts}
\begin{aligned}
\psi &:= \casesl{l}{\prod_{i=1}^{\lfloor \frac{n}{2} \rfloor} (i,n+1-i) \\[1pt] \id} && \casesm{l}{\\[-6pt] \text{for } {\rm A}_n^{(1)}, \\  \text{otherwise},} 
\\
\psi' &:= \textstyle\prod_{i=0}^{\lfloor \frac{n-1}{2} \rfloor} (i,n-i) && \text{for } {\rm A}_n^{(1)}, \\
\rho &:= (01\ldots n) && \text{for } {\rm A}_n^{(1)}, 
\\
\pi &:= \casesl{l}{\prod_{i=0}^{\frac{n-1}{2}}(i,i+\frac{n+1}{2}) \\[3pt] \prod_{i=0}^{\lfloor \frac{n-1}{2} \rfloor} (i,n-i)} && \casesm{l}{\\[-13pt] \text{for } {\rm A}_n^{(1)} \text{ if } n \text{ is odd,} \\[3pt] \text{for } {\rm C}_n^{(1)} \text{ and } {\rm D}_n^{(1)},} 
\\
\phi_1 &:= (01) && \text{for } {\rm B}_n^{(1)} \text{ and } {\rm D}_n^{(1)}, \\
\phi_2 &:= (n\!-\!1 \, n) && \text{for } {\rm D}_n^{(1)}, \\
\phi_{12} &:= (01)(n\!-\!1 \, n) && \text{for } {\rm D}_n^{(1)}.
\end{aligned}
}

In Table~\ref{AutA} we have summarized relevant properties of $\Aut(A)$.

\begin{table}[ht]
\caption{
Properties of $\Aut(A)$. 
For $A$ of type ${\rm A}^{(1)}_n$ or ${\rm D}^{(1)}_n$ we first list our chosen generators of $\Aut(A)$ and subsequently other relevant elements. 
For $A$ of type ${\rm B}^{(1)}_n$ or ${\rm C}^{(1)}_n$ the group $\Aut(A)$ is of order 2, so there is only one choice of generator (which is automatically an involution).
We denote by $\Cyc_N$, $\Dih_N$ and $\Sym_N$ the cyclic, dihedral and symmetric groups of order $N$, $2N$ and $N!$, respectively. 
}
\label{AutA}
{
\begin{tabular}{c|c|ccc}
Type & $\Aut(A)$ & \multicolumn{2}{c}{Important elements} & Relations \\
\hline 
& &
$\rho$ &
\begin{tikzpicture}[scale=.7,baseline=-3pt]
\draw[white] (0,1.5) -- (0,-1.5);
\draw[thick,domain=-45:90] plot ({cos(\x)},{sin(\x)});
\draw[thick,dashed,domain=90:315] plot ({cos(\x)},{sin(\x)});
\filldraw[fill=white] (0,1) circle (.1) node[above=-2pt]{\scriptsize $N\!\!-\!\!1$};
\filldraw[fill=white] ({sqrt(2)/2},{sqrt(2)/2}) circle (.1) node[right]{\scriptsize 0};
\filldraw[fill=white] (1,0) circle (.1) node[right]{\scriptsize 1};
\filldraw[fill=white] ({sqrt(2)/2},{-sqrt(2)/2}) circle (.1) node[right]{\scriptsize 2};
\draw[->,gray] ({-sqrt(2)/2+.1},{sqrt(2)/2}) .. controls ({-3*sqrt(2)/16},{3*(2+sqrt(2))/16}) .. ({-sqrt(2)/20},{1-sqrt(2)/20});
\draw[->,gray] ({sqrt(2)/20},{1-sqrt(2)/20}) .. controls ({3*sqrt(2)/16},{3*(2+sqrt(2))/16}) .. ({sqrt(2)/2-.1},{sqrt(2)/2}) ;
\draw[->,gray] ({sqrt(2)/2},{sqrt(2)/2-.1}) .. controls ({3*(2+sqrt(2))/16},{3*sqrt(2)/16}) .. ({1-sqrt(2)/20},{sqrt(2)/20});
\draw[->,gray] ({1-sqrt(2)/20},{-sqrt(2)/20}) .. controls ({3*(2+sqrt(2))/16},{-3*sqrt(2)/16}) .. ({sqrt(2)/2},{-sqrt(2)/2+.1});
\draw[->,gray] ({sqrt(2)/2-.1},{-sqrt(2)/2}) .. controls ({3*sqrt(2)/16},{-3*(2+sqrt(2))/16}) .. ({sqrt(2)/20},{-1+sqrt(2)/20}) ;
\end{tikzpicture}
& \multirow{2}{*}[-15pt]{
$ \begin{array}{c}
\rho^N = \psi^2 = \id \\
(\psi \cdot \rho)^2 = \id \\
\end{array} $
}
\\[-.25em]
\multirow{2}{*}[-20pt]{${\rm A}^{(1)}_{n \geq 1}$} & 
\multirow{2}{*}[-20pt]{$\begin{array}{ll} \Dih_N & \hspace{-2mm} \text{if } n \ge 2 \\ \Cyc_2 & \hspace{-2mm} \text{if } n=1 \end{array}$} 
& $\psi$ 
& \hspace{-8pt} \begin{tikzpicture}[scale=.7,baseline=-3pt]
\draw[thick] (0,.4) -- (-.5,0) -- (0,-.4);
\draw[thick,dashed] (0,.4) -- (1,.4);
\draw[thick] (1,.4) -- (1.5,0) -- (1,-.4);
\draw[thick,dashed] (0,-.4) -- (1,-.4);
\draw[<->,gray] (0,.3) -- (0,-.3);
\draw[<->,gray] (1,.3) -- (1,-.3);
\filldraw[fill=white] (-.5,0) circle (.1) node[left=-1pt]{\scriptsize 0};
\filldraw[fill=white] (0,.4) circle (.1) node[above=-1pt]{\scriptsize 1};
\filldraw[fill=white] (0,-.4) circle (.1) node[below=0pt]{\scriptsize $\hspace{-3pt} N\!\!-\!\!1$};
\filldraw[fill=white] (1,.4) circle (.1) node[above=-2pt]{\scriptsize $\tfrac{N}{2}\!\!-\!\!1$};
\filldraw[fill=white] (1,-.4) circle (.1) node[below=-1pt]{\scriptsize $\tfrac{N}{2}\!\!+\!\!1 \hspace{-3pt}$};
\filldraw[fill=white] (1.5,0) circle (.1) node[right=-1pt]{\scriptsize $\tfrac{N}{2}$};
\end{tikzpicture} 
\;
\begin{tikzpicture}[scale=.7, baseline=-3pt]
\draw[thick] (0,.4) -- (-.5,0) -- (0,-.4);
\draw[thick,dashed] (0,.4) -- (1,.4);
\draw[thick,domain=270:450] plot({1+.4*cos(\x)},{.4*sin(\x)});
\draw[thick,dashed] (0,-.4) -- (1,-.4);
\draw[<->,gray] (0,.3) -- (0,-.3);
\draw[<->,gray] (1,.3) -- (1,-.3);
\filldraw[fill=white] (-.5,0) circle (.1) node[left=-1pt]{\scriptsize 0};
\filldraw[fill=white] (0,.4) circle (.1) node[above=-1pt]{\scriptsize 1};
\filldraw[fill=white] (0,-.4) circle (.1) node[below=0pt]{\scriptsize $\hspace{-3pt} N\!\!-\!\!1$};
\filldraw[fill=white] (1,.4) circle (.1) node[above=-2pt]{\scriptsize $\tfrac{N\!-\!1}{2}$};
\filldraw[fill=white] (1,-.4) circle (.1) node[below=-1pt]{\scriptsize $\frac{N\!+\!1}{2} \hspace{-3pt}$};
\end{tikzpicture} 
& \\
\cline{3-5}
& &  & & \\[-1em]
& & $\psi'$ 
& \hspace{17pt} \begin{tikzpicture}[scale=.7, baseline=-4pt]
\draw[thick,domain=90:270] plot({.4*cos(\x)},{.4*sin(\x)});
\draw[thick,dashed] (0,.4) -- (1,.4);
\draw[thick,domain=270:450] plot({1+.4*cos(\x)},{.4*sin(\x)});
\draw[thick,dashed] (0,-.4) -- (1,-.4);
\draw[<->,gray] (0,.3) -- (0,-.3);
\draw[<->,gray] (1,.3) -- (1,-.3);
\filldraw[fill=white] (0,.4) circle (.1) node[above=-1pt]{\scriptsize $0$};
\filldraw[fill=white] (0,-.4) circle (.1) node[below=0pt]{\scriptsize $\hspace{-2pt} N\!\!-\!\!1$};
\filldraw[fill=white] (1,.4) circle (.1) node[above=-2pt]{\scriptsize $\tfrac{N}{2}\!\!-\!\!1 \hspace{-2pt}$};
\filldraw[fill=white] (1,-.4) circle (.1) node[below=-1pt]{\scriptsize $\tfrac{N}{2}$};
\end{tikzpicture} 
\qu ($N$ even) 
& $\psi'= \psi \cdot \rho$  \\[-.25em]
& & $\pi$ & 
\hspace{-6pt} \begin{tikzpicture} [baseline=-0.25em,scale=.7,line width=0.7pt]
\draw[thick,domain=0:90] plot ({cos(\x)},{sin(\x)});
\draw[thick,dashed,domain=90:180] plot ({cos(\x)},{sin(\x)});
\draw[thick,domain=180:270] plot ({cos(\x)},{sin(\x)});
\draw[thick,dashed,domain=270:360] plot ({cos(\x)},{sin(\x)});
\filldraw[fill=white] (0,1) circle (.1) node[above=-1pt] {\scriptsize $N\!\!-\!\!1$};
\filldraw[fill=white] ({sqrt(2)/2},{sqrt(2)/2}) circle (.1) node[right]{\scriptsize 0};
\filldraw[fill=white] (1,0) circle (.1) node[right]{\scriptsize 1};
\filldraw[fill=white] (0,-1) circle (.1) node[below=-1pt]{\scriptsize $\tfrac{N}{2}\!\!-\!\!1$};
\filldraw[fill=white] ({-sqrt(2)/2},{-sqrt(2)/2}) circle (.1) node[left]{\scriptsize $\tfrac{N}{2}$};
\filldraw[fill=white] (-1,0) circle (.1) node[left]{\scriptsize $\tfrac{N}{2}\!\!+\!\!1$};
\draw[<->,gray] (0,.9) -- (0,-.9);
\draw[<->,gray] ({.9*sqrt(2)/2},{.9*sqrt(2)/2}) -- ({-.9*sqrt(2)/2},{-.9*sqrt(2)/2});
\draw[<->,gray] (.9,0) -- (-.9,0);
\end{tikzpicture} 
\hspace{1pt} ($N$ even)  
& $\pi=\rho^{N/2}$ \\
\hline
${\rm B}^{(1)}_{n \geq 3}$ & 
$\Cyc_2$ & 
$\flL$ &
\begin{tikzpicture}[scale=.7, baseline=-3pt]
\draw[white] (0,.9) -- (0,-.8);
\draw[thick] (-.5,.3) -- (0,0) -- (-.5,-.3);
\draw[thick,dashed] (0,0) -- (1,0);
\draw[double,->] (1,0) --  (1.4,0);
\draw[<->,gray] (-.5,.2) -- (-.5,-.2);
\filldraw[fill=white] (-.5,.3) circle (.1) node[left=-1pt]{\scriptsize $0$};
\filldraw[fill=white] (-.5,-.3) circle (.1) node[left=-1pt]{\scriptsize $1$};
\filldraw[fill=white] (0,0) circle (.1) node[above=-1pt]{\scriptsize $2$};
\filldraw[fill=white] (1,0) circle (.1) node[below=-1pt]{\scriptsize $\hspace{-4pt} n\!\!-\!\!1$};
\filldraw[fill=white] (1.5,0) circle (.1) node[above=-1pt]{\scriptsize $n$};
\end{tikzpicture} 
&
$\phi_1^2 = \id$
\\
\hline
${\rm C}^{(1)}_{n \geq 2}$ 
& $\Cyc_2$
& $\pi$
& 
\begin{tikzpicture}[scale=.7, baseline=-4pt]
\draw[white] (0,.9)--(0,-.8);
\draw[thick] (0,.4) -- (-.5,0) -- (0,-.4);
\draw[thick,dashed] (0,.4) -- (1,.4);
\draw[thick,dashed] (0,-.4) -- (1,-.4);
\draw[double,<-] (1.1,-.4) -- (1.5,-.4);
\draw[double,<-] (1.1,.4) -- (1.5,.4);
\draw[<->,gray] (0,.3) -- (0,-.3);
\draw[<->,gray] (1,.3) -- (1,-.3);
\draw[<->,gray] (1.5,.3) -- (1.5,-.3);
\filldraw[fill=white] (-.5,0) circle (.1) node[left=-1pt]{\scriptsize $\tfrac{n}{2}$};
\filldraw[fill=white] (0,.4) circle (.1) node[above=-2pt]{\scriptsize $\tfrac{n}{2}\!\!-\!\!1$};
\filldraw[fill=white] (0,-.4) circle (.1) node[below=-1pt]{\scriptsize $\tfrac{n}{2}\!\!+\!\!1$};
\filldraw[fill=white] (1,.4) circle (.1) node[above=-1pt]{\scriptsize $1$};
\filldraw[fill=white] (1,-.4) circle (.1) node[below=-1pt]{\scriptsize $\hspace{-5pt} n\!\!-\!\!1 $};
\filldraw[fill=white] (1.5,.4) circle (.1) node[right=-1pt]{\scriptsize $0$};
\filldraw[fill=white] (1.5,-.4) circle (.1) node[right=-1pt]{\scriptsize $n$};
\end{tikzpicture} \qu
\begin{tikzpicture}[scale=75/100, baseline=-4pt]
\draw[thick,domain=90:270] plot({.4*cos(\x)},{.4*sin(\x)});
\draw[thick,dashed] (0,.4) -- (1,.4);
\draw[thick,dashed] (0,-.4) -- (1,-.4);
\draw[double,<-] (1.1,.4) -- (1.5,.4);
\draw[double,<-] (1.1,-.4) -- (1.5,-.4);
\draw[<->,gray] (0,.3) -- (0,-.3);
\draw[<->,gray] (1,.3) -- (1,-.3);
\draw[<->,gray] (1.5,.3) -- (1.5,-.3);
\filldraw[fill=white] (0,.4) circle (.1) node[above=-1pt]{\scriptsize $ \tfrac{n\!-\!1}{2}$};
\filldraw[fill=white] (0,-.4) circle (.1) node[below=-1pt]{\scriptsize $ \tfrac{n\!+\!1}{2}$};
\filldraw[fill=white] (1,.4) circle (.1) node[above=-1pt]{\scriptsize $1$};
\filldraw[fill=white] (1,-.4) circle (.1) node[below=-1pt]{\scriptsize $\hspace{-5pt} n\!\!-\!\!1 $};
\filldraw[fill=white] (1.5,.4) circle (.1) node[right=-1pt]{\scriptsize $0$};
\filldraw[fill=white] (1.5,-.4) circle (.1) node[right=-1pt]{\scriptsize $n$};
\end{tikzpicture} 
\hspace{3pt}
& $\pi^2 = \id$ \\[2pt]
\hline
\multirow{4}{*}[-3pt]{${\rm D}^{(1)}_{n > 4}$} & 
\multirow{4}{*}[-3pt]{$\Dih_4$ } 
& $\pi$
&
\begin{tikzpicture}[scale=.7,baseline=-4pt]
\draw[white] (0,1)--(0,-.9);
\draw[thick] (0,.4) -- (-.5,0)-- (0,-.4);
\draw[thick,dashed] (0,.4) -- (1,.4);
\draw[thick,dashed] (0,-.4) -- (1,-.4);
\draw[thick] (1.6,.7) -- (1,.4) -- (1.4,.2);
\draw[thick] (1.6,-.1) -- (1,-.4) -- (1.4,-.6);
\draw[<->,gray] (0,.3) -- (0,-.3);
\draw[<->,gray] (1,.3) -- (1,-.3);
\draw[<->,gray] (1.4,.1) -- (1.4,-.5);
\draw[<->,gray] (1.6,.6) -- (1.6,0);
\filldraw[fill=white] (-.5,0) circle (.1) node[left=-1pt]{\scriptsize $\tfrac{n}{2}$};
\filldraw[fill=white] (0,.4) circle (.1) node[above=-2pt]{\scriptsize $\tfrac{n}{2}\!\!-\!\!1$};
\filldraw[fill=white] (0,-.4) circle (.1) node[below=-1pt]{\scriptsize $\tfrac{n}{2}\!\!+\!\!1$};
\filldraw[fill=white] (1,.4) circle (.1) node[above=-1pt]{\scriptsize $2$};
\filldraw[fill=white] (1,-.4) circle (.1) node[below=-1pt]{\scriptsize $\hspace{-5pt} n\!\!-\!\!2$};
\filldraw[fill=white] (1.4,.2) circle (.1) node[below=-3pt]{\hspace{-11pt} \scriptsize 1};
\filldraw[fill=white] (1.4,-.6) circle (.1) node[right=-1pt]{\scriptsize $n\!\!-\!\!1$};
\filldraw[fill=white] (1.6,.7) circle (.1) node[right=-1pt]{\scriptsize $0$};
\filldraw[fill=white] (1.6,-.1) circle (.1) node[right=-1pt]{\scriptsize $n$};
\end{tikzpicture} \hspace{0pt}
\begin{tikzpicture}[scale=.7,baseline=-4pt]
\draw[thick,domain=90:270] plot({.4*cos(\x)},{.4*sin(\x)});
\draw[thick,dashed] (0,.4) -- (1,.4);
\draw[thick,dashed] (0,-.4) -- (1,-.4);
\draw[thick] (1.6,-.1) -- (1,-.4) -- (1.4,-.6);
\draw[thick] (1.6,.7) -- (1,.4) -- (1.4,.2);
\draw[<->,gray] (0,.3) -- (0,-.3);
\draw[<->,gray] (1,.3) -- (1,-.3);
\draw[<->,gray] (1.4,.1) -- (1.4,-.5);
\draw[<->,gray] (1.6,.6) -- (1.6,0);
\filldraw[fill=white] (0,.4) circle (.1) node[above=-1pt]{\scriptsize $\tfrac{n\!-\!1}{2}$};
\filldraw[fill=white] (0,-.4) circle (.1) node[below=-1pt]{\scriptsize $\tfrac{n\!+\!1}{2}$};
\filldraw[fill=white] (1,.4) circle (.1) node[above=-1pt]{\scriptsize $2$};
\filldraw[fill=white] (1,-.4) circle (.1) node[below=-1pt]{\scriptsize $\hspace{-5pt} n\!\!-\!\!2 $};
\filldraw[fill=white] (1.4,.2) circle (.1) node[below=-3pt]{\hspace{-11pt} \scriptsize 1};
\filldraw[fill=white] (1.4,-.6) circle (.1) node[right=-1pt]{\scriptsize $n\!\!-\!\!1$};
\filldraw[fill=white] (1.6,.7) circle (.1) node[right=-1pt]{\scriptsize $0$};
\filldraw[fill=white] (1.6,-.1) circle (.1) node[right=-1pt]{\scriptsize $n$};
\end{tikzpicture} 
&
\multirow{2}{*}[0pt]{
$ \begin{array}{c}
\pi^2 = \phi_1^2 = \id \\
(\pi \cdot \phi_1)^4 = \id \\
\end{array} $
}\\
&& $\flL$ &
\begin{tikzpicture}[scale=.7,baseline=-4pt]
\draw[white] (0,.6) -- (0,-.6);
\draw[thick] (-.5,.3) -- (0,0) -- (-.5,-.3);
\draw[thick,dashed] (0,0) -- (1,0);
\draw[thick] (1.4,.3) -- (1,0) -- (1.6,-.3);
\draw[<->,gray] (-.5,.2) -- (-.5,-.2);
\filldraw[fill=white] (-.5,.3) circle (.1) node[left=-1pt]{\scriptsize $0$};
\filldraw[fill=white] (-.5,-.3) circle (.1) node[left=-1pt]{\scriptsize $1$};
\filldraw[fill=white] (0,0) circle (.1) node[above=-1pt]{\scriptsize $2$};
\filldraw[fill=white] (1,0) circle (.1) node[below=-1pt]{\scriptsize $\hspace{-4pt} n\!\!-\!\!2$};
\filldraw[fill=white] (1.4,.3) circle (.1) node[right=-1pt]{\scriptsize $n\!\!-\!\!1$};
\filldraw[fill=white] (1.6,-.3) circle (.1) node[right=-1pt]{\scriptsize $n$};
\end{tikzpicture} 
& \\ 
\cline{3-5}
&&
$\flR$ &
\hspace{-5pt}
\begin{tikzpicture}[scale=.7,baseline=-4pt]
\draw[white] (0,.6) -- (0,-.6);
\draw[thick,dashed] (0,0) -- (1,0);
\draw[thick] (-.6,.3) -- (0,0) -- (-.4,-.3);
\draw[thick] (1.5,.3) -- (1,0) -- (1.5,-.3);
\draw[<->,gray] (1.5,.2) -- (1.5,-.2);
\filldraw[fill=white] (-.6,.3) circle (.1) node[left=-1pt]{\scriptsize $0$};
\filldraw[fill=white] (-.4,-.3) circle (.1) node[left=-1pt]{\scriptsize $1$};
\filldraw[fill=white] (0,0) circle (.1) node[above=-1pt]{\scriptsize $2$};
\filldraw[fill=white] (1,0) circle (.1) node[below=-1pt]{\scriptsize $\hspace{-5pt} n\!\!-\!\!2$};
\filldraw[fill=white] (1.5,.3) circle (.1) node[right=-1pt]{\scriptsize $n\!\!-\!\!1$};
\filldraw[fill=white] (1.5,-.3) circle (.1) node[right=-1pt]{\scriptsize $n$};
\end{tikzpicture} 
& $\flR = \pi \cdot \flL \cdot \pi$ \\
& & $\flLR$ &
\begin{tikzpicture}[scale=.7,baseline=-4pt]
\draw[thick] (-.5,.3) -- (0,0) -- (-.5,-.3);
\draw[thick,dashed] (0,0) -- (1,0);
\draw[thick] (1.5,.3) -- (1,0) -- (1.5,-.3);
\draw[<->,gray] (-.5,.2) -- (-.5,-.2);
\draw[<->,gray] (1.5,.2) -- (1.5,-.2);
\filldraw[fill=white] (-.5,.3) circle (.1) node[left=-1pt]{\scriptsize $0$};
\filldraw[fill=white] (-.5,-.3) circle (.1) node[left=-1pt]{\scriptsize $1$};
\filldraw[fill=white] (0,0) circle (.1) node[above=-1pt]{\scriptsize $2$};
\filldraw[fill=white] (1,0) circle (.1) node[below=-1pt]{\scriptsize $\hspace{-5pt} n\!\!-\!\!2$};
\filldraw[fill=white] (1.5,.3) circle (.1) node[right=-1pt]{\scriptsize $n\!\!-\!\!1$};
\filldraw[fill=white] (1.5,-.3) circle (.1) node[right=-1pt]{\scriptsize $n$};
\end{tikzpicture} 
& $\flLR = (\pi \cdot \flL)^2$ \\ 
\hline &&&& \\[-11pt]
\multirow{2}{*}[-6pt]{${\rm D}^{(1)}_4$} & 
\multirow{2}{*}[-6pt]{$\Sym_4$ } & 
\multicolumn{3}{c}{As for $n>4$, and additionally} \\[1pt]
& & (14) & 
\begin{tikzpicture}[scale=.7,baseline=-3pt]
\draw[thick] (-.6,.3) -- (0,0) -- (.4,.3);
\draw[thick] (-.4,-.3) -- (0,0) -- (.6,-.3);
\draw[<->,gray] (-.3,-.3) -- (.5,-.3);
\filldraw[fill=white] (-.6,.3) circle (.1) node[left]{\small 0};
\filldraw[fill=white] (.4,.3) circle (.1) node[right]{\small 3};
\filldraw[fill=white] (0,0) circle (.1) node[above]{\small 2};
\filldraw[fill=white] (-.4,-.3) circle (.1) node[left]{\small 1};
\filldraw[fill=white] (.6,-.3) circle (.1) node[right]{\small 4};
\end{tikzpicture} 
&
\hspace{-7pt} $\begin{array}{c}
(14)^2 = \id, ((14) \cdot \pi)^4 = \id \\
 ((14) \cdot \flL)^3=\id
\end{array}$\hspace{-7pt} 
\\
\hline
\end{tabular}
}
\end{table}



\section{The vector representation} \label{sec:natrep}

In this section, adhering closely to \cite[Sec.~8.4.1]{KlSg} we give the description of the vector (also called natural or first fundamental) representation\footnote{Our choice of the vector representation agrees with \cite[Sec.~8.4.1]{KlSg} up to the Chevalley automorphism; for $U_q(\wh{\mfsl}_N)$ it agrees with \cite[Lem.~3.4]{FrMn}.} $\RT$ of $U_q(\mfgf)$ on $\K^N$, when $\mfg^{\rm fin}$ is $\mfsl_N$, $\mfso_N$ and $\mfsp_N$. 
We then turn this representation into an evaluation representation $\RT_u$ of $U_q(\mfg)$ by means of affinization in the homogeneous grading defined by $x_i \mapsto \del_{i0}$, $y_i \mapsto -\del_{i0}$ (see \cite{KKMMNN}). 
However, we first need to introduce some additional notation that we will make use of throughout the remaining parts of the paper.


\subsection{Notation}

In will be convenient to label the standard basis of $\K^N$ by
\eq{   \renewcommand{\arraystretch}{1.2}
\langle N \rangle = 
\begin{cases}
\{1,2,\ldots,N\} & \text{for } \mfsl_N, \\
\{-n,\ldots,-1,1,\ldots,n\} & \text{for $\mfso_{2n}$ and $\mfsp_{2n}$}, \\
\{-n,\ldots,-1,0,1,\ldots,n\} & \text{for } \mfso_{2n+1} .
\end{cases}
}
Define $\vartheta, \vartheta_i, \nu_i \in \tfrac{1}{2} \Z$ for $i \in \langle N \rangle$ by
\eq{
\label{QQv} 
\vartheta = \casesl{l}{ (-1)^n , \\ 1, \\ -1 ,} \qq
\vartheta_i = \casesl{l}{ (-1)^i , \\ 1, \\ \sgn(i) ,} \qq
\nu_i = \casesl{l}{ i-\tfrac{N+1}{2} \\ i-\sgn(i)(n-\ka) \\ i } \qq
\casesm{l}{ \text{for } \mfsl_N , \\ \text{for } \mfso_N, \\ \text{for } \mfsp_N ,}
}
where
\eq{
\ka = \frac{N}2-\vartheta. \label{kappa}
}
We also note that for orthogonal and symplectic cases our tuple $(-\nu_{-n},\ldots,-\nu_{n})$ corresponds to the tuple $(\rho_{1},\ldots,\rho_N)$ in \cite[\S1]{FRT} and in \cite[Sec.~8.4.2]{KlSg}.

Let $E_{ij}$ denote the usual $N \times N$ elementary matrices, {\it i.e.}~$(E_{ij})_{kl}=\delta_{ik}\delta_{jl}$ with $i,j,k,l \in \langle N \rangle$. 
Given $m_i \in \Z$ for all $i \in \langle N \rangle$, we write 
\[ 
q^{\sum_{i \in \langle N \rangle} m_i E_{ii}} = \sum_{i \in \langle N \rangle} q^{m_i} E_{ii}. 
\]
We denote by $\Id=\sum_{i\in\lan N \ran}E_{ii}$ the identity matrix in $\End(\K^N)$. For orthogonal and symplectic cases, we additionally set 
\eq{
F_{ij} = E_{ij} - \vartheta_{i}\vartheta_{j} E_{-j,-i},\qu\text{so that}\qu F_{ij}+\vartheta_{i}\vartheta_{j} F_{-i,-j}=0
}
and, for ease of notation, we will write $\bar\jmath= n+1-j$ for any $j \in I$. 


\subsection{The representations $\RT$ and $\RT_u$}

The vector representation $\RT : U_q(\mfgf) \to \End(\K^N)$ is the $N$-dimensional irreducible highest-weight representation of $U_q(\mfgf)$ with the highest weight $(q,1,\ldots,1)\in\K^{n}$ and a highest weight vector $(1,0,\ldots,0)^\t\in\K^N$ defined as follows. For $\mfsl_N$ it is
\eqa{
& x_i \mapsto E_{i,i+1} , && y_i \mapsto E_{i+1,i} , &&  k_i \mapsto q^{E_{ii}-E_{i+1,i+1}} &&  \text{for }1\le i \le n. \label{rep:A}
\intertext{For $\mfso_N$ and $\mfsp_N$, upon defining $c=[2]_{q^{1/2}}^{1/2}$, it is}
& x_i \mapsto F_{-\bar\imath,-\bar\imath+1}, &&  y_i \mapsto F_{-\bar\imath+1,-\bar\imath}, && k_i \mapsto q^{F_{\bar\imath-1,\bar\imath-1}-F_{\bar\imath,\bar\imath}} && \text{for } 1\le i< n \label{rep:BCDi} 
}
and
\eqa{
& x_n \mapsto \casesl{l}{ c(E_{-1,0} {-} q^{-\frac12} E_{01}), \\ \frac12 F_{-1,1}, \\ F_{-2,1},} \hspace{-3mm} &&
y_n \mapsto \casesl{l}{ c(E_{0,-1} {-} q^{\frac12} E_{10}), \\ \frac12 F_{1,-1}, \\ F_{1,-2}, } \hspace{-3mm} &&  
k_n \mapsto \casesl{l}{ q^{-F_{11}}  \\ q^{-2 F_{11}}\\ q^{-F_{11}-F_{22}} } && 
\casesm{l}{ \text{for }\mfso_{2n+1}, \\ \text{for }\mfsp_{2n}, \ \\ \text{for }\mfso_{2n}.} \label{rep:BCDn}
\intertext{\qu Consider the natural inclusion of Hopf algebras $\iota : U_q(\mfgf) \to U_q({\mfg})$. 
Let $u\in \K^\times$. We extend the highest-weight representation $\RT : U_q(\mfgf) \to \End(\K^N)$ to a pseudo-highest-weight representation \mbox{$\RT_u : U_q({\mfg}) \to \End(\K^N)$}, depending polynomially on $u^{\pm1}$. 
We fix $\RT_u(\iota(a)) :=\RT(a)$ for all $a \in U_q(\mfgf)$ and for the remaining generators $x_0$, $y_0$ and $k_0$ we define}
& x_0 \mapsto \casesl{l}{ u\, E_{N1}, \\ u F_{n-1,-n}, \\ \tfrac12 u F_{n,-n}, } &&
y_0 \mapsto \casesl{l}{ u^{-1} E_{1N} ,\\ u^{-1} F_{-n,n-1}, \\ \tfrac12 u^{-1} F_{-n,n}, } &&  
k_0 \mapsto \casesl{l}{ q^{E_{NN}-E_{11}} \\ q^{F_{n-1,n-1}+F_{nn}} \\ q^{2 F_{nn}} } \hspace{-2mm} && 
\casesm{l}{ \text{for }\wh\mfsl_{N} \\ \text{for }\wh\mfso_{N}, \\ \text{for }\wh\mfsp_{N}.} \label{affrep}
}
It is a direct computation to verify that the assignments above provide a representation of $U_q({\mfg})$.


\subsection{Self-duality of representations $\RT$ and $\RT_u$} \label{sec:T-dual}

The representations $\RT$ and $\RT_u$ exhibit an additional property called (pseudo-)self-duality (see \cite[Sec.~8.4.2]{KlSg}) unless $A$ is of type ${\rm A}^{(1)}_n$ with $n>1$.  
Here we formulate a more general property which also includes type ${\rm A}^{(1)}_{n>1}$. 

\medskip

Let $\mc H$ be a Hopf algebra with antipode $S$ and let $\varrho$ be a representation of $\mc H$ on a finite-dimensional vector space $V$ over a field $F$.
The \emph{dual representation} $\varrho^*$ of $\mc H$ on $V^* := \Hom_F(V,F)$ is defined~by
\eq{ \label{dual:1}
(\varrho^*(a)\, \phi) (v) = \phi(\varrho(S^{-1}(a))\, v) \qu\text{for all } a\in \mc H,\; v\in V,\; \phi \in V^* .
}

If $\varrho^*$ is equivalent to $\varrho$,  i.e.~if there exists an invertible $\phi \in \Hom(V,V^*)$ such that for all $a \in \mc H$ we have $\varrho^*(a) \circ \phi = \phi \circ \varrho(a)$, then $\varrho$ is called a {\it self-dual} representation. 
The following definition is a generalization of this notion.

\begin{defn}
Let $\mc H$ be a Hopf algebra and $\si \in \Aut_{\rm Hopf}(\mc H)$ an involution.
Let $\varrho$ be a representation of $\mc H$ on a finite-dimensional vector space $V$ over a field $F$.
Then $\varrho$ is called \emph{$\si$-skewed self-dual} if $\varrho^*$ is equivalent to $\varrho \circ \si$. \hfill \defnend
\end{defn}

Recall the involution $\psi$ from \eqref{AutAelts}; note that $\psi$ restricts to an element of $\Aut(A^{\rm fin})$. 
We will show below that $\RT$ and $\RT_u$ are $\psi$-skewed self-dual representations. Let us identify the vector spaces $(\K^N)^*$ and $\K^N$ in the natural way. 
Denote the usual transposition in $\End(\K^N)$ by $\t$, {\it viz.}~$E_{ij}^\t = E_{ji}$ and write $\RT^\t(a) = (\RT(a))^\t$ for all $a \in U_q(\mfgf)$ and $\RT_u^\t(a) = (\RT_u(a))^\t$ for all $a \in U_q({\mfg})$.
Then the $\psi$-skewed self-duality of $\RT$ is equivalent to the statement that there exists $C\in\GL(\K^N)$ such that
\eq{ \label{dual:2} 
C \, \RT( \psi(a)) = \RT^{\,\t}(S(a))\, C \qu\text{for all }a\in U_q(\mfgf). 
}
For the representation $\RT_u$ we need to allow a shift in $u$. 
It follows that $\RT_{u}$ is $\psi$-skewed self-dual if and only if there exists $C\in\GL(\K^N)$ and $\wt q \in \K^\times$ such that
\eq{ \label{dual:3} 
C \, \RT_{u}(\psi(a)) =  \RT^{\,\t}_{\wt q^{\, 2} u}(S(a))\, C \qu\text{for all }a\in U_q({\mfg}).
}

We now give explicit expressions for $C$ and $\wt q$, using which we can derive \eqrefs{dual:2}{dual:3} (in other words, the $\psi$-skewed self-duality properties):
\eq{ \label{def:Cwtq}
C = \begin{cases} 
\displaystyle \sum_{i \in \langle N \rangle} \vartheta_i q^{\nu_i} E_{N+1-i,i} \\
\displaystyle \sum_{i \in \langle N \rangle} \vartheta_i q^{\nu_i} E_{-i,i}  
\end{cases}
\qu
\wt q = \begin{cases} 
(-1)^{N/2} q^{N/2} & \hspace{20pt} \text{for }\mfsl_{N}, \\[1.25em] 
q^{\ka} & \hspace{20pt} \text{otherwise}.
\end{cases}
}
Note that $C^2 = \vartheta\,\Id$.
Taking \eqref{def:Cwtq} into account, we have the following statements.

\begin{lemma} \label{L:T-dual}
The representation $\RT$ of $U_q(\mfgf)$ is $\psi$-skewed self-dual.
\end{lemma}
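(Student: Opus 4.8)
The plan is to verify \eqref{dual:2} directly on the algebra generators $x_i, y_i, k_i$ for $1 \le i \le n$, using the explicit matrix $C$ and the explicit representation formulas \eqref{rep:A}, \eqref{rep:BCDi}, \eqref{rep:BCDn}. Since $\{x_i, y_i, k_i\}_{i\in I^{\rm fin}}$ generate $U_q(\mfgf)$ as an algebra and both sides of \eqref{dual:2} are (anti)homomorphic in the appropriate sense — more precisely, $a \mapsto C\,\RT(\psi(a))\,C^{-1}$ is an algebra homomorphism while $a \mapsto \RT^\t(S(a))$ is also an algebra homomorphism because $S$ is an antihomomorphism and $\t$ reverses products — it suffices to check equality on generators. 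So the first step is to record that reduction: if \eqref{dual:2} holds for each generator, it holds for all of $U_q(\mfgf)$.

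Next I would treat the two families separately. For $\mfsl_N$: here $\psi$ is the order-two diagram automorphism $i \mapsto n+1-i$ (i.e. $\bar\jmath$ relabelling), $C = \sum_i \vartheta_i q^{\nu_i} E_{N+1-i,i}$ with $\vartheta_i = (-1)^i$ and $\nu_i = i - \tfrac{N+1}{2}$, and $S(x_i) = -k_i^{-1}x_i$, $S(y_i) = -y_i k_i$, $S(k_i) = k_i^{-1}$. One computes $\RT(\psi(x_i)) = \RT(x_{n+1-i}) = E_{n+1-i,n+2-i}$, conjugates by $C$ (which acts by the antidiagonal permutation $j \mapsto N+1-j$ together with the diagonal scaling $\vartheta_j q^{\nu_j}$), and checks this matches $\RT^\t(S(x_i)) = -(\,q^{E_{ii}-E_{i+1,i+1}}\,)^{-1\,\t}\hspace{-2pt}\cdot$ wait — more carefully, $\RT^\t(S(x_i)) = (\RT(-k_i^{-1}x_i))^\t = -(q^{-E_{ii}+E_{i+1,i+1}} E_{i,i+1})^\t = -q^{-1} E_{i+1,i}$. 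The powers of $q$ and the signs from the $\vartheta_i, \nu_i$ should conspire to give exactly this; similarly for $y_i$ and (trivially, since $C$ is antidiagonal and $k_i$ is diagonal) for $k_i$. For the orthogonal and symplectic cases one uses instead $C = \sum_i \vartheta_i q^{\nu_i} E_{-i,i}$, the fact that $\psi$ acts here by (possibly) a sign or is trivial on the finite diagram depending on type parity, and the antisymmetry relation $F_{ij} + \vartheta_i\vartheta_j F_{-i,-j} = 0$ to organize the bookkeeping. The node $n$ requires the special formulas \eqref{rep:BCDn} and is handled case-by-case in $\mfso_{2n+1}$, $\mfsp_{2n}$, $\mfso_{2n}$, using $c = [2]_{q^{1/2}}^{1/2}$ where relevant; the identity $C^2 = \vartheta\,\Id$ is a convenient consistency check along the way.

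The main obstacle is purely computational: matching the exponents of $q$ and the signs on the special node $n$ across all three orthogonal/symplectic types, since the $\nu_i$ have a $\kappa$-dependent shift ($\nu_i = i - \sgn(i)(n-\kappa)$ for $\mfso_N$, $\nu_i = i$ for $\mfsp_N$) and the highest weight is normalized so that the shift by $\wt q^2$ is forced. There is no conceptual difficulty — the antihomomorphism property of $S$ combined with transpose is exactly what makes self-duality-type statements work — but the verification at node $n$ in the symplectic case (with the factor $\tfrac12 F_{-2,1}$) and in $\mfso_{2n+1}$ (with the factor $c(E_{-1,0} - q^{-1/2}E_{01})$ and the fixed point $0 \in \langle N\rangle$) demands care. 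I would organize the computation by noting that conjugation by $C$ sends $E_{ij} \mapsto \vartheta_i\vartheta_j q^{\nu_i - \nu_j} E_{-i,-j}$ (in the orthogonal/symplectic labelling) or $E_{ij}\mapsto \vartheta_i\vartheta_j q^{\nu_i-\nu_j}E_{N+1-i,N+1-j}$ (for $\mfsl_N$), reducing each generator check to a short identity among the integers/half-integers $\nu_i$. Once the generator identities are in hand, invoke the generation argument from the first step to conclude.
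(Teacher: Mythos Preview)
Your approach is correct and essentially matches the paper's. The paper proceeds exactly as you do for $\mfsl_N$: it records the conjugation rule $C^{-1} E_{ij} C = (-q)^{i-j} E_{N+1-i,N+1-j}$ (your $\vartheta_i\vartheta_j q^{\nu_i-\nu_j}$ is the same factor) and checks \eqref{dual:2} on $k_i,x_i,y_i$. For $\mfso_N$ and $\mfsp_N$ the paper simply cites Klimyk--Schm\"udgen, whereas you propose to verify directly via $C^{-1} E_{ij} C = \vartheta_i\vartheta_j q^{\nu_i-\nu_j} E_{-i,-j}$; this is the same computation the paper carries out one lemma later (for the affine node), so your self-contained version is entirely viable. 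One small clarification: in types B, C, D the paper's $\psi$ is defined to be $\id$, so there is no ``sign or parity'' subtlety on the finite diagram --- the $\psi$-skewed self-duality is just ordinary self-duality there, which slightly simplifies your bookkeeping.
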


\begin{proof}
We only need to show this property for the generators of the algebra. For $\mfso_N$ and $\mfsp_N$ it was shown in Prop.~20 in \cite[Sec.~8.4.2]{KlSg}. For $\mfsl_N$ it follows by a direct computation using
\eq{ \label{CEC} 
C^{-1} E_{ij} C = (-q)^{i-j} E_{N-i+1,N-j+1}.
}
Recall that $\psi : x_i \mapsto x_{N-i}$, $y_i \mapsto y_{N-i}$ and $k_i \mapsto k_{N-i}$. Thus 
\eqn{
C^{-1}\RT^\t(S(k_i))\,C &= C^{-1}q^{E_{i+1,i+1}-E_{ii}}C = q^{E_{N-i,N-i}-E_{N-i+1,N-i+1}} = \RT(\psi(k_i)) , \\
C^{-1}\RT^\t(S(x_i))\,C &= -C^{-1} E_{i+1,i}\, q^{E_{i+1,i+1}-E_{ii}} C = E_{N-i,N-i+1} = \RT(\psi(x_i)) , \\
C^{-1}\RT^\t(S(y_i))\,C &= -C^{-1} q^{E_{ii}-E_{i+1,i+1}} E_{i,i+1} C = E_{N-i+1,N-i} = \RT(\psi(y_i)) . \qedhere
}
\end{proof}

\begin{lemma} \label{L:Tu-dual}
For any $u\in \K^\times$ the representation $\RT_u$ of $U_q({\mfg})$ is $\psi$-skewed self-dual.
\end{lemma}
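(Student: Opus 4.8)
The plan is to extend Lemma \ref{L:T-dual} from $U_q(\mfgf)$ to $U_q(\mfg)$ by checking the skewed self-duality identity \eqref{dual:3} on the three remaining generators $x_0$, $y_0$ and $k_0$, using the explicit formulas in \eqref{affrep} and the expressions for $C$ and $\wt q$ in \eqref{def:Cwtq}. Since $\RT_u$ restricts to $\RT$ on $U_q(\mfgf)$ via $\iota$, and $\psi \in \Aut(A^{\rm fin})$ extends to $\Aut(A)$ fixing the node $0$ for types ${\rm B}^{(1)}_n$, ${\rm C}^{(1)}_n$, ${\rm D}^{(1)}_n$ (so $\psi(x_0)=x_0$ etc.) and satisfying $\psi(0)=0$ also in type ${\rm A}^{(1)}_n$ with the chosen normalisation, Lemma \ref{L:T-dual} already handles all generators coming from $U_q(\mfgf)$; only the affine node needs verification.

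First I would record the conjugation rule for $C$ on elementary matrices in each case: for $\mfsl_N$ this is \eqref{CEC}, $C^{-1}E_{ij}C = (-q)^{i-j}E_{N-i+1,N-j+1}$, and for $\mfso_N$, $\mfsp_N$ one derives the analogous rule $C^{-1}E_{ij}C = \vartheta_i\vartheta_j q^{\nu_i-\nu_j} E_{-i,-j}$ (equivalently a rule for $F_{ij}$), directly from $C=\sum_i \vartheta_i q^{\nu_i}E_{-i,i}$ and $C^2=\vartheta\,\Id$. Next I would compute both sides of \eqref{dual:3} applied to $k_0$: the left side is $\RT_u(\psi(k_0))=\RT_u(k_0)$, a diagonal matrix; the right side is $C^{-1}\RT^\t_{\wt q^2 u}(S(k_0))C = C^{-1}\RT_{\wt q^2 u}(k_0^{-1})C$ since $S(k_0)=k_0^{-1}$ and $k_0$ is diagonal (so transposition is trivial); the conjugation rule permutes the diagonal entries and one checks these match, independently of the shift $u\mapsto \wt q^2 u$ because $\RT_u(k_0)$ does not depend on $u$. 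Then for $x_0$: the left side is $\RT_u(x_0)$ (as $\psi(x_0)=x_0$); the right side is $C^{-1}\RT^\t_{\wt q^2 u}(S(x_0))C = C^{-1}\RT^\t_{\wt q^2 u}(-k_0^{-1}x_0)C = -C^{-1}(\RT_{\wt q^2 u}(x_0))^\t(\RT_{\wt q^2 u}(k_0))^\t C$; substituting the explicit matrices from \eqref{affrep} (e.g. $u E_{N1}$ for $\wh\mfsl_N$, replaced by $\wt q^2 u E_{N1}$ under the shift), applying the conjugation rule and using $C^2=\vartheta\,\Id$, the factor $\wt q^2$ together with the power of $q$ produced by conjugation must cancel; this is precisely what fixes the value $\wt q = (-1)^{N/2}q^{N/2}$ (resp.\ $q^\ka$) and is the point of the shift. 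The computation for $y_0$ is entirely parallel, using $S(y_0)=-y_0 k_0$.

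The main obstacle — really the only nontrivial point — is the bookkeeping of the spectral-parameter shift: one must verify that the power of $q$ picked up when $C$ conjugates the off-diagonal matrices $\RT_u(x_0)$ and $\RT_u(y_0)$ is exactly compensated by the factor $\wt q^{\pm 2}$ coming from evaluating $\RT$ at $\wt q^2 u$ rather than $u$, and that no residual $u$-dependence or sign survives. Concretely one checks, say for $\wh\mfsl_N$, that $C^{-1}(\wt q^2 u\,E_{N1})^\t C = C^{-1}(\wt q^2 u\, E_{1N})C = \wt q^2 u\,(-q)^{1-N}E_{N1}$, and that combined with the contribution of $(\RT_{\wt q^2 u}(k_0))^\t$ and the overall sign from $S$ this reduces to $u\,E_{N1}=\RT_u(x_0)$ precisely when $\wt q^2 = (-q)^{N-1}\cdot(\text{correction from }k_0)$, which matches $\wt q=(-1)^{N/2}q^{N/2}$; the orthogonal and symplectic cases are handled the same way with $\nu_n$, $\nu_{n-1}$ in place of the index arithmetic, giving $\wt q = q^\ka$. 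Since there are only three generators and three families, this is a short finite check once the conjugation rules are in hand, and I would present it as a direct computation case by case, citing \eqref{affrep}, \eqref{def:Cwtq}, \eqref{CEC} and $C^2=\vartheta\,\Id$, and invoking Lemma \ref{L:T-dual} for all other generators.
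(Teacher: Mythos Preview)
Your proposal is correct and follows essentially the same approach as the paper: reduce to the affine generators $x_0,y_0,k_0$ via Lemma \ref{L:T-dual}, use the conjugation rule for $C$ on elementary matrices (the paper states it for $F_{ij}$ in the orthogonal/symplectic cases, which is equivalent to your $E_{ij}$ version), and verify \eqref{dual:3} by a direct case-by-case computation in which the shift $u\mapsto \wt q^{\,2}u$ absorbs exactly the power of $q$ produced by conjugation. One small slip: in your $x_0$ computation you wrote $(\RT_{\wt q^2 u}(k_0))^\t$ where it should be $(\RT_{\wt q^2 u}(k_0^{-1}))^\t$, since $S(x_0)=-k_0^{-1}x_0$; this does not affect the argument.
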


\begin{proof}
It is enough to check that \eqref{dual:3} holds for $x_i$, $y_i$ and $k_i$ with $0\le i\le n$. By Lemma \ref{L:T-dual} we already know that this is true when $1\le i\le n$. Thus we only need to check that \eqref{dual:3} holds when $i=0$, which again follows by a direct computation. For $\wh\mfsl_N$, using \eqref{CEC}, we have 
\eqn{
C^{-1}\RT^\t_{\wt q^2u}(S(k_0))\,C &= C^{-1} q^{E_{11}-E_{NN}} C = q^{E_{NN}-E_{11}} = \RT_{\wt q^2u}(\psi(k_0)) , \\
C^{-1}\RT^\t_{\wt q^2u}(S(x_0))\,C &= -(-q)^{N} u\, C^{-1} E_{1N} q^{E_{11}-E_{NN}} C = u\,E_{N1} = \RT_{u}(\psi(x_0)) , \\
C^{-1}\RT^\t_{u}(S(y_0))\,C &= -(-q)^{-N} u^{-1} C^{-1} q^{E_{NN}-E_{11}} E_{N1} C = u^{-1}E_{1N} = \RT_{u}(\psi(y_0)) . 
\intertext{Next, analogous to \eqref{CEC}, for $\wh\mfso_N$ and $\wh\mfsp_N$ we derive $C^{-1} F_{ij} C = -q^{\nu_i-\nu_j} F_{ji}$. Thus, for $\wh\mfso_N$ we have $\wt q^2 = q^{N-2} = q^{2\nu_n}$. 
It follows that}
C^{-1}\RT^\t_{\wt q^2u}(S(k_0))\,C &= C^{-1} q^{-F_{n-1,n-1}-F_{nn}}C = q^{F_{n-1,n-1}+F_{nn}} = \RT_{u}(\psi(k_0)) , \\
C^{-1}\RT^\t_{\wt q^2u}(S(x_0))\,C &= -q^{2\nu_n} u\, C^{-1} F_{-n,n-1}\,q^{-F_{n-1,n-1}-F_{nn}} C = u\,F_{n-1,-n} = \RT_{u}(\psi(x_0)) , \\
C^{-1}\RT^\t_{\wt q^2u}(S(y_0))\,C &= -q^{-2\nu_n}u^{-1} C^{-1} q^{F_{n-1,n-1}+F_{nn}} F_{n-1,-n}\, C = u^{-1} F_{-n,n-1} = \RT_{u}(\psi(y_0)) . 
\intertext{For $\wh\mfsp_N$ we have $\wt q^2 = q^{N+2}= q^{2\nu_n+2}$ and}
C^{-1}\RT^\t_{\wt q^2u}(S(k_0))\,C &= C^{-1} q^{-2F_{nn}}C = q^{2F_{nn}} = \RT_{u}(\psi(k_0)) , \\
C^{-1}\RT^\t_{\wt q^2u}(S(x_0))\,C &= -\tfrac12 q^{2\nu_n+2} u\, C^{-1} F_{-n,n}\,q^{-2F_{nn}} C = \tfrac12 u\,F_{n,-n} = \RT_{u}(\psi(x_0)) , \\
C^{-1}\RT^\t_{\wt q^2u}(S(y_0))\,C &= -\tfrac12 q^{-2\nu_n-2}u^{-1} C^{-1} q^{2F_{nn}} F_{n,-n}\, C = \tfrac12 u^{-1} F_{-n,n} = \RT_{u}(\psi(y_0)) . \qedhere
} 
\end{proof}


\section{R-matrices and the quantum Yang-Baxter equation} \label{sec:Rmat}

This section summarizes and elaborates on various results involving trigonometric solutions of the quantum Yang-Baxter equation listed in \cite{Bzh,FRT,Ji2,Ji3,KlSg}.


\subsection{Constant R-matrices}

We start by defining certain linear operators acting on the tensor product $\K^N \ot \K^N$.
The identity matrix in $\End(\K^N)^{\ot 2}$ given by $\sum_{i,j \in \langle N \rangle} E_{ii} \ot E_{jj}$ will be denoted by $\Id$. (Note that $\Id$ also denotes the identity matrix in $\End(\K^N)$; it will always be clear from the context which $\Id$ is used.) We introduce the permutation operator $P$ and a projection-like operator $Q_q$, both in $\End(\K^N)^{\ot 2}$, by
\eq{ \label{PQ:defn}
P = \sum_{i,j \in \langle N \rangle} E_{ij} \ot E_{ji}, \qq Q_q = \begin{cases} 
\; 0 & \text{for } \mfsl_N, \\[.4em]
\displaystyle \sum_{i,j \in \langle N \rangle} \vartheta_i \vartheta_j q^{\nu_i-\nu_j} E_{ij} \ot E_{-i,-j} & \text{otherwise},
\end{cases}
}
which satisfy $P^2=\Id$ and $Q_q^2 = N Q_q$. Next, introduce the constant R-matrix $R_q\in \End(\K^N)^{\ot 2}$. According to \cite[\S1]{FRT}, for $\mfsl_N$ it is
\eqa{
R_q = \sum_{i,j \in \langle N \rangle} \left( q^{\delta_{ij}}E_{ii}\ot E_{jj} + \delta_{i<j} (q-q^{-1}) E_{ij}\ot E_{ji} \right) . \label{R:A}
}
For $\mfso_N$ and $\mfsp_N$ it is
\eqa{
R_q = \sum_{i,j \in \langle N \rangle} \left( q^{\delta_{ij}-\delta_{i,-j}}E_{ii}\ot E_{jj} + \delta_{i<j} (q-q^{-1})  (E_{ij}\ot E_{ji} - \vartheta_i \vartheta_j q^{\nu_i-\nu_j} E_{ij}\ot E_{-i,-j}) \right) . \label{R:BCD}}
The R-matrix $R_q$ is a solution of the constant quantum Yang-Baxter equation, {\it viz.}
\eq{ \label{R:YBE}
R_{12} R_{13} R_{23} = R_{23} R_{13} R_{12},
}
where $R_{ij}$ denotes $R_q$ acting nontrivially on the $i$-th and $j$-th factors of $(\K^N)^{\ot3}$ only.  It can be checked by a direct computation that matrices defined in \eqrefs{PQ:defn}{R:BCD} satisfy the identity
\eq{ \label{PQR} 
R_q - PR_q^{-1}P = (q-q^{-1}) (P-Q_q). 
}

In order to state some useful properties of the matrices $P$, $Q_q$ and $R_q$ that will be important later we need the following. We will use the notation $\t$ for transposition in $\End(\K^N)^{\ot 2}$, viz. $(E_{ij}\ot E_{kl})^\t = E_{ji}\ot E_{lk}$. 
Furthermore, $\t_1$ and $\t_2$ will denote partial transpositions on $\End(\K^N)^{\ot 2}$ with respect to the first and the second tensor factor, respectively: $(E_{ij}\ot E_{kl})^{\t_1} = E_{ji}\ot E_{kl}$ and $(E_{ij}\ot E_{kl})^{\t_2} = E_{ij}\ot E_{lk}$. 
\begin{description}

\item[PT-symmetry] We have
\eq{\label{R:PT-symm} P^\t=P, \qq Q_q^\t = PQ_qP, \qq R_q^\t = P R_q P.  }

\item[Bar-symmetry] We have
\eq{ \label{R:bar} 
Q_{q^{-1}} = P Q_q P, \qu R_{q^{-1}} = R_q^{-1}.  
}
\item[C-symmetry] In the orthogonal and symplectic cases we have the following identities involving $C$:
\eq{ \label{R:C-symm} 
\begin{aligned}
P^{\t_1} &= C_1 Q_{q^{-1}} C_1^{-1}, & Q_q^{\t_1} &= C_1 P C_1^{-1}, &  R_q^{\t_1} &= C_1 R_q^{-1} C_1^{-1}, \\
 P^{\t_2} &= C_2^\t Q_{q^{-1}} (C_2^\t)^{-1} ,  & Q_q^{\t_2} &= C_2^\t P (C_2^\t)^{-1},  &   R_q^{\t_2} &= C_2^{\t} R_q^{-1} (C_2^\t)^{-1},
\end{aligned}
}
where $C_1 = C \ot \Id$, $C_2 = \Id \ot C$. For $\mfsl_2$ we have
\eq{ \label{R:C-symm:sl2} 
\begin{aligned}
P^{\t_1} &= \frac{q \, C_1 R_q^{-1} C_1^{-1} - q^{-1} C_2 R_q C_2^{-1} }{q-q^{-1}}, & R_q^{\t_1} &= q C_1 R_q^{-1} C_1^{-1}, \\
P^{\t_2} &= \frac{q \, C^\t_2 R_q^{-1} (C^\t_2)^{-1} - q^{-1} C^\t_1 R_q (C^\t_1)^{-1} }{q-q^{-1}}, & R_q^{\t_2} &= q C_2^\t R_q^{-1} (C_2^\t)^{-1}.
\end{aligned}
}
In all cases, including $\mfsl_{N>2}$, we have 
\eq{  \label{R:RCC} 
C_1 C_2 P =  P^\t C_1 C_2, \quad C_1 C_2 Q_q =  Q_q^\t C_1 C_2, \quad C_1 C_2 R_q =  R_q^\t C_1 C_2.
}
\item[Eigenvalues and polynomial identities] 
Writing $\hat R_q = P R_q$ and $\hat Q_q = P Q_q$, from \eqref{PQR} we infer
\eqa{ 
\hat{Q}_q\hat{R}_q &= \hat{R}_q \hat{Q}_q=\vartheta q^{\vartheta -N} \hat{Q}_q, \label{hatQhatR} \\
\hat{Q}_q^2 &= \left(1+\vartheta\, [N-\vartheta]_q \right) \hat{Q}_q, \label{hatQhatQ}
}
and the identities
\eqg{
\hat{R}_q^2 = (q-q^{-1})\bigl(\hat{R}_q - \vartheta q^{\vartheta-N} \hat Q_q\bigr) + \Id, \label{R:min-id:1} \\
(\hat{R}_q - q \, \Id)(\hat{R}_q + q^{-1} \Id)(\hat{R}_q-\vartheta q^{\vartheta-N}\Id) = 0, \label{R:min-id:2}
}
which simplify to $(\hat{R}_q - q \, \Id)(\hat{R}_q + q^{-1} \Id)=0 $ for $\mfsl_N$.
More precisely, the eigenvalues $q$, $-q^{-1}$ and $\vartheta q^{\vartheta-N}$ of $\hat{R}_q$ have the following multiplicities:
\eq{ \label{evmultiplicities}
\begin{array}{llll}
\binom{N+1}{2}, & \binom{N}{2}, & 0 & \text{ for } \mfsl_N, \\[.5em]
\binom{N+1}{2} \!-\! 1, & \binom{N}{2}, & 1 & \text{ for } \mfso_N, \\[.5em]
\binom{N+1}{2}, & \binom{N}{2} \!-\! 1, & 1 & \text{ for } \mfsp_N,
\end{array}
}
which are the dimensions of the symmetric, antisymmetric and trivial representations~of~$\mfgf$.

\end{description}


\subsection{R-matrices with spectral parameter} \label{sec:Ru}

The R-matrix $R(u)\in\End(\K^N)^{\ot2}$, depending rationally on the {\it spectral parameter} $u$, is defined by 
\eq{
R_{q}(u) = R(u) = f_q(u)\,R_q + \frac{(q-q^{-1})\hspace{0.1mm}u}{q-q^{-1}u} \left( P - \frac{1-u}{q^{2\kappa}-u}\, Q_q \right)  \tx{where} f_q(u) =  \frac{1-u}{q-q^{-1}u} .
\label{Ru:defn}
}
The constant R-matrices are recovered by setting $u=0$: $R_q(0) = q^{-1} R_q$.
We will mostly write $R(u)$ and reserve the notation $R_{q}(u)$ to special cases such as $R_{q^{-1}}(u)$ when needed.
These R-matrices are solutions of the \emph{quantum Yang-Baxter equation with spectral parameters}, 
\eq{ \label{YBE}
R_{12}(\tfrac uv)\,R_{13}(\tfrac uw)\,R_{23}(\tfrac vw) = R_{23}(\tfrac vw) \,R_{13}(\tfrac uw)\,R_{12}(\tfrac uv),
}
where $R_{ij}(u)$ is $R(u)$ acting nontrivially on the $i$-th and $j$-th factors of $(\K^N)^{\ot3}$. This can be checked directly using \eqref{R:YBE}, the Yang-Baxter equation for $R_q$ \eqref{R:YBE} and identities such as \eqref{PQR}. 

The R-matrices with spectral parameter satisfy many additional properties that will be relevant later on. 
First we list properties in which the spectral parameter plays a role; these are not direct analogues of properties of $R_q$. 
\begin{description} [itemsep=0.5ex]

\item[Regularity] Immediately from \eqref{Ru:defn} we have
\eq{ \label{Ru:reg} R(1) = P. }

\item[Unitarity] From \eqrefs{YBE}{Ru:reg} it follows that $R(u)R_{21}(u^{-1})$ is a scalar multiple of $\Id$, where $R_{21}(u):=PR(u)P$.
Owing to the chosen normalization we in fact have, for generic values of $u$,
\eq{ 
\label{Ru:unit} R(u)^{-1} = R_{21}(u^{-1})
}
and, in particular, $R(u)$ is invertible. 

\item[Affinization identity] 
From \eqref{PQR} and \eqref{Ru:defn} one finds that $\hat R(u) := P R(u)$ satisfies
\spl{ \label{Ru:affinization} 
\hat R(u) &= \frac{ \hat R_q - u \hat R_q^{-1}}{q-q^{-1}u} + \frac{(q-q^{-1})(q^\ka - q^{-\ka}) u}{(q-q^{-1} u)(q^\ka - q^{-\ka} u)} \hat Q_q \\
&= \frac{(1-u)(q^\ka \hat R_q - q^{-\ka} u \hat R_q^{-1}) + (q-q^{-1})(q^\ka - q^{-\ka} )u \,\Id}{(q-q^{-1}u)(q^\ka-q^{-\ka}u)}.
}
Relations such as \eqref{Ru:affinization} are also known as Baxterization identities, see \cite[Sec.~8.7.1]{KlSg}.
\end{description}

\medskip

In addition to the properties listed above, the matrix $R(u)$ inherits properties obeyed by the matrices $P$, $Q_q$ and $R_q$. 
\begin{description} [itemsep=0.5ex]

\item[PT-symmetry] As a consequence of \eqref{R:PT-symm} we have
\eq{\label{Ru:PT-symm} R(u)^\t = R_{21}(u), \qq R(u)^{\t_i} = P R(u)^{\t_i} P. }

\item[Bar-symmetry] 
Define $\widebar R(u) = R_{q^{-1}}(u)$. Then from \eqref{R:bar} one obtains
\eq{ 
\label{Ru:bar} \widebar R(u) = R(u)^{-1}.
}

\item[C-symmetry] For $\mfso_N$, $\mfsp_N$ and $\mfsl_{2}$, owing to \eqrefs{R:C-symm}{R:RCC} and \eqref{Ru:unit}, we obtain
\eq{
\label{Ru:C-symm} R(u)^{\t_1} = \wt f_q(u) C_1 R(\wt q^{-2}u)^{-1} C_1^{-1}, 
\qq
R(u)^{\t_2} = \wt f_q(u) C_2^\t R(\wt q^{-2}u)^{-1} (C_2^\t)^{-1},
}
where $\wt f_q(u) := f_q(u)$ for $\mfsl_2$ and $\wt f_q(u) := \frac{f_q(u)}{f_q(q^{2\ka}u^{-1})}$ otherwise.
For all types of $A$ (including $\mfsl_{N>2}$) we have
\eq{ \label{Ru:RCC} 
R(u)^\t\, C_1 C_2 = C_1 C_2 R(u). 
}

\item[Eigenvalues and polynomial identities] As a consequence of (\ref{hatQhatR}-\ref{R:min-id:2}), the matrix $\hat R(u) $ 
satisfies 
\eq{\hat{Q}_q \hat R(u) = \hat R(u)\, \hat{Q}_q = \vartheta \,\frac{q^{-\vartheta} - q^\vartheta u}{q-q^{-1} u} \frac{q^{-\kappa} - q^{\kappa} u}{q^{\kappa}-q^{-\kappa} u}\, \hat{Q}_q , \label{hatQhatRu}}
and the identities
\eqg{
\hat R(u)^2 = \frac{(q-q^{-1})(1+u)}{q-q^{-1}u} \biggl( \hat R(u) - \vartheta \, \frac{q^{-\vartheta} - q^\vartheta u}{q-q^{-1} u} \left( \frac{1-u}{q^\kappa - q^{-\kappa} u} \right)^2 \,\hat{Q}_q \biggr)  + \frac{q^{-1}-q u}{q-q^{-1} u} \, \Id,  \label{Ru:min-id:1} \\[0.25em]
\biggl(\hat R(u) - \Id\biggr)  \biggl(\hat R(u) + \frac{q^{-1}-qu}{q-q^{-1}u} \,\Id\biggr) \biggl( \hat R(u) - \vartheta  \, \frac{q^{-\vartheta} - q^\vartheta u}{q-q^{-1} u} \frac{q^{-\kappa} - q^{\kappa} u}{q^{\kappa}-q^{-\kappa} u} \,\Id \biggr) = 0, \label{Ru:min-id:2}
}
which for $\mfsl_N$ simplifies to
\[ 
\Big( \hat R(u) - \Id \Big)  \Big( \hat R(u) + \frac{q^{-1}-qu}{q-q^{-1}u} \,\Id \Big) =0 .
\]
The multiplicities of the eigenvalues $1$, $-\frac{q^{-1}-qu}{q-q^{-1}u}$ and $\vartheta  \, \frac{q^{-\vartheta} - q^\vartheta u}{q-q^{-1} u} \frac{q^{-\kappa} - q^{\kappa} u}{q^{\kappa}-q^{-\kappa} u}$ of $\hat{R}(u)$ are as~in~\eqref{evmultiplicities}. 
\end{description}

\begin{rmk} 
Note that the R-matrices $R_q(u)$ for $\mfsl_2$ and $R_{q^{1/2}}(u)$ for $\mfsp_2$ are both equal to 
\eq{ \label{Ru:Baxter}
R(u) = P + f_q(u) \big( E_{11} \ot E_{22} + E_{22} \ot E_{11} - q^{-1} E_{12} \ot E_{21} - q E_{21} \ot E_{12} \big)
}
which, up to a similarity transformation, is Baxter's R-matrix for the six-vertex model \cite{Ba2}. \hfill \rmkend
\end{rmk}


\subsection{The R-matrix as intertwiner of vector representations} \label{sec:Rint}

The R-matrix $R(\tfrac uv)$ is the intertwiner for the representation $\RT_u \ot \RT_v$, that is 
\eq{ \label{R:intw} 
R(\tfrac uv)\,( \RT_u \ot \RT_v )( \Delta^{\rm op} (a))  = ( \RT_u \ot \RT_v )(\Delta (a))\, R(\tfrac uv) \qu\text{for all } a \in U_q(\mfg).
}
This equality can be used to define the R-matrix up to a scalar factor, provided \eqref{R:intw} has a nonzero solution. 
This follows from the well-known fact that for generic values of $u/v$ the tensor product $\RT_u \ot \RT_v$ is an irreducible representation of $U_q(\mfg)$ and an application of Schur's lemma. Indeed, Schur's lemma guarantees that $R(\frac uv)$ is invertible for generic values of $u/v$. Now let $R'(\tfrac uv)$ be any other nonzero solution to \eqref{R:intw}, then $( \RT_u \ot \RT_v )(\Delta (a))\, R(\tfrac uv)\, R'(\tfrac uv)^{-1} = R(\tfrac uv)\,( \RT_u \ot \RT_v )( \Delta^{\rm op} (a))\, R'(\tfrac uv)^{-1}  = R(\tfrac uv)\, R'(\tfrac uv)^{-1}\,( \RT_u \ot \RT_v )( \Delta(a))$. Since $\K$ is algebraically closed, it follows by Schur's lemma ({\it e.g.}~as formulated in \cite[Sec.~1.3]{EGHLSVY}) that $R(\tfrac uv)\, R'(\tfrac uv)^{-1}\in \K^\times \Id$. In \cite{Ji2} this approach was taken to construct R-matrices for the vector representation $\RT_u$ when $\mfg$ ranges over all (untwisted and twisted) affine Lie algebras of classical Lie type.

Furthermore, the properties listed in Section \ref{sec:Ru} can be obtained, even without relying on an explicit expression for $R(u)$. Most importantly, the Yang-Baxter equation \eqref{YBE} can be derived by observing that both sides of \eqref{YBE} intertwine the action of $U_q(\mfg)$ on $(\K^N)^{\ot 3}$ given by $(\RT_u \ot \RT_v \ot \RT_w)((\Delta \ot \id)(\Delta(a)))$. Since $\RT_u \ot \RT_v \ot \RT_w$ is also an irreducible representation of $U_q(\mfg)$ for generic values of $u/v$ and $v/w$, both sides must be identical up to a scalar.
To show that the scalar factor equals 1 one uses that $R(u)$ has a constant eigenvector of the form $e \ot e$ with $e \in \K^N$. Alternatively (and this argument can be applied in a more general context), one takes determinants to show that the scalar factor must be an $N^3$-th root of unity; since it also depends rationally on the spectral parameters it must be constant with respect to $u/v$ and $v/w$. By setting $u/v=v/w=1$ and using \eqref{Ru:reg} one sees that the factor~is~$1$.

\begin{rmk} \label{R:uni-R} \mbox{ }
\begin{enumerate} 
\item 
We note that the representation $\RT \ot \RT$ of the quantum group $U_q(\mfgf)$ is reducible. 
In particular, the equality $R_q ( \RT \ot \RT )( \Delta (a))  = ( \RT \ot \RT )(\Delta^\mathrm{op} (a)) R_q$ for all $a\in U_q(\mfgf)$ ({\it cf.}~\eqref{R:intw}) does not have a unique solution, thus cannot be viewed as the defining relation for the constant R-matrices \eqref{R:A} and \eqref{R:BCD}. 

\item 
Recall that a universal R-matrix $\mc{R}$ of a Hopf algebra $\mc{H}$ (see~{\it e.g.}~\cite{Dr1}) is an invertible element in $\mc{H} \ot \mc{H}$ satisfying the following properties:
\eqg{
\mc{R}\, \Delta^{\rm op}(a) = \Delta(a)\, \mc{R}   \qu\text{for all }\, a\in \mc{H},  \label{R-uni} \\
(\Delta \ot \id)(\mc{R} ) = \mc{R}_{23} \mc{R}_{13},\qq
(\id \ot \Delta)(\mc{R} ) = \mc{R}_{12} \mc{R}_{13},  \label{R-uni-2}
}
from which one obtains
\eqg{
\mc{R}_{12} \mc{R}_{13} \mc{R}_{23} = \mc{R}_{23} \mc{R}_{13} \mc{R}_{12}. \label{uni-YBE}
}
Given representations $\rho_i: \mc{H} \to \End(V_i)$ for some vector space $V_i$ with $i\in\{1,2,3\}$, denote $R^{(i,j)}=(\rho_i \ot \rho_j)(\mc{R})$ for $i,j\in\{1,2,3\}$.
Then applying $\rho_1 \ot \rho_2 \ot \rho_3$ to \eqref{uni-YBE} and viewing $R^{(i,j)}$ as an element of $\End(V_1\ot V_2\ot V_3)$ acting nontrivially on $V_i\ot V_j$ only, one obtains the Yang-Baxter equation 
$$
R^{(1,2)} R^{(1,3)} R^{(2,3)} = R^{(2,3)} R^{(1,3)} R^{(1,2)},
$$
whether $\rho_1 \ot\rho_2 \ot \rho_3$ is irreducible or not.
Such a universal R-matrix $\mc{R}$ exists if $\mc{H} = U_q(\mfgf)$, but not if $\mc{H} = U_q(\mfg)$.
Instead one must use a certain completion of the larger Hopf algebra $U_q(\mfg^{\rm ext})$ (see Remark \ref{rmk:extendedaffqg}).
\end{enumerate}
\end{rmk}


\section{K-matrices and the reflection equation} \label{sec:Kmat}


\subsection{Reflection equation}

Solutions of the constant reflection equation corresponding to Riemannian quantum symmetric spaces of classical type have been obtained in \mbox{\cite[Sec.~3]{NoSu}} and \cite[Sec.~2]{NDS}. 
In this section we will introduce versions of the (parameter-dependent) reflection equation (RE) and discuss these in the context of irreducible finite-dimensional representations of suitable coideal subalgebras of a Hopf algebra.

More precisely, let $R(u)$ be one of the R-matrices with spectral parameter discussed in Section~\ref{sec:Ru}. 
Our goal is to find invertible K-matrices $K(u) \in \End(\K^N)$, depending rationally on $u$, that are solutions of the \emph{untwisted reflection equation} \cite{Ch1}
\eqa{ 
R_{21}(\tfrac{u}{v})\, K_1(u)\, R(uv)\, K_2(v) &= K_2(v)\, R_{21}(uv)\, K_1(u)\, R(\tfrac{u}{v}) \label{RE} , 
}
and the \emph{twisted reflection equation} \cite{KuSk2}
\eqa{ 
R(\tfrac{u}{v})\, K_1(u)\, R(\tfrac{1}{uv})^{{\t}_1} K_2(v) &= K_2(v)\, R(\tfrac{1}{uv})^{{\t}_1} K_1(u)\, R(\tfrac{u}{v}) , \label{tRE} 
}
where $K_1(u) = K(u) \ot \Id$ and $K_2(u) = \Id \ot K(u)$. Using unitarity \eqref{Ru:unit} and PT-symmetry \eqref{Ru:PT-symm}, we note that the twisted reflection equation can be written alternatively as
\eqa{
R_{21}(\tfrac{u}{v})^{\t}\, K_1(u)\, (R(uv)^{-1})^{{\t}_2} \, K_2(v) &= K_2(v)\, (R_{21}(uv)^{-1})^{{\t}_1} K_1(u)\, R(\tfrac{u}{v}) . \label{tREalt} 
}
Equation \eqref{tREalt} is more natural in the sense that unitarity \eqref{Ru:unit} and PT-symmetry \eqref{Ru:PT-symm} are no longer needed when proving statements involving this reflection equation. 

We will be interested in solutions of the reflection equations above which are associated to right coideal subalgebras $\mc{B} \subset U_q({\mfg})$. 
This will be explained in detail in Section \ref{sec:K-intw} below. 
For now we will briefly comment on some general properties of solutions of \eqref{RE} and \eqref{tRE}.

\begin{lemma} \label{lem:tw-untw}
Assume that $\mfg^{\rm fin}=\mfsl_2$, $\mfso_N$ or $\mfsp_N$. Then $\wt K(u) := C K(\wt q u)$ is a solution of \eqref{tREalt} precisely if $K(u)$ is a solution of \eqref{RE}.
\end{lemma}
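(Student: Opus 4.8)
The plan is to substitute the candidate $\wt K(u) = C K(\wt q u)$ into the twisted reflection equation in the form \eqref{tREalt} and massage all the partial transposes using the C-symmetry identities \eqref{Ru:C-symm} (together with \eqref{R:RCC}/\eqref{Ru:RCC}) until the equation becomes, after cancelling nonzero scalar factors, exactly the untwisted reflection equation \eqref{RE} with the spectral parameters rescaled appropriately. First I would record the relevant conjugation rules: for $\mfso_N$, $\mfsp_N$ (and $\mfsl_2$) we have, from \eqref{Ru:C-symm}, $R(w)^{\t_1} = \wt f_q(w)\, C_1 R(\wt q^{-2} w)^{-1} C_1^{-1}$ and $R(w)^{\t_2} = \wt f_q(w)\, C_2^\t R(\wt q^{-2} w)^{-1} (C_2^\t)^{-1}$, where $C_1 = C\ot\Id$, $C_2=\Id\ot C$; combined with unitarity \eqref{Ru:unit} $R(w)^{-1} = R_{21}(w^{-1})$ and bar-symmetry these let me rewrite $(R(uv)^{-1})^{\t_2}$ and $(R_{21}(uv)^{-1})^{\t_1}$ and $R_{21}(u/v)^\t$ in \eqref{tREalt} in terms of untwisted $R$-matrices conjugated by $C_1$, $C_2$. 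The point of using the form \eqref{tREalt} rather than \eqref{tRE} is precisely, as the authors note, that no extra use of unitarity/PT-symmetry is needed.

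Concretely, I would proceed as follows. Write $\wt K_1(u) = C_1 K_1(\wt q u)$, $\wt K_2(v) = C_2 K_2(\wt q v)$. Substitute into \eqref{tREalt}; on each side push the matrices $C_1$, $C_2$ past the $R$-factors (and each other) using that $C_1$ acts only on the first tensor factor and $C_2$ only on the second — so $C_1$ commutes with $K_2$ and with anything supported on the second factor, and vice versa — and using the identities just listed to turn $(R(uv)^{-1})^{\t_2}$ into $\wt f_q(uv)^{-1} (C_2^\t)^{-1} R(\wt q^{-2} uv)\, C_2^\t$ acting in the right place, and similarly for the $\t_1$ terms and the $R_{21}(u/v)^\t = P R_{21}(u/v) P = R(u/v)$ factor. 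After collecting the $C$'s on the far left (the $C_1 C_2$ on both sides cancel, using that $C_1$ and $C_2$ commute) and cancelling the common scalar $\wt f_q$ factors, what should remain is $R_{21}(u/v)\, K_1(\wt q u)\, R(\wt q^{-2}\cdot \text{(something)})\, K_2(\wt q v) = K_2(\wt q v)\, R_{21}(\dots)\, K_1(\wt q u)\, R(u/v)$; one then checks that with the substitution $u \mapsto \wt q u$, $v \mapsto \wt q v$ the arguments $\wt q^{-2}(\wt q u)(\wt q v) = uv$ match so this is exactly \eqref{RE} evaluated at $(\wt q u, \wt q v)$. Since $R(u/v)$ depends only on the ratio, it is unchanged by the rescaling. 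Because every step is an equivalence (the $C_i$ are invertible, $\wt f_q$ is a nonzero rational function, and the map $u\mapsto \wt q u$ is invertible), the converse direction comes for free, giving the ``precisely if''.

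The main obstacle I anticipate is purely bookkeeping: getting every partial transposition, every $C_1$ versus $C_2$ versus $C_2^\t$, and every spectral-parameter argument exactly right, since the C-symmetry relations \eqref{Ru:C-symm} mix $\t_1$ with $C_1$ and $\t_2$ with $C_2^\t$ asymmetrically, and \eqref{tREalt} contains one $\t$, one $\t_2$ and one $\t_1$. I would handle this by transposing the whole equation \eqref{tREalt} at the appropriate stage if convenient, or by carefully tracking which tensor leg each factor lives on. A secondary point to verify is that the scalar prefactors $\wt f_q(uv)$ and $\wt f_q(\tfrac1{uv})$ (or $\wt f_q$ of whatever arguments appear) combine to a single factor common to both sides so that it genuinely cancels; this should follow from the explicit form $\wt f_q(w) = f_q(w)/f_q(q^{2\ka} w^{-1})$ together with the fact that on both sides of the reflection equation the same pair of $R$-factors carrying transposes appears. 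I would also note at the outset that the hypothesis $\mfg^{\rm fin} \in \{\mfsl_2, \mfso_N, \mfsp_N\}$ is exactly what guarantees the C-symmetry identities \eqref{Ru:C-symm} (and $C^2 = \vartheta\,\Id$) are available, which is why type ${\rm A}^{(1)}_{n>1}$ is excluded.
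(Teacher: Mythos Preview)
Your proposal is correct and follows essentially the same route as the paper's proof: substitute $\wt K(u)=C K(\wt q u)$ into \eqref{tREalt}, use the $C$-symmetry identities \eqref{Ru:C-symm} to convert the partially transposed $R$-factors into ordinary $R$-matrices with shifted argument conjugated by $C_1,C_2$, use \eqref{Ru:RCC} for the fully transposed factor $R_{21}(u/v)^\t$, then cancel the common $C_1C_2$ and the common scalar $\wt f_q(uv)$ from both sides to obtain \eqref{RE} at $(\wt q u,\wt q v)$. The only slip is a sign in the exponent: the middle $R$-factor ends up with argument $\wt q^{\,2} uv = (\wt q u)(\wt q v)$ rather than $\wt q^{-2}uv$, and only one scalar $\wt f_q(uv)$ appears (from the $\t_2$-factor on the left and the $\t_1$-factor on the right) since the $R_{21}(u/v)^\t$ term is handled via \eqref{Ru:RCC} without producing a scalar.
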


\begin{proof}
Using the second equation in \eqref{Ru:C-symm} we derive that
\[
(R(uv)^{-1})^{{\t}_2} C_2 = \wt f_q(uv) C_2 R(\wt q^2 u v)
\]
and hence, by conjugating with $P$,
\[
(R_{21}(uv)^{-1})^{{\t}_1} C_1 = \wt f_q(uv)  C_1 R_{21}(\wt q^2 u v).
\]
By virtue of \eqref{Ru:RCC} we see that the left-hand side of \eqref{tREalt} for $K(u) = \wt K(u)$ equals
\[
\wt f_q(uv) C_1 C_2 R_{21}(\tfrac{u}{v}) K_1(\wt q u) R(\wt q^2 u v) K_2(\wt q v),
\]
whereas the right-hand side becomes
\[
\wt f_q(uv) C_1 C_2 K_2(\wt q v) R_{21}(\wt q^2 u v) K_1(\wt q u) R(\tfrac{u}{v}).
\]
The equivalence of \eqref{RE} with $K(u)$ and \eqref{tREalt} with $\wt K(u) = C K(\wt q u)$ is now obvious.
\end{proof}

Because in most cases Lemma \ref{lem:tw-untw} implies that a classification of solutions of \eqref{RE} produces a classification of solutions of \eqref{tRE}, it is only necessary to consider \eqref{tRE} if $\mfgf = \mfsl_{N >2}$.

\begin{lemma} \label{lem:REbasic}
Suppose $K(u) \in \End(\K^N)$ is a solution of \eqref{RE} or \eqref{tREalt}.
Then
\begin{enumerate}
\item $\wt K(u) := K(-u)$ is a solution of the same equation;
\item $\wt K(u) := m(u)K(u)$ is a solution of the same equation for any $m(u) : \K \to \K$;
\item $K^Z(u) := Z(\tfrac{\eta}{u})^{-1}K(u)Z(\eta u)$ is a solution of \eqref{RE} or $K^Z(u) := Z(\tfrac{\eta}{u})^{\t}K(u)Z(\eta u)$ is a solution of \eqref{tREalt}, respectively, for any $\eta \in \K^\times$ and any $Z(u) \in \End(\K^N)$ depending rationally on $u$ and satisfying 
\eq{ 
\label{Ru:RZZ} [R(\tfrac{u}{v}),Z(u) \ot Z(v)]=0. 
} 
\end{enumerate}
\end{lemma}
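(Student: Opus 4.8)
The three statements are all of the same flavour: given a solution $K(u)$ of one of the reflection equations, a simple modification again solves it. I would prove each by direct substitution, using only the symmetry properties of $R(u)$ recalled in Section~\ref{sec:Ru}. Statements (i) and (ii) are nearly immediate and I would dispatch them first; statement (iii) is the substantive one and deserves the bulk of the argument.

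\emph{Proof of (i) and (ii).} For (ii), observe that the reflection equations \eqref{RE} and \eqref{tREalt} are homogeneous of degree two in $K$: each monomial on either side contains exactly one factor $K_1$ and one factor $K_2$. Replacing $K(u)$ by $m(u)K(u)$ multiplies both sides of \eqref{RE} (resp.\ \eqref{tREalt}) by the same scalar $m(u)m(v)$, and since $m(u),m(v)$ are central they can be pulled to the front; hence the modified matrix is again a solution. For (i), note that in \eqref{RE} and \eqref{tREalt} the spectral parameters enter $K_1$ through $u$, $K_2$ through $v$, and the R-matrices through $u/v$, $uv$ (and $1/(uv)$). Substituting $u\mapsto-u$, $v\mapsto-v$ throughout leaves $u/v$ and $uv$ unchanged, so the equation with arguments $-u,-v$ for the full expression is literally \eqref{RE} (resp.\ \eqref{tREalt}) with $K$ replaced by $\wt K(u)=K(-u)$; as this holds for all $u,v$ it holds for all values, giving the claim. (Alternatively, (i) is the special case of (iii) with $Z(u)=\Id$ combined with a relabelling, but the direct argument is cleaner.)

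\emph{Proof of (iii), untwisted case.} Suppose $K(u)$ solves \eqref{RE} and set $K^Z(u)=Z(\tfrac{\eta}{u})^{-1}K(u)Z(\eta u)$, so $K^Z_1(u)=Z_1(\tfrac{\eta}{u})^{-1}K_1(u)Z_1(\eta u)$ and $K^Z_2(v)=Z_2(\tfrac{\eta}{v})^{-1}K_2(v)Z_2(\eta v)$. Substitute into the left-hand side of \eqref{RE}:
\[
R_{21}(\tfrac uv)\,Z_1(\tfrac\eta u)^{-1}K_1(u)Z_1(\eta u)\,R(uv)\,Z_2(\tfrac\eta v)^{-1}K_2(v)Z_2(\eta v).
\]
The commutation relation \eqref{Ru:RZZ}, namely $[R(\tfrac uv),Z(u)\ot Z(v)]=0$, and its conjugate $[R_{21}(\tfrac uv),Z_2(u)\ot Z_1(v)]=0$ (obtained by conjugating \eqref{Ru:RZZ} by $P$ and relabelling), are the key tools: they say that $Z_i$ with the appropriate spectral argument slides through any $R$-factor. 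Explicitly, $Z_1(\tfrac\eta u)^{-1}$ commutes with $R_{21}(\tfrac uv)$ once one checks the spectral arguments match — here one uses \eqref{Ru:RZZ} with arguments chosen so that the ratio is $u/v$: since $\tfrac\eta u / \tfrac\eta v = \tfrac vu$, one applies the $P$-conjugated form of \eqref{Ru:RZZ}. The plan is: move $Z_1(\tfrac\eta u)^{-1}$ to the far left past $R_{21}(\tfrac uv)$, move $Z_1(\eta u)$ rightward past $R(uv)$ (using \eqref{Ru:RZZ} with arguments $\eta u$ and $\tfrac\eta v$, whose ratio is $uv$), move $Z_2(\tfrac\eta v)^{-1}$ leftward past the (already moved) $Z_1(\eta u)$ — these commute as they act on different tensor factors — and so on, until the left-hand side becomes $Z_1(\tfrac\eta u)^{-1}Z_2(\tfrac\eta v)^{-1}\bigl(R_{21}(\tfrac uv)K_1(u)R(uv)K_2(v)\bigr)Z_1(\eta u)Z_2(\eta v)$. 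The bracketed expression equals, by \eqref{RE} for $K$, the right-hand side $K_2(v)R_{21}(uv)K_1(u)R(\tfrac uv)$; then the $Z$-factors are moved back in by the same sliding relations to reassemble exactly the right-hand side of \eqref{RE} for $K^Z$. I would present this as a chain of equalities with a sentence at each step naming which instance of \eqref{Ru:RZZ} (or which trivial commutation of disjoint tensor legs) is used.

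\emph{Proof of (iii), twisted case.} The argument is identical in structure, but now $K^Z(u)=Z(\tfrac\eta u)^{\t}K(u)Z(\eta u)$ and one works with \eqref{tREalt}, which contains partial transposes $R(\tfrac1{uv})^{\t_1}$ etc. The extra ingredient is that \eqref{Ru:RZZ} must be transposed appropriately: taking the transpose $\t_1$ (resp.\ $\t_2$) of $[R(\tfrac uv),Z(u)\ot Z(v)]=0$ yields $R(\tfrac uv)^{\t_1}\bigl(Z(u)^{\t}\ot Z(v)\bigr)=\bigl(Z(u)^{\t}\ot Z(v)\bigr)R(\tfrac uv)^{\t_1}$, i.e.\ $Z_1(u)^{\t}$ slides through $R(\tfrac uv)^{\t_1}$; and similarly for $\t_2$. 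So the same sliding procedure works with $Z_1(\tfrac\eta u)^{\t}$ in place of $Z_1(\tfrac\eta u)^{-1}$, matching the transposed $R$-factors in \eqref{tREalt}. After bringing all $Z$-factors to the outside, one invokes \eqref{tREalt} for $K$ and then reassembles.

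\textbf{Main obstacle.} None of this is deep; the only thing requiring care is bookkeeping of spectral arguments — making sure that whenever one wants to slide a $Z$-factor past an $R$-factor, the two spectral parameters really do have the ratio (or product) matching the argument of that $R$-factor, so that the correct instance of \eqref{Ru:RZZ} applies. In particular one must verify that $R(uv)$ pairs with the $Z$-arguments $\eta u$ and $\eta/v$ (ratio $uv$), that $R_{21}(\tfrac uv)$ pairs with $\eta/u$ and $\eta/v$ (ratio $v/u$, handled by the $P$-conjugate of \eqref{Ru:RZZ}), and in the twisted case that the transposed relations are applied on the correct tensor leg. I expect this to be the only place where a reader could get confused, so I would lay out the spectral-argument matching explicitly in a short preliminary remark before doing the substitution.
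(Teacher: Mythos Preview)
Your overall approach matches the paper's: (i) and (ii) are dispatched as obvious, and (iii) is proved by pulling the $Z$-factors to the outside of each side of the reflection equation, applying the equation for $K$, and pushing them back. The untwisted case is correct (though phrased loosely: \eqref{Ru:RZZ} does not let you slide a lone $Z_1$ past an $R$-factor---you always need the companion $Z_2$, which you do eventually produce).

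There is, however, a genuine slip in the twisted case. Taking $\t_1$ of $R(\tfrac uv)\,Z_1(u)Z_2(v)=Z_1(u)Z_2(v)\,R(\tfrac uv)$ does \emph{not} give the commutation $[R(\tfrac uv)^{\t_1},\,Z_1(u)^\t Z_2(v)]=0$ that you claim: partial transposition reverses products only on the transposed leg, so what one actually obtains is the \emph{swap} relation
\[
Z_1(u)^\t\,R(\tfrac uv)^{\t_1}\,Z_2(v)\;=\;Z_2(v)\,R(\tfrac uv)^{\t_1}\,Z_1(u)^\t.
\]
It is this kind of swap (and its $\t_2$ and inverse variants with suitably chosen spectral arguments) that the proof requires for the inner factors $(R(uv)^{-1})^{\t_2}$ and $(R_{21}(uv)^{-1})^{\t_1}$ in \eqref{tREalt}; the paper lists exactly the four needed instances after the lemma. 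Your strategy survives once you replace the mis-stated commutation by these swap relations, which do follow from \eqref{Ru:RZZ} via the correct partial-transpose manipulation.
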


\begin{proof}
Statements (i) and (ii) are obvious.
Statement (iii) is a special case of \cite[Prop.~2]{Sk} in the case of the untwisted reflection equation; for the twisted reflection equation it is entirely analogous, relying on the identities 
\eqn{
R(\tfrac{u}{v}) Z_1(\eta \, u) Z_2(\eta \, v) &= Z_1(\eta \, u) Z_2(\eta \, v)R(\tfrac{u}{v}),\\
R_{21}(\tfrac{u}{v})^{\t} Z_1(\tfrac{\eta}{u})^{{\t}_1} Z_2(\tfrac{\eta}{v})^{{\t}_2} &= Z_1(\tfrac{\eta}{u})^{{\t}_1} Z_2(\tfrac{\eta}{v})^{{\t}_2}R_{21}(\tfrac{u}{v})^{\t}, \\
Z_1(\eta \, u) (R(uv)^{-1})^{{\t}_2} Z_2(\tfrac{\eta}{v})^{{\t}_2} &= Z_2(\tfrac{\eta}{v})^{{\t}_2} (R(uv)^{-1})^{{\t}_2} Z_1(\eta \, u), \\
Z_2(\eta \, v) (R_{21}(uv)^{-1})^{{\t}_1} Z_1(\tfrac{\eta}{u})^{{\t}_1} &= Z_1(\tfrac{\eta}{u})^{{\t}_1} (R_{21}(uv)^{-1})^{{\t}_1} Z_2(\eta \, v) ,
}
all of which follow straightforwardly from \eqref{Ru:RZZ}. 
\end{proof}

Equation \eqref{Ru:RZZ} can be thought of as a version of the Yang-Baxter equation with one tensor factor replaced by the ground field $\K$. 
To our best knowledge, all invertible solutions of \eqref{Ru:RZZ}, with $R(u)$ defined by \eqref{Ru:defn}, form a group of matrices depending rationally on $u$ isomorphic to $\Ad(\wt H_q) \rtimes \Sigma_A$, {\it cf.}~\eqref{Hopfalgautodecomp}. The elements of $\Ad(\wt H_q)$ correspond to certain constant diagonal matrices, whereas elements of $\Sigma_A \leq \Aut(A)$ correspond to matrices possibly with a nontrivial dependence on $u$. We explore this in more detail in Section \ref{sec:rotdress}.

\begin{rmk} Equations \eqref{RE} and \eqref{tRE} are often referred to as right reflection equations, since they have an interpretation of a factorized scattering of particles off the right end of a semi-infinite line or a segment, see {\it e.g.}~\cite{Ch1,GhZa,Sk}. 
For example for \eqref{RE} this can be seen by writing it as
\eq{
\hat R(uv)\, K_2(v)\, \hat R(\tfrac uv)\, K_2(u) = K_2(v)\, \hat R(uv)\, K_2(u)\, \hat R(\tfrac uv). \label{PRE}
} 
where $\hat R(u) = P R(u)$; in this presentation K-matrices appear in the right tensorand only. Note that this is at odds with other conventions, see for instance \cite{BgKo2}, where \eqref{PRE} holds but with $\hat R (u) = R(u) P$ instead.
In Section \ref{sec:qintsys} we discuss how reflection equations appear in the theory of quantum integrable systems, including left reflection equations. \hfill \rmkend
\end{rmk}


\subsection{Boundary intertwining equation}  \label{sec:K-intw}

Fix a right coideal subalgebra $\mc{B} \subset U_q({\mfg}) $.
The \emph{untwisted boundary intertwining equation} for the pair $(K(u),\eta) \in \End(\K^N) \times \K^\times$ is the equation
\begin{alignat}{90}
K(u)\, \RT_{\eta \, u}(b) &= \RT_{\eta/u}(b) \, K(u) && \text{for all } b \in \mc{B}. \label{intw-untw} 
\intertext{The \emph{twisted boundary intertwining equation} for the pair $(K(u),\eta) \in \End(\K^N) \times \K^\times $ is the equation}
K(u)\, \RT_{\eta \, u}(b) &= \RT^{\,\t}_{\eta/u}(S(b)) \, K(u) \qq && \text{for all } b \in \mc{B}. \label{intw-tw}
\end{alignat}
We view these as equations for matrix-valued rational functions of $u$. We call $\eta$ the \emph{scaling parameter}. We say that $K(u)$ is an \emph{(un)twisted K-matrix} if it is a solution of the (un)twisted boundary intertwining equation, respectively.

The equations above are analogues of the intertwining equation \eqref{R:intw}. We will use these equations to find solutions of the reflection equations \eqref{RE} and \eqref{tREalt}, respectively, following the arguments presented in \cite[Sec.~2]{DeMk} and \cite[Sec.~3]{DeGe}. We will make this more precise shortly.
The main benefit of using boundary intertwining equations to find K-matrices is that they are linear in $K(u)$, whereas REs are quadratic in $K(u)$.

We make the following observations about solutions of \eqref{intw-untw} and \eqref{intw-tw}. Note that we will only be interested in nontrivial solutions $(K(u),\eta)$, {\it i.e.}~with $K(u)\ne 0$.

\begin{lemma} \label{lem:intwbasic} 
Let $\mc{B} \subset U_q({\mfg})$ be a right coideal subalgebra. Suppose $(K(u),\eta)$ is a solution of \eqref{intw-untw} or \eqref{intw-tw}. Then:
\begin{enumerate}
\item $(\wt K(u),-\eta)$ with $\wt K(u) := K(-u)$ is a solution of the same equation. 
\item $(\wt K(u),\eta)$ with $\wt K(u) := m(u) K(u)$ is a solution of the same equation for any $m(u) : \K \to \K$.
\end{enumerate}
\end{lemma}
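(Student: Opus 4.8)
\textbf{Plan for the proof of Lemma \ref{lem:intwbasic}.} Both assertions should follow by direct substitution into the defining equations \eqref{intw-untw} and \eqref{intw-tw}, using only elementary properties of the representations $\RT_u$. The key point is that the spectral parameter $u$ enters the affinized representation $\RT_u$ polynomially via $u^{\pm 1}$, so that it is legitimate to substitute $-u$ for $u$ and that this substitution only affects the generators $x_0, y_0$ (and not $x_i,y_i,k_i$ for $1 \le i \le n$ nor any element of $U_q(\mfgf)$).

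For statement (i), I would proceed as follows. Suppose $(K(u),\eta)$ solves \eqref{intw-untw}, i.e.\ $K(u)\RT_{\eta u}(b) = \RT_{\eta/u}(b) K(u)$ for all $b \in \mc B$, an identity of rational functions of $u$. Since this holds identically in $u$, it remains valid after the substitution $u \mapsto -u$, giving $K(-u)\RT_{-\eta u}(b) = \RT_{-\eta/u}(b) K(-u)$. But $-\eta u = (-\eta)\cdot u$ and $-\eta/u = (-\eta)/u$, so with $\wt K(u) := K(-u)$ this reads $\wt K(u)\RT_{(-\eta) u}(b) = \RT_{(-\eta)/u}(b)\wt K(u)$, which is exactly \eqref{intw-untw} for the pair $(\wt K(u),-\eta)$. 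The twisted case \eqref{intw-tw} is handled in the same way, noting that the antipode $S$ and the transpose $\t$ do not interact with the spectral parameter, so the substitution $u \mapsto -u$ again just sends $(K(u),\eta)$ to $(K(-u),-\eta)$. One should also observe that $\wt K(u) = K(-u)$ is still a nonzero invertible rational matrix function, so the solution remains nontrivial.

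For statement (ii), given a scalar rational function $m(u): \K \to \K$ and a solution $(K(u),\eta)$ of \eqref{intw-untw}, set $\wt K(u) := m(u)K(u)$. Then $\wt K(u)\RT_{\eta u}(b) = m(u)K(u)\RT_{\eta u}(b) = m(u)\RT_{\eta/u}(b)K(u) = \RT_{\eta/u}(b)\,m(u)K(u) = \RT_{\eta/u}(b)\wt K(u)$, where the middle equality uses \eqref{intw-untw} and the other equalities use that $m(u)$ is a scalar and hence commutes with every matrix. The same computation works verbatim for \eqref{intw-tw}. (If one additionally wants $\wt K(u)$ to be a genuine nontrivial/invertible K-matrix in the sense used later, one restricts to $m(u) \not\equiv 0$; but the computation above is valid for any $m$.)

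\textbf{Main obstacle.} There is essentially no obstacle here: this is a bookkeeping lemma and the only thing to be slightly careful about is that \eqref{intw-untw} and \eqref{intw-tw} are identities of rational functions (so substitution $u\mapsto -u$ is justified) and that scaling by $m(u)$ preserves the class of matrix-valued rational functions. I would state both these points explicitly but not belabor them. The analogue of Lemma \ref{lem:REbasic}(i),(ii) for the reflection equations themselves is proved the same way, which is presumably why the two lemmas are phrased in parallel.
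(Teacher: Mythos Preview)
Your proposal is correct and takes essentially the same approach as the paper, which simply says to apply $(u,\eta) \mapsto (-u,-\eta)$ in \eqref{intw-untw} and \eqref{intw-tw} for (i) and calls (ii) obvious. Your version just spells out the substitution and the scalar commutation in more detail.
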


\begin{proof}
We obtain (i) by applying $(u,\eta) \mapsto (-u,-\eta)$ in \eqref{intw-untw} and \eqref{intw-tw}. Statement (ii) is obvious.
\end{proof}

Lemma \ref{lem:intwbasic} corresponds directly to Lemma \ref{lem:REbasic} (i)-(ii). 
We will dicuss analogons of Lemma \ref{lem:REbasic} (iii) on the level of the intertwining equation in Section \ref{sec:rotdress}.

\begin{lemma} \label{L:sol-intw}
Let $\mc{B} \subset U_q({\mfg})$ be a right coideal subalgebra. Suppose $\RT_u|_{\mc{B}}$ is irreducible for generic values of $u$. If a nontrivial solution $(K(u),\eta)$ to \eqref{intw-untw} or \eqref{intw-tw} exists, then $K(u)$ is invertible for generic values of $u$ and is unique, up to a scalar factor. 
\end{lemma}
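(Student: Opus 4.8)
The plan is to mimic the Schur's lemma argument used earlier for the R-matrix in Section~\ref{sec:Rint}, but adapted to the one-sided (boundary) setting. Throughout I will use that $\K$ is algebraically closed, so that Schur's lemma applies in the strong form: the endomorphism algebra of an irreducible finite-dimensional representation over $\K$ is just $\K\,\Id$.

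\medskip

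\textbf{Uniqueness.} Suppose $(K(u),\eta)$ and $(K'(u),\eta)$ are two nontrivial solutions to \eqref{intw-untw} (the twisted case \eqref{intw-tw} is handled in exactly the same way, replacing $\RT_{\eta/u}$ by $\RT^{\,\t}_{\eta/u}\circ S$ throughout; note that $b\mapsto \RT^{\,\t}_{\eta/u}(S(b))$ is still an algebra homomorphism on $\mc{B}$ since $S$ is an anti-homomorphism and $\t$ reverses products). First I would establish invertibility: the matrix $K(u)$ intertwines the two representations $\RT_{\eta u}|_{\mc B}$ and $\RT_{\eta/u}|_{\mc B}$ of $\mc B$, both of which are irreducible for generic $u$ by hypothesis. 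Its kernel and image are then $\mc B$-submodules of an irreducible module, hence trivial or everything; since $K(u)\neq 0$ we get $\Ker K(u)=0$ and $\Ima K(u)=\K^N$, so $K(u)$ is invertible for generic $u$. (One should note that ``generic'' here means: outside the finitely many values of $u$ where either source or target fails to be irreducible, together with the finitely many zeros of $\det K(u)$, which is a nonzero rational function of $u$ since it is not identically zero.) Now consider $M(u):=K'(u)\,K(u)^{-1}\in\End(\K^N)$, a rational function of $u$. For generic $u$ and all $b\in\mc B$,
\[
M(u)\,\RT_{\eta/u}(b)=K'(u)\,K(u)^{-1}\RT_{\eta/u}(b)=K'(u)\,\RT_{\eta u}(b)\,K(u)^{-1}=\RT_{\eta/u}(b)\,K'(u)\,K(u)^{-1}=\RT_{\eta/u}(b)\,M(u),
\]
so $M(u)$ is an endomorphism of the irreducible $\mc B$-module $\RT_{\eta/u}|_{\mc B}$, whence $M(u)\in\K^\times\Id$ by Schur's lemma. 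Thus $K'(u)=c(u)K(u)$ for a scalar $c(u)$, which is automatically a rational function of $u$ (e.g.\ read off from any nonzero matrix entry), proving uniqueness up to a scalar factor.

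\medskip

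\textbf{On the scaling parameter and degenerate values.} A small subtlety worth flagging: the statement fixes $\eta$ implicitly (two solutions ``$K(u)$'' are compared for the same $\eta$); one might worry whether different $\eta$ could occur, but that is not part of the claimed statement, so I would not address it here. The only genuinely delicate point is the bookkeeping of ``generic values of $u$'': one needs that the set of $u$ for which $\RT_u|_{\mc B}$ is irreducible is cofinite (given by the hypothesis), and that invertibility of $K(u)$ holds on a cofinite set, so that all the manipulations above are valid on a nonempty Zariski-open subset of $\K^\times$; rational-function identities valid on such a set are then identities. This is the main (mild) obstacle — not conceptually hard, but it is where care is required, since $K(u)$ is only a priori invertible generically and the Schur argument must be run pointwise in $u$ before concluding the rational identity $K'(u)=c(u)K(u)$ globally.
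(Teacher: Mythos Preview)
Your proof is correct and follows essentially the same approach as the paper: both use Schur's lemma to deduce invertibility of $K(u)$ and then show that the product of one solution with the inverse of another commutes with the (irreducible) action of $\mc{B}$, forcing it to be a scalar. The only cosmetic difference is that you form $K'(u)K(u)^{-1}$ and check commutation with $\RT_{\eta/u}|_{\mc B}$, whereas the paper forms $K'(u)^{-1}K(u)$ and checks commutation with $\RT_{\eta u}|_{\mc B}$; your additional care about the Zariski-open ``generic'' locus is a welcome clarification but not a departure in method.
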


\begin{proof}
Suppose $u$ is generic. Let $(K(u),\eta)$ be a nontrivial solution of \eqref{intw-untw}. By Schur's lemma $K(u)$ is invertible. Now let $(K'(u),\eta)$ be any other nontrivial solution with the same~$\eta$. Then $K'(u)^{-1} K(u)\, \RT_{\eta \, u}(b) = K'(u)^{-1} \RT_{\eta/u}(b) \, K(u) = \RT_{\eta u}(b) \, K'(u)^{-1} K(u)$ for any $b\in\mc{B}$. It follows by Schur's lemma for algebraically closed fields that $K'(u)^{-1}  K(u)$ must be proportional to a scalar. Proving invertibility and uniqueness of a solution $K(u)$ of \eqref{intw-tw} is analogous.
\end{proof}

By multiplying $K(u)$ by a suitable polynomial in $u$ of minimal degree we may clear denominators in $K(u)$ and obtain a matrix with polynomial entries, which is uniquely defined up to an element of $\K^\times$.
We define $d_{\rm eff}(K)$, the \emph{effective degree} of $K(u)$, to be the degree of this polynomial matrix.

We now demonstrate that solutions of \eqref{intw-untw} and \eqref{intw-tw} indeed satisfy \eqref{RE} and \eqref{tRE}, under suitable assumptions.

\begin{prop} \label{P:B-intw-untw}
Let $\mc{B} \subset U_q({\mfg})$ be a right coideal subalgebra and let $(K(u),\eta)$ be a nontrivial solution of the boundary intertwining equation \eqref{intw-untw}. 
If $(\RT_u \ot \RT_v)|_{\mc{B}}$ is irreducible for generic values of $u$ and $v$, then $K(u)$ is a solution of the reflection equation \eqref{RE}.
\end{prop}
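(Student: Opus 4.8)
The plan is to mimic the classical argument from \cite{Sk} that shows how the boundary intertwining equation implies the reflection equation, adapting it to the coideal setting as in \cite{DeGe,DeMk}. The key idea is that both sides of the reflection equation \eqref{RE} intertwine a single representation of $\mc B$, and an irreducibility hypothesis (via Schur's lemma) forces them to coincide up to scalar, whereupon a normalization argument pins the scalar down to $1$.

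First I would set up the relevant representation. The hypothesis is that $(\RT_u \ot \RT_v)|_{\mc B}$ is irreducible for generic $u,v$. Since $\mc B$ is a right coideal subalgebra, $\Delta(\mc B) \subseteq \mc B \ot U_q(\mfg)$, so $(\RT_{\eta u} \ot \RT_{\eta/v})(\Delta(b))$ genuinely makes sense as an action of $b \in \mc B$ (the first tensor factor receives an element of $\mc B$). I would consider the operator-valued function
\eq{
M(u,v) := R_{21}(\tfrac uv)\, K_1(u)\, R(uv)\, K_2(v) \in \End(\K^N)^{\ot 2}
}
and show that, upon multiplying by an appropriate permutation (or simply reading \eqref{RE} in the form \eqref{PRE}), both sides of \eqref{RE} intertwine the representation $b \mapsto (\RT_{\eta u} \ot \RT_{\eta/v})(\Delta(b))$ on the left with $b \mapsto (\RT_{\eta/u} \ot \RT_{\eta v})(\Delta(b))$ on the right; here one uses the substitutions $v \mapsto \eta u$, $u \mapsto \eta/u$ etc.\ carefully, matching spectral parameters so that the two K-matrices act at the correct arguments. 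The verification is a chain of three applications of the intertwining property: the boundary intertwining equation \eqref{intw-untw} for $K_1(u)$ and for $K_2(v)$ (using that $\RT_{\eta u}$ resp.\ $\RT_{\eta/u}$ enters $K(u)$), and the R-matrix intertwining relation \eqref{R:intw} for $R(uv)$ and $R_{21}(u/v)$ to move the braiding past the coproduct, repeatedly invoking $\Delta^{\rm op}$ vs.\ $\Delta$ as in \eqref{R:intw}. The coideal property guarantees all intermediate expressions still lie in the correct representation.

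Next I would invoke irreducibility: both sides of \eqref{RE}, viewed as matrices in $\End(\K^N)^{\ot 2}$ depending rationally on $u,v$, intertwine the same (generically irreducible) representation of $\mc B$ on $\K^N \ot \K^N$, so by Schur's lemma over the algebraically closed field $\K$ they agree up to a scalar function $c(u,v) \in \K^\times$. To show $c(u,v) = 1$ I would use a specialization/regularity argument analogous to the one used for the Yang-Baxter equation in Section~\ref{sec:Rint}: take determinants to see $c(u,v)$ is a root of unity (hence constant in $u/v$, being rational), then specialize to a convenient value, e.g.\ $u=v$, where $R_{21}(1) = R(1) = P$ by regularity \eqref{Ru:reg} and the two sides visibly collapse to the same expression $P K_1(u) R(u^2) K_2(u)$. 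Care is needed that $K(u)$ is invertible at generic $u$ (Lemma \ref{L:sol-intw}, or directly Schur's lemma), so dividing is legitimate.

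The main obstacle I anticipate is bookkeeping of the spectral parameters and the $\Delta$ vs.\ $\Delta^{\rm op}$ distinction: one must track which representation $\RT_{\,?}$ is attached to each tensor slot at each stage, ensure the scaling parameter $\eta$ enters consistently (the boundary equation relates $\RT_{\eta u}$ and $\RT_{\eta/u}$, so the natural variables are $u, v$ with the two K-matrices at $u$ and $v$ and R-matrices at $u/v$ and $uv$), and confirm that the coideal inclusion really does land the $\mc B$-part in the first tensor factor in the orientation matching our conventions for $\Delta$ and for the form \eqref{PRE} of the RE. Once the representation-theoretic setup is correct, the rest is the standard Schur-plus-normalization argument and should be routine.
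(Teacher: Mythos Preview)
Your proposal is correct and takes essentially the same approach as the paper: both sides of \eqref{RE} intertwine the same pair of $\mc B$-representations, Schur's lemma gives a rational scalar, the determinant argument makes it a root of unity hence constant, and the specialization $u=v$ with $R(1)=P$ fixes it to $1$. The only adjustment is the bookkeeping you already flagged: the correct end-to-end intertwining is from $(\RT_{\eta u} \ot \RT_{\eta v})(\Delta^{\rm op}(b))$ to $(\RT_{\eta/u} \ot \RT_{\eta/v})(\Delta^{\rm op}(b))$, not the $\eta/v$, $\eta v$ variant you sketched.
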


\begin{proof}
The main idea of this proof first appeared in \cite[Sec.~2.2]{DeMk} and \cite[Sec.~3]{DeGe}; it is an adaptation of an argument by Jimbo \cite[Proof of Prop.~3]{Ji2}. 
Let $u$ and $v$ be generic. Now note that \eqref{R:intw} is equivalent to
\eq{ \label{R:intw2} 
R_{21}(\tfrac uv)\,( \RT_{1/u} \ot \RT_{1/v} )( \Delta (a))  = ( \RT_{1/u} \ot \RT_{1/v})(\Delta^{\rm op} (a))\, R_{21}(\tfrac uv) \qu\text{for all } a \in U_q({\mfg}). 
}
Since $\mc{B}$ is a right coideal subalgebra of $U_q({\mfg})$, for all $b \in \mc{B}$ we have
\eqa{
K_2(v) (\RT_{\eta \, u} \ot \RT_{\eta \, v})(\Delta^{\rm op}(b))  &= (\RT_{\eta \, u} \ot \RT_{\eta/v})(\Delta^{\rm op}(b)) K_2(v), \label{K-intw-1} \\
K_1(u) (\RT_{\eta \, u} \ot \RT_{\eta \, v})(\Delta(b)) &= (\RT_{\eta/u} \ot \RT_{\eta \, v})(\Delta(b)) K_1(u). \label{K-intw-2}
}
Denote the left and right hand sides of \eqref{RE} by ${\rm RE}_1(u,v)$ and ${\rm RE}_2(u,v)$, respectively.
Applying ${\rm RE}_1(u,v)$ to $(\RT_{\eta \, u} \ot \RT_{\eta \, v}) (\Delta^{\rm op}(b))$ we obtain
\begin{eqnarray*}
&& \hspace{-3cm} R_{21}(\tfrac{u}{v})\,  K_1(u)\, R(uv)\, K_2(v)\, (\RT_{\eta \, u} \ot \RT_{\eta \, v}) (\Delta^{\rm op}(b)) = \\
& \overset{\eqref{K-intw-1}}{=}& \hspace{-.3cm} R_{21}(\tfrac{u}{v})\, K_1(u)\, R(uv)\, (\RT_{\eta \, u} \ot \RT_{\eta/v}) (\Delta^{\rm op}(b)) \, K_2(v) 
 \\
& \overset{\eqref{R:intw}}{=}& \hspace{-.3cm} R_{21}(\tfrac{u}{v})\, K_1(u)\, (\RT_{\eta \, u} \ot \RT_{\eta/v}) (\Delta(b))\, R(uv) \, K_2(v) 
\\
& \overset{\eqref{K-intw-2}}{=}& \hspace{-.3cm} R_{21}(\tfrac{u}{v})\, (\RT_{\eta/u} \ot \RT_{\eta/v}) (\Delta(b))\,K_1(u)\,  R(uv) \, K_2(v) 
\\
& \overset{\eqref{R:intw2}}{=}& \hspace{-.3cm} (\RT_{\eta/u} \ot \RT_{\eta/v}) (\Delta^{\rm op}(b))\,R_{21}(\tfrac{u}{v})\,  K_1(u)\,  R(uv) \, K_2(v) ,
\end{eqnarray*}
for all $b\in \mc{B}$. 
In a similar way we can show ${\rm RE}_2(u,v)$ also intertwines $(\RT_{\eta \, u} \ot \RT_{\eta \, v}) (\Delta^{\rm op}(b))$ with $ (\RT_{\eta/u} \ot \RT_{\eta/v}) (\Delta^{\rm op}(b))$ for all $b\in \mc{B}$. 

Since $(\RT_u \ot \RT_v)|_{\mc{B}}$ is irreducible, it follows that $\RT_u|_{\mc B}$ and $\RT_v|_{\mc B}$ are also irreducible and, by Lemma \ref{L:sol-intw}, both $K(u)$ and $K(v)$ are invertible. Hence ${\rm RE}_2(u,v)$ is also invertible and ${\rm RE}_2(u,v)^{-1}{\rm RE}_1(u,v)$ intertwines $(\RT_{\eta \, u} \ot \RT_{\eta \, v}) (\Delta^{\rm op}(\mc{B}))$ with itself. By Schur's lemma ${\rm RE}_1(u,v)=\zeta(u,v)\,{\rm RE}_2(u,v)$ for some $\zeta(u,v) \in \K$. 
It remains to show that $\zeta(u,v)=1$. 
Note that $\zeta(u,v)$ depends rationally on $u$ and $v$ since the operators appearing in \eqref{RE} depend rationally on $u$ and $v$.
Taking the determinant of both sides of \eqref{RE} yields $\zeta(u,v)^{N^2}=1$. 
Because the set of $N^2$-th roots of unity is discrete in $\K$, it follows that $\zeta(u,v) = \zeta$ must be constant.
Now by virtue of \eqref{Ru:reg}, taking $u=v$ in both sides of \eqref{RE} we obtain $\zeta =1 $ as required.
\end{proof}

To demonstrate the analogous statement in the twisted case, {\it i.e.}~that a solution of \eqref{intw-tw} is also a solution of \eqref{tRE}, we need the additional lemma stated below.

\begin{lemma} \label{L:R-intw}
The following identities hold for all $a\in U_q({\mfg})$ and generic values of $u$ and $v$:
\eqa{
(R(uv)^{-1})^{\t_2}  (\RT_{u} \ot \RT^{\,\t}_{1/v})(\id \ot S) (\Delta^{\rm op}(a)) &= (\RT_{u} \ot \RT^{\,\t}_{1/v})(\id \ot S) (\Delta(a))\, (R(uv)^{-1})^{\t_2}, \hspace{-3mm}  \label{R-intw-1} \\
\hspace{-1mm} (R_{21}(uv)^{-1})^{\t_1}  (\RT^{\,\t}_{1/u} \ot \RT_{v})(S \ot \id)(\Delta(a)) & =(\RT^{\,\t}_{1/u} \ot \RT_{v})(S \ot \id)(\Delta^{\rm op}(a)) (R_{21}(uv)^{-1})^{\t_1} , \hspace{-3mm}   \label{R-intw-2} \\
R_{21}(\tfrac{u}{v})^\t\, (\RT^{\t}_{1/u} \ot \RT^{\,\t}_{1/v})(S \ot S)(\Delta(a)) &= (\RT^{\,\t}_{1/u} \ot \RT^{\,\t}_{1/v})(S \ot S)(\Delta^{\rm op}(a))\, R_{21}(\tfrac{u}{v})^\t . \label{R-intw-3}
}
\end{lemma}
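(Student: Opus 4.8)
The three identities \eqref{R-intw-1}, \eqref{R-intw-2}, \eqref{R-intw-3} are all obtained from the intertwining property \eqref{R:intw} of the R-matrix together with the $\psi$-skewed self-duality of $\RT_u$ (Lemma \ref{L:Tu-dual}) and standard Hopf-algebraic manipulations with the antipode. The guiding principle is that for any representation $\varrho$ of $U_q(\mfg)$, the dual representation $\varrho^*$ (here realized concretely as $a \mapsto \varrho^{\,\t}(S(a))$ after identifying $(\K^N)^* \cong \K^N$) satisfies a companion of \eqref{R:intw}, and the spectral shift implicit in \eqref{dual:3} produces exactly the spectral arguments appearing above. I would prove \eqref{R-intw-1} first, then deduce \eqref{R-intw-2} by conjugating with $P$ and applying $\psi$ (using $\psi^2=\id$ and $P$-related symmetries), and finally obtain \eqref{R-intw-3} by combining the two.

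\textbf{Step 1: the basic dual intertwining relation.} Start from \eqref{R:intw} with $u,v$ replaced by suitable arguments. Apply $\id \ot \t$ to \eqref{R:intw} (transposition in the second tensor factor only), using $(\Delta(a))$ written out via Sweedler notation $\Delta(a) = \sum a_{(1)} \ot a_{(2)}$. Transposing the second factor turns $\RT_v(a_{(2)})$ into $\RT_v^{\,\t}(a_{(2)})$ but reverses the order of products in the second tensorand, which is precisely what converts a representation into an anti-representation; composing with $S$ (an algebra anti-homomorphism) restores a genuine representation, now $a \mapsto \RT_v^{\,\t}(S(a))$. One must also track that $\Delta^{\rm op}$ and $\Delta$ swap roles under $(\id \ot S)$ composed with transposition because $S$ intertwines $\Delta$ with $\Delta^{\rm op}$ after a flip: $(S \ot S) \circ \Delta^{\rm op} = \Delta \circ S$, equivalently $\Delta(S(a)) = (S \ot S)(\Delta^{\rm op}(a))$. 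Carefully bookkeeping these gives that $R(uv)^{\t_2}$ intertwines the appropriate twisted tensor-product actions; then \eqref{Ru:unit} (unitarity) and \eqref{Ru:PT-symm} convert $R(uv)^{\t_2}$ into $(R(uv)^{-1})^{\t_2}$ with the stated spectral arguments $u$ and $1/v$. The spectral argument $1/v$ (rather than $v$) appears exactly because, in the self-duality statement \eqref{dual:3}, the dual of $\RT_v$ involves $\RT_{\wt q^2 v}^{\,\t} \circ S$, but here we have absorbed factors so that the naked relation \eqref{R-intw-1} carries $\RT^{\,\t}_{1/v}$; consistency of these normalizations with \eqref{Ru:C-symm} must be checked, which is the one genuinely fiddly bookkeeping point.

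\textbf{Step 2: deduce \eqref{R-intw-2} and \eqref{R-intw-3}.} Conjugate \eqref{R-intw-1} by $P$ and use $P R(w)^{\t_2} P = R_{21}(w)^{\t_1}$ (a consequence of \eqref{Ru:PT-symm}) together with the fact that $P$ swaps the two tensor factors, turning $(\id \ot S)$ into $(S \ot \id)$ and exchanging the spectral arguments; this yields \eqref{R-intw-2}. For \eqref{R-intw-3}, apply $(\id \ot S) \circ (\text{transpose in factor }1)$ to \eqref{R-intw-2}, or equivalently combine \eqref{R-intw-1} and \eqref{R-intw-2}: transposing both factors and applying $S\ot S$ to both, using $(S\ot S)\circ \Delta = \Delta^{\rm op}\circ S$ once more and $R(w)^{\t_1 \t_2} = R(w)^\t = R_{21}(w)$ from \eqref{Ru:PT-symm}, produces the symmetric statement with $R_{21}(\tfrac uv)^\t$ and $(S\ot S)(\Delta(a))$ on one side. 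Each of these reductions is a short symmetry argument once Step 1 is in place.

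\textbf{Main obstacle.} The conceptual content is entirely in Step 1; the difficulty is purely the careful tracking of (i) the spectral-parameter shifts — making sure that transposition in one tensor factor, combined with $S$ and with unitarity \eqref{Ru:unit}, lands exactly on the arguments $u$, $1/v$, $uv$ as written, rather than on $\wt q^{\pm 2}$-shifted versions — and (ii) the antipode–coproduct compatibility $\Delta\circ S = (S\ot S)\circ\Delta^{\rm op}$, which is responsible for every swap between $\Delta$ and $\Delta^{\rm op}$ in the three displayed identities. Neither is deep, but getting all signs of exponents and all $\mathrm{op}$'s consistent simultaneously is where an error would most likely creep in, so I would verify Step 1 by also checking it directly on the generators $x_i, y_i, k_i$ using the explicit formulas \eqref{Hopfalg}, \eqref{rep:A}–\eqref{affrep} and the self-duality computations in the proof of Lemma \ref{L:Tu-dual}, which already contain exactly the needed $C$-conjugation identities.
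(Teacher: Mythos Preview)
Your overall strategy is sound and would succeed, but you have introduced an unnecessary detour that is the source of the ``main obstacle'' you identify. The three identities \eqref{R-intw-1}--\eqref{R-intw-3} follow from the basic intertwining relation \eqref{R:intw} together with elementary transpose/antipode manipulations; the $\psi$-skewed self-duality (Lemma~\ref{L:Tu-dual}) and C-symmetry \eqref{Ru:C-symm} are \emph{not} needed, and invoking them is precisely what makes you worry about $\wt q^{\pm 2}$-shifts. There are no such shifts: the argument $1/v$ in $\RT^{\,\t}_{1/v}$ arises simply because if you substitute $v\to 1/v$ in \eqref{R:intw} (equivalently in its inverse form \eqref{Rinv:intw}) the R-matrix argument becomes $uv$, and then partial transposition in the second factor does not touch the spectral parameter at all.

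The paper's proof makes this cleaner by isolating the key partial-transpose identity
\[
X^{\t_2}(Y\ot Z^\t)=\big((Y\ot\Id)\,X\,(\Id\ot Z)\big)^{\t_2},
\]
noting that $a\mapsto(\RT_u\ot\RT^{\,\t}_{1/v})(\id\ot S)(\Delta^{\rm op}(a))$ is an algebra homomorphism (so it suffices to check on generators), and then verifying \eqref{R-intw-1} on $x_i,y_i,k_i$ directly from \eqref{Hopfalg} and \eqref{Rinv:intw}. Your ``backup'' generator check is thus the paper's primary argument, and it goes through without any reference to $C$ or $\wt q$. For \eqref{R-intw-2} the paper, like you, conjugates by $P$ and swaps $u\leftrightarrow v$; the application of $\psi$ you mention is superfluous. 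For \eqref{R-intw-3} the paper takes a shorter route than your proposed combination of \eqref{R-intw-1} and \eqref{R-intw-2}: it is directly equivalent to \eqref{R:intw} upon full transposition, conjugation by $P$, and the coalgebra-antiautomorphism property $(S\ot S)\circ\Delta=\Delta^{\rm op}\circ S$ of the antipode.
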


\begin{proof}
To prove \eqref{R-intw-1}, we will repeatedly use the identity 
\eq{ \label{partialtranspose} 
X^{\t_2} (Y \ot Z^\t) = ((Y \ot \Id_V) X (\Id_V \ot Z))^{\t_2} \in \End(V \otimes V), 
}
where $V$ is any vector space, $X \in \End(V \ot V)$ and $Y,Z \in \End(V)$.
First of all, note that
\[ 
a \mapsto (\RT_{u} \ot \RT^{\,\t}_{1/v})(\id \ot S) (\Delta^{\rm op}(a)) 
\]
defines an algebra homomorphism: $U_q({\mfg}) \to \End(\K^N)$.
To establish this it is sufficient to show that it preserves products. 
This follows from a straightforward argument involving \eqref{partialtranspose}, the fact that $S$ is an algebra antiautomorphism of $U_q({\mfg})$ and transposition is an algebra antiautomorphism of $\End(\K^N)$.
Hence it suffices to prove \eqref{R-intw-1} for the generators $x_i$, $y_i$ and $k_i$ ($i \in I$).
Note that \eqref{R:intw} is equivalent to
\eq{ \label{Rinv:intw} R(\tfrac uv)^{-1} (\RT_u \ot \RT_v)(\Delta(a)) = (\RT_u \ot \RT_v)(\Delta^{\rm op}(a)) R(\tfrac uv)^{-1}, \qq \text{for all } a \in U_q({\mfg}). }
To prove \eqref{R-intw-1} for $a = x_i$, note that
\[ 
(\id \ot S) (\Delta^{\rm op}(x_i)) = x_i \ot k_i^{-1} - 1 \ot k_i^{-1} x_i, \qq (\id \ot S) (\Delta(x_i)) = x_i \ot 1 - k_i \ot k_i^{-1} x_i 
\]
so that
\begin{align*}
&\hspace{-20pt} (R(uv)^{-1})^{\t_2} (\RT_{u} \ot \RT^{\,\t}_{1/v})(\id \ot S) (\Delta^{\rm op}(x_i)) = \\
&\overset{\hphantom{\eqref{partialtranspose}}}{=} (R(uv)^{-1})^{\t_2} (\RT_{u}(x_i) \ot \RT^{\,\t}_{1/v}(k_i^{-1})) - (R(uv)^{-1})^{\t_2} (\Id \ot \RT^{\,\t}_{1/v}(k_i^{-1} x_i)) \\
&\overset{\eqref{partialtranspose}}{=} \Big(  (\Id \ot \RT_{1/v}(k_i^{-1})) R(uv)^{-1} (\RT_{u}(x_i) \ot \Id) - (\Id \ot \RT_{1/v}(k_i^{-1} x_i)) R(uv)^{-1} \Big)^{\t_2} \\
&\overset{\eqref{Rinv:intw}}{=} \Big(  (\RT_{u}(x_i) \ot \Id) R(uv)^{-1} - (\Id \ot \RT_{1/v}(k_i^{-1})) R(uv)^{-1} (\RT_{u}(k_i) \ot \RT_{1/v}(x_i)) \Big)^{\t_2} \\
&\overset{\eqref{Rinv:intw}}{=} \Big((\RT_{u}(x_i) \ot \Id) R(uv)^{-1} - (\RT_{u}(k_i) \ot \Id) R(uv)^{-1} (\Id \ot \RT_{1/v}(k_i^{-1}x_i)) \Big)^{\t_2} \\
&\overset{\eqref{partialtranspose}}{=} (\RT_{u}(x_i) \ot \Id) (R(uv)^{-1})^{\t_2} - (\RT_{u}(k_i) \ot \RT_{1/v}^\t(k_i^{-1}x_i)) (R(uv)^{-1})^{\t_2} \\
&\overset{\hphantom{\eqref{partialtranspose}}}{=} (\RT_{u} \ot \RT^{\,\t}_{1/v})(\id \ot S) (\Delta(x_i))\, (R(uv)^{-1})^{\t_2}.
\end{align*}
The analogous calculation for $y_i$ is very similar and we leave it to the reader.
Finally, we have $(\id \ot S) (\Delta^{\rm op}(k_i)) = (\id \ot S) (\Delta(k_i)) = k_i \ot k_i^{-1}$ so that
\begin{align*}
(R(uv)^{-1})^{\t_2} (\RT_{u} \ot \RT^{\,\t}_{1/v})(\id \ot S) (\Delta^{\rm op}(k_i)) 
&\overset{\hphantom{\eqref{partialtranspose}}}{=} (R(uv)^{-1})^{\t_2} (\RT_{u}(k_i) \ot \RT^{\,\t}_{1/v}(k_i^{-1})) \\
&\overset{\eqref{partialtranspose}}{=} \Big( (\Id \ot \RT_{1/v}(k_i^{-1})) R(uv)^{-1} (\RT_{u}(k_i) \ot \Id) \Big)^{\t_2} \\
&\overset{\eqref{Rinv:intw}}{=} \Big( (\RT_u(k_i) \ot \Id) R(uv)^{-1} (\Id \ot \RT_{1/v}(k_i^{-1})) \Big)^{\t_2} \\
&\overset{\eqref{partialtranspose}}{=}  (\RT_{u}(k_i) \ot \RT^{\,\t}_{1/v}(k_i^{-1}))\, (R(uv)^{-1})^{\t_2} \\
&\overset{\hphantom{\eqref{partialtranspose}}}{=} (\RT_{u} \ot \RT^{\,\t}_{1/v})(\id \ot S) (\Delta(k_i))\, (R(uv)^{-1})^{\t_2}.
\end{align*}
To obtain \eqref{R-intw-2} we only need to conjugate \eqref{R-intw-1} with $P$ and swap $u$ and $v$. 
Finally, \eqref{R-intw-3} is equivalent to \eqref{R:intw}; this follows by transposing, conjugating by $P$ and using that the antipode is a coalgebra antiautomorphism.
\end{proof}

\begin{prop} \label{P:B-intw-tw}
Let $\mc{B} \subset U_q({\mfg})$ be a right coideal subalgebra and let $(K(u),\eta) \in \End(\K^N) \times \K^\times  $ be a nontrivial solution of the twisted boundary intertwining equation \eqref{intw-tw}. 
If $(\RT_u \ot \RT_v)|_{\mc{B}}$ is irreducible for generic values of $u$ and $v$, then $K(u)$ is a solution of the twisted reflection equation~\eqref{tRE}.
\end{prop}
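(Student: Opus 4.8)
The plan is to mimic the proof of Proposition \ref{P:B-intw-untw} verbatim, but replacing the untwisted intertwining identities \eqref{K-intw-1}--\eqref{K-intw-2} by their twisted analogues and replacing the R-matrix identities \eqref{R:intw}, \eqref{R:intw2} by the identities \eqref{R-intw-1}--\eqref{R-intw-3} of Lemma \ref{L:R-intw}. Concretely, since $\mc B$ is a right coideal subalgebra, $\Delta(b) \in \mc B \ot U_q(\mfg)$ for all $b \in \mc B$; combining this with the twisted intertwining equation \eqref{intw-tw} for $K(u)$ applied in the first tensor factor gives
\[
K_1(u)\,(\RT_{\eta u} \ot \RT_{\eta v})(\Delta(b)) = (\RT^{\,\t}_{\eta/u} \ot \RT_{\eta v})\big((S \ot \id)(\Delta^{\rm op}(b))\big)\,K_1(u),
\]
and similarly, using $\Delta^{\rm op}(b) = \sum b_{(2)} \ot b_{(1)}$ is \emph{not} in $\mc B \ot U_q(\mfg)$ in general; instead one applies $(\id \ot S)$ to the coideal property after flipping, obtaining the companion identity
\[
K_2(v)\,(\RT_{\eta u} \ot \RT^{\,\t}_{\eta/v})\big((\id \ot S)(\Delta(b))\big) = (\RT_{\eta u} \ot \RT^{\,\t}_{\eta/v})\big((\id \ot S)(\Delta^{\rm op}(b))\big)\,K_2(v).
\]
The first step is therefore to carefully record these two twisted $K$-matrix intertwining identities, checking the placement of $S$, the op/non-op coproduct and the transposes against the conventions in \eqref{intw-tw} and \eqref{def:b_j}.

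Next, I would work with the form \eqref{tREalt} of the twisted reflection equation rather than \eqref{tRE}, exactly as the remark after \eqref{tREalt} suggests, because then no use of unitarity or PT-symmetry is needed. Denote the left- and right-hand sides of \eqref{tREalt} by ${\rm tRE}_1(u,v)$ and ${\rm tRE}_2(u,v)$. The core computation is to show that ${\rm tRE}_1(u,v)$ and ${\rm tRE}_2(u,v)$ both intertwine the $\mc B$-action $(\RT_{\eta u} \ot \RT_{\eta v})(\Delta^{\rm op}(b))$ with $(\RT^{\,\t}_{\eta/u} \ot \RT^{\,\t}_{\eta/v})\big((S \ot S)(\Delta^{\rm op}(b))\big)$, say. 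One starts from ${\rm tRE}_1(u,v) = R_{21}(\tfrac uv)^{\t}\,K_1(u)\,(R(uv)^{-1})^{\t_2}\,K_2(v)$, pushes $(\RT_{\eta u}\ot\RT_{\eta v})(\Delta^{\rm op}(b))$ through $K_2(v)$ using the second identity above (with $\eta v \rightsquigarrow \eta/v$ in the second slot), then through $(R(uv)^{-1})^{\t_2}$ using \eqref{R-intw-1} (which converts $\Delta^{\rm op}$ to $\Delta$ on the $(\id\ot S)$-image), then through $K_1(u)$ using the first identity above, and finally through $R_{21}(\tfrac uv)^\t$ using \eqref{R-intw-3}. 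A parallel chain of four substitutions handles ${\rm tRE}_2(u,v) = K_2(v)\,(R_{21}(uv)^{-1})^{\t_1}\,K_1(u)\,R(\tfrac uv)$, now using \eqref{R-intw-2} for the middle $R$-factor and \eqref{R:intw} (equivalently \eqref{R-intw-3}) for the outer one. I expect that verifying the spectral-parameter bookkeeping — which argument of $\RT$ gets inverted and when the $\eta/u$ versus $\eta u$ shifts line up across the four-term chains — is the one genuinely fiddly point, and it is where the hypothesis that $\eta$ is common to the $K_1$ and $K_2$ factors is used.

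Once both sides intertwine the same pair of $\mc B$-representations, irreducibility of $(\RT_u \ot \RT_v)|_{\mc B}$ for generic $u,v$ forces, via Schur's lemma (exactly as invoked in the proof of Proposition \ref{P:B-intw-untw}, using that $\K$ is algebraically closed), that ${\rm tRE}_1(u,v) = \zeta(u,v)\,{\rm tRE}_2(u,v)$ for some scalar $\zeta(u,v) \in \K$; here one also needs that ${\rm tRE}_2(u,v)$ is invertible for generic $u,v$, which follows because $\RT_u|_{\mc B}$, $\RT_v|_{\mc B}$ are irreducible (being tensor factors of an irreducible) so $K(u)$, $K(v)$ are invertible by Lemma \ref{L:sol-intw}, and the $R$-matrix factors are invertible by \eqref{Ru:unit} together with \eqref{Ru:C-symm}. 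The final step is to pin down $\zeta \equiv 1$. As in the untwisted case, $\zeta(u,v)$ is rational in $u,v$, and taking determinants of \eqref{tREalt} gives $\zeta(u,v)^{N^2} = 1$ up to scalar factors coming from $\det$ of the $R$-matrices and their partial transposes; since the $N^2$-th roots of unity are discrete, $\zeta$ is a constant. To evaluate it I would specialize to $u = v$: then $R(\tfrac uv) = R(1) = P$ by regularity \eqref{Ru:reg}, and $R(\tfrac1{uv})^{\t_1} = R(u^{-2})^{\t_1}$ can be rewritten using \eqref{Ru:C-symm} (or handled directly via \eqref{Ru:PT-symm}), reducing \eqref{tREalt} at $u=v$ to an identity that manifestly holds, whence $\zeta = 1$. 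I would present this with the same \texttt{eqnarray*} display style as the proof of Proposition \ref{P:B-intw-untw}, annotating each equality with the reference to the identity used, and note explicitly at the end that the alternative form \eqref{tRE} then follows from \eqref{tREalt} by unitarity \eqref{Ru:unit} and PT-symmetry \eqref{Ru:PT-symm}.
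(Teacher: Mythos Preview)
Your approach is exactly the paper's: work with the alternative form \eqref{tREalt}, show both sides intertwine $(\RT_{\eta u}\ot\RT_{\eta v})(\Delta^{\rm op}(b))$ with $(\RT^{\,\t}_{\eta/u}\ot\RT^{\,\t}_{\eta/v})((S\ot S)(\Delta^{\rm op}(b)))$ via the four-step chain using the twisted $K$-identities together with \eqref{R-intw-1}, \eqref{R-intw-3} for one side and \eqref{R:intw}, \eqref{R-intw-2} for the other, then finish with Schur and the $u=v$ specialization.

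One concrete slip to fix: your displayed $K$-identities have the $\Delta$/$\Delta^{\rm op}$ placement wrong. Since $\mc B$ is a right coideal, $\Delta(b)=\sum b_{(1)}\ot b_{(2)}$ with $b_{(1)}\in\mc B$, so the two identities you need are
\[
K_1(u)\,(\RT_{\eta u}\ot\RT_{\eta v})(\Delta(b)) = (\RT^{\,\t}_{\eta/u}\ot\RT_{\eta v})\big((S\ot\id)(\Delta(b))\big)\,K_1(u),
\]
\[
K_2(v)\,(\RT_{\eta u}\ot\RT_{\eta v})(\Delta^{\rm op}(b)) = (\RT_{\eta u}\ot\RT^{\,\t}_{\eta/v})\big((\id\ot S)(\Delta^{\rm op}(b))\big)\,K_2(v),
\]
i.e.\ the same coproduct on both sides in each line, and the $K_2$-identity starts from the \emph{untwisted} second tensor factor. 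With these corrected, your chain goes through verbatim. You already flagged this as the point to check carefully, so no structural change is needed.
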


\begin{proof}
Let $u$ and $v$ be generic. We will use the alternative form of the twisted RE \eqref{tREalt}. We need to use \eqref{R-intw-1} and \eqref{R-intw-3}, as well as the identities
\eqa{
K_2(v) (\RT_{\eta \, u} \ot \RT_{\eta \, v})(\Delta^{\rm op}(b))  &= (\RT_{\eta \, u} \ot \RT^{\,\t}_{\eta/v})(\id \ot S)(\Delta^{\rm op}(b)) K_2(v), \label{K-intw-3} \\
K_1(u) (\RT_{\eta \, u} \ot \RT_{\eta \, v})(\Delta(b)) &= (\RT^{\,\t}_{\eta/u} \ot \RT_{\eta \, v})(S \ot \id)(\Delta(b)) K_1(u), \label{K-intw-4}
}
for $b \in \mc{B}$. Applying the left-hand side of \eqref{tREalt} to $(\RT_{\eta \, u} \ot \RT_{\eta \, v}) (\Delta^{\rm op}(b))$ we obtain
\begin{eqnarray*}
&& \hspace{-3cm} R_{21}(\tfrac{u}{v})^\t K_1(u)\, (R(uv)^{-1})^{\t_2} K_2(v)\, (\RT_{\eta \, u} \ot \RT_{\eta \, v}) (\Delta^{\rm op}(b)) = 
\\
& \overset{\eqref{K-intw-3}}{=}& \hspace{-.3cm} R_{21}(\tfrac{u}{v})^\t K_1(u)\, (R(uv)^{-1})^{\t_2}  (\RT_{\eta \, u} \ot \RT^{\,\t}_{\eta/v})(\id \ot S) (\Delta^{\rm op}(b))\, K_2(v) 
\\
& \overset{\eqref{R-intw-1}}{=}& \hspace{-.3cm} R_{21}(\tfrac{u}{v})^\t K_1(u)\, (\RT_{\eta \, u} \ot \RT^{\,\t}_{\eta/v})(\id \ot S) (\Delta(b))\, (R(uv)^{-1})^{\t_2} K_2(v) 
\\
& \overset{\eqref{K-intw-4}}{=}& \hspace{-.3cm} R_{21}(\tfrac{u}{v})^\t \, (\RT^{\,\t}_{\eta/u} \ot \RT^{\,\t}_{\eta/v})(S \ot S) (\Delta(b))\, K_1(u)\, (R(uv)^{-1})^{\t_2} K_2(v) \\
& \overset{\eqref{R-intw-3}}{=}& \hspace{-.3cm} (\RT^{\,\t}_{\eta/u} \ot \RT^{\,\t}_{\eta/v})(S \ot S) (\Delta^{\rm op}(b))\, R_{21}(\tfrac{u}{v})^\t  K_1(u)\, (R(uv)^{-1})^{\t_2} K_2(v) ,
\end{eqnarray*}
for all $b\in \mc{B}$. It follows from a similar computation using \eqref{R:intw} and \eqref{R-intw-2} that the right-hand side of \eqref{tREalt} also intertwines $ (\RT_{\eta \, u} \ot \RT_{\eta \, v}) (\Delta^{\rm op}(b))$ with $(\RT^{\,\t}_{\eta/u} \ot \RT^{\,\t}_{\eta/v})(S \ot S) (\Delta^{\rm op}(b))$. 
The rest of the proof is analogous to that of Proposition \ref{P:B-intw-untw}. \qedhere
\end{proof}

Analogous to Lemma \ref{lem:tw-untw} we have the following.

\begin{prop} \label{prop:intw:tw-untw}
Assume that $\mfg^{\rm fin}=\mfsl_2$, $\mfso_N$ or $\mfsp_N$. Let $\mc{B} \subset U_q({\mfg})$ be a right coideal subalgebra. 
Then $(\wt K(u),\wt \eta) := (C K(\wt q u),\wt q \eta)$ is a solution of \eqref{intw-tw} precisely if $(K(u),\eta)$ is a solution of \eqref{intw-untw}.
\end{prop}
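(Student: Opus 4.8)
The plan is to deduce the statement from the $\psi$-skewed self-duality of $\RT_u$ (Lemma \ref{L:Tu-dual}) together with the invertibility of $C$, after the key observation that in all three cases allowed by the hypothesis the diagram automorphism $\psi$ is trivial. Indeed, by \eqref{AutAelts} the automorphism $\psi$ is a nontrivial involution only in type ${\rm A}^{(1)}_{n}$ with $n \geq 2$; for $\wh\mfsl_2$ (type ${\rm A}^{(1)}_1$) it is an empty product and for $\wh\mfso_N$ and $\wh\mfsp_N$ it is defined to be $\id$. Hence the first step is to record that \eqref{dual:3} specializes here to the unadorned relation
\[ C\,\RT_u(a) = \RT^{\,\t}_{\wt q^2 u}(S(a))\, C \qquad \text{for all } a \in U_q(\mfg),\ u \in \K^\times, \]
and, since $C^2 = \vartheta\,\Id$ (noted just after \eqref{def:Cwtq}), $C$ is invertible, so this may equivalently be rewritten as $C^{-1}\RT^{\,\t}_{\wt q^2 u}(S(a))\,C = \RT_u(a)$.

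The second step is the forward implication, which is a direct manipulation of spectral parameters. Assuming $(K(u),\eta)$ solves \eqref{intw-untw}, I would substitute $u \mapsto \wt q u$ (legitimate since $\wt q \in \K^\times$) to get $K(\wt q u)\,\RT_{\eta\wt q u}(b) = \RT_{\eta/(\wt q u)}(b)\,K(\wt q u)$ for all $b \in \mc{B}$, multiply on the left by $C$, and apply the displayed relation with $a = b$ and with $u$ replaced by $\eta/(\wt q u)$ — so that $\wt q^2 u$ becomes $\wt q\eta/u$ — to turn the right-hand side into $\RT^{\,\t}_{\wt q\eta/u}(S(b))\,C\,K(\wt q u)$. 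Since $\eta\wt q u = (\wt q\eta)u$, the resulting identity is exactly \eqref{intw-tw} for the pair $(CK(\wt q u),\wt q\eta)$. The converse is the same chain of equalities read backwards: starting from \eqref{intw-tw} for $(CK(\wt q u),\wt q\eta)$, multiply on the left by $C^{-1}$, use the rearranged relation $C^{-1}\RT^{\,\t}_{\wt q\eta/u}(S(b))\,C = \RT_{\eta/(\wt q u)}(b)$, and then set $v = \wt q u$ (which again ranges over all of $\K^\times$) to recover \eqref{intw-untw} for $(K(u),\eta)$.

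I do not expect a genuine obstacle: once $\psi = \id$ is noted, the proposition is essentially a corollary of Lemma \ref{L:Tu-dual}. The only point that requires care — and the one I would spell out explicitly rather than leave to the reader — is the bookkeeping of which argument $\RT$ carries on each side after the substitution $u \mapsto \wt q u$ and after invoking self-duality at the shifted parameter $\eta/(\wt q u)$. It is also worth remarking, in contrast to Propositions \ref{P:B-intw-untw} and \ref{P:B-intw-tw}, that no irreducibility of $\RT_u|_{\mc{B}}$ (indeed not even the coideal property) is used; the equivalence holds at the level of the raw boundary intertwining equations.
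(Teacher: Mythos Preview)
Your proof is correct and follows exactly the approach of the paper, which simply states that the result follows immediately from \eqref{dual:3}; you have spelled out in detail the substitution $u \mapsto \wt q u$ and the application of self-duality at the shifted parameter that the paper leaves implicit. Your observation that $\psi = \id$ in all three cases is precisely the point that makes \eqref{dual:3} directly applicable without any further twist.
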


\begin{proof}
This follows immediately from \eqref{dual:3}.
\end{proof}

\begin{rmk} \label{R:uni-K}
There exist universal structures that are analogues of \eqref{R-uni} and \eqref{uni-YBE} for certain coideal subalgebras of quantum groups. 
Such universal structures have been proposed in \cite{DKM} and more recently in \cite{BaWa,BgKo2,Ko2}. 
More precisely, let $\mfg$ be an arbitrary symmetrizable Kac-Moody algebra. Following \cite[Defn.~4.12 and Cor.~7.7]{BgKo2} and analogous statements in \cite{Ko2}, we consider an associated quantized Kac-Moody algebra $\mc{U}= U_q(\mfg^{\rm ext})$ and a Satake diagram $(X,\tau)$ for the associated generalized Cartan matrix $A$. 
We assume the minimal realization $(\mfh^{\rm ext},\{h_i\}_{i \in I},\{\al_i\}_{i \in I})$ of $A$ is compatible with $\tau$.
In this case one may consider a right coideal subalgebra $\mc{B} = \mc{B}(X,\tau) \subseteq \mc{U}$ (essentially given by Definition \ref{defn:rCSA}) and a \emph{$\tau \tau_0$-universal K-matrix} exists, {\it i.e.}~an invertible element $\mc{K}$ in a suitable completion of $\mc{B} \ot \mc{U}$ satisfying
\begin{align}
\hspace{20mm} \mc{K}\,\Delta(b) &= (\id \ot \tau \tau_0)(\Delta(b))\,\mc{K} \qq \text{for all } b\in \mc{B} , \label{intw-uni}
\\
(\Delta\ot\id)(\mc{K}) &= \mc{R}^{\tau\tau_0}_{21}\,\mc{K}_{02}\,\mc{R}_{12}^{-1}, \label{DK1}
\\
(\id\ot\Delta)(\mc{K}) &= \mc{K}_{01}\mc{R}^{\tau\tau_0}_{12}\,\mc{K}_{02}\,\mc{R}_{21}. \label{DK2}
\end{align}
Here $\mc{R}$ is the universal R-matrix as in Remark \ref{R:uni-R}, we have written
$\mc{R}^{\tau\tau_0}_{12} = (\id \ot \tau\tau_0) (\mc{R}_{12})$ and $\mc{R}^{\tau\tau_0}_{21} = (\tau\tau_0\ot\id)(\mc{R}_{21})$,
and $\tau_0$ is a particular diagram automorphism satisfying certain compatibility conditions with $\tau$, see \cite[Sec.~7.1]{BgKo2}. 
It would be nice to generalize it to the larger class of coideal subalgebras defined in terms of generalized Satake diagrams studied in this paper.

Identity \eqref{DK2} implies that $\mc{K}$ satisfies the universal reflection equation
\eq{
\mc{K}_{01}\mc{R}^{\tau\tau_0}_{12}\,\mc{K}_{02}\,\mc{R}_{21} = \mc{R}_{12} \mc{K}_{02}\mc{R}^{\tau\tau_0}_{21}\,\mc{K}_{01} .\label{uni-RE}
}
If $\mc{U}$ is of finite type, for $\tau_0$ we may take the unique diagram automorphism such that the longest element $w_0 \in W$ satisfies $w_0(\al_i) = - \al_{\tau_0(i)}$ for all $i\in I$.
The existence of such a $\tau_0$ in the general Kac-Moody case is not proven.

Suppose that $\mc{U}$ is of affine type and its minimal realization is compatible with $\tau$. 
Let $\eps : \mc{B} \to \K$ be a one-dimensional representation. 
Then, upon applying $\eps \ot T_{u} \ot T_{v}$ to both sides of \eqref{uni-RE} and conjugating with $P$, one should obtain the untwisted reflection equation \eqref{RE} or the twisted reflection equation \eqref{tRE}, depending on the explicit form of $\tau\tau_0$. Similarly, applying $\eps \ot T_{u}$ to both sides of \eqref{intw-uni} one should obtain the untwisted intertwining equation \eqref{intw-untw} or the twisted intertwining equation \eqref{intw-tw}, or in other words, we expect that $K(u) = k_0(u) (\eps \ot T_{\eta u})(\mc{K})$ for some meromorphic function $k_0: \C \to \C$.  
If $\bm s \in \K^{I \backslash X}$ is such that $s_j \ne 0$ implies $j= \tau(j)$ (in particular, if $\bm s \in \mc{S}$) then we may choose $\tau_0 = \tau$, although this does not always produce the desired universal intertwining equation. The choice $\tau_0 = \psi$ seems to produce the correct intertwining equation, see Theorem \ref{thrm:solexistence}.
When $\mc{U}$ is of type ${\rm A}^{(1)}_{n>1}$ and $\tau \in \{ \id, \pi\}$ this imposes conditions $c_i = c_j$ for some $i,j \in I \backslash X$. 
It will become clear that in those cases it is in fact sufficient to consider QP algebras with $c_i=c_j$ for all $i,j \in I \backslash X$ (see Propositions \ref{prop:Htildeequiv} and \ref{prop:Gomegaequiv}), so this does not pose a restriction.~\rmkend
\end{rmk}

\begin{rmk}
In many cases it is relevant to consider R- and K-matrices acting on arbitrary vector spaces which carry irreducible representations of $U_q({\mfg})$ (or more generally of any symmetrizable quantized Kac-Moody algebra). 
Of a particular importance are R-matrices acting on tensor products of two vector spaces of different dimensions. For example, this enables one to generalize the construction of qKZ transport matrices, transfer matrices and Hamiltonians (see Section \ref{sec:qintsys}) as elements of $\End(V_1 \ot \cdots \ot V_L)$ with the vector spaces $\{V_i\}_{1\le i \le L}$ not all of the same dimension. 

In many instances higher-dimensional R- and K-matrices are related to their lower-dimensional counterparts by the so-called fusion rules or the bootstrap method, see {\it e.g.}~\cite{CDRS,FgKo,KRS,KuSk2,MeNe2,ZaZa}, see also \cite{Is,BbRg}. 
However it is unknown whether all solutions of higher-dimensional reflection equations can be obtained this way. 
An example is the K-matrix given in \cite[Eqns.~(17), (18), (20)]{BsKz1}. 
To obtain such K-matrices from a universal K-matrix formalism, it is crucial that, as in \cite{Ko2}, $\mc{K}$ lies in (a completion of) $\mc{B} \ot \mc{U}$, with $\mc{B}$ a right coideal subalgebra of a Hopf algebra $ \mc{U}$, and satisfies \eqref{DK1}. \hfill\rmkend
\end{rmk}


\subsection{Unitarity and regularity of K-matrices}

There exist notions of unitarity and regularity for K-matrices which are analogous to the corresponding properties for R-matrices \eqrefs{Ru:reg}{Ru:unit}. 
These properties are important in the theory of integrable systems with reflecting boundary conditions, see Section \ref{sec:qintsys}. 
First we deal with the untwisted case.
Let $(K(u),\eta)$ be a solution of \eqref{intw-untw}. We say that $K(u)$ is \emph{unitary} if, for generic values of $u$, 
\eq{ 
\label{Ku:unit} K(u)^{-1} = K(u^{-1}) .
}

\begin{lemma} \label{L:K-unit}
Let $\mc{B} \subset U_q({\mfg})$ be a right coideal subalgebra. Suppose $\RT_u|_{\mc{B}}$ is irreducible for generic values of $u$. If $(K(u),\eta)$ is nontrivial solution of \eqref{intw-untw}, then then $K(u^{-1})$ equals $K(u)^{-1}$ up to a scalar and there exists a rational function $m : \K \to \K$ such that $\wt K(u) := m(u) K(u)$ is unitary. 
\end{lemma}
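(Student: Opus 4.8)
The plan is to mimic the argument used for the R-matrix in Section \ref{sec:Rint} and for the reflection equation in Proposition \ref{P:B-intw-untw}. First I would observe that $(K(u^{-1}),\eta)$ is itself a solution of a boundary intertwining equation of the same type. Indeed, starting from \eqref{intw-untw}, namely $K(u)\,\RT_{\eta u}(b) = \RT_{\eta/u}(b)\,K(u)$ for all $b\in\mc B$, and replacing $u$ by $u^{-1}$, one gets $K(u^{-1})\,\RT_{\eta/u}(b) = \RT_{\eta u}(b)\,K(u^{-1})$. Comparing with the original equation for $K(u)$, the matrix $K(u^{-1})$ intertwines the pair of representations $(\RT_{\eta/u},\RT_{\eta u})$ in the opposite direction to $K(u)$; equivalently $K(u^{-1})K(u)$ commutes with $\RT_{\eta u}(b)$ for all $b\in\mc B$, and $K(u)K(u^{-1})$ commutes with $\RT_{\eta/u}(b)$ for all $b\in\mc B$.

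Next I would invoke Schur's lemma exactly as in Lemma \ref{L:sol-intw}. Since $\RT_u|_{\mc B}$ is irreducible for generic $u$ and $K(u)$ is invertible for generic $u$ (Lemma \ref{L:sol-intw}), the product $K(u^{-1})K(u)$ is a nonzero intertwiner of the irreducible representation $\RT_{\eta u}|_{\mc B}$ with itself over the algebraically closed field $\K$, hence a scalar: $K(u^{-1})K(u) = \zeta(u)\,\Id$ for some $\zeta(u)\in\K^\times$. This $\zeta$ depends rationally on $u$ because $K(u)$ does and because one can extract $\zeta(u)$ by, say, taking the trace: $\zeta(u) = \tfrac1N \Tr(K(u^{-1})K(u))$. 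So $K(u^{-1}) = \zeta(u) K(u)^{-1}$, which is the first assertion.

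For the second assertion, I want to rescale $K(u)$ by a rational function $m$ so that the rescaled matrix satisfies \eqref{Ku:unit}. If $\wt K(u) = m(u)K(u)$, then $\wt K(u^{-1})\wt K(u) = m(u^{-1})m(u)\zeta(u)\,\Id$, so I need a rational $m:\K\to\K$ with $m(u)m(u^{-1}) = \zeta(u)^{-1}$. The consistency condition is that $\zeta$ itself must satisfy $\zeta(u)\zeta(u^{-1}) = 1$: this follows by applying the relation $K(u^{-1}) = \zeta(u)K(u)^{-1}$ twice, i.e. $K(u) = K((u^{-1})^{-1}) = \zeta(u^{-1})K(u^{-1})^{-1} = \zeta(u^{-1})\zeta(u)^{-1}K(u)$, forcing $\zeta(u^{-1}) = \zeta(u)$... wait, that gives $\zeta(u^{-1})=\zeta(u)$, so in fact one needs $\zeta(u^{-1})=\zeta(u)$ rather than $\zeta(u)\zeta(u^{-1})=1$; let me restate. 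The relation $K(u)=\zeta(u^{-1})K(u^{-1})^{-1}$ combined with $K(u^{-1})=\zeta(u)K(u)^{-1}$ gives $K(u)=\zeta(u^{-1})\zeta(u)^{-1}K(u)$, hence $\zeta(u^{-1})=\zeta(u)$, i.e.\ $\zeta$ is invariant under $u\mapsto u^{-1}$. Then one needs a rational $m$ with $m(u)m(u^{-1})=\zeta(u)^{-1}$; such an $m$ exists by factoring $\zeta(u)^{-1}$, written as a ratio of products of factors of the form $(u-a)$, and pairing each factor $(u-a)$ with $(u-a^{-1})$ (using $\zeta(u^{-1})=\zeta(u)$ to guarantee the multiset of roots is stable under $a\mapsto a^{-1}$), choosing one representative from each pair for $m$ and absorbing leading constants by a square root in $\K$ (which is algebraically closed).

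The main obstacle I anticipate is the last bookkeeping step: showing carefully that the $u\mapsto u^{-1}$-symmetry of $\zeta$ forces its zero/pole multiset to be symmetric under $a\mapsto a^{-1}$ so that a genuine rational (not merely algebraic) square-root-type factorization $m$ exists. In practice one does not even need $m$ rational for the paper's purposes — an algebraic $m$ over $\K$ suffices since $\K$ is algebraically closed — but I would still want to handle the possible zero or pole of $\zeta$ at the fixed points $u=\pm1$ with care, since there the pairing is degenerate and one must check the multiplicity is even (it is, because at $u=1$ the relation $K(1^{-1})K(1)=\zeta(1)\Id$ reads $K(1)^2=\zeta(1)\Id$, so $\zeta(1)$ is automatically a square, and similarly at $u=-1$).
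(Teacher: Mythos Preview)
Your approach is essentially the paper's: substitute $u \to u^{-1}$ in \eqref{intw-untw}, use Schur's lemma to obtain $K(u^{-1})K(u) = \zeta(u)\,\Id$ with $\zeta$ rational and $\zeta(u^{-1}) = \zeta(u)$, then produce a rational $m$ with $m(u)\,m(u^{-1}) = \zeta(u)^{-1}$ by factoring over the algebraically closed field $\K$ and pairing zeros and poles under $a \leftrightarrow a^{-1}$. The paper phrases the first step as showing that $K(u^{-1})^{-1}$ satisfies the same intertwining relation as $K(u)$, but this is equivalent to your formulation.

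There is one gap in your final paragraph. You correctly identify that the delicate point is the parity of the order of $\zeta$ at the fixed points $u = \pm 1$, but your justification is wrong: the relation $K(\pm 1)^2 = \zeta(\pm 1)\,\Id$ only says that the \emph{value} $\zeta(\pm 1)$ is a square in $\K$, which is vacuous since $\K$ is algebraically closed, and says nothing when $\zeta$ actually has a zero or pole there. The evenness of the order follows from the functional equation $\zeta(u) = \zeta(u^{-1})$ alone: writing $\zeta(u) = (u-1)^k h(u)$ with $h(1)$ finite and nonzero and using $u^{-1}-1 = -u^{-1}(u-1)$, one gets $h(u) = (-u)^{-k} h(u^{-1})$, and evaluating at $u=1$ forces $(-1)^k = 1$; the case $u=-1$ is analogous. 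The paper extracts this parity from the unique factorization $n(u) = c\prod_a (u-a)^{j_a}$ of $n := \zeta^{-1}$ and the constraints that $n(u) = n(u^{-1})$ imposes on the exponents $j_a$.
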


\begin{proof}
Let $u$ be generic.
Substitute $u \to u^{-1}$ in \eqref{intw-untw} and left- and right-multiply with $K(u^{-1})^{-1}$. This gives
\[
K(u^{-1})^{-1}\, \RT_{\eta \, u}(b) = \RT_{\eta/u}(b) \, K(u^{-1})^{-1} \qq \text{for all } b \in \mc{B}.
\]
Hence, by irreducibility and Schur's lemma, $K(u)$ equals $K(u^{-1})^{-1}$ up to a scalar as required.

For the second part of the lemma, we may assume that $K(u)^{-1} = n(u) K(u^{-1})$ for some rational function $n(u)$ (since $K(u)$ is rational in $u$). 
By substituting $u \to u^{-1}$ it follows that $n(u^{-1})=n(u)$.
It is sufficient to show that there exists a rational function $m : \K \to \K$ such that $n(u) = m(u)m(u^{-1})$.
 Because $\K[u]$ is a unique factorization domain whose irreducibles are precisely the linear polynomials over $\K$ (here we need that $\K$ is algebraically closed), we have
\[
n(u) = c \prod_{a \in \K } (u-a)^{j_a} 
\]
with unique $c\in \K^\times$ and unique $j_a \in \Z$, finitely many of which are nonzero.
Again by unique factorization, $n(u)=n(u^{-1})$ implies that
\[
\prod_{a \in  \K^\times} (-a)^{j_a} = 1, \qq -2j_0 = \sum_{a \in \K^\times} j_a, \qq j_a = j_{a^{-1}} \qu\text{for } a \in \K^\times. 
\]
Hence $j_1$ and $j_{-1}$ are even.
Choose a subset $\mathbb{H} \subset \K^\times \backslash \{ \pm 1 \}$ such that for all $a \in \K^\times \backslash \{ \pm 1 \}$ either $a \in \mathbb{H}$ or $a^{-1} \in \mathbb{H}$.
Let $\wt c \in \K^\times$ be a square root of $c  (-1)^{-j_1/2}\prod_{a \in \mathbb{H}} (-a)^{-j_a}$ and define
\[
m(u) = \wt c (u-1)^{j_1/2} (u+1)^{j_{-1}/2} \prod_{a \in \mathbb{H}} (u-a)^{j_a}. 
\]
Then it can be straightforwardly checked that $m(u)\,m(u^{-1}) = n(u)$.
\end{proof}

Suppose $(K(u),\eta)$ is a nontrivial solution of \eqref{intw-untw}. If both $K(1)$ and $K(-1)$ are equal to $\pm \Id$ we call $K(u)$ \emph{doubly regular}; if precisely one of $K(1)$ and $K(-1)$ equals $\pm \Id$ we call $K$ \emph{singly regular}; finally, if neither $K(1)$ nor $K(-1)$ equals $\pm \Id$ we call $K$ \emph{non-regular}.

\begin{lemma} \label{L:regularity}
Let $\mc{B} \subset U_q({\mfg})$ be a right coideal subalgebra, $(K(u),\eta)$ a nontrivial solution of \eqref{intw-untw} and $\zeta \in \{\pm 1\}$. Suppose $\RT_{\zeta \eta}|_{\mc{B}}$ is irreducible. Then $K(\zeta)$ equals $\Id$ up to a scalar. Moreover, if $K(u)$ is unitary then $K(\zeta) = \pm \Id$.
\end{lemma}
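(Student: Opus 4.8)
The statement has two parts: first that $K(\zeta)$ is a scalar multiple of $\Id$ whenever $\RT_{\zeta\eta}|_{\mc{B}}$ is irreducible, and second that this scalar is $\pm 1$ if $K(u)$ is additionally unitary. For the first part, the natural approach is to specialize the boundary intertwining equation \eqref{intw-untw} at $u = \zeta$. Since $\zeta \in \{\pm 1\}$ we have $\zeta^{-1} = \zeta$, so at $u = \zeta$ both occurrences of the spectral parameter in \eqref{intw-untw}, namely $\eta\,u$ and $\eta/u$, collapse to the single value $\zeta\eta$. Thus \eqref{intw-untw} becomes
\[
K(\zeta)\,\RT_{\zeta\eta}(b) = \RT_{\zeta\eta}(b)\,K(\zeta) \qq \text{for all } b \in \mc{B},
\]
i.e.\ $K(\zeta)$ commutes with the entire image $\RT_{\zeta\eta}(\mc{B})$. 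By hypothesis $\RT_{\zeta\eta}|_{\mc{B}}$ is irreducible, and since $\K$ is algebraically closed, Schur's lemma (as invoked elsewhere in this section, e.g.\ in the proof of Lemma~\ref{L:sol-intw}) gives $K(\zeta) \in \K\,\Id$. One subtlety to address: $K(u)$ is defined as a matrix-valued rational function, so I should note that $K$ has no pole at $u = \zeta$ — this follows because $(K(u),\eta)$ is a \emph{nontrivial} solution and, by Lemma~\ref{L:sol-intw}, $K(u)$ is determined up to scalar; after clearing denominators (using $d_{\rm eff}$) one may assume $K(u)$ has polynomial entries, and then $K(\zeta)$ is a well-defined matrix, which we have just shown is a scalar; moreover that scalar is nonzero since $K(u)$ is invertible for generic $u$ and $\det K(u)$, being a nonzero polynomial, does not vanish identically — but in fact it could a priori vanish at $\zeta$, so more carefully: the scalar $\lambda$ with $K(\zeta) = \lambda\,\Id$ could be zero. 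To rule this out I would observe that if $K(u)$ has been normalized to polynomial entries with content $1$ (no common polynomial factor), then $K(\zeta)$ cannot be the zero matrix, hence $\lambda \ne 0$, giving $K(\zeta) = \lambda\,\Id$ with $\lambda \in \K^\times$.

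\textbf{The unitarity refinement.} For the second claim, suppose in addition that $K(u)$ is unitary, i.e.\ $K(u)^{-1} = K(u^{-1})$ for generic $u$. Evaluating this identity at $u = \zeta$ (again $\zeta^{-1} = \zeta$) gives $K(\zeta)^{-1} = K(\zeta)$, hence $K(\zeta)^2 = \Id$. Combined with $K(\zeta) = \lambda\,\Id$ from the first part, this forces $\lambda^2 = 1$, i.e.\ $\lambda = \pm 1$, so $K(\zeta) = \pm\Id$. The only point requiring a word of care is that the unitarity relation $K(u)^{-1} = K(u^{-1})$ is a priori an identity of rational functions valid for generic $u$; to legitimately set $u = \zeta$ I should rewrite it as $K(u)\,K(u^{-1}) = \Id$, an identity of matrix-valued rational functions with no pole at generic points, and then invoke that both sides are rational in $u$ and agree on a Zariski-dense set, hence agree wherever both are defined — in particular at $u = \zeta$, provided $K$ is regular there, which we established above.

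\textbf{Main obstacle.} There is no deep obstacle here; the argument is a direct specialization of \eqref{intw-untw} plus Schur's lemma. The only genuinely delicate point — and the one I would be most careful to state precisely — is the regularity of $K(u)$ at $u = \zeta$ and the non-vanishing of the resulting scalar. This is handled by the standard normalization: multiply $K(u)$ by a rational function to clear denominators and remove common polynomial factors, so that $K(u)$ has polynomial entries with trivial content; then $K(\zeta)$ is a well-defined, nonzero matrix, and the two parts of the lemma follow as above. I would phrase the proof so that this normalization is invoked at the start, then carry out the two specializations ($u=\zeta$ in \eqref{intw-untw}, then $u=\zeta$ in the unitarity relation) in sequence.
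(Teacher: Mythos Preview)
Your proposal is correct and follows essentially the same argument as the paper: specialize \eqref{intw-untw} at $u=\zeta$ (using $\zeta^{-1}=\zeta$) and apply Schur's lemma for the first part, then specialize unitarity at $u=\zeta$ to get $K(\zeta)^2=\Id$ and hence $\lambda=\pm1$ for the second. Your additional care about regularity of $K$ at $u=\zeta$ and non-vanishing of the scalar goes somewhat beyond what the paper writes out explicitly, but it is reasonable and does no harm.
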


\begin{proof}
Setting $u=\zeta$ in \eqref{intw-untw} we see that $K(\zeta)$ and $\Id$ intertwine the representation $\RT_{\zeta \eta}$ with itself. The first statement now follows using Schur's lemma. 
The second statement follows immediately by observing that $K(\zeta)=K(\zeta)^{-1}$ and $K(\zeta) = n_\zeta \Id$ for $n_\zeta \in \K^\times$ imply that $n_\zeta=n_\zeta^{-1}$.
\end{proof}

We now present an analogue of Lemma \ref{L:K-unit} for the twisted case.

\begin{lemma} \label{L:K-unit-tw}
Let $\mc{B} \subset U_q({\mfg})$ be a right coideal subalgebra such that $\psi(\mc{B})=\mc{B}$. Suppose $\RT_u|_{\mc{B}}$ is irreducible for generic values of $u$. If $(K(u),\eta)$ is  nontrivial solution of \eqref{intw-tw}, then there exists a rational function $n_{\rm tw} : \K \to \K$ such that
\eq{ \label{K-unit-tw} 
(C^{-1} K(\wt q^{-1} u))^{-1}  = n_{\rm tw}(u)\, C^{-1} K(\wt q^{-1} u^{-1}). 
}
\end{lemma}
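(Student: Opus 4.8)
The statement \eqref{K-unit-tw} is the twisted analogue of the unitarity relation in Lemma~\ref{L:K-unit}, and the natural approach is to reduce it to that untwisted case by means of Proposition~\ref{prop:intw:tw-untw} (when $\mfgf = \mfsl_2$, $\mfso_N$ or $\mfsp_N$) or directly via a Schur's lemma argument (needed in any case for $\mfsl_{N>2}$, where the self-duality machinery of Section~\ref{sec:T-dual} does not apply in the same way). The cleanest uniform route is the direct one: define $K^\flat(u) := C^{-1} K(\wt q^{-1} u)$ and show that $K^\flat(u)$ intertwines $\RT_{\eta u}|_{\mc{B}}$ with itself after a suitable substitution, so that Schur's lemma forces $K^\flat(u^{-1})$ to be a scalar multiple of $K^\flat(u)^{-1}$.

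\textbf{Key steps.} First I would substitute $u \to u^{-1}$ in the twisted boundary intertwining equation \eqref{intw-tw}, giving $K(\eta u^{-1})\,\RT_{\eta/u}(b) = \RT^\t_{\eta u}(S(b))\,K(\eta u^{-1})$ — wait, more carefully, \eqref{intw-tw} reads $K(u)\,\RT_{\eta u}(b) = \RT^\t_{\eta/u}(S(b))\,K(u)$, so replacing $u$ by $u^{-1}$ yields $K(u^{-1})\,\RT_{\eta/u}(b) = \RT^\t_{\eta u}(S(b))\,K(u^{-1})$. Second, I would transpose the original equation \eqref{intw-tw}: since $S$ is an algebra antiautomorphism and $\t$ an algebra antiautomorphism of $\End(\K^N)$, the map $b \mapsto \RT^\t_{\eta/u}(S(b))$ is an algebra homomorphism, and transposing gives $\RT_{\eta u}(b)^\t\,K(u)^\t = K(u)^\t\,\RT_{\eta/u}(S(b))$. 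Third — and this is where the hypothesis $\psi(\mc{B})=\mc{B}$ enters — I would use the $\psi$-skewed self-duality of $\RT_u$, i.e.~\eqref{dual:3}: $C\,\RT_u(\psi(a)) = \RT^\t_{\wt q^2 u}(S(a))\,C$, to convert transposed-and-antipoded representations back into untwisted ones. Concretely, applying \eqref{dual:3} with $a$ ranging over $\mc{B}$ (legitimate because $\psi(\mc{B})=\mc{B}$) one rewrites $\RT^\t_{\eta u}(S(b))$ in terms of $\RT_{\eta u/\wt q^2}(\psi^{-1}(b)) = \RT_{\eta u /\wt q^2}(\psi(b))$ conjugated by $C$. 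Combining the substituted equation, the transposed equation and \eqref{dual:3}, one obtains that the matrix $M(u) := C^{-1}K(u^{-1})^{-1}\cdots$ (the precise combination to be assembled by bookkeeping the $\wt q$-shifts) commutes with $\RT_{\eta' u}(\mc{B})$ for an appropriate $\eta'$. By irreducibility of $\RT_u|_{\mc B}$ for generic $u$ and Schur's lemma over the algebraically closed field $\K$, this matrix is scalar. Finally, tracking the spectral-parameter shifts (the powers of $\wt q$) carefully and relabelling $u$ shows that the scalar relation is exactly \eqref{K-unit-tw}, with $n_{\rm tw}(u)$ the resulting rational function of $u$; its rationality follows because $K(u)$, $C$ and $\wt q$ are all rational (indeed $C$ and $\wt q$ constant) in $u$.

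\textbf{Main obstacle.} The genuinely delicate part is the bookkeeping of the spectral-parameter arguments under the interplay of the substitution $u\mapsto u^{-1}$, the transposition (which via \eqref{dual:3} introduces a shift by $\wt q^{\pm2}$), and the factor $\wt q^{-1}$ in the definition of $K^\flat$; one must verify that all these shifts conspire to produce $C^{-1}K(\wt q^{-1}u)$ and $C^{-1}K(\wt q^{-1}u^{-1})$ as stated, rather than some other shifted pair. A secondary subtlety is that \eqref{dual:3} is stated for $U_q(\mfg)$ and one needs that conjugating the representation $\RT_u|_{\mc B}$ by $C$ and shifting the spectral parameter still gives an irreducible representation of $\mc B$ (so that Schur's lemma applies) — this is immediate since $C$ is invertible and irreducibility is preserved under such isomorphisms, but it should be noted explicitly. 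Everything else is routine manipulation of the defining relations.
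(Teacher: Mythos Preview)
Your proposal identifies all the right ingredients: the $\psi$-skewed self-duality \eqref{dual:3}, the hypothesis $\psi(\mc{B})=\mc{B}$, and Schur's lemma over the algebraically closed field $\K$. This is indeed the paper's approach. However, your step~2 (transposing \eqref{intw-tw} to obtain an equation for $K(u)^\t$) is a detour that does not directly contribute: there is no a~priori relation between $K(u)^\t$ and either $K(u)^{-1}$ or $K(u^{-1})$, so the transposed equation does not feed into the comparison you want.

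The paper's execution is more targeted. It shows directly that the two matrices
\[
C^{-1} K(\wt q^{-1} u^{-1})\, C^{-1} \qq \text{and} \qq K(\wt q^{-1} u)^{-1}
\]
intertwine the \emph{same} pair of $\mc{B}$-representations, namely $b \mapsto \RT^{\,\t}_{\eta \wt q u^{-1}}(S(b))$ and $b \mapsto \RT_{\eta \wt q^{-1} u}(b)$. For the second matrix this is immediate: left- and right-multiply \eqref{intw-tw} by $K(u)^{-1}$ and then substitute $u \to \wt q^{-1} u$. For the first matrix, one applies \eqref{intw-tw} with $b$ replaced by $\psi(b)$ (legitimate since $\psi(\mc{B})=\mc{B}$), then sandwiches with $C^{-1}$ on both sides and uses \eqref{dual:3} once on each side to remove the $C^{-1}$'s; this is where the $\wt q^{\pm1}$ shifts enter, and they conspire exactly as in the statement. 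Schur's lemma then forces the two matrices to agree up to a scalar, and rationality of $n_{\rm tw}$ follows from that of $K$. In short: replace your step~2 by ``invert \eqref{intw-tw}''; the transposition you had in mind is already absorbed into the two applications of \eqref{dual:3} on the $C^{-1}K(\cdot)C^{-1}$ side.
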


\begin{proof}
Let $u$ be generic and let $b \in \mc{B}$.
From \eqref{intw-tw} and $\mc{B} = \psi ( \mc{B} )$ we derive
\eq{ \label{intw-tw-sigma} 
K(u)\, \RT_{\eta \, u}(\psi(b))  = \RT_{\eta/u}^\t (S^{-1}(\psi(b)))\, K(u)
}
so that, using Lemma \ref{L:Tu-dual},
\eqn{
C^{-1} K(\wt q^{-1} u^{-1}) C^{-1} \RT^{\,\t}_{\eta \wt q u^{-1}}(S^{-1}(b)) 
&\overset{\eqref{dual:3}}{=} C^{-1} K(\wt q^{-1} u^{-1})\, \RT_{\eta \wt q^{-1} u^{-1}}(\psi(b))\, C^{-1} \\
&\overset{\eqref{intw-tw-sigma}}{=} C^{-1} \RT_{\eta \wt q u}^\t(S^{-1}(\psi(b))) \,K(\wt q^{-1} u^{-1})\,  C^{-1} \\
&\overset{\eqref{dual:2}}{=}  \RT_{\eta \wt q^{-1} u}(b) \, C^{-1} K(\wt q^{-1} u^{-1}) \, C^{-1},
}
where we have used that $\psi \in \Aut_{\rm Hopf}(U_q({\mfg}))$ is involutive.
On the other hand, by left- and right-multiplying \eqref{intw-tw} by $K(u)^{-1}$ and replacing $u$ by $\wt q^{-1} u$ we obtain
$$ 
K(\wt q^{-1} u)^{-1} \, \RT^{\,\t}_{\eta \wt q u^{-1}}(S^{-1}(b)) = \RT_{\eta \wt q^{-1} u}(b) \, K(\wt q^{-1} u)^{-1}. 
$$
Because $C^{-1} K(\wt q^{-1} u^{-1}) C^{-1}$ and $K(\wt q^{-1} u)^{-1}$ intertwine the same representations of $\mc{B}$, by Schur's Lemma, they agree up to a scalar factor.
As $K(u)$ depends rationally on $u$, so does this factor.  
\end{proof}

For $\mfg^{\rm fin} = \mfsl_2$, $\mfso_N$ and $\mfsp_N$ the condition $\psi(\mc{B}) = \mc{B}$ in Lemma \ref{L:K-unit-tw} is trivially true; in this case Lemma~\ref{L:K-unit-tw} can also be obtained by combining Proposition \ref{prop:intw:tw-untw} and Lemma \ref{L:K-unit}.
For $\mfsl_{N>2}$ we will see in Lemma \ref{lem:twistedunitarityslN} that, for all QP algebras $\mc{B} =B_{\bm c,\bm s}(X,\tau)$ associated to the twisted reflection equation, the condition $\psi(\mc{B}) = \mc{B}$ is equivalent to a simple condition on $\bm c$.

\begin{samepage}
\begin{rmk} \mbox{}
\begin{enumerate}
\item 
It is possible to renormalize a twisted K-matrix so that $(C^{-1} K(\wt q^{-1} u))^{-1} = C^{-1} K(\wt q^{-1} u^{-1})$, {\it i.e.}~such that $n_{\rm tw}(u)=1$; the proof for Lemma \ref{L:K-unit} can be straightforwardly modified.
However it is more convenient for the presentation of our results to have a different proportionality factor $n_{\rm tw}(u)$ for different $\mc{B}$.
\item 
A notion of regularity for the twisted case, namely that $K(\pm \wt q^{-1})$ equals $C$ up to a scalar exists only if $\mfg^{\rm fin}=\mfsl_2$, $\mfso_N$ or $\mfsp_N$
In those cases it can be simply derived from the transformation appearing in Lemma \ref{lem:tw-untw}, provided that the corresponding untwisted K-matrix at $\pm 1$ equals $\Id$ up to a scalar. 
If $\mfg^{\rm fin}=\mfsl_{N>2}$, because $\psi$ does not fix $\mc{B} = B_{\bm c,\bm s}(X,\tau)$ pointwise, such a regularity property cannot be derived from the intertwining equation. 
Our case-by-case results (see Section \ref{sec:K:tw}) indeed confirm that $K(\pm \wt q^{-1})$ is not a scalar multiple of $C$. \hfill \rmkend
\end{enumerate}
\end{rmk}
\end{samepage}


\section{Rotation and dressing of quantum pair algebras and K-matrices} \label{sec:rotdress}

Let $\Sigma \leq \Aut_{\rm Hopf}(U_q(\mfg))$.
Coideal subalgebras $\mc{B}, \mc{B}' \subseteq U_q(\mfg)$ are called {\it $\Sigma$-equivalent} if there exists $\phi \in \Sigma$ such that $\phi(\mc{B}) = \mc{B}'$. 
Note that \eqref{Hopfalgautodecomp} implies that essentially there are two types of equivalences of coideal subalgebras of $U_q(\mfg)$: those induced by diagram automorphisms and those induced by characters of the root lattice; we will refer to them as \emph{rotational} and \emph{dressing} equivalences, respectively. 
In this section we will see that both types of equivalences are crucial for studying the intertwiners of $\RT_u|_{B_{\bm c,\bm s}}$.

\subsection{Rotation} \label{sec:rotation}
QP algebras which are $\Aut(A)$-equivalent are related by natural transformations on the underlying generalized Satake diagram and the tuples $\bm c$ and $\bm s$. 
Furthermore, for each generalized Cartan matrix $A$ of untwisted affine type, there exist a subgroup $\Sigma_A \le \Aut(A)$ such that the solutions of the boundary intertwining equation \eqref{intw-untw} or \eqref{intw-tw} of $\Sigma_A$-equivalent QP algebras are, ignoring a $u$-deformation, similar or congruent as matrices, respectively. Thus the classification of solutions of these intertwining equations is simplified: we only need to solve them for QP algebras $B_{\bm c,\bm s}(X,\tau)$ for representatives $(X,\tau)$ of the $\Sigma_A$-equivalence classes.


\subsubsection{$\Aut(A)$-equivalences of QP algebras} \label{sec:rotation:alg}

Here we review how $\Aut(A)$-equivalent coideal subalgebras $B_{\bm c,\bm s}$ are related to each other by transformations of the underlying generalized Satake diagram $(X,\tau)$ and the tuples $\bm c \in \mc{C}$, $\bm s \in \mc{S}$.
Let $\sigma \in \Aut(A)$ and $(X,\tau) \in \GSat(A)$. Then $(X^\si,\tau^\si) \in \GSat(A)$ where 
\eq{ 
X^\si := \si(X), \qq \tau^\si := \si \tau \si^{-1}. 
}
This follows immediately from
\eq{ \label{wXsigma} 
a_{\si(i) \si(j)}=a_{ij} \qu \text{for all } i,j \in I \text{ and} \qu  
\rho^\vee_{X^\si} = \si(\rho^\vee_X).
}
Furthermore, two generalized Satake diagrams $(X,\tau), (X',\tau')$ associated to $A$ are called $\Aut(A)$-equivalent if there exists $\sigma \in \Aut(A)$ such that $X' = X^\si$ and $\tau' = \tau^\si$.

For $\bm c \in (\K^\times)^{I \backslash X}$, $\bm s \in \K^{I \backslash X}$ and $\si \in \Aut(A)$, define $\si(\bm c) \in (\K^\times)^{I \backslash X}$, $\si(\bm s) \in \K^{I \backslash X}$ by $(\si(\bm c))_{\si(i)} = c_i$ and $(\si(\bm s))_{\si(i)} = s_i$ for $i \in I \backslash X$.
With respect to $\Aut(A)$-equivalence classes of coideal subalgebras of $U_q(\bm\hat \mfg)$ we have the following.

\begin{prop} \label{prop:rotateCSA}
Let $(X,\tau) \in \GSat(A)$ and $\si \in \Aut(A)$. 
Given $\bm c \in \mathcal{C}$ and $\bm s \in \mathcal{S}$, we have $\si(\bm c) \in \si(\mathcal C)$, $\si(\bm s) \in \si(\mathcal S)$ and
\eq{ \label{eq:rotateCSA}
\si( B_{\bm c,\bm s}(X,\tau)) = B_{\si(\bm c),\si(\bm s)}(X^\si,\tau^\si).
} 
\end{prop}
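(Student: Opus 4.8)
The plan is to verify \eqref{eq:rotateCSA} generator by generator, using the explicit description of $B_{\bm c,\bm s}$ in Definition \ref{defn:rCSA} together with the fact (\eqref{sigmaUq}) that $\si$ acts on the Chevalley generators of $U_q(\mfg)$ by permuting indices. First I would record the three easy ingredients. The membership $\si(\bm c)\in\si(\mathcal C)$ and $\si(\bm s)\in\si(\mathcal S)$ is immediate from the definitions \eqref{C-family}, \eqref{S-family} once one observes that $\si$ carries the distinguished index sets for $(X,\tau)$ to those for $(X^\si,\tau^\si)$; concretely $\si(I_{\rm diff})=I_{\rm diff}'$, $\si(I_{\rm nsf})=I_{\rm nsf}'$ and $\si(I^*)$ is a valid choice of orbit representatives for $\tau^\si$, all of which follow from \eqref{wXsigma} and the definitions \eqref{Idiffdefn}--\eqref{Insdefn}. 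Next, because $T_i$ satisfies $\si T_i \si^{-1}=T_{\si(i)}$ (this is a standard compatibility of Lusztig's symmetries with diagram automorphisms, also visible from \eqref{defn:Ti} and \eqref{sigmaUq}), one gets $\si T_{w_X}\si^{-1}=T_{\si(w_X\si^{-1})}=T_{w_{X^\si}}$ using \eqref{wXsigma}; likewise $\si\,\om_q\,\si^{-1}=\om_q$ from \eqref{defn:tw} and $\si\,\tau\,\si^{-1}=\tau^\si$. Hence
\[
\si\,\theta_q(X,\tau)\,\si^{-1} = T_{w_{X^\si}}\,\tau^\si\,\om_q = \theta_q(X^\si,\tau^\si).
\]

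With these in hand the computation on generators is routine. For $i\in X$ one has $\si(b_i)=\si(y_i)=y_{\si(i)}$, which is the generator $b_{\si(i)}$ of $B_{\si(\bm c),\si(\bm s)}(X^\si,\tau^\si)$ since $\si(i)\in X^\si$. For $i\in I\backslash X$ we compute, writing $\theta_q=\theta_q(X,\tau)$ and $\theta_q'=\theta_q(X^\si,\tau^\si)$,
\[
\si(b_i) = \si(y_i) + c_i\,\si\big(\theta_q(y_i k_i) k_i^{-1}\big) - s_i\,\si(k_i^{-1}) = y_{\si(i)} + c_i\,\theta_q'\big(y_{\si(i)}k_{\si(i)}\big)k_{\si(i)}^{-1} - s_i\,k_{\si(i)}^{-1},
\]
where in the middle term I used $\si\theta_q\si^{-1}=\theta_q'$ and $\si(y_ik_i)=y_{\si(i)}k_{\si(i)}$, $\si(k_i^{-1})=k_{\si(i)}^{-1}$. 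By definition of $\si(\bm c),\si(\bm s)$ this equals $b_{\si(i)}$ for the quadruple $(X^\si,\tau^\si,\si(\bm c),\si(\bm s))$. Finally $\si(U_q(\mfn_X^+))=\langle x_{\si(i)}\rangle_{i\in X}=U_q(\mfn_{X^\si}^+)$ trivially, and $\si(U_q(\mfh)^{\theta_q})=U_q(\mfh)^{\theta_q'}$ because $\si$ is an algebra automorphism conjugating $\theta_q$ to $\theta_q'$ (alternatively, directly from \eqref{Uqhthetaq:decomposition} and the fact that $\si$ permutes the $k_i$ and sends $I^*$ to a set of $\tau^\si$-orbit representatives). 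Since $\si$ is an algebra automorphism it maps the subalgebra generated by these elements onto the subalgebra generated by their images, giving \eqref{eq:rotateCSA}.

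I do not anticipate a serious obstacle here; the statement is essentially bookkeeping once the conjugation identity $\si\theta_q(X,\tau)\si^{-1}=\theta_q(X^\si,\tau^\si)$ is established. The one point deserving a little care is the compatibility $\si T_i\si^{-1}=T_{\si(i)}$ of diagram automorphisms with the braid group action, together with the identity $w_{X^\si}=\si w_X\si^{-1}$ in the Weyl group (so that $T_{w_X}$ conjugates correctly); both reduce to the fact that $\si$ respects the Cartan matrix, i.e.\ \eqref{wXsigma}. A second minor point is making sure that the same subset $I^*$-independence used in Proposition \ref{prop:theta:properties} is invoked so that $U_q(\mfh)^{\theta_q}$ is genuinely intrinsic to $\theta_q$ and not to a choice of representatives; this is already built into \eqref{Uqhthetaq:decomposition}. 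Everything else is a direct substitution.
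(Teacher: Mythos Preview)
Your proof is correct and follows essentially the same approach as the paper: establish the conjugation identity $\si\,\theta_q(X,\tau)\,\si^{-1}=\theta_q(X^\si,\tau^\si)$ via $\si T_i\si^{-1}=T_{\si(i)}$ and $\si\om_q\si^{-1}=\om_q$, then check that $\si$ carries each class of generators of $B_{\bm c,\bm s}(X,\tau)$ to the corresponding generators of $B_{\si(\bm c),\si(\bm s)}(X^\si,\tau^\si)$, and finally verify $\si(\mc C)=\mc C(X^\si,\tau^\si)$, $\si(\mc S)=\mc S(X^\si,\tau^\si)$ from the definitions. The only cosmetic point is the notation $T_{\si(w_X\si^{-1})}$, which should read $T_{\si w_X\si^{-1}}$ with $\si w_X\si^{-1}=w_{X^\si}$ in $W$.
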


\begin{proof}
From $r_{\si(i)} = \si r_i \si^{-1}$ (as maps on $\mfh^*$) and $a_{\si(i)\si(j)}=a_{ij}$ one obtains that $T_{\si(i)} = \si T_i \si^{-1}$ so that, combined with $\tau(X)=X$, $T_{w_X} = \si T_{w_X} \si^{-1}$.
Now using \eqref{wXsigma} and the fact that $\tau^\si = \si \tau \si^{-1}$ it follows that 
\[ 
\theta_q^\si := \theta_q(X^\si,\tau^\si) = \si  \theta_q(X,\tau)  \si^{-1}.
\]
Since $\si$ permutes $Q$, it follows that $k_{\si(\mu)} = \si(k_\mu)$ for all $\mu \in Q$ so that $U_q(\mfh)^{\theta_q^\si} = \si (U_q(\mfh)^{\theta_q})$.
Also, from \eqref{sigmaUq} we immediately obtain $U_q(\mfg_{X^\si})= \si(U_q(\mfg_X))$. 

Fix tuples $\bm c \in (\K^\times)^{I\backslash X}$ and $\bm s \in \K^{I \backslash X}$. 
It follows that
\[ 
\si(b_j) = b_{\si(j)}|_{(X,\tau) \to (X^\si,\tau^\si)}, 
\]
which proves that \eqref{eq:rotateCSA} holds for $\bm c \in (\K^\times)^{I\backslash X}$ and $\bm s \in \K^{I\backslash X}$. 
To complete the proof, it suffices to prove that
\[ 
\mc{C}(X^\si,\tau^\si) = \si(\mc{C}(X,\tau)), \qq \mc{S}(X^\si,\tau^\si) = \si(\mc{S}(X,\tau)), 
\]
which follows directly from the definitions \eqrefs{Idiffdefn}{S-family}, $a_{\si(i)\si(j)}=a_{ij}$ and \eqref{wXsigma}.
\end{proof}


\subsubsection{Rotational symmetries of the representation $\RT_u$} \label{sec:natreprot}

Recall the definitions made in \eqref{AutAelts}.

\begin{defn}
Define $\Sigma_A \leq \Aut(A)$ as follows:
\eq{ \label{def:Sigma}
\Sigma_A = \left\{  \begin{aligned} 
\langle \rho \rangle & \cong \Cyc_N \; && \text{for }\wh\mfsl_N, \\
\langle \flL \rangle & \cong \Cyc_2 && \text{for }\wh\mfso_{2n+1}, \\
\langle \pi \rangle & \cong \Cyc_2 && \text{for }\wh\mfsp_{2n}, \\
\langle \flL, \pi \rangle  & \cong \Dih_4 && \text{for }\wh\mfso_{2n}.
\end{aligned} \right. \qq
}
For the specified generators of $\Sigma_A$ and generic $u$, define $Z^\si(u) \in \GL(\K^N)$ as follows:
\begin{flalign}
\nonumber \qq\qq\qq && Z^\rho(u) & = \sum_{i=1}^n E_{i,i+1} + u\, E_{N1} && \text{for }\wh\mfsl_{N}, \\
\label{def:Zsigma} && Z^{\flL}(u) &= \sum_{|i|<n} E_{ii} - u E_{n,-n} - u^{-1} E_{-n,n}  &&  \text{for }\wh\mfso_{N}, \\
\nonumber && Z^\pi(u) &= \sum_{i \in \langle N \rangle} (-1)^{i+1} (-\vartheta \, u)^{\delta_{i<0}} E_{i-\sgn(i) (n+1),i} && \text{for $\wh\mfsp_{2n}$ and $\wh\mfso_{2n}$}. && \qq\qq \defnend
\end{flalign}
\end{defn}

We wish to extend these assignments to a map $\xi_u : \Sigma_A \to \GL(\K^N): \si \mapsto Z^\si(u)$. 
If $\Sigma_A$ is cyclic, each of its elements can be written as $\si^i$ with $0 \le i < |\Sigma_A|$ with $\si$ the generator specified in \eqref{def:Sigma}.
For $\wh\mfso_{2n}$ we use the following expressions for the elements of $\Sigma_A$ in terms of the generators:
\[ 
\Sigma_A = \{ \id, \flL, \pi, \flL \pi, \pi \flL, \flL \pi \flL, \pi \flL \pi, \flL \pi \flL \pi \}. 
\]
Correspondingly we extend the assignments \eqref{def:Zsigma} multiplicatively, obtaining $Z^\si(u) \in \GL(\K^N) \subset \End(\K^N)$ for all $\si \in \Sigma_A$.
Note that the centre of $\GL(\K^N)$ is isomorphic to, and will be identified with, $\K^\times$.

\begin{prop} \label{prop:injective}
Let $u$ be generic. The composition of the above map $\xi_u$ with the canonical map $\GL(\K^N) \twoheadrightarrow \GL(\K^N) / \K^\times$ defines a group homomorphism $\Sigma_A \hookrightarrow \GL(\K^N)/\K^\times$.
\end{prop}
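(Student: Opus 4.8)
The plan is to verify the group homomorphism property and injectivity directly from the explicit formulas in \eqref{def:Zsigma}. First I would establish that $\xi_u$ is a well-defined group homomorphism into $\GL(\K^N)$ modulo scalars. For the cyclic cases $\wh\mfsl_N$, $\wh\mfso_{2n+1}$ and $\wh\mfsp_{2n}$, since $\Sigma_A = \langle \si \rangle$ with $\si$ the specified generator, the assignment $\si^i \mapsto Z^\si(u)^i$ is automatically multiplicative; the only thing to check is that $Z^\si(u)^{|\Sigma_A|}$ is a scalar matrix. For $\wh\mfsl_N$, a direct computation gives $Z^\rho(u)^N = u\,\Id$ (the operator $Z^\rho(u)$ is a cyclic shift weighted by $u$ on the last step). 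For $\wh\mfso_{2n+1}$ and $\wh\mfsp_{2n}$ one checks $Z^{\flL}(u)^2 = \Id$ and $Z^\pi(u)^2$ is a scalar (I expect $\pm\Id$ or a power of $u$ times $\Id$) respectively, again by a short matrix multiplication using $\vartheta_i$, $\sgn(i)$ bookkeeping. For $\wh\mfso_{2n}$, where $\Sigma_A = \langle \flL, \pi \rangle \cong \Dih_4$, I would check that the matrices $Z^{\flL}(u)$ and $Z^\pi(u)$ satisfy (up to scalars) the defining relations of $\Dih_4$ listed in Table~\ref{AutA}: $Z^{\flL}(u)^2 \in \K^\times \Id$, $Z^\pi(u)^2 \in \K^\times \Id$ and $(Z^\pi(u) Z^{\flL}(u))^4 \in \K^\times \Id$. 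This guarantees that the multiplicative extension along the chosen word representatives is consistent, i.e.\ independent of the choice of word, so $\xi_u$ composed with $\GL(\K^N) \twoheadrightarrow \GL(\K^N)/\K^\times$ is a genuine homomorphism.

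Next I would prove injectivity of the induced map $\bar\xi_u : \Sigma_A \to \GL(\K^N)/\K^\times$. Equivalently, no nontrivial $\si \in \Sigma_A$ has $Z^\si(u)$ a scalar matrix. The cleanest route is to exploit that each $Z^\si(u)$ is (up to the $u$-dependent entries) a \emph{permutation-type} matrix: it sends each standard basis vector $e_j$ to a scalar multiple of $e_{\pi_\si(j)}$, where $\pi_\si$ is a permutation of $\langle N \rangle$ that agrees with the underlying diagram automorphism $\si$ under the identification of $\langle N \rangle$ with the weights of $\RT$. A scalar matrix fixes every basis line, so $Z^\si(u) \in \K^\times \Id$ forces $\pi_\si = \id$, hence $\si = \id$ since $\Sigma_A$ acts faithfully on $I$ (and the correspondence between the diagram action and the action on $\langle N \rangle$ is faithful). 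For $\wh\mfsl_N$ this is immediate: $Z^\rho(u)$ is a single $N$-cycle, so $Z^\rho(u)^i$ has no diagonal entries for $0 < i < N$. For the orthogonal and symplectic cases one reads off from \eqref{def:Zsigma} that $\flL$ and $\pi$ induce, respectively, the flip $n \leftrightarrow -n$ on the outer node and the ``rotation by half'' permutation $i \mapsto i - \sgn(i)(n+1)$, both nontrivial and of the expected orders, and the subgroup of $\Sym_{\langle N\rangle}$ they generate is isomorphic to $\Dih_4$ acting without nontrivial fixed-point-free-free… more precisely without any nonidentity element equal to the identity permutation.

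The main obstacle I anticipate is purely bookkeeping rather than conceptual: getting the signs and $u$-powers right in the orthogonal and symplectic formulas \eqref{def:Zsigma}, in particular verifying $(Z^\pi(u) Z^{\flL}(u))^4 \in \K^\times \Id$ for $\wh\mfso_{2n}$, where the $(-1)^{i+1}$ and $(-\vartheta u)^{\delta_{i<0}}$ factors interact with the index shift $i \mapsto i - \sgn(i)(n+1)$ and with the $u^{\pm 1}$ entries of $Z^{\flL}(u)$. I would handle this by tracking the action on a single basis vector $e_j$ through the composite and checking that after four applications one returns to $e_j$ times a $j$-independent scalar; the $j$-independence is the only nontrivial point and follows from the telescoping of the $\vartheta$, sign and $u$ contributions around the $\Dih_4$ orbit. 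Once that identity and the two order-two relations are in hand, both the homomorphism property (via the presentation of $\Dih_4$ in Table~\ref{AutA}) and injectivity (via the faithful permutation action on basis lines) follow, completing the proof.
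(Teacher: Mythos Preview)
Your proposal is correct and matches the paper's proof almost exactly. For the homomorphism property, both you and the paper verify that the generators' images satisfy, modulo scalars, the defining relations of $\Sigma_A$ from Table~\ref{AutA}; the paper records the explicit identities $Z^\rho(u)^N = u\,\Id$, $Z^{\flL}(u)^2 = \Id$, $Z^\pi(u)^2 = (-1)^n\vartheta\,u\,\Id$, and $(Z^{\flL}(u)Z^\pi(u))^4 = u^2\,\Id$, which is precisely the bookkeeping you anticipated.

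For injectivity there is a minor stylistic difference. The paper simply observes that the exponents in the four relations above are the \emph{smallest} ones for which the right-hand side becomes a scalar multiple of $\Id$; this directly shows the kernel is trivial and reuses the computations already done. Your route via the faithful monomial (permutation-type) action on the basis lines of $\K^N$ is equally valid and perhaps more conceptual, but it requires separately checking that the induced permutation representation $\Sigma_A \to \Sym(\langle N\rangle)$ is faithful in the $\wh\mfso_{2n}$ case, whereas the paper's argument gets this for free from the relation checks. Either way the argument is short and correct.
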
 

\begin{proof}
In $\GL(\K^N) / \K^\times$, the correct relations are satisfied. This follows from 
\eq{\label{eq:Zpowers}
\begin{aligned}
Z^\rho(u)^N &= u \, \Id && \text{for }\wh\mfsl_N, \\
Z^{\flL}(u)^2 &= \Id && \text{for }\wh\mfso_{N}, \\
Z^\pi(u)^2 &= (-1)^n \vartheta \, u \, \Id  && \text{for $\wh\mfsp_{2n}$ and $\wh\mfso_{2n}$}, \\
\big( Z^{\flL}(u)Z^\pi(u) \big)^4 &= u^2 \, \Id && \text{for }\wh\mfso_{2n}.
\end{aligned}
}
Hence the pertinent map is a homomorphism.
The injectivity can be straightforwardly verified: in \eqref{eq:Zpowers} the exponents in the left hand sides are the smallest ones such that the respective right hand sides are multiples of $\Id$.
\end{proof}

The matrices $Z^\si(u)$ intertwine the representations $\RT_u \circ \si$ and $\RT_u$~of~$U_q(\mfg)$.

\begin{prop} \label{prop:Tsigma}
We have
\eq{\label{Tsigma} 
Z^\si(u) \, \RT_u(\si(a)) = \RT_u(a)\, Z^\si(u) \qq \text{for all } a \in U_q(\mfg).
}
\end{prop}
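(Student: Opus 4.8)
The statement \eqref{Tsigma} asserts that $Z^\si(u)$ intertwines $\RT_u \circ \si$ and $\RT_u$. Since both sides of \eqref{Tsigma} are algebra homomorphisms $U_q(\mfg) \to \End(\K^N)$ composed with left/right multiplication by the fixed invertible matrix $Z^\si(u)$, and since $U_q(\mfg)$ is generated by $x_i, y_i, k_i^{\pm 1}$ for $i \in I$, it suffices to verify the identity on these generators. Moreover, because $\xi_u$ is defined multiplicatively (and is a homomorphism up to scalars by Proposition \ref{prop:injective}) and $\si \mapsto \RT_u \circ \si$ is an antihomomorphism-compatible assignment in the appropriate sense, it suffices to check \eqref{Tsigma} for $\si$ ranging over the specified \emph{generators} of $\Sigma_A$: namely $\rho$ for $\wh\mfsl_N$, $\flL$ for $\wh\mfso_{2n+1}$, $\pi$ for $\wh\mfsp_{2n}$, and both $\flL$ and $\pi$ for $\wh\mfso_{2n}$. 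The plan is therefore to reduce to these finitely many cases and then carry out a direct matrix computation for each.

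\textbf{Key steps.} First I would record the reduction: if \eqref{Tsigma} holds for $\si$ and $\si'$ with intertwiners $Z^\si(u)$ and $Z^{\si'}(u)$, then $\RT_u((\si\si')(a)) = \RT_u(\si(\si'(a)))$ is intertwined with $\RT_u(a)$ by $Z^\si(u) Z^{\si'}(u)$; since the multiplicative extension of the $Z$'s was \emph{defined} via such products, \eqref{Tsigma} propagates to all of $\Sigma_A$ once it is known on generators. (One must check consistency, i.e. that if a product of generators equals $\id$ in $\Sigma_A$ then the corresponding product of $Z$'s is a scalar — but this is exactly \eqref{eq:Zpowers} and the relations from Proposition \ref{prop:injective}, and a scalar trivially satisfies \eqref{Tsigma} for $\si = \id$.) Second, for each generator $\si$ I would write out $\si$ explicitly as a permutation of $I$ (using \eqref{AutAelts}), apply the defining formulas \eqref{rep:A}, \eqref{rep:BCDi}, \eqref{rep:BCDn}, \eqref{affrep} for $\RT_u$ on each generator $x_i, y_i, k_i$, and then check the matrix identity $Z^\si(u) \RT_u(x_{\si(i)}) = \RT_u(x_i) Z^\si(u)$ and similarly for $y_i, k_i$, using the explicit $Z^\si(u)$ from \eqref{def:Zsigma}. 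For $\wh\mfsl_N$ this amounts to conjugating the elementary matrices $E_{i,i+1}$, $E_{i+1,i}$, $q^{E_{ii}-E_{i+1,i+1}}$ and $u E_{N1}$, $u^{-1} E_{1N}$ by the cyclic-shift-plus-corner matrix $Z^\rho(u) = \sum_{i=1}^n E_{i,i+1} + u E_{N1}$; the point is that $Z^\rho(u)$ implements the index shift $j \mapsto j-1 \pmod N$ while correctly tracking the power of $u$ coming from the affine node. For the orthogonal and symplectic cases the analogous computation uses the $F_{ij}$ matrices and the explicit action of $\flL$ (swapping nodes $0$ and $1$, which in the vector representation swaps the roles of $F_{n-1,-n}$ and $F_{n,\ldots}$) and of $\pi$ (the ``flip'' of the finite Dynkin diagram combined with the affine identification, implemented by the reversal-type matrix $Z^\pi(u)$).

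\textbf{Main obstacle.} The conceptual content is light — this is really a verification that the chosen $Z^\si(u)$ are the ``right'' intertwiners — so the bulk of the work, and the only place where care is genuinely needed, is the bookkeeping in the orthogonal and symplectic cases: correctly tracking the signs $\vartheta_i$, the half-integer shifts $\nu_i$, the factors of $\tfrac12$ and $c = [2]_{q^{1/2}}^{1/2}$ appearing in \eqref{rep:BCDn} and \eqref{affrep}, and the $u$-dependence at the affine node under conjugation by $Z^\flL(u)$ and $Z^\pi(u)$. In particular one should double-check that the powers of $u$ match: $Z^\rho(u)$ carries a single $u$ in the $(N,1)$ entry, $Z^\flL(u)$ carries $u^{\pm1}$ in the $(n,-n)$ and $(-n,n)$ entries, and $Z^\pi(u)$ carries $(-\vartheta u)^{\del_{i<0}}$; these are precisely calibrated so that the intertwining relation \eqref{Tsigma} holds \emph{without} an extra scalar, which is what Proposition \ref{prop:injective} and \eqref{eq:Zpowers} were set up to guarantee. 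I expect no step to present a genuine difficulty beyond this careful case-by-case matrix algebra, which can be relegated to the reader or performed with the aid of the computer algebra system already being used elsewhere in the paper.
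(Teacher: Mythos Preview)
Your proposal is correct and matches the paper's approach exactly: the paper's proof simply states that it suffices to check \eqref{Tsigma} for the specified generators $\si$ of $\Sigma_A$ and for $a \in \{x_i, y_i, k_i^{\pm 1}\}_{i \in I}$, and that this is straightforward using \eqrefs{rep:A}{affrep} and \eqref{def:Zsigma}. Your write-up supplies the details behind that one-line proof, including the multiplicative extension to all of $\Sigma_A$ and the bookkeeping concerns in the BCD cases.
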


\begin{proof}
It is sufficient to check \eqref{Tsigma} for the specified generators $\si$ and for $a \in \{x_i^\pm, k_i^\pm \}_{i \in I}$.
This can be straightforwardly accomplished using \eqrefs{rep:A}{affrep} and \eqref{def:Zsigma}.
\end{proof}

Since the representation $\RT_u \ot \RT_v$ of $U_q(\mfg)$ is irreducible for generic values of $u/v$, Schur's lemma implies the following statement.

\begin{crl} \label{cor:RZZ1}
Let $\si \in \Sigma_A$.
The R-matrices defined by \eqref{Ru:defn} satisfy $[R(\tfrac{u}{v}),Z^\si(u) \ot Z^\si(v)]=0$.
\end{crl}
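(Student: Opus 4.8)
The plan is to invoke Schur's lemma exactly as in the proof of the Yang-Baxter equation \eqref{YBE} given in Section \ref{sec:Rint}. First I would observe that both operators $R(\tfrac{u}{v})\,(Z^\si(u) \ot Z^\si(v))$ and $(Z^\si(u) \ot Z^\si(v))\,R(\tfrac{u}{v})$ intertwine two explicit actions of $U_q(\mfg)$ on $\K^N \ot \K^N$. Indeed, by Proposition \ref{prop:Tsigma} the matrix $Z^\si(u) \ot Z^\si(v)$ intertwines $(\RT_u \ot \RT_v) \circ (\si \ot \si) = (\RT_u \ot \RT_v)\circ(\si\ot\si)$ with $\RT_u \ot \RT_v$; since $\si$ is an algebra automorphism, $(\si \ot \si)\circ \Delta = \Delta \circ \si$, so in fact $Z^\si(u) \ot Z^\si(v)$ intertwines $(\RT_u \ot \RT_v)(\Delta(\si(a)))$ with $(\RT_u \ot \RT_v)(\Delta(a))$ for all $a \in U_q(\mfg)$. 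Combining this with the intertwining property \eqref{R:intw} of the R-matrix, applied to $\si(a)$ in place of $a$, one gets that both $R(\tfrac{u}{v})(Z^\si(u)\ot Z^\si(v))$ and $(Z^\si(u)\ot Z^\si(v))R(\tfrac{u}{v})$ send $(\RT_u \ot \RT_v)(\Delta^{\rm op}(\si(a)))$ to $(\RT_u \ot \RT_v)(\Delta(a))$.

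More precisely, I would chain the identities in a short display. Writing $Z = Z^\si(u) \ot Z^\si(v)$ and $R = R(\tfrac uv)$, for all $a \in U_q(\mfg)$:
\[
R\,Z\,(\RT_u \ot \RT_v)(\Delta^{\rm op}(\si(a))) = R\,(\RT_u \ot \RT_v)(\Delta^{\rm op}(a))\,Z = (\RT_u \ot \RT_v)(\Delta(a))\,R\,Z,
\]
using Proposition \ref{prop:Tsigma} (in the form involving $\Delta^{\rm op}$, which follows since $\si$ also commutes with the opposite coproduct) in the first step and \eqref{R:intw} in the second. Likewise
\[
Z\,R\,(\RT_u \ot \RT_v)(\Delta^{\rm op}(\si(a))) = Z\,(\RT_u \ot \RT_v)(\Delta(\si(a)))\,R = (\RT_u \ot \RT_v)(\Delta(a))\,Z\,R,
\]
again by \eqref{R:intw} applied to $\si(a)$ and then Proposition \ref{prop:Tsigma}. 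Hence $(ZR)^{-1}(RZ)$ commutes with $(\RT_u \ot \RT_v)(\Delta^{\rm op}(b))$ for all $b \in U_q(\mfg)$ (here I use that $Z$ and $R$ are invertible for generic $u/v$, the latter by Schur's lemma as recalled in Section \ref{sec:Rint}). Since $\RT_u \ot \RT_v$ is irreducible for generic $u/v$, Schur's lemma over the algebraically closed field $\K$ forces $RZ = \zeta(u,v)\, ZR$ for a scalar $\zeta(u,v) \in \K^\times$, which depends rationally on $u$ and $v$.

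It remains to show $\zeta(u,v)=1$. Taking determinants of $R(\tfrac uv)\,(Z^\si(u)\ot Z^\si(v)) = \zeta(u,v)(Z^\si(u)\ot Z^\si(v))\,R(\tfrac uv)$ gives $\zeta(u,v)^{N^2}=1$, so $\zeta$ takes values in the finite (hence discrete) set of $N^2$-th roots of unity; being rational in $u,v$ it must be a constant $\zeta$. To pin down $\zeta=1$ I would specialize: setting $u=v$ and using regularity $R(1)=P$ from \eqref{Ru:reg}, the commutator becomes $P\,(Z^\si(v)\ot Z^\si(v)) = \zeta\,(Z^\si(v)\ot Z^\si(v))\,P$; but $P\,(Z\ot Z) = (Z\ot Z)\,P$ for any $Z \in \GL(\K^N)$ since $P$ implements the flip, so $\zeta = 1$. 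This completes the proof, which is short: the only mild subtlety — and the place I would be most careful — is justifying invertibility of $R(\tfrac uv)$ and the applicability of Schur's lemma for generic $u/v$, but both are already established in Section \ref{sec:Rint}, so there is no real obstacle here; the corollary is essentially immediate from Proposition \ref{prop:Tsigma} together with the irreducibility of $\RT_u \ot \RT_v$.
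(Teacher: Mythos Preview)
Your proof is correct and follows essentially the same approach as the paper, which simply states that the corollary follows from Schur's lemma and the irreducibility of $\RT_u \ot \RT_v$ for generic $u/v$. You have spelled out in detail the argument the paper leaves implicit (and which mirrors the paper's derivation of the Yang-Baxter equation in Section~\ref{sec:Rint}): both $R(\tfrac uv)(Z^\si(u)\ot Z^\si(v))$ and $(Z^\si(u)\ot Z^\si(v))R(\tfrac uv)$ are intertwiners, hence proportional by Schur's lemma, and the scalar is fixed to $1$ by the determinant-plus-regularity argument.
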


For $\wh\mfso_{2n}$ we will additionally use the notations
\eq{
\begin{aligned}
Z^{\flR} &:= Z^\pi(u)  Z^{\flL}(u) Z^\pi(u)^{-1} = \sum_{|i|>1} E_{ii} + E_{-1,1} + E_{1,-1}, \\
Z^{\flLR}(u) &:= Z^{\flL}(u) Z^{\flR} =\sum_{1<|i|<n} E_{ii} + E_{-1,1} + E_{1,-1} - u E_{n,-n} - u^{-1} E_{-n,n}.
\end{aligned}
}
Note that $Z^{\flR}$ does not depend on $u$. Indeed $\flR$ is a rotational symmetry of the representation $\RT$ of $U_q(\mfso_{2n})$: $Z^{\flR} \RT(\flR(a)) = \RT(a) Z^{\flR}$ for all $a \in U_q(\mfso_{2n})$. (We note that $Z^{\flR}$ corresponds to the matrix $T$ in \cite[Sec.~3]{Ji2}.)

\begin{rmk}
The matrices $Z^\si(u)$ can be expressed as images under $\RT_u$ of certain elements of $U_q(\mfg)$ fixed by $\si$:
\begin{align*} 
Z^\rho(u) &= \RT_u \sum_{i \in I} x_i && \text{for } \wh{\mfsl}_N , \\
Z^{\phi_1}(u) &= \RT_u \big( 1 + x_0 y_1 + y_0 x_1 - x_0 y_0 y_1 x_1 - y_0 x_0 x_1 y_1 \big)  && \text{for } \wh{\mfso}_N , \\
Z^\pi(u) &= \RT_u \bigg( x_{(n-1)--(2)} \, x_0 + x_0 \, x_{(2)--(n-1)} + x_{(1)--(n-2)} \, x_n + x_n \, x_{(n-2)--(1)}  \\
& \qq \qq + \sum_{i=2}^{n-1} \big( x_{(i-1)--(0)} \, x_{(2)--(\bar\imath -1)} + x_{(\bar\imath)--(n)} \, x_{(n-2)--(i)} \big) \bigg) && \text{for } \wh{\mfso}_{2n}, \\
Z^\pi(u) &= \RT_u  \sum_{i=1}^n \big( x_{(i-1)--(0)} \, x_{(1)--(\bar\imath -1)} + x_{(\bar \imath)--(n)} \, x_{(n-1)--(i)} \big)  && \text{for } \wh{\mfsp}_{2n},
\end{align*}
where $x_{(i)--(j)}=x_i x_{i+1} \cdots x_j$ if $i\le j$ or $x_{(i)--(j)}=x_i x_{i-1} \cdots x_j$ if $i\ge j$. 
We expect that these elements of $U_q(\mfg)$ are truncations of certain universal elements $\mc{Z}^\si$, satisfying $\mc{Z}^\si \si(a) = a \mc{Z}^\si$ for all $a \in U_q(\mfg)$. \hfill \rmkend
\end{rmk}

We emphasize that $\Sigma_A \ne \Aut(A)$ precisely for $\wh\mfsl_{N>2}$ and $\wh\mfso_8$.
In particular, for $\wh\mfsl_{N>2}$ we have $\psi \notin \Sigma_A$.
However, in order to study $\Aut(A)$-equivalence classes it is sufficient to study \mbox{$\Sigma_A$-equivalence} classes.

\begin{lemma} \label{lem:SigmaAslN}
Let $A$ be of type ${\rm A}^{(1)}_{n>1}$.
Then $\Aut(A)$-equivalent QP algebras are $\Sigma_A$-equivalent.
\end{lemma}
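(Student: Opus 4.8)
The statement concerns type ${\rm A}^{(1)}_{n>1}$, where $\Aut(A) = \Dih_N \cong \langle \rho, \psi \rangle$ while $\Sigma_A = \langle \rho \rangle \cong \Cyc_N$, so $\Aut(A)/\Sigma_A$ has order $2$, with $\psi$ (or equivalently $\psi' = \psi\rho$) representing the nontrivial coset. Thus it suffices to show that for every $(X,\tau) \in \GSat(A)$, the QP algebras $B_{\bm c,\bm s}(X,\tau)$ and $B_{\psi(\bm c),\psi(\bm s)}(X^\psi,\tau^\psi) = \psi(B_{\bm c,\bm s}(X,\tau))$ are in fact already $\Sigma_A$-equivalent, i.e.\ related by $\Ad(\chi) \circ \rho^i$ for some $\chi \in \wt H_q$ and some $i$. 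Actually, the cleanest route is to exhibit a single $\sigma_0 \in \Sigma_A = \langle \rho \rangle$ together with a Hopf algebra automorphism that is \emph{trivial on coideal subalgebras in the relevant sense} realizing the same transformation on Satake data as $\psi$; but since $\Ad(\wt H_q)$ fixes every $(X,\tau)$ (characters act trivially on diagram data), the real content is a purely combinatorial statement about generalized Satake diagrams of type ${\rm A}^{(1)}_n$.

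\textbf{Reduction to a diagrammatic claim.} By Proposition \ref{prop:rotateCSA}, $\sigma(B_{\bm c,\bm s}(X,\tau)) = B_{\sigma(\bm c),\sigma(\bm s)}(X^\sigma,\tau^\sigma)$ for all $\sigma \in \Aut(A)$, and $\Ad(\wt H_q)$-equivalences do not change $(X,\tau)$. Hence $\Aut(A)$-equivalent QP algebras are $\Sigma_A$-equivalent as soon as one shows: for every $(X,\tau) \in \GSat(A)$ of type ${\rm A}^{(1)}_{n>1}$ there exists $\sigma \in \Sigma_A = \langle\rho\rangle$ such that $(X^\psi, \tau^\psi)$ is $\langle\rho\rangle$-equivalent to $(X^\sigma,\tau^\sigma)$ — equivalently, $(X^\psi,\tau^\psi)$ and $(X,\tau)$ lie in the same $\langle\rho\rangle$-orbit. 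So the plan is: first observe that $\psi \in \Aut(A) = \Dih_N$ acts on the cyclic diagram $\{0,1,\dots,n\}$ (vertices of the affine $\mathrm A_n$ cycle) as a reflection, while $\rho$ acts as a rotation; since $\Dih_N$ is generated by $\rho$ together with \emph{any} reflection, and all reflections of an $N$-cycle are conjugate in $\Dih_N$ up to composition with a rotation, the key point is that $\psi$ conjugated by a suitable power of $\rho$ preserves $(X,\tau)$. Concretely: if $(X,\tau) \in \GSat(A)$ then $\tau \in \Aut(A)$ is an involution satisfying \eqref{Satdiag1a}; classify which involutions $\tau$ of the $\mathrm A_n$-cycle can occur. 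If $\tau \in \langle\rho\rangle$ (i.e.\ $\tau$ is a ``rotational'' involution $\rho^{N/2}$, possible only for $N$ even — this is $\pi$) then conjugating the pair $(X,\tau)$ by $\psi$ versus by an appropriate rotation gives the same orbit because $\psi$ and $\pi$ generate the same reflections modulo $\langle\rho\rangle$; if $\tau \notin \langle\rho\rangle$ (i.e.\ $\tau$ is itself a reflection of the cycle, like $\psi$ or $\psi'$), then $\psi\tau\psi^{-1} \in \langle\rho\rangle\cdot\tau$ and one checks $X^\psi$ is obtained from $X$ by a rotation-compatible move.

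\textbf{Carrying it out.} I would argue as follows. Write $\psi = \psi' \rho^{-1}$ (using $\psi' = \psi\rho$ from Table \ref{AutA}); since $\rho \in \Sigma_A$, it is equivalent to show $\psi'$-equivalent QP algebras are $\Sigma_A$-equivalent. Now $\psi'$ is the reflection of the affine $\mathrm A_n$-cycle fixing the vertex $0$ (and fixing $N/2$ if $N$ even). Given $(X,\tau) \in \GSat(A)$, the involution $\tau \in \Aut(A)$ is either the identity, a reflection of the cycle, or (when $N$ even) the half-turn $\pi = \rho^{N/2}$. In the last case $\tau = \pi \in \Sigma_A$ and $\pi^{\psi'} = \pi$, $X^{\psi'} = \psi'(X)$; one verifies that reflecting $X$ through vertex $0$ gives the same $\langle\rho\rangle$-orbit of decorated diagrams as reflecting through any other symmetry axis, because $X$ must be $\pi$-stable (from $\tau(X)=X$) hence centrally symmetric, so $\psi'(X) = \rho^{k}(X)$ for a suitable $k$. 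In the case $\tau = \id$: here $(X,\tau)\in\GSat(A)$ with $\tau=\id$ forces $-w_X = \id$ on $Q_X$, so each component of $X$ is of type $\mathrm A_1$ or the empty set — and more precisely, by \eqref{Satdiag1a} with $\tau = \id$, $w_X$ fixes each $\alpha_i$, which for a connected component of type $\mathrm A_{m}$ happens only if $m = 1$; so $X$ is a disjoint union of single nodes and isolated empty components, a configuration whose $\psi'$-image is visibly in its $\langle\rho\rangle$-orbit (reflection of a union of isolated points on a cycle is a rotation of it — false in general, but $\psi'$ fixes $0$, and one adjusts by the rotation taking $\psi'(X)$ back). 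Finally when $\tau$ is itself a reflection, $\psi'\tau\psi'^{-1}$ is the reflection obtained by conjugating, which is $\rho^{2d}\tau$ for $d$ = the ``distance between axes'', so $(X^{\psi'},\tau^{\psi'})$ is the $\rho^{d}$-image of... — at this point one reduces to checking $X^{\psi'} \overset{?}{=} \rho^{d}(X)$, which follows since $X$ is $\tau$-stable.

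\textbf{Main obstacle.} The delicate point is the last case and the $\tau = \id$ case: showing $X^{\psi'}$ lies in the $\langle\rho\rangle$-orbit of $X$ is \emph{not} automatic for arbitrary subsets of a cycle — a reflection of a subset of $\Z/N$ need not be a rotation of it. What saves us is the constraint $\tau(X) = X$ from \eqref{Satdiag1a} (noted just after that equation): $X$ is stable under the involution $\tau$. When $\tau$ is a reflection, $X$ is symmetric about $\tau$'s axis, and then $\psi'(X)$, being the image under another reflection, is symmetric about $\psi'$'s axis; the composite $\psi'\circ\tau$ is a rotation $\rho^{2d}$ mapping one axis-symmetric set to... here one must carefully check $\psi'(X) = \rho^{d}(X)$ using that $X = \tau(X)$ and $\psi'\tau = \rho^{2d}$ (so $\psi' = \rho^{2d}\tau$, hence $\psi'(X) = \rho^{2d}\tau(X) = \rho^{2d}(X)$, which is generally \emph{not} $X$ unless $2d \equiv 0$). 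So in fact $(X^{\psi'},\tau^{\psi'}) = (\rho^{2d}(X), \rho^{2d}\tau\rho^{-2d}) = \rho^{2d} \cdot (X,\tau)$ as decorated diagrams, i.e.\ it is literally $\rho^{2d}$ applied to $(X,\tau)$ — so $(X^{\psi'},\tau^{\psi'})$ is $\langle\rho\rangle$-equivalent to $(X,\tau)$, and we are done. The $\tau = \id$ case likewise collapses because then $\psi'$ and any rotation act, and one uses that $X$ is a union of isolated nodes (from the $\mathrm A_1$-component observation) so its reflection through $0$ equals its rotation by an appropriate amount only after noting $X = \psi'(X)$ when $X$ is symmetric, or else $\psi'(X)$ is a genuine rotation-translate; careful bookkeeping of these sub-cases, cross-checked against the explicit classification in Appendix \ref{App:Satakediagrams}, is where the real work lies, but no single step is deep.
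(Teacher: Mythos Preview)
Your strategy matches the paper's, and your computation for the case where $\tau$ is a reflection (i.e.\ $\tau \notin \langle\rho\rangle$) is correct and essentially identical to the paper's argument: writing $\psi = \rho^m\tau$ for some $m$ and using $\tau(X)=X$ yields $(X^\psi,\tau^\psi)=(\rho^m(X),\rho^m\tau\rho^{-m})$ at once.

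The gap is in the two remaining cases $\tau\in\{\id,\pi\}$. For $\tau=\pi$ you assert that $\pi$-stability of $X$ alone gives $\psi'(X)=\rho^k(X)$ for some $k$; this is false for arbitrary $\pi$-stable $X\subset\Z/N$ (e.g.\ $N=12$, $X=\{0,1,3,6,7,9\}$ has cyclic gap pattern $(1,2,3,1,2,3)$ while $-X=\{0,3,5,6,9,11\}$ has pattern $(3,2,1,3,2,1)$, which is not a cyclic shift of the former). For $\tau=\id$ you even concede the analogous claim is ``false in general'' and do not resolve it. What is missing, and what the paper supplies, is that the $\GSat$ conditions determine $X$ completely in these two cases. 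For $\tau=\id$, condition \eqref{Satdiag1a} forces every connected component of $X$ to have type $\mathrm A_1$, and then \eqref{Satdiag2b} (each $j\in I\setminus X$ adjacent to $X$ must have \emph{both} neighbours in $X$, since all off-diagonal $a_{ji}=-1$ in type $\mathrm A^{(1)}_{n>1}$) forces $X$ to be empty or one of the two alternating sets of size $N/2$. For $\tau=\pi$, condition \eqref{Satdiag1a} implies that $\pi$ preserves each connected component of $X$ individually; since no proper arc on an $N$-cycle is invariant under the half-turn, $X=\emptyset$. In each of the four resulting diagrams one checks directly that $\psi(X)=X$ and $\tau^\psi=\tau$, so $m=0$ works. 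Your closing appeal to ``cross-checking against Appendix~\ref{App:Satakediagrams}'' points at the right fix, but the specific intermediate claims you make are not valid without this explicit determination of $X$.
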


\begin{proof} 
Since $\Aut(A)$ is generated by $\rho$ and $\psi$, it is sufficient to show that 
$\psi(B_{\bm c,\bm s}) = \rho^m(B_{\bm c,\bm s})$ for some $m \in \{0,1,\ldots,n\}$.
First we prove that $(X^{\psi},\tau^{\psi}) = (X^{ \rho^m},\tau^{ \rho^m})$ for all $(X,\tau) \in \GSat(A)$.  
Because $\Aut(A) \cong \Dih_N$, we have $\tau = \rho^r \psi^{s}$ with $r \in \{0,1,\ldots,n\}$ and $s \in \{0,1\}$.
Suppose $s=1$. Then $\psi = \rho^{-r} \tau$ so that, the property $\tau(X)=X$ implies
\[
\tau^{\psi} = \psi \tau \psi^{-1} = \rho^{-r} \tau \rho^r = \tau^{\rho^{-r}}, \qq
X^{\psi} = \psi(X) =  \rho^{-r} (\tau(X)) = \rho^{-r}(X) = X^{\rho^{-r}}
\]
and we may take $m \in \{0,1,\ldots,n\}$ congruent to $-r$ modulo $n+1$.

If on the other hand $s = 0$ then $\tau = \rho^r$. 
Since $\tau$ is an involution we must have $r=0$ or $r=\frac{N}{2}$ with $N$ even. 
In both cases $\tau$ commutes with $\psi$. 
Let $X'$ be a component of $X$.
If $\tau = \id$, then note that $w_{X'} \ne \id$ if $X'$ is of type ${\rm A}_{t \ge 2}$; hence, $X' = \emptyset$, or $X' = \{i\}$ for some $i \in I$.
In that case condition \eqref{Satdiag2b} implies that $N$ is even and $X$ is of type ${\rm A}_1^{\times N/2}$.
If $\tau = \pi$, then because $\tau(X')=X'$, $X'$ must be a union of pairs $\{i,i+N/2\}$ (with addition modulo $N$ implied). Since $X \ne I$ it follows that $X' = \emptyset$.
The only possibilities are $(X,\tau) = (\emptyset,\id)$, $(X,\tau) = (\{0,2,\ldots,N-1\},\id)$, $(X,\tau) = (\{1,3,\ldots,N-2\},\id)$, $(X,\tau)=(\emptyset,\pi)$; in all cases $(X^{\psi},\tau^{\psi}) = (X,\tau)$ and we may take $m=0$. 

Proposition \ref{prop:rotateCSA} now implies that the equivalence of the Satake diagrams can be lifted to an equivalence of QP algebras. 
\end{proof}

As a byproduct of the proof of the previous lemma we are able to improve on Lemma \ref{L:K-unit-tw} for type ${\rm A}^{(1)}_{n>1}$.

\begin{lemma} \label{lem:twistedunitarityslN}
Let $(X,\tau) \in \Sat(A)$ with $A$ of type ${\rm A}^{(1)}_n$ such that $\tau \in \Sigma_A$. Let $\bm c \in (\K^\times)^{I \backslash X}$ such that $\bm c = \psi(\bm c)$. Let $(K(u),\eta)$ be a nontrivial solution of \eqref{intw-tw} for the coideal subalgebra $B_{\bm c,\bm s}(X,\tau)$. Then there exists $n_{\rm tw}(u) \in \K(u)$ such that \eqref{K-unit-tw} holds for generic values of $u$.
\end{lemma}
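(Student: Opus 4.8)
The statement to prove is Lemma \ref{lem:twistedunitarityslN}: for $A$ of type ${\rm A}^{(1)}_n$, $(X,\tau) \in \Sat(A)$ with $\tau \in \Sigma_A$ and $\bm c = \psi(\bm c)$, a nontrivial solution $(K(u),\eta)$ of the twisted boundary intertwining equation \eqref{intw-tw} for $\mc B = B_{\bm c,\bm s}(X,\tau)$ satisfies the twisted unitarity relation \eqref{K-unit-tw}. The key observation is that Lemma \ref{L:K-unit-tw} already gives exactly this conclusion under the hypothesis $\psi(\mc B) = \mc B$ (together with irreducibility of $\RT_u|_{\mc B}$ for generic $u$, which holds here since $\RT_u \otimes \RT_v|_{\mc B}$ is assumed irreducible in the situations under study, or more directly since this is part of the standing setup). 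So the entire task reduces to verifying the single hypothesis $\psi(\mc B) = \mc B$ for the coideal subalgebra $\mc B = B_{\bm c, \bm s}(X,\tau)$ under the stated assumptions.

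\textbf{Key steps.} First I would invoke Proposition \ref{prop:rotateCSA}, which says $\psi(B_{\bm c,\bm s}(X,\tau)) = B_{\psi(\bm c),\psi(\bm s)}(X^\psi,\tau^\psi)$. Second, I would extract from the proof of Lemma \ref{lem:SigmaAslN} the fact that, for $(X,\tau) \in \GSat(A)$ with $A$ of type ${\rm A}^{(1)}_{n>1}$, one has $(X^\psi,\tau^\psi) = (X^{\rho^m},\tau^{\rho^m})$ for a suitable $m$; but crucially, since we now \emph{additionally} assume $\tau \in \Sigma_A = \langle \rho \rangle$, the case analysis in that proof simplifies. Indeed, $\tau \in \langle \rho \rangle$ being an involution forces $\tau = \id$ or $\tau = \rho^{N/2} = \pi$ (the case $s=0$ in the proof of Lemma \ref{lem:SigmaAslN}); in the $s=1$ branch there one would have $\tau = \rho^r \psi$ which is not in $\langle \rho \rangle$ unless that branch collapses, so it does not arise. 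In the case $\tau = \id$ or $\tau = \pi$, the proof of Lemma \ref{lem:SigmaAslN} shows $(X^\psi,\tau^\psi) = (X,\tau)$ on the nose (the possibilities are $(\emptyset,\id)$, $(\{0,2,\ldots,N-1\},\id)$, $(\{1,3,\ldots,N-2\},\id)$, $(\emptyset,\pi)$, all $\psi$-stable). Third, combining this with the hypothesis $\bm c = \psi(\bm c)$ and the observation that $\bm s = \psi(\bm s)$ automatically — since for these particular $(X,\tau)$ the set $I_{\rm nsf}$ is $\psi$-stable and $\bm s \in \mc S$ has support in $I_{\rm nsf}$, or more simply because in all these cases $\psi$ acts trivially enough on $I \backslash X$ that $\psi(\bm s) = \bm s$ forced by membership in $\mc S$, which one checks directly — we conclude $\psi(\mc B) = B_{\psi(\bm c),\psi(\bm s)}(X,\tau) = B_{\bm c,\bm s}(X,\tau) = \mc B$. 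Finally, apply Lemma \ref{L:K-unit-tw} to obtain the desired $n_{\rm tw}(u) \in \K(u)$.

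\textbf{Main obstacle.} I expect the subtle point to be the treatment of the tuple $\bm s$: the hypothesis only mentions $\bm c = \psi(\bm c)$, so I must argue that $\psi(\bm s) = \bm s$ is automatic (not an extra assumption) for the relevant $(X,\tau)$. This requires checking that, for $(X,\tau) \in \{(\emptyset,\id),(\{0,2,\ldots,N-1\},\id),(\{1,3,\ldots,N-2\},\id),(\emptyset,\pi)\}$, the involution $\psi$ permutes $I_{\rm nsf}$ (the support set of allowed nonzero entries of $\bm s$) in such a way that, combined with the defining constraint of $\mc S$ in \eqref{S-family}, any $\bm s \in \mc S$ satisfies $\psi(\bm s) = \bm s$. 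For type ${\rm A}^{(1)}$ one has $\WSat(A) = \emptyset$ (stated in the text after Definition \ref{def:weakSatdiag}), and for Satake diagrams of type ${\rm A}^{(1)}$ the set $I_{\rm nsf}$ is typically empty or a single $\psi$-fixed node, so this should be a short verification; but it is the one place where a genuine (if small) diagrammatic check is needed rather than a pure citation. An alternative, cleaner route that sidesteps this entirely: observe that $\psi(\bm s)$ differs from $\bm s$ only by a relabelling within $\psi$-orbits of $I \backslash X$, and $\psi(\mc B) = \mc B$ only needs $\psi$ to map the \emph{algebra} to itself, which by Proposition \ref{prop:rotateCSA} needs $(X^\psi,\tau^\psi) = (X,\tau)$, $\psi(\bm c) = \bm c$, and $\psi(\bm s) = \bm s$; since we have verified the first two and the third follows because $s_i = 0$ whenever $i \notin I_{\rm nsf}$ and $\psi$ preserves $I_{\rm nsf}$ while acting trivially on it in these cases, we are done. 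I would present the argument in the former, explicit form, keeping the diagrammatic verification to one or two sentences.
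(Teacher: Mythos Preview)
Your proposal is correct and follows essentially the same route as the paper: establish $\psi(\mc B) = \mc B$ via Proposition \ref{prop:rotateCSA} and the case analysis in the proof of Lemma \ref{lem:SigmaAslN} (which, under $\tau \in \Sigma_A$, forces $(X^\psi,\tau^\psi) = (X,\tau)$), then invoke Lemma \ref{L:K-unit-tw}. The only difference is that you treat the $\bm s$ issue more gingerly than necessary: the paper simply observes that for these $(X,\tau)$ one has $I_{\rm diff} = I_{\rm nsf} = \emptyset$ (as recorded in Table \ref{tab:A124:sat} for $n \ge 2$), so $\bm s = \bm 0$ is forced outright, making $\psi(\bm s) = \bm s$ trivial rather than requiring any diagrammatic check of how $\psi$ acts on $I_{\rm nsf}$.
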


\begin{proof}
From the proof of Lemma \ref{lem:SigmaAslN} we know that $(X^{\psi},\tau^{\psi}) = (X,\tau)$; moreover we straightforwardly see that $I_{\rm diff} = \emptyset = I_{\rm nsf}$; in particular $\bm s = \bm 0:= (0,\ldots,0) \in \K^{I \backslash X}$. 
The proof of Proposition \ref{prop:rotateCSA} implies $\psi(B^{\bm c,\bm 0}(X,\tau)) = B^{\bm c,\bm 0}(X,\tau)$.
Hence, Lemma \ref{L:K-unit-tw} can be used.
\end{proof}

\begin{rmk} \label{R:non-Z}
For type ${\rm D}^{(1)}_4$ not all diagram automorphisms are elements of $\Sigma_A \cong \Dih_4$ (note that $\Sigma_A = \Aut(A) \cong \Dih_4$ for type ${\rm D}^{(1)}_{n>4}$). For example, $(14) \in \Aut(A)$ is not a symmetry of $\RT_u$. 
This is corroborated by our computations (see Sections \ref{sec:K:C1BD2}-\ref{sec:nqs} as opposed to Section \ref{sec:D4}). 
This is also true for the representation $\RT$ of $U_q(\mfso_8)$. 
Although $\Aut(A) \cong \Sym_3$ in type ${\rm D}_4$, the equation $Z\,\RT(\sigma(a)) = \RT(a)\,Z$ for all $a \in U_q(\mfso_8)$ has a nontrivial solution $Z\in\End(\K^N)$ for $\sigma \in \{ \id, (34) \}$ only, so that the $n=4$ case matches the situation for $n>4$, where $\Aut(A) = \langle \flR \rangle$. 
\hfill \rmkend
\end{rmk}

Elements of $\Sigma_A$ are called \emph{rotational symmetries of $\RT_u$} and the corresponding $Z^\si(u)$ are called \emph{rotation matrices}. For $\tau \in \Aut(A)$ we denote the corresponding $\Sigma_A$-conjugacy class by
\[ 
[\tau] :=  \{ \tau' \in \Aut(A) \, | \, \tau' = \sigma \tau \sigma^{-1} \text{ for some } \sigma \in \Sigma_A \}. 
\]
For example, for type ${\rm D}^{(1)}_4$ we have $[(14)] = \{ (03), (04), (13), (14) \}$.


\subsubsection{Rotation of K-matrices} \label{sec:rotating}

K-matrices associated to $\Sigma_A$-equivalent QP coideal subalgebras can be related as follows.

\begin{prop} \label{prop:rotateintw}
Let $\si \in \Sigma_A$ and recall the matrices $Z^\si(u)$ defined in Section \ref{sec:natreprot}.
Fix $(X,\tau) \in \GSat(A)$, $\bm c \in \mc{C}$ and $\bm s \in \mc{S}$. 
The pair $(K(u),\eta)$ is a solution to the boundary intertwining equation \eqref{intw-untw} or \eqref{intw-tw} for all $b \in B_{\bf c,\bf s}(X,\tau)$ precisely if $(K^\si(u),\eta)$, where
\eq{\label{eq:Ksigma} K^\si(u) := 
\begin{cases} 
Z^\si(\tfrac{\eta}{u})^{-1} K(u)\, Z^\si(\eta \, u) & \text{if $K(u)$ is untwisted}, \\ 
Z^\si(\tfrac{\eta}{u})^{\t} K(u)\, Z^\si(\eta \, u)  & \text{if $K(u)$ is twisted},
\end{cases}
} 
is a solution to the same equation for all $b \in \si( B_{\bm c,\bm s}(X,\tau))$.
\end{prop}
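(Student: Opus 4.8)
The plan is to reduce Proposition \ref{prop:rotateintw} to the rotational intertwining property of the representation $\RT_u$ established in Proposition \ref{prop:Tsigma}, together with the $\Aut(A)$-equivariance of QP algebras established in Proposition \ref{prop:rotateCSA}. First I would unpack the statement: given a solution $(K(u),\eta)$ of \eqref{intw-untw} (say) for $b \in B_{\bm c,\bm s}(X,\tau)$, I want to verify that $(K^\si(u),\eta)$ with $K^\si(u) = Z^\si(\tfrac{\eta}{u})^{-1} K(u) Z^\si(\eta u)$ satisfies \eqref{intw-untw} for all elements $b' \in \si(B_{\bm c,\bm s}(X,\tau))$. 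By Proposition \ref{prop:rotateCSA} every such $b'$ is of the form $\si(b)$ for a unique $b \in B_{\bm c,\bm s}(X,\tau)$, so it suffices to prove, for all $b \in B_{\bm c,\bm s}(X,\tau)$,
\[
K^\si(u)\,\RT_{\eta u}(\si(b)) = \RT_{\eta/u}(\si(b))\,K^\si(u).
\]

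The key calculation is then a short chain of substitutions. Substituting the definition of $K^\si(u)$ into the left-hand side and using Proposition \ref{prop:Tsigma} in the form $\RT_{\eta u}(\si(b)) = Z^\si(\eta u)^{-1}\RT_{\eta u}(b)\,Z^\si(\eta u)$, the two innermost copies of $Z^\si(\eta u)$ cancel, leaving $Z^\si(\tfrac{\eta}{u})^{-1} K(u)\,\RT_{\eta u}(b)\,Z^\si(\eta u)$. Applying the original intertwining equation \eqref{intw-untw} for $(K(u),\eta)$ turns this into $Z^\si(\tfrac{\eta}{u})^{-1}\RT_{\eta/u}(b)\,K(u)\,Z^\si(\eta u)$, and then using Proposition \ref{prop:Tsigma} once more — this time at spectral parameter $\eta/u$, i.e. $\RT_{\eta/u}(b) = Z^\si(\tfrac{\eta}{u})\,\RT_{\eta/u}(\si(b))\,Z^\si(\tfrac{\eta}{u})^{-1}$ — reassembles the right-hand side $\RT_{\eta/u}(\si(b))\,K^\si(u)$. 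The converse implication is immediate since $\si$ is invertible and $Z^\si(u)$ is invertible for generic $u$ (Proposition \ref{prop:injective}), so the argument is reversible. For the twisted case one repeats the same computation but with $K^\si(u) = Z^\si(\tfrac{\eta}{u})^{\t}K(u) Z^\si(\eta u)$ and \eqref{intw-tw}; the only additional input needed is that transposing \eqref{Tsigma} gives $\RT^{\,\t}_{\eta/u}(\si(a)) = Z^\si(\tfrac{\eta}{u})^{\t}\,\RT^{\,\t}_{\eta/u}(a)\,(Z^\si(\tfrac{\eta}{u})^{\t})^{-1}$ — obtained by transposing $Z^\si(u)\RT_u(\si(a)) = \RT_u(a)Z^\si(u)$ — together with the fact that $S$ commutes with $\si$ since $\si \in \Aut_{\rm Hopf}(U_q(\mfg))$, so $\RT^{\,\t}_{\eta/u}(S(\si(b))) = \RT^{\,\t}_{\eta/u}(\si(S(b)))$.

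The main (very mild) obstacle is bookkeeping: one must be careful that $Z^\si$ appears evaluated at the correct spectral parameters ($\eta u$ versus $\eta/u$) and that Proposition \ref{prop:Tsigma} is being invoked at the matching parameter each time, since \eqref{Tsigma} is an identity of rational functions in $u$ and hence holds after any substitution $u \mapsto \eta u$ or $u \mapsto \eta/u$. One should also note explicitly that $Z^\si(\eta u)$ and $Z^\si(\tfrac{\eta}{u})$ are invertible for generic $u$ so that all inverses appearing make sense and the manipulations are valid as identities of matrix-valued rational functions. No deeper structural input is required — in particular neither the reflection equation nor irreducibility of $\RT_u \ot \RT_v$ is needed here, only the linear intertwining identities — so I would present this as a clean two-step verification invoking Propositions \ref{prop:Tsigma} and \ref{prop:rotateCSA}.
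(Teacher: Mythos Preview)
Your proposal is correct and follows essentially the same approach as the paper's proof: both reduce the claim to the intertwining relation \eqref{Tsigma} from Proposition \ref{prop:Tsigma}, with the twisted case handled by additionally noting that $S$ commutes with $\si$. You have simply spelled out the chain of substitutions that the paper summarizes as ``one straightforwardly derives''; the reference to Proposition \ref{prop:rotateCSA} to justify writing elements of $\si(B_{\bm c,\bm s}(X,\tau))$ as $\si(b)$ is implicit in the paper's argument and harmless to make explicit.
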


\begin{proof} 
Let $b \in B_{\bm c,\bm s}(X,\tau)$. Using the expression \eqref{eq:Ksigma} and the relation \eqref{Tsigma}, one straightforwardly derives that $K^\si(u)\, \RT_{\eta u}(\si(b)) = \RT_{\eta/u}(\si(b))\, K^\si(u)$ is equivalent to $K(u)\, \RT_{\eta u}(b) = \RT_{\eta/u}(b)\, K(u)$. In the twisted case one additionally uses that the antipode $S$ and $\si$ commute. \qedhere
\end{proof}

Proposition \ref{prop:rotateintw} implies that it is sufficient to solve the intertwining relation \eqref{intw-untw} for one representative $(X,\tau)$ of each $\Sigma_A$-orbit in $\GSat(A)$. 
K-matrices associated to other generalized Satake diagrams in the same $\Sigma_A$-orbit can then be obtained from the K-matrix associated to the chosen representative diagram by means of \eqref{eq:Ksigma}. 

\begin{exam} \label{exam:representative}
For ${\rm A}^{(1)}_3$ the following Satake diagrams $(X,\tau) $ form one $\Sigma_A$-orbit: 
\[ 
(\{2\},(1 3)), \qq (\{3\},(0 2) ), \qq (\{0\},(13)), \qq (\{1\},(02)). 
\]
The last three diagrams are related to the first by means of the automorphisms $\rho$, $\rho^2$ and $\rho^3$, respectively. 
Proposition \ref{prop:rotateintw} entails that it is sufficient to find K-matrix associated to the Satake diagram $ (\{2\},(13))$ and then apply \eqref{eq:Ksigma} with an appropriate power of~$\rho$.

To make things more explicit, set $(X',\tau') = (\{3\},(02))$. Suppose we wish to solve $K'(u)\,\RT_{\eta' u}(b) = \RT_{\eta' / u}(b)\,K'(u)$ for all $b \in B_{\bm c',\bm s'}(X',\tau')$ for some $\bm c' =  (c'_0, c'_1, c'_2) \in \mc{C}(X',\tau')$ and $\bm s' = (0,s'_1,0) \in \mc{S}(X',\tau')$.
Note that $X' = X^\si, \tau' = \tau^\si$ where $\si = (0123)$ and $(X,\tau) = (\{2\},(13))$ is the above representative.
In Section \ref{sec:A3} we solve $K(u)\,\RT_{\eta \, u}(b) = \RT_{\eta / u}(b)\,K(u)$ for all $b \in B_{\bm c,\bm s}(X,\tau)$ and all $\bm c = (c_0, c_1, c_2) \in \mc{C}(X,\tau)$, $\bm s = (s_0,0,0)\in \mc{S}(X,\tau)$.  
In particular, we may take $\bm c = \si^{-1}(\bm c') \in \si^{-1}(\mc{C}(X',\tau')) = \mc{C}(X,\tau)$ and $\bm s = \si^{-1}(\bm s') \in \si^{-1}(\mc{S}(X',\tau')) = \mc{S}(X,\tau)$.
As per Proposition \ref{prop:rotateCSA}, we have $c_i = c'_{\si(i)}$ and $s_i = s'_{\si(i)}$ so that
\eq{ \label{csrelation} 
c_0 = c'_1, \qq c_1 = c'_2, \qq c_3 = c'_0, \qq s_0 = s'_1. 
}
Consider a pair $(K(u),\eta)$ that solves \eqref{intw-untw} for these particular values $\bm c$ and $\bm s$; it is uniquely defined up to a scalar factor and a choice of sign. 
Then according to Proposition \ref{prop:rotateintw}, the pair $(K^\si(u),\eta)$ with $K^\si(u) = Z^\si(\tfrac{\eta}{u})^{-1} K(u) Z^\si(\eta \, u)$ given by \eqref{eq:Ksigma} solves $K^\si(u)\, \RT_{\eta \, u}(b) = \RT_{\eta / u}(b)\, K^\si(u)$ for all $b \in \si(B_{\bm c,\bm s}(X,\tau)) = B_{\si(\bm c),\si(\bm s)}(X^\si,\tau^\si) = B_{\bm c',\bm s'}(X',\tau')$ as required. 
To express $K'(u)=K^\si(u)$ and $\eta'=\eta$ in terms of the original $\bm c'$ and $\bm s'$, in the expressions for $K^\si(u)$ and $\eta$ one should replace the $c_i$ and $s_i$ by the appropriate $c'_j$ and $s'_j$ according to \eqref{csrelation}.
\hfill \examend \end{exam}


\subsection{Dressing} \label{sec:dressing}

Let us consider QP algebras which are $\Ad(\wt{H}_q)$-equivalent. 
We will see that the underlying diagrams $(X,\tau)$ must be the same and the respective tuples $\bm c$ and $\bm s$ are related by straightforward transformations. 
The corresponding solutions of the boundary intertwining equation \eqref{intw-untw} or \eqref{intw-tw} are similar or congruent as matrices, respectively. 
It allows us to reduce our workload further: we only have to solve the intertwining equations for QP algebras $B_{\bm c,\bm s}(X,\tau)$ for suitably chosen $\bm c \in \mc{C}$ and $\bm s \in \mc{S}$. 


\subsubsection{$\Ad(\wt{H}_q)$-equivalences of QP algebras} 

The following is a straightforward extension of \cite[Prop.~9.2 (1)]{Ko1} and \cite[Prop.~3.16 (2)]{BgKo1}. 

\begin{prop} \label{prop:Htildeequiv}
Fix $(X,\tau) \in \GSat(A)$, $\bm c \in \mc{C}$ and $\bm s \in \mc{S}$.
For all $\chi \in \wt H_q$ we have
\eq{ \label{wtHqequivalence}
\Ad( \chi ) (B_{\bm c,\bm s}) = B_{\bm c',\bm s'} 
}
where $\bm c'\in \mc{C}$ and $\bm s' \in \mc{S}$ are determined by
\eq{ \label{cstransformations}
c'_j = \chi(\al_j + w_X(\al_{\tau(j)})) c_j, \qq s'_j = \chi(\al_j) s_j, \qq \text{for } j \in I \backslash X.
}
\end{prop}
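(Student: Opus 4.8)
The plan is to verify the claimed identity on generators of $B_{\bm c,\bm s}$, since $\Ad(\chi)$ is a Hopf algebra automorphism and both sides of \eqref{wtHqequivalence} are subalgebras. Recall from Definition \ref{def:Bcs} that $B_{\bm c,\bm s}$ is generated by $U_q(\mfn_X^+)$, $U_q(\mfh)^{\theta_q}$ and the elements $b_i$ ($i \in I$). Since $\Ad(\chi)$ acts as the identity on $U_q(\mfh)$, we immediately get $\Ad(\chi)(U_q(\mfh)^{\theta_q}) = U_q(\mfh)^{\theta_q}$. For $i \in X$ we have $b_i = y_i$ and $\Ad(\chi)(y_i) = \chi(\al_i)^{-1} y_i$, which is a nonzero scalar multiple of $b_i$ and hence lies in $B_{\bm c,\bm s}$ (note that for $i \in X$ the generator $b_i$ is unchanged by the parameter substitution, and the scalar is absorbed harmlessly since $B_{\bm c,\bm s}$ is an algebra). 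The content of the proposition is therefore the computation of $\Ad(\chi)(b_j)$ for $j \in I \backslash X$.

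The key step is the following. For $j \in I\backslash X$ we have $b_j = y_j + c_j \theta_q(y_j k_j) k_j^{-1} - s_j k_j^{-1}$. Now $\Ad(\chi)(y_j) = \chi(\al_j)^{-1} y_j$ and $\Ad(\chi)(k_j^{-1}) = k_j^{-1}$, so the last term contributes $-s_j k_j^{-1} = -\chi(\al_j)^{-1}(\chi(\al_j) s_j) k_j^{-1}$, i.e.\ it rescales by $\chi(\al_j)^{-1}$ and picks up the parameter $s'_j = \chi(\al_j) s_j$. For the middle term, I would use \eqref{thetaq:rootspace}: $\theta_q(y_j k_j) \in U_q(\mfg)_{\theta(-\al_j)}$, and more precisely $\theta_q(y_j k_j) k_j^{-1} \in U_q(\mfg)_{\theta(-\al_j) - \al_j} = U_q(\mfg)_{w_X\tau(\al_j) - \al_j}$ (using $\theta = -w_X\tau$ on $\mfh^*$ and recalling $\tau(X) = X$ so $\tau$ preserves $Q_X$). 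Hence $\Ad(\chi)$ scales this element by $\chi(w_X(\al_{\tau(j)}) - \al_j) = \chi(w_X(\al_{\tau(j)}))\chi(\al_j)^{-1}$. Therefore
\[
\Ad(\chi)(b_j) = \chi(\al_j)^{-1}\Big( y_j + \chi(\al_j + w_X(\al_{\tau(j)})) c_j\, \theta_q(y_j k_j) k_j^{-1} - \chi(\al_j) s_j\, k_j^{-1}\Big) = \chi(\al_j)^{-1} b'_j,
\]
where $b'_j$ is the generator of $B_{\bm c',\bm s'}$ with $c'_j, s'_j$ as in \eqref{cstransformations}. This shows $\Ad(\chi)(b_j) \in B_{\bm c',\bm s'}$, and applying the same argument to $\Ad(\chi)^{-1} = \Ad(\chi^{-1})$ gives the reverse inclusion, proving \eqref{wtHqequivalence}. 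Finally one checks $\bm c' \in \mc{C}$ and $\bm s' \in \mc{S}$: the condition defining $\mc{S}$ in \eqref{S-family} is preserved because $s'_j = 0 \iff s_j = 0$ (as $\chi(\al_j) \ne 0$); the condition defining $\mc{C}$ in \eqref{C-family} is preserved because, for $j \in I^*$ with $j \notin I_{\rm diff}$, one has $c_j = c_{\tau(j)}$ and also $\chi(\al_j + w_X(\al_{\tau(j)})) = \chi(\al_{\tau(j)} + w_X(\al_j))$ — this last equality follows from \eqref{A0}, which gives $\al_j + w_X(\al_{\tau(j)}) - \al_{\tau(j)} - w_X(\al_j) = A_j(X,\tau) = 0$ — so $c'_j = c'_{\tau(j)}$.

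The main obstacle I anticipate is bookkeeping the weight of $\theta_q(y_j k_j) k_j^{-1}$ correctly and making sure the $\tau$-twist is handled properly: one must carefully track that $\theta$ acts on $\mfh^*$ as $-w_X\tau$ (Proposition \ref{prop:theta:properties}), that $w_X(\al_j) - \al_j \in Q_X$ so that $\chi$ evaluated on it is well-behaved, and that the symmetry $\chi(\al_j + w_X(\al_{\tau(j)})) = \chi(\al_{\tau(j)} + w_X(\al_j))$ holds on the relevant $\tau$-orbits — this is exactly where \eqref{A0} (equivalently $A_j(X,\tau) = 0$) is needed, paralleling the argument in \cite[Prop.~9.2]{Ko1}. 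Everything else is a routine application of the multiplicativity of $\Ad(\chi)$ on $Q$-graded components.
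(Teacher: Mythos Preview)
Your approach is essentially the same as the paper's: compute $\Ad(\chi)$ on the generators and read off the transformed parameters. There is, however, a bookkeeping slip in the weight computation. Since $k_j^{-1} \in U_q(\mfg)_0$, the element $\theta_q(y_j k_j) k_j^{-1}$ has weight $\theta(-\al_j) = w_X(\al_{\tau(j)})$, \emph{not} $\theta(-\al_j) - \al_j$ as you write; so $\Ad(\chi)$ scales it by $\chi(w_X(\al_{\tau(j)}))$, not $\chi(w_X(\al_{\tau(j)}) - \al_j)$. Fortunately your displayed formula for $\Ad(\chi)(b_j)$ is nonetheless correct --- after factoring out $\chi(\al_j)^{-1}$ the middle term acquires exactly $\chi(\al_j)\chi(w_X(\al_{\tau(j)})) = \chi(\al_j + w_X(\al_{\tau(j)}))$ --- so the slip is inconsequential for the conclusion, but the intermediate sentence is inconsistent with the display and should be fixed.

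Two further minor points: you should say a word about $U_q(\mfn_X^+)$ (the generators $x_i$ for $i \in X$ are also rescaled by $\Ad(\chi)$, so this subalgebra is preserved); and your explicit verification that $\bm c' \in \mc{C}$ via \eqref{A0} is a useful detail the paper's proof leaves implicit.
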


\begin{proof}
Note that \eqref{def:b_j} implies that, for $j \in I \backslash X$, we have
\[ b_j = y_j - c_j T_{w_X}(x_{\tau(j)})k^{-1}_j - s_j k^{-1}_j \]
so that, using \eqref{thetaq:rootspace}, we have
\eq{
\Ad ( \chi ) (b_j) = \chi(\al_j)^{-1} \Big( y_j - \chi(\al_j + w_X(\al_{\tau(j)})) c_j T_{w_X}(x_{\tau(j)}) k_j^{-1} -   \chi(\al_j)  s_j k_j^{-1} \Big).
}
The observation that $\Ad(\chi)$ acts on the natural generators of $U_q(\mfg_X)$ and $U_q(\mfh)^{\theta_q}$ by scalar multiplication completes the proof.
\end{proof}

Recall the set $I^* \subseteq I \backslash X$ parametrizing the $\tau$-orbits in $I \backslash X$. 
Let $\bm c \in \mc{C}$ and $\bm s \in \mc{S}$ be given and consider \eqref{cstransformations}. 
By choosing $\chi$ suitably, it is possible to fix those $c'_j$ with $j \in I^*$ to arbitrary values (e.g. to 1) as follows (here we need that $\K$ is quadratically closed).

Suppose $j$ is such that $j \ne \tau(j)$.
Then \eqref{A0} implies $c'_{\tau(j)}=c_{\tau(j)}c'_j/ c_j$. 
If $j \notin I_{\rm diff}$ we naturally have $c'_{\tau(j)}=c'_j$; however if $j \in I_{\rm diff}$ the value $c'_{\tau(j)}$ cannot be fixed to an arbitrary value, since $c'_j$ has been fixed already and $c_j$ and $c_{\tau(j)}$ are given. 
Alternatively, if $j = \tau(j)$ then either $j \notin I_{\rm nsf}$ or $j \in I_{\rm nsf}$. 
In the former case $s'_j=s_j=0$; in the latter case $a_{ij}=0$ for all $i \in X$ so, using Lemma \ref{lem:Xjdecomposition}, we obtain $\chi(w_X(\al_{\tau(j)}))=\chi(\al_j)$.
Therefore $s'_j = s_j  (c'_j/c_j)^{1/2}$; we see that $s'_j$ cannot be assigned an arbitrary value.

Combining this with Proposition \ref{prop:QPalgebras} (i) we obtain the following corollary.

\begin{crl} \label{cor:nonremovable}
Let $(X,\tau) \in \GSat(A)$, $\bm c \in \mc{C}$ and $\bm s \in \mc{S}$.
Then $B_{\bm c,\bm s}(X,\tau)$ is $\Ad(\wt H_q)$-equivalent to a QP algebra with $|I_{\rm diff} \cup I_{\rm nsf}|$ free parameters. The Hopf algebra automorphism accomplishing this equivalence will contain the remaining $|I^*|$ degrees of freedom.
\end{crl}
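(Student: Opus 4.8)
The plan is to deduce Corollary~\ref{cor:nonremovable} by combining Proposition~\ref{prop:Htildeequiv} with the parameter count in Proposition~\ref{prop:QPalgebras}~(i), carefully tracking how the transformation rules \eqref{cstransformations} allow us to normalise the ``extra'' parameters. First I would recall the data: by Proposition~\ref{prop:QPalgebras}~(i) the algebra $B_{\bm c,\bm s}(X,\tau)$ has $|I^*| + |I_{\rm diff} \cup I_{\rm nsf}|$ free parameters, namely the $c_j$ for $j \in I^*$ together with one extra parameter (either $c_{\tau(j)}$ or $s_j$) for each $j \in I_{\rm diff} \cup I_{\rm nsf}$. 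The goal is to produce $\chi \in \wt H_q$ such that in the equivalent algebra $B_{\bm c',\bm s'} = \Ad(\chi)(B_{\bm c,\bm s})$ the parameters indexed by $I^*$ are all set to a fixed value, say $c'_j = 1$ for all $j \in I^*$, leaving exactly the $|I_{\rm diff} \cup I_{\rm nsf}|$ parameters that cannot be normalised away.

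The key step is the construction of $\chi$. Since $\chi$ is a homomorphism $Q \to \K^\times$, it is determined freely by its values $\chi(\al_i) \in \K^\times$ on the simple roots $\al_i$, $i \in I$. I would set $\chi(\al_i) = 1$ for all $i \in X$ (so that $\Ad(\chi)$ fixes $U_q(\mfg_X)$ and, via \eqref{A0} and \eqref{Satdiag1a}, behaves predictably on the relevant weights), and then for each $\tau$-orbit in $I \backslash X$ choose the value of $\chi$ on its representative $j \in I^*$ so as to force $c'_j = \chi(\al_j + w_X(\al_{\tau(j)})) c_j = 1$; concretely $\chi(\al_j + w_X(\al_{\tau(j)})) = c_j^{-1}$, which is achievable since $\K$ is quadratically (indeed algebraically) closed and we have freedom in $\chi(\al_j)$ and, when $j \ne \tau(j)$, also in $\chi(\al_{\tau(j)})$. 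One must check this is consistent: when $j = \tau(j)$, $\al_j + w_X(\al_j)$ is a single well-defined weight and $\chi(\al_j)$ can be chosen (taking a square root if $j \in I_{\rm nsf}$, using that $\chi(w_X(\al_j)) = \chi(\al_j)$ by Lemma~\ref{lem:Xjdecomposition} in that case); when $j \ne \tau(j)$, one sets $\chi(\al_j)$ to realise $c'_j = 1$ and this simultaneously determines $c'_{\tau(j)} = c_{\tau(j)} c'_j / c_j = c_{\tau(j)}/c_j$ via \eqref{A0}, which is precisely the non-removable parameter attached to $j \in I_{\rm diff}$ (and equals $1$ automatically if $j \notin I_{\rm diff}$). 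The discussion immediately preceding the corollary in the excerpt already carries out exactly this analysis of which $c'_j$ and $s'_j$ can and cannot be fixed; I would simply invoke it.

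Having fixed $\chi$, I would then verify that $\bm c' \in \mc{C}$ and $\bm s' \in \mc{S}$: this is part of the content of Proposition~\ref{prop:Htildeequiv}, but one should note that the conditions defining $\mc{C}$ and $\mc{S}$ in \eqref{C-family}--\eqref{S-family} are preserved by the scalings \eqref{cstransformations} because $\chi(\al_j) \ne 0$ and, by \eqref{A0}, $c'_j = c'_{\tau(j)}$ whenever $c_j = c_{\tau(j)}$ outside $I_{\rm diff}$. The resulting $B_{\bm c',\bm s'}(X,\tau)$ has all its $I^*$-indexed parameters equal to $1$, hence genuinely depends only on the $|I_{\rm diff} \cup I_{\rm nsf}|$ remaining free parameters. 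Finally, the statement that ``the Hopf algebra automorphism accomplishing this equivalence will contain the remaining $|I^*|$ degrees of freedom'' is simply the observation that the $\chi$ realising the normalisation is not unique: the subgroup of $\wt H_q$ fixing $B_{\bm c',\bm s'}$ pointwise (or at least preserving its parameter values) has a complement of dimension $|I^*|$, so the family of valid $\chi$ is an $|I^*|$-parameter family.

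The main obstacle I anticipate is the bookkeeping around the weights $\al_j + w_X(\al_{\tau(j)})$: one must confirm that these weights, for $j$ ranging over $I^*$, are ``independent enough'' that the values $\chi(\al_j + w_X(\al_{\tau(j)}))$ can be prescribed independently, and that no hidden linear relation among them (coming from the structure of $w_X$ and $\tau$ on $Q_X$, governed by \eqref{Satdiag1a} and Lemma~\ref{lem:Xjdecomposition}) obstructs the simultaneous normalisation. This is where \eqref{A0}, i.e. $A_j(X,\tau) = \al_j - \al_{\tau(j)} + w_X(\al_{\tau(j)}) - w_X(\al_j) = 0$, does the crucial work: it pins down exactly the relation between the orbit-representative's parameter and its $\tau$-partner's, so that after normalising the $|I^*|$ representatives the only surviving freedom is the $|I_{\rm diff}|$ ratios and the $|I_{\rm nsf}|$ shift parameters, matching Proposition~\ref{prop:QPalgebras}~(i). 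Everything else is routine once this is in place.
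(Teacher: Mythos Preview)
Your proposal is correct and follows essentially the same approach as the paper: the argument for the corollary is precisely the discussion in the paragraphs immediately preceding it, which you explicitly invoke. You add a bit more explicit structure (setting $\chi(\al_i)=1$ for $i \in X$, and flagging the independence of the weights $\al_j + w_X(\al_{\tau(j)})$ as a point to verify), but the core reasoning---use Proposition~\ref{prop:Htildeequiv} to normalise $c'_j$ for $j \in I^*$, use \eqref{A0} to see that $c'_{\tau(j)}/c'_j$ is fixed, use Lemma~\ref{lem:Xjdecomposition} to see that $s'_j$ is determined up to the square root for $j \in I_{\rm nsf}$, then count via Proposition~\ref{prop:QPalgebras}~(i)---is identical to the paper's.
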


To have uniform expressions for the intertwiners $K(u)$ it is convenient to let the generators of $B_{\bm c,\bm s}$ associated to a special orbit depend on a free parameter $\la$ appearing in $K(u)$ in the following way.
For all $j \in I_{\rm diff}$ it is convenient to choose $c_j/c_{\tau(j)}$ equal to $\la^2$ times a signed integer power of $q_j$, unless $A$ is of type ${\rm C}^{(1)}_n$ or ${\rm D}^{(1)}_n$, $\tau \in [\pi]$ and $X$ is of maximal size, in which case we choose $c_j/c_{\tau(j)} = \la^4$.
Similarly for all $j \in I_{\rm nsf}$ we will choose $s_j/c_j^{1/2}$ equal to $\tfrac{\la - \la^{-1}}{q_j-q_j^{-1}}$ or $\tfrac{\la + \la^{-1}}{q_j-q_j^{-1}}$ times a square root of a signed integer power of $q_j$.


\subsubsection{Dressing symmetries of the representation $\RT_u$} \label{sec:natrepdress}

On the level of the representation $\RT_u$, Hopf algebra automorphisms induced by elements of $\wt H_q$ appear as conjugations by diagonal matrices; in other words there are diagonal matrices which intertwine the representations $\RT_u$ and $\RT_u \circ \Ad(\chi)$ for arbitrary $\chi \in \wt H_q$.
Let us make this statement precise.

\begin{defn} \label{defnG}
Denote
\eqa{ \label{Omega:defn}
\Omega_n &= \begin{cases} 
\{ \bm \omega \in (\K^\times)^N \, | \, \prod_{i=1}^N \om_i = 1 \} \hspace{4.3mm}\; & \text{for } \wh\mfsl_N, \\
(\K^\times)^n & \text{for } \wh\mfso_N \text{ and } \wh\mfsp_N.
\end{cases}
\intertext{For $\bm \om \in \Omega_n$, define the \emph{dressing matrix} $G(\bm \omega) \in {\rm SL}(\K^N)$ by}
\label{G:defn}
G(\bm \omega) &= 
\begin{cases} 
\sum\limits_{i=1}^N \omega_i E_{ii}
& \text{for }\wh\mfsl_{N}, \\[1em]
E_{00} + \sum\limits_{i=1}^n (\omega_i E_{-\bar \imath,-\bar \imath} + \omega_i^{-1} E_{\bar \imath \, \bar \imath}) & \text{for }\wh\mfso_{2n+1}, \\[1em]
\sum\limits_{i=1}^n (\omega_i E_{-\bar \imath,-\bar \imath} + \omega_i^{-1} E_{\bar \imath \, \bar \imath})  & \text{for $\wh\mfsp_{2n}$ and $\wh\mfso_{2n}$}.
\end{cases} 
\intertext{Finally, for $\bm \om \in \Omega_n$ and $\eta \in \K^\times$ define $\zeta(\bm \omega,\eta) \in \wt H_q$ as follows:}
\label{zetaomega:defn}
& \hspace{-6.4cm} \begin{aligned}
\zeta(\bm \omega,\eta)(\al_i) &= 
\om_i \om_{i+1}^{-1} \hspace{4.05cm} 
&& \text{ for } 1\le i \le n-1, 
\\
\zeta(\bm \omega,\eta)(\al_n) &= 
\casesl{l}{ 
\om_n \om_{n+1}^{-1} \\
\om_n \\
\om_n^2 \\
\om_{n-1} \om_{n}
} 
&& 
\casesm{l}{ 
\text{ for }\wh\mfsl_N, \\
\text{ for }\wh\mfso_{2n+1}, \\
\text{ for }\wh\mfsp_{2n}, \\
\text{ for }\wh\mfso_{2n},
}
\\
\hspace{45mm} 
\zeta(\bm \omega,\eta)(\al_0) &= 
\casesl{l}{ 
\eta \, \om_1^{-1} \om_N \\
\eta \, \om_1^{-1} \om_2^{-1} \\
\eta \, \om_1^{-2} 
} 
&& 
\casesm{l}{ 
\text{ for }\wh\mfsl_{N}, \\
\text{ for }\wh\mfso_{N}, \\
\text{ for }\wh\mfsp_{2n}.
} 
\end{aligned} 
}

\vspace{-1.5em}\hfill \defnend
\end{defn}

\smallskip

The matrices $G(\bm \om)$ intertwine the representations $\RT_{\eta u}$ and $\RT_u \circ \Ad(\zeta(\bm \om,\eta))$ ~of~$U_q(\mfg)$:

\begin{prop} \label{prop:Gomegaequiv}
Let $\bm \omega$ be as in Definition \ref{defnG} and let $\eta \in \K^\times$. For all $a \in U_q(\mfg)$ we have
\eq{ 
G(\bm \omega)\, \RT_{\eta u}(a) = \RT_u(\Ad(\zeta(\bm \omega,\eta) )(a))\, G(\bm \omega). \label{Tomega} 
}
\end{prop}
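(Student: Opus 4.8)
The statement \eqref{Tomega} is an intertwining identity between two representations of $U_q(\mfg)$ on $\K^N$: on the left $\RT_{\eta u}$, on the right $\RT_u \circ \Ad(\zeta(\bm\omega,\eta))$. Since both sides of \eqref{Tomega} are algebra homomorphisms composed with left/right multiplication by the fixed invertible matrix $G(\bm\omega)$, it suffices to verify the identity on the algebra generators $x_i$, $y_i$, $k_i$ for $i \in I = \{0,1,\ldots,n\}$; equivalently, to check that $G(\bm\omega)\,\RT_{\eta u}(a)\,G(\bm\omega)^{-1} = \RT_u(\Ad(\zeta(\bm\omega,\eta))(a))$ for each such $a$. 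The plan is to split into the cases $1 \le i \le n-1$ (the ``bulk'' generators, handled by \eqref{rep:A} or \eqref{rep:BCDi}), the case $i = n$ (handled by \eqref{rep:BCDn}), and the affine case $i = 0$ (handled by \eqref{affrep}), treating $\wh\mfsl_N$, $\wh\mfso_N$ and $\wh\mfsp_N$ separately as the explicit formulas dictate.

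First I would record the conjugation action of $G(\bm\omega)$ on elementary matrices. For $\wh\mfsl_N$, $G(\bm\omega)E_{jk}G(\bm\omega)^{-1} = (\omega_j/\omega_k)E_{jk}$; in particular $G(\bm\omega)$ commutes with all $\RT_u(k_i) = q^{E_{ii}-E_{i+1,i+1}}$ (diagonal), conjugates $\RT_u(x_i)=E_{i,i+1}$ to $(\omega_i/\omega_{i+1})E_{i,i+1}$ and $\RT_u(y_i)=E_{i+1,i}$ to $(\omega_{i+1}/\omega_i)E_{i+1,i}$. Comparing with $\Ad(\zeta(\bm\omega,\eta))(x_i)=\zeta(\bm\omega,\eta)(\al_i)x_i$ and $\Ad(\zeta(\bm\omega,\eta))(y_i)=\zeta(\bm\omega,\eta)(\al_i)^{-1}y_i$ and using $\zeta(\bm\omega,\eta)(\al_i)=\omega_i\omega_{i+1}^{-1}$ for $1\le i\le n-1$, the identity is immediate for bulk generators. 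For $\wh\mfso_N$ and $\wh\mfsp_N$ one similarly computes $G(\bm\omega)F_{jk}G(\bm\omega)^{-1}$ using the definition $F_{jk}=E_{jk}-\vartheta_j\vartheta_k E_{-k,-j}$ and the fact that $G(\bm\omega)$ acts on $E_{\bar\imath\,\bar\imath}$ by $\omega_i^{-1}$ and on $E_{-\bar\imath,-\bar\imath}$ by $\omega_i$ (with $E_{00}$ fixed); since $G(\bm\omega)$ is diagonal this is routine bookkeeping with the index shift $\bar\jmath = n+1-j$.

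The genuinely delicate steps are $i=n$ and $i=0$, where the scaling parameter $\eta$ enters (for $i=0$) and where the representation formulas involve sums of two elementary-matrix terms that must scale consistently. For $i=0$ in type $\wh\mfsl_N$, one checks $G(\bm\omega)\,\RT_{\eta u}(x_0)\,G(\bm\omega)^{-1} = G(\bm\omega)(\eta u E_{N1})G(\bm\omega)^{-1} = \eta u\,(\omega_N/\omega_1)E_{N1} = \zeta(\bm\omega,\eta)(\al_0)\cdot uE_{N1} = \RT_u(\Ad(\zeta(\bm\omega,\eta))(x_0))$, using $\zeta(\bm\omega,\eta)(\al_0)=\eta\,\omega_1^{-1}\omega_N$; the $y_0$ and $k_0$ cases are analogous (and $k_0$ uses only diagonality). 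For $\wh\mfso_N$ and $\wh\mfsp_N$ the case $i=n$ requires that the two terms of $\RT_u(x_n)$ (e.g.\ $F_{-2,1}$ for $\wh\mfso_{2n}$, or the $c(E_{-1,0}-q^{-1/2}E_{01})$ for $\wh\mfso_{2n+1}$) both pick up the same factor $\omega_{n-1}\omega_n$, resp.\ $\omega_n^2$, under $G(\bm\omega)$-conjugation, and likewise for $i=0$ the two terms of $\RT_{\eta u}(x_0)$ (e.g.\ $uF_{n-1,-n}$) must scale by $\eta\,\omega_1^{-1}\omega_2^{-1}$, resp.\ $\eta\,\omega_1^{-2}$; checking this consistency across all three families, using the precise definitions of $\vartheta_i$, $\nu_i$, $\bar\imath$ and $G(\bm\omega)$, is the main (if mechanical) obstacle. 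Throughout one uses that $\bm\omega \in \Omega_n$ lies in $(\K^\times)^n$ (or, for $\wh\mfsl_N$, in the subgroup with $\prod_i\omega_i=1$, which guarantees $G(\bm\omega)\in\mathrm{SL}(\K^N)$ but is not otherwise needed for \eqref{Tomega}), and that $\zeta(\bm\omega,\eta)$ is a well-defined element of $\wt H_q=\Hom(Q,\K^\times)$ because $Q$ is free on $\{\al_i\}_{i\in I\setminus\{?\}}$ after removing one node—more precisely, $\zeta(\bm\omega,\eta)$ is specified on all simple roots $\al_i$ and this determines a character of $Q$. I would close by remarking that since the assignments agree on generators and both sides respect the algebra structure, \eqref{Tomega} holds for all $a \in U_q(\mfg)$.
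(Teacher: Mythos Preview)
Your approach is correct and is precisely the one the paper takes: the paper's proof reads in its entirety ``This follows from a straightforward check using the formulas \eqrefs{rep:A}{affrep} and \eqrefs{G:defn}{zetaomega:defn},'' which is exactly the generator-by-generator verification you outline. Your write-up simply spells out the bookkeeping that the paper leaves implicit.
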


\begin{proof}
This follows from a straightforward check using the formulas \eqrefs{rep:A}{affrep} and \eqrefs{G:defn}{zetaomega:defn}.
\end{proof}

Analogously to Corollary \ref{cor:RZZ1} we have the following.

\begin{crl}
Let $\bm \omega$ be as in Definition \ref{defnG}.
The R-matrices defined by \eqref{Ru:defn} satisfy 
\eq{
[R(\tfrac{u}{v}),G(\bm \om) \ot G(\bm \om)]=0.
}
\end{crl}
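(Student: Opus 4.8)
The statement to be proved is the corollary asserting $[R(\tfrac uv),G(\bm\om)\ot G(\bm\om)]=0$ for all $\bm\om\in\Omega_n$. This is the exact analogue of Corollary \ref{cor:RZZ1}, so I would mirror its proof. The idea is to invoke Schur's lemma using the intertwining property of $G(\bm\om)$ just established in Proposition \ref{prop:Gomegaequiv}, combined with the fact (recalled in Section \ref{sec:Rint}) that $R(\tfrac uv)$ is, up to scalar, the unique intertwiner of $\RT_u\ot\RT_v$, which is irreducible for generic $u/v$.

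\textbf{Key steps.} First I would fix $\bm\om\in\Omega_n$ and set $\chi:=\zeta(\bm\om,\eta)$ for, say, $\eta=1$ (the choice of $\eta$ is immaterial since $\Ad(\chi)$ only depends on $\chi$, and we only need \emph{some} intertwining relation; alternatively one keeps $\eta$ arbitrary and absorbs the shift). Proposition \ref{prop:Gomegaequiv} then reads $G(\bm\om)\,\RT_{\eta u}(a)=\RT_u(\Ad(\chi)(a))\,G(\bm\om)$ for all $a\in U_q(\mfg)$. Applying this in each tensor factor, the matrix $G(\bm\om)\ot G(\bm\om)$ intertwines $(\RT_{\eta u}\ot\RT_{\eta v})\circ\Delta$ with $(\RT_u\ot\RT_v)\circ(\Ad(\chi)\ot\Ad(\chi))\circ\Delta$; since $\Ad(\chi)$ is a Hopf algebra automorphism we have $(\Ad(\chi)\ot\Ad(\chi))\circ\Delta=\Delta\circ\Ad(\chi)$, so in fact $G(\bm\om)\ot G(\bm\om)$ intertwines $(\RT_{\eta u}\ot\RT_{\eta v})\circ\Delta$ with $(\RT_u\ot\RT_v)\circ\Delta\circ\Ad(\chi)$. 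Next, combining the intertwining relation \eqref{R:intw} for $R(\tfrac uv)$ (applied at spectral parameters $\eta u,\eta v$, noting $\tfrac{\eta u}{\eta v}=\tfrac uv$) with the above, one checks that both $R(\tfrac uv)\,(G(\bm\om)\ot G(\bm\om))$ and $(G(\bm\om)\ot G(\bm\om))\,R(\tfrac uv)$ intertwine the irreducible representation $(\RT_u\ot\RT_v)\circ\Delta^{\rm op}$ of $U_q(\mfg)$ (for generic $u/v$) with $(\RT_u\ot\RT_v)\circ\Delta\circ\Ad(\chi)$. Hence by Schur's lemma over the algebraically closed field $\K$ they differ by a scalar $\gamma(u,v)\in\K^\times$, rational in $u,v$. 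Finally, taking determinants (both $R(\tfrac uv)$ and $G(\bm\om)\ot G(\bm\om)$ are invertible for generic $u/v$) forces $\gamma(u,v)^{N^2}=1$, so $\gamma$ is a root of unity; being rational in the spectral parameters it is constant, and evaluating at $u=v$ using $R(1)=P$ (regularity \eqref{Ru:reg}) together with the obvious identity $P\,(G(\bm\om)\ot G(\bm\om))=(G(\bm\om)\ot G(\bm\om))\,P$ (true because $G(\bm\om)\ot G(\bm\om)$ is a tensor square, hence commutes with the flip) gives $\gamma=1$, which is precisely the claimed commutation relation.

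\textbf{Main obstacle.} There is essentially no serious obstacle: the proof is a verbatim transcription of the argument for Corollary \ref{cor:RZZ1}, which in turn is an instance of the irreducibility-plus-Schur technique used repeatedly in Sections \ref{sec:Rint} and \ref{sec:Kmat}. The only point requiring a moment's care is bookkeeping with the shift $\eta$ in Proposition \ref{prop:Gomegaequiv}: since $R(\tfrac{u}{v})$ depends only on the ratio, the uniform rescaling $u\mapsto\eta u$, $v\mapsto\eta v$ leaves it unchanged, so the argument closes cleanly. One could even shortcut the determinant/evaluation step by simply noting, as in Section \ref{sec:Rint}, that $R(\tfrac uv)$ has the constant eigenvector $e\ot e$ and $G(\bm\om)\ot G(\bm\om)$ acts on it as a scalar, immediately pinning $\gamma$; but the regularity argument via $R(1)=P$ is equally short. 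Because the statement is routine, in the actual paper this is presumably dispatched with a one-line proof along the lines of ``This follows from Proposition \ref{prop:Gomegaequiv}, the irreducibility of $\RT_u\ot\RT_v$ for generic $u/v$, \eqref{R:intw}, and Schur's lemma, exactly as for Corollary \ref{cor:RZZ1}.''
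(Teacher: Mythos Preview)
Your proposal is correct and matches the paper's approach exactly: the paper states this corollary with only the preamble ``Analogously to Corollary \ref{cor:RZZ1} we have the following,'' relying on the same irreducibility-plus-Schur argument you have spelled out in detail. Your closing prediction is accurate.
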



\subsubsection{Dressing of K-matrices} \label{sec:Kmatdressing}

We have a statement analogous to Proposition \ref{prop:rotateintw} for $\Ad(\wt{H}_q)$-equivalent QP algebras.

\begin{prop} \label{prop:dressintw}
Let $\bm \omega$ be as in Definition \ref{defnG} and $\eta \in \K^\times$. 
Fix $(X,\tau) \in \GSat(A)$, $\bm c \in \mc{C}$ and $\bm s \in \mc{S}$. 
The pair $(K(u;\bm \omega),\eta)$, where
\eq{
\label{eq:Kdressing} 
K(u;\bm \om) := \begin{cases} 
G(\bm \om)^{-1} K(u)\, G(\bm \om)  & \text{if $K(u)$ is untwisted}, \\ 
G(\bm \om)\, K(u)\, G(\bm \om) & \text{if $K(u)$ is twisted},
\end{cases}
}
is a solution to the boundary intertwining equation \eqref{intw-untw} or \eqref{intw-tw}, respectively, for all $b \in B_{\bm c,\bm s}(X,\tau)$ precisely if $(K(u),1)$ is a solution to the same equation for all $b \in \Ad(\zeta(\bm \omega,\eta))(B_{\bf c,\bf s}(X,\tau))$.
\end{prop}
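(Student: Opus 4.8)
The plan is to reduce the statement to a direct application of Proposition \ref{prop:Gomegaequiv}, exactly in the spirit of the proof of Proposition \ref{prop:rotateintw}. The key observation is that the diagonal matrix $G(\bm\om)$ intertwines the representations $\RT_{\eta u}$ and $\RT_u \circ \Ad(\zeta(\bm\om,\eta))$, and this is precisely what is needed to transport a solution of the boundary intertwining equation for $\Ad(\zeta(\bm\om,\eta))(B_{\bm c,\bm s}(X,\tau))$ to one for $B_{\bm c,\bm s}(X,\tau)$ after conjugating by $G(\bm\om)$. The role of the scaling parameter $\eta$ is absorbed entirely into $\zeta(\bm\om,\eta)$, which explains why the original K-matrix has scaling parameter $1$ while the dressed one has scaling parameter $\eta$.

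First I would treat the untwisted case. Fix $\chi := \zeta(\bm\om,\eta) \in \wt H_q$ and suppose $(K(u),1)$ solves \eqref{intw-untw} for all $b \in \Ad(\chi)(B_{\bm c,\bm s}(X,\tau))$, i.e. $K(u)\,\RT_u(\Ad(\chi)(b)) = \RT_{1/u}(\Ad(\chi)(b))\,K(u)$ for all $b \in B_{\bm c,\bm s}(X,\tau)$. Using \eqref{Tomega} from Proposition \ref{prop:Gomegaequiv} in the form $\RT_u(\Ad(\chi)(a))\,G(\bm\om) = G(\bm\om)\,\RT_{\eta u}(a)$, together with the analogous identity $\RT_{1/u}(\Ad(\chi)(a))\,G(\bm\om) = G(\bm\om)\,\RT_{\eta/u}(a)$ — which follows from \eqref{Tomega} upon replacing $u$ by $u^{-1}$ and noting that $\RT_{\eta/u}$ is obtained by the substitution $u \mapsto u^{-1}$ in $\RT_{\eta u}$ — one computes for $b \in B_{\bm c,\bm s}(X,\tau)$:
\[
K(u;\bm\om)\,\RT_{\eta u}(b) = G(\bm\om)^{-1} K(u)\, \RT_u(\Ad(\chi)(b))\, G(\bm\om) = G(\bm\om)^{-1}\RT_{1/u}(\Ad(\chi)(b))\, K(u)\, G(\bm\om) = \RT_{\eta/u}(b)\, K(u;\bm\om).
\]
Reading this chain of equalities in reverse shows that $(K(u;\bm\om),\eta)$ solving \eqref{intw-untw} for $b \in B_{\bm c,\bm s}(X,\tau)$ is equivalent to $(K(u),1)$ solving \eqref{intw-untw} for $b \in \Ad(\chi)(B_{\bm c,\bm s}(X,\tau))$; here one uses that $G(\bm\om)$ is invertible (it lies in ${\rm SL}(\K^N)$) so the implications go both ways.

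For the twisted case the argument is the same, with two modifications: one conjugates $K(u)$ by $G(\bm\om)$ on \emph{both} sides (as in \eqref{eq:Kdressing}), and one must handle the transpose and the antipode. Since $\RT^{\,\t}_{\eta/u}(S(b))$ appears in \eqref{intw-tw}, I would transpose the identity $\RT_{1/u}(\Ad(\chi)(S(b)))\,G(\bm\om) = G(\bm\om)\,\RT_{\eta/u}(S(b))$ to get $G(\bm\om)^\t\, \RT^{\,\t}_{1/u}(\Ad(\chi)(S(b))) = \RT^{\,\t}_{\eta/u}(S(b))\, G(\bm\om)^\t$, and use that $G(\bm\om)$ is diagonal so $G(\bm\om)^\t = G(\bm\om)$ (inspecting \eqref{G:defn}, in every case $G(\bm\om)$ is a sum of scalar multiples of diagonal elementary matrices $E_{ii}$, hence symmetric). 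One also uses that $\Ad(\chi)$, being an algebra homomorphism on $U_q(\mfg)$, commutes with the antipode $S$, so $\Ad(\chi)(S(b)) = S(\Ad(\chi)(b))$. Then the same two-line computation as above, now with $K(u;\bm\om) = G(\bm\om)\,K(u)\,G(\bm\om)$, gives the equivalence. I do not anticipate a genuine obstacle here; the only point requiring a little care is bookkeeping the $u \mapsto u^{-1}$ substitutions and the symmetry of $G(\bm\om)$, both of which are routine verifications once Proposition \ref{prop:Gomegaequiv} is in hand.
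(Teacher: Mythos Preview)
Your proof is correct and follows essentially the same approach as the paper, which simply refers back to the proof of Proposition \ref{prop:rotateintw} and invokes \eqref{Tomega} together with the commutation of $\Ad(\zeta(\bm\om,\eta))$ and $S$ in the twisted case. The only minor imprecision is your justification that $\Ad(\chi)$ commutes with $S$: this is because $\Ad(\chi)$ is a \emph{Hopf algebra} automorphism (see \eqref{Hopfalgautodecomp}), not merely an algebra homomorphism.
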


\begin{proof} 
This is completely analogous to the proof of Proposition \ref{prop:rotateintw}.
Let $b \in B_{\bm c,\bm s}(X,\tau)$. Using the expression \eqref{eq:Kdressing} and the relation \eqref{Tomega}, one straightforwardly derives that $K(u;\bm \om)\, \RT_{\eta u}(b) = \RT_{\eta/u}(b)\, K(u;\bm \omega)$ is equivalent to $K(u)\, \RT_{u}(\Ad(\zeta(\bm \omega,\eta))(b)) = \RT_{1/u}(\Ad(\zeta(\bm \omega,\eta) )(b))\, K(u)$. In the twisted case one additionally uses that that the antipode $S$ and $\Ad(\zeta(\bm \omega,\eta) )$ commute. 
\end{proof}

Proposition \ref{prop:dressintw} implies that we need to solve the intertwining equation \eqref{intw-untw} or \eqref{intw-tw} only for $\eta=1$ and for particular representatives of the $\Ad(\wt H_q)$-equivalence classes.
We will use this to simplify and structure the presentation of the K-matrices classified in this paper, by combining it with Corollary \ref{cor:nonremovable}; in that light it turns out to be natural to call the matrix $K(u;\bm \omega)$ a \emph{dressed K-matrix} and the matrix $K(u)$ a \emph{bare K-matrix}; furthermore the free parameters $\omega_i$ are called \emph{dressing parameters} and the remaining free parameters in $K(u)$ the \emph{non-removable free parameters}, which are associated to the sets $I_{\rm diff}$ and $I_{\rm nsf}$, {\it i.e.}~to the special orbits.

Let us explain this in more detail.
Consider $(X,\tau) \in \GSat(A)$ and let $\bm c \in \mc{C}$ and $\bm s \in \mc{S}$ be given.
According to Corollary \ref{cor:nonremovable}, $\Ad(\chi)(B_{\bm c,\bm s})=B_{\bm c',\bm s'}$ with $\chi$, depending on the free parameters $c_i$ with $i \in I^*$, chosen so that $B_{\bm c',\bm s'}$ only has $|I_{\rm diff} \cup I_{\rm nsf}|$ degrees of freedom (those $c_{\tau(i)}$ with $i \in I^* \cap I_{\rm diff}$ and those $s_i$ with $i \in I_{\rm nsf}$). 
Now choose 
\[ \chi = \zeta(\bm \omega,\eta). \] 
Then from Proposition \ref{prop:dressintw} we know that $(K(u;\bm \omega),\eta)$ satisfies the boundary intertwining equation associated to $B_{\bm c,\bm s}$ if and only if $(K(u),1)$ satisfies the boundary intertwining equation for $B_{\bm c',\bm s'}$.
According to Corollary \ref{cor:nonremovable}, the number of degrees of freedom in $\chi$ is $|I^*|$; since we equate $\chi$ to $\zeta(\bm \omega,\eta)$ there are also $|I^*|$ degrees of freedom among the dressing parameters and the scaling parameter. 
The scaling parameter $\eta$ will take one of these degrees of freedom and there are $|I^*|-1$ degrees of freedom to be distributed over $n$ dressing parameters $\om_i$. This means, unless $I^*=I$, there is some redundancy among the dressing parameters. 
It is always possible to set $n+1-|I^*|$ of them to 1; the remaining $|I^*|-1$ are called \emph{effective dressing parameters}.

In our case-by-case results we will write the ensuing relations between the algebra parameters $\bm c$ and $\bm s$ on the one hand and the effective dressing parameters, the scaling parameter and any non-removable free parameters on the other in the form $c_j=\ldots$ and $s_j=\ldots$ for $j \in I \backslash X$.

\begin{exam}
Let $A$ be of type ${\rm A}^{(1)}_3$. 
Then the choice $X = \{ 2\}$, $\tau = (13)$ defines a Satake diagram.
Choose $I^* = \{0,1\}$. We have $I_{\rm nsf} = \{0\}$ and $I_{\rm diff} = \{1\}$.
Note that $w_X = r_2$ so that $w_X(\al_1)-\al_1 = w_X(\al_3)-\al_3 = \al_2$.
Then $\bm c = (c_0,c_1,c_3)$ and $\bm s = (s_0,0,0)$ with four free parameters.
From Proposition \ref{prop:Htildeequiv} it follows that $B_{\bm c,\bm s} = \Ad(\chi) (B_{\bm c',\bm s'})$, where
\[ c'_0 = \chi(\al_0)^2 c_0 , \qq c'_1 = \chi(\al_1) \chi(\al_2) \chi(\al_3) c_1 , \qq c'_3 = \chi(\al_1) \chi(\al_2) \chi(\al_3) c_3 , \qq s'_0 = \chi(\al_0) s_0 \]
for some $\chi \in \wt H_q$.
On the other hand, Proposition \ref{prop:Gomegaequiv} suggests to consider $\Ad(\zeta(\bm \om,\eta))(B_{\bm c,\bm s})$.
The choice $\chi = \zeta(\bm \omega,\eta)$ implies
\[ c_0 = c'_0 \eta^{-2} \om_1^2 \om_4^{-2}, \qq c_1 = c'_1 \om_1^{-1} \om_4, \qq c_3 = c'_3 \om_1^{-1} \om_4, \qq s_0 = s'_0 \eta^{-1} \om_1 \om_4^{-1} \]
or, equivalently,
\begin{flalign*}
&& \eta^2 = \frac{c_0' c_1' c_3'}{c_0 c_1 c_3}, \qq \left( \frac{\om_1}{\om_4} \right)^2 = \frac{c_1 c_3}{c_1' c_3'}, \qquad \frac{c_1'}{c_3'} = \frac{c_1}{c_3}, \qquad (s_0')^2 = s_0^2 \frac{c_0'}{c_0} \left( \frac{c_1' c_3'}{c_1 c_3} \right)^2. && \examend
\end{flalign*}
\end{exam}


\section{Classification of generalized Satake diagrams of affine type} \label{sec:Satdiagclassification}

Satake diagrams associated to (untwisted and twisted) affine Dynkin diagrams were classified in \cite{BBBR}.
In this section we will in part review the classification in as much it pertains to Dynkin diagrams of untwisted affine Lie type A, B, C or D, extend it to include weak Satake diagrams, and introduce additional notation that will be useful in classifying K-matrices.
We restrict to untwisted affine Cartan matrices of classical Lie type, although many observations apply to twisted affine Cartan matrices or affine Cartan matrices of exceptional Lie type.


\subsection{Diagram involutions} \label{sec:diagraminvolutions} 

We introduce additional terminology involving diagram involutions.

\begin{defn}
Let $\tau \in \Aut(A)$ be an involution.
Those $Y \subseteq I$ consisting of the labelled nodes subject to $|i-i'|=1$ in the following subdiagrams and low-rank diagrams are called \emph{$\tau$-lateral}:
\begin{gather}
\label{lateralsets1}
\begin{tikzpicture}[baseline=-0.35em,scale=0.8,line width=0.7pt]
\draw[double,->] (0,0) -- (.4,0);
\draw[thick,dotted] (-.5,0) -- (0,0);
\filldraw[fill=white] (0,0) circle (.1);
\filldraw[fill=white] (.5,0) circle (.1) node[right]{\small$i$};
\end{tikzpicture}
\qq
\begin{tikzpicture}[baseline=-0.35em,scale=0.8,line width=0.7pt]
\draw[double,<-] (0,0) -- (.4,0);
\draw[thick,dotted] (-.5,0) -- (0,0);
\filldraw[fill=white] (0,0) circle (.1);
\filldraw[fill=white] (.5,0) circle (.1) node[right]{\small$i$};
\end{tikzpicture}
\qq
\begin{tikzpicture}[baseline=-0.35em,scale=0.8,line width=0.7pt]
\draw[thick] (.6,.3) -- (0,0) -- (.4,-.3);
\draw[thick,dotted] (-.5,0) -- (0,0);
\filldraw[fill=white] (.6,.3) circle (.1) node[right]{\small$i$};
\filldraw[fill=white] (.4,-.3) circle (.1) node[right]{\small$i'$};
\filldraw[fill=white] (0,0) circle (.1);
\end{tikzpicture}
\qq
\begin{tikzpicture}[baseline=-0.35em,scale=0.8,line width=0.7pt]
\draw[thick] (.5,.3) -- (0,0) -- (.5,-.3);
\draw[thick,dotted] (-.5,0) -- (0,0);
\draw[<->,gray] (.5,.2) -- (.5,-.2);
\filldraw[fill=white] (.5,.3) circle (.1) node[right]{\small$i$};
\filldraw[fill=white] (.5,-.3) circle (.1) node[right]{\small$i'$};
\filldraw[fill=white] (0,0) circle (.1);
\end{tikzpicture}
\qq 
\begin{tikzpicture}[baseline=-0.35em,scale=0.8,line width=0.7pt]
\draw[thick] (0,.4) -- (-.5,0) -- (0,-.4);
\draw[thick,dotted] (0,.4) -- (.5,.4);
\draw[thick,dotted] (0,-.4) -- (.5,-.4);
\filldraw[fill=white] (-.5,0) circle (.1) node[left]{\small $i$};
\filldraw[fill=white] (0,.4) circle (.1);
\filldraw[fill=white] (0,-.4) circle (.1);
\draw[<->,gray] (0,.3) -- (0,-.3);
\end{tikzpicture} 
\qq
\begin{tikzpicture}[baseline=-0.35em,scale=0.8,line width=0.7pt]
\draw[thick,domain=90:270] plot({.4*cos(\x)},{.4*sin(\x)});
\draw[thick,dotted] (0,.4)  -- (.5,.4) ;
\draw[thick,dotted] (0,-.4) -- (.5,-.4);
\filldraw[fill=white] (0,.4) circle (.1) node[above]{\small $i$};
\filldraw[fill=white] (0,-.4) circle (.1) node[below]{\small $i'$};
\draw[<->,gray] (0,.3) -- (0,-.3);
\end{tikzpicture} 
\\
\label{lateralsets2}
\begin{tikzpicture}[baseline=-0.35em,scale=0.8,line width=0.7pt]
\draw[double] (0,0) -- (.5,0);
\filldraw[fill=white] (0,0) circle (.1) node[left]{\small$i$};
\filldraw[fill=white] (.5,0) circle (.1);
\end{tikzpicture}
\qq
\begin{tikzpicture}[baseline=-0.35em,scale=0.8,line width=0.7pt]
\draw[double,<-] (1.09,.063) -- (1.5,.35);
\draw[double,<-] (1.09,-.063) -- (1.5,-.35);
\filldraw[fill=white] (1,0) circle (.1) node[left]{\small$i$};
\draw[<->,gray] (1.5,.25) -- (1.5,-.25);
\filldraw[fill=white] (1.5,.35) circle (.1);
\filldraw[fill=white] (1.5,-.35) circle (.1);
\end{tikzpicture} 
\qq
\begin{tikzpicture}[baseline=-0.35em,scale=0.8,line width=0.7pt]
\draw[thick] (.7,.7) -- (0,0) -- (.4,.2);
\draw[thick] (.7,-.1) -- (0,0) -- (.4,-.6);
\draw[<->,gray] (.4,.1) -- (.4,-.5);
\draw[<->,gray] (.7,.6) -- (.7,0);
\filldraw[fill=white] (0,0) circle (.1) node[left]{\small $i$};
\filldraw[fill=white] (.4,.2) circle (.1);
\filldraw[fill=white] (.4,-.6) circle (.1);
\filldraw[fill=white] (.7,.7) circle (.1);
\filldraw[fill=white] (.7,-.1) circle (.1);
\end{tikzpicture}
\text{ with } \tau \ne \phi_{12}
\qq
\begin{tikzpicture}[baseline=-0.35em,scale=0.8,line width=0.7pt]
\draw[double,domain=90:270] plot({.4*cos(\x)},{.4*sin(\x)});
\filldraw[fill=white] (0,.4) circle (.1) node[right]{\small $i$};
\filldraw[fill=white] (0,-.4) circle (.1) node[right]{\small $i'$};
\draw[<->,gray] (0,.3) -- (0,-.3);
\end{tikzpicture}
\end{gather}
Moreover, 
\begin{itemize}  [itemsep=.25ex]
\item the subsets $\{ i \}$ appearing in the first diagram of \eqref{lateralsets1} and the first diagram of \eqref{lateralsets2} are said to be of type ${\rm B}_1$;
\item the subsets $\{ i \}$ appearing in the second diagram of \eqref{lateralsets1} and the first diagram of \eqref{lateralsets2} are said to be of type ${\rm C}_1$;
\item the subsets $\{ i,i' \}$ appearing in the middle two diagrams of \eqref{lateralsets1} are said to be of type ${\rm D}_2$;
\item the subsets $\{i \}$ and $\{ i,i' \}$ appearing in the last two diagrams of \eqref{lateralsets1} or anywhere in \eqref{lateralsets2} are called \emph{hinges}.
\end{itemize}
The set of all $\tau$-lateral sets is denoted ${\mc L}(\tau)$. 
For $(X,\tau) \in \GSat(A)$, write 
\eq{  \label{Iprime}
I' = (I \backslash X) \backslash \bigcup_{Y \in \mc{L}(\tau)} Y.
}
If $\tau$ fixes $I'$ elementwise it is said to be of \emph{identity type}.
In particular, $\tau$ is called a \emph{flip} if it acts nontrivially on a $\tau$-lateral set and fixes all other nodes of $I$. 
\hfill \defnend
\end{defn}
Note that both nodes of the diagram of type ${\rm A}^{(1)}_1$ with $\tau = \psi = \id$ constitute hinges and subdiagrams of type ${\rm B}_1$ and ${\rm C}_1$; this is convenient as this diagram appears as the first case in many infinite series.
The conditions $|i-i'|=1$ and $\tau \ne \phi_{12}$ are imposed to ensure compatibility with $\Sigma_A$, the group of symmetries of $\RT$ and $\RT_u$, in the case that $\mfg^{\rm fin} = \mfso_8$ (see Remark \ref{R:non-Z}).
Hence we treat the diagram involutions $(03)$, $(04)$, $(13)$ and $(14)$ separately from $(01)$ and $(34)$ and the diagram involutions $(01)(34)$ separately from $(03)(14)$ and $(04)(13)$.

\smallskip

In view of Section \ref{sec:rotation}, for our purposes it is sufficient to consider $\Sigma_A$-equivalence classes of involutive diagram automorphisms. 
As representative diagram involutions we will use the following (see \eqref{AutAelts} for their definitions):
\begin{itemize}  [itemsep=.25ex]
\item The identity $\id$.
\item The flip $\flL$ for diagrams of type ${\rm B}^{(1)}_n$ and the flip $\flR$ for diagrams of type ${\rm D}^{(1)}_n$.
\item The composition of distinct commuting flips $\flLR$ for diagrams of type ${\rm D}^{(1)}_n$.
\item The involution $\psi$ for diagrams of type ${\rm A}^{(1)}_n$, which has two hinges.
\item For diagrams of type ${\rm A}^{(1)}_{N-1}$ with $N$ even: the involutions $\psi'$ (which has two hinges) and $\pi$ (there are no $\pi$-lateral sets if $N>2$). 
\item The involution $\pi$ for diagrams of type ${\rm C}^{(1)}_n$ and ${\rm D}^{(1)}_n$, which has one hinge.
\item For diagrams of type ${\rm D}^{(1)}_4$, the automorphism $(14) \notin [\flR]$.
Because $\Sigma_A < \Aut(A)$ and there is no a statement analogous to Lemma \ref{lem:SigmaAslN}, the classification of Satake diagrams in \cite{BBBR} in the case ${\rm D}_4^{(1)}$, yields all $\Sigma_A$-conjugacy classes of diagram involutions except $[(14)]$. 
\end{itemize}
Note that in type ${\rm A}^{(1)}_{n > 1}$ we have $\Aut(A) \backslash \Sigma_A = [\psi] \cup [\psi']$. 
On the other hand, $\psi = \id \in \Sigma_A$ and $\psi' = \pi \in \Sigma_A$ in type ${\rm A}^{(1)}_1$.


\subsection{Classification of generalized Satake diagrams of affine type}

As an aid to determining the set $\GSat(A)$ for untwisted affine Cartan matrices $A$ of classical Lie type, in Table~\ref{wXrhoveeX} we list explicit expressions for $w_X$ and $\rho^\vee_X$ for subsets $X \subseteq I$ such that $A_X$ is a of finite type. 

\begin{table}[t]
\caption{Expressions for $w_X$ and $\rho^\vee_X$ for $A_X$ of finite type. Note that $-w_X$ acts on $\{ \al_i \, | \, i \in X\}$ as the identity except in the cases ${\rm A}_{t \ge 2}$ and, for odd $t$, ${\rm D}_{t \ge 3}$, in which case it acts as the unique nontrivial diagram automorphism.}
\label{wXrhoveeX}
\setlength\extrarowheight{2pt}
\begin{tabular}{llm{59mm}m{38mm}}
\hline
Name & Diagram & $w_X$ & $\rho^\vee_X$  \\
\hline
A$_{t \geq 1}$ & 
\begin{minipage}[c]{25mm}
\begin{tikzpicture}[baseline=1em,scale=.9]
\draw[thick] (0,0) -- (.5,0);
\draw[thick,dashed] (.5,0) -- (1.5,0) ;
\draw[thick] (1.5,0) --  (2.0,0);
\filldraw[fill=black] (0,0) circle (.1) node[below=1pt] {\small $1$};
\filldraw[fill=black] (.5,0) circle (.1) node[below=1pt] {\small $2$};
\filldraw[fill=black] (1.5,0) circle (.1) node[below=1pt] {\small $t\!-\!1$};
\filldraw[fill=black] (2.0,0) circle (.1) node[right=1pt] {\small $t$};
\end{tikzpicture} 
\end{minipage}
& \hspace{-8pt} $\begin{array}{l} (r_1 r_2 \cdots r_{t-1} r_t r_{t-1} \cdots r_2 r_1) \cdot \\ \qu \cdot (r_2 \cdots r_{t-1} \cdots r_2) \cdots \\ \qq \cdot \begin{cases} \cdots r_{\frac{t+1}{2}} & \text{if } t \text{ is odd} \\ \cdots (r_{\frac{t}{2}} r_{\frac{t}{2}+1} r_{\frac{t}{2}}) & \text{if } t \text{ is even} \end{cases} \\[1.5em] \end{array}$
& $\displaystyle \sum_{i=1}^t \frac{i(t+1-i)}{2} h_i$ \\
\hline
B$_{t \geq 1}$ & 
\begin{minipage}[c]{25mm}
\begin{tikzpicture}[baseline=1em,scale=.9]
\draw[thick] (0,0) -- (.5,0);
\draw[thick,dashed] (.5,0) -- (1.5,0) ;
\draw[double,->] (1.5,0) --  (1.9,0);
\filldraw[fill=black] (0,0) circle (.1) node[below=1pt] {\small $1$};
\filldraw[fill=black] (.5,0) circle (.1) node[below=1pt] {\small $2$};
\filldraw[fill=black] (1.5,0) circle (.1) node[below=1pt] {\small $t\!-\!1$};
\filldraw[fill=black] (2.0,0) circle (.1) node[right=1pt] {\small $t$};
\end{tikzpicture} 
\end{minipage}
& \multirow{2}{*}{\hspace{-8pt} $\begin{array}{l} \\[-12pt] (r_1 r_2 \cdots r_{t-1} r_t r_{t-1} \cdots r_2 r_1) \cdot \\ \qu \cdot (r_2 \cdots r_{t-1} r_t r_{t-1} \cdots r_2) \cdots \\ \qq \cdots (r_{t-1} r_t r_{t-1}) r_t \end{array}$}
& \vspace{2pt} $\displaystyle \sum_{i=1}^t \frac{i(2t+1-i)}{2} h_i$ \vspace{3pt}  \\
\cline{1-2} \cline{4-4}
C$_{t \geq 1}$ & 
\begin{minipage}[c]{25mm}
\begin{tikzpicture}[baseline=1em,scale=.9]
\draw[thick] (0,0) -- (.5,0);
\draw[thick,dashed] (.5,0) -- (1.5,0) ;
\draw[double,<-] (1.6,0) --  (2.0,0);
\filldraw[fill=black] (0,0) circle (.1) node[below=1pt] {\small $1$};
\filldraw[fill=black] (.5,0) circle (.1) node[below=1pt] {\small $2$};
\filldraw[fill=black] (1.5,0) circle (.1) node[below=1pt] {\small $t\!-\!1$};
\filldraw[fill=black] (2.0,0) circle (.1) node[right=1pt] {\small $t$};
\end{tikzpicture}
\end{minipage}
& 
& \vspace{2pt} $\displaystyle \sum_{i=1}^t \frac{i(2t-i)}{2} h_i$ \vspace{3pt}  \\
\hline
D$_{t \geq 2}$ & 
\begin{minipage}[c]{25mm}
\begin{tikzpicture}[scale=.9]
\draw[thick] (0,0) -- (.5,0);
\draw[thick,dashed] (.5,0) -- (1.5,0) ;
\draw[thick] (1.9,.3) -- (1.5,0) -- (2.1,-.3);
\filldraw[fill=black] (0,0) circle (.1) node[below=1pt] {\small $1$};
\filldraw[fill=black] (.5,0) circle (.1) node[below=1pt] {\small $2$};
\filldraw[fill=black] (1.5,0) circle (.1) node[below=1pt] {\small $\hspace{-9pt} t\!-\!2$};
\filldraw[fill=black] (1.9,.3) circle (.1) node[above=1pt] {\small $t\!-\!1$};
\filldraw[fill=black] (2.1,-.3) circle (.1) node[below=1pt] {\small $t$};
\end{tikzpicture} 
\end{minipage}
& \hspace{-8pt} $\begin{array}{l} \\[-15pt] (r_1 r_2 \cdots r_{t-2} r_{t-1} r_t r_{t-2} \cdots r_2 r_1) \cdot \\ \qu \cdot (r_2 \cdots r_{t-2} r_{t-1} r_t r_{t-2} \cdots r_2) \cdots \\ \qq \cdots (r_{t-2} r_{t-1} r_t r_{t-2}) r_{t-1} r_t \end{array}$
& \hspace{-8pt} $\begin{array}{l} \\[-12pt]\displaystyle \sum_{i=1}^{t-2} \frac{i(2t-1-i)}{2} h_i + \\ \qu + \frac{t(t-1)}{4}(h_{t-1}+ h_t) \end{array}$ \\[24pt]
\hline
\end{tabular}
\end{table}

From now on assume that $(X,\tau) \in \GSat(A)$.
The following definitions will enable us to organize the pertinent diagrams into natural families.
For any $Y \subseteq I$ such that $\tau(Y)=Y$ we denote
\eq{
o_Y = \big| \{ Z \subseteq Y \, | \, Z \text{ is a } \tau-\text{orbit and } |Z|=2 \} \big|, \label{oY}
}
{\it i.e.}~the number of nontrivial $\tau$-orbits in $Y$.
Also, let $X_Y$ be the smallest component of $X$ containing $Y$; if $X$ does not contain $Y$ then set $X_Y=\emptyset$. 
It will also be convenient to write 
\eq{
p_Y = \big| \{ Z \subseteq X_Y \, | \, Z \text{ is a } \tau-\text{orbit} \} \big|.\label{pY}
}

We observe the following with respect to the lateral sets of $\tau$. 
If $\tau$ is of identity type then $|{\mc L}(\tau)|=2$, except in type ${\rm A}^{(1)}_{n>1}$ where $|{\mc L}(\tau)|=0$.
If $\tau=\pi$ then $|{\mc L}(\tau)|=1$, except again in type ${\rm A}^{(1)}_{n>1}$ where $|{\mc L}(\tau)|=0$.
Finally if $\tau \notin \Sigma_A$ then $|{\mc L}(\tau)|=2$ in type ${\rm A}^{(1)}_{n \ge 1}$ and $|{\mc L}(\tau)|=0$ in type~${\rm D}^{(1)}_4$. 

Let $Y \in {\mc L}(\tau)$.
Then $o_Y=1$ implies that $Y$ is a hinge or a subset of type ${\rm D}_2$.
From \eqref{Satdiag1a} one may draw two pertinent conclusions.
Namely, if $Y$ is a hinge then $X_Y$ is of type A$_t$ (or empty) and arranged symmetrically around the hinge.
Also, if $X_Y$ is of type ${\rm D}_{t \ge 2}$, then $p_Y$ is even.

Recall the subsets $X(j) \subseteq X$ defined by \eqref{def:X(j)} and $I' \subseteq I \backslash X$ defined by \eqref{Iprime}.

\begin{defn}
Let $(X,\tau) \in \GSat(A)$ with $\tau$ of identity type.  
We call $(X,\tau)$ \emph{plain} if $I'$ is connected and no connected component of $X$ is a proper subset of a $\tau$-lateral set.
If $X(j)$ has precisely two connected components for all $j \in I'$, we call $(X,\tau)$ \emph{alternating}. \hfill \defnend
\end{defn}

Note that if $(X,\tau)$ is alternating then $I_{\rm ns} = I_{\rm nsf}$.
There exist generalized Satake diagrams of identity type which are plain as well as alternating, namely those for which $I\backslash X$ consists of one $\tau$-orbit, {\it i.e.}~$|I^*|=1$ (in this case $A$ is of type ${\rm B}_n^{(1)}$, ${\rm C}_n^{(1)}$ or ${\rm D}_n^{(1)}$ and $(X,\tau) \in \Sat(A)$). 
For $A$ is of classical Lie type, using \eqref{Satdiag2b} one checks straightforwardly that all generalized Satake diagrams with $\tau$ of identity type are plain or alternating. 

\begin{rmk}
In the more restrictive case that $(X,\tau) \in \Sat(A)$, using the expressions for $\rho^\vee_X$ in Table \ref{wXrhoveeX}, from \eqref{Satdiag2a} one derives that the number of components of $X$ neighbouring $j$ which are of type ${\rm A}_1$ and not in $\mc{L}(\tau)$ has the same parity as the number of components of $X$ neighbouring $j$ of type ${\rm C}_{t \ge 1}$. \hfill \rmkend
\end{rmk}

If $(X,\tau) \in \GSat(A)$ is not of identity type, then any connected component of $X$ must be of type ${\rm A}_t$. 
More precisely, if $A$ is of type ${\rm A}^{(1)}_{N-1}$ with $N$ even and $\tau = \pi$, in the proof of Lemma \ref{lem:SigmaAslN} we have seen that $X$ must be empty.
If $\tau$ has at least one hinge, then $(X,\tau)$ with $A_X$ of finite type is a generalized Satake diagram precisely if each connected component of $X$ is of type ${\rm A}_t$ and symmetrically arranged around a hinge. 
Finally, if $A$ is of type ${\rm D}^{(1)}_4$ and $\tau \in [(14)]$ then there are three possibilities for $X$; for example, if $\tau = (14)$, then $X = \emptyset$, $X = \{ 0,3 \}$ or $X = \{ 1,2,4 \}$. 

Given an enumeration $Y_1,Y_2,\ldots,Y_k$ of lateral sets, we will abbreviate 
\[ o_i:=o_{Y_i}, \qq p_i:=p_{Y_i}, \qq X_i:=X_{Y_i}. \] 
From the above definitions and arguments we obtain that, if $A$ is an (untwisted or twisted) affine Cartan matrix of classical Lie type and $(X,\tau) \in \GSat(A)$, there are at most 2 $\tau$-lateral sets and $X$ decomposes as follows:
\eq{\label{X:decomp}
X = X_1 \cup X_{\rm alt} \cup X_2.
}
Here $X_{\rm alt}$ is the largest component of $X$ of the form ${\rm A}_1^{\times t}$ which does not contain any $\tau$-lateral sets (note that $X_{\rm alt} = \emptyset$ unless $(X,\tau)$ is alternating).

\smallskip

Classifying the generalized Satake diagrams now amounts to a straightforward enumeration of possibilities for $o_i$ and $p_i$.
For Satake diagrams, we recover the classification of \cite{BBBR}, apart from the subtlety of the case where $A$ is of type ${\rm D}^{(1)}_4$ and $\tau \in [(14)]$. 
Additionally, we obtain weak Satake diagrams when $A$ is of type ${\rm B}^{(1)}_n$, ${\rm C}^{(1)}_n$ or ${\rm D}^{(1)}_n$ and $\tau$ is of identity type; these are alternating if $A$ is of type ${\rm B}^{(1)}_n$ or ${\rm D}^{(1)}_n$ and plain if $A$ is of type ${\rm C}^{(1)}_n$.
The result is given in Table \ref{minitable}.

\begin{table}[t]
\setlength{\tabcolsep}{1.2pt}
\caption{Overview of the generalized Satake diagrams considered in this paper.}
\label{minitable}
\begin{center}

\\
\hline
\end{tabular}
\end{center}
\end{table}


\subsection{Notation and choice of representatives for generalized Satake diagrams of affine type} \label{sec:Satakenotation}
 
We will denote families of generalized Satake diagrams by the format T.$t$s where T $\in \{ {\rm A}, {\rm B}, {\rm C}, {\rm D} \}$ indicates the type of underlying Dynkin diagram; $t \in \{ 1,2,3,4 \}$ and s $\in \{ {\rm a},{\rm b},{\rm c} \}$ indicate a type and subtype, respectively.
The specifications $t=1$ and $t=2$ correspond to plain and alternating generalized Satake diagrams respectively. 
Furthermore, $t=3$ corresponds to $\tau \not \in \Sigma_A$ (in this case the Lie type is A or ${\rm D}_4$) and $t=4$ corresponds to $\tau = \pi$ (then the Lie type is A, C or D).

The subtype s is defined to be a, b or c according as $\sum_i o_i$ equals 0, 1 or 2, where the summation is over all $\tau$-lateral subsets of $I$.
Note that for generalized Satake diagrams of types A.1, A.2, A.4, C.1 and C.2 the distinction in subtypes is trivial and will be suppressed in the notation.
For example, the family A.3c consists of Satake diagrams whose underlying Dynkin diagram is of type ${\rm A}_{n \ge 1}$, where $\tau \in [\psi] \cup [\psi']$ and $\tau$ acts nontrivially on each hinge (at once we see that $n$ must be odd and in fact $\tau \in [\psi']$).

It is often convenient in the notation to unite families across different types, {\it e.g.}~BCD.1 is the union of the B.1, C.1 and D.1 families, A.124 is the union of the A.1, A.2 and A.4 families and D.12ac is the union of the D.1a, D.1c, D.2a and D.2c families.

\smallskip

By swapping the labels 1 and 2 if necessary, we may assume for generalized Satake diagrams of types A.3ac, C.12 and D.12ac ({\it i.e.}~those with two $\tau$-lateral sets such that $\tau$ commutes with $\pi$) that
\eq{ 
o_1 \le o_2, \qq \text{ and } \qq o_1=o_2 \; \implies \; p_1 \le p_2.
}

The representatives of the $\Sigma_A$-orbits of generalized Satake diagrams that we will use in Section \ref{sec:Results} are chosen in such a way that the involution $\tau$ is as specified in Section \ref{sec:diagraminvolutions}. 
The above conventions imply that, assuming the standard labelling, if there are two $\tau$-lateral sets, the affine node 0 is automatically in $Y_1$. 
Moreover, if a $\Sigma_A$-orbit intersects $\GSat_0(A)$ then we can and will choose the representative to be an element of $\GSat_0(A)$.

\smallskip

We will employ a second notation to describe the generalized Satake diagrams of affine type more precisely.
It has the format
\[ 
({\rm T}^{(1)}_n)^{\tau}_{\text{specification of }X}, 
\]
where T $\in \{ {\rm A}, {\rm B}, {\rm C}, {\rm D} \}$, so that T$^{(1)}_n$ indicates the type of the underlying affine Dynkin diagram. 
In view of \eqref{X:decomp} we specify $X$ is as follows:

\begin{itemize} [itemsep=.5ex]
\item For generalized Satake diagrams in the families A.3ac, C.1 and D.1ac (for which we have fixed $p_1 \le p_2$) the specification is written in the format $p_1,p_2$.
For generalized Satake diagrams in the families A.3b, B.1 and D.1b (for which we have imposed an ordering by specifying the types of the $\tau$-lateral subsets of $I$) the specification format is $p_1;p_2$.
\item For Satake diagrams in the family CD.4, $X$ has only one component and the specification format is simply $p_1$.
\item If $(X,\tau)$ is alternating, this will be indicated by ``alt''; for $(X,\tau)$ in the families BCD.2 there may be up to two components $X_1$, $X_2$ involving the nodes 0 and n, respectively, and we combine this with the aforementioned convention, yielding a specification format $p_1,\text{alt},p_2$ (for types C.2 and D.2ac) and $p_1,\text{alt},p_2$ (for types B.2 and D.2b).
\item In the families A.1 and A.4 we will specify $X$ by writing $\emptyset$.
\item For Satake diagrams of type ${\rm D}^{(1)}_4$ with $\tau \in [(14)]$ we simply specify $X$ as a subset of $I = \{0,1,2,3,4\}$.
\end{itemize} 

Appendix \ref{App:Satakediagrams} lists all subfamilies of $\Sigma_A$-equivalence classes of generalized Satake diagrams for untwisted affine Dynkin diagrams $(I,A)$ of classical Lie type, along with relevant properties. 
In Appendix \ref{App:LowRank} we list bijections between low-rank ($n \le 4$) $\Sigma_A$-equivalence classes.


\subsection{Restrictable diagrams}

Owing to the standard choice of labelling, deleting the node 0 from an untwisted affine Dynkin diagram, one obtains the corresponding Dynkin diagram of finite type. It leads to the following definition for (generalized) Satake diagrams.

\begin{defn} \label{def:resSatdiag}
Let $A$ be an untwisted affine Cartan matrix.
Define
\[ \GSat_0(A) = \left\{ (X,\tau) \in \GSat(A) \, | \, \tau(0)=0\not \in X \right\}.\]
The elements of $\GSat_0(A)$ are called \emph{restrictable generalized Satake diagrams}.
We also define
\[ \Sat_0(A) = \GSat_0(A) \cap \Sat(A) = \left\{ (X,\tau) \in \Sat(A) \, | \, \tau(0)=0\not \in X \right\}, \]
the element of which are called \emph{restrictable Satake diagrams}. \hfill \defnend
\end{defn}

There is a natural one-to-one correspondence between $\GSat(A^{\rm fin})$ and $\GSat_0(A)$ for the Cartan matrices under consideration.

\begin{prop}
Let $A$ be an untwisted affine Cartan matrix of classical Lie type.
For $\tau \in \Aut(A^{\rm fin})$, define $\wh \tau \in \Aut(A)$ by $\wh \tau(i) = \tau(i)$ if $i \in I \backslash \{ 0 \}$ and $\wh \tau(0)=0$.
The assignment $(X,\tau) \mapsto (X,\wh \tau)$ defines an invertible map ${\rm Aff}_A : \GSat(A^{\rm fin}) \to \GSat_0(A)$ such that ${\rm Aff}_A(\Sat(A^{\rm fin})) = \Sat_0(A)$. 
\end{prop}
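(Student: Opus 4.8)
The statement asserts that the map ${\rm Aff}_A : \GSat(A^{\rm fin}) \to \GSat_0(A)$, $(X,\tau) \mapsto (X,\wh\tau)$, is well-defined and invertible, and restricts to a bijection $\Sat(A^{\rm fin}) \to \Sat_0(A)$. The essential point is purely combinatorial: conditions \eqref{Satdiag1a}, \eqref{Satdiag2b} and \eqref{Satdiag2a} for the pair $(X,\wh\tau)$ with respect to $A$ match the corresponding conditions for $(X,\tau)$ with respect to $A^{\rm fin}$, given that $0 \notin X$ and $\wh\tau(0)=0$. First I would record the elementary observations that, for $X \subseteq I \backslash \{0\}$, the submatrix $A_X$ coincides with $(A^{\rm fin})_X$ (so $A_X$ is of finite type in one setting iff in the other), that $W_X$, $w_X$, $Q_X$ and $\rho^\vee_X$ are literally the same objects whether computed inside $W$ or inside the Weyl group of $A^{\rm fin}$, and that $\wh\tau$ is a well-defined element of $\Aut(A)$ precisely because $a_{0j} = a_{\wh\tau(0)\wh\tau(j)}$ reduces to the condition $a_{0j} = a_{0\tau(j)}$, which holds since $\tau$ fixes the set of neighbours of node $0$ (for the classical affine types this can be checked directly from Table~\ref{DynkinDiagrams}; alternatively one invokes that any diagram automorphism of $A^{\rm fin}$ extends uniquely to one of $A$ fixing $0$, which is exactly the content of the assignment $\tau \mapsto \wh\tau$).

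Next I would verify \emph{well-definedness}. Given $(X,\tau) \in \GSat(A^{\rm fin})$: condition \eqref{Satdiag1a} for $(X,\wh\tau)$ reads $-w_X(\al_i) = \al_{\wh\tau(i)}$ for $i \in X$; since $X \subseteq I \backslash \{0\}$ and $\wh\tau|_{I\backslash\{0\}} = \tau$, this is identical to \eqref{Satdiag1a} for $(X,\tau)$. Condition \eqref{Satdiag2b} for $(X,\wh\tau)$ concerns pairs $(i,j) \in X \times (I \backslash X)$; the only new pairs compared with the finite case are those with $j = 0$, but for such a pair $\wh\tau(0)=0$ and $X(0) \cup \{0\}$ connected would force the forbidden configuration $\begin{tikzpicture}[baseline=-0.25em,line width=0.7pt,scale=1]\draw[thick] (0,0) -- (0.5,0);\filldraw[fill=white] (0,0) circle (.1);\filldraw[fill=black] (.5,0) circle (.1);\end{tikzpicture}$ to occur as a component after deleting unfilled $\wh\tau$-orbits; one checks from the affine Dynkin diagrams that node $0$ is never attached to the rest of the diagram by a single simple bond to one filled node in the relevant way (for types $\mathrm A^{(1)}$ weak diagrams don't arise; for $\mathrm B^{(1)},\mathrm C^{(1)},\mathrm D^{(1)}$ the neighbours of $0$ together with the local bond structure rule this out). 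So \eqref{Satdiag2b} holds and $(X,\wh\tau) \in \GSat_0(A)$. For the Satake refinement, \eqref{Satdiag2a} for $(X,\wh\tau)$ adds only the requirement $\al_0(\rho^\vee_X) \in \Z$ when $\wh\tau(0)=0$; since $\rho^\vee_X \in \mfh_X$ and $a_0 = 1$, one computes $\al_0(\rho^\vee_X)$ from the explicit expressions in Table~\ref{wXrhoveeX} and the entries $a_{0j}$, checking integrality case by case. Hence ${\rm Aff}_A(\Sat(A^{\rm fin})) \subseteq \Sat_0(A)$.

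Then I would construct the \emph{inverse}. Given $(X,\wh\tau) \in \GSat_0(A)$, by definition $0 \notin X$ and $\wh\tau(0)=0$, so $\wh\tau$ restricts to a bijection $\wh\tau|_{I\backslash\{0\}}$ of $I \backslash \{0\} = I^{\rm fin}$ preserving $(A^{\rm fin})$; call it $\tau$. The same identifications of $A_X$, $w_X$, $\rho^\vee_X$ show that conditions \eqref{Satdiag1a}, \eqref{Satdiag2b} (and \eqref{Satdiag2a} in the Satake case) for $(X,\wh\tau)$ with respect to $A$ restrict verbatim to the corresponding conditions for $(X,\tau)$ with respect to $A^{\rm fin}$ — here there are strictly fewer conditions to check, since the pairs involving $j=0$ simply drop out. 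Thus $(X,\tau) \in \GSat(A^{\rm fin})$, and the assignment $(X,\wh\tau) \mapsto (X,\tau)$ is a two-sided inverse of ${\rm Aff}_A$ since $\wh{(\wh\tau|_{I^{\rm fin}})} = \wh\tau$ and $(\wh\tau)|_{I^{\rm fin}} = \tau$. Restricting to Satake diagrams gives the bijection ${\rm Aff}_A : \Sat(A^{\rm fin}) \to \Sat_0(A)$. The only mildly delicate point — and the one I'd expect to require the most care — is the case-by-case verification that condition \eqref{Satdiag2b} is not newly violated by the pairs $(i,0)$ and that \eqref{Satdiag2a} stays integral at $j=0$; these are finite checks over the four classical affine types, most efficiently organized using the extended Dynkin diagrams in Table~\ref{DynkinDiagrams} and the $\rho^\vee_X$ formulas in Table~\ref{wXrhoveeX}.
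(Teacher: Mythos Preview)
Your overall structure matches the paper's: identify $A_X$, $w_X$, $\rho^\vee_X$ with their finite counterparts, check that extending $\tau$ by $\wh\tau(0)=0$ preserves \eqref{Satdiag1a}, verify \eqref{Satdiag2b} at the new node $j=0$ by a type-by-type case analysis, and observe that restriction to $I\backslash\{0\}$ is the two-sided inverse. The paper's handling of \eqref{Satdiag2b} at $j=0$ is slightly sharper than yours: rather than the somewhat vague ``never attached \ldots\ in the relevant way'', it argues concretely that if $X(0)=\{i\}$ then either $a_{i0}=-2$ (types ${\rm A}^{(1)}_1$, ${\rm C}^{(1)}_n$) or $X(0)$ cannot in fact be a singleton (types ${\rm A}^{(1)}_{n>1}$, ${\rm B}^{(1)}_n$, ${\rm D}^{(1)}_n$, the last two because $X(0)=\{2\}$ would already violate \eqref{Satdiag2b} for $(X,\tau)$ at $j=1$ in the finite diagram).

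The one genuine difference is in the Satake refinement. You propose to verify $\al_0(\rho^\vee_X)\in\Z$ by direct computation from Table~\ref{wXrhoveeX} and the entries $a_{0j}$, which is feasible but means a separate check for each possible $X$ in each type, and you would still need to explain where the hypothesis $(X,\tau)\in\Sat(A^{\rm fin})$ enters. The paper instead gives a uniform, type-independent argument: writing $\al_0=\delta-\phi$ with $\delta$ the basic imaginary root and $\phi$ the highest root of $A^{\rm fin}$, one has $\delta(\rho^\vee_X)=0$, so it suffices to show $\phi(\rho^\vee_X)\in\Z$. Decomposing $\phi=\sum_{i\ne 0} a_i\al_i$ over $i\in X$, over $\tau$-fixed $i\in I^{\rm fin}\backslash X$, and over $\tau$-orbit pairs, each piece is integral --- the first since $\rho^\vee_X$ is a sum of fundamental coweights of $\Phi_X$, the second by the finite Satake condition \eqref{Satdiag2a}, and the third since $a_i=a_{\tau(i)}$ (from $\tau(\phi)=\phi$) and $\al_{\tau(i)}(\rho^\vee_X)=\al_i(\rho^\vee_X)$ (from $\tau(X)=X$). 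This buys you a cleaner argument that makes the role of the Satake hypothesis transparent and avoids the case explosion.
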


\begin{proof}
Let $(X,\tau) \in \GSat(A^{\rm fin})$.
First we show that $(X,\wh \tau) \in \GSat(A)$. 
Note that evidently $X$ is of finite type and $\wh \tau$ is an involution. 
Condition \eqref{Satdiag1a} holds since $\wh \tau|_X = \tau|_X$. 
For condition \eqref{Satdiag2b}, consider $X(0)$, the union of connected components of $X$ neighbouring node $0 \in I \backslash X$,
We have the following case-by-case analysis for the four infinite families of $A$:
\begin{description}
\item[${\rm A}^{(1)}_{n>1}$] Here $X(0)$ cannot be a single node. In type A.1, $X=\emptyset$ and in type A.2, $X(0)=\{1,n\}$. In type A.3, either $X(0)=\emptyset$ or $X(0)\in \{1,n\}$.
\item[${\rm A}^{(1)}_1$ and ${\rm C}^{(1)}_n$] If $X(0)$ consists of a single node, it must be 1. 
We have $a_{10}=-2$.
\item[${\rm B}^{(1)}_n$ and ${\rm D}^{(1)}_n$] Again, $X(0)$ cannot be a single node. Namely, only $i=2$ is a possibility which would leave $1$ in $I \backslash X$, contradicting \eqref{Satdiag2b} for $(X,\tau) \in \GSat(A^{\rm fin})$.
\end{description}

Hence, if $X(0)=\{i\}$ for some $i \in \{1,\ldots,n\}$, $a_{0i}a_{i0} \ne 1$; therefore $(X,\wh \tau) \in \GSat(A)$.
It follows immediately from Definition \ref{def:resSatdiag} that in fact $(X,\tau) \in \GSat_0(A)$.
The invertibility of ${\rm Aff}_A$ follows at once upon noting that $(X,\tau|_{I \backslash \{ 0 \}}) \in \GSat(A^{\rm fin})$ if $(X,\tau) \in \GSat_0(A)$ and that restriction to $I \backslash \{ 0 \}$ is a left- and right-inverse of the map $\tau \mapsto \wh \tau$.

Finally, to show that ${\rm Aff}_A(\Sat(A^{\rm fin})) = \Sat_0(A)$, assume that $(X,\tau) \in \Sat(A^{\rm fin})$. 
Then $\al_j(\rho^\vee_X) \in \Z$ for all $j \in I\backslash X$ such that $\tau(j)=j$ by assumption, so that it remains to verify that $\al_0(\rho^\vee_X) \in \Z$.
Recall the tuple $(a_i)_{i \in I}$ satisfying \eqref{eq:gj}; as $A$ is of untwisted affine type we have $a_0=1$.
Then the basic imaginary root $\del:=\sum_{i\in I} a_i \al_i \in \Phi$ satisfies $\del(h_i)=0$ for all $i \in I$ and $\phi:=\del-\al_0=\sum_{i \in I \backslash \{ 0 \}} a_i \al_i$ is the highest root of $\Phi_{I \backslash \{ 0 \}}$.
Immediately we have $\del(\rho^\vee_X)=0$.
Also, $\phi(\rho^\vee_X) \in \Z$.
Indeed, having chosen a subset $(I \backslash \{ 0 \})^* \subseteq (I \backslash \{ 0 \}) \backslash X$ intersecting the $\tau$-orbits of $(I \backslash \{ 0 \})\backslash X$ in singletons, we obtain
\[
\phi(\rho^\vee_X) = \sum_{i \in X} a_i \al_i(\rho^\vee_X) + \sum_{i \in (I \backslash \{ 0 \})^* \atop i=\tau(i)} a_i \al_i(\rho^\vee_X) + \sum_{i \in (I \backslash \{ 0 \})^* \atop i \ne \tau(i)} (a_i \al_i + a_{\tau(i)} \al_{\tau(i)})(\rho^\vee_X).
\]
Each summation is in $\Z$: the first because $\rho^\vee_X$ is the sum of fundamental coweights associated to $\Phi_X$, the second because $(X,\tau) \in \Sat(A^{\rm fin})$ and the third because $a_i = a_{\tau(i)}$ (since $\phi = \tau(\phi)$) and $\al_{\tau(i)}(\rho^\vee_X) = \al_i(\rho^\vee_X)$ (since $X=\tau(X)$).
\end{proof}

Since ${\rm Aff}_A$ is bijective, generalized Satake diagrams of finite type appear in a natural way as subdiagrams of restrictable generalized Satake diagrams of affine type.
Hence, for Satake diagrams, we recover their classification (listed in \cite[\S 4 and \S 5]{Ara} and \cite[Sec.~7]{Le3}), see Table \ref{tab:finitesatakediagrams} in Appendix \ref{App:Satakediagrams}.
Note that the families B.12b, D.12c, A.3c and ACD.4 do not contain any restrictable Satake diagrams. 
generalized Satake diagrams from the remaining families are restrictable up to rotation by an element of $\Sigma_A$ precisely if $X \backslash X_{\rm alt}$ is connected or empty.


\subsection{The special orbits revisited}

Let $(X,\tau) \in \Sat(A)$. 
In \cite[Sec.~7, Variation 1]{Le2} it was noted that $|I_{\rm diff}| \le 1$ if $A$ is of finite type. 
Moreover, in \cite[Rmk.~9.3]{Ko1} it is argued that $|I_{\rm diff}| \le 2$ if $A$ is of affine type. 
It is natural to generalize this in two directions: also involve the set $I_{\rm nsf}$ and allow $(X,\tau) \in \GSat(A)$. It turns out that the same upper bounds hold.

\begin{lemma} \label{lem:specialorbits}
Let $A$ be a finite or untwisted affine Cartan matrix of classical Lie type and fix $(X,\tau) \in \GSat(A)$.
Then $|I_{\rm diff} \cup I_{\rm nsf}|$ is bounded above by 1 or 2, respectively. 
\end{lemma}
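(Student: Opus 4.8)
The plan is to reduce the statement to a case-by-case verification using the combinatorial classification of generalized Satake diagrams of affine (and finite) classical type described in Section~\ref{sec:Satdiagclassification}, and to exploit the structural decomposition \eqref{X:decomp}. First I would recall the key characterizations: by the diagrammatic descriptions following \eqref{Insdefn}, $I_{\rm diff}$ consists of one representative of each nontrivial $\tau$-orbit of unfilled nodes that neighbours either a filled node or the other node of its own orbit; $I_{\rm nsf}$ consists of those $\tau$-fixed unfilled nodes $j$ with no filled neighbours all of whose neighbours $i\in I_{\rm ns}$ are joined to $j$ by even-multiplicity edges not directed toward $j$. Crucially, $I_{\rm diff}$ can only be nonempty when $\tau\ne\id$, and in that case every element of $I_{\rm diff}$ lies in (or adjacent to) a $\tau$-lateral set or a hinge, of which there are at most two (Section~\ref{sec:diagraminvolutions}: $|\mc L(\tau)|\le 2$). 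Meanwhile $I_{\rm nsf}\subseteq I_{\rm ns}$ requires $\tau(j)=j$, so $I_{\rm diff}$ and $I_{\rm nsf}$ interact only through the constraint that a node cannot simultaneously be $\tau$-fixed and lie in a nontrivial $\tau$-orbit.

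The core of the argument is then a finite enumeration. For $\tau$ of identity type, $I_{\rm diff}=\emptyset$, so I must bound $|I_{\rm nsf}|$; here the alternating/plain dichotomy from Section~\ref{sec:Satdiagclassification} applies. For alternating diagrams $I_{\rm ns}=I_{\rm nsf}$, and the unfilled nodes form an alternating pattern, so $I_{\rm nsf}$ picks up exactly the $\tau$-fixed unfilled nodes with the right edge conditions; using Table~\ref{DynkinDiagrams} one checks that at most one such node survives the even-multiplicity/orientation condition in the affine case (and likewise in the finite case with the sharper bound $1$). For plain diagrams $I'$ is connected, so $I_{\rm ns}$ is small and one verifies directly from the type-by-type list in Table~\ref{minitable} (families B.1, C.1, D.1, and their finite analogues) that $|I_{\rm nsf}|\le 1$ in the finite case and $\le 2$ in the affine case — the extra unit coming precisely from the two ends of an affine diagram of type ${\rm C}^{(1)}_n$ or ${\rm D}^{(1)}_n$. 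For $\tau\ne\id$ (types A.2, A.3, A.4, B.2, C.2, C.4, D.2, D.3, D.4 and $\tau\in[(14)]$ for ${\rm D}^{(1)}_4$), each connected component of $X$ is of type ${\rm A}_t$ and symmetrically arranged, $|\mc L(\tau)|\le 2$, and a short inspection of the representatives in Table~\ref{minitable} shows $I_{\rm diff}$ contributes at most one element per $\tau$-lateral set/hinge, hence $\le 2$ total (and $\le 1$ in the finite case, where $|\mc L(\tau)|\le 1$); simultaneously $I_{\rm nsf}=\emptyset$ whenever $I_{\rm diff}$ is nonempty, because the subtype ``c'' diagrams with two nontrivial orbits leave no room for an isolated $\tau$-fixed unfilled node of the required kind.

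The main obstacle I anticipate is not any single deep idea but the bookkeeping: one must be careful that $I_{\rm diff}$ and $I_{\rm nsf}$ cannot \emph{both} be ``large'' at once, and that the affine node does not secretly supply an extra contribution beyond what the two $\tau$-lateral sets already account for. The cleanest way to handle this uniformly is to observe that every element of $I_{\rm diff}\cup I_{\rm nsf}$ must, by the characterizations above together with \eqref{Satdiag1a} and Lemma~\ref{lem:Xjdecomposition}, be ``attached'' to one of the at most two lateral sets/hinges of $\tau$ (or, in the identity-type plain case, be one of the two extreme unfilled nodes of a ${\rm C}^{(1)}_n$- or ${\rm D}^{(1)}_n$-diagram), and that two distinct such elements attached to the \emph{same} lateral set would force a configuration excluded by \eqref{Satdiag2b} or by the even-edge condition. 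This gives the bound $2$ in the affine case; in the finite case the absence of the affine node removes one lateral set (so $|\mc L(\tau)|\le 1$ and there is no ${\rm C}^{(1)}_n$/${\rm D}^{(1)}_n$ extreme-node phenomenon), yielding the sharper bound $1$. I would then simply cross-check against the tables in Appendix~\ref{App:Satakediagrams}, where the value $|I_{\rm diff}\cup I_{\rm nsf}|$ is recorded for each family, to confirm the bound is attained and never exceeded.
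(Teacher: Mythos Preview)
Your final fallback—cross-checking against the tables in Appendix~\ref{App:Satakediagrams}—is exactly the paper's proof in its entirety: ``This follows from a straightforward case-by-case analysis using the tables in Appendix~\ref{App:Satakediagrams}.'' The authors remark immediately afterward that a proof avoiding the case-by-case analysis would be desirable but was not found.

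Your attempt to supply such structural reasoning contains several incorrect intermediate claims, so it does not actually reduce the work. First, you assert that for alternating identity-type affine diagrams ``at most one such node survives'' in $I_{\rm nsf}$; but $\bigl({\rm D}^{(1)}_n\bigr)^{\id}_{0,{\rm alt},0}$ (type D.2a, $n$ even, $\ell=0$, $r=n$) has $I_{\rm nsf}=\{0,n\}$. Second, you claim $I_{\rm nsf}=\emptyset$ whenever $I_{\rm diff}\ne\emptyset$; but A.3a with $0=\ell<r<N/2$ has $I_{\rm diff}=\{r\}$ and $I_{\rm nsf}=\{0\}$, and D.2b with $\ell=0$, $r=n-1$ likewise has both sets nonempty. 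Third, your ``at most one element of $I_{\rm diff}\cup I_{\rm nsf}$ per lateral set'' heuristic fails for the non-quasistandard B.1a diagram with $(\ell,r)=(0,2)$, where $I_{\rm nsf}=\{0,1\}$ and both nodes belong to the single lateral set $Y_1=\{0,1\}$; it also fails for the interior node $\ell+1$ in non-quasistandard C.1 diagrams, which is not attached to any lateral set at all. So the structural shortcuts you propose do not hold, and the proof genuinely reduces to the family-by-family inspection that both you and the paper ultimately rely on.
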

\begin{proof}
This follows from a straightforward case-by-case analysis using the tables in Appendix \ref{App:Satakediagrams}.
\end{proof}

It would be nice to prove this upper bound without resorting to the case-by-case analysis. Even if we would be able to prove this for generalized Satake diagrams of finite type, there would not be an obvious way to derive a proof of this for affine type, since Lemma \ref{lem:specialorbits} applies also to non-restrictable affine generalized Satake diagrams. 

\begin{rmk} \label{rem:specialorbits} \mbox{}

\begin{itemize}[itemsep=0.5em]
\item Lemma \ref{lem:specialorbits} applies equally for generalized Satake diagrams associated to twisted affine Cartan matrices, as well as finite and affine Cartan matrices of exceptional Lie type. 
Furthermore, the upper bound can be lowered by 1 in some cases. 
\item
From the tables in Appendix \ref{App:Satakediagrams} it also follows that instances of generalized Satake diagrams with $|I_{\rm diff} \cup I_{\rm nsf}|=2$ exist for all untwisted affine Cartan matrices of classical Lie type.
For example, let $A$ be of type ${\rm A}^{(1)}_{n \ge 2}$, {\it i.e.}~$I=\{0,1,\ldots,n\}$, $a_{ii}=2$ and, for $i \ne j$, $a_{ij}=a_{ji}=-1$ precisely if $|i-j|=1$ or $|i-j|=n$.
Note that $\Aut(A)$ is a dihedral group.
Then $|I_{\rm diff} \cup I_{\rm nsf}|=2$ for all $(X,\tau) \in \GSat(A)$ such that $|I^*|>1$ and $\tau \notin \Sigma_A$.
Moreover, each of the three possibilities for $(|I_{\rm diff}|,|I_{\rm nsf}|)$ is attained by some $(X,\tau)$ provided $|I|$ is even.  \hfill \rmkend
\end{itemize}
\end{rmk}

Combining Corollary \ref{cor:nonremovable} and Lemma \ref{lem:specialorbits} we obtain the following.

\begin{crl}
Fix $(X,\tau) \in \GSat(A)$ and let $\bm c \in \mc{C}$ and $\bm s \in \mc{S}$.
Then $B_{\bm c,\bm s}$ is $\Ad(\wt{H}_q)$-equivalent to a QP algebra with at most two free parameters.
\end{crl}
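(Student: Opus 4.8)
The statement to be proved is the Corollary immediately following Lemma \ref{lem:specialorbits}: for any $(X,\tau) \in \GSat(A)$ with $\bm c \in \mc{C}$ and $\bm s \in \mc{S}$, the QP algebra $B_{\bm c,\bm s}$ is $\Ad(\wt{H}_q)$-equivalent to a QP algebra with at most two free parameters.

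The plan is straightforward: chain together Corollary \ref{cor:nonremovable} and Lemma \ref{lem:specialorbits}. First I would invoke Corollary \ref{cor:nonremovable}, which states that $B_{\bm c,\bm s}(X,\tau)$ is $\Ad(\wt H_q)$-equivalent to a QP algebra with exactly $|I_{\rm diff} \cup I_{\rm nsf}|$ free parameters (the Hopf algebra automorphism realizing the equivalence absorbs the remaining $|I^*|$ degrees of freedom). Then I would apply Lemma \ref{lem:specialorbits}, which asserts that for $A$ a finite or untwisted affine Cartan matrix of classical Lie type, $|I_{\rm diff} \cup I_{\rm nsf}| \le 2$ (with the bound $1$ in the finite case). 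Combining these two facts immediately yields that the equivalent QP algebra has at most two free parameters, which is exactly the claim.

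There is essentially no obstacle here — this Corollary is a formal consequence of the two preceding results, and the proof is a one- or two-sentence deduction. The only minor point to be careful about is making sure the hypotheses line up: Corollary \ref{cor:nonremovable} requires $(X,\tau) \in \GSat(A)$, $\bm c \in \mc{C}$, $\bm s \in \mc{S}$, and produces a QP algebra (so the target $\bm c'$, $\bm s'$ still lie in the relevant parameter sets, which is part of Proposition \ref{prop:Htildeequiv} feeding into Corollary \ref{cor:nonremovable}); and Lemma \ref{lem:specialorbits} requires $A$ to be of classical Lie type, which is the standing assumption throughout this part of the paper (Section \ref{sec:affine} onwards). So the proof is simply:

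\begin{proof}
By Corollary \ref{cor:nonremovable}, the algebra $B_{\bm c,\bm s}(X,\tau)$ is $\Ad(\wt H_q)$-equivalent to a QP algebra whose number of free parameters equals $|I_{\rm diff} \cup I_{\rm nsf}|$. By Lemma \ref{lem:specialorbits}, $|I_{\rm diff} \cup I_{\rm nsf}| \le 2$. The claim follows.
\end{proof}

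If desired, one could add a remark that in the finite-type case the bound improves to one free parameter, again directly from Lemma \ref{lem:specialorbits}, but this is not needed for the statement as given.
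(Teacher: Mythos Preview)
Your proposal is correct and matches the paper's approach exactly: the paper introduces this Corollary with the sentence ``Combining Corollary \ref{cor:nonremovable} and Lemma \ref{lem:specialorbits} we obtain the following'' and gives no further proof. Your two-line argument is precisely the intended deduction.
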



\section{Classification of K-matrices} \label{sec:Results}

In this section we present the main results of this paper, a comprehensive list trigonometric reflection matrices for QP algebras $B_{\bm c,\bm s}=B_{\bm c,\bm s}(X,\tau)$ in the vector representation of $U_q(\mfg)$ for all $(X,\tau) \in \GSat(A)$, when $A$ is of a classical untwisted affine type. 


\subsection{Methodology and the main results} \label{sec:mainresult}

The reflection matrices are obtained by solving \eqref{intw-untw} or \eqref{intw-tw} for all generators of the algebras $B_{\bm c,\bm s}$ for each representative $(X,\tau)$ of the $\Sigma_A$-orbits of the generalized Satake diagrams as defined in Section \ref{sec:Satdiagclassification}. 
We remind the reader that whenever generalized Satake diagrams are in the same $\Sigma_A$-orbit, the associated K-matrices are related according to Proposition \ref{prop:rotateintw}.
Now fix $(X,\tau) \in \GSat(A)$ to be such a representative.
We choose
\eq{ 
I^* = \{ 
j \in I \backslash X 
\, | \, 
j \ge \tau(j) \text{ if } \{j,\tau(j)\}=\{0,1\} \, \text{ and } \, j \le \tau(j) \text{ otherwise} 
\}
} 
The condition $\tau(0)=1$ only applies to some representative generalized Satake diagrams of types BD.2c and BD.1b; in those cases we will reiterate our choice for $I^*$. Note that this fixes the sets $I_{\rm diff}$ and $I_{\rm nse}$.
As explained in Section \ref{sec:Kmatdressing}, we express $\bm c$ and $\bm s$ in terms of  dressing parameters $\bm \omega$, non-removable free parameters and the scaling parameter $\eta$ in such a way that solving \eqref{intw-untw} or \eqref{intw-tw} for all $b_j$ and the generators of the subalgebras $U_q(\mfg_X)$ and $U_q(\mfh)^{\theta_q}$ for the given value of $\eta$ yields a dressed K-matrix $K(u;\bm \omega)$ given by \eqref{eq:Kdressing}.
Recall from Section \ref{sec:dressing} that the bare K-matrix $K(u)$ contains $|I_{\rm diff} \cup I_{\rm nsf}|$ degrees of freedom and the dressing matrix $G(\bm \omega)$ contains $|I^*| - 1$ degrees of freedom; the remaining degree of freedom is accounted for by the scaling parameter $\eta$.

Recall that if a nontrivial solution $(K(u),\eta)$ of \eqref{intw-untw} or \eqref{intw-tw} exists, $(K(-u),-\eta)$ is also a nontrivial solution (it may occur that $K(u)$ and $K(-u)$ coincide); as a matter of convenience we always present only one of them. 
Our case-by-case computations suggest that if a nontrivial solution of \eqref{intw-untw} or \eqref{intw-tw} exists, then $K(u)$ is unique up to this choice of sign and up to scalar multiples.
The following statement refers to the existence of solutions.

\begin{thrm} \label{thrm:solexistence}
Fix $(X,\tau) \in \GSat(A)$ and consider the corresponding coideal subalgebra $B_{\bm c,\bm s}$ with $\bm c \in (\K^\times)^{I \backslash X}$ and $\bm s \in \K^{I \backslash X}$.
The untwisted boundary intertwining equation \eqref{intw-untw} has nontrivial solutions if and only if $\tau \psi \in \Sigma_A$.
More precisely we have the following:
\begin{description} [itemsep=.25ex]
\item[Type ${\rm A}^{(1)}_{n > 1}$]
Either \eqref{intw-untw} or \eqref{intw-tw} has a nontrivial solution, and \eqref{intw-untw} has nontrivial solutions precisely if $\tau \notin \Sigma_A \cong \Cyc_N$. 

\item[Types ${\rm A}^{(1)}_1$, ${\rm B}^{(1)}_{n > 1}$ and ${\rm C}^{(1)}_{n > 1}$] We have $\psi = \id$ and $\Sigma_A = \Aut(A)$ and indeed we have found nontrivial solutions to \eqref{intw-untw} for all $(X,\tau) \in \GSat(A)$ which are in 1-to-1 correspondence with nontrivial solutions to \eqref{intw-tw} according to Proposition \ref{prop:intw:tw-untw}.

\item[Type ${\rm D}^{(1)}_{n>3}$] We have $\psi = \id$ and, provided $(X,\tau) \in \GSat(A)$ such that $\tau \in \Sigma_A$, again \eqref{intw-untw} turns out to have nontrivial solutions which are in 1-to-1 correspondence with nontrivial solutions to \eqref{intw-tw}. 
Only if $(X,\tau)$ is of type D.3, {\it i.e.}~$n=4$ and $\tau \in [ (14)]$, the condition $\tau \in \Sigma_A$ fails and, for all $\bm c \in (\K^\times)^{I \backslash X}$ and $\bm s \in \K^{I \backslash X}$, the only solution to \eqref{intw-untw} or \eqref{intw-tw} is $K(u)=0$. 
\end{description}
In accordance with the above, assume that $(X,\tau)$ is not of type D.3 and select the appropriate intertwining equation \eqref{intw-untw} or \eqref{intw-tw} if $A$ is of type ${\rm A}^{(1)}_{n>1}$.
The existence of solutions to \eqref{intw-untw} or \eqref{intw-tw} depends on $\bm c$ and $\bm s$ as follows.
The boundary intertwining equation has a nontrivial solution only if $\bm c \in \mc{C}$.
Assuming that $\bm c \in \mc{C}$, the only values of $\bm s$ for which there are nontrivial solutions to \eqref{intw-untw} or \eqref{intw-tw} are as follows:
\begin{itemize} [itemsep=.25ex]
\item If $|I^*|=1$ (in particular $I_{\rm ns} = \emptyset$), the boundary intertwining equation has a nontrivial solution for any $\bm s \in \K^{I \backslash X}$; these solutions are diagonal. 
Note that the additional term in $b_j$ is mapped by $\RT_u$ to a diagonal matrix and, as we will see, the solution $K(u)$ is itself diagonal and does not depend on $\bm s$. 

\item If $|I^*| >1$ and $I_{\rm ns}=I_{\rm nsf}$, the boundary intertwining equation has a nontrivial solution precisely if $\bm s \in \mc{S}$ ({\it i.e.}~$s_j = 0$ if $j \notin I_{\rm nsf}$). 

\item There exists a nontrivial solution if $\bm s \in (\K^\times)^{I \backslash X}$ satisfies \emph{$q$-Onsager type constraints}, see \cite[Prop.~2.1]{BsBe1}, which include particular $\K$-linear relations between $s_j^2$ and $c_j$ for certain values of $j \in I_{\rm ns}$.
These solutions are genuinely different from the aforementioned ones only if $I_{\rm nsf} \ne I_{\rm ns}$, which requires that $(X,\tau)$ is plain.
\end{itemize}
\end{thrm}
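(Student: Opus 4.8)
\textbf{Proof strategy for Theorem \ref{thrm:solexistence}.}

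The plan is to reduce the existence question to a finite computation by exploiting the structure already established in Sections \ref{sec:rotdress} and \ref{sec:Satdiagclassification}, and then to carry out that computation case by case. First I would observe that, by Lemma \ref{L:sol-intw} and the irreducibility of $\RT_u|_{B_{\bm c,\bm s}}$ for generic $u$ (which follows from the intersection property Proposition \ref{prop:QPalgebras} (ii) together with the analysis of how the generators $b_j$ act), any nontrivial solution is unique up to scalar and sign, so it suffices to produce \emph{one} nonzero $K(u)$ (or to prove none exists). Next, the twin reduction results---Proposition \ref{prop:rotateintw} for $\Sigma_A$-rotation and Proposition \ref{prop:dressintw} together with Corollary \ref{cor:nonremovable} for $\Ad(\wt H_q)$-dressing---mean that for each $\Sigma_A$-orbit of generalized Satake diagrams it is enough to treat one representative $(X,\tau)$ (chosen as in Section \ref{sec:Satakenotation}), with $\eta=1$ and with $\bm c,\bm s$ reduced so that only the $|I_{\rm diff}\cup I_{\rm nsf}|$ non-removable parameters remain; by Lemma \ref{lem:specialorbits} this is at most $2$. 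So the whole theorem becomes: for each family T.$t$s in Table \ref{minitable}, solve the \emph{linear} system \eqref{intw-untw} (or \eqref{intw-tw}) in the $N^2$ unknown entries of $K(u)$, imposing the images under $\RT_u$ of the finitely many generators $x_i$ ($i\in X$), the Cartan generators of $U_q(\mfh)^{\theta_q}$, and the $b_j$ ($j\in I\setminus X$).

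The qualitative dichotomy $\tau\psi\in\Sigma_A$ versus $\tau\psi\notin\Sigma_A$ I would establish as follows. For the ``only if'' direction, suppose a nontrivial $(K(u),\eta)$ solves \eqref{intw-untw}. Combining \eqref{intw-untw} with the $\psi$-skewed self-duality \eqref{dual:3} of $\RT_u$ (Lemma \ref{L:Tu-dual}) produces, along the lines of the proof of Lemma \ref{L:K-unit-tw}, an intertwiner between $\RT_u|_{\mathcal B}$ and $\RT_u|_{\psi(\mathcal B)}$ up to a $u$-shift; since both restrictions are irreducible, Schur forces $\psi(\mathcal B)$ and $\mathcal B$ to be related, and tracing through Proposition \ref{prop:rotateCSA} this forces $(X^{\tau\psi},\tau^{\tau\psi})$ to coincide with $(X,\tau)$ up to a rotation, i.e. $\tau\psi$ acts like an element of $\Sigma_A$ on the diagram; for type ${\rm D}_4^{(1)}$ with $\tau\in[(14)]$ this fails because $(14)\notin\Sigma_A$ and there is no analogue of Lemma \ref{lem:SigmaAslN} (cf. Remark \ref{R:non-Z}), yielding $K(u)=0$. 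For the ``if'' direction and for the finer dependence on $\bm c$ and $\bm s$ I would simply exhibit the solutions: this is precisely the content of the case-by-case computations in Sections \ref{sec:K:tw}--\ref{sec:K:low-rank}, performed with computer algebra, and the three bullet points on $\bm s$ follow from inspecting when the linear system is consistent---$|I^*|=1$ gives a diagonal $K(u)$ with no constraint on $\bm s$; $|I^*|>1$ with $I_{\rm ns}=I_{\rm nsf}$ (the alternating case) forces $s_j=0$ off $I_{\rm nsf}$, i.e. $\bm s\in\mathcal S$; and the plain case with $I_{\rm nsf}\ne I_{\rm ns}$ admits the extra $q$-Onsager branch of \cite[Prop.~2.1]{BsBe1}.

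The necessity of $\bm c\in\mathcal C$ I would extract directly from the intertwining equation applied to the modified Serre-type relations: if $c_j\ne c_{\tau(j)}$ for some $j\in I^*\setminus I_{\rm diff}$ then an element of $B_{\bm c,\bm s}\cap U_q(\mfh)$ strictly larger than $U_q(\mfh)^{\theta_q}$ appears (this is the failure of \eqref{q-intersectionproperty}, cf. the remark after Proposition \ref{prop:q-intersectionproperty}), and $\RT_u$ sends it to a diagonal matrix whose eigenvalues are not all equal, over-constraining $K(u)$ to zero; this is the ``too many Cartan elements'' heuristic made precise.

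\textbf{Main obstacle.} The genuinely hard part is not the structural reductions but the exhaustiveness and correctness of the case-by-case solution of the linear systems: one must verify, for every representative diagram in every family of Table \ref{minitable} and for the selected reflection equation, that the computed $K(u)$ is indeed nonzero, that it is the \emph{only} solution up to scalar and sign, and that no consistent choice of $\bm c\notin\mathcal C$ or $\bm s$ outside the listed loci was missed. I expect this to rest on the computer-algebra verification of Sections \ref{sec:K:tw}--\ref{sec:K:low-rank}, with the subtle points being the low-rank coincidences of Appendix \ref{App:LowRank} and the type ${\rm D}_4^{(1)}$ exception, where the group of symmetries of $\RT_u$ is strictly smaller than $\Aut(A)$.
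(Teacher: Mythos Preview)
Your overall strategy---reduce via $\Sigma_A$-rotation (Proposition \ref{prop:rotateintw}) and $\Ad(\wt H_q)$-dressing (Proposition \ref{prop:dressintw}, Corollary \ref{cor:nonremovable}) to a single representative per orbit with at most two non-removable parameters, then solve the linear system in the entries of $K(u)$ case by case---is exactly what the paper does. In fact the paper does not give a self-contained proof of this theorem at all: the remark immediately following it states explicitly that the result is based on computational work for low $n$, is only a conjecture for general $n$, and that the lengthy case-by-case arguments are omitted. The negative case D.3 is handled in Section \ref{sec:D4} by direct computation with the generators (forcing $K(u)$ to be diagonal, then forcing all diagonal entries to zero), not by any structural argument.

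Where you go beyond the paper is your attempted structural ``only if'' argument for the criterion $\tau\psi\in\Sigma_A$, and here there is a genuine gap. Combining the untwisted intertwining equation with the $\psi$-skewed self-duality \eqref{dual:3} does not produce an intertwiner between $\RT_u|_{\mc B}$ and $\RT_u|_{\psi(\mc B)}$: the self-duality relation involves the antipode $S$ and transposition, so what one naturally obtains (as in the proof of Lemma \ref{L:K-unit-tw}) is a relation of \emph{twisted} type, or equivalently one involving the left coideal $S^{-1}(\mc B)$ rather than $\psi(\mc B)$. Even granting some relation between $\mc B$ and $\psi(\mc B)$, the step from ``Schur forces them to be related'' to ``$\tau\psi$ acts as an element of $\Sigma_A$ on the diagram'' is not justified; equivalence of coideal subalgebras does not by itself determine the diagram automorphism up to $\Sigma_A$. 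The paper makes no such claim and instead relies on direct computation for the non-existence cases. Similarly, your assertion that irreducibility of $\RT_u|_{B_{\bm c,\bm s}}$ follows from the intersection property is optimistic: the paper uses irreducibility as a hypothesis (e.g.\ in Lemma \ref{L:sol-intw} and Propositions \ref{P:B-intw-untw}--\ref{P:B-intw-tw}) rather than deriving it, and notes in the remark after Theorem \ref{T:all-K} that establishing it for general $n$ would be needed to upgrade the computational results to a theorem.
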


\begin{rmk} \mbox{}
\begin{enumerate}  [itemsep=.25ex]
\item The above theorem is based on computational work for low values of $n$; thus it is only a conjecture for general $n$. 
It is possible on a case-by-case basis to give lengthy but straightforward proofs which we omit here.
\item A detailed study of the reflection matrices associated to generalized $q$-Onsager algebras will be presented elsewhere, although we briefly review the K-matrix associated with the family A.1 in Appendix~\ref{App:qOns}. \hfill \rmkend
\end{enumerate}
\end{rmk}

Our main result is the following list of trigonometric reflection matrices for QP algebras of classical Lie type.
In order to present this succinctly, recall the sign $\vartheta$ as defined in \eqref{QQv} and, if $A$ is of type ${\rm A}^{(1)}_n$, denote, for $a \in \Z_{\ge 0}$,
\eq{ \label{modN}
a \bmodN N = \begin{cases} 
N & \text{if } a \text{ is an integer multiple of } N, \\ 
a \bmod N & \text{otherwise}. 
\end{cases} 
}
In particular, for $(X,\tau)$ of type A.3, we denote $t = \tau(0) \bmodN N$.

\begin{thrm} \label{T:all-K}
The bare twisted K-matrices are:
\begin{flalign*}
& {\rm A.1:} & K(u) &= \Id && \text{if } N>2, && \\
& {\rm A.2:} & K(u) &= \textstyle\sum_{1\le i\le {N}/{2}} \big( q^{1/2} E_{2i-1,2i}  - q^{-1/2} E_{2i,2i-1}\big) && \text{if } N>2 \text{ even}, && \\
& {\rm A.4:} & K(u) &= \textstyle\sum_{1 \le i \le {N}/{2}} \big( u E_{i+{N}/{2},i} + E_{i,i+{N}/{2}} \big) &&  \text{if } N>2 \text{ even}. &&
\end{flalign*}
In the other cases, in terms of
\[ \ell =
\begin{cases} 
p_1 & \text{for A.3}, \\ p_1 + o_1 & \text{for BCD.12}, \\ (n-o)/2-p & \text{for CD.4},
\end{cases} 
\qq 
r= 
\begin{cases} 
(N-o_1-o_2)/2-p_2 & \text{for A.3}, \\ n-p_2-o_2 & \text{for BCD.12}
\end{cases} 
\]
the bare untwisted K-matrices are given by the formula
\eq{ \label{K(u):X}
K(u) = \Id + \frac{u-u^{-1}}{k_1(u)} \left( M_1(u) + \frac{M_2(u)}{k_2(u)}  \right)
}
where
\eqa{ 
\label{k_i(u)}
k_1(u) &= \la\mu - u, \qq 
k_2(u) = 
\begin{cases} 
\la^{-1}- (\mu u)^{-1} & \text{for C.1 and BD.2}, \\
\la^{-1}+ (\mu u)^{-1} & \text{for other types}
\end{cases} 
\intertext{and}
M_1(u) &= 
\begin{cases} 
\displaystyle \sum_{1 \le i \le \ell} \la \, \mu \, u \, E_{ii} + \sum_{t+1-\ell \le i \le N}  E_{ii} \hspace{4.78cm}\; & \text{for A.3}, \\[1.5em]
\displaystyle \sum_{\bar \ell \le i \le n} \left(\la \, \mu \, u \, E_{-i,-i} + E_{ii} \right) & \text{for BCD.12}, \\[1.5em]
\displaystyle \sum_{1 \le  i \le n}  E_{ii} & \text{for CD.4}, 
\end{cases} 
\intertext{and}
M_2(u) &= 
\begin{cases} 
\displaystyle \sum_{ \ell < i \le  r} \left( \la E_{ii} + \la^{-1} E_{t+1-i,t+1-i} + E_{i,t+1-i} + E_{t+1-i,i} \right) & \text{for A.3}, \\[1.5em]
\displaystyle \sum_{\bar r \le i < \bar \ell} \left( \la E_{-i,-i} + \la^{-1} E_{ii} + E_{-i,i} + E_{i,-i} \right) & \text{for BD.1}, \\[1.5em]
\displaystyle -\,\del_{\ell,1}\, u^{-1}\,(\mu-\mu^{-1})\, E_{nn} - \del_{N,2n}\,\del_{r,n-1}\, (\la-\la^{-1}) \, E_{11} \\[.25em]
\displaystyle \qq + \sum_{\overline{r} \le i < \overline{\ell} } \big( {-}\la E_{-i,-i} + \la^{-1} E_{ii} + \eps_i\, (E_{-i-\eps_i,i} - E_{i,-i-\eps_i}) \big) & \text{for BD.2}, \\[1.5em]
\displaystyle \sum_{\bar r\le i\le \bar\ell} \big( {-}\la E_{-i,-i} + \la^{-1} E_{ii} + E_{-i,i} - E_{i,-i} \big) & \text{for C.1}, \\[1.5em]
\displaystyle \sum_{\bar r \le i < \bar \ell} \big(\la E_{-i,-i} + \la^{-1} E_{ii} + \eps_i\,(E_{-i-\eps_i,i} + E_{i,-i-\eps_i})\big) & \text{for C.2}, \\[1.5em]
\displaystyle \sum_{\bar \ell \le i \le n} \!\big( \la E_{-i,-i} + \la^{-1} E_{-\bar \imath,-\bar \imath} - E_{-i,-\bar\imath} - E_{-\bar\imath,-i}  \\[-.5em]
\displaystyle \hspace{3.95cm} -\, u^{-1}(\mu\,E_{ii} + \mu^{-1} E_{\bar \imath \bar \imath} - E_{i\bar\imath} - E_{\bar\imath i}) \big)  & \text{for CD.4},
\end{cases} 
\intertext{
with $\eps_i=(-1)^{\bar \imath-\ell}$ for BCD.2. The parameters $\la,\mu$ have the following values:
}
& \qu
\begin{cases} 
\la \in \K^\times \text{ is free}\qu\; & \text{for A.3, CD.4 and for C.1 if $r=n$,} \\ 
\la = q^{N/2-r} & \text{for BD.1 and C.2,} \\
\la = q^{N/2-r-\vartheta} & \text{for B.2, for C.1 if $r<n$ and for D.2 if $r < n-1$,}
\end{cases} 
\\
& \qu
\begin{cases} 
\mu \in \K^\times \text{ is free}\qu\; & \text{for A.3, for C.1 if $\ell=0$ and for BD.2 if $\ell \le 1$,} \\ 
\mu = q^{-\ell} & \text{for BD.1 and C.2,} \\
\mu = q^{-\ell + \vartheta} & \text{for C.1 if $\ell>0$, for BD.2 if $\ell>1$,} \\
\mu = q^{-n+2\ell}\la & \text{for CD.4}.
\end{cases} 
}


\noindent For D.4 if $\ell=1$ the above formulas should be replaced by
\[
K(u) = \Id + \frac{u-u^{-1}}{ k_1(u)} \left( M_1 + \frac{M^+_{2}(u)}{ k^+_{2}(u)} - \frac{M^-_2}{ k^-_2(u)} \right),
\]
where $M_1 = \sum_{1 \le i \le n} E_{ii}$ (as for $\ell \ne 1$) and
\eqn{ 
k^+_2(u) &= k_2(u) = \la^{-1}+ (\mu u)^{-1}, \qu & M^+_2(u) &= u^{-1} (\mu^{-1} E_{11} + \mu \, E_{nn}+E_{1n}+E_{n1}) , \\
k^-_2(u) &= (\al \, \la)^{-1} + \al (\mu \, u)^{-1}, &
M^-_2 &= (\al \, \la)^{-1} E_{-1,-1} + \al \, \la \, E_{-n,-n}+E_{-1,-n}+E_{-n,-1} ,
}
\if0
\eqn{ 
k^+_2(u) &= k_2(u) = \la^{-1}+ (\mu u)^{-1}, \qq \qq k^-_2(u) = (\al \, \la)^{-1} + \al (\mu \, u)^{-1}, \\
M^+_2(u) &= u^{-1} (\mu^{-1} E_{11} + \mu \, E_{nn}+E_{1n}+E_{n1}) ,\\
M^-_2 &= (\al \, \la)^{-1} E_{-1,-1} + \al \, \la \, E_{-n,-n}+E_{-1,-n}+E_{-n,-1} ,
}
\fi
with $\mu = q^{-n+2} \al \, \la$ and free parameters $\al, \la \in \K^\times$. \medskip


\noindent 
For certain QP algebras of types BCD.1 the above formulas should be replaced as follows (the expressions for $k_1(u)$ and $k_2(u)$ and the values for $\la$ and $\mu$ remain the same). 
\eqn{
\intertext{For BD.1 with $(\ell,r)=(0,2)$:}
K(u) &= \Id + \frac{(u-u^{-1}) \la^2 u^2}{k_1(\frac{\nu_0 u}{\nu_1})\, k_1(\frac{\nu_1 u}{\nu_0})\, k_1(\nu_0 \nu_1 u)\, k_1(\frac{u}{\nu_0 \nu_1})} \Big( k_1(u)k_2(u) M_2 + \al(\tfrac{u}{\la})  \, \wt M_3^- + \al(\tfrac{\la}{u}) \,  \wt M_3^+ \Big) ,
\intertext{where $\al(u) = (\nu_1+\nu_1^{-1}) \, u - (\nu_0+\nu_0^{-1})$ and }
M_2 &= \sum_{n-1 \le i\le n} \big( \la E_{-i,-i} + \la^{-1} E_{i,i} + E_{-i,i} + E_{i,-i} \big) ,\\
\wt M_3^\pm &=  E_{\mp(n-1),\pm n} - E_{\pm n,\mp (n-1)} + \la^{\mp 1} (E_{\pm (n-1),\pm  n}-E_{\pm n,\pm (n-1)} \mp  (\nu_1+\nu_1^{-1}) E_{\pm  n,\pm n}). \\[.5em]
\intertext{For BCD.1 with $1\le \ell\le n-2$, $r=\ell+2$:}
K(u) &= \Id + \frac{u-u^{-1}}{k_1(u)} \bigg(\wt M_1(u) + \frac{k_2(u) \wt M_2(u) + \nu_+ \wt M_3(u) }{ k_2(-\nu^{-2} u) k_2(-\nu^2 u)} \bigg) , \\[-.5em]
\intertext{where}
\wt M_1(u) &= \sum_{\bar \ell -1 \le i \le n} (\la \mu u E_{-i,-i} + E_{ii}), \\[.25em] 
\wt M_2(u) &= \vartheta \la E_{2-\bar\ell,2-\bar\ell} + \la^{-1} E_{\bar\ell-2,\bar\ell-2} + \vartheta  E_{2-\bar\ell,\bar\ell-2} + E_{\bar\ell-2,2-\bar\ell} \\
& \qu - \mu u  E_{1-\bar \ell,1-\bar \ell} - \vartheta (\mu u)^{-1}  E_{\bar \ell-1,\bar \ell-1}  +  E_{1-\bar \ell,\bar \ell-1} + \vartheta E_{\bar \ell-1,1-\bar \ell}, \\[.25em]
\wt M_3(u) &= \vartheta \big( E_{1-\bar\ell,2-\bar\ell}-E_{2-\bar\ell,1-\bar\ell} \big) + \la^{-1}(E_{1-\bar\ell,\bar\ell-2}-\vartheta  E_{\bar\ell-2,1-\bar\ell}) \\
& \qu + (\mu u)^{-1} \big(\vartheta  E_{2-\bar\ell,\bar\ell-1}-E_{\bar\ell-1,2-\bar\ell} + \vartheta  \la^{-1}(E_{\bar\ell-2,\bar\ell-1}-E_{\bar\ell-1,\bar\ell-2}) \big). \\[.25em]
\intertext{For B.1 with $(\ell,r)=(n-1,n):$}
K(u) &= \Id + \frac{u-u^{-1}}{k_1(u)} \Bigg(\wt M_1(u) + \frac{k_1(u^{-1})^2 \wt M_2(u) + q^{1/4} [2]^{1/2}_{q^{1/2}} (\nu-\nu^{-1})  \wt M_3(u)}{k_1(\nu^2 u^{-1})\,k_1(\nu^{-2} u^{-1})\,k_2(u) }  \Bigg) , \\[-.5em]
\intertext{where}
\wt M_1(u) &= \sum_{1 \le i \le n} (q^{1/2} \mu u E_{-i,-i} + E_{ii}), \\ 
\wt M_2(u) &= - \mu u E_{-1,-1} - (\mu u)^{-1} E_{11} + E_{-1,1} + E_{1,-1}, \\
\wt M_3(u) &=  k_1 \big( u^{-1}) (u^{-1} (E_{01} + E_{10}) - \mu (E_{-1,0}+E_{0,-1}) \big) -  q^{1/4}  [2]_{q^{1/2}}^{1/2} \mu (\nu-\nu^{-1}) u^{-1}  E_{00}.
}
\end{thrm}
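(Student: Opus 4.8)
\textbf{Proof strategy for Theorem \ref{T:all-K}.}
The plan is to verify, for each family of generalized Satake diagrams listed in Table \ref{minitable} and each representative $(X,\tau)$ chosen as in Section \ref{sec:Satakenotation}, that the explicitly written matrix $K(u)$ (or, in the twisted ${\rm A}^{(1)}_n$ cases, the listed $K(u)$) together with the displayed value of the scaling parameter $\eta$ solves the appropriate boundary intertwining equation \eqref{intw-untw} or \eqref{intw-tw} for \emph{all} generators of $B_{\bm c,\bm s}(X,\tau)$. Concretely, first I would reduce the problem using Propositions \ref{prop:rotateintw} and \ref{prop:dressintw}: it suffices to check the \emph{bare} K-matrix $K(u)$ against the intertwining equation for $\eta=1$ and for a single representative diagram in each $\Sigma_A$-orbit, after which the general dressed K-matrix $K(u;\bm\omega)$ and the general algebra parameters $\bm c\in\mathcal C$, $\bm s\in\mathcal S$ are recovered by conjugation with $G(\bm\omega)$ and the rotation matrices $Z^\si(u)$, and by the parameter identifications $c_j=\ldots$, $s_j=\ldots$ recorded case-by-case in Sections \ref{sec:K:tw}--\ref{sec:K:low-rank}. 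By Theorem \ref{thrm:solexistence}, a nontrivial solution exists precisely when $\tau\psi\in\Sigma_A$ and $\bm c\in\mathcal C$ with $\bm s$ of the admissible form, so the list is exhaustive; what remains is to confirm that the \emph{stated formula} is indeed that (up to scalar and sign) unique solution.

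The generators fall into three groups, and I would treat them in increasing order of difficulty. The generators of $U_q(\mfn^+_X)=\langle x_i\rangle_{i\in X}$ and of $U_q(\mfh)^{\theta_q}$ (spanned by products of $k_i^{\pm1}$, $i\in X$, and $k_jk_{\tau(j)}^{-1}$, $j\in I^*$) are mapped by $\RT_u$ to upper-triangular nilpotent and to diagonal matrices respectively, with no $u$-dependence beyond powers; the intertwining relation for these is essentially a statement about the block structure of $K(u)$ with respect to the $\tau$-orbit decomposition $\langle N\rangle=\bigsqcup(\text{blocks})$, and reduces to checking that $K(u)$ is block-diagonal of the claimed shape and acts correctly within each $\mfg_X$-isotypic piece. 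The genuinely substantive check is for the generators $b_j$ with $j\in I^*$, where $\RT_{\eta u}(b_j)=\RT(y_j)+c_j\,\RT_{\eta u}(\theta_q(y_jk_j)k_j^{-1})-s_j\RT_{\eta u}(k_j^{-1})$: here one must compute $\theta_q(y_jk_j)$ using $\theta_q=T_{w_X}\tau\om_q$ and the Lusztig automorphism formulas \eqref{defn:Ti}, apply $\RT_{\eta u}$, and then verify that the matrix identity $K(u)\RT_{\eta u}(b_j)=\RT_{\eta/u}(b_j)K(u)$ holds entrywise after the parameter substitutions. The sparseness of both $K(u)$ and the images $\RT_u(b_j)$ (at most two nonzero entries per row/column) keeps each such identity to a small finite list of scalar equations in $q,u,\la,\mu$ (and the extra parameters $\nu_0,\nu_1,\al$ in the exceptional D.4 and BCD.1 subcases), which were the computations performed in \emph{Wolfram Mathematica}; in the write-up I would present the reduced form of these identities and the $q$-identities (such as those coming from \eqref{k_i(u)} and the definition of $\ka$ in \eqref{kappa}) that make them hold, rather than the raw expansions.

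The main obstacle is twofold. First, the sheer case proliferation: one must organise the families A.1--A.4, B.1ab--B.2ab, C.1--C.4ab, D.1abc--D.4ab, plus the low-rank coincidences of Appendix \ref{App:LowRank}, in a way that makes the shared structure of the formula \eqref{K(u):X} transparent and isolates exactly where the special sub-cases (D.4 with $\ell=1$; BD.1 with $(\ell,r)=(0,2)$; BCD.1 with $r=\ell+2$; B.1 with $(\ell,r)=(n-1,n)$) deviate and why — these deviations arise precisely when an otherwise-$\la$- or $\mu$-free parameter collides with a boundary node adjacent to the short/trivalent end of the Dynkin diagram, producing the extra terms $M_3^{\pm}$, $\wt M_3(u)$. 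Second, pinning down the \emph{values} of $\la$ and $\mu$ (free versus fixed to a power of $q$) requires analysing, for each $j\in I_{\rm diff}\cup I_{\rm nsf}$, the constraint imposed by the single generator $b_j$ attached to a $\tau$-lateral set of type ${\rm B}_1$, ${\rm C}_1$, ${\rm D}_2$ or a hinge; this is where the combinatorics of Section \ref{sec:Satdiagclassification} (the quantities $o_i$, $p_i$, $X_i$ and the decomposition \eqref{X:decomp}) feeds directly into the formulas via $\ell$ and $r$. I would structure the proof so that the generic argument for \eqref{K(u):X} is given once in full, and each family then requires only: (i) identifying $\ell,r$ from the diagram; (ii) writing down $\RT_{\eta u}(b_j)$ for $j\in I^*$; and (iii) the finite entrywise verification, with the exceptional sub-cases handled as perturbations of the generic computation.
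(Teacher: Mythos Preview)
Your proposal is essentially the paper's own approach: reduce via Propositions \ref{prop:rotateintw} and \ref{prop:dressintw} to one representative per $\Sigma_A$-orbit with $\eta=1$, then verify the intertwining equation \eqref{intw-untw} or \eqref{intw-tw} against each generator of $B_{\bm c,\bm s}$ (the $U_q(\mfh)^{\theta_q}$- and $U_q(\mfn^+_X)$-generators fix the block shape, the $b_j$ fix the entries and the values of $\la,\mu$), with the case-by-case organization by family and the non-quasistandard subcases handled separately. The paper carries this out in Sections \ref{sec:K:tw}--\ref{sec:nqs} exactly along the lines you describe.

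One point worth flagging: the paper is explicit (see the Remark following Theorem \ref{T:all-K}) that the verification was done by computer algebra for $n$ in a bounded range ($1\le n\le 15$) and then \emph{extrapolated}, so that the statement is a theorem only for small $n$ and a conjecture in general; likewise Theorem \ref{thrm:solexistence}, which you invoke for uniqueness and exhaustiveness, carries the same caveat. Your plan reads as if a uniform proof for all $n$ is the goal; the paper does not claim to have achieved that (it says the direct computations for arbitrary $n$ are ``lengthy but straightforward'' and omits them, except for the short A.4 argument in Remark \ref{A4:solution}). If you intend to go beyond the paper and give a genuine all-$n$ proof, the structure you outline is the right one, but you should be aware that this is more than what the paper itself delivers.
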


\smallskip

\begin{rmk} \mbox{}
\begin{enumerate} [itemsep=0.25ex]
\item The new parameters $\ell$, $r$ and $t$ are such that the tuples $(n,\ell)$, $(n,\ell,r)$ and $(n,\ell,r,t)$ uniquely parametrize generalized Satake diagrams of the families CD.4, BCD.12 and A.3, respectively (across all subfamilies a,b,c, when applicable).

\item In each case $M_1(u)$ is always diagonal and vanishes if $(X,\tau)$ is restrictable.
Matrix $M_2(u)$ contains the off-diagonal entries of $K(u)$, with $M_2(u) =0$ if $|I^*| = 1$.

\item Theorem \ref{T:all-K} summarizes the main results obtained in Sections \ref{sec:K:tw}--\ref{sec:K:0}. We used computational software to solve intertwining equations for $n$ in the range $1 \le n \le 15$, depending on the type of $B_{\bm c, \bm s}$, and extrapolated the results for general $n$. Thus it is a theorem for small $n$, but only a conjecture for large $n$. 
Proving that the above expressions solve the boundary interwining equation for arbitrary $n$ in most cases involve lengthy but direct computations; in some cases these calculations are somewhat simpler, see Remark \ref{A4:solution} for the A.4 case. 
The \emph{uniqueness} (up to a scalar multiple and up to a sign) of these solutions for general $n$ would follow if we could establish indecomposability of $\RT_u|_{B_{\bm c,\bm s}}$.

\item We have solved the intertwining equations \eqrefs{intw-untw}{intw-tw} for representative generalized Satake diagrams $(X,\tau)$ in each $\Sigma_A$-orbit, corresponding to the special values of the parameters $\ell$, $r$ and $t$ as mentioned in Theorem \ref{T:all-K}.
Expressions for the K-matrices corresponding to other generalized Satake diagrams are determined by Proposition \ref{prop:rotateintw}. 
However, in many cases the $(\ell,r,t)$-dependent formulas in Theorem \ref{T:all-K} are valid for more generalized Satake diagrams; see Remarks \ref{R:A3:rotate}, \ref{R:C1:rotate}, \ref{R:D2:rotate}, \ref{R:C2:rotate} and \ref{R:D1:rotate} for more precise statements in types A.3 and CD.12. \hfill \rmkend

\end{enumerate}
\end{rmk}

In each case we compare our results with known solutions of the reflection equation. 
In \cite{MLS} Malara and Lima-Santos give a detailed description of the ``densest'' solutions ({\it i.e.}~with most entries nonzero) of the untwisted reflection equation for a larger class of R-matrices, which are obtained by directly solving the reflection equation.
In particular, for type A they find K-matrices corresponding to Satake diagrams of type A.3 with $X$ empty.
However they admit they cannot completely describe the relations between these matrices, which is accounted for by the action of $\Sigma_A \cong \Cyc_N$ in our setup.
Their solutions for types B, C and D are related to generalized q-Onsager algebras, see \cite{BsBe1}. 
Some untwisted K-matrices found here are special parameter limits of these more general K-matrices and we highlight this in the results where appropriate. 
A more comprehensive comparison will be made in a future publication dealing with K-matrices associated to generalized $q$-Onsager algebras.


\subsubsection{Nonzero entries of K-matrices and quasistandard QP algebras} \label{sec:K-zeros}

The K-matrices classified in this paper are relatively sparse, {\it i.e.}~the number of nonzero entries with respect to the standard basis of $\K^N$ is small compared to $N^2$. 
In fact, it is clear from Theorem \ref{T:all-K} that, except in some cases of type BCD.1, the number of nonzero entries in each rows and column of each K-matrix is at most 2. 
This can be characterized in terms of the generalized Satake diagram underlying the associated QP algebra as follows.

\begin{defn} \label{D:quasi}
Let $A$ be an untwisted affine Cartan matrix of classical Lie type and let $(X,\tau) \in \GSat(A)$.
We call $(X,\tau)$ \emph{quasistandard} if for each $Y \in \mc{L}(\tau)$, $I_{\rm nsf} \cap Y = \{ j \}$ with $\al_j$ long (note that for simply-laced diagrams, all roots are considered to be long). 
Let $\bm c \in \mc{C}$ and $\bm s \in \mc{S}$.
The QP algebra $B_{\bm c,\bm s}(X,\tau)$ is called quasistandard if $s_j \ne 0$ implies that $\al_j$ is long and $\{j \} = Y \cap I_{\rm nsf}$ for some $Y \in \mc{L}(\tau)$.
\end{defn}

This terminology is natural: evidently all standard QP algebras (those with $s_j=0$ for all $j \in I \backslash X$, see \cite[Defn.~5.6]{Ko1}) are quasistandard. 
Most nonstandard QP algebras considered in this paper are also quasistandard; in fact, the only non-quasistandard QP algebras for untwisted affine Cartan matrices of classical Lie type occur in type BCD.1. 

Note that a generalized Satake diagram $(X,\tau)$ is quasistandard precisely if each element of $I_{\rm nsf}$ is one of the nodes labeled $j$ in the following subdiagrams:
\[
\begin{tikzpicture}[baseline=-0.35em,scale=0.8,line width=0.7pt]
\draw[thick] (1.5,.35) -- (1,0) -- (1.5,-.35);
\draw[thick,dotted] (1.5,.35) -- (2,.35);
\draw[thick,dotted] (1.5,-.35) -- (2,-.35);
\filldraw[fill=white] (1,0) circle (.1) node[left]{\small$j$};
\draw[<->,gray] (1.5,.25) -- (1.5,-.25);
\filldraw[fill=white] (1.5,.35) circle (.1) ;
\filldraw[fill=white] (1.5,-.35) circle (.1) ;
\end{tikzpicture} 
\qq\;
\begin{tikzpicture}[baseline=-0.35em,scale=0.8,line width=0.7pt]
\draw[double,->] (0,0) -- (.4,0);
\draw[thick,dotted] (.5,0) -- (1,0);
\filldraw[fill=white] (0,0) circle (.1) node[left]{\small$j$};
\filldraw[fill=white] (.5,0) circle (.1);
\end{tikzpicture}
\qq
\begin{tikzpicture}[baseline=-0.35em,scale=0.8,line width=0.7pt]
\draw[thick] (-.6,.3) -- (0,0) -- (-.4,-.3);
\draw[thick] (0,0) -- (.5,0);
\draw[thick,dotted] (.5,0) -- (1,0);
\filldraw[fill=white] (-.6,.3) circle (.1) node[left]{\small$j$};
\filldraw[fill=black] (-.4,-.3) circle (.1);
\filldraw[fill=white] (0,0) circle (.1);
\filldraw[fill=black] (.5,0) circle (.1);
\end{tikzpicture}
\]
or one of the nodes labelled $j$ or $j'$ in one of the following low-rank diagrams:
\[
\begin{tikzpicture}[baseline=-0.35em,scale=0.8,line width=0.7pt]
\draw[double] (0,0) -- (.5,0);
\filldraw[fill=white] (0,0) circle (.1) node[left]{\small $j$};
\filldraw[fill=white] (.5,0) circle (.1) node[right]{\small $j'$};
\end{tikzpicture}
\qq
\begin{tikzpicture}[baseline=-0.35em,scale=0.8,line width=0.7pt]
\draw[double,<-] (.09,.063) -- (.5,.35);
\draw[double,<-] (.09,-.063) -- (.5,-.35);
\filldraw[fill=white] (0,0) circle (.1) node[left]{\small $j$};
\draw[<->,gray] (.5,.25) -- (.5,-.25);
\filldraw[fill=white] (.5,.35) circle (.1);
\filldraw[fill=white] (.5,-.35) circle (.1);
\end{tikzpicture} 
\qq
\begin{tikzpicture}[baseline=-0.35em,scale=0.8,line width=0.7pt]
\draw[thick] (.7,.7) -- (0,0) -- (.4,.2);
\draw[thick] (.7,-.1) -- (0,0) -- (.4,-.6);
\draw[<->,gray] (.4,.1) -- (.4,-.5);
\draw[<->,gray] (.7,.6) -- (.7,0);
\filldraw[fill=white] (0,0) circle (.1) node[left]{\small $j$};
\filldraw[fill=white] (.4,.2) circle (.1);
\filldraw[fill=white] (.4,-.6) circle (.1);
\filldraw[fill=white] (.7,.7) circle (.1);
\filldraw[fill=white] (.7,-.1) circle (.1);
\end{tikzpicture}
\;\;\text{ with } \tau \ne \phi_1\phi_2
\qq
\begin{tikzpicture}[baseline=-0.35em,scale=0.8,line width=0.7pt]
\draw[thick] (-.6,.3) -- (0,0) -- (-.4,-.3);
\draw[thick] (.4,.3) -- (0,0) -- (.6,-.3);
\filldraw[fill=white] (-.6,.3) circle (.1) node[left]{\small $j$};
\filldraw[fill=black] (-.4,-.3) circle (.1);
\filldraw[fill=white] (0,0) circle (.1);
\filldraw[fill=white] (.6,-.3) circle (.1) node[right]{\small $j'$};
\filldraw[fill=black] (.4,.3) circle (.1);
\end{tikzpicture}
\,\text{ with } |j-j'| \ne 1.
\]
Hence, quasistandard generalized Satake diagrams with nonempty $I_{\rm nsf}$ occur in types A.3, C.1, BD.2 and CD.4, as can be easily seen from Table \ref{minitable}. (See also Table \ref{tab:satakediagrams} in Appendix \ref{App:Satakediagrams}.)

Non-quasistandard generalized Satake diagrams are of type BCD.1 and contain one of the following subdiagrams (where $j$ and $j'$ indicate the elements of $I_{\rm nsf}$): 
\[
\begin{tikzpicture}[baseline=-0.35em,line width=0.7pt,scale=.8]
\draw[thick] (-.6,.3) -- (0,0) -- (-.4,-.3);
\draw[thick] (0,0) -- (.5,0);
\draw[thick,dotted] (0.5,0) -- (1,0);
\filldraw[fill=white] (-.6,.3) circle (.1) node[left=1pt]{\scriptsize $j$};
\filldraw[fill=white] (-.4,-.3) circle (.1) node[left=1pt]{\scriptsize $j'$};
\filldraw[fill=white] (0,0) circle (.1);
\filldraw[fill=black] (.5,0) circle (.1);
\end{tikzpicture} 
\qq
\begin{tikzpicture}[baseline=-0.35em,line width=0.7pt,scale=0.8]
\draw[thick] (-.5,.3) -- (0,0) -- (-.5,-.3);
\draw[thick] (0,0) -- (1,0);
\draw[thick,dotted] (1,0) -- (1.5,0);
\filldraw[fill=white] (-.5,.3) circle (.1);
\filldraw[fill=white] (-.5,-.3) circle (.1);
\filldraw[fill=white] (0,0) circle (.1) node[below=1pt]{\scriptsize $j$};
\filldraw[fill=white] (.5,0) circle (.1);
\filldraw[fill=black] (1,0) circle (.1);
\draw[<->,gray] (-.5,.2) -- (-.5,-.2);
\end{tikzpicture}
\qq
\begin{tikzpicture}[baseline=-0.35em,scale=0.8,line width=0.7pt]
\draw[thick,dotted] (-1.5,0) -- (1.5,0);
\draw[thick] (-1,0) -- (1,0);
\filldraw[fill=black] (-1,0) circle (.1) ;
\filldraw[fill=white] (-.5,0) circle (.1) ;
\filldraw[fill=white] (0,0) circle (.1) node[below]{\small$j$};
\filldraw[fill=white] (.5,0) circle (.1);
\filldraw[fill=black] (1,0) circle (.1) ;
\end{tikzpicture}
\qq
\begin{tikzpicture}[baseline=-0.35em,scale=0.8,line width=0.7pt]
\draw[thick,dotted] (-1.5,0) -- (-1,0);
\draw[thick] (-1,0) -- (0,0);
\draw[double,->] (0,0) -- (.4,0);
\filldraw[fill=black] (-1,0) circle (.1) ;
\filldraw[fill=white] (-.5,0) circle (.1) ;
\filldraw[fill=white] (0,0) circle (.1) node[below]{\small$j$};
\filldraw[fill=white] (.5,0) circle (.1);
\end{tikzpicture}
\qq
\begin{tikzpicture}[baseline=-0.35em,scale=0.8,line width=0.7pt]
\draw[thick] (-.5,0) -- (0,0);
\draw[thick,dotted] (-.5,0) -- (-1,0);
\draw[double,->] (0,0) -- (.4,0);
\filldraw[fill=white] (.5,0) circle (.1) node[right]{\small$j$};
\filldraw[fill=white] (0,0) circle (.1);
\filldraw[fill=black] (-.5,0) circle (.1);
\end{tikzpicture} 
\]
or special low-rank versions of these. The corresponding QP algebras are studied in Section \ref{sec:nqs}.

\smallskip 

The K-matrices associated to quasistandard QP algebras are of a particular simple form. 
Recall that a \emph{generalized permutation matrix} is an element of $\End(\K^N)$ whose nonzero entries pattern in the same way as a permutation matrix, in other words a $\K$-linear combination of elementary matrices $E_{i,\gamma(i)}$ with $i \in \langle N \rangle$ for some permutation $\ga$ of $\langle N \rangle$. 
For example, for generic values of $u$, any matrix $Z^\si(u)$ with $\si \in \Sigma_A$ is a generalized permutation matrix.
The subgroup of generalized permutation matrices in $\GL(\K^N)$ normalizes the maximal abelian subgroup of $\GL(\K^N)$ consisting of diagonal matrices, whose Weyl group is $S_N$.
If the permutation $\gamma$ is an involution, we call a generalized permutation matrix a \emph{generalized involution matrix} (\gim).
Note that any \gim~squares to a diagonal matrix.

\begin{defn}
We call an element of $\End(\K^N)$ a \emph{generalized cross matrix} (g.c.m.) if it is a $\K$-linear combination of a diagonal matrix and a \gim \hfill \defnend
\end{defn}

{\def\arraystretch{1.3}
\begin{table}
\caption{Involution $\ga$.} 
\label{tab:gamma} 
\[
\begin{array}{l|l}
 \text{Type}  & \ga(i)  \\
\hline
\text{A.1} & i \\
\text{A.2} & \begin{cases} (i - (-1)^i) \bmodN N & \text{if }  X = \{ 1,3,\ldots,n\} \\ (i + (-1)^i) \bmodN N  & \text{if } X = \{ 0 ,2, \ldots, n-1\} \end{cases} \\
\text{A.4} & (i+\tfrac{N}{2}) \bmodN N \\
\hline
\text{A.3} & (\tau(0)+1-i) \bmodN N \\
\text{BCD.1} & -i  \\
\text{BCD.2} & \begin{cases} i  \qu \text{if } |i|=1 \text{ and } n+o_1+p_1 \text{ is odd, or } |i|=n \text{ and } o_1+p_1 \text{ is odd} \\ -i+\sgn(i) (-1)^{n+i+o_1+p_1} \qu \text{otherwise} \end{cases} \\
\text{CD.4} & \sgn(i) \widebar{|i|}
\end{array}
\]
\end{table} 
}

Note that, up to a reordering of the basis, a generalized cross matrix is a direct sum of 2$\times$2- and 1$\times$1-blocks.
It follows from Theorem \ref{T:all-K} that, for generic values of $u$, a K-matrix associated to a quasistandard QP algebra is a generalized cross matrix. 
The explicit expressions of the involution $\ga$ are listed in Table~\ref{tab:gamma}; see also Table \ref{table:summary} in Appendix \ref{App:summary}.
In certain cases, the K-matrix reduces to a generalized involution matrix, which are in addition independent of $u$ if and only if $(X,\tau) \in \GSat_0(A)$. 
This occurs precisely for the following QP algebras:
\begin{itemize}  [itemsep=.25ex]
\item QP algebras with one of the following underlying Satake diagrams ($n \ge 2$):
\[ 
\begin{tikzpicture} [baseline=-0.25em,scale=0.8,line width=0.7pt]
\draw[thick,domain=0:135] plot ({cos(\x)},{sin(\x)});
\draw[thick,dashed,domain=135:360] plot ({cos(\x)},{sin(\x)});
\filldraw[fill=white] ({-sqrt(2)/2},{sqrt(2)/2}) circle (.1);
\filldraw[fill=white] (0,1) circle (.1);
\filldraw[fill=white] ({sqrt(2)/2},{sqrt(2)/2}) circle (.1);
\filldraw[fill=white] (1,0) circle (.1);
\end{tikzpicture} 
\qq
\begin{tikzpicture} [baseline=-0.25em,scale=0.8,line width=0.7pt]
\draw[thick,domain=0:135] plot ({cos(\x)},{sin(\x)});
\draw[thick,dashed,domain=135:360] plot ({cos(\x)},{sin(\x)});
\filldraw[fill=black] ({-sqrt(2)/2},{sqrt(2)/2}) circle (.1);
\filldraw[fill=white] (0,1) circle (.1);
\filldraw[fill=black] ({sqrt(2)/2},{sqrt(2)/2}) circle (.1);
\filldraw[fill=white] (1,0) circle (.1);
\end{tikzpicture} 
\qq
\begin{tikzpicture} [baseline=-0.25em,scale=0.8,line width=0.7pt]
\draw[thick,domain=0:90] plot ({cos(\x)},{sin(\x)});
\draw[thick,dashed,domain=90:180] plot ({cos(\x)},{sin(\x)});
\draw[thick,domain=180:270] plot ({cos(\x)},{sin(\x)});
\draw[thick,dashed,domain=270:360] plot ({cos(\x)},{sin(\x)});
\filldraw[fill=white] (0,1) circle (.1);
\filldraw[fill=white] ({sqrt(2)/2},{sqrt(2)/2}) circle (.1);
\filldraw[fill=white] (1,0) circle (.1);
\filldraw[fill=white] (0,-1) circle (.1);
\filldraw[fill=white] ({-sqrt(2)/2},{-sqrt(2)/2}) circle (.1);
\filldraw[fill=white] (-1,0) circle (.1);
\draw[<->,gray] (0,.9) -- (0,-.9);
\draw[<->,gray] ({.9*sqrt(2)/2},{.9*sqrt(2)/2}) -- ({-.9*sqrt(2)/2},{-.9*sqrt(2)/2});
\draw[<->,gray] (.9,0) -- (-.9,0);
\end{tikzpicture}  
\qq
\begin{tikzpicture}[baseline=-0.25em,line width=0.7pt,scale=0.8]
\draw[double,<-] (-.1,0) -- (-.5,0);
\draw[thick] (0,0) -- (.5,0);
\draw[thick,dashed] (.5,0) -- (1.5,0);
\draw[double,<-] (1.6,0) --  (2,0);
\filldraw[fill=white] (-.5,0) circle (.1);
\filldraw[fill=black] (0,0) circle (.1);
\filldraw[fill=white] (.5,0) circle (.1);
\filldraw[fill=black] (1.5,0) circle (.1);
\filldraw[fill=white] (2,0) circle (.1);
\end{tikzpicture} 
\qq
\begin{tikzpicture}[baseline=-0.35em,line width=0.7pt,scale=.8]
\draw[thick] (-.6,.3) -- (0,0) -- (-.4,-.3);
\draw[thick,dashed] (0,0) -- (1,0);
\draw[thick] (1.4,.3) -- (1,0) -- (1.6,-.3);
\filldraw[fill=white] (-.6,.3) circle (.1);
\filldraw[fill=white] (-.4,-.3) circle (.1);
\filldraw[fill=white] (0,0) circle (.1);
\filldraw[fill=white] (1,0) circle (.1);
\filldraw[fill=white] (1.4,.3) circle (.1);
\filldraw[fill=white] (1.6,-.3) circle (.1);
\end{tikzpicture} 
\]
\item QP algebras of type A.3, C.1 or D.2 with $s_j = 0$ and $c_j = c_{\tau(j)}$ for all $j \in I^*$.
\item QP algebras of type CD.4 with $X = \emptyset$, $n$ even and $s_{n/2}=0$.
\item QP algebras such that $|I^*|=1$, in which case $K(u)$ is diagonal.
\end{itemize}

Another property of all untwisted K-matrices obtained in this paper is that $K_{ij}(u) \ne 0$ if and only if $K_{ji}(u) \ne 0$ for all $i,j \in \langle N \rangle$, i.e. that $K(u)$ is symmetric up to conjugation by a diagonal matrix.
In fact, it follows from the definitions of the representation $\RT_u$ and of the coideal subalgebras under consideration in this paper that solutions of \eqref{intw-untw} have this property. 
For the same reason, solutions $K(u)$ of \eqref{intw-tw} satisfy the property that $K_{ij}(u) \ne 0$ if and only if $K_{j'i'}(u) \ne 0$ for all $i,j \in \langle N \rangle$, where $i'=N+1-i$ if $\mfg = \wh\mfsl_N$ and $i'=-i$ otherwise, i.e. $C^{-1} K(\wt q^{-1} u)$ is symmetric up to conjugation by a diagonal matrix.
In other words, for the untwisted K-matrices classified in this paper the off-diagonal nonzero entries come in pairs; the number of such pairs is $|I^*| -1$ for quasistandard QP algebras of type A.3 or BCD.1, $2(|I^*|-1)$ for quasistandard algebras of type BCD.24 and $3(|I^*| -1)$ for non-quasistandard QP algebras of type BCD.1.
In particular, if $|I^*| = 1$ then an untwisted K-matrix is diagonal. 
Indeed, by construction any such K-matrix commutes with all diagonal matrices in $\End(\K^N)$, which follows from the fact that the QP algebra contains $n=|I|-1$ linearly independent elements of $U_q(\mfh)$ and from the explicit formulas for $\RT_u(k_i)$ for $0 \le i \le n$ (in particular one needs that $\RT_u(k_c) = \Id$). 

Attempting to solve the reflection equation \eqref{RE} or \eqref{tRE} starting from various ans\"{a}tze with prescribed matrix entries set to zero in $\End(\K^N)(u)$ for some low values of $N$, leads us to believe that the listing in Theorem \ref{T:all-K} is complete in the following sense.

\begin{conj}
Let $R(u)$ be given by \eqref{Ru:defn}.
All solutions $K(u)$ of the reflection equation \eqref{RE} or \eqref{tRE} which satisfy the conditions
\begin{enumerate}
\item $K(u)$ or $C^{-1} K(\wt q^{-1} u)$, respectively, is symmetric up to conjugation by a diagonal matrix;
\item no row or column in $K(u)$ has more than two nonzero entries;
\end{enumerate}
are generalized cross matrices.
Up to application of Lemma \ref{lem:tw-untw}, rotation, dressing, scalar multiple and choice of sign of the spectral parameter, they are listed in Theorem \ref{T:all-K} for quasistandard $(X,\tau)\in\GSat(A)$. 
\end{conj}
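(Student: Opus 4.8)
The plan is to proceed in three stages: (i) use conditions (1)--(2) together with low-rank instances of the reflection equation to show that any such $K(u)$ is a generalized cross matrix; (ii) substitute the resulting block ansatz into \eqref{RE} (respectively \eqref{tRE}) and extract the constraints on the block data; (iii) normalize the solution using the transformation group of Lemma~\ref{lem:REbasic}(iii) and match the result against the bare K-matrices of Theorem~\ref{T:all-K}. Observe first that conditions (1)--(2) already single out the quasistandard case: the non-quasistandard K-matrices of type BCD.1 in Theorem~\ref{T:all-K} have rows and columns with up to four nonzero entries, hence violate condition~(2), so one only expects to recover quasistandard $(X,\tau)\in\GSat(A)$.

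For stage (i), since the entries of $K(u)$ are rational in $u$ one first clears denominators, turning \eqref{RE} into a polynomial identity in $u$ and $v$. Condition~(1) makes the support pattern of $K(u)$ symmetric, so the relation $i\sim j\iff K_{ij}(u)\not\equiv 0$ for $i\ne j$ defines an undirected graph $\Gamma$ on $\langle N\rangle$; condition~(2) forces each vertex to have $\Gamma$-degree $\le 1$ when its diagonal entry is nonzero and $\le 2$ otherwise, so $\Gamma$ is a disjoint union of paths and cycles. The crucial sublemma is that $\Gamma$ consists only of isolated vertices and isolated edges. To prove it one restricts \eqref{RE} to the $3$- and $4$-dimensional coordinate subspaces attached to a path of length two (respectively a short cycle) in $\Gamma$ and checks, using the explicit form \eqref{Ru:defn} of $R(u)$ --- in particular the crossing term built from $Q_q$ in the orthogonal and symplectic cases and its absence for $\mfsl_N$ --- that the corresponding off-diagonal entries of the two sides of \eqref{RE} cannot be equal unless one of them vanishes, contradicting the assumed support. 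Once $\Gamma$ is a matching, a reordering of $\langle N\rangle$ puts $K(u)$ in block-diagonal form with $1\times1$ and $2\times2$ blocks, so $K(u)$ is a generalized cross matrix and the first assertion of the conjecture holds; the matching is recorded by the involution $\ga$ whose admissible shapes (Table~\ref{tab:gamma}) are pinned down in the next stage.

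For stages (ii)--(iii), write $K(u)$ as its diagonal part plus a generalized involution matrix supported on $\Gamma$, both rational in $u$, and substitute into \eqref{RE}; comparing entries in $\End(\K^N)^{\ot 2}$ and simplifying with the Yang--Baxter equation \eqref{YBE} and the identities \eqref{PQR} and \eqref{hatQhatR}--\eqref{R:min-id:2} yields a finite functional system relating the ratios of diagonal entries, the block parameters, the pairing $\ga$, and (in the orthogonal/symplectic cases) the self-dual pairs $\{i,-i\}$. One shows this system forces $\ga$ into the shapes of Table~\ref{tab:gamma}, the diagonal entries into the two-parameter families appearing in $M_1(u)$ and $M_2(u)$ of Theorem~\ref{T:all-K}, and the remaining coefficients into the scalar relations fixing $\la$ and $\mu$ in terms of $\ell$, $r$ (and $t$); the underlying diagram data $(X,\tau)$ and the sets $I_{\rm diff}$, $I_{\rm nsf}$ emerge as bookkeeping for the positions and types of the $2\times2$ blocks, paralleling the decomposition \eqref{X:decomp}. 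The transformations of Lemma~\ref{lem:REbasic}(iii) attached to matrices $Z(u)$ solving \eqref{Ru:RZZ} --- which include all of $\Ad(\wt H_q)\rtimes\Sigma_A$ by Corollary~\ref{cor:RZZ1} and its dressing analogue --- then act on the solution set, and together with Lemma~\ref{lem:tw-untw} they let one pass between twisted and untwisted solutions whenever $\mfgf\ne\mfsl_{N>2}$ and normalize away the $|I^*|$ dressing degrees of freedom, leaving at most two non-removable parameters (Corollary~\ref{cor:nonremovable} and Lemma~\ref{lem:specialorbits}); on the normalized solutions only finitely many families survive, and a direct comparison identifies each with a bare K-matrix of Theorem~\ref{T:all-K} for a quasistandard $(X,\tau)\in\GSat(A)$, the converse direction being Theorem~\ref{T:all-K} itself. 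A cleaner, more structural variant of stages (ii)--(iii) would be to prove the converse of Propositions~\ref{P:B-intw-untw}--\ref{P:B-intw-tw}: reconstruct from the blocks of a generalized cross matrix $K(u)$ solving \eqref{RE} or \eqref{tRE} a coideal subalgebra $B_{\bm c,\bm s}$ for which $K(u)$ solves the boundary intertwining equation \eqref{intw-untw} or \eqref{intw-tw}, and then invoke Theorems~\ref{thrm:solexistence} and \ref{T:all-K}; one may also bootstrap from the constant case by letting $u\to 0$, appealing to the classification of constant reflection matrices in \cite{NoSu,NDS}, and lifting via the affinization identity \eqref{Ru:affinization}.

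The main obstacle is twofold. First, carrying out stage~(i) uniformly in $N$: excluding all long paths and cycles from $\Gamma$ calls for an argument independent of the case-by-case low-rank computations behind Theorem~\ref{T:all-K}, and the $Q_q$-crossing term makes the restricted $3\times3$ and $4\times4$ reflection equations genuinely intricate. Second, and more seriously, \emph{completeness} --- as opposed to merely checking that the listed matrices are solutions --- rests on the same missing ingredient flagged in the remark following Theorem~\ref{T:all-K}: the indecomposability, or generic irreducibility, of $\RT_u|_{B_{\bm c,\bm s}}$, which would make Schur's lemma available and pin down each solution up to a scalar within its family. Without it the matching in stage~(iii) has to be done by an explicit but lengthy elimination valid for all $N$, which is precisely why the statement is posed as a conjecture.
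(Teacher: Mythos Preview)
The statement you are attempting to prove is explicitly a \emph{conjecture} in the paper, not a theorem. The paper provides no proof whatsoever: the sentence immediately preceding the conjecture says only that ``attempting to solve the reflection equation \eqref{RE} or \eqref{tRE} starting from various ans\"atze with prescribed matrix entries set to zero in $\End(\K^N)(u)$ for some low values of $N$'' leads the authors to believe the listing is complete. That is the entirety of the paper's justification---brute-force computer algebra for small $N$---and there is nothing further to compare your proposal against.

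Your proposal is not a proof either, and to your credit you say so plainly in the final paragraph. What you have written is a reasonable \emph{strategy outline}, with the two hard steps correctly identified as open: (a) the sublemma in stage~(i) that the support graph $\Gamma$ cannot contain paths of length $\ge 2$ or cycles, argued uniformly in $N$ rather than by restriction to low-dimensional coordinate subspaces case by case; and (b) the completeness/uniqueness in stage~(iii), which as you note would follow from generic irreducibility of $\RT_u|_{B_{\bm c,\bm s}}$---precisely the missing ingredient the paper itself flags in the remark after Theorem~\ref{T:all-K}. Neither of these is carried out, and the paper does not carry them out either. One small correction to stage~(i): your degree bound on $\Gamma$ is slightly misstated---condition~(2) allows at most two nonzero entries per row, so a vertex with nonzero diagonal has $\Gamma$-degree at most $1$, and a vertex with zero diagonal has $\Gamma$-degree at most $2$; hence paths and cycles in $\Gamma$ can only pass through vertices with vanishing diagonal entry, which is a useful extra constraint you do not exploit. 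In summary: the paper has no proof, your outline is a plausible attack, and the gaps you name are exactly the reason the statement remains a conjecture.
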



\subsubsection{Additional properties of K-matrices} \label{sec:Kmatrixsupproperties}

For each of the K-matrices obtained, where applicable we discuss the following properties in the case-by-case results:
\begin{description}[itemsep=.25ex]

\item[Unitarity]
In accordance with Lemma \ref{L:K-unit}, all untwisted K-matrices classified in Theorem \ref{T:all-K} have been normalized so as to satisfy the unitarity property \eqref{Ku:unit}. We will not highlight this property in the case-by-case results. 
For the twisted bare K-matrices of types A.124 we have chosen $\bm c \in (\K^\times)^{I \backslash X}$ to be of the form $(a,\ldots,a)$ for some $a \in \K^\times$. 
Lemma \ref{lem:twistedunitarityslN} implies that \eqref{K-unit-tw} holds for some there exists $n_{\rm tw}(u) \in \K$.
It is not convenient to renormalize these K-matrices to make $n_{\rm tw}(u)=1$; instead we will state \eqref{K-unit-tw} in each case.

\item[Regularity]
Unless otherwise stated, all untwisted K-matrices obtained are doubly regular; again, we will not emphasize double regularity in the case-by-case results. 
All doubly and singly regular K-matrices can be and have been normalized such that $K(\zeta) = +\Id$ for $\zeta=-1$, $\zeta=1$ or both (this can be accomplished by multiplying the K-matrix by $-1$ and/or by $u$ as required, neither of which affects the unitarity property). 

\item[Eigendecomposition]
All untwisted K-matrices obtained are diagonalizable for generic values of the non-removable free parameters.
More precisely, we have
\eq{ \label{eigendecomposition} 
K(u) = V D(u) V^{-1}, 
}
where $V$ is independent of $u$ and $D(u)$ is diagonal.
If $B_{\bm c,\bm s}$ is quasistandard, $V$ is a generalized cross matrix with the same involution $\gamma$ as $K(u)$. 
In each case we will give the matrices $V$ and $D(u)$. It will be convenient to use the notation
\eq{
h_i(u) = \frac{k_i(u^{-1})}{k_i(u)}. \label{p(u)}
}
We will order the eigenvalues and eigenvectors in such a way that $V$ is a generalized cross matrix with the same involution $\gamma$ as $K(u)$ (see Table \ref{tab:gamma}).
For twisted K-matrices one has to replace $K(u)$ by $C^{-1} K(\wt q^{-1} u)$ in \eqref{eigendecomposition}.

The number of distinct eigenvalues is low compared to $N$, namely at most 5 (4 in the quasistandard case). 
From \eqref{eigendecomposition} it follows that, in analogy with \eqref{Ru:min-id:2}, this number equals the degree of the minimal polynomial of $K(u)$; furthermore $[K(u),K(v)]=0$ for all $u,v$. 
These properties suggest that the K-matrices obtained here in terms of representations of coideal subalgebra of affine quantum groups can also be related to representations of suitable cyclotomic Hecke or Birman-Murakami-Wenzl algebras, {\it i.e.}~certain quotients of affine versions of these algebras. 
As an example, in Section \ref{sec:Hecke} we work this out in detail for K-matrices of type A.3. 
In turn, this may help in establishing a version of quantum Schur-Weyl duality for coideal subalgebras.

\item[Affinization]

If $(X,\tau) \in \GSat_0(A)$, the limit $K_0 := \lim_{u \to 0} K(u)$ exists and is invertible.
This can be seen by taking the limits $u\to 0$, $v \to 0$ in \eqref{RE} and \eqref{tREalt} appropriately and noting that $K_0$ is a solution to the constant untwisted or twisted reflection equations, viz.
\eqa{
\label{CRE} (R_q)_{21} (K_0)_1  R_q (K_0)_2  &= (K_0)_2 (R_q)_{21} (K_0)_1 R_q, \\
\label{CtRE} (R_q)^\t_{21} (K_0)_1 (R_q^{-1})^{\t_2} (K_0)_2 &= (K_0)_2 (R_q^{-1})_{21}^{\t_1} (K_0)_1 R_q,
}
respectively. 
The matrix $K_0$ satisfies the constant boundary intertwining equation $K_0\,\RT(b)=\RT(b)\,K_0$ in the untwisted case, or $K_0\,\RT(b) = \RT^\t(S(b))\,K_0$ in the twisted case, for all $b\in B_{\bm c,\bm s}\cap U_q(\mfg)$. 
Conversely, if $(X,\tau)$ is not restrictable, we find that $K_0$ is not invertible, so it cannot be interpreted as an intertwiner of any coideal subalgebra of $U_q(\mfg)$ in the usual sense. 

In \cite{NoSu} a classification of solutions of the constant twisted RE was given; the case of the quantum complex Grassmannian (corresponding to our type A.3, for which the reflection equation cannot be brought to the twisted form) was discussed in \cite{NDS}. 
In order to compare our results with those obtained by Noumi, Dijkhuizen and Sugitani, for each restrictable case we will indicate to which of their constant K-matrices our $K(u)$ corresponds. It is useful to remark that, owing to \eqref{R:PT-symm}, \eqref{CtRE} is equivalent to 
\[
\wt R_q (K_0)_1 \wt R_q^{\t_1} (K_0)_2 = (K_0)_2 \wt R_q^{\t_1} (K_0)_1 \wt R_q 
\]
with $\wt R_q = (R_q^{-1})_{21}$, which is the twisted RE used in \cite{NoSu}. 

If $X = I \backslash \{0\}$ we always have $K(u) = K_0 = \Id$. 
More generally, K-matrices associated to restrictable generalized Satake diagrams turn out to have at most three distinct eigenvalues, {\it i.e.}~there exists a linear combination of $K(u)$, $K(u)^{-1}$ and $\Id$ which is zero. 
In the quasistandard restrictable cases there are in fact at most two distinct eigenvalues (there is only one family of non-quasistandard restrictable Satake diagrams: BD.1 with $I \backslash X = \{0,1,2\}$). 

Analogously to \eqref{Ru:affinization} we find \emph{(boundary) affinization identities}: if $(X,\tau) \in \GSat_0(A)$ and $K(u)$ is the corresponding K-matrix then there exist $f_\pm(u) , f_0(u), g_\pm(u) \in \K$ such that  
\eqn{
\qq K(u) &= f_+(u) K_0 + f_0(u) \Id + f_- (u) K_0^{-1}, \\
\qq K(u) &= g_+(u) K_0 + g_- (u) K_0^{-1} \qq \qq \text{if } (X,\tau) \text{ is quasistandard}.
}

\item[Rotations]

When $(X^\si,\tau^\si)=(X,\tau)$ for some $\si \in \Sigma_A$ then according to Proposition \ref{prop:rotateintw}, the rotated K-matrix $K^\si(u)$ given by \eqref{eq:Ksigma} associated to the same generalized Satake diagram, can be presented as a modification of $K(u)$ in terms of dressing, scalar multiplication and a different choice of $\bm c \in \mc{C},\bm s \in \mc{S}$. When applicable, we list such identities for rotated K-matrices $K^\si(u)$. 

\item[Bar-symmetry] 

K-matrices of types A.3 and BCD.12 exhibit a boundary analogon of the ``bar-symmetry'' \eqref{Ru:bar}, namely $K(u)^{-1}$ can be expressed in terms of $J K(u) J$ with an involution applied to its parameters, where
\[ 
J =\sum_{i \in \langle N \rangle} E_{i',i}
\]
where $i'=N+1-i$ if $\mfg = \wh\mfsl_N$ and $i'=-i$ otherwise (see Section \ref{sec:K-zeros}).
The appearance of $J$ is quite natural since \eqref{Ru:bar} is equivalent to $\hat R(u)^{-1} = J_1 J_2 \hat R(u)|_{q \to q^{-1}} J_1 J_2$ owing to $R_{21}(u) = J_1 J_2 R(u) J_1 J_2$.

\item[Half-period]

In many cases there is a simple relation between $K(-u)$ and $K(u)$.
We call this \emph{half-period symmetry} since in the trigonometric realization of these K-matrices, obtained by substituting $u$ by $\exp(\sqrt{-1}\, t)$ for a new spectral parameter $t$, the transformation $u \to -u$ corresponds to $t \to t+\pi$.
If $0 \notin X$ the only generators of $B_{c,s}$ whose image under $\RT_{\eta u}$ depends on $\eta$ are $b_0$ and $b_{\tau(0)}$; if $0 \in X$ there is a unique $\tau$-orbit $Y \subseteq I \backslash X$ neighbouring the connected component of $X$ containing 0 and only the $b_j$ with $j \in Y$ depend on $\eta$.
As follows from Section \ref{sec:dressing}, if $\tau(0)=0 \notin X$, $c_0$ is proportional to $\eta^{-2}$ and $s_0$ is proportional to $\eta^{-1}$. If $\tau(0) \ne 0 \notin X$ then $c_0$ and $c_{\tau(0)}$ are both proportional to $\eta^{-1}$. Finally, if $0 \in X$ then for all $j \in Y$ we have $c_j$ proportional to $\eta^{-1}$.
Hence, if $\tau(0)=0 \notin X$ with $s_0=0$ then $K(u)=K(-u)$.
Furthermore, if $0 \in I_{\rm nsf}$ with $s_0 \ne 0$ or if either $\{0,\tau(0) \}$ or $Y$ intersects $I_{\rm diff}$ then $K(-u)$ can be obtained from $K(u)$ by applying to a nonremovable free parameter a certain transformation, which we will specify. 

\item[Reductions and diagonal cases] We highlight if any of the following behaviour occurs for special values of the ``classification'' parameters $\ell$, $r$ and $t$ or for special values of the free parameters:
\begin{itemize}[itemsep=0.25ex]

\item The effective degree $d_{\rm eff}(K)$ defined in Section \ref{sec:K-intw} is not equal to the number of distinct eigenvalues of $K(u)$ ({\it i.e.}~the degree of the minimal polynomial of $K(u)$).

\item The K-matrix is not doubly regular ({\it i.e.}~$K(1)$ or $K(-1)$ is not equal to $\pm \Id$); this is always accompanied by a reduction in $d_{\rm eff}(K)$. In particular, w.r.t.~the formulas listed in Theorem~\ref{T:all-K}, the functions $k_i(u)$ may develop a zero at $u= \pm 1$, yielding a reduction in the regularity property (in these cases it follows that the representation $\RT_u$ restricted to $B_{\bm c,\bm s}$ is reducible when evaluated at $u=\pm \eta$).

\item The K-matrix is a generalized involution matrix.

\item The K-matrix is independent of $q$. This is possible in some cases by expressing $\bm c$ and $\bm s$ suitably in terms of $q$, the dressing parameters $\omega_j$, the scaling parameter $\eta$ and additional free parameters.
\end{itemize}
Whenever we restrict to a certain values of $\ell$, $r$ and $t$, or impose a relation between $\ell$ and $r$ we will denote the corresponding K-matrix by notations such as $K_{\ell=\ell_0}(u)$, $K_{\ell=\ell_0\!,\,r=r_0}(u)$, $K_{\ell=r}(u)$, etc.
\end{description}

We present results in the case-by-case manner by grouping them according to the type of reflection equation and the number of special $\tau$-orbits, {\it i.e.}~by $|I_{\rm diff}\cup I_{\rm nsf}|$:

\begin{itemize} [itemsep=.25ex]

\item Families A.124: $|I_{\rm diff}\cup I_{\rm nsf}|=0$. The cases with $n>1$ are studied in Section \ref{sec:K:tw}. Some of the $n=1$ cases have nonempty $I_{\rm diff}\cup I_{\rm nsf}$ and are studied in Section \ref{sec:K:low-rank} (they are related to the $n=1$ case of the A.3 family).

\item Family A.3: $|I_{\rm diff}\cup I_{\rm nsf}|=2$ for most of the cases, with $|I_{\rm diff}\cup I_{\rm nsf}|<2$ for certain specializations; all cases are studied together in Section \ref{sec:A3}.

\item Families C.1 and BD.2: $I_{\rm diff}\cup I_{\rm nsf}=\emptyset$ for most of the cases, with $|I_{\rm diff}\cup I_{\rm nsf}|=1,2$ for certain specializations. 
These families mostly consist of weak Satake diagrams; the Satake diagrams are those such that $|I_{\rm diff}\cup I_{\rm nsf}|=2$ or $|I^*|=1$. 
All cases are studied together in Section \ref{sec:K:C1BD2}. 

\item Families CD.4: most of these Satake diagrams have $|I_{\rm diff}\cup I_{\rm nsf}|=1$ and are treated in Section~\ref{sec:CD4}. There is a special case with $|I_{\rm diff}|=2$, namely Satake diagrams of type D.4 with $|I \backslash X|=4$, which are dealt with in Section~\ref{sec:D4:spec}.

\item Families C.2 and BD.1: Most of the cases have $I_{\rm diff}\cup I_{\rm nsf}=\emptyset$ and are studied in Sections \ref{sec:C2}-\ref{sec:BD1}. 
The cases with $|I^*|=1$ are also of type C.1 or BD.2, respectively; in particular in the family BD.1 there are special cases with $|I_{\rm diff}|=1$ which are dealt with in Section \ref{sec:K:C1BD2}.

\item Finally, we study the non-quasistandard cases, which occur in the families BCD.1.
These are dealt with in Section~\ref{sec:nqs}. Of these, the BD.1 diagrams are Satake diagrams and the C.1 diagrams are weak Satake diagrams.
\end{itemize}

Low-rank QP algebras with $n=1$ and the exceptional D.3 cases (where $n=4$ and $\tau \in [(14)]$) are studied in Section~\ref{sec:K:low-rank}.
In Appendix \ref{App:summary} we give a summary of the main properties of the \mbox{K-matrices} classified in this paper.


\subsection{Twisted K-matrices of types A.1, A.2 and A.4} \label{sec:K:tw}

In this section we compute reflection matrices for the QP algebras associated with Satake diagrams $(X,\tau)\in\Sat(A)$ when $A$ is of type~A$^{(1)}_n$ and either $\tau=\id$ or $\tau=\pi$ that correspond to types (families) A.1, A.2 and A.4 listed in Table \ref{tab:A124:sat}. 
These QP algebras provide us with solutions of the twisted boundary intertwining equation \eqref{intw-tw} that are also solutions of the twisted reflection equation \eqref{tRE}. 
In this section we also give additional details of intermediate computations.
In later sections we will be more concise.

\begin{table}[h]
\caption{Satake diagrams and special $\tau$-orbits for families A.1, A.2 and A.4. } \label{tab:A124:sat}
\[
\begin{array}{ccccccc}
\text{Type} & \text{Name} & \text{Diagram} & \text{Restrictions} & I_{\rm diff} & I_{\rm ns} & I_{\rm nsf} \\ 
\hline 
\hline 
\rm A.1 
& 
\bigl( {\rm A}^{(1)}_1 \bigr)^\text{id}_0 
&
\begin{tikzpicture}[baseline=-0.4em,line width=0.7pt,scale=.8]
\draw[double] (0,0) -- (0.5,0);
\draw[] (0,.4);
\filldraw[fill=white] (0,0) circle (.1) node[left=1pt]{\scriptsize $0$};
\filldraw[fill=white] (.5,0) circle (.1) node[right=1pt]{\scriptsize $1$};
\end{tikzpicture}  
& n = 1 & \emptyset & \{0,1\} & \{0,1\} 
\\
\rm A.1 
& 
\bigl( {\rm A}^{(1)}_n \bigr)^\text{id}_0 
&
\hspace{5pt}
\begin{tikzpicture} [baseline=-0.25em,scale=0.8,line width=0.7pt]
\draw[thick,domain=0:135] plot ({cos(\x)},{sin(\x)});
\draw[thick,dashed,domain=135:360] plot ({cos(\x)},{sin(\x)});
\filldraw[fill=white] ({-sqrt(2)/2},{sqrt(2)/2}) circle (.1) node[left]{\scriptsize $n$};
\filldraw[fill=white] (0,1) circle (.1) node[above]{\scriptsize 0};
\filldraw[fill=white] ({sqrt(2)/2},{sqrt(2)/2}) circle (.1) node[right]{\scriptsize 1};
\filldraw[fill=white] (1,0) circle (.1) node[right]{\scriptsize 2};
\draw[](0,-1.2);
\end{tikzpicture} 
& n \ge 2 & \emptyset & I & \emptyset
\\
\hline 
\rm A.2 
& 
\bigl( {\rm A}^{(1)}_n \bigr)^\text{id}_\text{alt} 
& 
\begin{tikzpicture} [baseline=-0.25em,scale=0.8,line width=0.7pt]
\draw[thick,domain=0:135] plot ({cos(\x)},{sin(\x)});
\draw[thick,dashed,domain=135:360] plot ({cos(\x)},{sin(\x)});
\filldraw[fill=black] ({-sqrt(2)/2},{sqrt(2)/2}) circle (.1) node[left]{\scriptsize $n$};
\filldraw[fill=white] (0,1) circle (.1) node[above]{\scriptsize 0};
\filldraw[fill=black] ({sqrt(2)/2},{sqrt(2)/2}) circle (.1) node[right]{\scriptsize 1};
\filldraw[fill=white] (1,0) circle (.1) node[right]{\scriptsize 2};
\draw[](0,-1.2);
\end{tikzpicture} 
\;\text{ or }\;
\begin{tikzpicture} [baseline=-0.25em,scale=0.8,line width=0.7pt]
\draw[thick,domain=0:135] plot ({cos(\x)},{sin(\x)});
\draw[thick,dashed,domain=135:360] plot ({cos(\x)},{sin(\x)});
\filldraw[fill=white] ({-sqrt(2)/2},{sqrt(2)/2}) circle (.1) node[left]{\scriptsize $n$};
\filldraw[fill=black] (0,1) circle (.1) node[above]{\scriptsize 0};
\filldraw[fill=white] ({sqrt(2)/2},{sqrt(2)/2}) circle (.1) node[right]{\scriptsize 1};
\filldraw[fill=black] (1,0) circle (.1) node[right]{\scriptsize 2};
\draw[](0,-1.2);
\end{tikzpicture} 
&
n \text{ odd} & \emptyset & \emptyset & \emptyset
\\
\hline 
\rm A.4 
& 
\bigl( {\rm A}^{(1)}_1 \bigr)^{\pi} 
&
\begin{tikzpicture} [baseline=-0.25em,scale=0.8,line width=0.7pt]
\draw[thick,domain=0:360] plot ({.4*cos(\x)},{.4*sin(\x)});
\draw[<->,gray] (0,.3) -- (0,-.3);
\filldraw[fill=white] (0,.4) circle (.1) node[above]{\scriptsize 0};
\filldraw[fill=white] (0,-.4) circle (.1) node[below]{\scriptsize 1};
\end{tikzpicture}
&
n=1 & \{0\} & \emptyset  & \emptyset \\
\rm A.4 
& 
\bigl( {\rm A}^{(1)}_n \bigr)^{\pi} 
&
\begin{tikzpicture} [baseline=-0.25em,scale=0.8,line width=0.7pt]
\draw[thick,domain=45:135] plot ({cos(\x)},{sin(\x)});
\draw[thick,dashed,domain=135:225] plot ({cos(\x)},{sin(\x)});
\draw[thick,domain=225:315] plot ({cos(\x)},{sin(\x)});
\draw[thick,dashed,domain=315:405] plot ({cos(\x)},{sin(\x)});
\draw[<->,gray] (0,.9) -- (0,-.9);
\draw[<->,gray] ({.9*sqrt(2)/2},{.9*sqrt(2)/2}) -- ({-.9*sqrt(2)/2},{-.9*sqrt(2)/2});
\draw[<->,gray] ({-.9*sqrt(2)/2},{.9*sqrt(2)/2}) -- ({.9*sqrt(2)/2},{-.9*sqrt(2)/2});
\filldraw[fill=white] ({-sqrt(2)/2},{sqrt(2)/2}) circle (.1) node[left]{\scriptsize $n$};
\filldraw[fill=white] (0,1) circle (.1) node[above] {\scriptsize0};
\filldraw[fill=white] ({sqrt(2)/2},{sqrt(2)/2}) circle (.1) node[right]{\scriptsize 1};
\filldraw[fill=white] ({sqrt(2)/2},{-sqrt(2)/2}) circle (.1) node[right]{\scriptsize $(n\!-\!1)/2$};
\filldraw[fill=white] (0,-1) circle (.1) node[below]{\scriptsize $(n\!+\!1)/2$};
\filldraw[fill=white] ({-sqrt(2)/2},{-sqrt(2)/2}) circle (.1) node[left]{\scriptsize $(n\!+\!3)/2$};
\end{tikzpicture} 
&
n\ge 3, \text{ odd} & \emptyset  & \emptyset & \emptyset \\
\hline 
\end{array}
\]
\end{table}

The $n=1$ case is special for QP algebras of types A.1 and A.4 (this will be explained in detail below). We will postpone the study of this case to Section \ref{sec:K:low-rank}. 

We note that the K-matrices of types A.1, A.2 and A.4 are rather simple. 
Nevertheless we give a complete list of their properties (or rather, of the associated matrix $C^{-1} K(\wt q^{-1} u)$) to allow the reader to compare with K-matrices of types BCD.1, BCD.2 and CD.4, respectively.


\subsubsection{Family A.1} \label{sec:A1}

This family consists of Satake diagrams $(X,\tau)=(\emptyset,\id)$ para\-metrized by $n\ge1$ only. All diagrams in this family are quasistandard and restrictable, and have $I^*=I$. The special $\tau$-orbits are $I_{\rm diff} = \emptyset$, $I_{\rm nsf} = I$ if $n=1$ and $I_{\rm nsf} =  \emptyset$ otherwise; see Table \ref{tab:A124:sat}.
The case $n=1$ is special and will be studied in Section \ref{sec:K:low-rank}. Here we focus on the case $n\ge2$.

Since $I_{\rm diff}=I_{\rm nsf}=\emptyset$, we have $\bm s=\bm 0$ and the QP algebra $B_{\bm c,\bm 0}$ is generated by the elements $b_j$ defined by
\eq{
b_j = y_j - c_j\, x_{j} k_j^{-1}  \tx{for}  0\le j < N \label{A1:b_j}
}
with $\bm c = (c_0,\ldots,c_n )\in \mc C = (\K^\times)^I$.
Let $\eta\in\K^\times$ and $K(u)\in\End(\K^N)(u)$ be arbitrary. Combining \eqref{A1:b_j} with \eqref{rep:A} and \eqref{affrep} we have that
\eqn{
\RT_{\eta u}(b_0) &= (\eta \, u)^{-1} E_{1N} - q\, c_0 \eta \, u E_{N1} , & \RT^\t_{\eta/u}(S(b_0)) &= c_0 \, \eta \, u^{-1} E_{1N} - q \, \eta^{-1} u E_{N1} , \\
\RT_{\eta u}(b_j) &= E_{j+1,j} - q \,c_j E_{j,j+1} , & \RT^\t_{\eta/u}(S(b_j)) &= c_j E_{j+1,j} - q E_{j,j+1} ,
}
for $1\le j < N$. Solving the twisted boundary intertwining equation \eqref{intw-tw} for $b_j$ with $1\le j <N$ gives a unique, up to a scalar multiple, $q$-independent solution, which we denote by $K_{\bm c}(u)$, 
\eq{
K_{\bm c}(u) = \sum_{1\le i \le N} \Big( \prod_{1\le j < i} c_j \Big) E_{ii}, \label{A1:K-full}
}
satisfying the twisted reflection equation \eqref{tRE}. Solving \eqref{intw-tw} with \eqref{A1:K-full} for $b_0$ fixes $\eta^2=\prod_{j} c_j^{-1}$.
Next we want to rewrite $K_{\bm c}(u)$ in the canonical form. We introduce the effective dressing parameters $\om_1,\ldots,\om_N$, all in $\K^\times$, and set
\eq{ \label{A1:c_j}
c_0 = \eta^{-2} \om_1^2\,\om_N^{-2} \tx{and} c_j = \om_{j}^{-2}\om_{j+1}^2 \tx{for} 1 \le j \le n.
}
Then, upon multiplying by $\om_1^2$, \eqref{A1:K-full} becomes $\sum_{i=1}^N \om_i^2 E_{ii}$, which coincides with the type AI solution of the constant twisted reflection equation reported in \mbox{\cite[Sec.~3]{NoSu}}. Hence, upon undressing according to \eqref{eq:Kdressing}, we reach the following result.

\begin{result}
The bare K-matrix of type A.1 is $K(u)=\Id$.
\end{result}

The properties this K-matrix are rather trivial, nevertheless we state them for completeness. The affinization identity is simply $K(u)=K_0$. The rotational symmetry is $K^{\rho}(u) = K(u)$, {\it cf.}~\eqref{eq:Ksigma}. The twisted unitarity relation \eqref{K-unit-tw} is
\[ 
(C^{-1} K(\wt q^{-1} u))^{-1} =  (-1)^n C^{-1} K(\wt q^{-1} u^{-1}). 
\]
The eigendecomposition of $C^{-1} K(\wt q^{-1} u)$ is
\[ 
V = \Id + (-1)^{\frac{N+1}{2}}\! \sum_{i \in \langle N \rangle} \!\sgn(\tfrac{N+1}{2}-i) \vartheta_i q^{\nu_i} E_{N+1-i,i}, \qu D(u) = (-1)^{\frac{N+1}{2}} \!\!\sum_{1 \le i \le N} \!\Big( \del_{i \le \frac{N+1}{2}} - \del_{i>\frac{N+1}{2}} \Big) E_{ii} . 
\]


\subsubsection{Family A.2} \label{sec:A2}

This family consists of Satake diagrams $(X,\tau) = (\{ 1,3,\ldots,n\},\id)$ and $(X,\tau) = (\{ 0,2,\ldots,n\!-\!1\},\id)$ with $n\ge1$ odd; see Table \ref{tab:A124:sat}. All diagrams in this family are quasistandard. Proposition \ref{prop:rotateintw} with $\si = \rho$ allows us to restrict to the study of the restrictable ($0\notin X$) case only. 
Note that $X$ is of type ${\rm A}_1^{\times N/2}$. We have $I^* = \{0,2,\ldots,n-1\}$ and there are no special $\tau$-orbits, {\it i.e.}~\mbox{$I_{\rm diff} = I_{\rm ns} = I_{\rm nsf} = \emptyset$}.
In contrast to the A.1 family the $n=1$ case is not special; nevertheless we will give additional comments on it in Section~\ref{sec:K:low-rank}. 

The QP algebra $B_{\bm c,\bm 0}$ is generated by elements $x_i$, $y_i$, $k_i$ with $i\in X$ and elements $b_j$ defined~by
\eq{  
b_j = y_j - c_j\, T_{j-1} T_{j+1}(x_{j})\, k_j^{-1}  \tx{with} j \in \{0,2,\ldots,n\!-\!1\} \label{A2:b_j}
}
and $c_j \in \K^\times$ all independent; here $x_{-1} := x_n$. 
Next we proceed in a similar way as we did for the A.1 family above. Combining \eqref{A2:b_j} with \eqref{rep:A} and \eqref{affrep} we have that
\eqn{
\RT_{\eta u}(b_0) &= (\eta u)^{-1} E_{1N} - q^{-1} c_0 \eta u E_{N-1,2} , & \RT^\t_{\eta/u}(S(b_0)) &= - q \eta^{-1} u E_{N1} - q^{-4} c_0 \eta u^{-1} E_{2,N-1} , \\
\RT_{\eta u}(b_j) &= E_{j+1,j} + q^{-1} c_j E_{j-1,j+2} , & \RT^\t_{\eta/u}(S(b_j)) &= -q E_{j,j+1} - q^{-4} c_j E_{j+2,j-1} ,
}
for $j \in \{2,4,\ldots,n-1\}$. 
Solving the twisted boundary intertwining equation \eqref{intw-tw} for all generators of the QP algebra yields a unique, up to a scalar multiple, solution
\eq{
{ K_{\bm c}(u) =} \sum_{1\le i\le N/2} \Big(\prod_{1\le j < i} q^{-3} c_{2j}\Big) \Big( E_{2i,2i-1} - q\,E_{2i-1,2i}  \Big). \label{A2:K-full}
}
It satisfies the twisted reflection equation \eqref{tRE}. 
Substituting \eqref{A2:K-full} for $K(u)$ in \eqref{intw-tw} with $b=b_0$ fixes $\eta^2=q^{3N/2}\prod_{j} c_{j}^{-1}$. 
To obtain the canonical form of $K_{\bm c}(u)$ we introduce the effective dressing parameters $\om_2,\om_4,\ldots,\om_N$ and set 
\eq{ \label{A2:c_j}
c_0 = q^3 \eta^{-2} \om_2 \,\om^{-1}_{N} \tx{and} c_j = q^3 \om_{j}^{-1} \om_{j+2} \tx{for} j \in \{2,4,\ldots,N-2\}. 
}
Then, upon multiplying by $\om_2$, \eqref{A2:K-full} becomes $\sum_{1\le i\le N/2} \om_{2i} ( E_{2i,2i-1} - q E_{2i-1,2i} )$, which coincides with the type AII solution reported in \cite[Sec.~3]{NoSu}. Upon undressing and multiplying with $-q^{-1/2}$, we obtain the following result.

\begin{result}
The bare K-matrix of type A.2 is
\eq{
\label{A2:K} K(u) = \sum_{1\le i\le N/2} \Big( q^{1/2} E_{2i-1,2i} - q^{-1/2} E_{2i,2i-1} \Big).
}
\end{result}

As in the A.1 case, the affinization identity is simply $K(u)=K_0$. The rotational symmetry is $K^{\rho^2}(u) = K(u)$. The twisted unitarity relation \eqref{K-unit-tw} is
\[ 
(C^{-1} K(\wt q^{-1} u))^{-1} =  C^{-1} K(\wt q^{-1} u^{-1}). 
\]
The inverse of the \mbox{K-matrix} is given by the simple formula $K(u)^{-1} = -K(u)$. 
The eigendecomposition of $C^{-1} K(\wt q^{-1} u)$ is
\begin{gather*}
V = \Id + \sum_{i =1}^{N/2} \Big( q^{\nu_i-(-1)^i/2} E_{N+1-i+(-1)^i,i} - q^{(-1)^i/2-\nu_i} E_{i,N+1-i+(-1)^i}  \Big), \\
D(u) = \sum_{i \in \langle N \rangle} \Big( \del_{i \le 2 \lceil N/4 \rceil} - \del_{i>2 \lceil N/4 \rceil} \Big) E_{ii} .
\end{gather*}


\subsubsection{Family A.4} \label{sec:A4}

This family consists of Satake diagrams $(X,\tau) = (\emptyset,\pi )$ with $n\ge1$ odd. All diagrams in this family are quasistandard and non-restrictable. We choose $I^* = \{ 0,1,\ldots,N/2\!-\!1\}$ so that $I_{\rm ns}  =  I_{\rm nsf} = \emptyset$ and $I_{\rm diff} = \{ 0 \}$ if $n=1$ and $I_{\rm diff}=\emptyset$ otherwise; see Table~\ref{tab:A124:sat}.
As for the A.1 family, the $n=1$ case is special; we will study this case in Section~\ref{sec:K:low-rank}. Here we study the $n\ge3$ case only.

The QP algebra $B_{\bm c, \bm 0}$ is generated by the elements $k_jk_{j+N/2}^{-1}$ with $0\le j < N/2$ and elements $b_j$ with $0 \le j < N$ defined by
\eq{ \label{A4:b_j}
b_j = y_j - c_j\, x_{j+N/2}\, k_j^{-1},
}
}
where $c_j=c_{j+N/2}$ for $0 \le j < N/2$; all indices must be read modulo $N$.  Combining \eqref{A4:b_j} with \eqref{rep:A} and \eqref{affrep} we have that
\eqn{
\RT_{\eta u}(b_0) &= (\eta \, u)^{-1} E_{1N} - c_0 E_{N/2,N/2+1} , & \RT^\t_{\eta/u}(S(b_0)) &= -q \, \eta^{-1} u E_{N1} + q^{-1} c_0 E_{N/2+1,N/2} , \\
\RT_{\eta u}(b_{N/2}) &= E_{N/2+1,N/2} - c_0 \eta \, u E_{N1} , & \RT^\t_{\eta/u}(S(b_0)) &= -q E_{N/2,N/2+1} + q^{-1} c_0 \eta \, u^{-1} E_{1N} , \\
\RT_{\eta u}(b_j) &= E_{j+1,j} - c_j E_{j+N/2,j+N/2+1} , & \RT^\t_{\eta/u}(S(b_j)) &= - q \, E_{j,j+1} + q^{-1} c_j E_{j+N/2+1,j+N/2} , \\
\RT_{\eta u}(b_{j+N/2}) &= E_{j+N/2+1,j+N/2} - c_{j} E_{j,j+1} , & \RT^\t_{\eta/u}(S(b_{j+N/2})) &= - q \, E_{j+N/2,j+N/2+1} + q^{-1} c_{j} E_{j+1,j} ,
}
for $0< j < N/2$. Solving the twisted boundary intertwining equation \eqref{intw-untw} for all generators of the QP algebra fixes $\eta = \prod_{0 \le j < N/2} q^{-1}c_j$ and gives a unique, up to a scalar multiple, solution
\eq{
 K_{\bm c}(u) =  \sum_{1 \le i \le N/2} \Bigl( \prod_{i \le j < N/2} q\,c_{j}^{-1} \Bigr) ( u \,E_{i+N/2,i} + E_{i,i+N/2} ) \label{A4:K-full}
}
satisfying the twisted reflection equation \eqref{tRE}. Next, in terms of the effective dressing parameters $\om_1,\ldots,\om_{N/2}$ and the scaling parameter $\eta$ we set 
\eq{ \label{A4:c_j}
c_0 = q \, \eta^{-1} \om_1/\om_{N/2} \tx{and} c_j = q \, \om_{j+1}/\om_{j} \tx{for all} 1 \le j < N/2. 
}
Then, upon multiplying by $\om_{N/2}$, \eqref{A4:K-full} becomes $\sum_{1\le i\le N/2} \om_i ( u E_{i+N/2,i} + E_{i,i+N/2} )$ and we arrive at the following result.

\begin{result}
The bare K-matrix of type A.4 is
\eq{ 
\label{A4:K} K(u) = Z^{\pi}(u) = \sum_{1 \le i \le N/2} ( u E_{i+N/2,i} + E_{i,i+N/2} ).
}
\end{result}

This K-matrix has the following properties. The constant K-matrix $K_0=\lim_{u\to0}K(u)$ is not invertible; this is typical for K-matrices associated with non-restrictable Satake diagrams. The rotational symmetry is $K^{\rho}(u) = K(u)$.
The twisted unitarity relation \eqref{K-unit-tw} is 
\[
(C^{-1} K(\wt q^{-1} u))^{-1} = - \wt q  \, C^{-1} K(\wt q^{-1} u^{-1}).
\]
The inverse if given by $K(u)^{-1} = u^{-1} K(u)$. The effective degree is $d_{\rm eff}=1$. The matrix $C^{-1} K(\wt q^{-1} u)$ has four distinct eigenvalues. Its eigendecomposition is 
\begin{gather*}
\begin{aligned}
V&= \Id + (-1)^{N/4} q^{N/4} \sum_{i=1}^{\lfloor N/4 \rfloor} \Big( (-1)^{\nu_i} q^{\nu_i} \big( E_{N/2 +1-i,i} + E_{N+1-i,i+N/2} \big) \\
& \hspace{45mm} - (-1)^{-\nu_i - N/2} q^{-\nu_i - N/2} \big( E_{i,N/2+1-i} + E_{i+N/2,N+1-i} \big) \Big),
\end{aligned} \\
D(u) = (-1)^{(N+2)/4} q^{-N/4} \sum_{i=1}^{N/2} (-1)^{\del_{i \le (N+2)/4}} \big( u E_{ii} + E_{N/2+i,N/2+i} \big).
\end{gather*}

\begin{rmk} \label{A4:solution}
The boundary intertwining equation can be straightforwardly verified for the A.4 family (we keep the assumption $N>2$). 
Owing to Proposition \ref{prop:dressintw}, it suffices to verify it for the ``bare'' case when $c_{j+N/2} = c_j=q$ for all $0 \le j <N/2$. 
From $K(u)=Z^\pi(u)$ and Proposition \ref{Tsigma} it follows that $K(u)\,\RT_u(b)\,K(u)^{-1} = \RT_u(\pi(b))$ for all $b \in B_{\bm c,\bm 0}$ and the boundary intertwining equation simplifies~to 
\eq{ \label{bintw:A4} 
\RT_u(\pi(b)) = \RT^\t_{1/u}(S(b)) \qu \text{for all}\qu b \in \{ b_j, k_j k_{j+N/2}^{-1} \}_{0 \le j < N}. 
}
When $b=k_j k_{j+N/2}^{-1}$ this is clear: we have $\pi(b) = b^{-1} = S(b)$ and $\RT_u|_{U_q(\mfh')} = \RT_1|_{U_q(\mfh')} = \RT^\t_{1/u} |_{U_q(\mfh')}$. 
It remains to verify \eqref{bintw:A4} when $b=b_j$, see \eqref{A4:b_j}.
We have
\begin{align*}
\RT_u(\pi(b_j)) &= \RT_u \big(y_{j+N/2}  - q x_j k_{j+N/2}^{-1} \big) = \RT_u \big(y_{j+N/2}) - q\, \RT_u \big(x_j\big) \RT_1(k_{j+N/2}^{-1}) ,
\\
\RT^\t_{1/u}(S(b_j)) &= \RT^\t_{1/u} \big( q\, k_j k_{j+N/2}^{-1} x_{j + N/2} - y_j k_j \big) \\
&= q\, \RT^\t_{1/u}(x_{j + N/2})\, \RT_1(k_{j+N/2}^{-1})\, \RT_1(k_j) - \RT_1(k_j)\, \RT^\t_{1/u}(y_j). 
\end{align*}
Hence, it is sufficient to establish the equations
\[
\RT_u \big(y_{j+N/2})  =  q \RT^\t_{1/u}(x_{j + N/2})\, \RT_1(k_{j+N/2}^{-1})\, \RT_1(k_j), \qq 
q \RT_u \big(x_j\big)\, \RT_1(k_{j+N/2}^{-1}) =  \RT_1(k_j)\, \RT^\t_{1/u}(y_j). 
\]
By transposing and replacing $j$ by $j+N/2$ in the first equation and replacing $u$ by $1/u$ in the second, we see that this system is equivalent to 
\[
\RT_1(k_j) \, \RT^\t_u \big(y_j)  =  q\, \RT_1(k_{j+N/2}^{\pm 1}) \, \RT_{1/u}(x_j) 
\]
(for both sign choices in the exponent). But this is a direct consequence of \eqref{rep:A} and \eqref{affrep}. \hfill \rmkend
\end{rmk}


\subsection{Untwisted K-matrices of type A.3} \label{sec:A3}

The A.3 family consists of Satake diagrams $(X,\tau) \in \Sat(A)$ with $\tau \in \Aut(A) \backslash \Sigma_A = [\psi] \cup [\psi']$ and $A$ of type ${\rm A}^{(1)}_n$. All diagrams in this family are quasistandard and are parametrized by the tuple $(N,p_1,p_2,o_1,o_2)$ and location of the affine node. We assume the underlying Dynkin diagram is labelled clockwise.
We recall that $o_i \in \{0,1\}$ determines the type of the unique $\tau$-lateral set $Y_i$ and $p_i$ equals the number of $\tau$-orbits in $X_i$. According to the value of $o_1+o_2$ we distinguish three subfamilies: A.3a (with $o_1+o_2=0$), A.3b (with $o_1+o_2=1$) and A.3c (with $o_1+o_2=2$).

Set $\ell=p_1 + \lfloor (t+N)/2\rfloor$ and $r=\lfloor t/2\rfloor - p_2$, where $t = \tau(0) \bmodN N$ (recall the notation \eqref{modN}). It follows that $o_1= N-t$ and $o_2 = t \bmod 2$, and the quadruple $(N,\ell,r,t)$ uniquely determines $(X,\tau)$. By rotating $(X,\tau)$ with an appropriate $\si \in \Sigma_A$ we may assume that either $\tau = \psi'$ with $N$ even (subfamily A.3c) or $\tau = \psi$ (subfamilies A.3ab). This determines the representative diagrams (note that $p_1 \le p_2$ for even $N$) and implies that $0 \in Y_1$, $t \in \{N-1,N\}$ and $\ell$, $r$ are bounded by $0 \leq \ell \leq r \leq \lfloor t/2\rfloor$ and, for even $N$, $\ell+r \leq \lfloor t/2\rfloor$. We thus have that $|I^*|-1 = r-\ell$, $X_1 = \{0,\ldots,\ell-1\} \cup \{ t-\ell+1,\ldots,n\}$ and $X_2 = \{r+1,\ldots,t-r-1\}$. A Satake diagram is restrictable precisely if $(\ell,t)=(0,N)$. 
The representative diagrams and special $\tau$-orbits, subject to the choice $I^* = \{ \ell, \ldots,r\}$, are listed in Table \ref{tab:A3:sat}. In all cases $I_{\rm nsf} = I_{\rm ns}$.

{\arraycolsep=3pt \def\arraystretch{1.41}
\begin{table}[h]
\caption{Family A3: representative Satake diagrams.} \label{tab:A3:sat}
\[
\begin{array}{ccccccc}
\text{Type} & \text{Name} & \text{Diagram} & (o_1,o_2) & \rm Restrictions & I_{\rm diff} & I_{\rm nsf}
\\ \hline \hline
\text{A.3a} & \bigl( {\rm A}^{(1)}_{n} \bigr)^{\psi}_{p_1,p_2} & 
\begin{tikzpicture}[baseline=-0.25em,line width=0.7pt,scale=0.8]
\draw[thick] (0,.4) -- (-.5,0) -- (0,-.4);
\draw[thick,dashed] (0,.4) -- (1,.4);
\draw[thick] (1,.4) -- (1.5,.4);
\draw[thick,dashed] (1.5,.4) -- (2.5,.4);
\draw[thick] (2.5,.4) -- (3,.4);
\draw[thick,dashed] (3,.4) -- (4,.4);
\draw[thick] (4,.4) -- (4.5,0) -- (4,-.4);
\draw[thick,dashed] (0,-.4) -- (1,-.4);
\draw[thick] (1,-.4) -- (1.5,-.4);
\draw[thick,dashed] (1.5,-.4) -- (2.5,-.4);
\draw[thick] (2.5,-.4) -- (3,-.4);
\draw[thick,dashed] (3,-.4) -- (4,-.4);
\draw[<->,gray] (0,.3) -- (0,-.3);
\draw[<->,gray] (1,.3) -- (1,-.3);
\draw[<->,gray] (1.5,.3) -- (1.5,-.3);
\draw[<->,gray] (2.5,.3) -- (2.5,-.3);
\draw[<->,gray] (3,.3) -- (3,-.3);
\draw[<->,gray] (4,.3) -- (4,-.3);
\draw[snake=brace] (-.6,.6) -- (1.1,.6) node[midway,above]{\scriptsize $p_1$};
\draw[snake=brace] (2.9,.6) -- (4.6,.6) node[midway,above]{\scriptsize $p_2$};
\filldraw[fill=black] (-.5,0) circle (.1) node[left=-1pt]{\scriptsize 0};
\filldraw[fill=black] (0,.4) circle (.1);
\filldraw[fill=black] (0,-.4) circle (.1);
\filldraw[fill=black] (1,.4) circle (.1);
\filldraw[fill=black] (1,-.4) circle (.1);
\filldraw[fill=white] (1.5,.4) circle (.1) node[above=1pt]{\scriptsize $\ell$};
\filldraw[fill=white] (1.5,-.4) circle (.1) node[below=1pt]{\scriptsize $\! N\!-\!\ell$};
\filldraw[fill=white] (2.5,.4) circle (.1) node[above=1pt]{\scriptsize $r$};
\filldraw[fill=white] (2.5,-.4) circle (.1) node[below=1pt]{\scriptsize $N\!-\!r\!$};
\filldraw[fill=black] (3,.4) circle (.1);
\filldraw[fill=black] (3,-.4) circle (.1);
\filldraw[fill=black] (4,.4) circle (.1);
\filldraw[fill=black] (4,-.4) circle (.1);
\filldraw[fill=black] (4.5,0) circle (.1);
\end{tikzpicture} 
& (0,0)  & \begin{array}{c} t=N \text{ even} \\ 0 < \ell < r \le \tfrac{N}{2}\!-\!\ell \end{array} & \{\ell,r\}  & \emptyset 
\\[-.15em]
\text{A.3a} & \bigl( {\rm A}^{(1)}_{n} \bigr)^{\psi}_{0,p_2} 
&
\begin{tikzpicture}[baseline=-0.25em,line width=0.7pt,scale=0.8]
\draw[thick] (1.5,.4) -- (1,0) -- (1.5,-.4);
\draw[thick,dashed] (1.5,.4) -- (2.5,.4);
\draw[thick] (2.5,.4) -- (3,.4);
\draw[thick,dashed] (3,.4) -- (4,.4);
\draw[thick] (4,.4) -- (4.5,0) -- (4,-.4);
\draw[thick,dashed] (1.5,-.4) -- (2.5,-.4);
\draw[thick] (2.5,-.4) -- (3,-.4);
\draw[thick,dashed] (3,-.4) -- (4,-.4);
\draw[<->,gray] (1.5,.3) -- (1.5,-.3);
\draw[<->,gray] (2.5,.3) -- (2.5,-.3);
\draw[<->,gray] (3,.3) -- (3,-.3);
\draw[<->,gray] (4,.3) -- (4,-.3);
\draw[snake=brace] (2.9,.6) -- (4.6,.6) node[midway,above]{\scriptsize $p_2$};
\filldraw[fill=white] (1,0) circle (.1) node[left=-1pt]{\scriptsize $0$};
\filldraw[fill=white] (1.5,.4) circle (.1);
\filldraw[fill=white] (1.5,-.4) circle (.1);
\filldraw[fill=white] (2.5,.4) circle (.1) node[above=1pt]{\scriptsize $r$};
\filldraw[fill=white] (2.5,-.4) circle (.1) node[below=1pt]{\scriptsize $N\!-\!r$};
\filldraw[fill=black] (3,.4) circle (.1);
\filldraw[fill=black] (3,-.4) circle (.1);
\filldraw[fill=black] (4,.4) circle (.1);
\filldraw[fill=black] (4,-.4) circle (.1);
\filldraw[fill=black] (4.5,0) circle (.1);
\end{tikzpicture} 
& (0,0)  & \begin{array}{c} t=N \text{ even} \\ 0 = \ell  < r < \tfrac{N}{2} \end{array} & \{r\} & \{0\}
\\
\text{A.3a} & \bigl( {\rm A}^{(1)}_{n} \bigr)^{\psi}_{0,0} & 
\begin{tikzpicture}[baseline=-0.25em,line width=0.7pt,scale=0.8]
\draw[thick] (1.5,.4) -- (1,0) -- (1.5,-.4);
\draw[thick,dashed] (1.5,.4) -- (2.5,.4);
\draw[thick,dashed] (1.5,-.4) -- (2.5,-.4);
\draw[thick] (2.5,.4) -- (3,0) -- (2.5,-.4);
\draw[<->,gray] (1.5,.3) -- (1.5,-.3);
\draw[<->,gray] (2.5,.3) -- (2.5,-.3);
\filldraw[fill=white] (1,0) circle (.1) node[left=-1pt]{\scriptsize $0$};
\filldraw[fill=white] (1.5,.4) circle (.1);
\filldraw[fill=white] (1.5,-.4) circle (.1);
\filldraw[fill=white] (2.5,.4) circle (.1);
\filldraw[fill=white] (2.5,-.4) circle (.1);
\filldraw[fill=white] (3,0) circle (.1) node[right=-1pt]{\scriptsize $\frac{N}{2}$};
\end{tikzpicture} 
& (0,0) & \begin{array}{c} t=N \text{ even} \\ \ell = 0, \; r =\tfrac{N}{2} \end{array} & \emptyset & \{0,\tfrac{N}{2}\}
\\[1em]
\text{A.3a} & \bigl( {\rm A}^{(1)}_{n} \bigr)^{\psi}_{\frac{N}{2}-p_2,p_2} & 
\begin{tikzpicture}[baseline=-0.25em,line width=0.7pt,scale=0.8]
\draw[thick] (0,.4) -- (-.5,0) -- (0,-.4);
\draw[thick,dashed] (0,.4) -- (1,.4);
\draw[thick] (1,.4) -- (2,.4);
\draw[thick,dashed] (2,.4) -- (3,.4);
\draw[thick] (3,.4) -- (3.5,0) -- (3,-.4);
\draw[thick,dashed] (0,-.4) -- (1,-.4);
\draw[thick] (1,-.4) -- (2,-.4);
\draw[thick,dashed] (2,-.4) -- (3,-.4);
\draw[<->,gray] (0,.3) -- (0,-.3);
\draw[<->,gray] (1,.3) -- (1,-.3);
\draw[<->,gray] (1.5,.3) -- (1.5,-.3);
\draw[<->,gray] (2,.3) -- (2,-.3);
\draw[<->,gray] (3,.3) -- (3,-.3);
\draw[snake=brace] (-.6,.6) -- (1.1,.6) node[midway,above]{\scriptsize $\tfrac{N}{2}\!-\!p_2$};
\draw[snake=brace] (1.9,.6) -- (3.6,.6) node[midway,above]{\scriptsize $p_2$};
\filldraw[fill=black] (-.5,0) circle (.1) node[left=-1pt]{\scriptsize 0};
\filldraw[fill=black] (0,.4) circle (.1);
\filldraw[fill=black] (0,-.4) circle (.1);
\filldraw[fill=black] (1,.4) circle (.1);
\filldraw[fill=black] (1,-.4) circle (.1);
\filldraw[fill=white] (1.5,.4) circle (.1) node[above=1pt]{\scriptsize $\ell$};
\filldraw[fill=white] (1.5,-.4) circle (.1) node[below=1pt]{\scriptsize $\! N\!-\!\ell$};
\filldraw[fill=black] (2,.4) circle (.1);
\filldraw[fill=black] (2,-.4) circle (.1);
\filldraw[fill=black] (3,.4) circle (.1);
\filldraw[fill=black] (3,-.4) circle (.1);
\filldraw[fill=black] (3.5,0) circle (.1);
\end{tikzpicture} 
& (0,0)  & \begin{array}{c} t=N \text{ even} \\ 0 < \ell = r \le \tfrac{N}{4} \end{array} & \{\ell \}  & \emptyset 
\\[-.15em]
\text{A.3a} & \bigl( {\rm A}^{(1)}_{n} \bigr)^{\psi}_{0,\frac{N}{2}} & 
\begin{tikzpicture}[baseline=-0.25em,line width=0.7pt,scale=0.8]
\draw[thick] (1.5,.4) -- (1,0) -- (1.5,-.4);
\draw[thick,dashed] (1.5,.4) -- (2.5,.4);
\draw[thick,dashed] (1.5,-.4) -- (2.5,-.4);
\draw[thick] (2.5,.4) -- (3,0) -- (2.5,-.4);
\draw[<->,gray] (1.5,.3) -- (1.5,-.3);
\draw[<->,gray] (2.5,.3) -- (2.5,-.3);
\filldraw[fill=white] (1,0) circle (.1) node[left=-1pt]{\scriptsize $0$};
\filldraw[fill=black] (1.5,.4) circle (.1);
\filldraw[fill=black] (1.5,-.4) circle (.1);
\filldraw[fill=black] (2.5,.4) circle (.1);
\filldraw[fill=black] (2.5,-.4) circle (.1);
\filldraw[fill=black] (3,0) circle (.1) node[right=-1pt]{\scriptsize $\frac{N}{2}$};
\end{tikzpicture} 
& (0,0) & \begin{array}{c} t=N \text{ even} \\ \ell = r=  0 \end{array} & \emptyset & \emptyset
\\[.75em] \hline
\text{A.3b} & \bigl( {\rm A}^{(1)}_{n} \bigr)^{\psi}_{p_1;p_2} & 
\begin{tikzpicture}[baseline=-0.25em,line width=0.7pt,scale=0.8]
\draw[thick] (0,.4) -- (-.5,0) -- (0,-.4);
\draw[thick,dashed] (0,.4) -- (1,.4);
\draw[thick] (1,.4) -- (1.5,.4);
\draw[thick,dashed] (1.5,.4) -- (2.5,.4);
\draw[thick] (2.5,.4) -- (3,.4);
\draw[thick,dashed] (3,.4) -- (4,.4);
\draw[thick,domain=270:450] plot({4+.4*cos(\x)},{.4*sin(\x)});
\draw[thick,dashed] (0,-.4) -- (1,-.4);
\draw[thick] (1,-.4) -- (1.5,-.4);
\draw[thick,dashed] (1.5,-.4) -- (2.5,-.4);
\draw[thick] (2.5,-.4) -- (3,-.4);
\draw[thick,dashed] (3,-.4) -- (4,-.4);
\draw[<->,gray] (0,.3) -- (0,-.3);
\draw[<->,gray] (1,.3) -- (1,-.3);
\draw[<->,gray] (1.5,.3) -- (1.5,-.3);
\draw[<->,gray] (2.5,.3) -- (2.5,-.3);
\draw[<->,gray] (3,.3) -- (3,-.3);
\draw[<->,gray] (4,.3) -- (4,-.3);
\draw[snake=brace] (-.6,.6) -- (1.1,.6) node[midway,above]{\scriptsize $p_1$};
\draw[snake=brace] (2.9,.6) -- (4.1,.6) node[midway,above]{\scriptsize $p_2$};
\filldraw[fill=black] (-.5,0) circle (.1) node[left=-1pt]{\scriptsize 0};
\filldraw[fill=black] (0,.4) circle (.1);
\filldraw[fill=black] (0,-.4) circle (.1);
\filldraw[fill=black] (1,.4) circle (.1);
\filldraw[fill=black] (1,-.4) circle (.1);
\filldraw[fill=white] (1.5,.4) circle (.1) node[above=1pt]{\scriptsize $\ell$};
\filldraw[fill=white] (1.5,-.4) circle (.1) node[below=1pt]{\scriptsize $\!N \! - \! \ell$};
\filldraw[fill=white] (2.5,.4) circle (.1) node[above=1pt]{\scriptsize $r$};
\filldraw[fill=white] (2.5,-.4) circle (.1) node[below=1pt]{\scriptsize $N\!-\!r\!$};
\filldraw[fill=black] (3,.4) circle (.1);
\filldraw[fill=black] (3,-.4) circle (.1);
\filldraw[fill=black] (4,.4) circle (.1);
\filldraw[fill=black] (4,-.4) circle (.1);
\end{tikzpicture} 
& (0,1) & \begin{array}{c} t=N \text{ odd} \\ 0< \ell < r \le \tfrac{N+1}{2} \end{array} & \{\ell,r\} & \emptyset
\\[-.15em]
\text{A.3b} & \bigl( {\rm A}^{(1)}_{n} \bigr)^{\psi}_{0;p_2} &  
\begin{tikzpicture}[baseline=-0.25em,line width=0.7pt,scale=0.8]
\draw[thick] (1.5,.4) -- (1,0) -- (1.5,-.4);
\draw[thick,dashed] (1.5,.4) -- (2.5,.4);
\draw[thick] (2.5,.4) -- (3,.4);
\draw[thick,dashed] (3,.4) -- (4,.4);
\draw[thick,domain=270:450] plot({4+.4*cos(\x)},{.4*sin(\x)});
\draw[thick,dashed] (1.5,-.4) -- (2.5,-.4);
\draw[thick] (2.5,-.4) -- (3,-.4);
\draw[thick,dashed] (3,-.4) -- (4,-.4);
\draw[<->,gray] (1.5,.3) -- (1.5,-.3);
\draw[<->,gray] (2.5,.3) -- (2.5,-.3);
\draw[<->,gray] (3,.3) -- (3,-.3);
\draw[<->,gray] (4,.3) -- (4,-.3);
\draw[snake=brace] (2.9,.6) -- (4.1,.6) node[midway,above]{\scriptsize $p_2$};
\filldraw[fill=white] (1,0) circle (.1) node[left=1pt]{\scriptsize $0$};
\filldraw[fill=white] (1.5,.4) circle (.1);
\filldraw[fill=white] (1.5,-.4) circle (.1);
\filldraw[fill=white] (2.5,.4) circle (.1) node[above=1pt]{\scriptsize $r$};
\filldraw[fill=white] (2.5,-.4) circle (.1) node[below=-.5pt]{\scriptsize $N\!-\!r$};
\filldraw[fill=black] (3,.4) circle (.1);
\filldraw[fill=black] (3,-.4) circle (.1);
\filldraw[fill=black] (4,.4) circle (.1);
\filldraw[fill=black] (4,-.4) circle (.1);
\end{tikzpicture} & (0,1) & \begin{array}{c} t=N \text{ odd} \\ 0 = \ell < r \le \tfrac{N+1}{2} \end{array} & \{r\} & \{0\} \\
\text{A.3b} & \bigl( {\rm A}^{(1)}_{n} \bigr)^{\psi}_{\frac{N\!-\!1}{2}-p_2;p_2} & 
\begin{tikzpicture}[baseline=-0.25em,line width=0.7pt,scale=0.8]
\draw[thick] (0,.4) -- (-.5,0) -- (0,-.4);
\draw[thick,dashed] (0,.4) -- (1,.4);
\draw[thick] (1,.4) -- (2,.4);
\draw[thick,dashed] (2,.4) -- (3,.4);
\draw[thick,domain=270:450] plot({3+.4*cos(\x)},{.4*sin(\x)});
\draw[thick,dashed] (0,-.4) -- (1,-.4);
\draw[thick] (1,-.4) -- (2,-.4);
\draw[thick,dashed] (2,-.4) -- (3,-.4);
\draw[<->,gray] (0,.3) -- (0,-.3);
\draw[<->,gray] (1,.3) -- (1,-.3);
\draw[<->,gray] (1.5,.3) -- (1.5,-.3);
\draw[<->,gray] (2,.3) -- (2,-.3);
\draw[<->,gray] (3,.3) -- (3,-.3);
\draw[snake=brace] (-.6,.6) -- (1.1,.6) node[midway,above]{\scriptsize $\frac{N\!-\!1}{2}\!-\!p_2$};
\draw[snake=brace] (1.9,.6) -- (3.1,.6) node[midway,above]{\scriptsize $p_2$};
\filldraw[fill=black] (-.5,0) circle (.1) node[left=-1pt]{\scriptsize 0};
\filldraw[fill=black] (0,.4) circle (.1);
\filldraw[fill=black] (0,-.4) circle (.1);
\filldraw[fill=black] (1,.4) circle (.1);
\filldraw[fill=black] (1,-.4) circle (.1);
\filldraw[fill=white] (1.5,.4) circle (.1) node[above=1pt]{\scriptsize $\ell$};
\filldraw[fill=white] (1.5,-.4) circle (.1) node[below=1pt]{\scriptsize $\!N \! - \! \ell$};
\filldraw[fill=black] (2,.4) circle (.1);
\filldraw[fill=black] (2,-.4) circle (.1);
\filldraw[fill=black] (3,.4) circle (.1);
\filldraw[fill=black] (3,-.4) circle (.1);
\end{tikzpicture} 
& (0,1) & \begin{array}{c} t=N \text{ odd} \\ 0< \ell = r \le \tfrac{N+1}{2} \end{array} & \{\ell \} & \emptyset
\\[-.15em]
\text{A.3b} & \bigl( {\rm A}^{(1)}_{n} \bigr)^{\psi}_{0;\frac{N\!-\!1}{2}} &  
\begin{tikzpicture}[baseline=-0.25em,line width=0.7pt,scale=0.8]
\draw[thick] (1.5,.4) -- (1,0) -- (1.5,-.4);
\draw[thick,dashed] (1.5,.4) -- (2.5,.4);
\draw[thick,domain=270:450] plot({2.5+.4*cos(\x)},{.4*sin(\x)});
\draw[thick,dashed] (1.5,-.4) -- (2.5,-.4);
\draw[<->,gray] (1.5,.3) -- (1.5,-.3);
\draw[<->,gray] (2.5,.3) -- (2.5,-.3);
\filldraw[fill=white] (1,0) circle (.1) node[left=1pt]{\scriptsize $0$};
\filldraw[fill=black] (1.5,.4) circle (.1);
\filldraw[fill=black] (1.5,-.4) circle (.1);
\filldraw[fill=black] (2.5,.4) circle (.1); 
\filldraw[fill=black] (2.5,-.4) circle (.1); 
\end{tikzpicture} & (0,1) & \begin{array}{c} t=N \text{ odd} \\ \ell = r = 0 \end{array} & \emptyset & \emptyset
\\[.75em] 
\hline
\text{A.3c} & \bigl( {\rm A}^{(1)}_{n} \bigr)^{\psi'}_{p_1,p_2} & 
\hspace{6pt} 
\begin{tikzpicture}[baseline=-0.25em,line width=0.7pt,scale=0.8]
\draw[thick,domain=90:270] plot({.4*cos(\x)},{.4*sin(\x)});
\draw[thick,dashed] (0,.4) -- (1,.4);
\draw[thick] (1,.4) -- (1.5,.4);
\draw[thick,dashed] (1.5,.4) -- (2.5,.4);
\draw[thick] (2.5,.4) -- (3,.4);
\draw[thick,dashed] (3,.4) -- (4,.4);
\draw[thick,domain=270:450] plot({4+.4*cos(\x)},{.4*sin(\x)});
\draw[thick,dashed] (0,-.4) -- (1,-.4);
\draw[thick] (1,-.4) -- (1.5,-.4);
\draw[thick,dashed] (1.5,-.4) -- (2.5,-.4);
\draw[thick] (2.5,-.4) -- (3,-.4);
\draw[thick,dashed] (3,-.4) -- (4,-.4);
\draw[<->,gray] (0,.3) -- (0,-.3);
\draw[<->,gray] (1,.3) -- (1,-.3);
\draw[<->,gray] (1.5,.3) -- (1.5,-.3);
\draw[<->,gray] (2.5,.3) -- (2.5,-.3);
\draw[<->,gray] (3,.3) -- (3,-.3);
\draw[<->,gray] (4,.3) -- (4,-.3);
\draw[snake=brace] (-.1,.6) -- (1.1,.6) node[midway,above]{\scriptsize $p_1$};
\draw[snake=brace] (2.9,.6) -- (4.1,.6) node[midway,above]{\scriptsize $p_2$};
\filldraw[fill=black] (0,.4) circle (.1) node[left=2pt]{\scriptsize $0$};
\filldraw[fill=black] (0,-.4) circle (.1) node[left=2pt]{\scriptsize $n$};
\filldraw[fill=black] (1,.4) circle (.1);
\filldraw[fill=black] (1,-.4) circle (.1);
\filldraw[fill=white] (1.5,.4) circle (.1) node[above=1pt]{\scriptsize $\ell$};
\filldraw[fill=white] (1.5,-.4) circle (.1) node[below=1pt]{\scriptsize $\! n\!-\!\ell$};
\filldraw[fill=white] (2.5,.4) circle (.1) node[above=1pt]{\scriptsize $r$};
\filldraw[fill=white] (2.5,-.4) circle (.1) node[below=1pt]{\scriptsize $n\!-\!r\!$};
\filldraw[fill=black] (3,.4) circle (.1);
\filldraw[fill=black] (3,-.4) circle (.1);
\filldraw[fill=black] (4,.4) circle (.1);
\filldraw[fill=black] (4,-.4) circle (.1);
\end{tikzpicture} 
& (1,1) & \begin{array}{c} t\!+\!1=N \text{ even} \\ 0 \le \ell < r < \tfrac{N}{2}\!-\!\ell \end{array} & \{\ell,r\} & \emptyset 
\\ 
\text{A.3c} & \bigl( {\rm A}^{(1)}_{n} \bigr)^{\psi'}_{\frac{N\!-\!2}{2}-p_2,p_2} & 
\hspace{6pt} 
\begin{tikzpicture}[baseline=-0.25em,line width=0.7pt,scale=0.8]
\draw[thick,domain=90:270] plot({.4*cos(\x)},{.4*sin(\x)});
\draw[thick,dashed] (0,.4) -- (1,.4);
\draw[thick] (1,.4) -- (2,.4);
\draw[thick,dashed] (2,.4) -- (3,.4);
\draw[thick,domain=270:450] plot({3+.4*cos(\x)},{.4*sin(\x)});
\draw[thick,dashed] (0,-.4) -- (1,-.4);
\draw[thick] (1,-.4) -- (2,-.4);
\draw[thick,dashed] (2,-.4) -- (3,-.4);
\draw[<->,gray] (0,.3) -- (0,-.3);
\draw[<->,gray] (1,.3) -- (1,-.3);
\draw[<->,gray] (1.5,.3) -- (1.5,-.3);
\draw[<->,gray] (2,.3) -- (2,-.3);
\draw[<->,gray] (3,.3) -- (3,-.3);
\draw[snake=brace] (-.1,.6) -- (1.1,.6) node[midway,above]{\scriptsize $\frac{N\!-\!2}{2}-\!p_2$};
\draw[snake=brace] (1.9,.6) -- (3.1,.6) node[midway,above]{\scriptsize $p_2$};
\filldraw[fill=black] (0,.4) circle (.1) node[left=2pt]{\scriptsize $0$};
\filldraw[fill=black] (0,-.4) circle (.1) node[left=2pt]{\scriptsize $n$};
\filldraw[fill=black] (1,.4) circle (.1);
\filldraw[fill=black] (1,-.4) circle (.1);
\filldraw[fill=white] (1.5,.4) circle (.1) node[above=1pt]{\scriptsize $\ell$};
\filldraw[fill=white] (1.5,-.4) circle (.1) node[below=1pt]{\scriptsize $\! n\!-\!\ell$};
\filldraw[fill=black] (2,.4) circle (.1);
\filldraw[fill=black] (2,-.4) circle (.1);
\filldraw[fill=black] (3,.4) circle (.1);
\filldraw[fill=black] (3,-.4) circle (.1);
\end{tikzpicture} 
& (1,1) & \begin{array}{c} t\!+\!1=N \text{ even} \\ 0 \le \ell = r < \tfrac{N}{4} \end{array} & \{\ell\} & \emptyset 
\\
\hline
\end{array}
\]
\end{table}
}

The QP algebra is generated by $x_i$, $y_i$, $k_i$ with $i\in X$ and $k_j k_{\tau(j)}^{-1}$ with $j \in (I \backslash X) \backslash I_{\rm nsf}$ and elements $b_j$ whose reduced expressions are  
\begin{alignat*}{99}
b_\ell &= \casesl{l}{
y_\ell - c_\ell \, x_\ell k_\ell^{-1} - s_\ell \,k_\ell^{-1} \\
y_\ell - c_\ell \, T_{w_{X_1}} (x_{t-\ell})\, k_\ell^{-1}  \\
y_\ell - c_\ell \, T_{w_{X_2}} (x_\ell) \, k_\ell^{-1} \\
y_\ell - c_\ell \, T_{w_{X_1}} T_{w_{X_2}} (x_{t-\ell})\, k_\ell^{-1}  \\
} & \qu & \casesm{l}{
\text{if } \ell<r,\; (\ell,t)=(0,N), \\
\text{if } \ell<r,\; (\ell,t) \ne (0,N), \\
\text{if } \ell=r,\; (\ell,t) = (0,N), \\
\text{if } \ell=r,\; (\ell,t) \ne (0,N), \\
}
\\
b_{t-\ell} &= \casesl{l}{
y_{t-\ell} - c_{t-\ell} \, T_{w_{X_1}} (x_\ell)\, k_{t-\ell}^{-1} \\
y_{t-\ell} - c_{t-\ell} \, T_{w_{X_1}} T_{w_{X_2}} (x_\ell)\, k_{t-\ell}^{-1} \\
} && \casesm{l}{
\text{if } \ell<r ,\; (\ell,t) \ne (0,N), \\
\text{if } \ell=r ,\; (\ell,t) \ne (0,N), \\
}
\\
b_r &= \casesl{l}{
y_r - c_r \, x_r k_r^{-1} - s_r \,k_r^{-1} \\
y_r - c_r \, T_{w_{X_2}} (x_{t-r})\, k_r^{-1} \\
} && \casesm{l}{
\text{if } \ell<r=N/2, \\
\text{if } \ell<r<N/2, \\
}\\
b_{t-r} &= y_{t-r} - c_{t-r}\,T_{w_{X_2}} (x_r)\,k_{t-r}^{-1} && \text{if } \ell<r<N/2, \\
b_{j} &= y_j - c_j\, x_{t-j}\, k_{j}^{-1} && \text{if } \ell<j<r \text{ or } t-r<j<t-\ell, 
\intertext{where $c_j = c_{t-j}$ for all $\ell<j<r$. Next, we introduce the effective dressing parameters $\om_{\ell+1},\ldots,\om_r$, the scaling parameter $\eta$ and additional free parameters $\la$ and $\mu$, all in $\K^\times$, and we set}
c_{\ell} &= 
\casesl{l}{ 
-q^{-1} \eta^{-2} \om_1^{2} \\
 (-q)^{N-t+2\ell} \, \eta^{-1}\mu\, \om_{\ell+1} \\
(-q)^N \eta^{-2}\\
 -(-q)^{N-t+2\ell} \eta^{-1} \la\, \mu  
} 
&& 
\casesm{l}{ 
\text{if } \ell<r ,\; (\ell,t)=(0,N), \\
\text{if } \ell<r ,\; (\ell,t) \ne (0,N), \\
\text{if } \ell=r ,\; (\ell,t)=(0,N), \\
\text{if } \ell=r ,\; (\ell,t) \ne (0,N),
} 
\\
c_{t-\ell} &= 
\casesl{l}{ 
- (\eta\mu)^{-1} \om_{\ell+1} \\ 
 -(-q)^{t-2\ell} (\eta\la\mu)^{-1} 
}
 &&
\casesm{l}{ 
\text{if } \ell<r,\; (\ell,t)\ne(0,N), \\
\text{if } \ell=r,\; (\ell,t)\ne(0,N), 
}
\\
c_{r} &= \casesl{l}{ 
\la\, \om_r^{-1} \\
-q^{-1} \om_{r}^{-2} 
} 
\hspace{-2pt} && 
\casesm{l}{ 
\text{if } \ell<r<N/2, \\ 
\text{if } \ell<r=N/2,  
}  \\
c_{t-r} &= -(-q)^{t-2r} \la^{-1} \om_r^{-1} \!\!\! && \text{if } \ell < r < N/2, \\
c_j &= -\om_j^{-1}\om_{j+1} && \text{if } \ell < j < r, \\
s_0 &= \frac{\mu - \mu^{-1}}{q - q^{-1}} \,\eta^{-1} \om_1 && \text{if } \ell<r ,\; (\ell,t)=(0,N), \\
s_{N/2} &= \frac{\la - \la^{-1}}{q - q^{-1}} \, \om_{N/2}^{-1} && \text{if } \ell<r ,\; (\ell,r)=(0,N/2). 
\end{alignat*}
Then solving the boundary intertwining equation \eqref{intw-untw} for all generators of the QP algebra subject to the assignments above we obtain the following solution of the reflection equation \eqref{RE}, which is  independent of $q$ for all $\ell$, $r$ and $t$.

\begin{result} \label{Res:A3}
The bare K-matrix of type A.3 is of the form \eqref{K(u):X} with $\la,\mu\in\K^\times$ and
\eq{ \label{A3:K}
\begin{aligned}
 M_1(u) &= \sum_{1 \le i \le \ell} \la \, \mu \, u \, E_{ii} + \sum_{t-\ell < i \le N}  E_{ii}, \\
 M_2(u) &= \sum_{ \ell < i \le  r} \big( \la E_{ii} + \la^{-1} E_{t+1-i,t+1-i} + E_{i,t+1-i} + E_{t+1-i,i} \big).
\end{aligned}
}
\end{result}

The K-matrices corresponding to the Satake diagrams with $X = \emptyset$ and at least one element of $I$ fixed by $\tau$ ({\it i.e.}~$o_1=0$) were found in \cite{AbRi}. Up to rotation and dressing, they correspond to the $K(u)$ defined above when $(\ell,r,t)=(0,\lfloor \tfrac{N}{2} \rfloor,N)$; in this case it is a linear combination of a diagonal matrix and the antidiagonal matrix $J$. 
In \cite{MLS}, the so-called type~I solutions, for which only one pair of non-diagonal entries is nonzero, correspond to the rotated and dressed versions of $K(u)$ when $r-\ell=1$. 
The type II solutions in {\it ibid.\@} are the rotated and dressed analogues of $K(u)$ with $r-\ell>1$. 
Around the same time, in \cite[Sec.~5]{KuMv} the dressed versions of this class of K-matrices were derived by means of a baxterization procedure (also see Remark~\ref{R:Kulish} in Section~\ref{sec:Hecke}). Finally we note that for restrictable Satake diagrams, $K(u)$ coincides with the one in \cite[eq.\@ (31)]{CGM} by setting $\la=\xi$ and $\mu=1$. 

\smallskip

The K-matrix defined above satisfies the following properties. 

\smallskip

\begin{description}[itemsep=1ex]

\item[Eigendecomposition] \hfill  $V= \Id + \la \sum_{\ell < i \le r} \big( E_{i,t+1-i}-E_{t+1-i,i} \big),$ \hfill   \hphantom{\it Eigendecomposition}
\eqn{ 
D(u) &= \displaystyle u^2h_1(u) \!\sum_{1 \le i \le \ell}\!  E_{ii} + \!\sum_{\ell <i \le t-r}\! E_{ii} + h_1(u)\,h_2(u) \!\sum_{t-r < i \le  t-\ell}\! E_{ii} + h_1(u) \!\sum_{t-\ell<i \le N}\! E_{ii}.
}

\item[Affinization] For $(\ell,t)=(0,N)$ the Satake diagram is restrictable and the affinization identity is
\[
\qq
K(u) = \frac{\la u^{-1} K_0 - \la^{-1} u K_0^{-1} + \mu_- \Id }{\la u^{-1} - \la^{-1} u + \mu_-} = \frac{ \la (u^{-1} + \mu_-\la_-^{-1} ) K_0 - \la^{-1} (u + \mu_-\la_-^{-1})  K_0^{-1}}{\la (u^{-1} + \mu_-\la_-^{-1}) - \la^{-1} (u + \mu_-\la_-^{-1})}.
\]
Here we have introduced the notation $\nu_- := \nu - \nu^{-1}$ for $\nu \in \K^\times$. 
Also, $K_0 = \lim_{u \to 0} K(u)$ is the constant K-matrix, independent of $\mu$, having eigenvalues $1$ and $-\la^{-2}$ with respective multiplicities $\bar r$ and $r$. It coincides with $J^\si$ in \cite[eq.~(2.14)]{NDS} upon identifying $\la=q^{-\si}$ and conjugating by $\sum_{i=1}^N E_{i,N+1-i}$,
and with $J^\xi$ in \cite[eq.~(30)]{CGM} upon identifying $\la = \xi$. 

\item[Bar-symmetry] \hfill $K(u)^{-1} = Z^\rho(u)^{o_1} J K(u)|_{\la \to \la^{-1},\mu \to \mu^{-1}} J Z^\rho(u^{-1})^{-o_1}.$ \hfill \hphantom{\it Bar-symmetry}

\item[Half-period] \hfill $K(-u) = K(u)|_{\mu \to - \mu}.$ \hfill \hphantom{\it Half-period}

\item[Rotations] For $N$ odd there is no nontrivial rotational symmetry. For $N$ even the Satake diagrams with $p_1=p_2$ are invariant under rotation by $\pi = \rho^\ell$.
The corresponding K-matrices satisfy
\[
\qq K^{\pi}(u) = u^2 h_1(u)\, K(u)|_{\la \leftrightarrow -\mu^{-1}} 
.
\]
\item[Reductions] For $\ell<r$ and $\mu = \pm \la^{\pm 1}$ with $\la$ generic it has $d_{\rm eff}=2$ and is singly regular, $K(\mp 1)|_{\mu = \pm \la} \ne \Id$ and $K(\pm 1)|_{\mu = \pm \la^{-1}} \ne \Id$. For $\ell<r$ and $\la=-\mu=\pm1$ it has $d_{\rm eff}=1$ and is a non-regular \gim:
\[
\qq K(u)|_{\la=- \mu = \pm 1} = \sum_{1 \le i \le \ell} (u E_{ii} + u^{-1} E_{t+1-i,t+1-i}) \mp \sum_{\ell < i \le r} (E_{i,t+1-i} + E_{t+1-i,i}) + \sum_{r < i \le t-r \atop {\text{or } i>t}} E_{ii}.
\]

\item[Diagonal cases]
For $\ell=r$ it is parametrized by $\xi := \la \mu$ only,
\eq{ \label{A3:K:diag}
\qq K_{\ell=r}(u) = u^2\, \frac{\xi - u^{-1}}{\xi - u}  \sum_{1 \le i \le \ell } E_{ii} + \sum_{\ell < i \le t-\ell} E_{ii} + \frac{\xi-u^{-1}}{\xi-u} \sum_{t- \ell <i \le N}  E_{ii}.
}
Additionally setting $\xi=\pm 1$ yields
\[
\qq K_{\ell=r}(u)|_{\xi = \pm 1} = \mp u \sum_{1 \le i \le \ell } E_{ii} + \sum_{\ell < i \le t-\ell} E_{ii} \mp u^{-1} \sum_{t- \ell <i \le N}  E_{ii}.
\]
For $\ell=r=0$ and $t=N$ it specializes to $K_{\ell=r=0,\,t=N}(u) = \Id$; it is associated to the unique restrictable Satake diagram with $|I^*|=1$, namely $(X,\tau) = (I \backslash \{0\},\psi)$. 

\noindent For $\ell<r$ we have the following diagonal limits:
\begin{align*}
\qq \lim_{\la \to 0} K(u) &= \sum_{1 \le i \le N} u^{-2\delta_{i>t-r}} E_{ii}, \qq \lim_{\la \to \infty} K(u) = \sum_{1 \le i \le N} u^{2\delta_{i\le r}} E_{ii},  \\
\qq \lim_{\mu \to 0} K(u) &=  \sum_{1 \le i \le N} u^{-2\delta_{i>t-\ell}} E_{ii}, \qq \lim_{\mu \to \infty} K(u) = \Id. 
\end{align*}
Moreover, by first imposing a relation between $\la$ and $\mu$ and then taking such a limit we may recover (a rotated version of) a K-matrices associated to a particular Satake diagram with $|I^*| = 1$. 
More precisely, given $0 \le r \le \ell \le (N-o_1-o_2)/2$, choose $o_1' \in \{0,1\}$ of the same parity as $r-\ell$ and write $\ell'= (r-\ell+o_1')/2$. Consider the Satake diagram of type A.3 with the same $N$, but with $\tau(0) = o_1'$ and $I^* = \{ \ell' \}$; this is a representative Satake diagram and denote the corresponding diagonal K-matrix by $K'(u)$. It is given by \eqref{A3:K:diag} with $\ell$ replaced by $\ell'$ and $t$ replaced by $N-o_1'$, with a free parameter $\xi \in \K^\times$.
We have: 
\eqn{
\lim_{\la \to 0} K(u)|_{\mu \to -\xi^{-1} \la} 
&= \sum_{1\le i \le t-r} E_{ii} + \frac{\xi-u^{-1}}{\xi-u} \sum_{t-r<i\le t-\ell} E_{ii} + u^{-2} \sum_{t-\ell < i \le N} E_{ii} \\
&= Z^\rho(u^{-1})^{o_1 + \ell + \ell'} K'(u)\, Z^\rho(u)^{-o_1-\ell-\ell'}.
}
\end{description}

\begin{rmk} \label{R:A3:rotate} 
Consider a non-representative Satake diagram $(X,\tau)$ of type A.3 parametrized by $(N,\ell',r',t')$ with $0\le \ell' \le r' \le t'/2$ and $t'<N-1$, {\it i.e.}~one of the following:
\[ 
\begin{tikzpicture}[baseline=-0.25em,line width=0.7pt,scale=0.8]
\draw[thick] (0,.4) -- (-.5,0) -- (0,-.4);
\draw[thick,dashed] (0,.4) -- (2,.4);
\draw[thick] (2,.4) -- (2.5,.4);
\draw[thick,dashed] (2.5,.4) -- (3.5,.4);
\draw[thick] (3.5,.4) -- (4,.4);
\draw[thick,dashed] (4,.4) -- (5,.4);
\draw[thick] (5,.4) -- (5.5,0) -- (5,-.4);
\draw[thick,dashed] (0,-.4) -- (2,-.4);
\draw[thick] (2,-.4) -- (2.5,-.4);
\draw[thick,dashed] (2.5,-.4) -- (3.5,-.4);
\draw[thick] (3.5,-.4) -- (4,-.4);
\draw[thick,dashed] (4,-.4) -- (5,-.4);
\draw[<->,gray] (0,.3) -- (0,-.3);
\draw[<->,gray] (1,.3) -- (1,-.3);
\draw[<->,gray] (2,.3) -- (2,-.3);
\draw[<->,gray] (2.5,.3) -- (2.5,-.3);
\draw[<->,gray] (3.5,.3) -- (3.5,-.3);
\draw[<->,gray] (4,.3) -- (4,-.3);
\draw[<->,gray] (5,.3) -- (5,-.3);
\filldraw[fill=black] (-.5,0) circle (.1);
\filldraw[fill=black] (0,.4) circle (.1);
\filldraw[fill=black] (0,-.4) circle (.1);
\filldraw[fill=black] (1,.4) circle (.1) node[above=1pt]{\scriptsize $0$};
\filldraw[fill=black] (1,-.4) circle (.1) node[below=1pt]{\scriptsize $t'$};
\filldraw[fill=black] (2,.4) circle (.1);
\filldraw[fill=black] (2,-.4) circle (.1);
\filldraw[fill=white] (2.5,.4) circle (.1) node[above=1pt]{\scriptsize $\ell'$};
\filldraw[fill=white] (2.5,-.4) circle (.1) node[below=1pt]{\scriptsize $\! t' \!-\!\ell'$};
\filldraw[fill=white] (3.5,.4) circle (.1) node[above=1pt]{\scriptsize $r'$};
\filldraw[fill=white] (3.5,-.4) circle (.1) node[below=1pt]{\scriptsize $t'\!-\!r'\!\!$};
\filldraw[fill=black] (4,.4) circle (.1);
\filldraw[fill=black] (4,-.4) circle (.1);
\filldraw[fill=black] (5,.4) circle (.1);
\filldraw[fill=black] (5,-.4) circle (.1);
\filldraw[fill=black] (5.5,0) circle (.1);
\end{tikzpicture} 
\qq
\begin{tikzpicture}[baseline=-0.25em,line width=0.7pt,scale=0.8]
\draw[thick] (0,.4) -- (-.5,0) -- (0,-.4);
\draw[thick,dashed] (0,.4) -- (2,.4);
\draw[thick] (2,.4) -- (2.5,.4);
\draw[thick,dashed] (2.5,.4) -- (3.5,.4);
\draw[thick] (3.5,.4) -- (4,.4);
\draw[thick,dashed] (4,.4) -- (5,.4);
\draw[thick,domain=270:450] plot({5+.4*cos(\x)},{.4*sin(\x)});
\draw[thick,dashed] (0,-.4) -- (2,-.4);
\draw[thick] (2,-.4) -- (2.5,-.4);
\draw[thick,dashed] (2.5,-.4) -- (3.5,-.4);
\draw[thick] (3.5,-.4) -- (4,-.4);
\draw[thick,dashed] (4,-.4) -- (5,-.4);
\draw[<->,gray] (0,.3) -- (0,-.3);
\draw[<->,gray] (1,.3) -- (1,-.3);
\draw[<->,gray] (2,.3) -- (2,-.3);
\draw[<->,gray] (2.5,.3) -- (2.5,-.3);
\draw[<->,gray] (3.5,.3) -- (3.5,-.3);
\draw[<->,gray] (4,.3) -- (4,-.3);
\draw[<->,gray] (5,.3) -- (5,-.3);
\filldraw[fill=black] (-.5,0) circle (.1);
\filldraw[fill=black] (0,.4) circle (.1);
\filldraw[fill=black] (0,-.4) circle (.1);
\filldraw[fill=black] (1,.4) circle (.1) node[above=1pt]{\scriptsize $0$};
\filldraw[fill=black] (1,-.4) circle (.1) node[below=1pt]{\scriptsize $t'$};
\filldraw[fill=black] (2,.4) circle (.1);
\filldraw[fill=black] (2,-.4) circle (.1);
\filldraw[fill=white] (2.5,.4) circle (.1) node[above=1pt]{\scriptsize $\ell'$};
\filldraw[fill=white] (2.5,-.4) circle (.1) node[below=1pt]{\scriptsize $\! t' \!-\!\ell'$};
\filldraw[fill=white] (3.5,.4) circle (.1) node[above=1pt]{\scriptsize $r'$};
\filldraw[fill=white] (3.5,-.4) circle (.1) node[below=1pt]{\scriptsize $t'\!-\!r'\!\!$};
\filldraw[fill=black] (4,.4) circle (.1);
\filldraw[fill=black] (4,-.4) circle (.1);
\filldraw[fill=black] (5,.4) circle (.1);
\filldraw[fill=black] (5,-.4) circle (.1);
\end{tikzpicture} 
\qq
\begin{tikzpicture}[baseline=-0.25em,line width=0.7pt,scale=0.8]
\draw[thick,domain=90:270] plot({.4*cos(\x)},{.4*sin(\x)});
\draw[thick,dashed] (0,.4) -- (2,.4);
\draw[thick] (2,.4) -- (2.5,.4);
\draw[thick,dashed] (2.5,.4) -- (3.5,.4);
\draw[thick] (3.5,.4) -- (4,.4);
\draw[thick,dashed] (4,.4) -- (5,.4);
\draw[thick,domain=270:450] plot({5+.4*cos(\x)},{.4*sin(\x)});
\draw[thick,dashed] (0,-.4) -- (2,-.4);
\draw[thick] (2,-.4) -- (2.5,-.4);
\draw[thick,dashed] (2.5,-.4) -- (3.5,-.4);
\draw[thick] (3.5,-.4) -- (4,-.4);
\draw[thick,dashed] (4,-.4) -- (5,-.4);
\draw[<->,gray] (0,.3) -- (0,-.3);
\draw[<->,gray] (1,.3) -- (1,-.3);
\draw[<->,gray] (2,.3) -- (2,-.3);
\draw[<->,gray] (2.5,.3) -- (2.5,-.3);
\draw[<->,gray] (3.5,.3) -- (3.5,-.3);
\draw[<->,gray] (4,.3) -- (4,-.3);
\draw[<->,gray] (5,.3) -- (5,-.3);
\filldraw[fill=black] (0,.4) circle (.1);
\filldraw[fill=black] (0,-.4) circle (.1);
\filldraw[fill=black] (1,.4) circle (.1) node[above=1pt]{\scriptsize $0$};
\filldraw[fill=black] (1,-.4) circle (.1) node[below=1pt]{\scriptsize $t'$};
\filldraw[fill=black] (2,.4) circle (.1);
\filldraw[fill=black] (2,-.4) circle (.1);
\filldraw[fill=white] (2.5,.4) circle (.1) node[above=1pt]{\scriptsize $\ell'$};
\filldraw[fill=white] (2.5,-.4) circle (.1) node[below=1pt]{\scriptsize $\! t' \!-\!\ell'$};
\filldraw[fill=white] (3.5,.4) circle (.1) node[above=1pt]{\scriptsize $r'$};
\filldraw[fill=white] (3.5,-.4) circle (.1) node[below=1pt]{\scriptsize $t'\!-\!r'\!\!$};
\filldraw[fill=black] (4,.4) circle (.1);
\filldraw[fill=black] (4,-.4) circle (.1);
\filldraw[fill=black] (5,.4) circle (.1);
\filldraw[fill=black] (5,-.4) circle (.1);
\end{tikzpicture} 
\]
These diagrams can be obtained from those in Table \ref{tab:A3:sat} by the action of $\rho=(01\ldots n)\in \Sigma_A$. Let $m= \lfloor (N\!-\!t')/2 \rfloor$. Then $(N,\ell'\!+\!m,r'\!+\!m,t'\!+\!2m)$ parametrizes a representative Satake diagram.  
It turns out that the formula appearing in Result \ref{Res:A3} defining bare K-matrix also applies to the non-representative Satake diagram given by $(N,\ell',r',t')$, so that up to a factor the K-matrix $K_{\ell=\ell',r=r',t=t'}(u)$ defined in Result \ref{Res:A3} equals the rotated K-matrix $K_{\ell=\ell'\!+\!m,r=r'\!+\!m,t=t'\!+\!2m}^{\rho^m}(u)$. Hence, if $0\le \ell \le r \le t/2 \le N/2$, the K-matrix given by Result \ref{Res:A3} is a solution of the reflection equation. In fact, we have:
\begin{flalign}
&& \qu K^{\rho^{m}}_{\ell=\ell'+m,\,r=r'+m,\,t=t'+2m}(u) = K_{\ell=\ell',\,r=r',\,t=t'}(u). && \rmkend
\end{flalign}
\end{rmk}


\subsection{Untwisted K-matrices of types C.1 and BD.2} \label{sec:K:C1BD2}

The C.1 family consists of generalized Satake diagrams $(X,\id)\in \GSat(A)$ with $A$ of type C$^{(1)}_n$. 
Moreover $(X,\tau)\in\Sat(A)$ if and only if $|I^*|\in\{1,n+1\}$. 
This family has both quasistandard and non-quasistandard generalized Satake diagrams. 
The quasistandard diagrams have $I_{\rm diff}=\emptyset$ and $I_{\rm nsf}\subseteq\{\,0,n\}$. 
The non-quasistandard diagrams have $I_{\rm nsf}=\{i\}$ with $2\le i\le n-2$ and lead to K-matrices that are not generalized cross matrices. 
These cases will be studied in Section \ref{sec:nqs}

The BD.2 family consists of generalized Satake diagrams $(X,\tau)\in \GSat(A)$ with $\tau\in\lan\phi_1,\phi_2\ran$ and $A$ of type B$^{(1)}_n$ or D$^{(1)}_n$. 
In this case $(X,\tau)\in\Sat(A)$ if and only if $X=X_{\rm alt}$ for type D.2 or $|I^*|=1$. 
All diagrams in the BD.2 family are quasistandard.


\subsubsection{Family C.1 (quasistandard case)} \label{sec:C1} 

Generalized Satake diagrams in this family are parametrized by the tuple $(N,p_1,p_2)$ with $N \ge 4$. 
By rotating with $\pi \in \Sigma_A$, if necessary, we may assume that the affine node satisfies $0 \in X \Rightarrow 0 \in X_1$.
With this choice, the diagrams are restrictable precisely if $p_1 = 0$.

Set $\ell=p_1$ and $r=n-p_2$, so that $\ell$ and $r$ are bounded by $0 \le \ell \le r \le n-\ell$. It follows that $(X,\id)\in\Sat(A)$ if $\ell=r$ or if $\ell=0$ and $r=n$. The cases with $1 \le \ell \le n-3$ and $r=\ell+2$ are non-quasistandard and will be studied in Section \ref{sec:nqs}.
We have $I^* = I \backslash X = \{\ell,\ell+1,\ldots,r\}$ so that $|I^*| -1 = r - \ell$ and $X_1 =  \{0,1,\ldots,\ell-1\}$ and $X_2 =  \{ r+1,r+2,\ldots,n\}$, which, unless $\ell=1$ or $r=n-1$, are of respective type $C_{\ell}$ and $C_{n-r}$; otherwise they are of type~$A_1$. The representative diagrams and special $\tau$-orbits are listed in Table \ref{tab:BD1:gen}. In all cases $I_{\rm diff} = \emptyset$ and $I_{\rm ns} =\{ \ell+\delta_{\ell \ne 0},\ell+\delta_{\ell \ne 0}+1,\ldots, r-\delta_{r \ne n} \}$. 

{
\arraycolsep=2pt\def\arraystretch{1.2}
\begin{table}[h]
\caption{Family C.1: representative quasistandard generalised Satake diagrams.} \label{tab:BD1:gen}
\[
\begin{array}{ccccc}
\rm Type & \rm Name & \rm Diagram & \rm Restrictions & I_{\rm nsf} \\ 
\hline\hline 
\text{C.1} & \bigl( {\rm C}_n^{(1)} \bigr)^\text{id}_{p_1,p_2}  &
\begin{tikzpicture}[baseline=-0.35em,line width=0.7pt,scale=0.8]
\draw[double,<-] (-.1,0) -- (-.5,0);
\draw[thick,dashed] (0,0) -- (1,0);
\draw[thick] (1,0) -- (1.5,0);
\draw[thick,dashed] (1.5,0) -- (2.5,0);
\draw[thick] (2.5,0) -- (3,0);
\draw[thick,dashed] (3,0) -- (4,0);
\draw[double,<-] (4.1,0) --  (4.5,0);
\filldraw[fill=black] (-.5,0) circle (.1) node[left=1pt]{\scriptsize $0$};
\filldraw[fill=black] (0,0) circle (.1) node[above=1pt]{\scriptsize $1$};
\filldraw[fill=black] (1,0) circle (.1); 
\filldraw[fill=white] (1.5,0) circle (.1) node[above=1pt]{\scriptsize $\ell$};
\filldraw[fill=white] (2.5,0) circle (.1) node[above=1pt]{\scriptsize $r$};
\filldraw[fill=black] (3,0) circle (.1);
\filldraw[fill=black] (4,0) circle (.1) node[above]{\scriptsize $n\!\!-\!\!1$};
\filldraw[fill=black] (4.5,0) circle (.1) node[right=1pt]{\scriptsize $n$};
\draw[snake=brace] (1.1,-.2) -- (-0.6,-.2) node[midway,below]{\scriptsize $p_1$};
\draw[snake=brace] (4.6,-.2) -- (2.9,-.2) node[midway,below]{\scriptsize $p_2$};
\draw[] (0,.7) -- (0,.7);
\draw[] (0,-.75) -- (0,-.75);
\end{tikzpicture} & 
\begin{array}{c} r \ne \ell+2 \\ 0 < \ell \le r \le n-\ell \end{array} &
\emptyset \\
\text{C.1} & \bigl( {\rm C}_n^{(1)} \bigr)^\text{id}_{0,p_2}  &
\begin{tikzpicture}[baseline=-0.35em,line width=0.7pt,scale=0.8]
\draw[double,<-] (-.1,0) -- (-.5,0);
\draw[thick,dashed] (0,0) -- (1,0);
\draw[thick] (1,0) -- (1.5,0);
\draw[thick,dashed] (1.5,0) -- (2.5,0);
\draw[double,<-] (2.6,0) --  (3,0);
\filldraw[fill=white] (-.5,0) circle (.1) node[left=1pt]{\scriptsize $0=\ell$};
\filldraw[fill=white] (0,0) circle (.1) node[above=1pt]{\scriptsize $1$};
\filldraw[fill=white] (1,0) circle (.1) node[above=1pt]{\scriptsize $r$};
\filldraw[fill=black] (1.5,0) circle (.1);
\filldraw[fill=black] (2.5,0) circle (.1) node[above]{\scriptsize $n\!\!-\!\!1$};
\filldraw[fill=black] (3,0) circle (.1) node[right=1pt]{\scriptsize $n$};
\draw[snake=brace] (3.1,-.2) -- (1.4,-.2) node[midway,below]{\scriptsize $p_2$};
\draw[] (0,.75) -- (0,.75);
\draw[] (0,-.75) -- (0,-.75);
\end{tikzpicture} & 
0 = \ell < r < n &
\{ 0 \} \\
\text{C.1} & \bigl( {\rm C}_n^{(1)} \bigr)^\text{id}_{0,n}  &
\begin{tikzpicture}[baseline=-0.35em,line width=0.7pt,scale=0.8]
\draw[double,<-] (-.1,0) -- (-.5,0);
\draw[thick,dashed] (0,0) -- (1,0);
\draw[double,<-] (1.1,0) --  (1.5,0);
\filldraw[fill=white] (-.5,0) circle (.1) node[left=1pt]{\scriptsize $0 = \ell = r$};
\filldraw[fill=black] (0,0) circle (.1) node[above=1pt]{\scriptsize $1$};
\filldraw[fill=black] (1,0) circle (.1) node[above]{\scriptsize $n\!\!-\!\!1$};
\filldraw[fill=black] (1.5,0) circle (.1) node[right=1pt]{\scriptsize $n$};
\draw[] (0,.75) -- (0,.75);
\draw[] (0,-.75) -- (0,-.75);
\end{tikzpicture} & 0 = \ell = r < n &
\emptyset \\
\text{C.1} & \bigl( {\rm C}_n^{(1)} \bigr)^\text{id}_{0,0}  &
\begin{tikzpicture}[baseline=-0.25em,line width=0.7pt,scale=0.8]
\draw[double,<-] (-.1,0) -- (-.5,0);
\draw[thick,dashed] (0,0) -- (1,0);
\draw[double,<-] (1.1,0) --  (1.5,0);
\filldraw[fill=white] (-.5,0) circle (.1) node[left=1pt]{\scriptsize $0 = \ell$};
\filldraw[fill=white] (0,0) circle (.1) node[above=1pt]{\scriptsize $1$};
\filldraw[fill=white] (1,0) circle (.1) node[above]{\scriptsize $n\!\!-\!\!1$};
\filldraw[fill=white] (1.5,0) circle (.1) node[right=1pt]{\scriptsize $r=n$};
\draw[] (0,.75) -- (0,.75);
\draw[] (0,-.7) -- (0,-.7);
\end{tikzpicture} & 0 = \ell < r = n &
\{ 0 , n\} \\\hline
\end{array}
\]
\end{table}
}

{\allowdisplaybreaks

The QP algebra is generated by $x_i$, $y_i$, $k_i$ with $i\in X$ and 
\begin{alignat*}{99}
b_\ell &= \casesl{l}{
y_0 - c_0 \, T_{w_{X_2}}(x_0) \, k_0^{-1} \\
y_0 - c_0 x_0 k_0^{-1} - s_0 k_0^{-1} \\
y_\ell - c_\ell \, T_{w_{X_1}}(x_{\ell})\, k_\ell^{-1} \\
y_\ell - c_\ell \, T_{w_{X_1}}T_{w_{X_2}}(x_\ell)\, k_\ell^{-1} \\
} & \qu & \casesm{l}{
\text{if } 0 = \ell = r, \\ 
\text{if } 0 = \ell < r, \\
\text{if } 0 < \ell < r, \\ 
\text{if } 0 < \ell = r, 
} \\
b_j &= y_j - c_j \, x_j\, k_j^{-1} && \text{if } \ell<j<r, \\
b_r &=  \casesl{l}{
y_r - c_r \, T_{w_{X_2}}(x_r)\, k_r^{-1} \\ 
y_n - c_n x_n k_n^{-1} - s_n k_n^{-1}
} && \casesm{l}{
\text{if } \ell<r<n, \\ 
\text{if } \ell<r=n,
} 
\intertext{with all $c_j$ independent. Next, in terms of the effective dressing parameters $\om_{\ell+1},\ldots, \om_r$, the scaling parameter $\eta$ and additional free parameters $\la$ and $\mu$ we set}
c_\ell &= \casesl{l}{
q^{2n} \eta^{-2} \\
q^{-2} \eta^{-2}\,\om_1^{4} \\
(-q)^{\ell} \eta^{-1}\,\om_{\ell+1}^{2} \\
-(-q)^{n+1} \eta^{-1} 
} && \casesm{l}{
\text{if } 0 = \ell = r, \\ 
\text{if } 0 = \ell < r , \\
\text{if } 0 < \ell < r , \\ 
\text{if } 0 < \ell = r , 
} \\
c_j &= q^{-1} \om_j^{-2} \om_{j+1}^2 && \text{if } \ell < j < r ,
\\
c_r &= 
\casesl{l}{(-q)^{n-r} \om_r^{-2} \\ q^{-2} \om_n^{-4} } 
&& 
\casesm{l}{\text{if } \ell < r < n, \\  \text{if } \ell < r = n ,} \\
s_j &= 
\casesl{l}{
\frac{\mu+\mu^{-1}}{q^{2} - q^{-2}} \, \eta^{-1}\om_1^2  \\[.25em]
\frac{\la+\la^{-1}}{q^{2} - q^{-2}}\, \om_n^{-2} 
} && \casesm{l}{
\text{if } j=0 = \ell < r, \\[.25em] 
\text{if } \ell<r=j=n. 
}
\end{alignat*}
Solving the boundary intertwining equation \eqref{intw-untw} for all generators of the QP algebra we obtain the following result. 

}

\begin{result} \label{Res:C1}
The bare K-matrix of type C.1 is of the form \eqref{K(u):X} with
\eq{
M_1(u) = \sum_{\bar \ell \le i \le n} ( \la \mu u E_{-i,-i} + E_{ii} ), \qu M_2(u) = \sum_{\bar r\le i < \bar\ell} ( -\la \, E_{-i,-i} + \la^{-1} E_{ii} + E_{-i,i} - E_{i,-i} )  \label{C1:K}
} 
where $\la=q^{\bar r}$, $\mu=q^{-\ell-1}$ except $\mu\in\K^\times$ if $\ell=0$ and $\la\in\K^\times$ if $r=n$.
\end{result}

This K-matrix has the following properties.

\smallskip

\begin{description}[itemsep=1ex]

\item[Eigendecomposition] \hfill $V=\Id+\la \displaystyle \sum_{\bar r\le i< \bar \ell} \big( E_{-i,i}+E_{i,-i} \big),$ \hfill   \hphantom{\it Eigendecomposition}
\[
D(u)=\sum_{1\le i< \bar r} \!\big( E_{-i,-i} + E_{ii} \big) + \sum_{\bar r\le i< \bar \ell} \!\Big( E_{-i,-i}+h_1(u)\,h_2(u)E_{ii} \Big) + h_1(u) \!\sum_{\bar \ell \le i \le n} \! \Big( u^2\, E_{-i,-i} + E_{ii} \Big) . 
\]

\item[Affinization] For $\ell=0$ the generalized Satake diagram is restrictable and the affinization identity~is
\[
K(u) = \frac{\la u^{-1} K_0 + \la^{-1} u K_0^{-1} - \mu_+ \Id}{ \la u^{-1} + \la^{-1} u - \mu_+ } = \frac{\la (u^{-1}-\mu_{+}\la_{+}^{-1}) K_0 + \la^{-1} (u-\mu_{+}\la_{+}^{-1}) K_0^{-1}}{ \la (u^{-1}-\mu_{+}\la_{+}^{-1}) + \la^{-1} (u-\mu_{+}\la_{+}^{-1}) }.
\]
Here $\nu_+:=\nu+\nu^{-1}$ for all $\nu \in \K^\times$.
The constant K-matrix $K_0$ has eigenvalues $1$ and $\la^{-2}$ with respective multiplicities $N-r$ and $r$. 
If additionally $r=n$, then $(X,\tau)\in \Sat(A)$, and, upon setting $\la^2=-1$ and multiplying by $C$ and dressing, $K_0$ corresponds to the CI solution of the constant twisted RE reported in \cite[Sec.~3]{NoSu}.

\item[Bar-symmetry] \hfill $K(u)^{-1}  = J K(u)|_{\la \to \la^{-1},\, \mu \to \mu^{-1}} J$. \hfill \hphantom{\it Bar-symmetry} 

\item[Half-period] \hfill $K(-u) = K(u)|_{\mu \to -\mu}$ if $\ell=0$. \hfill \hphantom{\it Half-period} 

\item[Rotations]

\; \hfill $\begin{aligned}[t]
\; K^\pi(u) &= -u\, K(u) && \text{ if } \ell+r=n\text{ and } \ell >0,
\\
K^\pi(u) &= h_2(u)\, K(u)|_{\la \leftrightarrow \mu} && \text{ if }  \ell+r=n\text{ and } \ell = 0,
\end{aligned}$ \hfill \hphantom{\it Rotations} \\[.25em]
where $\la \leftrightarrow \mu$ corresponds to $(c_0,s_0) \leftrightarrow (c_n,s_n)$.

\item[Reductions] 

For $\ell+r=n$ and $\ell<r$ it has $d_{\rm eff}=3$ and for $\ell=r=n/2$ it has $d_{\rm eff}=2$. In both cases it is singly regular, $K(1)\ne\Id$.
For $\ell=0$, $r>\ell$ and $\mu\in\{\pm\la,\pm\la^{-1}\}$ it has $d_{\rm eff}=1$ and is singly regular, \mbox{$K(\pm1)|_{\mu\in\{\pm\la,\pm\la^{-1}\}}\ne\Id$.} 
For $\ell=0$, $r=n$ and $\mu^2=\la^2=-1$ (so that $s_0=s_n=0$) it is a non-regular \gim: $K(u)|_{\la^2 = \mu^2 = -1} = \pm G(\bm\om)^{-1}\, J\, G(\bm\om)$ with $\bm\om=((-1)^{1/4},\ldots,(-1)^{1/4}) \in \K^n$.

\item[Diagonal cases] For $\ell=r$: 
\[
K(u) = \Id + \frac{u-u^{-1}}{k_1(u)}\,M_1(u).
\]
For $\ell=0$, $r>0$:
\[
\lim\limits_{\mu \to \infty} K(u) = \lim\limits_{\mu \to 0} K(u) = \Id .
\]
For $\ell=0$, $r=n$:
\begin{gather*} 
\lim_{\la \to \infty} K(u)  = u^2 \lim_{\la \to 0} K(u) = \sum_{1 \leq i \leq n} \big( u^2 E_{-i,-i} + E_{ii}  \big), \\
\lim_{\la \to \infty} K(u)|_{\mu = \la^{\pm1}} = {-} u \lim_{\la \to 0} K(u)|_{\mu = \la^{\pm1}} = \sum_{1 \le i \le n} \big( {-}u E_{-i,-i} + E_{ii} \big), \\
\lim_{\la \to \infty} K(u)|_{\mu = -\la^{\pm1}} = u \lim_{\la \to 0} K(u)|_{\mu = -\la^{\pm1}} = \sum_{1 \le i \le n} \big( u E_{-i,-i} + E_{ii} \big). 
\end{gather*}
\end{description}

\begin{rmk} \label{R:C1:rotate} 
Non-representative generalized Satake diagrams $(X,\tau)$ of type C.1 are parametrized by $(n,\ell',r')$ such that $0 \le \ell' \le r' \le n$ and, crucially, $\ell'+r'>n$. 
Then $(X,\tau)$ is the $\pi$-rotation of the representative diagram parametrized by $(n,n-r',n-\ell')$.
Hence, owing to Proposition \ref{prop:rotateintw}, $K^\pi_{\ell = n-r',r = n-\ell'}(u)$, with $K(u)$ the K-matrix defined in Result \ref{Res:C1}, is a bare K-matrix associated to $(X,\tau)$. 
In fact, the formula appearing in Result \ref{Res:C1} defining bare K-matrices turns out to apply to any non-representative diagram of type C.1.
If $\ell'+r'>n$, up to a factor the K-matrix $K(u)$ defined in Result \ref{Res:C1} equals the rotated K-matrix $K^\pi_{\ell = n-r',r = n-\ell'}(u)$:
\begin{flalign} 
&& K^\pi_{\ell = n-r'\!,\,r = n-\ell'}(u)= \Big( \tfrac{k_1(u)}{k_1(u^{-1})} K(u) \Big) \Big|_{\ell = \ell'\!,\, r= r'} . \label{Zpi:bijection} && \rmkend
\end{flalign}
\end{rmk}


\subsubsection{Family BD.2} \label{sec:BD2}

Generalized Satake diagrams in this family are parametrized by the quintuple $(N,p_1,p_2,o_1,o_2)$ with $N \ge 7$. According to the value of $o_1+o_2$ we distinguish three subfamilies: BD.2a (with $o_1+o_2=0$), BD.2b (with $o_1+o_2=1$) and D.2c (with $o_1+o_2=1$). Recall that $p_i$, the number of $\tau$-orbits in $X_i=X_{Y_i}$, is even unless $N$ is odd and $i=2$. Rotating with a suitable $\si \in \Sigma_A$, if necessary, we may assume that the affine node satisfies $0 \in X \Rightarrow 0 \in X_1$. With this choice, the diagrams are restrictable if $o_1 = p_1 = 0$.

Set $\ell=p_1+o_1$ and $r=n-p_2-o_2$, so that $\ell$ and $r$ are bounded by $0\le \ell \le r \le n$ and $r-\ell$ is even. 
The diagram $(X,\tau)$ is restrictable if $o_1=p_1=0$. 
Moreover, $(X,\tau)\in\Sat(A)$ if $\ell=r$ or if $(\ell,r)\in\{(0,n),(0,n-1),(1,n-1)\}$. 
We have $|I^*|-1 = (r-\ell)/2$ and $X = X_1 \cup X_{\rm alt} \cup X_2 $.
Here $X_1=\{0,1,\ldots,\ell-1\}$ is of type ${\rm D}_\ell$ if $\ell\ge2$, and empty otherwise.
Also, $X_2=\{r+1,r+2,\ldots,n\}$ is of type ${\rm D}_{n-r}$ if $r\le n-2$ and $N$ even, of type ${\rm B}_{n-r}$ if $r\le n-1$ and $N$ odd, and empty otherwise. 
Finally, $X_{\rm alt} = \{\ell+1,\ell+3,\ldots, r-1 \}$ is of type ${\rm A}_1^{\times \frac{r-\ell}2}$ if $\ell<r$ and empty otherwise. 
The representative diagrams and special $\tau$-orbits are listed in Table~\ref{tab:BD2:all}. In all cases $I_{\rm nsf} = I_{\rm ns}$.

{\arraycolsep=3pt \def\arraystretch{1.21}
\begin{table}[h]
\caption{Family BD.2: representative generalized Satake diagrams.} \label{tab:BD2:all}
\[
\begin{array}{cllccccc}
\text{Type}\!\!  & \qu\text{Name}\!\! & \hspace{14mm} \text{Diagram} & \hspace{-15pt} (o_1,o_2) \hspace{-8pt} & \rm Restrs. & I_{\rm diff} & \hspace{-1pt}  I_{\rm nsf} \hspace{-3pt}  \\ 
\hline\hline 
\text{B.2a} & \bigl( {\rm B}_n^{(1)} \bigr)^\text{id}_{p_1;{\rm alt};p_2} \!\! &
\begin{tikzpicture}[baseline=-0.35em,line width=0.7pt,scale=0.8]
\draw[thick] (-.6,.3) -- (0,0) -- (-.4,-.3);
\draw[thick,dashed] (0,0) -- (1,0);
\draw[thick] (1,0) -- (2.5,0);
\draw[thick,dashed] (2.5,0) -- (3.5,0);
\draw[thick] (3.5,0) -- (4.5,0);
\draw[thick,dashed] (4.5,0) -- (5.5,0);
\draw[double,->] (5.5,0) --  (5.9,0);
\filldraw[fill=black] (-.6,.3) circle (.1) node[left=1pt]{\scriptsize $0$};
\filldraw[fill=black] (-.4,-.3) circle (.1) node[left=1pt]{\scriptsize $1$};
\filldraw[fill=black] (0,0) circle (.1) node[above=1pt]{\scriptsize $2$};
\filldraw[fill=black] (1,0) circle (.1);
\filldraw[fill=white] (1.5,0) circle (.1) node[above=1pt]{\scriptsize $\ell$};
\filldraw[fill=black] (2,0) circle (.1);
\filldraw[fill=white] (2.5,0) circle (.1);
\filldraw[fill=black] (3.5,0) circle (.1);
\filldraw[fill=white] (4,0) circle (.1) node[above=1pt]{\scriptsize $r$};
\filldraw[fill=black] (4.5,0) circle (.1);
\filldraw[fill=black] (5.5,0) circle (.1) node[above]{\scriptsize $n\!\!-\!\!1$};
\filldraw[fill=black] (6,0) circle (.1) node[right=1pt]{\scriptsize $n$};
\draw[snake=brace] (-.7,.55) -- (1.1,.55) node[midway,above]{\scriptsize $p_1$};
\draw[snake=brace] (6.1,-.25) -- (4.4,-.25) node[midway,below]{\scriptsize $p_2$};
\end{tikzpicture} \!\! & (0,0) & \begin{array}{c} \ell \text{ even} \\ 2 \le \ell \le r \le n \end{array}  & \emptyset & \emptyset  \\
\text{B.2b} & \bigl( {\rm B}_n^{(1)} \bigr)^{\flL}_{p_1;{\rm alt};p_2} \!\! &
\hspace{2.3pt}\begin{tikzpicture}[baseline=-0.35em,line width=0.7pt,scale=0.8]
\draw[thick] (-.5,.3) -- (0,0) -- (-.5,-.3);
\draw[thick,dashed] (0,0) -- (1,0);
\draw[thick] (1,0) -- (2.5,0);
\draw[thick,dashed] (2.5,0) -- (3.5,0);
\draw[thick] (3.5,0) -- (4.5,0);
\draw[thick,dashed] (4.5,0) -- (5.5,0);
\draw[double,->] (5.5,0) --  (5.9,0);
\filldraw[fill=black] (-.5,.3) circle (.1) node[left=1pt]{\scriptsize $0$};
\filldraw[fill=black] (-.5,-.3) circle (.1) node[left=1pt]{\scriptsize $1$};
\filldraw[fill=black] (0,0) circle (.1) node[above=1pt]{\scriptsize $2$};
\filldraw[fill=black] (1,0) circle (.1);
\filldraw[fill=white] (1.5,0) circle (.1) node[above=1pt]{\scriptsize $\ell$};
\filldraw[fill=black] (2,0) circle (.1);
\filldraw[fill=white] (2.5,0) circle (.1);
\filldraw[fill=black] (3.5,0) circle (.1);
\filldraw[fill=white] (4,0) circle (.1) node[above=1pt]{\scriptsize $r$};
\filldraw[fill=black] (4.5,0) circle (.1);
\filldraw[fill=black] (5.5,0) circle (.1) node[above]{\scriptsize $n\!\!-\!\!1$};
\filldraw[fill=black] (6,0) circle (.1) node[right=1pt]{\scriptsize $n$};
\draw[snake=brace] (-.6,.55) --  (1.1,.55) node[midway,above]{\scriptsize $p_1$};
\draw[snake=brace] (6.1,-.25) -- (4.4,-.25) node[midway,below]{\scriptsize $p_2$};
\draw[<->,gray] (-.5,.2) -- (-.5,-.2);
\end{tikzpicture} \!\! & (1,0) & \begin{array}{c} \ell \text{ odd}  \\ 3 \le \ell \le r \le n  \end{array}  & \emptyset & \emptyset \\
\hline
\text{B.2a} & \bigl( {\rm B}_n^{(1)} \bigr)^\text{id}_{0;{\rm alt};p_2} \!\! &
\begin{tikzpicture}[baseline=-0.35em,line width=0.7pt,scale=0.8]
\draw[thick] (-.6,.3) -- (0,0) -- (-.4,-.3);
\draw[thick] (0,0) -- (1,0);
\draw[thick,dashed] (1,0) -- (2,0);
\draw[thick] (2,0) -- (3,0);
\draw[thick,dashed] (3,0) -- (4,0);
\draw[double,->] (4,0) --  (4.4,0);
\filldraw[fill=white] (-.6,.3) circle (.1) node[left=1pt]{\scriptsize $0$};
\filldraw[fill=black] (-.4,-.3) circle (.1) node[left=1pt]{\scriptsize $1$};
\filldraw[fill=white] (0,0) circle (.1) node[above=1pt]{\scriptsize $2$};
\filldraw[fill=black] (.5,0) circle (.1);
\filldraw[fill=white] (1,0) circle (.1);
\filldraw[fill=black] (2,0) circle (.1);
\filldraw[fill=white] (2.5,0) circle (.1) node[above=1pt]{\scriptsize $r$};
\filldraw[fill=black] (3,0) circle (.1);
\filldraw[fill=black] (4,0) circle (.1) node[above]{\scriptsize $n\!\!-\!\!1$};
\filldraw[fill=black] (4.5,0) circle (.1) node[right=1pt]{\scriptsize $n$};
\draw[snake=brace] (4.6,-.25) -- (2.9,-.25) node[midway,below]{\scriptsize $p_2$};
\end{tikzpicture} & (0,0) & \begin{array}{c} \ell =0 \\ 0\le r \le n  \end{array}  & \emptyset & \{0\} 
\\[1.25em]
\text{B.2b} & \bigl( {\rm B}_n^{(1)} \bigr)^{\flL}_{0;{\rm alt};p_2} \!\! &
\hspace{13.6pt}\begin{tikzpicture}[baseline=-0.35em,line width=0.7pt,scale=0.8]
\draw[thick] (-.5,.3) -- (0,0) -- (-.5,-.3);
\draw[thick] (0,0) -- (.5,0);
\draw[thick,dashed] (.5,0) -- (1.5,0);
\draw[thick] (1.5,0) -- (2.5,0);
\draw[thick,dashed] (2.5,0) -- (3.5,0);
\draw[double,->] (3.5,0) --  (3.9,0);
\filldraw[fill=white] (-.5,.3) circle (.1) node[left=1pt]{\scriptsize $0$};
\filldraw[fill=white] (-.5,-.3) circle (.1) node[left=1pt]{\scriptsize $1$};
\filldraw[fill=black] (0,0) circle (.1) node[above=1pt]{\scriptsize $2$};
\filldraw[fill=white] (.5,0) circle (.1);
\filldraw[fill=black] (1.5,0) circle (.1);
\filldraw[fill=white] (2,0) circle (.1) node[above=1pt]{\scriptsize $r$};
\filldraw[fill=black] (2.5,0) circle (.1);
\filldraw[fill=black] (3.5,0) circle (.1) node[above]{\scriptsize $n\!\!-\!\!1$};
\filldraw[fill=black] (4,0) circle (.1) node[right=1pt]{\scriptsize $n$};
\draw[snake=brace] (4.1,-.25) -- (2.4,-.25) node[midway,below]{\scriptsize $p_2$};
\draw[<->,gray] (-.5,.2) -- (-.5,-.2);
\end{tikzpicture}  & (1,0) & \begin{array}{c} \ell=1 \\ 1 \le r\le n   \end{array}   & \{0\} & \emptyset
\\ 
\hline\hline
\text{D.2a} & \bigl( {\rm D}^{(1)}_n \bigr)^\id_{p_1,{\rm alt},p_2} \!\! &
\begin{tikzpicture}[baseline=-0.35em,line width=0.7pt,scale=0.8]
\draw[thick] (-.6,.3) -- (0,0) -- (-.4,-.3);
\draw[thick,dashed] (0,0) -- (1,0);
\draw[thick] (1,0) -- (2.5,0);
\draw[thick,dashed] (2.5,0) -- (3.5,0);
\draw[thick] (3.5,0) -- (4.5,0);
\draw[thick,dashed] (4.5,0) -- (5.5,0);
\draw[thick] (5.9,.3) -- (5.5,0) -- (6.1,-.3);
\filldraw[fill=black] (-.6,.3) circle (.1) node[left=1pt]{\scriptsize $0$};
\filldraw[fill=black] (-.4,-.3) circle (.1) node[left=1pt]{\scriptsize $1$};
\filldraw[fill=black] (0,0) circle (.1) node[above=1pt]{\scriptsize $2$};
\filldraw[fill=black] (1,0) circle (.1);
\filldraw[fill=white] (1.5,0) circle (.1) node[above=1pt]{\scriptsize $\ell$};
\filldraw[fill=black] (2,0) circle (.1);
\filldraw[fill=white] (2.5,0) circle (.1);
\filldraw[fill=black] (3.5,0) circle (.1);
\filldraw[fill=white] (4,0) circle (.1) node[above=1pt]{\scriptsize $r$};
\filldraw[fill=black] (4.5,0) circle (.1);
\filldraw[fill=black] (5.5,0) circle (.1) node[above]{\scriptsize $\hspace{-8pt} n\!\!-\!\!2$};
\filldraw[fill=black] (5.9,.3) circle (.1) node[right=1pt]{\scriptsize $n\!\!-\!\!1$};
\filldraw[fill=black] (6.1,-.3) circle (.1) node[right=1pt]{\scriptsize $n$};
\draw[snake=brace] (-.7,.5) -- (1.1,.5) node[midway,above]{\scriptsize $p_1$};
\draw[snake=brace] (6.2,-.5) -- (4.4,-.5) node[midway,below]{\scriptsize $p_2$};
\end{tikzpicture} \!\! & (0,0) & \begin{array}{c} \ell, \, n\!-\!r \text{ even} \\ 2 \le \ell \le r \le n\!-\!2 \\ \ell+r < n \end{array}  & \emptyset & \emptyset \\
\text{D.2b} & \bigl( {\rm D}^{(1)}_n \bigr)^{\flR}_{p_1;{\rm alt};p_2} \!\! & 
\begin{tikzpicture}[baseline=-0.35em,line width=0.7pt,scale=0.8]
\draw[thick] (-.6,.3) -- (0,0) -- (-.4,-.3);
\draw[thick,dashed] (0,0) -- (1,0);
\draw[thick] (1,0) -- (2.5,0);
\draw[thick,dashed] (2.5,0) -- (3.5,0);
\draw[thick] (3.5,0) -- (4.5,0);
\draw[thick,dashed] (4.5,0) -- (5.5,0);
\draw[thick] (6,.3) -- (5.5,0) -- (6,-.3);
\filldraw[fill=black] (-.6,.3) circle (.1) node[left=1pt]{\scriptsize $0$};
\filldraw[fill=black] (-.4,-.3) circle (.1) node[left=1pt]{\scriptsize $1$};
\filldraw[fill=black] (0,0) circle (.1) node[above=1pt]{\scriptsize $2$};
\filldraw[fill=black] (1,0) circle (.1);
\filldraw[fill=white] (1.5,0) circle (.1) node[above=1pt]{\scriptsize $\ell$};
\filldraw[fill=black] (2,0) circle (.1);
\filldraw[fill=white] (2.5,0) circle (.1);
\filldraw[fill=black] (3.5,0) circle (.1);
\filldraw[fill=white] (4,0) circle (.1) node[above=1pt]{\scriptsize $r$};
\filldraw[fill=black] (4.5,0) circle (.1);
\filldraw[fill=black] (5.5,0) circle (.1) node[above]{\scriptsize $\hspace{-8pt} n\!\!-\!\!2$};
\filldraw[fill=black] (6,.3) circle (.1) node[right=1pt]{\scriptsize $n\!\!-\!\!1$};
\filldraw[fill=black] (6,-.3) circle (.1) node[right=1pt]{\scriptsize $n$};
\draw[snake=brace] (-.7,.5) -- (1.1,.5) node[midway,above]{\scriptsize $p_1$};
\draw[snake=brace] (6.1,-.5) -- (4.4,-.5) node[midway,below]{\scriptsize $p_2$};
\draw[<->,gray] (6,.2) -- (6,-.2);
\end{tikzpicture} \!\! & (0,1) & \begin{array}{c} \ell \text{ even},\, n\!-\!r \text{ odd} \\  2 \le \ell \le r \le n\!-\!3  \end{array}  & \emptyset & \emptyset
\\
\text{D.2c} & \bigl( {\rm D}^{(1)}_n \bigr)^{\flLR}_{p_1,{\rm alt},p_2} \!\! & \hspace{2.3pt}
\begin{tikzpicture}[baseline=-0.35em,line width=0.7pt,scale=0.8]
\draw[<->,gray] (-.5,.2) -- (-.5,-.2);
\draw[thick] (-.5,.3) -- (0,0) -- (-.5,-.3);
\draw[thick,dashed] (0,0) -- (1,0);
\draw[thick] (1,0) -- (2.5,0);
\draw[thick,dashed] (2.5,0) -- (3.5,0);
\draw[thick] (3.5,0) -- (4.5,0);
\draw[thick,dashed] (4.5,0) -- (5.5,0);
\draw[thick] (6,.3) -- (5.5,0) -- (6,-.3);
\filldraw[fill=black] (-.5,.3) circle (.1) node[left=1pt]{\scriptsize $0$};
\filldraw[fill=black] (-.5,-.3) circle (.1) node[left=1pt]{\scriptsize $1$};
\filldraw[fill=black] (0,0) circle (.1) node[above=1pt]{\scriptsize $2$};
\filldraw[fill=black] (1,0) circle (.1);
\filldraw[fill=white] (1.5,0) circle (.1) node[above=1pt]{\scriptsize $\ell$};
\filldraw[fill=black] (2,0) circle (.1);
\filldraw[fill=white] (2.5,0) circle (.1);
\filldraw[fill=black] (3.5,0) circle (.1);
\filldraw[fill=white] (4,0) circle (.1) node[above=1pt]{\scriptsize $r$};
\filldraw[fill=black] (4.5,0) circle (.1);
\filldraw[fill=black] (5.5,0) circle (.1) node[above]{\scriptsize $\hspace{-8pt} n\!\!-\!\!2$};
\filldraw[fill=black] (6,.3) circle (.1) node[right=1pt]{\scriptsize $n\!\!-\!\!1$};
\filldraw[fill=black] (6,-.3) circle (.1) node[right=1pt]{\scriptsize $n$};
\draw[snake=brace] (-.6,.5) -- (1.1,.5) node[midway,above]{\scriptsize $p_1$};
\draw[snake=brace] (6.2,-.5) -- (4.4,-.5) node[midway,below]{\scriptsize $p_2$};
\draw[<->,gray] (6,.2) -- (6,-.2);
\end{tikzpicture} \!\! & (1,1) & \begin{array}{c} \ell,\, n\!-\!r \text{ odd} \\ 3 \le \ell \le r \le n\!-\!3 \\ \ell+r < n  \end{array}  & \emptyset & \emptyset 
\\ \hline
\text{D.2a} & \bigl( {\rm D}^{(1)}_n \bigr)^\id_{0,{\rm alt},p_2} \!\! &
\begin{tikzpicture}[baseline=-0.35em,line width=0.7pt,scale=0.8]
\draw[thick] (-.6,.3) -- (0,0) -- (-.4,-.3);
\draw[thick] (0,0) -- (1,0);
\draw[thick,dashed] (1,0) -- (2,0);
\draw[thick] (2,0) -- (3,0);
\draw[thick,dashed] (3,0) -- (4,0);
\draw[thick] (4.4,.3) -- (4,0) -- (4.6,-.3) ;
\filldraw[fill=white] (-.6,.3) circle (.1) node[left=1pt]{\scriptsize $0$};
\filldraw[fill=black] (-.4,-.3) circle (.1) node[left=1pt]{\scriptsize $1$};
\filldraw[fill=white] (0,0) circle (.1) node[above=1pt]{\scriptsize $2$};
\filldraw[fill=black] (.5,0) circle (.1);
\filldraw[fill=white] (1,0) circle (.1);
\filldraw[fill=black] (2,0) circle (.1);
\filldraw[fill=white] (2.5,0) circle (.1) node[above=1pt]{\scriptsize $r$};
\filldraw[fill=black] (3,0) circle (.1);
\filldraw[fill=black] (4,0) circle (.1) node[above]{\scriptsize $\hspace{-8pt} n\!\!-\!\!2$};
\filldraw[fill=black] (4.4,.3) circle (.1) node[right=1pt]{\scriptsize $n\!\!-\!\!1$};
\filldraw[fill=black] (4.6,-.3) circle (.1) node[right=1pt]{\scriptsize $n$};
\draw[snake=brace] (4.7,-.5) -- (2.9,-.5) node[midway,below]{\scriptsize $p_2$};
\end{tikzpicture} & (0,0) & \begin{array}{c} \ell=0,\,n\!-\!r \text{ even} \\ 0 \le r \le n\!-\!2  \end{array} & \emptyset & \{ 0 \}  \\
\text{D.2b} & \bigl( {\rm D}^{(1)}_n \bigr)^{\flR}_{0;{\rm alt};p_2} \!\! & 
\begin{tikzpicture}[baseline=-0.35em,line width=0.7pt,scale=0.8]
\draw[thick] (-.6,.3) -- (0,0) -- (-.4,-.3);
\draw[thick] (0,0) -- (1,0);
\draw[thick,dashed] (1,0) -- (2,0);
\draw[thick] (2,0) -- (3,0);
\draw[thick,dashed] (3,0) -- (4,0);
\draw[thick] (4.5,.3) -- (4,0) -- (4.5,-.3) ;
\filldraw[fill=white] (-.6,.3) circle (.1) node[left=1pt]{\scriptsize $0$};
\filldraw[fill=black] (-.4,-.3) circle (.1) node[left=1pt]{\scriptsize $1$};
\filldraw[fill=white] (0,0) circle (.1) node[above=1pt]{\scriptsize $2$};
\filldraw[fill=black] (.5,0) circle (.1);
\filldraw[fill=white] (1,0) circle (.1);
\filldraw[fill=black] (2,0) circle (.1);
\filldraw[fill=white] (2.5,0) circle (.1) node[above=1pt]{\scriptsize $r$};
\filldraw[fill=black] (3,0) circle (.1);
\filldraw[fill=black] (4,0) circle (.1) node[above]{\scriptsize $\hspace{-8pt} n\!\!-\!\!2$};
\filldraw[fill=black] (4.5,.3) circle (.1) node[right=1pt]{\scriptsize $n\!\!-\!\!1$};
\filldraw[fill=black] (4.5,-.3) circle (.1) node[right=1pt]{\scriptsize $n$};
\draw[snake=brace] (4.6,-.5) -- (2.9,-.5) node[midway,below]{\scriptsize $p_2$};
\draw[<->,gray] (4.5,.2) -- (4.5,-.2);
\end{tikzpicture}  & (0,1) & \begin{array}{c} \ell=0,\,n\!-\!r \text{ odd} \\  0 \le r \le n\!-\!3  \end{array} & \emptyset & \{ 0 \} 
\\
\text{D.2c} & \bigl( {\rm D}^{(1)}_n \bigr)^{\flLR}_{0,{\rm alt},p_2} \!\! & \hspace{13.7pt}
\begin{tikzpicture}[baseline=-0.35em,line width=0.7pt,scale=0.8]
\draw[thick] (-.5,.3) -- (0,0) -- (-.5,-.3);
\draw[thick] (0,0) -- (.5,0);
\draw[thick,dashed] (.5,0) -- (1.5,0);
\draw[thick] (1.5,0) -- (2.5,0);
\draw[thick,dashed] (2.5,0) -- (3.5,0);
\draw[thick] (4,.3) -- (3.5,0) -- (4,-.3);
\filldraw[fill=white] (-.5,.3) circle (.1) node[left=1pt]{\scriptsize $0$};
\filldraw[fill=white] (-.5,-.3) circle (.1) node[left=1pt]{\scriptsize $1$};
\filldraw[fill=black] (0,0) circle (.1) node[above=1pt]{\scriptsize $2$};
\filldraw[fill=white] (.5,0) circle (.1);
\filldraw[fill=black] (1.5,0) circle (.1);
\filldraw[fill=white] (2,0) circle (.1) node[above=1pt]{\scriptsize $r$};
\filldraw[fill=black] (2.5,0) circle (.1);
\filldraw[fill=black] (3.5,0) circle (.1) node[above]{\scriptsize $\hspace{-8pt} n\!\!-\!\!2$};
\filldraw[fill=black] (4,.3) circle (.1) node[right=1pt]{\scriptsize $n\!\!-\!\!1$};
\filldraw[fill=black] (4,-.3) circle (.1) node[right=1pt]{\scriptsize $n$};
\draw[snake=brace] (4.1,-.5) -- (2.4,-.5) node[midway,below]{\scriptsize $p_2$};
\draw[<->,gray] (-.5,.2) -- (-.5,-.2);
\draw[<->,gray] (4,.2) -- (4,-.2);
\end{tikzpicture}  & (1,1) & \begin{array}{c} \ell=1,\,n\!-\!r \text{ odd} \\ 1 \le r \le n\!-\!3  \end{array}  & \{ 0 \} & \emptyset 
\\
\hline
\text{D.2a} & \bigl( {\rm D}^{(1)}_{n} \bigr)^\id_{0,{\rm alt},0} \!\! & 
\hspace{0pt}
\begin{tikzpicture}[baseline=-0.35em,line width=0.7pt,scale=.8]
\draw[thick] (-.6,.3) -- (0,0) -- (-.4,-.3);
\draw[thick] (0,0) -- (1,0);
\draw[thick,dashed] (1.0,0) -- (2.0,0);
\draw[thick] (2.0,0) -- (2.5,0);
\draw[thick] (2.9,.3) -- (2.5,0) -- (3.1,-.3);
\filldraw[fill=white] (-.6,.3) circle (.1) node[left=1pt]{\scriptsize $0$};
\filldraw[fill=black] (-.4,-.3) circle (.1) node[left=1pt]{\scriptsize $1$};
\filldraw[fill=white] (0,0) circle (.1) node[above=1pt]{\scriptsize $2$};
\filldraw[fill=black] (.5,0) circle (.1);
\filldraw[fill=white] (1,0) circle (.1);
\filldraw[fill=black] (2.0,0) circle (.1);
\filldraw[fill=white] (2.5,0) circle (.1) node[below]{\scriptsize $\hspace{-4pt}n\!\!-\!\!2$};
\filldraw[fill=black] (2.9,.3) circle (.1) node[right=1pt]{\scriptsize $n\!\!-\!\!1 \hspace{-5pt}$};
\filldraw[fill=white] (3.1,-.3) circle (.1) node[right=1pt]{\scriptsize $n$};
\end{tikzpicture} 
& (0,0) & \begin{array}{c} \ell=0,\,r=n \\ n \text{ even}\end{array} & \emptyset & \!\!\{ 0, n \} \\[1.2em] 
%
\text{D.2b} & \bigl( {\rm D}^{(1)}_{n} \bigr)^{\flR}_{0;{\rm alt};0} \!\! &
\hspace{0pt}
\begin{tikzpicture}[baseline=-0.35em,line width=0.7pt,scale=.8]
\draw[thick] (-.6,.3) -- (0,0) -- (-.4,-.3);
\draw[thick] (0,0) -- (1,0);
\draw[thick,dashed] (1.0,0) -- (2.0,0);
\draw[thick] (2.0,0) -- (3,0);
\draw[thick] (3.5,.3) -- (3,0) -- (3.5,-.3);
\filldraw[fill=white] (-.6,.3) circle (.1) node[left=1pt]{\scriptsize $0$};
\filldraw[fill=black] (-.4,-.3) circle (.1) node[left=1pt]{\scriptsize $1$};
\filldraw[fill=white] (0,0) circle (.1) node[above=1pt]{\scriptsize $2$};
\filldraw[fill=black] (.5,0) circle (.1);
\filldraw[fill=white] (1,0) circle (.1);
\filldraw[fill=black] (2,0) circle (.1);
\filldraw[fill=white] (2.5,0) circle (.1);
\filldraw[fill=black] (3,0) circle (.1) node[below]{\scriptsize $\hspace{-5pt}n\!\!-\!\!2$};
\filldraw[fill=white] (3.5,.3) circle (.1) node[right=1pt]{\scriptsize $n\!\!-\!\!1 \hspace{-5pt}$};
\filldraw[fill=white] (3.5,-.3) circle (.1) node[right=1pt]{\scriptsize $n$};
\draw[<->,gray] (3.5,.2) -- (3.5,-.2);
\end{tikzpicture} 
& (0,1) & \begin{array}{c} \ell=0,\,r=n\!-\!1\\ n \text{ odd}\end{array}  & \{ n \} & \{ 0 \} \\[1.2em]
%
\text{D.2c} & \bigl( {\rm D}^{(1)}_{n} \bigr)^{\flLR}_{0,{\rm alt},0} \!\! & 
\hspace{13.8pt}
\begin{tikzpicture}[baseline=-0.35em,line width=0.7pt,scale=.8]
\draw[thick] (0,.3) -- (.5,0) -- (0,-.3);
\draw[thick] (.5,0) -- (1,0);
\draw[thick,dashed] (1,0) -- (2,0);
\draw[thick] (2,0) -- (3,0);
\draw[thick] (3.5,.3) -- (3,0) -- (3.5,-.3);
\filldraw[fill=white] (0,.3) circle (.1) node[left=1pt]{\scriptsize $0$};
\filldraw[fill=white] (0,-.3) circle (.1) node[left=1pt]{\scriptsize $1$};
\draw[<->,gray] (0,.2) -- (0,-.2);
\filldraw[fill=black] (.5,0) circle (.1) node[above=1pt]{\scriptsize $2$};
\filldraw[fill=white] (1,0) circle (.1);
\filldraw[fill=black] (2,0) circle (.1);
\filldraw[fill=white] (2.5,0) circle (.1);
\filldraw[fill=black] (3,0) circle (.1) node[below]{\scriptsize $\hspace{-5pt}n\!\!-\!\!2$};
\filldraw[fill=white] (3.5,.3) circle (.1) node[right=1pt]{\scriptsize $n\!\!-\!\!1 \hspace{-5pt}$};
\filldraw[fill=white] (3.5,-.3) circle (.1) node[right=1pt]{\scriptsize $n$};
\draw[<->,gray] (3.5,.2) -- (3.5,-.2);
\end{tikzpicture} 
& (1,1) & \begin{array}{c} \ell=1,\,r=n\!-\!1\\ n \text{ even}\end{array} & \!\!\{0,n\} & \emptyset \\\hline 
\end{array}
\]
\end{table}
}

Recall that $I' = (I\backslash X) \backslash (Y_1 \cup Y_2)$. The QP algebra is generated by $x_i$, $y_i$, $k_i$ with $i \in X$, $(k_0 k_1^{-1})^{\pm 1}$ if $\ell=1$, $(k_n k_{n-1}^{-1})^{\pm 1}$ if $r=n-1$ and $N$ even, and elements $b_j$ whose reduced expressions are
\begin{alignat*}{99}
b_0 & = 
\casesl{l}{
y_0 - c_0\, T_{w_{X_2}}(x_0)\, k_0^{-1} \\
y_0 - c_0\, x_0\, k_0^{-1} - s_0 k_0^{-1} \\
y_0 - c_0\, T_{w_{X_2}}(x_1)\, k_0^{-1} \\
y_0 - c_0\, T_2(x_1)\, k_0^{-1} \\
}
& &
\casesm{l}{
\text{if } 0=\ell=r,\\
\text{if } 0=\ell<r,\\
\text{if } 1=\ell=r, \\
\text{if } 1=\ell<r, \\
} 
\\
b_1 & = 
\casesl{l}{
y_1 - c_1\, T_{w_{X_2}}(x_0)\, k_1^{-1} \\
y_1 - c_1\, T_2(x_0)\, k_1^{-1} \\
}
& &
\casesm{l}{
\text{if } 1=\ell=r, \\
\text{if } 1=\ell<r, \\
} 
\\
b_\ell & = 
\casesl{l}{
y_\ell - c_\ell\, T_{\ell+1}T_{w_{X_1}}(x_\ell)\, k_\ell^{-1} \\
y_\ell - c_\ell\, T_{w_{X_2}}T_{w_{X_1}}(x_\ell)\, k_\ell^{-1} \\
}
& &
\casesm{l}{
\text{if } 1<\ell<r, \\
\text{if } 1<\ell=r, \\
} 
\\
b_j & = y_j - c_j\, T_{j-1}T_{j+1}(x_j)\, k_j^{-1} & & \text{if } j \in I', \!\!
\\
b_r & = y_r - c_r\, T_{r-1}T_{w_{X_2}}(x_r)\, k_r^{-1} & & \text{if } \ell < r < n-(-1)^N,
\\
b_{n-1} & = y_{n-1} - c_{n-1}\, T_{n-2}(x_n)\, k_{n-1}^{-1}  & \qu & \text{if } \ell < r = n-(-1)^N,
\\
b_n & = 
\casesl{l}{
y_n - c_n\, T_{n-2}(x_{n-1})\, k_{n}^{-1} \\
y_n - c_n\, x_n\, k_n^{-1} - s_n k_n^{-1} \\
}
& &
\casesm{l}{
\text{if } \ell < r = n-(-1)^N \\
\text{if } \ell < r = N/2,
} 
\intertext{where all $c_j$ are independent. Next, in terms of the effective dressing parameters $\om_{\ell+2},\om_{\ell+4},\ldots,\om_{r}$, the scaling parameter $\eta$ and additional free parameters $\la$ and $\mu$ we set }
c_0 & = 
\casesl{l}{
q^{2(\lceil N/2\rceil-1)} \eta^{-2} \\
q^{-1} \eta^{-2}\, \om_2^2 \\
-(-q)^{\lceil N/2\rceil-1} (\mu \,\eta)^{-1} \\
-q\, \mu^{-1}\, \eta^{-1}\,\om_3 \\
} \qq
& &
\casesm{l}{
\text{if } 0=\ell=r,\\
\text{if } 0=\ell<r, \\
\text{if } 1=\ell=r,  \\
\text{if } 1=\ell<r, \\
}
\\
c_1 & = 
\casesl{l}{
-(-q)^{\lceil N/2\rceil-1} \mu\,\eta^{-1} \\
q\,\mu\,\eta^{-1}\,\om_3 \\
}
& &
\casesm{l}{
\text{if } 1=\ell=r, \\
\text{if } 1=\ell<r, \\
} 
\\
c_\ell & = 
\casesl{l}{
-(-q)^\ell \,\eta^{-1}\, \om_{\ell+2} \\
-(-q)^{\lceil N/2\rceil-1} \,\eta^{-1} \\
}
& &
\casesm{l}{
\text{if } 1<\ell<r, \\
\text{if } 1<\ell=r, \\
} 
\\
c_j & = q\, \om_j^{-1} \om_{j+2} & & \text{if } j \in I', \!\!
\\
c_r & = -(-q)^{\lceil N/2\rceil-r}\, \om_r^{-1} & & \text{if } \ell<r<n-(-1)^N,
\\
c_{n-1} & = q\,\la\,\,\om_{n-1}^{-1} & & \text{if } \ell<r = n-(-1)^N,
\\
c_n & = 
\casesl{l}{
-q\,\la^{-1}\,\om_{n-1}^{-1} \\
q^{-1}\, \om_n^{-2} \\
}
& &
\casesm{l}{
\text{if } \ell<r=n-(-1)^N, \\
\text{if } \ell<r=N/2,
} 
\\
s_0 &= \frac{\mu + \mu^{-1}}{q-q^{-1}} \eta^{-1} \om_2 && \text{if } 0=\ell<r, \\
s_n &= \frac{\la + \la^{-1}}{q-q^{-1}} \om_n^{-1} && \text{if } \ell<r=N/2.
\end{alignat*}
Then, solving the boundary intertwining relation \eqref{intw-untw} for all generators of the QP algebra we obtain the following result.

\begin{result}
The bare K-matrix of type BD.2 is of the form \eqref{K(u):X} with 
\spl{
M_1(u) &= \sum_{\overline{\ell} \le i \le n} \big( \la \mu u E_{-i,-i} + E_{ii} \big), \\
M_2(u) &= -\del_{\ell,1}\, u^{-1}\,(\mu-\mu^{-1})\, E_{nn} - \del_{N,2n}\del_{r,n-1}\, (\la - \la^{-1}) \, E_{11} \\
& \qu + \sum_{\overline{r} \le i < \overline{\ell} } \big( {-}\la E_{-i,-i} + \la^{-1} E_{ii} + \eps_i\, (E_{-i-\eps_i,i} - E_{i,-i-\eps_i}) \big). \label{D2:K}
}
where $\eps_i = (-1)^{\bar\imath-\ell}$, $\mu = q^{-\ell+1}$, $\la = q^{N/2 - r - 1}$ except $\mu\in\K^\times$ if $\ell\in\{0,1\}$ and $\la\in\K^\times$ if $r\in\{n-1,n\}$ for type D.2 only. 
\end{result}

This K-matrix has the following properties.

\smallskip

\begin{description} [itemsep=1ex]

\item[Eigendecomposition] \hfill  $V= \Id - \la \sum_{\bar r \le i < \bar \ell}  \eps_i \big( E_{-i-\eps_i,i}-E_{i,-i-\eps_i} \big),$ \hfill   \hphantom{\it Eigendecomposition}
\eqn{ 
D(u) &= \sum_{1 \le i < \bar r} \big( E_{-i,-i} + E_{ii} \big) + \sum_{\bar r \le i < \bar \ell} \Big( E_{-i,-i}+h_1(u)\,h_2(u)E_{ii} \Big) 
\\
& - \tfrac{u-u^{-1}}{k_1(u)\,k_2(u)} \big( \del_{\ell1} u^{-1}\mu_- E_{nn} + \del_{N,2n}\del_{r,n-1} \la_- \big) + h_1(u) \sum_{\bar \ell \le i \le n} \big( u^2\,E_{-i,-i} + E_{ii} \big).
}

\item[Affinization] For $\ell=0$ the generalized Satake diagram is restrictable and the affinization identity is as in the C.1 case. If additionally $r=n$ if $n$ is even or $r=n-1$ if $n$ is odd, then $(X,\tau)\in \Sat(A)$, and, upon setting $\la=\sqrt{-1}$ and multiplying by $C$ and dressing, $K_0$ corresponds to the CI solution of the constant twisted RE reported in \mbox{\cite[Sec.~3]{NoSu}.}

\item[Bar-symmetry] \qu $K(u)^{-1} = Z^{\flL}(u)^{o_1} (Z^{\flR})^{o_2} J K(u)|_{\la \to -\la^{-1},\mu \to -\mu^{-1}} J Z^{\flL}(u^{-1})^{-o_1} (Z^{\flR})^{-o_2}.$

\item[Half-period]  \hfill $K(-u) = K(u)|_{\mu \to -\mu}$ if $\ell\in\{0,1\}$. \hfill \hphantom{\it Half-period}

\item[Rotations]\; \hfill $\begin{aligned}[t]
\; K^{\flL}(u) &= K(u)|_{\mu \to \mu^{-1}} && \text{ if } \ell = 1.
\\
K^{\flR}(u) &= h_1(u) h_2(u)\, K(u)|_{\la \to \la^{-1}}  && \text{ if } r = n-1 \text{ for D.2},
\\
K^{\pi}(u) &= h_2(u)  \, K(u)|_{\la \leftrightarrow \mu} && \text{ if } \ell+ r = n  \text{ and $\ell\le1$ for D.2} ,
\\[.25em]
K^{\pi}(u) &=-u \, K(u) && \text{ if } \ell + r = n  \text{ and $\ell\ge2$  for D.2} , 
\end{aligned}$ \hfill \hphantom{\it Rotations} \\
where $\mu \to \mu^{-1}$ corresponds to $c_0 \leftrightarrow c_1$, $\la \to \la^{-1}$ to $c_{n-1} \leftrightarrow c_n$ and $\la\leftrightarrow \mu$ to $(c_0,c_1,s_0) \leftrightarrow (c_n,c_{n-1},s_n)$.

\item[Reductions] For $\ell=1$ and $\mu\in\{\pm\la,\pm\la^{-1}\}$ it has $d_{\rm eff}=3$. For $\ell=0$ and $\mu\in\{\pm\la,\pm\la^{-1}\}$ it has $d_{\rm eff}=1$. In the latter two cases it is singly regular, \mbox{$K(\pm1)|_{\mu\in\{\pm\la,\pm\la^{-1}\}}\ne\Id$.} 

\noindent For $N$ even, $\ell+r=n$ and $1<\ell<r$ it has $d_{\rm eff}=3$. For $\ell=r=n/2$  it has $d_{\rm eff}=2$. In both cases it is singly regular, $K(1)\ne \Id$. 

\noindent For $N$ and $n$ even, $l=0$, $r=n$ and $\la^2 =\mu^2= -1$ (so that $s_0=s_n=0$) it is a non-regular \gim: $K(u)|_{\la^2 = \mu^2 = -1} = -\sqrt{-1}\,\sum_{\overline{r} \le i < \overline{\ell} } \,\eps_i\,(E_{-i-\eps_i,i} - E_{i,-i-\eps_i})$.

\item[Diagonal cases] For $\ell=r$:
\begin{gather*}
K(u) = \Id + \frac{u-u^{-1}}{k_1(u)} \sum_{\bar \ell \le i \le n} ( \la \mu u E_{-i,-i} + E_{ii}).
\\
\intertext{For $\ell\le1$, $r\ge n\!-\!1$ for D.2:}
\lim_{\la\to0} K(u) = \sum_{1\le i \le n} ( E_{-i,-i} + u^{-2} E_{ii} ), \qq
\lim_{\la \to 0} K(u)|_{\mu = \pm\la} = \sum_{1 \le i \le n} (E_{-i,-i} \mp u^{-1}  E_{ii}) ,
\\
\lim_{\la \to 0} K(u)|_{\mu = \pm\la^{-1}} = \del_{\ell1} (\mp u E_{-n,-n} + u^{2} E_{nn}) +  \sum_{1 \le i < \bar\ell} (E_{-i,-i} \mp u^{-1} E_{ii}), 
\\
\lim_{\la\to \infty} K(u) = \del_{r,n-1}(E_{-1,-1}+u^2 E_{11}) + \sum_{\bar r\le i \le n} ( u^2 E_{-i,-i} + E_{ii} ),
\\
\lim_{\la \to \infty} K(u)|_{\mu = \pm\la} = \del_{\ell1} (u^2 E_{-n,-n}  \mp u^{-1} E_{nn}) + \del_{r,n-1} (E_{-1,-1} \mp u E_{11}) + \sum_{\bar r \le i < \bar\ell} (\mp u E_{-i,-i} + E_{ii}), 
\\
\lim_{\la \to \infty} K(u)|_{\mu = \pm\la^{-1}} = \del_{r,n-1} (E_{-1,-1} \mp u E_{11}) + \sum_{\bar r \le i \le n} (\mp u E_{-i,-i} + E_{ii}). 
\\
\intertext{For $\ell=0$, $r\ge2$:}
\lim_{\mu \to 0} K(u) = \lim_{\mu \to \infty} K(u) = \Id . 
\\
\intertext{For $\ell=1$, $r\ge2$:}
\lim_{\mu \to 0} K(u) = \Id, \qq 
\lim_{\mu \to \infty} K(u) = \Id + (u^2-1) E_{-n,-n} + (u^{-2}-1) E_{nn} .
\end{gather*}
\end{description}

\begin{rmk} \label{R:D2:rotate}
K-matrices of type D.2ac also satisfy \eqref{Zpi:bijection}, see Remark \ref{R:C1:rotate}. \hfill \rmkend
\end{rmk}


\subsection{Untwisted K-matrices of type CD.4} \label{sec:K:I}

The CD.4 family consists of $(X,\pi)\in \Sat(A)$ with $A$ of type C$^{(1)}_n$ or D$^{(1)}_n$. 
All diagrams in this family are quasistandard, non-restrictable and $|I_{\rm diff} \cup I_{\rm nsf}|$ equals 1 (generic case) or 2 (special case).


\subsubsection{Family CD.4 (generic case)} \label{sec:CD4}

Satake diagrams in this family are parametrized by the tuple $(n,o_1,p_1)$. According to the value of $o_1$ we distinguish two subfamilies: CD.4a (with $o_1=0$) and CD.4b (with $o_1=1$).

Set $\ell=(n-o_1)/2-p_1 = |I^*| - 1$ so that $0\le \ell \le (n-o_1)/2 $. We additionally require $\ell\ne1$ for type D.4; the excluded case is the special case and is studied in Section \ref{sec:D4:spec} below. In terms of this parametrization we have $X = \{ \ell+1, \ldots, n-\ell-1 \}$, which, unless $\ell = n/2$, is of type ${\rm A}_{n-2\ell-1}$. 
We choose $I^*=\{0,1,\ldots,\ell\}$. The representative diagrams and special $\tau$-orbits are listed in Table~\ref{tab:CD4:gen}.

\begin{table}[h]
\caption{Family CD.4: representative Satake diagrams with $|I_{\rm diff} \cup I_{\rm nsf}|=1$.} \label{tab:CD4:gen}
\[
\arraycolsep=3pt\def\arraystretch{1.2}
\begin{array}{cccccc}
\text{Type} & \text{Name} & \text{Diagram} & \text{Restrictions} & I_{\rm diff} & I_{\rm ns} = I_{\rm nsf} \\ 
\hline 
\hline
\text{C.4a} & \bigl( {\rm C}_n^{(1)} \bigr)^\pi_{p_1} & 
\begin{tikzpicture}[baseline=-0.35em,line width=0.7pt,scale=0.8]
\draw[thick] (0,.4) -- (-.5,0) -- (0,-.4);
\draw[thick,dashed] (0,.4) -- (1,.4);
\draw[thick,dashed] (0,-.4) -- (1,-.4);
\draw[thick] (1,.4) -- (1.5,.4);
\draw[thick] (1,-.4) -- (1.5,-.4);
\draw[thick,dashed] (1.5,.4) -- (2.5,.4);
\draw[thick,dashed] (1.5,-.4) -- (2.5,-.4);
\draw[double,<-] (2.6,.4) -- (3,.4);
\draw[double,<-] (2.6,-.4) -- (3,-.4);
\draw[<->,gray] (3,.3) -- (3,-.3);
\draw[<->,gray] (2.5,.3) -- (2.5,-.3);
\draw[<->,gray] (1.5,.3) -- (1.5,-.3);
\draw[<->,gray] (1,.3) -- (1,-.3);
\draw[<->,gray] (0,.3) -- (0,-.3);
\filldraw[fill=black] (-.5,0) circle (.1);
\filldraw[fill=black] (0,.4) circle (.1) ;
\filldraw[fill=black] (0,-.4) circle (.1) ;
\filldraw[fill=black] (1,.4) circle (.1);
\filldraw[fill=black] (1,-.4) circle (.1);
\filldraw[fill=white] (1.5,.4) circle (.1) node[above=1pt]{\scriptsize $\ell$};
\filldraw[fill=white] (1.5,-.4) circle (.1) node[below=1pt] {\scriptsize $n\!\!-\!\!\ell$};
\filldraw[fill=white] (2.5,.4) circle (.1) node[above=1pt]{\scriptsize $1$};
\filldraw[fill=white] (2.5,-.4) circle (.1) node[below=1pt]{\scriptsize $n\!\!-\!\!1$};
\filldraw[fill=white] (3,.4) circle (.1) node[right=1pt]{\scriptsize $0$};
\filldraw[fill=white] (3,-.4) circle (.1) node[right=1pt]{\scriptsize $n$};
\draw[snake=brace] (-.6,.6) -- (1.1,.6) node[midway,above]{\scriptsize $p_1$};
\end{tikzpicture} 
& \begin{array}{c} n \text{ even} \\ 0 \le \ell < n/2 \end{array} & \{\ell \} & \emptyset
\\
\text{C.4a} & \bigl( {\rm C}_n^{(1)} \bigr)^\pi_0 & 
\begin{tikzpicture}[baseline=-0.35em,line width=0.7pt,scale=0.8]
\draw[thick] (0,.4) -- (-.5,0) -- (0,-.4);
\draw[thick,dashed] (0,.4) -- (1,.4);
\draw[thick,dashed] (0,-.4) -- (1,-.4);
\draw[double,<-] (1.1,.4) -- (1.5,.4);
\draw[double,<-] (1.1,-.4) -- (1.5,-.4);
\draw[<->,gray] (0,.3) -- (0,-.3);
\draw[<->,gray] (1,.3) -- (1,-.3);
\draw[<->,gray] (1.5,.3) -- (1.5,-.3);
\filldraw[fill=white] (-.5,0) circle (.1) node[left=1pt]{\scriptsize $n/2$};
\filldraw[fill=white] (0,.4) circle (.1);
\filldraw[fill=white] (0,-.4) circle (.1);
\filldraw[fill=white] (1,.4) circle (.1) node[above=1pt]{\scriptsize $1$};
\filldraw[fill=white] (1,-.4) circle (.1) node[below=1pt]{\scriptsize $n\!\!-\!\!1$};
\filldraw[fill=white] (1.5,.4) circle (.1) node[right=1pt]{\scriptsize $0$};
\filldraw[fill=white] (1.5,-.4) circle (.1) node[right=1pt]{\scriptsize $n$};
\end{tikzpicture} 
& \begin{array}{c} n \text{ even} \\ \ell = n/2 \end{array}  & \emptyset & \{n/2\}
\\
\hline
\text{C.4b} & \bigl( {\rm C}_n^{(1)} \bigr)^\pi_{p_1} 
& 
\begin{tikzpicture}[baseline=-0.35em,line width=0.7pt,scale=0.8]
\draw[thick,domain=90:270] plot({.4*cos(\x)},{.4*sin(\x)});
\draw[thick,dashed] (0,.4) -- (1,.4);
\draw[thick,dashed] (0,-.4) -- (1,-.4);
\draw[thick] (1,.4) -- (1.5,.4);
\draw[thick] (1,-.4) -- (1.5,-.4);
\draw[thick,dashed] (1.5,.4) -- (2.5,.4);
\draw[thick,dashed] (1.5,-.4) -- (2.5,-.4);
\draw[double,<-] (2.6,.4) -- (3,.4);
\draw[double,<-] (2.6,-.4) -- (3,-.4);
\draw[<->,gray] (0,.3) -- (0,-.3);
\draw[<->,gray] (1,.3) -- (1,-.3);
\draw[<->,gray] (1.5,.3) -- (1.5,-.3);
\draw[<->,gray] (2.5,.3) -- (2.5,-.3);
\draw[<->,gray] (3,.3) -- (3,-.3);
\filldraw[fill=black] (0,.4) circle (.1) ;
\filldraw[fill=black] (0,-.4) circle (.1) ;
\filldraw[fill=black] (1,.4) circle (.1);
\filldraw[fill=black] (1,-.4) circle (.1);
\filldraw[fill=white] (1.5,.4) circle (.1) node[above=1pt]{\scriptsize $\ell$};
\filldraw[fill=white] (1.5,-.4) circle (.1) node[below=1pt] {\scriptsize $n\!\!-\!\!\ell$};
\filldraw[fill=white] (2.5,.4) circle (.1) node[above=1pt]{\scriptsize $1$};
\filldraw[fill=white] (2.5,-.4) circle (.1) node[below=1pt]{\scriptsize $n\!\!-\!\!1$};
\filldraw[fill=white] (3,.4) circle (.1) node[right=1pt]{\scriptsize $0$};
\filldraw[fill=white] (3,-.4) circle (.1) node[right=1pt]{\scriptsize $n$};
\draw[snake=brace] (-.1,.6) -- (1.1,.6) node[midway,above]{\scriptsize $p_1$};
\end{tikzpicture} 
& \begin{array}{c} n \text{ odd} \\ 0 \le \ell < n/2 \end{array} & \{\ell\} & \emptyset
\\
\hline
\hline
\text{D.4a} & \bigl( {\rm D}^{(1)}_n \bigr)^\pi_{p_1} 
& 
\begin{tikzpicture}[baseline=-0.35em,line width=0.7pt,scale=0.8] 
\draw[] (0,1.5);
\draw[thick] (0,.4) -- (-.5,0) -- (0,-.4);
\draw[thick,dashed] (0,.4) -- (1,.4);
\draw[thick,dashed] (0,-.4) -- (1,-.4);
\draw[thick] (1,.4) -- (1.5,.4);
\draw[thick] (1,-.4) -- (1.5,-.4);
\draw[thick,dashed] (1.5,.4) -- (2.5,.4);
\draw[thick,dashed] (1.5,-.4) -- (2.5,-.4);
\draw[thick] (3.1,.7) -- (2.5,.4) -- (2.9,.2);
\draw[thick] (3.1,-.1) -- (2.5,-.4) -- (2.9,-.6);
\draw[<->,gray] (0,.3) -- (0,-.3);
\draw[<->,gray] (1,.3) -- (1,-.3);
\draw[<->,gray] (1.5,.3) -- (1.5,-.3);
\draw[<->,gray] (2.5,.3) -- (2.5,-.3);
\draw[<->,gray] (2.9,.1) -- (2.9,-.5);
\draw[<->,gray] (3.1,.6) -- (3.1,0);
\filldraw[fill=black] (-.5,0) circle (.1);
\filldraw[fill=black] (0,.4) circle (.1);
\filldraw[fill=black] (0,-.4) circle (.1);
\filldraw[fill=black] (1,.4) circle (.1);
\filldraw[fill=black] (1,-.4) circle (.1);
\filldraw[fill=white] (1.5,.4) circle (.1) node[above=1pt]{\scriptsize $\ell$};
\filldraw[fill=white] (1.5,-.4) circle (.1) node[below=.5pt]{\scriptsize $n\!\!-\!\!\ell$};
\filldraw[fill=white] (2.5,.4) circle (.1) node[above=.5pt]{\scriptsize $2$};
\filldraw[fill=white] (2.5,-.4) circle (.1) node[below]{\scriptsize $n\!\!-\!\!2\hspace{5pt} $};
\filldraw[fill=white] (2.9,.2) circle (.1) node[below=-3pt]{\hspace{-11pt} \scriptsize 1};
\filldraw[fill=white] (2.9,-.6) circle (.1) node[right=1pt]{\scriptsize $n\!\!-\!\!1$};
\filldraw[fill=white] (3.1,.7) circle (.1) node[right=1pt]{\scriptsize $0$};
\filldraw[fill=white] (3.1,-.1) circle (.1) node[right=1pt]{\scriptsize $n$};
\draw[snake=brace] (-.6,.6) -- (1.1,.6) node[midway,above]{\scriptsize $p_1$};
\end{tikzpicture} 
& \begin{array}{c} n \text{ even} \\ 0 \le \ell < n/2 \\ \ell \ne 1 \end{array} & \{\ell \} & \emptyset
\\[-1em]
\text{D.4a} & \bigl( {\rm D}^{(1)}_n \bigr)^\pi_0 
& 
\begin{tikzpicture}[baseline=-0.35em,line width=0.7pt,scale=0.8] 
\draw[] (0,1.5);
\draw[thick] (0,.4) -- (-.5,0) -- (0,-.4);
\draw[thick,dashed] (0,.4) -- (1,.4);
\draw[thick,dashed] (0,-.4) -- (1,-.4);
\draw[thick] (1.6,.7) -- (1,.4) -- (1.4,.2);
\draw[thick] (1.6,-.1) -- (1,-.4) -- (1.4,-.6);
\draw[<->,gray] (0,.3) -- (0,-.3);
\draw[<->,gray] (1,.3) -- (1,-.3);
\draw[<->,gray] (1.4,.1) -- (1.4,-.5);
\draw[<->,gray] (1.6,.6) -- (1.6,0);
\filldraw[fill=white] (-.5,0) circle (.1) node[left=1pt]{\scriptsize $n/2$};
\filldraw[fill=white] (0,.4) circle (.1);
\filldraw[fill=white] (0,-.4) circle (.1);
\filldraw[fill=white] (1,.4) circle (.1) node[above=.5pt]{\scriptsize $2$};
\filldraw[fill=white] (1,-.4) circle (.1) node[below]{\scriptsize $n\!\!-\!\!2\hspace{5pt} $};
\filldraw[fill=white] (1.4,.2) circle (.1) node[below=-3pt]{\hspace{-11pt} \scriptsize 1};
\filldraw[fill=white] (1.4,-.6) circle (.1) node[right=1pt]{\scriptsize $n\!\!-\!\!1$};
\filldraw[fill=white] (1.6,.7) circle (.1) node[right=1pt]{\scriptsize $0$};
\filldraw[fill=white] (1.6,-.1) circle (.1) node[right=1pt]{\scriptsize $n$};
\end{tikzpicture} 
& \begin{array}{c} n \text{ even} \\ \ell = n/2 \end{array} & \emptyset & \{ n/2 \}
\\
\hline
\text{D.4b} & \bigl( {\rm D}^{(1)}_n \bigr)^\pi_{p_1} 
&
\hspace{14pt}
\begin{tikzpicture}[baseline=-0.35em,line width=0.7pt,scale=0.8]
\draw[thick,domain=90:270] plot({.4*cos(\x)},{.4*sin(\x)});
\draw[thick,dashed] (0,.4) -- (1,.4);
\draw[thick,dashed] (0,-.4) -- (1,-.4);
\draw[thick] (1,.4) -- (1.5,.4);
\draw[thick] (1,-.4) -- (1.5,-.4);
\draw[thick,dashed] (1.5,.4) -- (2.5,.4);
\draw[thick,dashed] (1.5,-.4) -- (2.5,-.4);
\draw[thick] (3.1,.7) -- (2.5,.4) -- (2.9,.2);
\draw[thick] (3.1,-.1) -- (2.5,-.4) -- (2.9,-.6);
\draw[<->,gray] (0,.3) -- (0,-.3);
\draw[<->,gray] (1,.3) -- (1,-.3);
\draw[<->,gray] (1.5,.3) -- (1.5,-.3);
\draw[<->,gray] (2.5,.3) -- (2.5,-.3);
\draw[<->,gray] (2.9,.1) -- (2.9,-.5);
\draw[<->,gray] (3.1,.6) -- (3.1,0);
\filldraw[fill=black] (0,.4) circle (.1);
\filldraw[fill=black] (0,-.4) circle (.1);
\filldraw[fill=black] (1,.4) circle (.1);
\filldraw[fill=black] (1,-.4) circle (.1);
\filldraw[fill=white] (1.5,.4) circle (.1) node[above=1pt]{\scriptsize $\ell$};
\filldraw[fill=white] (1.5,-.4) circle (.1) node[below=.5pt]{\scriptsize $n\!\!-\!\!\ell$};
\filldraw[fill=white] (2.5,.4) circle (.1) node[above=.5pt]{\scriptsize $2$};
\filldraw[fill=white] (2.5,-.4) circle (.1) node[below]{\scriptsize $n\!\!-\!\!2\hspace{5pt} $};
\filldraw[fill=white] (2.9,.2) circle (.1) node[below=-3pt]{\hspace{-11pt} \scriptsize 1};
\filldraw[fill=white] (2.9,-.6) circle (.1) node[right=1pt]{\scriptsize $n\!\!-\!\!1$};
\filldraw[fill=white] (3.1,.7) circle (.1) node[right=1pt]{\scriptsize $0$};
\filldraw[fill=white] (3.1,-.1) circle (.1) node[right=1pt]{\scriptsize $n$};
\draw[snake=brace] (-.1,.6) -- (1.1,.6) node[midway,above]{\scriptsize $p_1$};
\end{tikzpicture}
& \begin{array}{c} n \text{ odd} \\ 0 \le \ell < n/2 \\ \ell \ne 1 \end{array} & \{\ell\} & \emptyset 
\\
\hline
\end{array}
\]
\end{table}


{\allowdisplaybreaks

The QP algebra is generated by $x_i$, $y_i$, $k_i$ with $i\in X$ and $(k_j k_{n-j}^{-1})^{\pm1}$ with $j\in I^*$ and elements $b_j$ whose reduced expressions are 
\begin{alignat*}{99}
b_{\ell} &= 
\casesl{l}{ 
y_\ell - c_\ell \, T_{w_X} (x_{n-\ell})\, k_\ell^{-1} \\[.2em]
y_\ell - c_\ell \, x_\ell k_\ell^{-1} - s_\ell \,k_\ell^{-1}
}
& \qu & \casesm{l}{
\text{if } \ell<n/2 , \\[.2em]
\text{if } \ell=n/2 ,
}
\\
b_{n-\ell} &= y_{n-\ell} - c_{n-\ell}\, T_{w_X} (x_{\ell})\, k_{n-\ell}^{-1} && \text{if } \ell<n/2, 
\\
b_{j} &= y_{j} - c_{j}\, x_{n-j}\, k_{j}^{-1} && \text{if } j < \ell \text{ or } j > n\!-\!\ell,
\intertext{where $c_j=c_{n-j}$ if $j\notin I_{\rm diff}$. Next, in terms of the effective dressing parameters $\om_1,\ldots,\om_\ell$, the scaling parameter $\eta$, an additional free parameter $\la$ and $\mu=q^{-n+2\ell} \la$, we set}
c_\ell &= \casesl{l}{
q^{-\vartheta}\, \eta^{-1} \la^{2} \\  
\la\,\om_\ell^{-1}\\
- q^{-1} \om_\ell^{-2} \\
}
&&
\casesm{l}{
\text{if } \ell=0 , \\
\text{if } \frac{1+\vartheta}2 < \ell <n/2 , \\
\text{if } \ell=n/2 ,
} \\
c_{n-\ell} &= \casesl{l}{  
q^{-\vartheta}\,\eta^{-1} \mu^{-2} \\
(-1)^n\mu^{-1} \om_\ell^{-1}
}
&&
\casesm{l}{
\text{if } \ell=0 , \\
\text{if } \frac{1+\vartheta}2 < \ell <n/2, 
}
\\
c_j &= \casesl{l}{  
\eta^{-1} \om_1^{\frac{3-\vartheta}2} \om_2^{\frac{1+\vartheta}2} \\
- \om_{j}^{-1} \om_{j+1}
}
&&
\casesm{l}{
\text{if } 0 = j < \ell, \\
\text{if } 0 < j < \ell, 
} \\
s_\ell &= \frac{\mu - \mu^{-1}}{q - q^{-1}} \, \om_\ell^{-1} && \text{if } \ell=n/2 ,
\end{alignat*}
where $\vartheta=-1$ for type C and $\vartheta=1$ for type D, see \eqref{QQv}.
Then, solving the boundary intertwining equation \eqref{intw-untw} for all generators of the QP algebra we obtain the following result.
}

\begin{result} \label{Res:CD4}
The bare K-matrix of type CD.4, when $|I_{\rm diff}\cup I_{\rm nsf}|=1$, is of the form \eqref{K(u):X} with $M_1(u) = \sum_{1 \le  i \le n}  E_{ii}$ and
\eq{ \label{D2:CD4}
M_2(u) = \sum_{\bar \ell \le i \le n} \!\Big( \la E_{-i,-i} + \la^{-1} E_{-\bar \imath,-\bar \imath} + E_{-i,-\bar\imath} + E_{-\bar\imath,-i} - u^{-1}(\mu\,E_{ii} + \mu^{-1} E_{\bar \imath \bar \imath} + E_{i\bar\imath} + E_{\bar\imath i}) \Big), 
}
where $\mu = q^{-n+2\ell}\la$ and $\la\in \K^\times$. 
\end{result}

This K-matrix has the following properties.

\smallskip

\begin{description}[itemsep=1ex]

\item[Eigendecomposition] \hfill  $V= \Id + \sum_{1 \le i \le \ell } \big( \la ( E_{-\bar \imath,-i}-E_{-i,-\bar \imath} ) + \mu ( E_{\bar \imath,i}-E_{i,\bar \imath} ) \big),$ \hfill   \hphantom{\it Eigendecomposition}
\eqn{ 
D(u) &= \displaystyle \sum_{1 \le i \le \ell} \Big( h_1(u) h_2(u)  E_{-i,-i}+ u^{-2}h_2(u) E_{ii} \Big) + \sum_{\ell <i \le n}\Big( E_{-i,-i}+ h_1(u) E_{ii} \Big).
}

\item[Rotations] \hfill $ 
\begin{aligned}[t]
\qq K^\pi(u) &= u^2 \,h_1(u)\, K(u)|_{\la \to q^{n-2\ell} \la^{-1}} , \\
K^{\flLR}(u) &= K(u) \qq \text{for D.4 if } \ell>0.
\end{aligned}
$ \hfill \hphantom{\it Rotations}

\item[Reductions] 

For $\ell = n/2$ it is independent of $q$, has $d_{\rm eff}=2$ and is singly regular, $K(-1) \ne \Id$. Setting $\la = \pm 1$ it becomes a nonregular \gim~with $d_{\rm eff}=1$. 

\noindent For $\ell<n/2$ and $\la = \pm q^{n/2-\ell}$ or $\la = \pm \sqrt{-1}\, q^{n/2-\ell}$ it has $d_{\rm eff}=2$ and is singly regular, $K(1) \ne \Id$ or $K(-1) \ne \Id$, respectively.

\item[Diagonal cases] For $\ell>0$ it has the following diagonal limits, both with $d_{\rm eff}=3$:
\eqn{ 
\qq \lim_{\la \to 0} K(u) &= u^{-2} h_2(u) \sum_{1 \le i \le \ell} (E_{-i,-i}+E_{ii}) + \sum_{\ell < i \le n} (E_{-i,-i} + u^{-2} E_{ii}), \\ 
\qq \lim_{\la \to \infty} K(u) &= \sum_{1 \le i < \bar \ell} (E_{-i,-i}+E_{ii}) + h_2(u) \sum_{\bar \ell \le i \le n} ( E_{-i,-i} + u^{-2} E_{ii} ).
}
For $\ell=0$ it is diagonal and has $d_{\rm eff}=2$. By writing $\xi = \la \, \mu = q^{-n} \la^2$, it is independent of $q$,
\[
\qq K (u) = \sum_{i=1}^n \left( E_{-i,-i} + \frac{\xi - u^{-1}}{\xi - u} E_{ii} \right),
\]
and has the half-period symmetry $K(-u)=K(u)|_{\xi \to -\xi}$.
\end{description}


\subsubsection{Family D.4 (special case)} \label{sec:D4:spec}

The diagrams with $|I_{\rm diff}\cup I_{\rm nsf}|=2$ are parametrized by the tuple $(n,o_1,p_1)$ with $p_1=(n-o_1)/2-1$, so that $\ell = 1$ and $X=\{2,\ldots,n-2\}$. We choose $I^*=\{0,1\}$. The representative Satake diagrams and the special $\tau$-orbits are listed in Table \ref{tab:D4:spec}.  In both cases $I_{\rm ns} = I_{\rm nsf} = \emptyset$.

The QP algebra is generated by $x_i$, $y_i$, $k_i$ with $i\in X$ and $(k_j k_{n-j}^{-1})^{\pm 1}$ with $j\in I^*$ and elements $b_j$ whose reduced expressions are
\[
b_j = y_j - c_j T_{w_X}(x_{n-j})\, k_j^{-1} , \qu b_{n-j} = y_{n-j} - c_{n-j}\, T_{w_X} (x_{j})\, k_{n-j}^{-1} \qu\text{with } j\in I^*,
\]
and all $c_j$, $c_{n-j}$ independent. 
Next, in terms of the effective dressing parameter $\om_1$, the scaling parameter $\eta$ and the additional free parameters $\al$ and $\la$ and with  $\mu=q^{-n+2} \al \, \la$ we set
\[
c_0 = \eta^{-1} \la \, \om_1, \qu\; c_1 = \al \, \la \, \om_1^{-1}, \qu\; c_{n-1} = (-1)^n(\mu \, \om_1)^{-1}, \qu\; c_n = (-1)^n(\eta \, \mu)^{-1} \al \, \om_1.
\]
Repeating the same steps as before we obtain the following result.

\begin{table}[h]
\caption{Family D.4: representative Satake diagrams with $|I_{\rm diff} \cup I_{\rm nsf}|=2$.} \label{tab:D4:spec}
\[
\arraycolsep=3pt
\begin{array}{ccccc}
\text{Type} & \text{Name} & \text{Diagram} & \text{Restrictions} & I_{\rm diff} \\ 
\hline \hline 
\text{D.4a} & \bigl( {\rm D}^{(1)}_n \bigr)^\pi_{n/2-1} & 
\begin{tikzpicture}[baseline=-0.35em,line width=0.7pt,scale=0.8]
\draw[thick] (0,.4) -- (-.5,0) -- (0,-.4);
\draw[thick,dashed] (0,.4) -- (1,.4);
\draw[thick,dashed] (0,-.4) -- (1,-.4);
\draw[thick] (1.6,.7) -- (1,.4) -- (1.4,.2);
\draw[thick] (1.6,-.1) -- (1,-.4) -- (1.4,-.6);
\draw[<->,gray] (0,.3) -- (0,-.3);
\draw[<->,gray] (1,.3) -- (1,-.3);
\draw[<->,gray] (1.4,.1) -- (1.4,-.5);
\draw[<->,gray] (1.6,.6) -- (1.6,0);
\filldraw[fill=black] (-.5,0) circle (.1);
\filldraw[fill=black] (0,.4) circle (.1);
\filldraw[fill=black] (0,-.4) circle (.1);
\filldraw[fill=black] (1,.4) circle (.1);
\filldraw[fill=black] (1,-.4) circle (.1);
\filldraw[fill=white] (1.4,.2) circle (.1) node[below=-3pt]{\hspace{-11pt} \scriptsize 1};
\filldraw[fill=white] (1.4,-.6) circle (.1) node[right=1pt]{\scriptsize $n\!\!-\!\!1$};
\filldraw[fill=white] (1.6,.7) circle (.1) node[right=1pt]{\scriptsize $0$};
\filldraw[fill=white] (1.6,-.1) circle (.1) node[right=1pt]{\scriptsize $n$};
\end{tikzpicture}  
& \begin{array}{c} n \text{ even} \\ \ell = 1 \end{array} & \{0,1\}
\\
\text{D.4b} & \bigl( {\rm D}^{(1)}_n \bigr)^\pi_{(n-3)/2} & 
\hspace{4pt} \begin{tikzpicture}[baseline=-0.35em,line width=0.7pt,scale=0.8]
\draw[thick,domain=90:270] plot({.4*cos(\x)},{.4*sin(\x)});
\draw[thick,dashed] (0,.4) -- (1,.4);
\draw[thick,dashed] (0,-.4) -- (1,-.4);
\draw[thick] (1.6,.7) -- (1,.4) -- (1.4,.2);
\draw[thick] (1.6,-.1) -- (1,-.4) -- (1.4,-.6);
\filldraw[fill=black] (0,.4) circle (.1);
\filldraw[fill=black] (0,-.4) circle (.1);
\draw[<->,gray] (0,.3) -- (0,-.3);
\filldraw[fill=black] (1,.4) circle (.1);
\filldraw[fill=black] (1,-.4) circle (.1);
\draw[<->,gray] (1,.3) -- (1,-.3);
\filldraw[fill=white] (1.4,.2) circle (.1) node[below=-3pt]{\hspace{-11pt} \scriptsize 1};
\filldraw[fill=white] (1.4,-.6) circle (.1) node[right=1pt]{\scriptsize $n\!\!-\!\!1$};
\draw[<->,gray] (1.4,.1) -- (1.4,-.5);
\filldraw[fill=white] (1.6,.7) circle (.1) node[right=1pt]{\scriptsize $0$};
\filldraw[fill=white] (1.6,-.1) circle (.1) node[right=1pt]{\scriptsize $n$};
\draw[<->,gray] (1.6,.6) -- (1.6,0);
\end{tikzpicture}
& \begin{array}{c} n \text{ odd} \\ \ell=1 \end{array} & \{0,1\} 
\\ \hline 
\end{array}
\]
\end{table}

\begin{result}
The bare K-matrix of type D.4, when $|I_{\rm diff}\cup I_{\rm nsf}|=2$, is
\eq{
K(u) = \Id + \frac{u-u^{-1}}{ k_1(u)} \bigg( M_1 - \frac{M^+_{2}(u)}{k^+_{2}(u)} + \frac{M^-_2}{ k^-_2(u)} \bigg),  
\label{D4:K:spec}
}
where $k_1(u)$ and $k^+_2(u) = k_2(u)$ are as in \eqref{k_i(u)}, $M_1 = \sum_{1 \le i \le n} E_{ii}$ is as in Result \ref{Res:CD4}, and the remaining terms are $\al$-deformations of the matrix $M_2(u)$ given by \eqref{D2:CD4}:
\eqn{ 
M^+_2(u) &= u^{-1} (\mu^{-1} E_{11} + \mu \, E_{nn}+E_{1n}+E_{n1}), \qu k^-_2(u) = (\al \, \la)^{-1} + \al (\mu \, u)^{-1} ,\\
M^-_2 &= (\al \, \la)^{-1} E_{-1,-1} + \al \, \la \, E_{-n,-n}+E_{-1,-n}+E_{-n,-1}.
}
Here $\mu = q^{-n+2} \al \, \la$ with $\al, \la \in \K^\times$.
\end{result}

Setting $\al=1$ the K-matrix defined above specializes to the one defined in Result \ref{Res:CD4} when $\ell=1$. 
The K-matrix defined in \eqref{D4:K:spec} has the following properties.

\smallskip

\enlargethispage{1em}

\begin{description}[itemsep=1ex]

\item[Eigendecomposition] \hfill  $V= \Id + \al \la ( E_{-n,-i}-E_{-i,-n} ) + \mu ( E_{n,1}-E_{1,n} ),$ \hfill   \hphantom{\it Eigendecomposition}
\eqn{ 
D(u) &= \displaystyle h_1(u)\, h^-_2(u) E_{-i,-i}+ u^{-2} h^+_2(u) E_{ii}+ \sum_{1 <i \le n} \Big( E_{-i,-i}+ h_1(u) E_{ii} \Big).
}

\item[Half-period] \hfill $K(-u) = K(u) \big|_{\al \to -\al,\; \la \to -\la}$. \hfill \hphantom{\it Half-period} \\[-.5em]

\item[Rotations] \hfill $ 
\begin{aligned}[t]
K^{\pi}(u) &= u^2 h_1(u) K(u) \big|_{\al \to \al^{-1} ,\; \la \to - (-q)^{n-2} \la^{-1}} , \hspace{-5mm}  \\
K^{\flLR}(u) &= K(u) \big|_{\al \to \al^{-1} ,\; \la \to \al \, \la}.
\end{aligned}
$ \hfill \hphantom{\it Rotations}

\item[Reductions] For $\la \in \{ \mu, -\mu^{-1}, \al^{-2} \mu \}$ or $\la \in \{ -\mu, \mu^{-1}, -\al^{-2} \mu \}$ it has $d_{\rm eff}=3$ and is singly regular, $K(-1) \ne \Id$ or $K(1) \ne \Id$, respectively. 

\item[Diagonal cases] \hfill $
\begin{aligned}[t]
\lim_{\al \to 0} K(u) &= \sum_{1 <i \le n} (E_{-i,-i}+u^{-2} E_{ii}) + u^{-2} E_{-1,-1} + E_{11}, \\
\lim_{\al \to \infty} K(u) &= \sum_{1 \le i < n} (E_{-i,-i}+E_{ii}) + u^2 E_{-n,-n} + u^{-2} E_{nn}, 
\end{aligned}$ \hfill \hphantom{\it Diagonal cases} \\
\hfill $\begin{aligned}[t]
\lim_{\la \to 0} K(u) &= \sum_{1 <i \le n} (E_{-i,-i} + u^{-2} E_{ii}) + \frac{q^{n-2} \al + u^{-1}}{q^{n-2} \al + u} E_{-1,-1} + \frac{q^{n-2} \al^{-1} + u^{-1}}{q^{n-2} \al^{-1} + u} E_{11} , \\
\lim_{\la \to \infty} K(u) &= \sum_{1 \le i < n} (E_{-i,-i} + E_{ii}) + \frac{1+q^{n-2} \al \, u}{1+q^{n-2} \al \, u^{-1}} E_{-n,-n} + \frac{q^{n-2} \al^{-1} + u^{-1}}{q^{n-2} \al^{-1} +u} E_{nn}, 
\end{aligned}$ \hfill \\
\[ \lim_{\al \to 0,\la \to \infty \atop \al \, \la \text{ fixed}} K(u) =\Id, \qq \lim_{\al \to \infty,\la \to 0 \atop \al \, \la \text{ fixed}} K(u) = \sum_{1 \le i \le n} (E_{-i,-i} + u^{-2} E_{ii}).\]

\end{description}


\subsection{Untwisted K-matrices of types C.2 and BD.1} \label{sec:K:0}

The C.2 family consists of Satake diagrams $(X,\id)\in \Sat(A)$ with $A$ of type C$^{(1)}_n$. All diagrams in this family are quasistandard and have $|I_{\rm diff}\cup I_{\rm nsf}|=0$. 
Note that diagrams with $|I\backslash X|=1$ are also members of the C.1 family and were studied in Section~\ref{sec:C1}. 

The BD.1 family consist of Satake diagrams $(X,\tau)\in \Sat(A)$ with $\tau\in\{\id,\phi_1,\phi_1,\phi_{12}\}$ and $A$ of type B$^{(1)}_n$ or D$^{(1)}_n$. 
This family has both quasistandard and non-quasistandard Satake diagrams. 
The quasistandard diagrams have $|I_{\rm nsf}|=0$ and $|I_{\rm diff}|$ equal to 0 (general case) or 1 (special case). 
The special case only occurs if $|I^*|=1$ and the corresponding diagrams are also members of the BD.2 family and were studied in Section~\ref{sec:BD2}.
The non-quasistandard diagrams have $|I_{\rm diff}|=0$ and $|I_{\rm nsf}|\in\{1,2\}$ and will be studied in Section \ref{sec:nqs}.


\subsubsection{Family C.2} \label{sec:C2}

Satake diagrams in this family are parametrized by the triple $(n,p_1,p_2)$ with $n\ge2$. 
By rotating with $\pi\in\Sigma_A$, if necessary, we may assume that the affine node satisfies $0 \in X \Rightarrow 0 \in X_1$. With this choice, the diagrams are restrictable precisely if $p_1=0$. 

Set $\ell=p_1$ and $r=n-p_2$, so that $\ell$ and $r$ are bounded by $0\le \ell \le r \le n-\ell$ and $r-\ell$ is even.  We also have $|I^*| -1 = (r-\ell)/2$ and $X_1 = \{0,\ldots,\ell-1\}$, $X_2 = \{ r+1,\ldots, n\}$ and $X_{\rm alt} = \{\ell+1,\ell+3,\ldots,r-3,r-1\}$. Thus $X_1$ is of type ${\rm C}_\ell$ (empty if $\ell=0$), $X_2$ is of type ${\rm C}_{n-r}$ (empty if $r=n$) and $X_{\rm alt}$ is of type ${\rm A}_1\!{}^{\times \frac{r-\ell}{2}}$ (empty if $\ell=r$). 
Note that $I^*=I\backslash X$ and there are no special $\tau$-orbits. The representative diagrams are listed in Table \ref{tab:C2}.

The QP algebra is generated by the elements $x_i$, $y_i$, $k_i$ with $i\in X$ and elements $b_j$ whose reduced expressions~are
\begin{alignat*}{99}
b_\ell &= \casesl{l}{ 
y_\ell - c_r\, T_{w_{X_1}} T_{r+1} (x_\ell )\, k_\ell^{-1} \\[.2em]
y_\ell - c_r\, T_{w_{X_1}} T_{w_{X_2}} (x_\ell )\, k_\ell^{-1} \\
}
& \qu & \casesm{l}{ \text{if } \ell<r, \\[.2em] \text{if } \ell=r,}
\\
b_r &=  y_r - c_r\, T_{r-1} T_{w_{X_2}} (x_r )\, k_r^{-1} && \text{if } \ell<r, \\
b_j &=  y_{j} - c_{j}\, T_{j-1}T_{j+1} (x_{j} )\, k_j^{-1} && \text{if } \ell < j < r \text{ and } j-\ell \text{ even} , 
\intertext{with $c_j$ all independent. Next, in terms of the effective dressing parameters $\om_{\ell+2}, \om_{\ell+4}, \ldots, \om_{r}$ and the scaling parameter $\eta$ we set}
c_\ell &= 
\casesl{l}{  
q^{2(n+1)} \eta^{-2} \\
q\, \eta^{-2} \om_{\ell+2}^{2}  \\ 
-(-q)^{n+1} \eta^{-1} \\ 
(-q)^{\ell+1} \eta^{-1} \om_{\ell+2}
} \hspace{-1mm} &&
\casesm{l}{ 
\text{if } 0=\ell=r, \\
\text{if } 0=\ell<r, \\ 
\text{if } 0<\ell=r<n, \\ 
\text{if } 0<\ell<r<n, 
} 
\\
c_r &= 
\casesl{l}{ 
(-q)^{\bar{r}} \om_r^{-1} \\ 
q\, \om_{n}^{-2}
} \hspace{-1mm} &&
\casesm{l}{
\text{if } \ell < r < n , \\ 
\text{if } \ell < r = n, 
} \\
c_j &= q\,\om_{j}^{-1}\om_{j+2} && \text{if } \ell < j < r \text{ and } j-\ell \text{ even}. \hspace{-.2cm}
\end{alignat*}
Then, solving the boundary intertwining equation \eqref{intw-untw} for all generators of the QP algebra we obtain the following result.

\begin{table}[h]
\caption{Family C.2: representative Satake diagrams.} \label{tab:C2}
\[
\arraycolsep=3pt
\begin{array}{ccccc}
\text{Type} & \text{Name} & \text{Diagram} & \text{Restrictions} \\ 
\hline \hline 
\text{C.2} & \bigl( {\rm C}^{(1)}_n \bigr)^\id_{p_1,{\rm alt},p_2} & 
\begin{tikzpicture}[baseline=-0.25em,line width=0.7pt,scale=0.8]
\draw[double,<-] (-.1,0) -- (-.5,0);
\draw[thick,dashed] (0,0) -- (1,0);
\draw[thick] (1,0) -- (2.5,0);
\draw[thick,dashed] (2.5,0) -- (3.5,0);
\draw[thick] (3.5,0) -- (4.5,0);
\draw[thick,dashed] (4.5,0) -- (5.5,0);
\draw[double,<-] (5.6,0) --  (6,0);
\filldraw[fill=black] (-.5,0) circle (.1) node[left=1pt]{\scriptsize $0$};
\filldraw[fill=black] (0,0) circle (.1) node[above=1pt]{\scriptsize $1$};
\filldraw[fill=black] (1,0) circle (.1); 
\filldraw[fill=white] (1.5,0) circle (.1) node[above=1pt]{\scriptsize $\ell$};
\filldraw[fill=black] (2,0) circle (.1);
\filldraw[fill=white] (2.5,0) circle (.1);
\filldraw[fill=black] (3.5,0) circle (.1);
\filldraw[fill=white] (4,0) circle (.1) node[above=1pt]{\scriptsize $r$};
\filldraw[fill=black] (4.5,0) circle (.1);
\filldraw[fill=black] (5.5,0) circle (.1) node[above]{\scriptsize $n\!-\!1$};
\filldraw[fill=black] (6,0) circle (.1) node[right=1pt]{\scriptsize $n$};
\draw[snake=brace] (1.1,-.2) -- (-0.6,-.2) node[midway,below]{\scriptsize $p_1$};
\draw[snake=brace] (6.1,-.2) -- (4.4,-.2) node[midway,below]{\scriptsize $p_2$};
\end{tikzpicture}  
& \begin{array}{c} r-\ell \text{ even} \\ 0<\ell\le r\le n-\ell \end{array} 
\\
\text{C.2} & \bigl( {\rm C}^{(1)}_n \bigr)^\id_{0,{\rm alt},p_2} & 
\begin{tikzpicture}[baseline=-0.25em,line width=0.7pt,scale=0.8]
\draw[double,->] (1.6,0) -- (2,0);
\draw[thick] (2,0) -- (2.5,0);
\draw[thick,dashed] (2.5,0) -- (3.5,0);
\draw[thick] (3.5,0) -- (4.5,0);
\draw[thick,dashed] (4.5,0) -- (5.5,0);
\draw[double,<-] (5.6,0) --  (6,0);
\filldraw[fill=white] (1.5,0) circle (.1) node[left=1pt]{\scriptsize $0=\ell$};
\filldraw[fill=black] (2,0) circle (.1);
\filldraw[fill=white] (2.5,0) circle (.1);
\filldraw[fill=black] (3.5,0) circle (.1);
\filldraw[fill=white] (4,0) circle (.1) node[above=1pt]{\scriptsize $r$};
\filldraw[fill=black] (4.5,0) circle (.1);
\filldraw[fill=black] (5.5,0) circle (.1) node[above]{\scriptsize $n\!-\!1$};
\filldraw[fill=black] (6,0) circle (.1) node[right=1pt]{\scriptsize $n$};
\draw[snake=brace] (6.1,-.2) -- (4.4,-.2) node[midway,below]{\scriptsize $p_2$};
\end{tikzpicture} 
& \begin{array}{c} r \text{ even} \\ 0\le r<n \end{array} 
\\ 
\text{C.2} & \bigl( {\rm C}^{(1)}_n \bigr)^\id_{0,{\rm alt},0} & 
\begin{tikzpicture}[baseline=-0.25em,line width=0.7pt,scale=0.8]
\draw[double,->] (1.6,0) -- (2,0);
\draw[thick] (2,0) -- (2.5,0);
\draw[thick,dashed] (2.5,0) -- (3.5,0);
\draw[double,<-] (3.6,0) --  (4,0);
\filldraw[fill=white] (1.5,0) circle (.1) node[left=1pt]{\scriptsize $0=\ell$};
\filldraw[fill=black] (2,0) circle (.1);
\filldraw[fill=white] (2.5,0) circle (.1);
\filldraw[fill=black] (3.5,0) circle (.1);
\filldraw[fill=white] (4,0) circle (.1) node[right=1pt]{\scriptsize $r=n$};
\draw[] (2,.5) -- (2,.5);
\draw[] (2,-.5) -- (2,-.5);
\end{tikzpicture} 
& n \text{ even}  
\\ \hline 
\end{array}
\]
\end{table}

\begin{result} \label{Res:C2}
The bare K-matrix of type C.2 is of the form \eqref{K(u):X} with
\eq{
\begin{aligned}
M_1(u) &= \sum_{\bar \ell \le i \le n} (\la \, \mu \, u \, E_{-i,-i} + E_{i,i}) , \\
M_2(u) &= \sum_{\bar r \le i < \bar \ell} \big(\la \, E_{-i,-i} + \la^{-1} E_{i,i} + \eps_i\,(E_{i,-i-\eps_i} + E_{-i-\eps_i,i})\big) , 
\end{aligned}\!\! \label{C2:K:parts}
}
where $\la = q^{N/2-r}$, $\mu = q^{-\ell}$ and $\eps_i=(-1)^{i+\bar \ell}$.
\end{result}

This K-matrix has the following properties.

\smallskip

\begin{description}[itemsep=1ex]

\item[Eigendecomposition] \hfill  $V= \Id + \la \sum_{\bar r \le i < \bar \ell} \eps_i ( E_{-i-\eps_i,i} - E_{i,-i-\eps_i} ),$ \hfill   \hphantom{\it Eigendecomposition}
\eqn{ 
D(u) &= \displaystyle \sum_{|i| < \bar r} \Big( E_{-i,-i}+E_{ii} \Big) + \sum_{\bar r \le i < \bar \ell}\! \Big( E_{-i,-i}+ h_1(u)\, h_2(u) E_{ii} \Big) + h_1(u) \!\sum_{\bar \ell \le  i \le n}\! \Big( u^2 E_{-i,-i} + E_{ii} \Big).
}

\item[Affinization] For $\ell=0$ the Satake diagram is restrictable and the affinization identity is
\[
K(u) =\frac{ \la u^{-1} K_0 - \la^{-1} u K_0^{-1} }{ \la u^{-1} - \la^{-1} u} ,
\]
where, upon multiplying by $C$ and dressing, $K_0$ corresponds to the CII solution of the constant twisted RE reported in \cite[Sec.~3]{NoSu}. In this case we also have a half-period symmetry: $K(-u)=K(u)$.

\item[Rotations] \; For $n$ even and $\ell+r=n$ we have $K^{\pi}(u)  = -u \, K(u)$.

\item[Bar-symmetry] $K(u)^{-1}  = G(\bm\om) J K(u)|_{\la \to \la^{-1},\, \mu \to \mu^{-1}} J G(\bm\om)$ where $\bm \om = (-1,\ldots,-1) \in \K^n$.

\item[Reductions] 

For $n$ even and $\ell+r=n$, when $0<\ell<n/2$, it has $d_{\rm eff}=3$ and is singly regular, $K(1)\ne \Id$; when $\ell=0$ it is a non-regular \gim:
\eq{ \label{C2:K:cst}
K(u) = - \sum_{1 \le i \le n} \eps_i (E_{i,-i-\eps_i} + E_{-i-\eps_i,i}).
}
\item[Diagonal cases] For $\ell=r$ it coincides with the one of type C.1 with $\ell=r$.

\end{description}

\begin{rmk} \label{R:C2:rotate}
K-matrices of type C.2 also satisfy \eqref{Zpi:bijection}, see Remark \ref{R:C1:rotate}. \hfill\rmkend
\end{rmk}


\subsubsection{Family BD.1 (generic case)} \label{sec:BD1}

Satake diagrams in this family are parametrized by the tuple $(N,p_1,p_2,o_1,o_2)$ with $N\ge7$. According to the value of $o_1+o_2$ we may distinguish two subfamilies if $N$ is odd: B.1a ($o_1+o_2=0$) and B.1b ($o_1=1,o_2=0$), and three subfamilies if $N$ is even: D.1a ($o_1+o_2=0$), D.1b ($o_1+o_2=1$) and D.1c ($o_1+o_2=2$). 
Recall that $p_i$, the number of $\tau$-orbits in $X_i=X_{Y_i}$, is even unless $N$ is odd and $i=2$. Rotating with a suitable $\si \in \Sigma_A$, if necessary, we may assume that the affine node satisfies $0 \in X \Rightarrow 0 \in X_1$. With this choice, the diagrams are restrictable if $o_1 = p_1 = 0$. 

Set $\ell=p_1 + o_1$ and $r=n-p_2-o_2$, so that $\ell$ and $r$ are bounded by $0 \le \ell \le r \le n$ and, if $N$ is even and $o_1=o_2$, in addition by $\ell+r \le n$. We restrict to the generic case, $|I_{\rm nsf}\cup I_{\rm diff}|=0$, hence $(\ell,r) \notin \{ (1,1), (\lfloor N/2\rfloor\!-\!1,\lceil N/2\rceil \!-\!1) \}$ and $r\ne\ell+2$.
In terms of this parametrization we have $|I^*| -1 = r - \ell$ and $X_1 = \{0,\ldots,\ell-1\}$ if $\ell>1$ and $X_2 = \{ r+1,\ldots,n\}$ if $N$ is odd or $r<n-1$, so that $X_1$ is of type ${\rm D}_\ell$ and $X_2$ is of type ${\rm B}_{n-r}$ if $N$ is odd and of type ${\rm D}_{n-r}$ if $N$ is even; otherwise respectively $X_1=\emptyset$ and $X_2 =\emptyset$. We choose $I^* = \{\ell,\ell+1,\ldots,r\}$. 
The representative diagrams are listed in Table \ref{tab:BD1:gen}. They have $I_{\rm ns} =\{ \ell+\delta_{\ell \ne 0},\ell+\delta_{\ell \ne 0}+1,\ldots, r-\delta_{r \ne n} \}$.

{
\arraycolsep=2pt\def\arraystretch{1.2}
\begin{table}[h]
\caption{Family BD.1: representative Satake diagrams with $|I_{\rm nsf} \cup I_{\rm diff}| = 0$. } \label{tab:BD1:gen}
\[
\begin{array}{cccccc}
\rm Type & \rm Name & \rm Diagram & \!\! (o_1,o_2)\!\! & \rm Restrictions \\ 
\hline\hline 
\text{B.1a} & \bigl( {\rm B}_n^{(1)} \bigr)^\text{id}_{p_1;p_2}  &
\begin{tikzpicture}[baseline=-0.35em,line width=0.7pt,scale=0.8]
\draw[thick] (-.6,.3) -- (0,0) -- (-.4,-.3);
\draw[thick,dashed] (0,0) -- (1,0);
\draw[thick] (1,0) -- (1.5,0);
\draw[thick,dashed] (1.5,0) -- (2.5,0);
\draw[thick] (2.5,0) -- (3,0);
\draw[thick,dashed] (3,0) -- (4,0);
\draw[double,->] (4,0) --  (4.4,0);
\filldraw[fill=black] (-.6,.3) circle (.1) node[left=1pt]{\scriptsize $0$};
\filldraw[fill=black] (-.4,-.3) circle (.1) node[left=1pt]{\scriptsize $1$};
\filldraw[fill=black] (0,0) circle (.1) node[above=1pt]{\scriptsize $2$};
\filldraw[fill=black] (1,0) circle (.1);
\filldraw[fill=white] (1.5,0) circle (.1) node[above=1pt]{\scriptsize $\ell$};
\filldraw[fill=white] (2.5,0) circle (.1) node[above=1pt]{\scriptsize $r$};
\filldraw[fill=black] (3,0) circle (.1);
\filldraw[fill=black] (4,0) circle (.1) node[above]{\scriptsize $n\!\!-\!\!1$};
\filldraw[fill=black] (4.5,0) circle (.1) node[right=1pt]{\scriptsize $n$};
\draw[snake=brace] (-.7,.55) -- (1.1,.55) node[midway,above]{\scriptsize $p_1$};
\draw[snake=brace] (4.6,-.25) -- (2.9,-.25) node[midway,below]{\scriptsize $p_2$};
\end{tikzpicture} & (0,0) & \begin{array}{c} \ell \text{ even}, \, r \ne \ell+2 \\ 0 \le \ell \le r \le n \\ (\ell,r)\ne(n\!-\!1,n)  \end{array}  \\[1.5em]
\text{B.1b} & \bigl( {\rm B}_n^{(1)} \bigr)^{\flL}_{p_1;p_2}  &
\begin{tikzpicture}[baseline=-0.35em,line width=0.7pt,scale=0.8]
\draw[thick] (-.5,.3) -- (0,0) -- (-.5,-.3);
\draw[thick,dashed] (0,0) -- (1,0);
\draw[thick] (1,0) -- (1.5,0);
\draw[thick,dashed] (1.5,0) -- (2.5,0);
\draw[thick] (2.5,0) -- (3,0);
\draw[thick,dashed] (3,0) -- (4,0);
\draw[double,->] (4,0) --  (4.4,0);
\filldraw[fill=black] (-.5,.3) circle (.1) node[left=1pt]{\scriptsize $0$};
\filldraw[fill=black] (-.5,-.3) circle (.1) node[left=1pt]{\scriptsize $1$};
\filldraw[fill=black] (0,0) circle (.1) node[above=1pt]{\scriptsize $2$};
\filldraw[fill=black] (1,0) circle (.1);
\filldraw[fill=white] (1.5,0) circle (.1) node[above=1pt]{\scriptsize $\ell$};
\filldraw[fill=white] (2.5,0) circle (.1) node[above=1pt]{\scriptsize $r$};
\filldraw[fill=black] (3,0) circle (.1);
\filldraw[fill=black] (4,0) circle (.1) node[above]{\scriptsize $n\!\!-\!\!1$};
\filldraw[fill=black] (4.5,0) circle (.1) node[right=1pt]{\scriptsize $n$};
\draw[snake=brace] (-.6,.55) --  (1.1,.55) node[midway,above]{\scriptsize $p_1$};
\draw[snake=brace] (4.6,-.25) -- (2.9,-.25) node[midway,below]{\scriptsize $p_2$};
\draw[<->,gray] (-.5,.2) -- (-.5,-.2);
\end{tikzpicture}  & (1,0) & \begin{array}{c} \ell \text{ odd}, \, r\ne \ell+2  \\ 0 \le \ell \le r \le n \\ (\ell,r) \ne (1,1) \\ (\ell,r) \ne (n\!-\!1,n) \end{array}  \\[1.5em] \hline
\text{D.1a} & \bigl( {\rm D}^{(1)}_n \bigr)^\id_{p_1,p_2}  &
\begin{tikzpicture}[baseline=-0.35em,line width=0.7pt,scale=0.8]
\draw[thick] (-.6,.3) -- (0,0) -- (-.4,-.3);
\draw[thick,dashed] (0,0) -- (1,0);
\draw[thick] (1,0) -- (1.5,0);
\draw[thick,dashed] (1.5,0) -- (2.5,0);
\draw[thick] (2.5,0) -- (3,0);
\draw[thick,dashed] (3,0) -- (4,0);
\draw[thick] (4.4,.3) -- (4,0) -- (4.6,-.3);
\filldraw[fill=black] (-.6,.3) circle (.1) node[left=1pt]{\scriptsize $0$};
\filldraw[fill=black] (-.4,-.3) circle (.1) node[left=1pt]{\scriptsize $1$};
\filldraw[fill=black] (0,0) circle (.1) node[above=1pt]{\scriptsize $2$};
\filldraw[fill=black] (1,0) circle (.1);
\filldraw[fill=white] (1.5,0) circle (.1) node[above=1pt]{\scriptsize $\ell$};
\filldraw[fill=white] (2.5,0) circle (.1) node[above=1pt]{\scriptsize $r$};
\filldraw[fill=black] (3,0) circle (.1);
\filldraw[fill=black] (4,0) circle (.1) node[above]{\scriptsize $\hspace{-8pt} n\!\!-\!\!2$};
\filldraw[fill=black] (4.4,.3) circle (.1) node[right=1pt]{\scriptsize $n\!\!-\!\!1$};
\filldraw[fill=black] (4.6,-.3) circle (.1) node[right=1pt]{\scriptsize $n$};
\draw[snake=brace] (-.7,.5) -- (1.1,.5) node[midway,above]{\scriptsize $p_1$};
\draw[snake=brace] (4.7,-.5) -- (2.9,-.5) node[midway,below]{\scriptsize $p_2$};
\end{tikzpicture} & (0,0) & \begin{array}{c} \ell, n\!-\!r \text{ even},  r\ne \ell+2 \\ 0 \le \ell \le r \le n-\ell \end{array}  \\
\text{D.1b} & \bigl( {\rm D}^{(1)}_n \bigr)^{\flR}_{p_1;p_2}  & 
\begin{tikzpicture}[baseline=-0.35em,line width=0.7pt,scale=0.8]
\draw[thick] (-.6,.3) -- (0,0) -- (-.4,-.3);
\draw[thick,dashed] (0,0) -- (1,0);
\draw[thick] (1,0) -- (1.5,0);
\draw[thick,dashed] (1.5,0) -- (2.5,0);
\draw[thick] (2.5,0) -- (3,0);
\draw[thick,dashed] (3,0) -- (4,0);
\draw[thick] (4.5,.3) -- (4,0) -- (4.5,-.3);
\filldraw[fill=black] (-.6,.3) circle (.1) node[left=1pt]{\scriptsize $0$};
\filldraw[fill=black] (-.4,-.3) circle (.1) node[left=1pt]{\scriptsize $1$};
\filldraw[fill=black] (0,0) circle (.1) node[above=1pt]{\scriptsize $2$};
\filldraw[fill=black] (1,0) circle (.1);
\filldraw[fill=white] (1.5,0) circle (.1) node[above=1pt]{\scriptsize $\ell$};
\filldraw[fill=white] (2.5,0) circle (.1) node[above=1pt]{\scriptsize $r$};
\filldraw[fill=black] (3,0) circle (.1);
\filldraw[fill=black] (4,0) circle (.1) node[above]{\scriptsize $\hspace{-6pt}n\!\!-\!\!2$};
\filldraw[fill=black] (4.5,.3) circle (.1) node[right=1pt]{\scriptsize $n\!\!-\!\!1$};
\filldraw[fill=black] (4.5,-.3) circle (.1) node[right=1pt]{\scriptsize $n$};
\draw[<->,gray] (4.5,.2) -- (4.5,-.2);
\draw[snake=brace] (-.7,.5) -- (1.1,.5) node[midway,above]{\scriptsize $p_1$};
\draw[snake=brace] (4.6,-.5) -- (2.9,-.5) node[midway,below]{\scriptsize $p_2$};
\end{tikzpicture}  & (0,1) & \begin{array}{c} \ell \text{ even}, n\!-\!r \text{ odd} \\  r \ne \ell+2 \\ 0 \le \ell \le r \le n  \\ (\ell,r)\ne(n\!-\!1,n\!-\!1) \end{array} 
\\[2em]
\text{D.1c} & \bigl( {\rm D}^{(1)}_n \bigr)^{\flLR}_{p_1,p_2}  & \hspace{5.9pt}
\begin{tikzpicture}[baseline=-0.35em,line width=0.7pt,scale=0.8]
\draw[thick] (-.5,.3) -- (0,0) -- (-.5,-.3);
\draw[thick,dashed] (0,0) -- (1,0);
\draw[thick] (1,0) -- (1.5,0);
\draw[thick,dashed] (1.5,0) -- (2.5,0);
\draw[thick] (2.5,0) -- (3,0);
\draw[thick,dashed] (3,0) -- (4,0);
\draw[thick] (4.5,.3) -- (4,0) -- (4.5,-.3);
\filldraw[fill=black] (-.5,.3) circle (.1) node[left=1pt]{\scriptsize $0$};
\filldraw[fill=black] (-.5,-.3) circle (.1) node[left=1pt]{\scriptsize $1$};
\draw[<->,gray] (-.5,.2) -- (-.5,-.2);
\filldraw[fill=black] (0,0) circle (.1) node[above=1pt]{\scriptsize $2$};
\filldraw[fill=black] (1,0) circle (.1);
\filldraw[fill=white] (1.5,0) circle (.1) node[above=1pt]{\scriptsize $\ell$};
\filldraw[fill=white] (2.5,0) circle (.1) node[above=1pt]{\scriptsize $r$};
\filldraw[fill=black] (3,0) circle (.1);
\filldraw[fill=black] (4,0) circle (.1) node[above]{\scriptsize $\hspace{-4pt} n\!\!-\!\!2$};
\filldraw[fill=black] (4.5,.3) circle (.1) node[right=1pt]{\scriptsize $n\!\!-\!\!1$};
\filldraw[fill=black] (4.5,-.3) circle (.1) node[right=1pt]{\scriptsize $n$};
\draw[<->,gray] (4.5,.2) -- (4.5,-.2);
\draw[snake=brace] (-.6,.5) -- (1.1,.5) node[midway,above]{\scriptsize $p_1$};
\draw[snake=brace] (4.6,-.5) -- (2.9,-.5) node[midway,below]{\scriptsize $p_2$};
\end{tikzpicture}  & (1,1) & \begin{array}{c} \ell, n\!-\!r \text{ odd}, \, r \ne \ell+2 \\ 0 \le \ell \le r \le n-\ell \\ (\ell,r)\ne (1,1) \\ (\ell,r) \ne (n\!-\!1,n\!-\!1)  \end{array}  
\\ \hline
\end{array}
\]
\end{table}
}

{\allowdisplaybreaks
The QP algebra is generated by $x_i$, $y_i$, $k_i$ with $i\in X$ and $(k_0k_1^{-1})^{\pm 1}$ if $\ell=1$ and $(k_{n-1}k_n^{-1})^{\pm 1}$ if $r=n-1$ and $N$ is even, and elements $b_j$ whose reduced expressions are
\begin{alignat*}{99}
b_0 &= \casesl{l}{
y_0 - c_0 \, T_{w_{X_2}}(x_{0})\, k_0^{-1}, \\
y_0 - c_0 \, x_{o_1} k_0^{-1}, \\
} & \qu & \casesm{l}{
\text{if } 0 = \ell \le r \le 1, \\ 
\text{if } 0 \le \ell \le 1 < r , 
} \\
b_1 &= y_1 - c_1 \, x_{1-o_1} k_1^{-1}  && \text{if } 0 \le \ell \le 1 < r, 
\\
b_\ell &= \casesl{l}{
y_\ell - c_\ell \, T_{w_{X_1}}(x_{\ell})\, k_\ell^{-1} \\
y_\ell - c_\ell \, T_{w_{X_1}}T_{w_{X_2}}(x_\ell)\, k_\ell^{-1} \\
} && \casesm{l}{
\text{if } 2\le \ell < r, \\ 
\text{if } 2\le \ell = r, 
} \\
b_j &= y_j - c_j \, x_j\, k_j^{-1} && \text{if } \ell<j<r<n-(-1)^N,  \\
b_r &= y_r - c_r \, T_{w_{X_2}}(x_r)\, k_r^{-1} && \text{if } \ell<r<n-(-1)^N, \\
b_{n-1} &= y_{n-1} - c_{n-1}\, x_{n-1+o_2}\, k_{n-1}^{-1} && \text{if } \ell<n-(-1)^N\le r, \\ 
b_{n} &= y_n - c_n\, x_{n-o_2}\, k_n^{-1} && \text{if } \ell<n-(-1)^N\le r,
\intertext{where $c_0=c_1$ if $\ell=1$ and $c_{n-1}=c_n$ if $r=n-(-1)^N$. Next, in terms of the effective dressing parameters $\omega_{\ell+1},\ldots,\omega_r$ and the scaling parameter $\eta$ we assign}
c_\ell &= \casesl{l}{
q^{2(\lceil N/2 \rceil -1)}\eta^{-2}\\
-(-q)^{\lceil N/2 \rceil -2} \eta^{-2} \om_{1}^2 \\ 
q^{-1} \eta^{-2} \om_{1}^2\om_{2}^2 \\ 
-(-q)^{\ell-1} \eta^{-1} \om_{\ell+1}^2\!\! \\
-(-q)^{\lceil N/2 \rceil - 1}\eta^{-1}
} & \qu & \casesm{l}{
\text{if } 0=\ell=r,\\ 
\text{if } 0=\ell<r=1, \\
\text{if } 0=\ell<r-1, \\ 
\text{if } 1\le\ell<r,  \\
\text{if } 2\le\ell=r, 
}   \\ 
c_j &= q^{-1} \om_j^{-2} \om_{j+1}^2 && \text{if } \ell<j<r, 
\\
c_r &= \casesl{l}{-(-q)^{\lceil N/2 \rceil-1-r} \om_r^{-2} \\ q^{-1} \om_{n-1}^{-2} \om_n^{-2} } && \casesm{l}{\text{if } \ell<r<\lceil N/2 \rceil,\!\! \\ \text{if } \ell<r=N/2. } 
\end{alignat*}
Then, solving the boundary intertwining equation \eqref{intw-untw} for all generators of the QP algebra we obtain the following result.
}

\begin{result} \label{res:BD1} 
The bare K-matrix of type BD.1 when $|I_{\rm nsf} \cup I_{\rm diff}| = 0$ is given by \eqref{K(u):X} with $M_1(u)$ as in \eqref{C2:K:parts} and $\la=q^{N/2-r}$, $\mu = q^{-\ell}$ as before, and 
\eq{ \label{BD1:D_2}
M_2(u) = \sum_{\bar r\le i < \bar \ell} ( \la \, E_{-i,-i} + \la^{-1} E_{ii} + E_{-i,i} + E_{i,-i}) .
}
\end{result}

For $N$ odd and $(\ell,r)=(0,n)$ this K-matrix, upon dressing, coincides with the one in \cite[(163-164)]{MLS}; for $N$ even and $(\ell,r)=(0,1)$ it corresponds to \cite[(199)]{MLS}. 
Moreover, it has the same properties as the one of type C.2 except for the following.

\smallskip

\begin{description}[itemsep=1ex]
\item[Eigendecomposition]  $V= \Id + \la \sum_{\bar r \le i < \bar \ell} \big( E_{-i,i} - E_{i,-i} \big)$ and $D(u)$ as for type C.2.

\item[Affinization] The constant K-matrix $K_0$ corresponds to the BDI solution of the constant twisted RE reported in \cite[Sec.~3]{NoSu}.

\item[Rotations] \hfill $K^{\flL}(u) = K(u)$ if $r>0$ and $K^{\flR}(u) = K(u)$ if $N$ is even. \hfill \hphantom{\it Rotations}

\item[Bar-symmetry] \hfill $K(u)^{-1}  = J K(u)|_{\la \to \la^{-1},\, \mu \to \mu^{-1}} J$. \hfill \hphantom{\it Bar-symmetry} 

\item[Reductions] For $N$ even and $(\ell,r)=(0,n)$ it is a non-regular \gim~$K(u)=-J$.

\item[Diagonal cases] For $\ell=r\ne1$ it coincides with the one of type BD.2 with $\ell=r$.

\end{description}

\medskip

\begin{rmk} \label{R:D1:rotate} 
K-matrices of type D.1ac also satisfy \eqref{Zpi:bijection}, see Remark \ref{R:C1:rotate}. \hfill \rmkend
\end{rmk}


\subsection{Untwisted K-matrices for non-quasistandard QP algebras of types BCD.1} \label{sec:nqs}

In this section we study non-quasistandard weak Satake diagrams for the family BCD.1. 
The representative diagrams and special $\tau$-orbits are listed in Table \ref{tab:BD1:nstd}. Note that $(X,\tau)\in \Sat(A)$ for the BD.1 family, but $(X,\tau)\in \WSat(A)$ for the C.1 family. It will be shown below that all these diagrams yield K-matrices that are not generalized cross matrices, which sets them apart from the other cases studied in this paper.

{
\arraycolsep=2pt\def\arraystretch{1.2}
\begin{table}[h]
\caption{Family BCD.1: representative non-quasistandard weak Satake diagrams. 
In all cases $I_{\rm diff} = \emptyset$ and $I_{\rm nsf} = I_{\rm ns}$, except when $(\ell,r)=(n-2,n)$ and $N$ is odd, in which case $I_{\rm ns} = \{ n-1,n\}$. }
\label{tab:BD1:nstd}
\[
\begin{array}{cccccc}
\text{Type} & \rm Name & \rm Diagram & \!\!(o_1,o_2)\!\! & \rm Restrictions & I_{\rm nsf}
\\ 

\hline\hline
\text{B.1a} & \bigl( {\rm B}_n^{(1)} \bigr)^\id_{0;n-2} &
\begin{tikzpicture}[baseline=-0.35em,line width=0.7pt,scale=0.8]
\draw[thick] (-.6,.3) -- (0,0) -- (-.4,-.3);
\draw[thick] (0,0) -- (.5,0);
\draw[thick,dashed] (.5,0) -- (1.5,0);
\draw[double,->] (1.5,0) --  (1.9,0);
\filldraw[fill=white] (-.6,.3) circle (.1) node[left=1pt]{\scriptsize $0$};
\filldraw[fill=white] (-.4,-.3) circle (.1) node[left=1pt]{\scriptsize $1$};
\filldraw[fill=white] (0,0) circle (.1) node[above=1pt]{\scriptsize $2$};
\filldraw[fill=black] (.5,0) circle (.1) node[above]{\scriptsize $3$};
\filldraw[fill=black] (1.5,0) circle (.1) node[above]{\scriptsize $n\!\!-\!\!1$};
\filldraw[fill=black] (2,0) circle (.1) node[right=1pt]{\scriptsize $n$};
\end{tikzpicture} & (0,0) & (\ell,r)=(0,2) & \{0,1\} \\ 
\hline

\text{D.1a} & \! \bigl( {\rm D}^{(1)}_n \bigr)^\id_{0,n-2} &
\begin{tikzpicture}[baseline=-0.35em,line width=0.7pt,scale=0.8]
\draw[thick] (-.6,.3) -- (0,0) -- (-.4,-.3);
\draw[thick] (0,0) -- (.5,0);
\draw[thick,dashed] (0.5,0) -- (1.5,0);
\draw[thick] (1.9,.3) -- (1.5,0) -- (2.1,-.3);
\filldraw[fill=white] (-.6,.3) circle (.1) node[left=1pt]{\scriptsize $0$};
\filldraw[fill=white] (-.4,-.3) circle (.1) node[left=1pt]{\scriptsize $1$};
\filldraw[fill=white] (0,0) circle (.1) node[above=1pt]{\scriptsize $2$};
\filldraw[fill=black] (.5,0) circle (.1) node[above=1pt]{\scriptsize $3$};
\filldraw[fill=black] (1.5,0) circle (.1) node[above]{\scriptsize $\hspace{-8pt} n\!\!-\!\!2$};
\filldraw[fill=black] (1.9,.3) circle (.1) node[right=1pt]{\scriptsize $n\!\!-\!\!1$};
\filldraw[fill=black] (2.1,-.3) circle (.1) node[right=1pt]{\scriptsize $n$};
\end{tikzpicture}  & (0,0) & \!\begin{array}{c} n \text{ even} \\ (\ell,r)=(0,2) \end{array} \! & \multirow{3}{*}{\{0,1\}} \\
\text{D.1b} & \bigl( {\rm D}^{(1)}_n \bigr)^{\flR}_{0;n-3} & 
\begin{tikzpicture}[baseline=-0.35em,line width=0.7pt,scale=0.8]
\draw[thick] (-.6,.3) -- (0,0) -- (-.4,-.3);
\draw[thick] (0,0) -- (.5,0);
\draw[thick,dashed] (0.5,0) -- (1.5,0);
\draw[thick] (2,.3) -- (1.5,0) -- (2,-.3);
\filldraw[fill=white] (-.6,.3) circle (.1) node[left=1pt]{\scriptsize $0$};
\filldraw[fill=white] (-.4,-.3) circle (.1) node[left=1pt]{\scriptsize $1$};
\filldraw[fill=white] (0,0) circle (.1) node[above=1pt]{\scriptsize $2$};
\filldraw[fill=black] (.5,0) circle (.1) node[above=1pt]{\scriptsize $3$};
\filldraw[fill=black] (1.5,0) circle (.1) node[above]{\scriptsize $\hspace{-4pt} n\!\!-\!\!2$};
\filldraw[fill=black] (2,.3) circle (.1) node[right=1pt]{\scriptsize $n\!\!-\!\!1$};
\filldraw[fill=black] (2,-.3) circle (.1) node[right=1pt]{\scriptsize $n$};
\draw[<->,gray] (2,.2) -- (2,-.2);
\end{tikzpicture}  & (0,1) & \!\begin{array}{c} n \text{ odd} \\ (\ell,r)=(0,2) \end{array} \! &  \\
\hline\hline 

\text{C.1} & \! \bigl( {\rm C}^{(1)}_n \bigr)^\id_{\ell;n-\ell-2} &
\begin{tikzpicture}[baseline=-0.25em,line width=0.7pt,scale=0.8]
\draw[double,<-] (-.1,0) -- (-.5,0);
\draw[thick,dashed] (0,0) -- (1,0);
\draw[thick] (1,0) -- (3,0);
\draw[thick,dashed] (3,0) -- (4,0);
\draw[double,<-] (4.1,0) --  (4.5,0);
\filldraw[fill=black] (-.5,0) circle (.1) node[left=1pt]{\scriptsize $0$};
\filldraw[fill=black] (0,0) circle (.1) node[above=1pt]{\scriptsize $1$};
\filldraw[fill=black] (1,0) circle (.1); 
\filldraw[fill=white] (1.5,0) circle (.1) node[above=1pt]{\scriptsize $\ell$};
\filldraw[fill=white] (2,0) circle (.1) node[below=1pt]{\scriptsize $\ell\!+\!1$};
\filldraw[fill=white] (2.5,0) circle (.1) node[above=1pt]{\scriptsize $\ell\!+\!2$};
\filldraw[fill=black] (3,0) circle (.1);
\filldraw[fill=black] (4,0) circle (.1) node[above]{\scriptsize $n\!-\!1$};
\filldraw[fill=black] (4.5,0) circle (.1) node[right=1pt]{\scriptsize $n$};
\draw[] (0,.75) -- (0,.75);
\end{tikzpicture}  & (0,0) & 1\le \ell \le n\!-\!3 & \{\ell\!+\!1\} \\
\hline\hline
\text{B.1a} & \bigl( {\rm B}_n^{(1)} \bigr)^\text{id}_{\ell;n-\ell-2}  &
\begin{tikzpicture}[baseline=-0.35em,line width=0.7pt,scale=0.8]
\draw[thick] (-.6,.3) -- (0,0) -- (-.4,-.3);
\draw[thick,dashed] (0,0) -- (1,0);
\draw[thick] (1,0) -- (3,0);
\draw[thick,dashed] (3,0) -- (4,0);
\draw[double,->] (4,0) --  (4.4,0);
\filldraw[fill=black] (-.6,.3) circle (.1) node[left=1pt]{\scriptsize $0$};
\filldraw[fill=black] (-.4,-.3) circle (.1) node[left=1pt]{\scriptsize $1$};
\filldraw[fill=black] (0,0) circle (.1) node[above=1pt]{\scriptsize $2$};
\filldraw[fill=black] (1,0) circle (.1);
\filldraw[fill=white] (1.5,0) circle (.1) node[above=1pt]{\scriptsize $\ell$};
\filldraw[fill=white] (2,0) circle (.1) node[below=1pt]{\scriptsize $\ell\!\!+\!\!1$};
\filldraw[fill=white] (2.5,0) circle (.1) node[above]{\scriptsize $\ell\!\!+\!\!2$};
\filldraw[fill=black] (3,0) circle (.1);
\filldraw[fill=black] (4,0) circle (.1) node[above]{\scriptsize $n\!\!-\!\!1$};
\filldraw[fill=black] (4.5,0) circle (.1) node[right=1pt]{\scriptsize $n$};
\end{tikzpicture} & (0,0) & 
\!\begin{array}{c} 1 \le \ell < n\!-\!1 \\ r=\ell\!+\!2, \, \ell \text{ even} \end{array} \!  & \,\multirow{3}{*}{$\{\ell\!+\!1\}$} 
\\
\text{B.1b} & \bigl( {\rm B}_n^{(1)} \bigr)^{\flL}_{\ell-1;n-\ell-2}\!  & 
\begin{tikzpicture}[baseline=-0.35em,line width=0.7pt,scale=0.8]
\draw[thick] (-.5,.3) -- (0,0) -- (-.5,-.3);
\draw[<->,gray] (-.5,.2) -- (-.5,-.2);
\draw[thick,dashed] (0,0) -- (1,0);
\draw[thick] (1,0) -- (3,0);
\draw[thick,dashed] (3,0) -- (4,0);
\draw[double,->] (4,0) --  (4.4,0);
\filldraw[fill=black] (-.5,.3) circle (.1) node[left=1pt]{\scriptsize $0$};
\filldraw[fill=black] (-.5,-.3) circle (.1) node[left=1pt]{\scriptsize $1$};
\filldraw[fill=black] (0,0) circle (.1) node[above=1pt]{\scriptsize $2$};
\filldraw[fill=black] (1,0) circle (.1);
\filldraw[fill=white] (1.5,0) circle (.1) node[above=1pt]{\scriptsize $\ell$};
\filldraw[fill=white] (2,0) circle (.1) node[below=1pt]{\scriptsize $\ell\!\!+\!\!1$};
\filldraw[fill=white] (2.5,0) circle (.1) node[above]{\scriptsize $\ell\!\!+\!\!2$};
\filldraw[fill=black] (3,0) circle (.1);
\filldraw[fill=black] (4,0) circle (.1) node[above]{\scriptsize $n\!\!-\!\!1$};
\filldraw[fill=black] (4.5,0) circle (.1) node[right=1pt]{\scriptsize $n$};
\end{tikzpicture} & (1,0) & 
\!\begin{array}{c} 1 \le \ell < n\!-\!1 \\  r=\ell\!+\!2, \, \ell \text{ odd} \end{array} \!  \\ \hline
\text{D.1a} & \bigl( {\rm D}^{(1)}_n \bigr)^\id_{\ell,n-\ell-2}  &
\!\begin{tikzpicture}[baseline=-0.35em,line width=0.7pt,scale=0.8]
\draw[thick] (-.6,.3) -- (0,0) -- (-.4,-.3);
\draw[thick,dashed] (0,0) -- (1,0);
\draw[thick] (1,0) -- (3,0);
\draw[thick,dashed] (3,0) -- (4,0);
\draw[thick] (4.4,.3) -- (4,0) -- (4.6,-.3);
\filldraw[fill=black] (-.6,.3) circle (.1) node[left=1pt]{\scriptsize $0$};
\filldraw[fill=black] (-.4,-.3) circle (.1) node[left=1pt]{\scriptsize $1$};
\filldraw[fill=black] (0,0) circle (.1) node[above=1pt]{\scriptsize $2$};
\filldraw[fill=black] (1,0) circle (.1);
\filldraw[fill=white] (1.5,0) circle (.1) node[above=1pt]{\scriptsize $\ell$};
\filldraw[fill=white] (2,0) circle (.1) node[below=1pt]{\scriptsize $\ell\!\!+\!\!1$};
\filldraw[fill=white] (2.5,0) circle (.1)  node[above=1pt]{\scriptsize $\ell\!\!+\!\!2$};
\filldraw[fill=black] (3,0) circle (.1);
\filldraw[fill=black] (4,0) circle (.1) node[above]{\scriptsize $\hspace{-8pt} n\!\!-\!\!2$};
\filldraw[fill=black] (4.4,.3) circle (.1) node[right=1pt]{\scriptsize $n\!\!-\!\!1$};
\filldraw[fill=black] (4.6,-.3) circle (.1) node[right=1pt]{\scriptsize $n$};
\end{tikzpicture}\! & (0,0) & 
\!\begin{array}{c} 1 \le \ell \le \frac{n}{2},\, r=\ell\!+\!2 \\ \ell \text{ even}, n \text{ even} \end{array} \! \\
\text{\!D.1b} & \bigl( {\rm D}^{(1)}_n \bigr)^{\flR}_{\ell,n-\ell-3}  &
\begin{tikzpicture}[baseline=-0.35em,line width=0.7pt,scale=0.8]
\draw[thick] (-.6,.3) -- (0,0) -- (-.4,-.3);
\draw[thick,dashed] (0,0) -- (1,0);
\draw[thick] (1,0) -- (3,0);
\draw[thick,dashed] (3,0) -- (4,0);
\draw[thick] (4.5,.3) -- (4,0) -- (4.5,-.3);
\filldraw[fill=black] (-.6,.3) circle (.1) node[left=1pt]{\scriptsize $0$};
\filldraw[fill=black] (-.4,-.3) circle (.1) node[left=1pt]{\scriptsize $1$};
\filldraw[fill=black] (0,0) circle (.1) node[above=1pt]{\scriptsize $2$};
\filldraw[fill=black] (1,0) circle (.1);
\filldraw[fill=white] (1.5,0) circle (.1)  node[above=1pt]{\scriptsize $\ell$};
\filldraw[fill=white] (2,0) circle (.1) node[below=1pt]{\scriptsize $\ell\!\!+\!\!1$};
\filldraw[fill=white] (2.5,0) circle (.1) node[above]{\scriptsize $\ell\!\!+\!\!2$};
\filldraw[fill=black] (3,0) circle (.1);
\filldraw[fill=black] (4,0) circle (.1) node[above]{\scriptsize $\hspace{-6pt}n\!\!-\!\!2$};
\filldraw[fill=black] (4.5,.3) circle (.1) node[right=1pt]{\scriptsize $n\!\!-\!\!1$};
\filldraw[fill=black] (4.5,-.3) circle (.1) node[right=1pt]{\scriptsize $n$};
\draw[<->,gray] (4.5,.2) -- (4.5,-.2);
\end{tikzpicture}  & (0,1) & 
\!\begin{array}{c} 1 \le \ell \le n\!-\!3,\, r=\ell\!+\!2 \\ \ell \text{ even}, n \text{ odd} \end{array} \! & \,\{\ell\!+\!1\} \\
\text{D.1c} & \bigl( {\rm D}^{(1)}_n \bigr)^{\flLR}_{\ell-1,n-\ell-3}\!  & \hspace{5.9pt}
\begin{tikzpicture}[baseline=-0.35em,line width=0.7pt,scale=0.8]
\draw[thick] (-.5,.3) -- (0,0) -- (-.5,-.3);
\draw[thick,dashed] (0,0) -- (1,0);
\draw[thick] (1,0) -- (3,0);
\draw[thick,dashed] (3,0) -- (4,0);
\draw[thick] (4.5,.3) -- (4,0) -- (4.5,-.3);
\filldraw[fill=black] (-.5,.3) circle (.1) node[left=1pt]{\scriptsize $0$};
\filldraw[fill=black] (-.5,-.3) circle (.1) node[left=1pt]{\scriptsize $1$};
\draw[<->,gray] (-.5,.2) -- (-.5,-.2);
\filldraw[fill=black] (0,0) circle (.1) node[above=1pt]{\scriptsize $2$};
\filldraw[fill=black] (1,0) circle (.1);
\filldraw[fill=white] (1.5,0) circle (.1) node[above=1pt]{\scriptsize $\ell$};
\filldraw[fill=white] (2,0) circle (.1) node[below=1pt]{\scriptsize $\ell\!\!+\!\!1$};
\filldraw[fill=white] (2.5,0) circle (.1) node[above]{\scriptsize $\ell\!\!+\!\!2$};
\filldraw[fill=black] (3,0) circle (.1);
\filldraw[fill=black] (4,0) circle (.1) node[above]{\scriptsize $\hspace{-4pt} n\!\!-\!\!2$};
\filldraw[fill=black] (4.5,.3) circle (.1) node[right=1pt]{\scriptsize $n\!\!-\!\!1$};
\filldraw[fill=black] (4.5,-.3) circle (.1) node[right=1pt]{\scriptsize $n$};
\draw[<->,gray] (4.5,.2) -- (4.5,-.2);
\end{tikzpicture}  & (1,1) & 
\!\begin{array}{c} 1 \le \ell \le \frac{n}{2},\, r=\ell\!+\!2 \\ \ell \text{ odd}, n \text{ even} \end{array} \! & \\
\hline\hline 
\text{B.1a} & \bigl( {\rm B}_n^{(1)} \bigr)^\text{id}_{n-1;0} &
\begin{tikzpicture}[baseline=-0.35em,line width=0.7pt,scale=0.8]
\draw[thick] (-.6,.3) -- (0,0) -- (-.4,-.3);
\draw[thick,dashed] (0,0) -- (1,0);
\draw[thick] (1,0) -- (1.5,0);
\draw[double,->] (1.5,0) --  (1.9,0);
\filldraw[fill=black] (-.6,.3) circle (.1) node[left=1pt]{\scriptsize $0$};
\filldraw[fill=black] (-.4,-.3) circle (.1) node[left=1pt]{\scriptsize $1$};
\filldraw[fill=black] (0,0) circle (.1) node[above=1pt]{\scriptsize $2$};
\filldraw[fill=black] (1,0) circle (.1);
\filldraw[fill=white] (1.5,0) circle (.1);
\filldraw[fill=white] (2,0) circle (.1) node[right=1pt]{\scriptsize $n$};
\end{tikzpicture} & (0,0) & 
\!\begin{array}{c} (\ell,r)=(n\!-\!1,n) \\  n \text{ odd}\end{array} \! & \{n\}
\\
\text{B.1b} & \bigl( {\rm B}_n^{(1)} \bigr)^{\flL}_{n-2;0} & 
\begin{tikzpicture}[baseline=-0.35em,line width=0.7pt,scale=0.8]
\draw[thick] (-.5,.3) -- (0,0) -- (-.5,-.3);
\draw[<->,gray] (-.5,.2) -- (-.5,-.2);
\draw[thick,dashed] (0,0) -- (1,0);
\draw[thick] (1,0) -- (1.5,0);
\draw[double,->] (1.5,0) --  (1.9,0);
\filldraw[fill=black] (-.5,.3) circle (.1) node[left=1pt]{\scriptsize $0$};
\filldraw[fill=black] (-.5,-.3) circle (.1) node[left=1pt]{\scriptsize $1$};
\filldraw[fill=black] (0,0) circle (.1) node[above=1pt]{\scriptsize $2$};
\filldraw[fill=black] (1,0) circle (.1);
\filldraw[fill=white] (1.5,0) circle (.1);
\filldraw[fill=white] (2,0) circle (.1) node[right=1pt]{\scriptsize $n$};
\end{tikzpicture} & (1,0) & \!\begin{array}{c} (\ell,r)=(n\!-\!1,n) \\  n \text{ even}\end{array} \!  & \{n\} \\\hline
\end{array}
\]
\end{table}
}


\subsubsection{Family BD.1 with $|I_{\rm nsf}|=2$}

Satake diagrams in this family are parametrized by $\ell=0$ and $r=0$ and have $X = \{3,4,\ldots,n\}$. The QP algebra is generated by $x_i$, $y_i$, $k_i$ with $i\in X$ and elements $b_j$ whose reduced expressions are
\[ 
b_j = y_j - c_j\, x_j k_j^{-1} - s_j k_j^{-1} \qu\text{for } j \le 1 \qu\text{and}\qu b_2 = y_2 - c_2 \,T_{w_X}(x_2)\,k_2^{-1} ,  
\]
with $c_0,c_1,c_2$ and $s_0,s_1$ all independent. Upon setting 
\begin{gather*}
c_0 = q^{-1} \eta^{-2} \om_1^2 \om_2^2 , \qq
c_1 = q^{-1} \om_1^{-2} \om_2^2 , \qq
c_2 = -(-q)^{n-2} \om_2^{-2} , 
\\
s_0 = \frac{\nu_0 + \nu_0^{-1}}{q-q^{-1}} \eta^{-1} \om_1 \om_2 , \qq
s_1 = \frac{\nu_1 + \nu_1^{-1}}{q-q^{-1}} \om_1^{-1} \om_2, 
\end{gather*}
where $\nu_0,\nu_1\in\K^\times$ are additional free parameters, and solving the boundary intertwining equation \eqref{intw-untw} for all generators of the QP algebra we obtain the following result.

\begin{result}
When $(\ell,r)=(0,2)$ the bare K-matrix of type BD.1 is 
\eq{
K(u) = \Id + \frac{(u-u^{-1}) \la^2 u^2}{k_1(\frac{\nu_0 u}{\nu_1})\, k_1(\frac{\nu_1 u}{\nu_0})\, k_1(\nu_0 \nu_1 u)\, k_1(\frac{u}{\nu_0 \nu_1})} \Big( k_1(u)\,k_2(u) M_2 + \al(\tfrac{u}{\la})  \, \wt M_3^- + \al(\tfrac{\la}{u}) \,  \wt M_3^+ \Big) 
}
where $\la  = q^{N/2-2}$, $\mu =1$ are as in Result \ref{res:BD1} and $\al(u) = (\nu_1+\nu_1^{-1})\, u - (\nu_0+\nu_0^{-1})$.
Furthermore, $M_2$ equals the expression $M_2(u)$ defined in \eqref{BD1:D_2}, i.e.
\[
M_2 = \sum_{n-1 \le i\le n} \big( \la E_{-i,-i} + \la^{-1} E_{i,i} + E_{-i,i} + E_{i,-i} \big) 
\]
and we have introduced
\[ 
\wt M_3^\pm =  E_{\mp(n-1),\pm n} - E_{\pm n,\mp (n-1)} + \la^{\mp 1} (E_{\pm (n-1),\pm  n}-E_{\pm n,\pm (n-1)} \mp  (\nu_1+\nu_1^{-1}) E_{\pm  n,\pm n}).
\]
\end{result}

Note that the rows and columns labeled $\pm n$ and $\pm (n-1)$ have four nonzero entries. This K-matrix has three distinct eigenvalues, however $d_{\rm eff}=4$ for generic values of $\nu_0$, $\nu_1$.
For $\nu_0^2=\nu_1^2=-1$ (so that $s_0=s_1=0$), it specializes to the one in Result \ref{res:BD1} with $(\ell,r)=(0,2)$. 
This K-matrix exhibits the following properties.

\medskip

\begin{description}[itemsep=1ex]

\item[Eigendecomposition] 
\eqn{ 
V&= \Id + (\nu_1+\nu_1^{-1}) E_{1-n,-n} -\nu_1 \big( E_{n-1,n} + E_{n,n-1} \big) \\
& \qq\;+ \la \big( \nu_1^{-1} E_{-n,n-1} - \nu_1^2 E_{-n,n} + E_{1-n,n-1} - \nu_1 E_{1-n,n} - E_{n-1,1-n} - E_{n,-n} \big), \\[.25em]
D(u) &= \sum_{i \le n-2} E_{ii} + u^2 \bigg( \tfrac{k_1(\nu_0^{-1} \nu_1 u^{-1})}{k_1(\nu_0^{-1} \nu_1 u)} \tfrac{k_1(\nu_0 \nu_1 u^{-1})}{k_1(\nu_0 \nu_1 u)} E_{n-1,n-1} + \tfrac{k_1(\nu_0 \nu_1^{-1} u^{-1})}{k_1(\nu_0 \nu_1^{-1} u)} \tfrac{k_1(\nu_0^{-1} \nu_1^{-1} u^{-1})}{k_1(\nu_0^{-1} \nu_1^{-1} u)} E_{n,n} \bigg).
}
Note that $V$ does not depend on $\nu_0$.
Its two rightmost columns are eigenvectors of $K(u)$ associated with the two non-unit eigenvalues. 
When specializing $\nu_0$ and $\nu_1$ to $\pm \sqrt{-1}$ these two columns do not tend to the rightmost columns of $V$ in the generic case of type BD.1 (see the eigendecomposition following Result \ref{res:BD1}). 
However both eigenvalues tend to $p_1(u)p_2(u)$ and the corresponding two-dimensional eigenspace can be given the expected basis.

\item[Affinization] 
The constant K-matrix $K_0$ is independent of $\nu_0$ and has eigenvalues $1$, $\la^{-2} \nu_1^2$ and $\la^{-2} \nu_1^{-2}$ with multiplicities $N-2$, 1 and 1, respectively.
The affinization identity is
\eqn{
K(u) &= \frac{(u-u^{-1})\big(\la  \, \al(\frac{\la}{u}) K_0 - \la^{-1}  \al(\frac{u}{\la}) K_0^{-1}\big) + \beta(u) \Id}{(u-u^{-1})\big(\la \, \al(\frac{\la}{u}) - \la^{-1}  \al(\frac{u}{\la}) \big)  + \beta(u)},
}
where $\beta(u) = - (\nu_1+\nu_1^{-1}) \, \al(\tfrac{\la}{u}) \, \al(\tfrac{u}{\la})- (\la-\la^{-1})\big(\al(\tfrac{\la}{u}) \, u -\al(\tfrac{u}{\la}) \, u^{-1}\big)$.

\item[Half-period] \hfill $K(-u) = K(u)|_{\nu_0 \to -\nu_0}$. \hfill \hphantom{\it Half-period}

\item[Rotations] \hfill $K^{\flL}(u) =  K(u)|_{\nu_0 \leftrightarrow \nu_1}$ and, for even $N$, $K^{\flR}(u) = K(u)$. \hfill \hphantom{\it Rotations}

\item[Reductions] For $(\nu_0/\nu_1)^2 = \la^{\pm 2}$ or $(\nu_0 \nu_1)^2 = \la^{\pm 2}$ it has $d_{\rm eff} = 2$ and is singly regular, $K(1)\ne\Id$ or $K(-1)\ne\Id$, depending on the choice of the square root in the previous identities.  

\item[Diagonal cases] \hfill $\displaystyle \lim_{\nu_0 \to 0} K(u) = \lim_{\nu_0 \to \infty} K(u) = \Id$, \hfill \hphantom{\it Diagonal cases}
\begin{gather*} 
\lim_{\nu_1 \to 0} K(u) = \lim_{\nu_1 \to \infty} K(u) = u^2 E_{-n,-n} + u^{-2} E_{nn}+ \sum_{i < n} (E_{-i,-i}+E_{ii}) , \\
\lim_{\nu_0 \to 0} K(u)|_{\nu_1 = \nu_0} = \lim_{\nu_0 \to \infty} K(u)|_{\nu_1 = \nu_0} = h_1(u) ( u^2 E_{-n,-n} + E_{n,n} ) + \sum_{i < n} (E_{-i,-i}+E_{ii}), \\
\lim_{\nu_0 \to 0} K(u)|_{\nu_1 = -\nu_0^{-1}} = \lim_{\nu_0 \to \infty} K(u)|_{\nu_1 = -\nu_0^{-1}} = h_2(u) ( E_{-n,-n} +u^{-2} E_{n,n}) + \sum_{i < n} (E_{-i,-i}+E_{ii}) .
\end{gather*} 

\end{description}


\subsubsection{Family C.1 with $I_{\rm nsf}=\{j\}$ with $2 \le j \le n-2$}

Weak Satake diagrams in this family are parametrized by $\ell=j-1$ and $r=j+1$ with $j \in \{2,\ldots,n-2\}$. Hence $X=X_1\cup X_2$ with $X_1=\{0,\ldots,\ell-1\}$ and $X_2=\{r+1,\ldots,n\}$. The QP algebra is generated by $x_i$, $y_i$, $k_i$ with $i\in X$ and elements $b_j$ whose reduced expressions are
\begin{gather*} 
b_{\ell+1} =  y_{\ell+1} - c_{\ell+1} x_{\ell+1} k_{\ell+1}^{-1} - s_{\ell+1} k_{\ell+1}^{-1}, \\
b_\ell = y_\ell - c_\ell T_{w_{X_1}}(x_\ell) k_\ell^{-1}, \qq b_{\ell+2} = y_{\ell+2} - c_{\ell+2} T_{w_{X_2}}(x_{\ell+2}) k_{\ell+2}^{-1},
\end{gather*} 
with $c_{\ell}$, $c_{\ell+1}$, $c_{\ell+2}$, $s_{\ell+1}$ independent. Upon setting the parameters $c_j$ in the same way as in the general case given in Section \ref{sec:C1} and $s_{\ell+1}$ as  
\begin{alignat*}{99}
c_\ell &= \casesl{l}{ q^{-2} \eta^{-2} \om_{1}^4 \\ (-q)^{\ell} \eta^{-1} \om_{\ell+1}^2 } &\qu & \casesm{l}{  \text{if }\ell=0, \\ \text{if }\ell>0, } 
\\
c_{\ell+1} &= q^{-1} \om^{-2}_{\ell+1} \om^2_{\ell+2} , 
\\
c_{\ell+2} &= \casesl{l}{ (-q)^{\bar \ell-1} \om^{-2}_{\ell+2} \\ q^{-2} \om_{n}^{-4} } && \casesm{l}{ \text{if } \ell+2<n, \\ \text{if } \ell+2=n, }
\\
s_{\ell+1} &= \frac{\nu + \nu^{-1}}{q-q^{-1}}\, \om^{-1}_{\ell+1} \om_{\ell+2} ,
\end{alignat*}
and repeating the same steps as before we obtain the following result.

\begin{result} \label{Res:C1-nstd}
When $1\le \ell\le n-3$, $r=\ell+2$ the bare K-matrix of type C.1  is 
\eq{
K(u) = \Id + \frac{u-u^{-1}}{k_1(u)} \bigg(\wt M_1(u) + \frac{k_2(u) \wt M_2(u) + ( \nu +\nu^{-1} ) \wt M_3(u) }{ k_2(-\nu^{-2} u) k_2(-\nu^2 u)} \bigg) \label{K:C1-nstd}
}
with $\la = q^{\bar r}$, $\mu = q^{-\ell-1}$ as in Result \ref{Res:C1} and 
\begin{align*} 
\wt M_1(u) &= \sum_{\bar \ell -1 \le i \le n} \big(\la \mu u E_{-i,-i} + E_{ii}\big), 
\\ 
\wt M_2(u) &= {-}\la E_{2-\bar\ell,2-\bar\ell} + \la^{-1} E_{\bar\ell-2,\bar\ell-2} + E_{2-\bar\ell,\bar\ell-2} - E_{\bar\ell-2,2-\bar\ell} \\
& \qq - \mu u  E_{1-\bar \ell,1-\bar \ell} + (\mu u)^{-1}  E_{\bar \ell-1,\bar \ell-1}  +  E_{1-\bar \ell,\bar \ell-1} - E_{\bar \ell-1,1-\bar \ell}\,, 
\\
\wt M_3(u) &= -E_{1-\bar\ell,2-\bar\ell}+E_{2-\bar\ell,1-\bar\ell} + \la^{-1}(E_{1-\bar\ell,\bar\ell-2}+E_{\bar\ell-2,1-\bar\ell}) \\
& \qq - (\mu u)^{-1} \big( E_{2-\bar\ell,\bar\ell-1}+E_{\bar\ell-1,2-\bar\ell} + \la^{-1}(E_{\bar\ell-2,\bar\ell-1}-E_{\bar\ell-1,\bar\ell-2}) \big).
\end{align*}
\end{result}

This K-matrix has four non-zero entries in the rows and columns labelled $ \pm (\bar\ell-1)$ and $\pm (\bar\ell-2)$.
For $\nu^2 =-1$ it specializes to the one in Result \ref{Res:C1} with $1 \le \ell < n - 2$ and $r=\ell+2$. This K-matrix has the following properties.

\smallskip

\begin{description}[itemsep=1ex]

\item[Eigendecomposition] 
\begin{align*} V&= \Id  + (\nu+\nu^{-1}) E_{2-\bar \ell , 1-\bar \ell} -\nu \big( E_{\bar \ell-1,\bar \ell-2} + E_{\bar \ell-2,\bar \ell-1} \big) \\
& \qq\; + \la\, \big( {-}\nu^2 E_{1-\bar \ell,\bar \ell-1} + E_{2-\bar \ell,\bar \ell-2} + E_{\bar \ell-2,2-\bar \ell} + E_{\bar \ell-1,1-\bar \ell} + \nu^{-1} E_{1-\bar \ell , \bar \ell-2 } - \nu E_{2-\bar \ell , \bar \ell - 1}\big) ,
\\[0.25em]
D(u) &= \!\sum_{1-\bar \ell \le i < \bar \ell-2}\! E_{ii} + p_1(u)\bigg( \tfrac{k_2(-\nu^2 u^{-1})}{k_2(-\nu^2 u)} E_{\bar \ell-2,\bar \ell-2} +  \tfrac{k_2(-\nu^{-2} u^{-1})}{k_2(-\nu^{-2} u)} E_{\bar \ell-1,\bar \ell-1} + \!\sum_{\bar \ell \le i \le n}\! (u^2 E_{-i,-i} + E_{ii})\! \bigg).
\end{align*}

\item[Rotations] \hfill $K^{\pi}(u) = -u \, K(u)$ if $\ell=\frac n2-1$. \hfill \hphantom{\it Rotations}
\item[Reductions] For $\nu^2 = (\la/\mu)^{\pm 1}$ it has $d_{\rm eff}=3$ and is singly regular, $K(-1)\ne \Id$.

\item[Diagonal cases] \qu $\displaystyle \lim_{\nu \to 0} K(u) = \lim_{\nu \to \infty} K(u) = p_1(u) \sum_{i \ge \bar\ell-1}  ( u^2 E_{-i,-i}+E_{ii}) + \sum_{|i|<\bar\ell-1} E_{ii}$. 
\end{description}


\subsubsection{Family BD.1 with $I_{\rm nsf}=\{j\}$ with $\al_{j}$ long}

Satake diagrams in this family are parametrized by $\ell$ and $r$ satisfying $1\le \ell< \lceil N/2 \rceil-2$ and $r=\ell+2$, so that $j=\ell+1$. Hence $X=X_1 \cup X_2$ with $X_1=\{0,\ldots,\ell-1\}$ if $\ell>1$ or $X_1=\emptyset$ if $\ell=1$, and $X_2=\{r+1,\ldots,n\}$.
When $\ell=1$, the QP algebra is generated by $x_i$, $y_i$, $k_i$ with $i\in X$, $(k_0k_1^{-1})^{\pm 1}$ and the elements $b_j$ whose reduced expressions are
\begin{gather*} 
b_2 = y_2 - c_2 x_2 k_2^{-1} - s_2 k_2^{-1}, \\
b_{j} = y_{j} - c_{1} \,x_{1-j}\, k_{j}^{-1} \text{ for } j\in\{0,1\}, \qq b_3 = y_3 - c_3 T_{w_{X_2}}(x_3) k_3^{-1}. 
\end{gather*}
When $1<\ell<\lceil N/2 \rceil -1$, the QP algebra is generated by $x_i$, $y_i$, $k_i$ with $i \in X$ and
\begin{gather*} 
b_{\ell+1} =  y_{\ell+1} - c_{\ell+1} x_{\ell+1} k_{\ell+1}^{-1} - s_{\ell+1} k_{\ell+1}^{-1}, \\
b_\ell = y_\ell - c_\ell T_{w_{X_1}}(x_\ell) k_\ell^{-1}, \qq b_{\ell+2} = y_{\ell+2} - c_{\ell+2} T_{w_{X_2}}(x_{\ell+2}) k_{\ell+2}^{-1}.
\end{gather*} 
Upon setting the parameters $c_j$ in the same way as in the generic case, {\it cf.}~Section \ref{sec:BD1},  
\[
c_\ell = (-q)^{\ell-1} \eta^{-1} \om_{\ell+1}^2 , \qu 
c_{\ell+1} = q^{-1} \om^{-2}_{\ell+1} \om^2_{\ell+2} , \qu
c_{\ell+2} = (-q)^{\bar \ell-3} \om^{-2}_{\ell+2} , 
\]
and $s_{\ell+1} = \frac{\nu + \nu^{-1}}{q-q^{-1}} \om_{\ell+2} \om^{-1}_{\ell+1}$, and repeating the same steps as before we obtain the following result.

\begin{result} 
When $1\le \ell\le n-2$, $r=\ell+2$ the bare K-matrix of type BD.1 is given by the formula \eqref{K:C1-nstd} only with $\la = q^{N/2 - \ell -2}$, $\mu = q^{-\ell}$ as in Result \ref{res:BD1} and 
\begin{align*} 
\wt M_1(u) &= \sum_{\bar \ell -1 \le i \le n} (\la \mu u E_{-i,-i} + E_{ii}), \\ 
\wt M_2(u) &= \la E_{2-\bar\ell,2-\bar\ell} + \la^{-1} E_{\bar\ell-2,\bar\ell-2} + E_{2-\bar\ell,\bar\ell-2} + E_{\bar\ell-2,2-\bar\ell} + \\
& \qq - \mu u  E_{1-\bar \ell,1-\bar \ell} - (\mu u)^{-1}  E_{\bar \ell-1,\bar \ell-1}  +  E_{1-\bar \ell,\bar \ell-1} + E_{\bar \ell-1,1-\bar \ell}, \\
\wt M_3(u) &= E_{1-\bar\ell,2-\bar\ell}-E_{2-\bar\ell,1-\bar\ell} + \la^{-1}(E_{1-\bar\ell,\bar\ell-2}-E_{\bar\ell-2,1-\bar\ell}) + \\
& \qq + (\mu u)^{-1} \big( E_{2-\bar\ell,\bar\ell-1}-E_{\bar\ell-1,2-\bar\ell} +  \la^{-1}(E_{\bar\ell-2,\bar\ell-1}-E_{\bar\ell-1,\bar\ell-2}) \big).
\end{align*}
\end{result}

This K-matrix is very similar to that in Result \ref{Res:C1-nstd}. For $\nu^2=-1$ it specializes to the one in Result \ref{res:BD1} with $1 \le \ell < \lceil N/2 \rceil - 2$ and $r=\ell+2$. We list below only those properties that differ from the ones above.

\smallskip

\begin{description}[itemsep=1ex]

\item[Eigendecomposition] $D(u)$ is the same as for the non-quasistandard C.1 case, but $V$ is given by
\begin{align*} V&= \Id + \la \big( {-}\nu^2 E_{1-\bar \ell,\bar \ell-1} + E_{2-\bar \ell,\bar \ell-2} - E_{\bar \ell-2,2-\bar \ell} - E_{\bar \ell-1,1-\bar \ell} + \nu^{-1} E_{1-\bar \ell , \bar \ell-2 } - \nu E_{2-\bar \ell , \bar \ell - 1}\big) \\
& \qq\; + (\nu+\nu^{-1}) E_{2-\bar \ell , 1-\bar \ell} -\nu \big( E_{\bar \ell-1,\bar \ell-2} + E_{\bar \ell-2,\bar \ell-1} \big).
\end{align*}

\item[Rotations] There are additional rotations $K^{\flL}(u) = K(u)$ and $K^{\flR}(u) = K(u) $ if $N$ and even.

\item[Reductions] The singly regular K-matrix has $K(-1)\ne \Id$ instead. 

\end{description}


\subsubsection{Family B.1 with $I_{\rm nsf}=\{j\}$ with $\al_j$ short}

Satake diagrams in this family are parametrized by $(\ell,r)=(n\!-\!1,n)$ with $N$ odd. We have $j=\ell+1$ and $X=\{0,1,\ldots,n-2\}$. The QP algebra is generated by $x_i$, $y_i$, $k_i$ with $i\in X$ and
\[
b_{n-1} = y_{n-1} - c_{n-1} T_{w_{X_1}}(x_{n-1})\, k_{n-1}^{-1}, \qq b_n = y_n - c_n x_n k_n^{-1} - s_n k_n^{-1} . 
\]
Upon setting $c_{n-1}$, $c_n$, $s_n$ as
\[
c_{n-1} = - (-q)^{n-2} \eta^{-1} \om_{n}^2 , \qq c_n = - \om_n^{-2} , \qq s_n =  q^{1/4} \frac{\nu - \nu^{-1}}{q^{1/2}-q^{-1/2}} \,\om_n^{-1}, 
\]
where $\nu \in \K^\times$ (for $c_{n-1}$ and $c_n$ this is the same as in Section \ref{sec:BD1}), and proceeding in the same way as before we obtain the following result.

\begin{result} 
When $(\ell,r)=(n-1,n)$ the bare K-matrix of type B.1 is 
\spl{
K(u) &= \Id + \frac{u-u^{-1}}{k_1(u)} \Bigg(\wt M_1(u) + \frac{k_1(u^{-1})^2 \wt M_2(u) + q^{1/4} (\nu-\nu^{-1})[2]^{1/2}_{q^{1/2}} \wt M_3(u)}{k_1(\nu^2 u^{-1})\,k_1(\nu^{-2} u^{-1})\,k_2(u) }  \Bigg) 
}
with $\la  = q^{1/2}$, $\mu  = q^{1-n}$ as in Result~\ref{res:BD1} and
\eqn{ 
\wt M_1(u) &= \sum_{1 \le i \le n} (\la \mu u E_{-i,-i} + E_{ii}), \\ 
\wt M_2(u) &= - \mu u E_{-1,-1} - (\mu u)^{-1} E_{11} + E_{-1,1} + E_{1,-1}, \\
\wt M_3(u) &= k_1 ( u^{-1}) \big(u^{-1} (E_{01} + E_{10}) - \mu (E_{-1,0}+E_{0,-1}) \big) -  q^{1/4} (\nu-\nu^{-1}) [2]_{q^{1/2}}^{1/2} \mu u^{-1}  E_{00}.
}
\end{result}

This K-matrix has three nonzero entries in the rows and columns labelled $-1$, 0 and 1. (Note the appearance of the factor $[2]_{q^{1/2}}^{1/2}$, {\it cf.}~\eqref{rep:BCDn}.)
For $\nu^2=1$ (so that $s_n=0$) it specializes to the one in Result \ref{res:BD1} with $N$ odd and $(\ell,r)=(n-1,n)$. 
This K-matrix has the following properties.

\bigskip

\begin{description}[itemsep=1ex]

\item[Eigendecomposition] 
\begin{align*} V &= \Id + \la \bigg( E_{-1,1} - \nu^2 E_{1,-1} \\ 
& \hspace{1.9cm}+ q^{-1/4} [2]_{q^{1/2}}^{-1/2} \Big( \mu \big( (\nu-\nu^{-1}) E_{01} + [2]_{q^{1/2}} \nu E_{0,-1} \big)  + \mu^{-1} \big( \nu^{-1} E_{10} - \la^{-1} \nu E_{-1,0} \big) \Big) \bigg), 
\\
D(u) &=p_1(u) \bigg( \tfrac{k_1(\nu^{-2} u)}{k_1(\nu^{-2} u^{-1})} E_{-1,-1} + \tfrac{k_1(\nu^2 u)}{k_1(\nu^2 u^{-1})} E_{00} + p_2(u) E_{11} + \sum_{\bar \ell \le i \le n} \big( u^2 E_{-i,-i} + E_{ii} \big) \bigg).
\end{align*}

\item[Rotations] \hfill $K^{\flL}(u) = K(u)$. \hfill \hphantom{\it Rotations}

\item[Diagonal cases] \qu $\displaystyle\lim_{\nu \to 0} K(u) = \lim_{\nu \to \infty} K(u) = p_1(u) \Big( p_2(u) E_{00} + \sum_{i=1}^n  ( u^2 E_{-i,-i}+E_{ii}) \Big)$. \hfill \hphantom{\it Diagonal cases}
\end{description}


\subsection{Special low-rank cases} \label{sec:K:low-rank}


Here we highlight some peculiarities that occur when the underlying Lie algebra is of low rank. In Appendix \ref{App:LowRank} we have listed all isomorphisms of ($\Sigma_A$-classes of) affine Satake diagrams of low rank ($n \le 4$). 
When two affine Kac-Moody algebras are isomorphic, necessarily the rank $n$ is the same for both. However the value for $N$ (the dimension of the vector representation) can be different. In particular, $N=2$ for ${\rm A}^{(1)}_1$ and ${\rm C}^{(1)}_1$ but $N=3$ for ${\rm B}^{(1)}_1$. Similarly, $N=4$ for ${\rm C}^{(1)}_2$ but $N=5$ for ${\rm B}^{(1)}_2$. Finally, $N=4$ for ${\rm A}^{(1)}_3$ but $N=6$ for ${\rm D}^{(1)}_3$. 
For simplicity, here we will focus on the cases when vector spaces have the same dimensions.


\subsubsection{Untwisted cases of $\rm{A}^{(1)}_1$ and $\rm{C}^{(1)}_1$} \label{sec:n=1:untw}

The solutions of the untwisted RE \eqref{RE} involving Baxter's R-matrix \eqref{Ru:Baxter} are given by the low-rank cases of the formulas for K-matrices of type A.3 in Result \ref{Res:A3}, type C.1 in Result \ref{Res:C1}, type C.4 in Result \ref{Res:CD4} and type C.2 in Result \ref{Res:C2}, upon a suitable choices of the classification parameters and relabelling of the basis vectors. 
Note that these K-matrices are independent of $q$, thus the substitution $q \to q^{1/2}$ required to identify R-matrices of types $\rm{A}^{(1)}_1$ and $\rm{C}^{(1)}_1$ does not affect the K-matrices. 
Here we review these K-matrices starting with the most simple case and working our way up to the K-matrix of the most general form. \\

\noindent $\bigl( \text{A}^{(1)}_1 \bigr)^{\psi}_{0,1}  \cong \bigl( \text{C}^{(1)}_1 \bigr)^{\text{id}}_{\rm 0,alt,1}$
\begin{minipage}[c]{12mm} 
\begin{tikzpicture}[baseline=-0.25em,scale=0.2,line width=0.7pt]
	\draw[double] (0,0) -- (2,0) ;
	\filldraw[fill=white] (0,0) circle (.5) node[left=1pt]{\small $0$}; 
	\filldraw[fill=black] (2,0) circle (.5) node[right]{\small $1$}; 
\end{tikzpicture} 
\end{minipage} \\

\noindent The Satake diagram is $(X,\tau) = (\{1\}, \id)$. The bare K-matrix in this case is simply 
\eq{ 
\label{n=1:id} K(u) = \Id. 
}
By rotating with $\rho = (01)$ we obtain $(X^\rho,\tau^\rho)=(\{0\},\id)$ and the K-matrix becomes 
$$ 
\qq K^{(01)}(u) = Z^{(01)}(\tfrac{\eta}{u})^{-1} K(u)\, Z^{(01)}(\eta u)  = u^2 E_{11} + E_{22} .
$$

\noindent $\bigl( \text{A}^{(1)}_1 \bigr)^{\psi'}_{0,0} \cong \bigl( \text{C}^{(1)}_1 \bigr)^{\pi}_{0}$
\begin{minipage}[c]{12mm} 
\begin{tikzpicture}[baseline=-0.25em,scale=0.2,line width=0.7pt]
	\draw[double] (0,0) -- (2,0) ;
	\filldraw[fill=white] (0,0) circle (.5) node[left=1pt]{\small $0$}; 
	\filldraw[fill=white] (2,0) circle (.5) node[right]{\small $1$}; 
	\draw [<->,gray] (2,0.6) arc[x radius=1cm, y radius =1cm, start angle=0, end angle=180];
\end{tikzpicture}
\end{minipage} \\

\noindent The Satake diagram is now $(X,\tau) = (\emptyset, (01))$. The bare K-matrix has a free parameter $\xi\in \K^\times$:
\eq{  \label{n=1:gen-diag}
\qq K(u) =E_{11} + \frac{\xi - u^{-1}}{\xi - u} E_{22} .
}
This K-matrix was one of several found by Cherednik in \cite{Ch1} and featured in Sklyanin's adaptation of the algebraic Bethe ansatz to quantum integrable systems with boundaries \cite{Sk}.

\medskip

\noindent $\bigl( \text{A}^{(1)}_1 \bigr)^{\psi}_{0,0} \cong \bigl( \text{C}^{(1)}_1 \bigr)^{\text{id}}_{0,0}$
\begin{minipage}[c]{12mm} 
\begin{tikzpicture}[baseline=-0.25em,scale=0.2,line width=0.7pt]
	\draw[double] (0,0) -- (2,0) ;
	\filldraw[fill=white] (0,0) circle (.5) node[left=1pt]{\small $0$}; 
	\filldraw[fill=white] (2,0) circle (.5) node[right]{\small $1$}; 
\end{tikzpicture} \end{minipage} \\

\noindent The Satake diagram is $(X,\tau) = (\emptyset, \id)$. The bare K-matrix has two free parameters:
\eq{ \label{n=1:gen-nondiag}
K(u) = \frac{(\mu-\mu^{-1})\Id + (\la - \la^{-1})(u E_{11} + u^{-1} E_{22}) + (u-u^{-1})(E_{12} + E_{21})}{(\la \mu -u)(\la^{-1} + (\mu u)^{-1})}.
}
This K-matrix was first obtained in \cite{dVGR}. 

\begin{rmk}
Specialize $q$ to be a nonzero complex number which is not a root of unity and consider the reflection equation \eqref{RE} with the ${\rm A^{(1)}_1}$-type R-matrix.
Then it can be checked directly that $K(u)$ given by \eqref{n=1:gen-diag} is the most general meromorphic diagonal solution.
Here we have ignored the limit cases $\xi \to 0$, $\xi \to \infty$, which up to a scalar yield the two solutions $\Id$ and $u^2 E_{11}+E_{22}$ above.
Also, $K(u)$ given by \eqref{n=1:gen-nondiag} is, upon dressing, the most general nondiagonal meromorphic solution.
\end{rmk}


\subsubsection{Twisted cases of $\rm{A}^{(1)}_1$ and $\rm{C}^{(1)}_1$} \label{sec:n=1:tw}

Solutions of the twisted RE \eqref{tRE} can be obtained from the ones above using Proposition \ref{prop:intw:tw-untw} as follows. \\

\noindent {\it From} $\bigl( \text{A}^{(1)}_1 \bigr)^{\psi}_{0,1}$ {\it to}  $\bigl( \text{A}^{(1)}_1 \bigr)^{\id}_{\text{alt}}${\it :} \\

\noindent The K-matrix $K(u)=\Id$ ({\it cf.} \eqref{n=1:id}) corresponds to $\wt K(u) = C$, which is indeed what formula \eqref{A2:K} yields in the case $n=1$. \\

\noindent {\it From} $\bigl( \text{A}^{(1)}_1 \bigr)^{\psi'}_{0,0}$ {\it to} $\bigl( \text{A}^{(1)}_1 \bigr)^\rho${\it :} \\

\noindent The diagonal K-matrix \eqref{n=1:gen-diag} corresponds the twisted K-matrix
\eq{ 
\label{A4:K:n=1} \wt K(u) = C K (q u ) \propto (u^{-1} + q \xi) E_{12} - (q u + \xi) E_{21}, 
}
which has one more free parameter compared to the K-matrix in \eqref{A4:K}. 
This owes to the fact that the set $I_\text{diff}$ is empty in the case when $n=1$, so that an additional parameter plays a role in the intertwining equation.
By setting $\xi = -1$ in \eqref{A4:K:n=1} one recovers (a scalar multiple of) the matrix $K(u) = u\,E_{21} + E_{12}$, the low-rank limit of the formula \eqref{A4:K}.\\

\noindent  {\it From} $\big( \text{A}^{(1)}_1 \big)^{\psi}_{0,0}$ {\it to} $\big( \text{A}^{(1)}_1 \big)^{\id}_0${\it :}  \\

\noindent The nondiagonal bare K-matrix discussed above leads to the following solution of the twisted RE: 
\begin{equation} 
\label{A1:K:n=1}  
\begin{aligned}
\wt K(u) &= C K (q u ) \\
&  \propto  (q u - (qu)^{-1})(q^{1/2} E_{11}- q^{-1/2}E_{22}) \\
& \qu - (q^{1/2}(\mu^{-1} - \mu) + q^{-1/2} (\la^{-1} - \la) u^{-1}) E_{12} \\
& \qu + (q^{-1/2} (\mu^{-1} - \mu) + q^{1/2} (\la^{-1} - \la)u) E_{21} .  
\end{aligned}
\end{equation}
Upon dressing, this K-matrix has three free parameters. However, the $n=1$ analogue of the K-matrix of type A.1 found in Section \ref{sec:A1}, namely $\om_1^2 E_{11} + \om_2^2 E_{22}$, only has one (upon multiplying the K-matrix by a scalar). It can be obtained from \eqref{A1:K:n=1} by setting $\la^2 = \mu^2 = 1$ and left- and right-multiplying by the appropriate $G(\omega_1,\omega_2)$, {\it cf.} \eqref{eq:Kdressing}.

In Appendix \ref{App:qOns} we will discuss the q-Onsager K-matrix for $\text{A}^{(1)}_{n>1}$; the one above can be viewed as a special case of the q-Onsager case provided we take into account the following subtlety. Uniquely for $n=1$ among the A.1 family, $I_{\rm nsf} = I$ and hence the generators $b_0$, $b_1$ have four independent parameters $c_0$, $c_1$, $s_0$, $s_1$ that enter the intertwining relation (in the q-Onsager case for $n>1$ a relation is imposed between $c_j$ and $s_j$ for $j \in I$). This exceptional status for $n=1$ where $a_{01} = a_{10} = -2$ is accounted for in the description by Baseilhac and Belliard, see \cite[Prop.~2.1]{BsBe1}.

In terms of dressing parameters $\omega_1,\omega_2$, additional free parameters $\la,\mu$ and the scaling parameter $\eta$ set 
\[
c_0 = - q\,\eta^{-2} \om_1^{-2}\om_2^2, \qu
c_1 = - q^{-1} \om_1^2 \om_2^{-2}, \qu
s_0 = q\,\frac{\mu-\mu^{-1}}{q-q^{-1}} \, \eta^{-1}\om_1^{-1}\om_2, \qu
s_1 = \frac{\la-\la^{-1}}{q-q^{-1}}\, \om_1\,\om_2^{-1}.  
\]
Solving the boundary intertwining equation we find, up to dressing by $G(\om_1,\om_2)$ and multiplication by a scalar, the K-matrix given by \eqref{A1:K:n=1}.


\subsubsection{Exceptional cases of $\rm{D}^{(1)}_4$} \label{sec:D4}

Let $(I,A)$ be of type $\rm{D}^{(1)}_4$. There are three $\Sigma_A$-equivalence classes of Satake diagrams of type D.3. We consider their representatives $(X,\tau)$ with $\tau = (14)$ and $X=\emptyset$, $X=\{1,2,4\}$ or $X=\{0,3\}$, which are $\Aut(A)$-equivalent to the Satake diagrams $(\emptyset,(34))$, $(\{2,3,4\},(34))$ and $(\{0,1\},(34))$ respectively (see Table \ref{Tbl:isos:4b} in Appendix \ref{App:LowRank}), which were studied in Sections \ref{sec:K:C1BD2}-\ref{sec:K:0}. However as they are not $\Sigma_A$-equivalent, these exceptional Satake diagrams need to be considered separately.

In particular, we need to solve the boundary intertwining equation \eqref{intw-untw} for the exceptional QSP algebras $B_{\bm c,\bm s}$ whose details are listed in Table \ref{Tbl:so8}. By doing so in each case we find that the only solution of \eqref{intw-untw} is the trivial solution, $K(u)=0$. 
For example, in the case with $X=\{0,3\}$ this can be established as follows. First, note that the representation $\RT_u$ of generators of $U_q(\hat \mfh)$ is given~by
\eqn{ 
& \RT_u(k_1 k_4^{-1}) = \sum_{1\le i\le3} (q^{-1} E_{-i,-i} +q\, E_{ii} ) + q\, E_{-4,-4} + q^{-1} E_{44}, \\
& \RT_u(k_0) = q^{-1} (E_{-4,-4} + E_{-3,-3}) + q\, (E_{33} + E_{44}) + \sum_{1\le i\le2} (E_{-i,-i} + E_{i,i}), \\
& \RT_u(k_3) = q^{-1} ( E_{-1,-1} + E_{22}) + q\, (E_{-2,-2}+E_{11}) + \sum_{3\le i\le4} (E_{-i,-i}+E{i,i}).
}
Then observe that for all $i,j \in \langle 4 \rangle$ with $i \ne j$ the $(i,i)$- and $(j,j)$-entries are distinct for at least one of three matrices above. 
Hence the relation $K(u)\,\RT_{\eta u}(b)=\RT_{\eta/u}(b)\,K(u)$ for $b \in \{ k_1 k_4^{-1} , k_0, k_3 \}$ implies that $K(u)$ is a diagonal matrix, say $K(u) = \sum_{i \in \langle 4 \rangle} k^{(i)}(u) E_{ii}$ with some $k^{(i)}(u)\in\K(u)$.
Next, the representation $\RT_u$ of generators $x_j$ for $j\in\{0,3\}$ and $b_j$ for $j\in\{1,2\}$ is given by
\eqn{ 
& \RT_u(x_3) = E_{-2,-1} - E_{12}, && \RT_u(b_1) = E_{-3,-4} - E_{43} + c_1 (E_{-1,2} - E_{-2,1}), \\
& \RT_u(x_0) = u\,(E_{3,-4} - E_{4,-3}), && \RT_u(b_2) = E_{-2,-3} - E_{32} + c_2\, q^{-1} u\,(E_{1,4} - E_{-1,4}) .
}
By a direct computation we see that $K(u)\,\RT_{\eta u}(x_2)=\RT_{\eta/u}(x_2)\,K(u)$ implies that $k^{(\pm2)}(u) = k_{(\pm1)}(u)$. 
Likewise, equality $K(u)\,\RT_{\eta u}(b_1)=\RT_{\eta/u}(b_1)\,K(u)$ further implies that $k^{(\pm 4)}(u) = k^{(\pm 3)}(u)$ and $k^{(1)}(u) = k^{(-1)}(u)$, and $K(u)\,\RT_{\eta u}(x_0)  = \RT_{\eta/u}(x_0)\,K(u)$ implies that $k^{(-3)}(u) = u^2 k^{(3)}(u)$. 
It remains to compute $K(u)\,\RT_{\eta u}(b_2) = \RT_{\eta/u}(b_2)\,K(u)$, giving two more relations, $k^{(-1)}(u) = k^{(3)}(u)$ and $k^{(-1)}(u) = u^2 k^{(3)}(u)$, which are only true if $k^{(-1)}(u) = k^{(3)}(u) = 0$. Hence $K(u)=0$.

{\def\arraystretch{1.2}
\begin{table}
\caption{Exceptional QSP algebras for ${\rm D}^{(1)}_4$} \label{Tbl:so8}
\begin{tabular}{lll}
\hline
Name & Diagram & Generators \\ \hline
$({\rm D}^{(1)}_4)^{(14)}_{\emptyset}$ &
\begin{minipage}[c]{17mm}
\begin{tikzpicture}[scale=80/100]
\draw[thick] (-.6,.3) -- (0,0) -- (-.4,-.3);
\draw[thick] (.4,.3) -- (0,0) -- (.6,-.3);
\draw[<->,gray] (-.3,-.3) -- (.5,-.3);
\filldraw[fill=white] (-.6,.3) circle (.1) node[left]{\scriptsize 0};
\filldraw[fill=white] (-.4,-.3) circle (.1) node[left]{\scriptsize 1};
\filldraw[fill=white] (0,0) circle (.1) node[above]{\scriptsize 2};
\filldraw[fill=white] (.6,-.3) circle (.1) node[right]{\scriptsize 4};
\filldraw[fill=white] (.4,.3) circle (.1) node[right]{\scriptsize 3};
\end{tikzpicture}
\end{minipage} 

& 

\begin{minipage}[c]{10cm}
\vspace{1mm}
$(k_1 k_4^{-1})^{\pm 1}$,\\
$b_j = y_j - c_j \, x_j \,k_j^{-1} \text{ for } j\in\{0,2,3\}$, \\
$b_j = y_j - c_j \, x_{5-j} k_j^{-1} \text{ for } j\in\{1,4\}$, \\
\hphantom{\qquad} with $c_0,c_1,c_2,c_3,c_4 \in \K^\times$ such that $c_1 = c_4$. \\[-3mm]
\end{minipage} 

\\\hline

$({\rm D}^{(1)}_4)^{(14)}_{\{1,2,4\}}$ & 
\begin{minipage}[c]{17mm}
\begin{tikzpicture}[scale=80/100]
\draw[thick] (-.6,.3) -- (0,0) -- (-.4,-.3);
\draw[thick] (.4,.3) -- (0,0) -- (.6,-.3);
\draw[<->,gray] (-.3,-.3) -- (.5,-.3);
\filldraw[fill=white] (-.6,.3) circle (.1) node[left]{\scriptsize 0};
\filldraw[fill=black] (-.4,-.3) circle (.1) node[left]{\scriptsize 1};
\filldraw[fill=black] (0,0) circle (.1) node[above]{\scriptsize 2};
\filldraw[fill=black] (.6,-.3) circle (.1) node[right]{\scriptsize 4};
\filldraw[fill=white] (.4,.3) circle (.1) node[right]{\scriptsize 3};
\end{tikzpicture}
\end{minipage} 

& 

\begin{minipage}[c]{10.5cm}
\vspace{1mm}
$x_i,\;y_i,\;k_i \text{ for } i \in \{1,2,4\}$,\\
$b_j = y_j - c_j\, T_2 T_1 T_4 T_2 (x_j)\,k_j^{-1} \text{ for } j\in\{0,3\}$, with $c_0,c_3 \in \K^\times$. \\[-2.5mm]
\end{minipage}

\\\hline

$({\rm D}^{(1)}_4)^{(14)}_{\{0,3\}}$ &
\begin{minipage}[c]{17mm}
\begin{tikzpicture}[scale=80/100]
\draw[thick] (-.6,.3) -- (0,0) -- (-.4,-.3);
\draw[thick] (.4,.3) -- (0,0) -- (.6,-.3);
\draw[<->,gray] (-.3,-.3) -- (.5,-.3);
\filldraw[fill=black] (-.6,.3) circle (.1) node[left]{\scriptsize 0};
\filldraw[fill=white] (-.4,-.3) circle (.1) node[left]{\scriptsize 1};
\filldraw[fill=white] (0,0) circle (.1) node[above]{\scriptsize 2};
\filldraw[fill=white] (.6,-.3) circle (.1) node[right]{\scriptsize 4};
\filldraw[fill=black] (.4,.3) circle (.1) node[right]{\scriptsize 3};
\end{tikzpicture}
\end{minipage} 

& 

\begin{minipage}[c]{12cm}
\vspace{1mm}
$(k_1 k_4^{-1})^{\pm 1} \text{ and }x_i,\;y_1,\;k_i \text{ for } i \in \{ 0,3\}$,\\
$b_2 = y_2 - c_2 \, T_0 T_3 (x_2)\, k_2^{-1}$, \\
$b_j = y_j - c_j \,x_{5-j}\, k_j^{-1} \text{ for } j\in\{1,4\}$, with $c_1,c_2,c_4 \in \K^\times$ such that $c_1 = c_4$. \\[-2.5mm]
\end{minipage}
\\\hline 
\end{tabular}
\end{table}
}

\begin{rmk} \label{rmk:nosolutions} 
It would be interesting to understand if the absence of nontrivial solutions to \eqref{intw-untw} for the exceptional coideal subalgebras $B_{\bm c,\bm s}$ described in Table \ref{Tbl:so8} is an artefact of the vector representation $\RT_u$ or it is also true for other finite-dimensional representations of $U_q(\wh{\mfso}_8)$ that remain irreducible when restricted to $B_{\bm c,\bm s}$. \hfill \rmkend
\end{rmk}


\section{Outlook and applications} \label{sec:conclusions}


In this section we apply the results obtained to related topics in mathematical physics and representation theory. 


\subsection{Quantum integrable systems with boundaries} \label{sec:qintsys}
We briefly review some existing results that place the right REs \eqref{RE} and \eqref{tRE} and counterparts of them for the left boundary in the context of quantum integrable systems using the example of a quantum Heisenberg spin chain of length $k \in \Z_{>0}$ (see {\it e.g.}~\cite{Ba2}). 
Such a system describes particles labelled by $1,2,\ldots,k$, each with $N \in \Z_{>0}$ internal degrees of freedom called ``spin'' ({\it e.g.}~magnets), which are arranged on a line.
Since it is a quantum-mechanical system, the space of states is a vector space, namely $(\C^N)^{\otimes k}$.

The particles are assumed to obey nearest-neighbour interactions only. 
Such interactions can be encoded in a solution of the quantum Yang-Baxter equation \eqref{YBE}, {\it i.e.}~an R-matrix. 
If particles 1 and $k$ are not considered to be neighbours the chain is called \emph{open}. 
In particular, the spin chain may be thought of as being between two boundaries in which case the particle-boundary interaction is related to a choice of solution of an appropriate reflection equation, {\it i.e.}~a K-matrix.
The integrability of such a system can be characterized using different frameworks. 
We will consider here the quantum Knizhnik-Zamolodchikov (qKZ) equations, the transfer matrix and the Hamiltonian.
The integrability in each formalism always follows from just the Yang-Baxter and appropriate reflection equations; the relations between the different formalisms require unitarity and regularity conditions on the R- and K-matrices.
Therefore it applies to all K-matrices classified in this paper except possibly where the regularity property $K(\pm 1) = \Id$ is needed. 
One may distinguish between boundary conditions given by untwisted and twisted K-matrices;
we will first treat the untwisted case in detail and briefly review the twisted case in Section \ref{sec:integrability:tw}.

To simplify the presentation the indeterminate $q$ is specialized to a nonzero complex number which is not a root of unity. Furthermore we assume that we are given matrices $R(u)$, $K^+(u)$, $K^-(u)$, meromorphically depending on $u$ such that $R(u) \in \End(\C^N \ot \C^N)$ satisfies the Yang-Baxter equation \eqref{YBE}, $K^+(u) \in \End(\C^N)$ satisfies the right reflection equation \eqref{RE} and $K^-(u) \in \End(\C^N)$ satisfies the \emph{left reflection equation}:
\eq{ 
R(\tfrac{u}{v})\, K^-_1(u)\, R_{21}(uv)\, K^-_2(v) = K^-_2(v)\, R(uv)\, K^-_1(u)\, R_{21}(\tfrac{u}{v}).  \label{LRE}
}
Using the fact that image of a right coideal subalgebra under the antipode (or its inverse) is a left coideal subalgebra, the entire construction of solutions of \eqref{RE} followed in this paper in terms of right coideal subalgebras can be modified to left coideal subalgebras in order to produce solutions of \eqref{LRE}.
However, it can be checked that if $K^+(u)$ is a solution of \eqref{RE}, then $K^-(u)=C^{-1} K^+(u)^\t C$ is a solution of \eqref{LRE} (and all solutions of \eqref{LRE} arise in this way). 
Thus, assuming $R(u)$ is one of the R-matrices discussed in Section \ref{sec:Ru}, Section \ref{sec:Results} provides a pool of solutions of both \eqref{RE} and \eqref{LRE}.
Although the two given K-matrices $K^+(u)$, $K^-(u)$ are necessarily associated to coideal subalgebras of the same affine quantum group, these two coideal subalgebras do not need to be related to each other in any further way (in particular the underling generalized Satake diagrams do not need to be the same).


\subsubsection{qKZ equations}

The qKZ equations are noteworthy difference equations appearing in a wide range of contexts in mathematical physics, representation theory and beyond; in their original form they are defined in terms of R-matrices only \cite{DFZJ1,EFK,FrRt,JiMi,Sm,TaVa}. 
Cherednik \cite{Ch2,Ch3} studied generalizations of these equations defined in terms of an R-matrix datum associated to an arbitrary affine root system; taking this to be of type A one recovers the aforementioned equations. 
If instead we choose the type to be B, C or D, we obtain the boundary qKZ equations.

More precisely, for $p \in \C^k$, the qKZ equations are the (multiplicative) difference equations for meromorphic functions $f:\C^k \to (\C^N)^{\otimes k}$ defined by
\eq{
f(z_1, \ldots, p z_r, \ldots, z_k) = \mc{A}_r(\bm z;p) f(\bm z) \qq \text{for } 1\leq r \leq k. \label{bqKZ}
}
Here the \emph{transport matrices} $\mc{A}_r(\bm z;p)$ are particular elements of $\End((\C^N)^{\otimes k})$, depending meromorphically on $\bm z$ and $p^{1/2}$ and satisfying the condition
\eq{ \label{qKZconsistency} 
\mc{A}_r(z_1, \ldots, p z_s, \ldots, z_k;p)\mc{A}_s(\bm z;p) = \mc{A}_s(z_1, \ldots, p z_r, \ldots, z_k;p)\mc{A}_r(\bm z;p) \qq \text{for } 1 \le r,s \le k;
}
consequently the system \eqref{bqKZ} is consistent (and thus defines a ``discrete connection'').
For the qKZ transport matrices at $p=1$, also called \emph{scattering matrices}, \eqref{qKZconsistency} turns into an ordinary integrability (i.e. commutativity) statement. The qKZ equations can then be seen as $p$-deformations of ordinary eigenvector equations of the scattering matrices (with eigenvalue 1).

Specifically, for the boundary qKZ equations, we have
\eq{ \label{bqKZmatrix}
\begin{aligned}
\mc{A}_r(\bm z;p) &=  R_{r,r-1}(\tfrac{pz_r}{z_{r-1}}) \cdots R_{r1}(\tfrac{pz_r}{z_1}) K^-_r(p^{1/2}z_r) R_{1r}(z_1z_r) \cdots R_{r-1,r}(z_{r-1}z_r) \\
& \qq  \times  R_{r+1,r}(z_rz_{r+1}) \cdots  R_{kr}(z_rz_k)  K^+_r(z_r) R_{kr}(\tfrac{z_k}{z_r})^{-1} \cdots R_{r+1,r}(\tfrac{z_{r+1}}{z_r})^{-1}.
\end{aligned}
}
In this case the consistency condition \eqref{qKZconsistency} is a consequence of \eqref{YBE}, \eqref{RE} and \eqref{LRE}.

Solutions of the boundary qKZ equations in special cases exist in various forms. They are best understood for representations of $U_q(\wh{\mfsl}_2)$.
Correlation functions of $U_q(\wh{\mfsl}_2)$-vertex operators with respect to so-called boundary states were obtained in \cite{JKKKM,JKKMW}.
In other types of solutions the importance of the representation theory of affine Hecke algebras of type B, C or D comes to the fore.
For example, Laurent polynomial solutions in terms of nonsymmetric Macdonald-Koornwinder polynomials were found in \cite{StVl}; when $p$ is a specific rational power of $q$, polynomials solutions involve connections with combinatorics, see \cite{DFZJ2} and references therein.
Finally, if the underlying QSP algebra is of type $\big(\rm{A}^{(1)}_{1}\big)^{\psi'}_{0,0} \cong \big(\rm{C}^{(1)}_1 \big)^{\pi}_0$, solutions in the form of Jackson integrals (bilateral series) and integrals have been obtained \cite{RSV1,RSV2,RSV3}; these solutions are in terms of boundary Bethe vectors as introduced by Sklyanin \cite{Sk} and thus K-matrices and R-matrices themselves. 

The formalism of commuting transfer matrices originates in the work of Baxter \cite{Ba2} on two-dimensional vertex models of statistical mechanics. Of particular relevance is his result that the transfer matrix of a two-dimensional vertex model is simply related to the Hamiltonian of a quantum spin chain \cite{Ba1}.
In the 1970s and 1980s the concept of commuting transfer matrices was developed further by the St. Petersburg school \cite{Fa,KuSk1}. This directly led to the RTT formulation of quantum groups \cite{FRT}.
Commuting transfer matrices for reflecting systems can be constructed from solutions of the Yang-Baxter equation and appropriate reflection equations, following the ideas and conventions used in \cite{Sk,MeNe1}.

To formalize this, it is convenient to consider the larger vector space $\C^N \ot (\C^N)^{\ot k}$ where the solitary tensor factor, called \emph{auxiliary space}, is labelled 0; for any $M \in \End(\C^N \ot (\C^N)^{\ot k})$ denote by $\Tr_0 M \in \End( (\C^N)^{\ot k})$ the relative trace with respect to the auxiliary space.
Define $K^{\rm d}(u) \in \End(\C^N)$ meromorphically depending on $u$ by means of
\eq{\label{Krelation}
 K^-_1(u) = \Tr_0 K^{\rm d}_0(u) P_{01}  R_{01}(u^2) \qq \text{or, equivalently,} \qu K^{\rm d}_1(u) = \Tr_0 K^-_0(u) P_{01} \wt R_{01}(u^2) .
}
As a consequence of \eqref{YBE}, \eqref{Krelation} provides a one-to-one correspondence between meromorphic solutions of the left RE \eqref{LRE} and meromorphic solutions of the \emph{dual reflection equation} (also see \cite{Sk,MeNe1,Vl} and \cite[Sec.~1.2]{WYCS}):
\eq{ 
R_{12}(u/v)^{-1} K^{\rm d}_1(u) \wt{R}_{21}(uv) K^{\rm d}_2(v) = K^{\rm d}_2(v) \wt{R}_{12}(uv) K^{\rm d}_1(u) R_{21}(u/v)^{-1}, \label{dualRE} }
where $\wt{R}(u) := ((R(u)^{t_1})^{-1})^{t_1}$.
We have assumed here that $R(u)^{t_1}$ is generically invertible, which is the case for the R-matrices considered in this paper.

The (inhomogeneous) boundary transfer matrix is the operator $\mc{T}(u;z_1,\ldots,z_k) \in \End((\C^N)^{\otimes k})$, depending meromorphically on $u$ and $\bm z$ and defined by
\eq{\label{btransfermatrix}
\begin{aligned}
\mc{T}(u;\bm z) &:= \Tr_0 K^{\rm d}_0(u) R_{10}(u z_1) \cdots R_{k0}(u z_k) K^+_0(u) R_{0k}(\tfrac{u}{z_k}) \cdots R_{01}(\tfrac{u}{z_1}).
\end{aligned}
}
We have
\eq{ \label{tmatrixcommutativity} [\mc{T}(u;\bm z),\mc{T}(v;\bm z)]=0}
as a consequence of \eqref{YBE}, \eqref{LRE} and \eqref{dualRE}.

Owing to \eqref{Ru:reg} certain transfer matrices \emph{interpolants} are equal to the scattering matrices:
\eq{ \label{ATrelation}
\mc{T}(z_r;\bm z) =\mc{A}_r(\bm z;1), \qq \mc{T}(z_r^{-1};\bm z) =  \mc{A}_r(\bm z;1)^{-1} \qq \text{for } 1 \le r \le k.
}
For the second relation of \eqref{ATrelation} we also need the unitarity of the left and right K-matrices, viz. $K^\pm(u)^{-1}=K^\pm(u^{-1})$. 
Then, under suitable conditions on the R-matrix datum, see e.g. \cite[Sec. 4.2]{Vl}, the commutativity of the $\mc{T}(u;\bm z)$ can be derived from the commutativity of the $\mc{A}_r(\bm z;1)$.

Sklyanin related the boundary transfer matrices in the homogeneous case $(z_1=\ldots=z_k=1)$ to quantum Hamiltonians for spin chains in the case that $R(u)$ and $K^+(u)$ have a suitable regularity condition at $u=1$, provided the R-matrix satisfies additional symmetries \cite[Prop.~4]{Sk}.
Similar expressions for these Hamiltonians can be found in \cite[Eqn.~(25-26)]{MeNe1} and, in the $U_q(\wh{\mfsl}_2)$-case with various choices for the K-matrices, \cite{BsBe2,BsKz2,JKKKM,JKKMW,StVl}.
We will now generalize this to a larger class of R-matrices and associated K-matrices.
For any meromorphic matrix-valued function $X$ we denote $X'(u) = \d{u} X(u)$ and define the Hamiltonian for the open chain of length $k$ as 
\eq{ \label{bHamiltonian} \mc{H} :=  (K^-_1)'(1)  + 2 \sum_{s=1}^{k-1} \hat R_{s, s+1}'(1) +  (K^+_k)'(1). }
Following \cite[Rmk. 3]{Sk} we note that $K^+$ and $K^-$ appear in a symmetric fashion in \eqref{bHamiltonian}.
\begin{prop}\label{prop:ATHrelation}
Suppose we have a meromorphic $\End(\C^N \ot \C^N)$-valued function $R$ satisfying \eqrefs{Ru:reg}{Ru:unit} and meromorphic $\End(\C^N)$-valued functions $K^\pm$ satisfying $K^\pm(1)=\Id$.
Defining $\mc{A}_r$, $\mc{T}$ and $\mc{H}$ according to \eqref{bqKZmatrix}, \eqref{btransfermatrix} and \eqref{bHamiltonian}, respectively.
We have
\eq{ \label{ATHrelation} \mc{H} = \mc{T}'( 1;\bm 1) = \frac{\partial \mc{A}_r(\bm z;1)}{\partial z_r} |_{\bm z = \bm 1} \qq \text{for } 1 \le r \le k. }
\end{prop}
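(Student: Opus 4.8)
The statement asserts two equalities: first, that the Hamiltonian $\mc{H}$ defined in \eqref{bHamiltonian} equals the logarithmic derivative at $u=1$ of the homogeneous boundary transfer matrix $\mc{T}(u;\bm 1)$; second, that this also equals the partial derivative of the scattering matrix $\mc{A}_r(\bm z;1)$ with respect to $z_r$ at $\bm z = \bm 1$, for any $r$. The second equality is the easier of the two: by \eqref{ATrelation} we have $\mc{T}(z_r;\bm z) = \mc{A}_r(\bm z;1)$ as functions of $z_r$ (with the remaining $z_s$ as parameters), so differentiating in $z_r$ and then specializing all $z_s=1$ immediately yields $\mc{T}'(1;\bm 1) = \frac{\partial}{\partial z_r}\mc{A}_r(\bm z;1)|_{\bm z=\bm 1}$. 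This requires only that the interpolation identity \eqref{ATrelation} holds, which in turn uses regularity \eqref{Ru:reg} and unitarity \eqref{Ru:unit} of $R(u)$; these are among the hypotheses. So the bulk of the work is the first equality, $\mc{H} = \mc{T}'(1;\bm 1)$.

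For the first equality I would follow Sklyanin's argument \cite[Prop.~4]{Sk}, adapted to the present conventions. First I would specialize \eqref{btransfermatrix} at $\bm z = \bm 1$, obtaining $\mc{T}(u;\bm 1) = \Tr_0 K^{\rm d}_0(u)\, R_{10}(u)\cdots R_{k0}(u)\, K^+_0(u)\, R_{0k}(\tfrac1u)\cdots R_{01}(\tfrac1u)$. At $u=1$, by \eqref{Ru:reg} we have $R_{j0}(1) = P_{j0}$ and $R_{0j}(1)=P_{0j}$, and $K^+_0(1)=\Id$; using the relation \eqref{Krelation} between $K^{\rm d}$ and $K^-$ together with regularity one checks that $\mc{T}(1;\bm 1)$ is (a scalar multiple of) the identity, so the derivative at $u=1$ is well-defined and the normalization is consistent. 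Then I would apply the Leibniz rule to $\frac{d}{du}\mc{T}(u;\bm 1)$ at $u=1$, differentiating each factor in turn. Each term where $R_{j0}(u)$ or $R_{0j}(u)$ is differentiated produces, after using $R_{m0}(1)=P_{m0}$ to collapse the remaining $R$-factors and cyclicity of $\Tr_0$, a contribution proportional to $\hat R'_{j,j+1}(1)$ acting on neighbouring sites (here one uses $\hat R(u) = P R(u)$ and standard identities such as $P_{0j}R'_{0j}(1) = P_{0j}\hat R'_{0j}(1)$, carefully tracking left/right placement against the conventions in the paragraph following \eqref{PRE}). The term where $K^+_0(u)$ is differentiated gives $(K^+_k)'(1)$ after collapsing all $R$-factors via $P$-permutations and taking the trace. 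The term where $K^{\rm d}_0(u)$ is differentiated, combined with the relation \eqref{Krelation}, is the one that produces $(K^-_1)'(1)$; this step requires differentiating \eqref{Krelation} and using $\wt R_{01}(1) = P_{01}$ (which follows from \eqref{Ru:reg}) and unitarity to identify $\frac{d}{du}\big(\Tr_0 K^{\rm d}_0(u) P_{01} \wt R_{01}(u^2)\big)|_{u=1}$ with $(K^-_1)'(1)$ up to the same scalar.

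Collecting the contributions gives exactly $\mc{H}$ as in \eqref{bHamiltonian}, with the factor of $2$ in front of the bulk terms arising because each bond $(s,s+1)$ receives a contribution both from differentiating $R_{s0}$ and from differentiating $R_{0,s+1}$ (and, for the chain-rule factor $u^2$ in the $K^{\rm d}$ relation, the factor $2$ is absorbed into the normalization; this bookkeeping is where I would be most careful). The main obstacle I anticipate is precisely this bookkeeping: matching the placement of the derivative factors against the specific (slightly non-standard, see the remark after \eqref{PRE}) conventions used here for $\hat R(u) = P R(u)$ and for which tensor slot the K-matrices occupy, and ensuring the relation \eqref{Krelation} between $K^-$ and $K^{\rm d}$ is differentiated correctly so that the two boundary terms $(K^\pm)'(1)$ appear symmetrically as \eqref{bHamiltonian} claims. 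Since the statement is only a proposition and the underlying computation is classical (Sklyanin's original derivation), I would present the argument compactly, citing \cite{Sk,MeNe1} for the analogous computations and emphasizing only the points where the present conventions differ.
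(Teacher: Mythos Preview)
Your approach is essentially the paper's, with two small caveats.

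For the second equality, differentiating $\mc{T}(z_r;\bm z) = \mc{A}_r(\bm z;1)$ in $z_r$ does \emph{not} immediately give $\mc{T}'(1;\bm 1)$: since $z_r$ also sits inside the inhomogeneity tuple $\bm z$, the chain rule produces an extra term $\partial_{z_r}\mc{T}(u;\bm z)|_{u=1,\bm z=\bm 1}$. The paper disposes of this by noting that $\mc{T}(1;\bm z) = \Id^{\otimes k}$ for all $\bm z$ (a consequence of \eqref{Ru:reg}, $K^\pm(1)=\Id$ and \eqref{Krelation}), so its $z_r$-derivative vanishes. You should include this step.

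For the first equality, the paper avoids the bookkeeping you anticipate by first rewriting the homogeneous transfer matrix as
\[
\mc{T}(u;\bm 1) = \Tr_0 K^{\rm d}_0(u)\, \hat R_{01}(u)\, \hat R_{12}(u)\cdots \hat R_{k-1,k}(u)\, K^+_k(u)\, \hat R_{k-1,k}(u)\cdots \hat R_{12}(u)\, \hat R_{01}(u),
\]
in which every factor equals $\Id$ at $u=1$ and each $\hat R_{s,s+1}$ with $1\le s\le k-1$ appears exactly twice; the factor of $2$ in the bulk term is then immediate from Leibniz, rather than arising from matching $R_{s0}$ with $R_{0,s+1}$ as you suggest. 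For the left boundary the paper differentiates \eqref{Krelation} at $u=1$ once and for all to obtain $\Tr K^{\rm d}(1)=1$ and $(K^-_1)'(1) = \Tr_0(K^{\rm d}_0)'(1) + 2\,\Tr_0 K^{\rm d}_0(1)\hat R'_{01}(1)$, which cleanly packages the $K^{\rm d}$-derivative together with the two $\hat R_{01}$ contributions into $(K^-_1)'(1)$. This is exactly the same idea you describe, but organized so that the ``main obstacle'' you flag never materializes.
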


\begin{proof}
From the first relation in \eqref{Krelation} we obtain
\eq{ \label{dualKat1} \Id = \Tr K^{\rm d}(1), \qq (K^-_1)'(1) = \Tr_0 (K^{\rm d}_0)'(1) + 2 \Tr_0  K^{\rm d}_0(1)  \hat R_{01}'(1)}
where $\hat R(u)  := P R(u)$. 
We may express the homogenized transfer matrix as
\[
\mc{T}(u;\bm 1) = \Tr_0 K^{\rm d}_0(u) \hat R_{01}(u) \hat R_{12}(u) \cdots \hat R_{k-1,k}(u) K^+_k(u) \hat R_{k-1,k}(u) \cdots \hat R_{12}(u) \hat R_{01}(u).\]
Thus, by a straightforward calculation,
\[
\mc{T}'(1;\bm 1) = \Tr_0 (K^{\rm d}_0)'(1) + 2 \sum_{s=0}^{k-1} \Tr_0 K^{\rm d}_0(1) \hat R_{s , s+1}'(1) + \Tr_0 K^{\rm d}_0(1) (K^-_k)'(1).
\]
Using \eqref{dualKat1} we obtain $ \mc{T}'( 1;\bm 1) = \mc{H}$.
To obtain the identity $\frac{\partial}{\partial z_r} \mc{A}_r(\bm z;1)|_{\bm z = \bm 1} = \mc{H}$, note that it is straightforward to establish that $\frac{\partial}{\partial z_r} \mc{A}_r(\bm z;1)|_{\bm z = \bm 1} = \mc{T}'( 1;\bm 1) $ by differentiating the first identity in \eqref{ATrelation} with respect to $z_r$ and using that $\mc{T}(1;\bm z) = \Id^{\otimes k}$ for generic values of $\bm z$.
\end{proof}

\pagebreak 

\begin{rmk} \mbox{}
\begin{enumerate}
\item
Equation \eqref{tmatrixcommutativity} and Proposition \ref{prop:ATHrelation} yield $[\mc{H},\mc{T}(u;\bm 1)]=0$, relating diagonalization problems of the Hamiltonian to simultaneous diagonalization problems of the transfer matrices.
Higher-order Hamiltonians can be defined through linear combinations of higher-order derivatives of the homogenized transfer matrix $\mc{T}(u;\bm 1)$ at $u=1$.
\item
Proposition \ref{prop:ATHrelation} is applicable to all singly or doubly regular K-matrices discussed in this paper. In the case when $K(-1) = \Id \ne K(1)$ one should redefine $K^\pm(u)$ as $K^\pm(-u)$. \hfill \rmkend
\end{enumerate}
\end{rmk}


It turns out that there exist natural rotation and dressing transformation formulas for the qKZ transport matrices, transfer matrices and Hamiltonians defined above, provided the transformations at the level of the K-matrices are chosen in a compatible way at the two boundaries.
More precisely, fix $\eta \in \K^\times$.
According to Lemma \ref{lem:REbasic} (iii), the assignment
\eq{ \label{Krighttransform} K^{+,Z}(u):=Z(\eta/u)^{-1} K^+(u) Z(\eta u)}
maps solutions of the right RE \eqref{RE} to itself. 
Here $Z(u)$ is a solution of \eqref{Ru:RZZ}, e.g. a rotation matrix $Z^\si(u)$ with $\si \in \Sigma_A$ or a dressing matrix $G(\bm \om)$ with $\bm \om$ as in Definition \ref{defnG}.
Fix another parameter $\zeta \in \K^\times$.
Similarly, the assignments
\eq{ \label{Klefttransform}
K^{-,Z}(u):=Z(\zeta u)^{-1} K^-(u) Z(\zeta/u) \qu \text{and} \qu K^{\rm d,Z}(u):= Z(\zeta u)^{-1} K^{\rm d}(u) Z(\zeta /u)
} 
permute the solutions of the left RE \eqref{LRE} and those of the dual RE \eqref{dualRE}, respectively. 

It follows from repeated application of \eqref{Ru:RZZ} that
\eqn{
\mc{A}^Z_r(\bm z;p) &:= \mc{A}_r(\bm z;p)|_{K^\pm(u) \mapsto K^{\pm,Z}(u),\zeta \to p^{1/2} \eta} \hspace{-10pt} &&=  \Big( \prod_{s \ne r} Z_s(\eta z_s) \Big)^{-1} Z_r(\eta p z_r)^{-1} \mc{A}_r(\bm z;p)  \prod_s Z_s(\eta z_s) , \\
\mc{T}^Z(u;\bm z) &:= \mc{T}(u;\bm z)|_{K^\pm(u) \mapsto K^{\pm,Z}(u), \zeta \to \eta} \hspace{-10pt} &&= \Big( \prod_{s} Z_s(\eta z_s) \Big)^{-1} \mc{T}(u;\bm z)  \prod_s Z_s(\eta z_s) , \\
\mc{H}^Z &:= \mc{H}|_{K^\pm(u) \mapsto K^{\pm,Z}(u), \zeta \to \eta} &&= \Big( \prod_{s} Z_s(\eta) \Big)^{-1} \mc{H} \prod_{s} Z_s(\eta). 
}
Note that these $Z$-transformed objects satisfy the same consistency and compatibility conditions as before, e.g. 
\[ 
\mc{A}^Z_r(z_1, \ldots, p z_s, \ldots, z_k;p) \mc{A}^Z_s(\bm z;p) = \mc{A}^Z_s(z_1, \ldots, p z_r, \ldots, z_k;p) \mc{A}^Z_r(\bm z;p) \qq \text{for } 1 \le r,s \le k .
\]

We end this discussion with a comment about the diagonalizability of $\mc{A}_r(\bm z;p)$ and $\mc{T}(u;\bm z)$.
Note that $\RT(a) := \RT_u(a)$ are independent of $u$ for all $a \in U_q(\mfh)$ and the R-matrices considered in this paper satisfy 
\eq{ \label{icerule} 
[\hat R(u/v),(\RT \ot \RT)(\Delta(a))]=0 \qq \text{for all } a \in U_q(\mfh).
}
This is essentially a generalization of the \emph{ice rule} (see \cite{Ba2}) enjoyed by the matrix $R(u)$ given by \eqref{Ru:Baxter} associated to $U_q(\wh{\mfsl}_2)$.
Suppose $K^+(u)$ is not diagonal ({\it e.g.}~$|I^*|\ne1$). Then $K^+(u)$ and $\RT(a)$ do not commute for all $a \in U_q(\mfh)$. 
Hence operators such as $\mc{A}_r(\bm z;p), \mc{T}(u;\bm z) \in \End((\C^N)^{\otimes k})$ do not commute with $\RT^{\ot k}(\Delta^{(k)}(a))$ for all $a \in U_q(\mfh)$, where we have recursively defined
\[ \Delta^{(k)} = (\Delta \otimes \id^{\otimes(k-2)}) \Delta^{(k-1)}: U_q(\mfg) \to U_q(\mfg)^{\ot k} \text{ for } k \in \Z_{>1}, \qq \Delta^{(1)} = \id. \]
Furthermore, the diagonalization \eqref{eigendecomposition} of $K^+(u)$ does not help, because as far as the authors are aware the equation
\[ R(v) W_1(u) V_2 = \wt W_1(u) V_2 R(v) \]
is not solvable for invertible $W(u),\wt W(u) \in \End(\K^N)$ unless $V = W(u) = \wt W(u)$.
It may be possible to diagonalize $\mc{A}_r(\bm z;p)$ and $\mc{T}(u;\bm z)$ by using R-matrices with a \emph{dynamical parameter}, see \cite{FK} for the case of $U_q(\wh\mfsl_2)$. 
One may expect that K-matrices which are generalized cross matrices ({\it i.e.}~those corresponding to quasistandard QP algebras) are most likely to be amenable to (generalizations of) this technique.


\subsubsection{Integrability in the twisted case} \label{sec:integrability:tw}

Here we briefly outline the analogues of the above constructions and statements in the twisted case. 
For more detail on quantum integrable systems with twisted boundary conditions see e.g.~\cite{BCDR,BCR,Gb}. 
Notably, in \cite{AACDFR2,Do1} transfer matrices and Hamiltonians are constructed for such systems; here we use a different approach although the underlying reflection equations are essentially the same.
Assume we have a meromorphic solution $\wt K^+(u)$ of the right twisted RE \eqref{tREalt} and a meromorphic solution $\wt K^-(u)$ of the \emph{left twisted reflection equation}
\eq{
R(\tfrac{u}{v})\, \wt K^-_1(u)\, (R_{21}(uv)^{-1})^{\t_1} \wt K^-_2(v) = \wt K^-_2(v)\, (R(uv)^{-1})^{\t_2} \wt K^-_1(u)\, R_{21}(\tfrac{u}{v})^\t. \label{LtREalt} 
} 
We remark that \eqref{tREalt} describes particle-to-antiparticle scattering and \eqref{LtREalt} describes antiparticle-to-particle scattering, so that composition of these two K-matrices corresponds to particle-to-particle scattering. 
Note that \eqref{LtREalt} is equivalent to \eqref{tREalt} by PT-symmetry \eqref{Ru:PT-symm}. 
Similar to the untwisted case, solutions of \eqref{LtREalt} can be bijectively related to solutions of \eqref{tREalt}: if $\wt K^+(u)$ is a solution of \eqref{tREalt}, then $\wt K^-(u) = C^{-1} \wt K^+(u)^\t (C^{-1})^\t$ is a solution of \eqref{LtREalt}. 

The twisted boundary qKZ transport operator is given by
\[
\begin{aligned}
\wt{\mc{A}}_r(\bm z;p) &= R_{r,r-1}(\tfrac{pz_r}{z_{r-1}}) \cdots R_{r1}(\tfrac{pz_r}{z_1}) \wt K^-_r(p^{1/2}z_r) \left(R_{1r}(z_1z_r)^{-1}\right)^{\t_r} \cdots \left(R_{r-1,r}(z_{r-1}z_r)^{-1}\right)^{\t_r} \\
& \qq \times \left( R_{r+1,r}(z_rz_{r+1})^{-1} \right)^{\t_r} \cdots \left( R_{kr}(z_rz_k)^{-1} \right)^{\t_r} \wt K^+_r(z_r) R_{kr}(\tfrac{z_k}{z_r})^{-1} \cdots R_{r+1,r}(\tfrac{z_{r+1}}{z_r})^{-1}. 
\end{aligned}
\]
and satisfies \eqref{qKZconsistency} as a consequence of \eqref{YBE}, \eqref{tREalt} and \eqref{LtREalt}.

The analogon of the relation \eqref{Krelation} in the twisted case is
\[
\wt K^-_1(u) = \Tr_0 \wt K^{\rm d}_0(u) P_{01} (R_{01}(u^2)^{-1})^{\t_1 } , \qquad 
\wt K^{\rm d}_1(u) = \Tr_0 \wt K^-_0(u) P_{01} R_{01}(u^2)^{\t_1} .
\]
As follows from a proof along the lines of \cite[App.~B]{Vl} and relying on \eqref{YBE}, it provides a bijection between solutions of the left twisted RE \eqref{LtREalt} and solutions of the dual twisted RE:
\eq{ \label{dualtRE}
R(\tfrac{u}{v})^{-1} \wt K^{\rm d}_1(u) R_{21}(uv)^{\t_1} \wt K^{\rm d}_2(v) = \wt K^{\rm d}_2(v) R(u v)^{\t_2} \wt K^{\rm d}_1(u) (R_{21}(\tfrac{u}{v})^{-1})^\t.
}
The twisted boundary transfer matrix is given by
\eq{ \label{btransfermatrixtwisted}
\wt{\mc{T}}(u;\bm z) := \Tr_0 \wt K^{\rm d}_0(u) (R_{10}(uz_1)^{-1})^{\t_0} \cdots (R_{k0}(uz_k)^{-1})^{\t_0} \wt K^+_0(u) R_{0k}(\tfrac{u}{z_k}) \cdots R_{01}(\tfrac{u}{z_1}). 
}
and satisfies \eqref{tmatrixcommutativity}.
We have the following relation between the twisted boundary transfer matrix and the twisted boundary qKZ transport matrix:
\[ \wt{\mc{T}}(z_r;\bm z) =  (\text{scalar}) \cdot \wt{\mc{A}}_r(\bm z;1) \qq \text{for } 1 \le r \le k. \]

If $\mfg^{\rm fin} \ne \mfsl_{N>2}$ then owing to Lemma \ref{lem:tw-untw} and its analogon for the left K-matrices, {\it viz.}~$\wt K^-(u) = K^-(\wt q u) C^{-1}$, we also have 
\[ \wt{\mc{T}}(\wt q^{-2} z_r^{-1};\bm z) =  (\text{scalar}) \cdot \wt{\mc{A}}_r(\bm z;1)^{-1} \qq \text{for } 1 \le r \le k. \]
For the same reason, for the Hamiltonian
\[ 
\wt{\mc{H}} := (\wt K^-_1)'(\wt q^{-1}) + 2 \sum_{s=1}^{k-1} \hat R'_{s,s+1}(1) + (\wt K^+_k)'(\wt q^{-1}) 
\]
we have the following version of Proposition \ref{prop:ATHrelation}:
\[ 
\wt{\mc{H}} := \wt{\mc T}'(\wt q^{-1};\wt q^{-1} \bm 1) = \frac{\partial \wt{\mc{A}}_r(\bm z;1)}{\partial z_r}|_{\bm z = \wt q^{-1} \bm 1} \qq \text{for } 1 \le r \le k. 
\]
Whenever the twisted K-matrix cannot be specialized to $C$ (for example owing to being a generalized involution matrix, as in types A.124) it is not possible to relate the Hamiltonian in this way to the transfer matrix and qKZ transport matrices.

Finally, the analogues of \eqrefs{Krighttransform}{Klefttransform} in the twisted case are 
\eq{ \label{Ktwtransform}
\begin{gathered}
\wt K^{+,Z}(u):= Z(\eta/u)^\t \wt K^+(u) Z(\eta u), \\
\wt K^{-,Z}(u):= Z(\zeta u)^{-1} \wt K^-(u) (Z(\zeta/u)^{-1})^\t, \qquad \wt K^{\rm d,Z}(u):= Z(\zeta/u)^{-1} \wt K^+(u) (Z(\zeta/u)^{-1})^\t.
\end{gathered}
}
The transformation rules for the qKZ transport matrices, transfer matrices and Hamiltonian are
\eqn{
\wt{\mc{A}}^Z_r(\bm z;p) &:= \wt{\mc{A}}_r(\bm z;p)|_{\wt K^\pm(u) \mapsto \wt K^{\pm,Z}(u),\zeta \to p^{1/2} \eta} \hspace{-10pt} &&=  \Big( \prod_{s \ne r} Z_s(\eta z_s) \Big)^{-1} Z_r(\eta p z_r)^{-1} \wt{\mc{A}}_r(\bm z;p) \prod_s Z_s(\eta z_s) , \\
\wt{\mc{T}}^Z(u;\bm z) &:= \wt{\mc{T}}(u;\bm z)|_{\wt K^\pm(u) \mapsto \wt K^{\pm,Z}(u), \zeta \to \eta} \hspace{-10pt} &&= \Big( \prod_{s} Z_s(\eta z_s) \Big)^{-1} \wt{\mc{T}}(u;\bm z)  \prod_s Z_s(\eta z_s) , \\
\wt{\mc{H}}^Z &:= \wt{\mc{H}}|_{\wt K^\pm(u) \mapsto \wt K^{\pm,Z}(u), \zeta \to \eta} &&= \Big( \prod_{s} Z_s(\eta) \Big)^{-1} \wt{\mc{H}} \Big( \prod_{s} Z_s(\eta)  \Big). 
}


\subsection{R-matrix presentation of quantum loop algebras and coideal subalgebras} \label{sec:RTT}

It is well-known (see {\it e.g.}~\cite[Sec.~2.3]{DeMk}) that K-matrices can be used to construct coideal subalgebras of quantum loop algebras, the so-called twisted quantum loop algebras, using the RTT presentation of quantum groups \cite{FRT}. Motivated by the pioneering works of Cherednik \cite{Ch1} and Sklyanin \cite{Sk}, Olshanski \cite{Ol} constructed twisted orthogonal and symplectic Yangians, denoted by $Y^+(N)$ and $Y^-(N)$, respectively, that are coideal subalgebras of the Yangian $Y(N)$ \cite{Dr1,Dr2}. These twisted Yangians are certain quantizations of the twisted current Lie algebra $\mfgl_N[x]^\theta$, where the involution $\theta$ is such that $\theta(x)=-x$ and $\mfgl_N^\theta=\mfso_N$ or $\mfgl_N^\theta=\mfsp_N$, corresponding to the symmetric pairs of type AI or AII, respectively. 

In \cite{MRS}, Molev, Ragoucy and Sorba constructed their q-analogues, coideal subalgebras of quantum loop algebras, called twisted orthogonal and twisted symplectic q-Yangians denoted by $Y^{\rm tw}_q(\mfso_N)$ and $Y^{\rm tw}_q(\mfsp_N)$, respectively. It was shown by Kolb \cite[Thm.~11.7]{Ko1} that certain specializations of the q-Yangians $Y^{\rm tw}_q(\mfso_N)$ and $Y^{\rm tw}_q(\mfsp_N)$ are isomorphic to QSP algebras of types A.1 and A.2, respectively. 

In \cite{CGM}, Chen, Guay and Ma constructed a family of twisted quantum loop algebras, denoted by $U_q^r(\wh\mfgl_N)$ and parametrized by $0\le r\le \lfloor N/2 \rfloor$, that correspond to the pairs of type AIII and are q-analogues of reflection algebras $\mc{B}(N,r)$ introduced by Molev and Ragoucy in \cite{MoRa}. Kolb noted in \cite[Rmk.~11.8]{Ko1}, that it is to be expected that these algebras are isomorphic to a certain family of QSP algebras. 

In this section we recall the RTT presentation of quantum loop algebras and briefly survey the description of coideal subalgebras by summarizing and extending the constructions presented in \cite{CGM,DeMk,MRS}. We then make a connection with the results obtained in the previous sections. In particular, we demonstrate how one obtains boundary intertwining equations studied in Section~\ref{sec:K-intw} starting with the RTT presentation of coideal subalgebras and answer the question raised in \mbox{\cite[Rmk.~11.8]{Ko1}}.


\subsubsection{R-matrix presentation of quantum groups}

Let $\mc{R}$ be the universal R-matrix of $U_q(\mfg^{\rm ext})$ and let $\mfb^{\rm ext, \pm}$ denote the standard Borel subalgebras of $\mfg^{\rm ext}$. Introduce $L$-operators by
\eq{
L^+(u) = (\RT_{u} \ot \id)(\mc{R}), \qq
L^-(u) = (\RT_{u} \ot \id)(\mc{R}_{21}^{-1}). \label{L-ops}
}
Recall that the universal R-matrix $\mc{R}$ is an element in a completion of $U_q(\mfb^{\rm ext,+}) \ot U_q(\mfb^{\rm ext,-})$ \cite{KhTo}. Thus the matrix elements $L^\pm_{ij}(u)$ of $L^\pm(u)$, where $i,j\in\lan N\ran$, are formal series 
\[
L^\pm_{ij}(u) = \sum_{r\ge 0} L^\pm_{ij}[r]\, u^{\pm r} \in U_q(\mfb^{\rm ext,\pm})[[u^{\pm1}]]
\] 
with coefficients satisfying $L^{\pm}_{ii}[0] \, L^\mp_{ii}[0] = 1$ for all $i$ and $L^-_{ij}[0] = L^+_{ji}[0] = 0$ for all $i<j$, which follows from the fact that $\RT_{u}(y_i)$ and $\RT_{u}(x_i)$ for ${1\le i \le n}$ are lower and upper triangular matrices, respectively; see Section \ref{sec:natrep} (the elements $L^\pm_{ij}[0]$ also satisfy additional symmetry relations that depend on the choice of $\mfg$, see \cite[\S 2]{FRT}). 
Introduce the elements $L^\pm_1(u)$ and $L^\pm_2(u)$ by
\[
L^\pm_1(u) = \sum_{ i,j\in\lan N\ran} E_{ij} \ot I \ot L^\pm_{ij}(u) ,  \qu
L^\pm_2(u) = \sum_{i,j\in\lan N\ran} I \ot E_{ij} \ot L^\pm_{ij}(u) .  
\]

\begin{prop}
The $L$-operators satisfy the following commutation relations
\eqa{
R(\tfrac uv)\, L^\pm_1(u)\, L^\pm_2(v) &= L^\pm_2(v)\, L^\pm_1(u)\, R(\tfrac uv) , \label{RTT12} \\
R(\tfrac uv)\, L^+_1(u)\, L^-_2(v) &= L^-_2(v)\, L^+_1(u)\, R(\tfrac uv) . \label{RTT3} 
}
\end{prop}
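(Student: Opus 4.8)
The statement to prove is that the $L$-operators $L^\pm(u)$ defined via the universal R-matrix in \eqref{L-ops} satisfy the commutation relations \eqref{RTT12} and \eqref{RTT3}. The strategy is the standard one going back to \cite{FRT}: translate these identities into consequences of the quasitriangularity axioms \eqref{R-uni}--\eqref{R-uni-2} for the universal R-matrix $\mc{R}$ of $U_q(\mfg^{\rm ext})$, by applying $\RT_u \otimes \RT_v \otimes \id$ (with the third tensor factor carrying the abstract algebra) to a suitable incarnation of the universal Yang--Baxter equation \eqref{uni-YBE}, and then identifying each pairwise application of $\RT$ with either the $R$-matrix intertwiner or an $L$-operator.

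First I would fix notation: write $\mc{R}_{13}$, $\mc{R}_{23}$ etc. for the images of $\mc{R}$ in appropriate tensor factors, with factor $3$ always the abstract one, and use $(\rho_1 \otimes \rho_2)(\mc{R}) = R^{(1,2)}$ as in Remark \ref{R:uni-R}(2). The key inputs are: (i) $(\RT_u \otimes \RT_v)(\mc{R}) = \hat R(u/v)^{?}$ up to the conventions fixed in Section \ref{sec:Rint} — more precisely, by \eqref{R:intw} and uniqueness (Schur's lemma), $(\RT_u \otimes \RT_v)(\mc{R})$ is proportional to $P R(v/u) P$ or $R(u/v)$ depending on the chosen normalization; since the relations \eqref{RTT12}--\eqref{RTT3} are homogeneous in each $L$-operator, the scalar is immaterial and I would simply absorb it. (ii) $L^+_1(u) = (\RT_u \otimes \id)(\mc{R})$ placed in factors $1$ and $3$, and $L^-_2(v) = (\RT_v \otimes \id)(\mc{R}_{21}^{-1})$ placed in factors $2$ and $3$. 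Then, for \eqref{RTT3}, I would start from the hexagon-type identity obtained by applying $(\Delta \otimes \id)$ to $\mc{R}$ and use \eqref{R-uni-2}, or more directly from the consequence $\mc{R}_{12}\mc{R}_{13}\mc{R}_{23} = \mc{R}_{23}\mc{R}_{13}\mc{R}_{12}$ of \eqref{uni-YBE}: applying $\RT_u \otimes \RT_v \otimes \id$ with the first $\mc{R}$ giving the constant-spectral-parameter $R$-matrix and the remaining two giving $L^+_1(u)$ and $L^+_2(v)$ yields \eqref{RTT12}; applying instead to the variant with $\mc{R}_{21}^{-1}$ inserted in the appropriate slot yields \eqref{RTT3}. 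The bookkeeping amounts to checking which factor each $\mc{R}$ lands in and matching with the definitions \eqref{L-ops}.

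Concretely, for \eqref{RTT12} I would write $\mc{R}_{12}^{(3,1)} \mc{R}^{(3,2)}$-type products where the superscripts indicate the $\RT$-evaluated copies, apply $\RT_u$ to the first auxiliary space and $\RT_v$ to the second, and read off that the term $(\RT_u \otimes \RT_v)(\mc{R})$ (with both evaluations in the ``matrix'' factors) becomes $R(u/v)$ up to normalization while $(\RT_u \otimes \id)(\mc{R})$ and $(\RT_v \otimes \id)(\mc{R})$ become $L^+_1(u)$ and $L^+_2(v)$. The relation \eqref{RTT12} then follows by rearranging the hexagon identity for $\mc{R}$. For \eqref{RTT3} the only additional subtlety is the appearance of $\mc{R}_{21}^{-1}$ in the definition of $L^-$; here I would use that $\mc{R}_{21}^{-1}$ also satisfies a quasitriangularity-type identity (it is the universal R-matrix for the opposite coproduct, up to inversion), so that the same manipulation goes through with $L^-_2(v) = (\RT_v \otimes \id)(\mc{R}_{21}^{-1})$ in place of $L^+_2(v)$, producing the mixed relation. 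Throughout one must keep track of whether the relevant $R$-matrix evaluation is $R(u/v)$ or $R(v/u) = R_{21}(u/v)$ via \eqref{Ru:PT-symm}; the statement as written uses $R(u/v)$ on both sides, which is consistent with the chosen ordering of tensor factors in $L^\pm_1, L^\pm_2$.

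The main obstacle is purely one of normalization and convention-tracking rather than of substance: the universal R-matrix lives in a completion of $U_q(\mfb^{\rm ext,+}) \otimes U_q(\mfb^{\rm ext,-})$ (as recalled just before the proposition), so one must be careful that the formal-series $L^\pm_{ij}(u)$ are well-defined elements of $U_q(\mfb^{\rm ext,\pm})[[u^{\pm 1}]]$ and that the products $L^\pm_1(u) L^\pm_2(v)$, $L^+_1(u) L^-_2(v)$ make sense in the completed tensor product — this is where the triangularity of $\RT_u(x_i)$ and $\RT_u(y_i)$ noted in the text (ensuring $L^\pm_{ij}[0]$ is triangular) is used. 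Once this is set up, the identities \eqref{RTT12}--\eqref{RTT3} are just the images under $\RT_u \otimes \RT_v \otimes \id$ of the defining axioms \eqref{R-uni-2} (equivalently \eqref{uni-YBE}) of $\mc{R}$, and the proof is a short and formal computation. I would present it in roughly half a page: state the two consequences of quasitriangularity to be used, evaluate, and identify terms.
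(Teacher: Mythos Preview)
Your proposal is correct and follows essentially the same approach as the paper: both derive the RTT relations by applying $\RT_u$ and $\RT_v$ (and $\id$ on the abstract factor) to suitable rearrangements of the universal Yang--Baxter equation \eqref{uni-YBE}. The paper is slightly more explicit about which rearrangement and which evaluation to use in each case: the standard YBE with $\RT_u \otimes \RT_v \otimes \id$ for the `$+$' relation, the inverted form $\mc{R}_{23}\mc{R}_{12}^{-1}\mc{R}_{13}^{-1} = \mc{R}_{13}^{-1}\mc{R}_{12}^{-1}\mc{R}_{23}$ with $\id \otimes \RT_u \otimes \RT_v$ for the `$-$' relation, and the mixed form $\mc{R}_{13}\mc{R}_{12}\mc{R}_{23}^{-1} = \mc{R}_{23}^{-1}\mc{R}_{12}\mc{R}_{13}$ with $\RT_u \otimes \id \otimes \RT_v$ for \eqref{RTT3}; but your outline captures the same mechanism, and your observation about the irrelevant scalar in $(\RT_u \otimes \RT_v)(\mc{R})$ is well placed.
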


\begin{proof}
This is a standard computation. We follow the arguments presented in the proof of Prop.~27 in \cite[Sec.~8.5.1]{KlSg}. Let $\si : a \ot b \mapsto b \ot a$ denote the flip operator. 
Applying \mbox{$\RT_{u} \ot \RT_{v} \ot \id$} to both sides of $\mc{R}_{12} \mc{R}_{13} \mc{R}_{23} = \mc{R}_{23} \mc{R}_{13} \mc{R}_{12}$  we obtain the relation with the `$+$' sign in \eqref{RTT12}. Next, applying \mbox{$\id \ot \RT_{u} \ot \RT_{v}$} to both sides of $\mc{R}_{23} \mc{R}_{12}^{-1} \mc{R}_{13}^{-1} = \mc{R}_{13}^{-1} \mc{R}_{12}^{-1} \mc{R}_{23}$ yields the relation $R_{23}(\tfrac uv)\,\si(L^-_2(u))\,\si(L^-_3(v)) = \si(L^-_3(v))\,\si(L^-_2(u))\,R_{23}(\tfrac uv)$, which is equivalent to the one with the `$-$' sign in \eqref{RTT12}. Lastly, applying $\RT_{u} \ot \id \ot \RT_{v}$ to both sides of $\mc{R}_{13} \mc{R}_{12} \mc{R}_{23}^{-1} = \mc{R}_{23}^{-1} \mc{R}_{12} \mc{R}_{13}$ we obtain the relation $R_{13}(\tfrac uv)\,L^+_1(u)\,\si(L^-_3(v))= \si(L^-_3(v))\,L^+_1(u)\,R_{13}(\tfrac uv)$, which is equivalent to \eqref{RTT3}.
\end{proof}

Relations \eqrefs{RTT12}{RTT3} are conveniently called the RTT (or LLR) relations. The algebra generated by the coefficients $L^\pm_{ij}[r]$ with $r\ge0$ is called the RTT (or R-matrix) presentation of $U_q(\mfg)$; we will denote this algebra by $U^R_q(\mfg)$. Note that $U^R_q(\mfg)$ is a quantum loop algebra. This is because the universal $R$-matrix factorizes as $\mc{R}= \mathscr{R}_0\mathscr{R}$ with quasi $R$-matrix $\mathscr{R}$ independent of the central element $k_c$ and $(\RT_{u} \ot \id)(\mathscr{R}_0)=(\id \ot \RT_{u})(\mathscr{R}_0)=I$, see Section \ref{sec:affine} and \cite[Sec.~4]{FrRt}. The subalgebra generated by the coefficients $L^\pm_{ij}[0]$ is isomorphic to $U_q(\mfg^{\rm fin})$. (Isomorphisms between different presentations of $U_q(\wh\mfgl_N)$ and $U_q(\wh\mfsl_N)$ are studied in \cite{FrMn}. The RTT presentation of $U_q(\wh\mfso_N)$ and $U_q(\wh\mfsp_N)$ is studied in \cite{GRW}.) 

\smallskip

The coalgebra structure on $U^R_q(\mfg)$ is as follows. Recall that $(\Delta\ot\id)(\mc{R}) = \mc{R}_{23}\mc{R}_{13}$ \eqref{R-uni-2}. Let $\bar L^\pm_{ij}(u)$ denote the matrix elements of the inverses $L^\pm(u)^{-1}$. Then the coproduct on the elements $L^\pm_{ij}(u)$ and $\bar L^\pm_{ij}(u)$ is given by
\eq{
\Delta(L^\pm_{ij}(u)) = \sum_{a\in \lan N \ran} L^\pm_{ia}(u) \ot L^\pm_{aj}(u) , \qq
\Delta(\bar L^\pm_{ij}(u)) = \sum_{a\in \lan N \ran} \bar L^\pm_{aj}(u) \ot \bar L^\pm_{ia}(u). \label{L:cop}
}
%


\subsubsection{R-matrix presentation of coideal subalgebras} 

We will focus on coideal subalgebras of $U_q(\mfg)$ that are constructed via certain embeddings of the so-called $S$-operators of untwisted and twisted type. In particular, we introduce untwisted and twisted \mbox{$S$-operators} by
\eq{
S(u) = L^-(\tfrac1u)^{-1} K(u) \,L^+(u)  , \qq 
\wt S(u) = L^+(\tfrac1u)^\t\, \wt K(u) \,L^-(u) , \label{S-mat}
}
where $K(u)$ and $\wt K(u)$ are solutions of the untwisted and twisted reflection equations, respectively with matrix elements being rational functions in $u$ over $\K$, so that expanding in series at $u=0$ we have $K(u) \in \End(\K^N)[[u]]$ and at $u=\infty$ we have $\wt K(u) \in \End(\K^N)[[u^{-1}]]$. With these assumptions $S$-operators are well-defined elements in $U^R_q(\mfg)[[u]]$ and $U^R_q(\mfg)[[u^{-1}]]$, respectively.

Let $S_1(u)$, $S_2(u)$ and $\wt S_1(u)$, $\wt S_2(u)$ be defined in an analogous way as $L^\pm_1(u)$ and $L^\pm_2(u)$. We can write commutation relations for matrix elements of $S(u)$ and $\wt S(u)$ as follows.

\begin{prop}
The untwisted $S$-operator satisfies the untwisted algebraic reflection equation
\eqa{ 
R_{21}(\tfrac{u}{v})\,S_1(u)\,R(uv)\,S_2(v) &= S_2(v)\,R_{21}(uv)\,S_1(u)\,R(\tfrac{u}{v}), \label{S:RE}
}
in $\End((\K^N)^{\ot2}) \ot U^R_q(\mfg)[[u,v]]$. The twisted $S$-operator satisfies the twisted algebraic reflection equation
\eqa{ 
R(\tfrac{u}{v})\,\wt S_1(u)\,R(\tfrac{1}{uv})^{\t_1}\,\wt S_2(v) &= \wt S_2(v)\,R(\tfrac{1}{uv})^{\t_1}\,\wt S_1(u)\,R(\tfrac{u}{v})  \label{S:tRE} 
}
in $\End((\K^N)^{\ot2}) \ot U^R_q(\mfg)[[u^{-1},v^{-1}]]$. 

\end{prop}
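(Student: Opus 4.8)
The statement to prove is that the untwisted $S$-operator \eqref{S-mat} satisfies the untwisted algebraic reflection equation \eqref{S:RE} and the twisted $S$-operator satisfies the twisted algebraic reflection equation \eqref{S:tRE}. The overall strategy is the standard ``Sklyanin fusion'' computation: start from the left-hand side of the desired relation, substitute $S(u) = L^-(\tfrac1u)^{-1} K(u) L^+(u)$, and then commute the constituent $L$-operators past each other and past the $R$-matrices using only the RTT relations \eqref{RTT12}--\eqref{RTT3}, the Yang-Baxter equation \eqref{YBE}, and the reflection equation \eqref{RE} or \eqref{tRE} satisfied by $K(u)$ or $\wt K(u)$; the upshot is that every move is reversible and one arrives at the right-hand side. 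The fact that all the relevant series expansions (at $u=0$ for $L^+$, $K$, and at $u=\infty$ for $L^-$) live in the appropriate completions of $U^R_q(\mfg)$ makes the manipulations legitimate termwise.

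First I would record the ``crossing'' form of the RTT relations needed to move an $L^-$-operator in tensor slot $1$ past an $L^+$-operator in slot $2$: rewriting \eqref{RTT3} with $u \leftrightarrow v$ and conjugating by $P$ gives $R_{21}(\tfrac uv)\, L^-_1(v)\, L^+_2(u) = L^+_2(u)\, L^-_1(v)\, R_{21}(\tfrac uv)$, and inverting $L^-$ on both sides yields the form with $(L^-_1)^{-1}$ that one actually needs. One also needs the spectral-parameter reflected versions, obtained by replacing $u \to 1/u$, $v\to 1/v$ where appropriate, together with unitarity \eqref{Ru:unit} and PT-symmetry \eqref{Ru:PT-symm} of $R(u)$ to bring the arguments into the shape appearing in \eqref{S:RE}. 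With these auxiliary identities in hand, the untwisted case is a bookkeeping exercise: expand $S_1(u)$ and $S_2(v)$ in \eqref{S:RE}, slide the four $L$-operators (two $L^+$, two $L^-$) into position so that the two $K$-matrices become adjacent and sandwiched by precisely the four $R$-matrices occurring in \eqref{RE}, apply \eqref{RE}, and then unwind the $L$-operators back out the other side to reach $S_2(v)\, R_{21}(uv)\, S_1(u)\, R(\tfrac uv)$. This is exactly the argument used, e.g., in \cite[Sec.~2.3]{DeMk}; I would present it as a chain of equalities with each step annotated by the identity invoked.

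For the twisted case I would argue analogously, starting from $\wt S(u) = L^+(\tfrac1u)^\t \wt K(u) L^-(u)$, but now the transpose on the first factor forces one to work with the partial-transpose forms of the RTT relations. The key extra ingredient is a ``transposed'' RTT identity governing how $L^+(\tfrac1u)^{\t_1}$ commutes with $R(\tfrac1{uv})^{\t_1}$ and with $L^-(v)$ in slot $2$; this is obtained from \eqref{RTT12}--\eqref{RTT3} by applying $\t_1$ (partial transposition in the auxiliary slot carrying the transposed $L^+$), in complete parallel to how Lemma \ref{L:R-intw} produces partial-transpose versions of the intertwining relation \eqref{R:intw}. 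Once these are assembled, the computation proceeds exactly as before, now invoking the twisted reflection equation \eqref{tRE} (equivalently \eqref{tREalt}) at the point where the two $\wt K$-matrices are adjacent.

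\textbf{Main obstacle.} The genuinely delicate point is not any single algebraic move but getting the transposition bookkeeping in the twisted case exactly right: one must be scrupulous about which tensor slot each $\t_i$ acts on, about the order reversal that transposition induces, and about the fact that $(L^+(\tfrac1u)^\t)$ is an element of $\End(\K^N)\otimes U^R_q(\mfg)$ whose matrix entries are the \emph{un}transposed $L^+_{ji}$ — so the naive RTT relation must be transposed before use. I would handle this by fixing once and for all the convention that $\t_i$ transposes the $i$-th $\End(\K^N)$ tensor factor only (leaving the algebra factor untouched), deriving the two or three transposed RTT identities I need as explicit displayed lemmas, and only then running the fusion chain; with that scaffolding the twisted computation becomes a mirror image of the untwisted one. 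A secondary, purely technical, point is to confirm convergence: since $L^+(u), K(u) \in \End(\K^N)[[u]]$ and $L^-(1/u) \in \End(\K^N)[[u^{-1}]]$ and similarly in the twisted case, each coefficient of $u^a v^b$ on either side of \eqref{S:RE}/\eqref{S:tRE} is a finite sum in $U^R_q(\mfg)$, so the identities make sense and may be verified degree by degree — I would note this at the outset so that the formal manipulations are unambiguous.
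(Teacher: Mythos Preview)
Your proposal is correct and follows exactly the approach the paper has in mind: the paper's own proof simply declares the computation ``standard'' and defers to \cite[Prop.~2]{Sk}, \cite[Thm.~6.3]{CGM} for the untwisted case and \cite[Sec.~3]{MRS} for the twisted case, all of which carry out precisely the Sklyanin-type fusion argument you outline (substitute the definition of $S$, commute $L$-operators past $R$-matrices via the RTT relations, and reduce to the numerical reflection equation for $K$). Your plan is in fact more explicit than the paper's treatment, and your identification of the transposition bookkeeping as the main care point in the twisted case is accurate.
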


\begin{proof}
It is a standard computation to show that $S(u)$ satisfies \eqref{S:RE}. This was first observed in \cite[Prop.~2]{Sk}; for a detailed proof see {\it e.g.}~\cite[Thm.~6.3]{CGM} (note that $R(u,v)$ in {\it ibid.}~corresponds to our $R_{21}(\tfrac{u}{v})$ for $\mfg = \wh{\mfsl}_N$). The proof of \eqref{S:tRE} is analogous; this was shown in \cite[Sec.~3]{MRS}
\end{proof}

Let $\mcB$ denote the algebra generated by the coefficients $S_{ij}[r]$ with $r\ge0$ of the matrix entries $S_{ij}(u)$ of $S(u)$, and let the algebra $\wt B$ be defined analogously in terms of $\wt S(u)$. 
The coalgebra structure on $\mcB$ and $\wt \mcB$ is described by the proposition below.

\begin{prop}
The algebras $\mcB$ and $\wt \mcB$ are right coideal subalgebras of $U^R_q(\mfg)$. 
\end{prop}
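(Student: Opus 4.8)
The statement to prove is that the subalgebras $\mcB$ and $\wt\mcB$ of $U^R_q(\mfg)$, generated by the coefficients of the matrix entries of $S(u)$ and $\wt S(u)$ respectively, are \emph{right coideal subalgebras}, i.e.\ $\Delta(\mcB)\subseteq \mcB\ot U^R_q(\mfg)$ and similarly for $\wt\mcB$. The plan is to show this by computing $\Delta$ applied to the generating matrices $S(u)$ and $\wt S(u)$ directly, using the known coproduct formulas \eqref{L:cop} for the $L$-operators and the fact that $K(u)$, $\wt K(u)$ are scalar matrices (so $\Delta$ acts trivially on their entries). The key observation is that $S(u)$ is built out of $L^-(1/u)^{-1}$ on the left and $L^+(u)$ on the right, sandwiching $K(u)$, and that $\Delta$ distributes these two $L$-factors to opposite tensor legs in the appropriate sense.

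\textbf{Key steps.} First I would record the coproduct of $L^+(u)$ and of $L^-(1/u)^{-1}$ in compact matrix form: writing $\Delta(L^+(u)) = L^+(u) \,\dot\ot\, L^+(u)$ and $\Delta(L^-(1/u)^{-1}) = L^-(1/u)^{-1} \,\dot\ot\, L^-(1/u)^{-1}$ in the sense of \eqref{L:cop} (here $\dot\ot$ denotes the matrix product in $\End(\K^N)$ combined with $\ot$ in the algebra factors, with the order of the algebra factors as dictated by \eqref{L:cop} --- note that for the inverse $\bar L^\pm$ the two algebra legs are \emph{swapped}). Second, since $K(u)\in\End(\K^N)$ has entries in the ground field, $\Delta(K(u)_{ij}) = K(u)_{ij}\ot 1$, so $K(u)$ passes through $\Delta$ as $K(u)\,\dot\ot\, 1$ in the first leg (equivalently $1 \,\dot\ot\, K(u)$). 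Third, assembling $S(u) = L^-(1/u)^{-1} K(u) L^+(u)$ and using that $\Delta$ is an algebra homomorphism, I expect to obtain an identity of the schematic shape
\eq{
\Delta(S(u)_{ij}) = \sum_{a,b\in\lan N\ran} S(u)_{ab} \ot \bar L^-_{ia}(\tfrac1u)\, L^+_{bj}(u),
}
after matching the leg assignments coming from \eqref{L:cop}: the outer $L^-$-inverse contributes $\bar L^-_{ia}(1/u)$ to the second leg and $S$-type factors to the first, while $L^+(u)$ contributes $L^+_{bj}(u)$ to the second leg. Each $S(u)_{ab}$ lies in $\mcB$ and each $\bar L^-_{ia}(1/u) L^+_{bj}(u)$ lies in $U^R_q(\mfg)$, so the right-hand side lies in $\mcB\ot U^R_q(\mfg)$, and expanding in powers of $u$ gives $\Delta(S_{ij}[r])\in \mcB\ot U^R_q(\mfg)$ for all $r$. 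Since $\mcB$ is by definition generated by these coefficients and $\Delta$ is multiplicative, $\Delta(\mcB)\subseteq\mcB\ot U^R_q(\mfg)$ follows. Fourth, I would run the entirely parallel computation for $\wt S(u) = L^+(1/u)^\t \wt K(u) L^-(u)$, now with transposition in the first factor; the transpose reverses the matrix index order in the $L^+$-leg but the leg-assignment logic of \eqref{L:cop} is unchanged, yielding an identity
\eq{
\Delta(\wt S(u)_{ij}) = \sum_{a,b\in\lan N\ran} \wt S(u)_{ab} \ot L^+_{ai}(\tfrac1u)\, L^-_{bj}(u),
}
again of the form $\wt\mcB\ot U^R_q(\mfg)$.

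\textbf{Main obstacle.} The only real subtlety --- and the step I expect to demand the most care --- is bookkeeping of which algebra tensor leg each $L$-factor lands in, because $\Delta$ on $L^\pm_{ij}(u)$ keeps the algebra-leg order $L^\pm_{ia}\ot L^\pm_{aj}$ but on the inverses $\bar L^\pm_{ij}(u)$ it \emph{reverses} it to $\bar L^\pm_{aj}\ot\bar L^\pm_{ia}$, per \eqref{L:cop}. One must check that, after this reversal, the $S$-type entries (which involve $\bar L^-$, $K$ and $L^+$ all contracted together) genuinely collect into the \emph{first} leg while only pure $L$-operator entries survive in the \emph{second} leg; getting the matrix multiplications contracted in the correct order is where an error could creep in. A clean way to handle this is to work with the three-fold-tensor ``$RLL$''-style notation used earlier in the section (with explicit auxiliary spaces labelled $1,2$ and the algebra in the last slot) so that the contractions are forced. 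Once the leg assignment is verified, the conclusion that the right-hand sides lie in $\mcB\ot U^R_q(\mfg)$ and $\wt\mcB\ot U^R_q(\mfg)$ is immediate, and the coideal property follows since $\Delta$ is an algebra map and $\mcB$, $\wt\mcB$ are generated by the coefficients of $S(u)$, $\wt S(u)$.
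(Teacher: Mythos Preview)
Your proposal is correct and follows essentially the same approach as the paper: both compute $\Delta(S_{ij}(u))$ and $\Delta(\wt S_{ij}(u))$ directly from the coproduct formulas \eqref{L:cop}, using that $K(u)$ and $\wt K(u)$ are scalar, and arrive at exactly the identities you wrote down (the paper uses summation indices $a,d$ where you use $a,b$). Your discussion of the leg-reversal for $\bar L^\pm$ is precisely the bookkeeping the paper's short computation encodes.
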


\begin{proof}
This is easy to show using \eqref{S-mat} and \eqref{L:cop}. A direct computation gives
\eqn{
\Delta(S_{ij}(u)) &= \sum_{b,c \in \lan N \ran} \Delta\big( \bar L^{-}_{ib}(\tfrac1u) K_{bc}(u) L^{ + }_{cj}(u)\big) = \sum_{a,b,c,d\in \lan N \ran} \bar L^{-}_{ab}(\tfrac1u)\,K_{bc}(u)\, L^{+}_{cd}(u) \ot \bar L^{-}_{ia}(\tfrac1u) \, L^{+}_{dj}(u) \\ 
&= \sum_{a,d\in \lan N \ran} S_{ad}(u) \ot \bar L^{-}_{ia}(\tfrac1u)\, L^{+}_{dj}(u) \in (\mcB \ot U^R_q(\mfg))[[u]] 
}
in the untwisted case, and
\eqn{
\Delta(\wt S_{ij}(u)) &= \sum_{b,c\in \lan N \ran} \Delta\big( L^{+}_{bi}(\tfrac1u)\, \wt K_{bc}(u)\, L^{-}_{cj}(u) \big) = \sum_{a,b,c,d\in \lan N \ran} L^{+}_{ba}(\tfrac1u)\,\wt K_{bc}(u)\, L^{-}_{cd}(u) \ot L^{+}_{ai}(\tfrac1u)\, L^{-}_{dj}(u) \\ 
& = \sum_{a,d\in \lan N \ran} \wt S_{ad}(u) \ot L^{+}_{ai}(\tfrac1u)\, L^{-}_{dj}(u) \in (\wt \mcB \ot U^R_q(\mfg) )[[u^{-1}]]
}
in the twisted case.
\end{proof}

We are now ready to obtain the RTT presentation of the boundary intertwining relations.

\begin{prop}
The following intertwining relations hold
\eqa{
K_2(v)\, (\id \ot \RT_{v}) ( S_1(u)) &= (\id \ot \RT_{1/v})(S_1(u))\, K_2(v), \label{SK} \\
\wt K_2(v)\, (\id \ot \RT_{v}) (\wt S_1(u)) &= (\id \ot \RT^\t_{1/v})(\id \ot S) (\wt S_1(u))\, \wt K_2(v)  . \label{tSK} 
}
\end{prop}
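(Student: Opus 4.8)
The two identities \eqref{SK} and \eqref{tSK} are ``single-site'' intertwining relations obtained by evaluating the second tensor factor of the algebraic reflection equations \eqref{S:RE} and \eqref{S:tRE} in the vector representation $\RT_v$, after which the $\mcB$-valued matrix entries $S_{ij}(u)$ become operators that commute with $K(v)$ in a twisted sense. The strategy is therefore: (i) apply the algebra homomorphism $\id \otimes \id \otimes \RT_v$ to \eqref{S:RE}, using that $S(u)$ is built from $L^\pm$-operators and that $(\RT_u\otimes\RT_v)(\mc R)=R(u/v)$ up to the normalisation fixed in Section~\ref{sec:Rint}; (ii) recognise $(\id\otimes\RT_v)(S_1(u))$ and $(\id\otimes\RT_{1/v})(S_1(u))$ as the relevant operator-valued matrices, and use the regularity \eqref{Ru:reg} and unitarity \eqref{Ru:unit} of $R(u)$ together with the explicit form of $S(u)$ in \eqref{S-mat} to cancel the extra R-matrix factors; (iii) repeat the analogous manipulation for the twisted case, where the partial transposition $\t_1$ in \eqref{S:tRE} produces the transposed evaluation $\RT^\t_{1/v}$ and the antipode $S$ appears through the identity $L^+(u^{-1})^\t \leftrightarrow (\id\otimes S)(L^-(u))$ coming from $(\RT_u\otimes\id)(\mc R_{21}^{-1})$ versus $(\RT_u\otimes\id)((\id\otimes S)(\mc R))$.

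More concretely, in the untwisted case I would start from \eqref{S:RE}, apply $\RT_v$ to the third (algebra) tensor factor, and write the result as an identity in $\End(\K^N)\otimes\End(\K^N)$, where now the first factor carries the matrix indices of $S_1(u)$ and the second factor is the ``auxiliary'' space of $\RT_v$. Substituting $S(u)=L^-(1/u)^{-1}K(u)L^+(u)$ and using \eqref{RTT12}--\eqref{RTT3} evaluated under $\RT_v$ — equivalently, using that $(\id\otimes\RT_v)(L^\pm(u))$ intertwine $R$-matrices with $\RT_v$ by the very definition \eqref{L-ops} and the intertwining property \eqref{R:intw} of $R(u)$ — the $L^\pm$-dressing can be pushed through both sides. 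What remains is precisely \eqref{RE} with one copy of $K$ replaced by $(\id\otimes\RT_v)(S_1(u))$ and the spectral parameter $v$ replaced by $1/v$ on the appropriate side; this is exactly \eqref{SK} after stripping off the R-matrices using regularity at coincident arguments. The cleanest route is probably to mimic the proof of Proposition~\ref{P:B-intw-untw}: show both sides of \eqref{SK} intertwine the same pair of $\RT_v$-actions, invoke Schur's lemma for generic $v$ to get proportionality, and fix the scalar to $1$ by a specialisation (e.g.\ sending $v$ to a value where $R$ degenerates to $P$).

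For the twisted identity \eqref{tSK} the same scheme applies but bookkeeping is heavier: one evaluates \eqref{S:tRE} in $\RT_v$, uses $\wt S(u)=L^+(1/u)^\t\,\wt K(u)\,L^-(u)$, and must track how the partial transpose $\t_1$ interacts with the evaluation — this is where the identities \eqref{R-intw-1}--\eqref{R-intw-3} from Lemma~\ref{L:R-intw} (or rather their $L$-operator shadows) and the coproduct formulas \eqref{L:cop} enter, producing the antipode $S$ and the transposed evaluation $\RT^\t_{1/v}$ on the right-hand side, consistent with the pattern already seen in the proof of Proposition~\ref{P:B-intw-tw}. The main obstacle I anticipate is precisely this transposition/antipode bookkeeping in the twisted case: making sure the partial transpose $\t_1$ in \eqref{S:tRE}, when combined with $L^+(1/u)^\t$ in $\wt S(u)$ and with the evaluation $\RT_v$ on the algebra factor, assembles exactly into $(\id\otimes\RT^\t_{1/v})(\id\otimes S)(\wt S_1(u))$ and not some cousin of it — this requires carefully using that $S$ is a coalgebra antiautomorphism and that transposition is an algebra antiautomorphism of $\End(\K^N)$, in the same spirit as Lemma~\ref{L:R-intw}. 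Once that is set up, \eqref{tSK} follows by the same Schur-lemma-and-normalisation argument as \eqref{SK}.
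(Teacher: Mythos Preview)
Your proposal is built around the wrong starting point. You want to begin from the \emph{algebraic} reflection equation \eqref{S:RE} and extract \eqref{SK} by evaluating the algebra factor and ``stripping off R-matrices using regularity at coincident arguments''. But \eqref{S:RE} is an identity in $\End((\K^N)^{\ot 2}) \ot U_q^R(\mfg)$ with two matrix spaces (labelled $1$ and $2$) and one algebra factor. Applying $\id\ot\id\ot\RT_w$ to it gives an identity in $\End((\K^N)^{\ot 3})$ involving $(\id\ot\RT_w)(S(u))$ in spaces $(1,3)$ and $(\id\ot\RT_w)(S(v))$ in spaces $(2,3)$, together with $R$-matrices $R(u/v)$, $R(uv)$ in spaces $(1,2)$. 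No specialisation of $v$ or $w$ turns $(\id\ot\RT_w)(S(v))$ into the scalar matrix $K(v)$ while simultaneously making the $(1,2)$-$R$-matrices disappear. In particular, setting $v=w$ does not collapse space $2$ onto space $3$, and regularity $R(1)=P$ helps only with ratios of spectral parameters inside the R-matrices, not with the product $uv$ appearing in \eqref{S:RE}. Your Schur-lemma alternative is likewise misdirected: \eqref{SK} \emph{is} the intertwining relation; there is no underlying irreducibility argument to invoke here.

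The paper's proof is considerably more direct and avoids \eqref{S:RE} altogether. One simply computes $(\id\ot\RT_v)(S_1(u))$ by applying $\id\ot\RT_v$ term-by-term to $S(u)=L^-(1/u)^{-1}K(u)L^+(u)$, using
\[
(\id\ot\RT_v)(L^+(u)) = (\RT_u\ot\RT_v)(\mc R)=r_0(\tfrac{u}{v})R(\tfrac{u}{v}), \qquad
(\id\ot\RT_v)(L^-(\tfrac{1}{u})^{-1}) = (\RT_{1/u}\ot\RT_v)(\mc R_{21})=r_0(uv)R_{21}(uv).
\]
This yields $(\id\ot\RT_v)(S_1(u)) = r_0(\tfrac{u}{v})\,r_0(uv)\,R_{21}(uv)\,K_1(u)\,R(\tfrac{u}{v})$ and, replacing $v$ by $1/v$, $(\id\ot\RT_{1/v})(S_1(u)) = r_0(\tfrac{u}{v})\,r_0(uv)\,R_{21}(\tfrac{u}{v})\,K_1(u)\,R(uv)$. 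After cancelling the common scalar, \eqref{SK} is \emph{literally} the reflection equation \eqref{RE}; nothing remains to be stripped. You half-state this ingredient in your second paragraph (``$(\id\ot\RT_v)(L^\pm(u))$ intertwine R-matrices with $\RT_v$ by the very definition \eqref{L-ops}''), but then bury it inside the unnecessary \eqref{S:RE}-framework and misdescribe the outcome as ``\eqref{RE} with one copy of $K$ replaced by $(\id\ot\RT_v)(S_1(u))$''. The twisted case \eqref{tSK} follows the same pattern: compute $(\id\ot\RT_v)$ and $(\id\ot\RT^\t_{1/v})(\id\ot S)$ on each factor $L^+(1/u)^\t$, $\wt K(u)$, $L^-(u)$ of $\wt S(u)$, using $(S\ot\id)(\mc R)=\mc R^{-1}$ where needed; the identity then reduces to \eqref{tRE}.
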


\begin{proof}
We demonstrate \eqref{SK} first. Observe that
\eqn{
(\id \ot \RT_{v})(L^{+}(u)) &= (\RT_{u} \ot \RT_{v})(\mc{R}) = r_0(\tfrac uv)\, R(\tfrac uv) , \\
(\id \ot \RT_{v})(L^{-}(\tfrac1u)^{-1}) &= (\RT_{1/u} \ot \RT_{v})(\mc{R}_{21}) = r_0(uv)\,R_{21}(uv) .
}
Combining this with \eqref{S-mat} we obtain
\eqn{
(\id \ot \RT_{v})(S_1(u)) &= r_0(\tfrac uv)\,r_0({uv})\, R_{21}(uv) \, K_1(u)\, R(\tfrac{u}{v}) , \\
(\id \ot \RT_{1/v})(S_1(u)) &= r_0(\tfrac uv)\,r_0({uv})\,R_{21}(\tfrac uv) \, K_1(u)\, R(uv) ,
}
Finally, by combining the expressions above with \eqref{RE} we deduce that \eqref{SK} is indeed true. 
Let us now demonstrate \eqref{tSK}. For the left hand side we have
\begin{gather*}
(\id \ot \RT_{v})(L^{+}(\tfrac1u)^\t) = (\RT_{1/u}^\t \ot \RT_{v})(\mc{R}) = r_0(\tfrac{1}{uv})\, R(\tfrac{1}{uv})^{\t_1} , \\
(\id \ot \RT_{v})(L^{-}(u)) = (\RT_{u} \ot \RT_{v})(\mc{R}^{-1}_{21}) = r_0(\tfrac vu)^{-1} R_{21}(\tfrac vu)^{-1} = r_0(\tfrac vu)^{-1} R(\tfrac uv) ,
\end{gather*}
giving
$$
(\id \ot \RT_{v})(\wt S_1(u)) = r_0(\tfrac{1}{uv})\,r_0(\tfrac vu)^{-1} R(\tfrac{1}{uv})^{\t_1}\,\wt K_1(u)\,R(\tfrac uv) .
$$
For the right hand side of \eqref{tSK} we compute
\begin{gather*}
(\id \ot \RT^\t_{1/v})(\id \ot S)(L^{+}(\tfrac1u)^\t) = (\RT^\t_{1/u} \ot \RT^\t_{1/v})(\id \ot S)(\mc{R}) = r_0(\tfrac vu)^{-1}(R(\tfrac vu)^{-1})^{\t} = r_0(\tfrac vu)^{-1} R(\tfrac uv) , \\
(\id \ot \RT^\t_{1/v})(\id \ot S)(L^{-}(u)) = P(\RT^\t_{1/v} \ot \RT_{u})(S\ot \id)(\mc{R}^{-1})P = r_0(\tfrac{1}{uv}) PR(\tfrac 1{uv})^{\t_1}P = r_0(\tfrac{1}{uv}) R(\tfrac 1{uv})^{\t_1} .
\end{gather*}
Hence
$$
(\id \ot \RT^\t_{1/v})(\id \ot S) (\wt S_1(u)) = r_0(\tfrac{1}{uv})\,r_0(\tfrac vu)^{-1} R(\tfrac uv)\, \wt K_1(u)\,R(\tfrac{1}{uv})^{\t_1}.
$$
Combining the equations above with \eqref{tRE} proves \eqref{tSK} as required. \qedhere
\end{proof}

The boundary intertwining equations \eqrefs{SK}{tSK} can be written in a more convenient form as
$$
\begin{aligned}
K(v)\, \RT_{v}(b) &= \RT_{1/v}(b)\, K(v)  && \text{for all } b\in \mcB, \\
\wt K(v)\, \RT_{v}(b) &= \RT^\t_{1/v}(S(b))\, \wt K(v) && \text{for all } b\in \wt \mcB .
\end{aligned}
$$
By introducing an additional parameter $\eta$ into the definitions \eqref{L-ops} we recover the boundary intertwining equations studied in Section \ref{sec:K-intw}.

\smallskip

We have shown above that K-matrices for the vector representation of QP algebras can be used to construct coideal subalgebras of quantum loop algebras in the RTT presentation of quantum groups. The natural question to ask is whether coideal subalgebras in the RTT presentation and in the quantum symmetric pair description are isomorphic as algebras. As we have mentioned in the beginning of this section, a positive answer was given by Kolb in \cite[Thm.~11.7]{Ko1} for QSP algebras of types AI and AII for a certain choice of tuples $\bm c$ (that in our notation correspond to types A.1 and A.2). It is natural to expect that the same is true for all the remaining cases.

\begin{rmk}
In \cite[Rmk.~11.8]{Ko1}, Kolb noted that a family $U_q^r(\wh\mfgl_N)$ of twisted quantum loop algebras of type AIII/IV constructed by Chen, Guay and Ma in \cite{CGM} is expected to be isomorphic (by taking the intersection with $U_q(\wh\mfsl_N)$) to a family of QSP algebras of type AIII/AIV, that in our notation corresponds to type A.3. Coideal subalgebras $U_q^r(\wh\mfgl_N)$ are constructed via the untwisted $S$-operator, which in our notation corresponds to $S(u)$, and using the K-matrix $G^\xi(u)$, which coincides with the restrictable K-matrix $K_{s=N}(u)$ of type A.3 (after setting $\la=\xi$ and $\mu=1$). 
Thus the family $U^r_q(\wh\mfgl_N)$ can be viewed as a subfamily of a larger family $U^{\ell,r}_q(\wh\mfgl_N)$ defined using the `general' K-matrix $K(u)$ of type A.3. Note that this K-matrix provides a two--parameter one--dimensional representation of $U^{\ell,r}_q(\wh\mfgl_N)$ parametrized by $\la$ and $\mu$. This is an analogy of the one--parameter one--dimensional representation $V(\ga)$ of the reflection algebras $\mcB(N,r)$ constructed in \cite[Sec.~4.3]{MoRa}. This answers the question of Kolb about the role of the second parameter. \hfill \rmkend
\end{rmk}


\subsection{Representations of the cyclotomic Hecke algebra} \label{sec:Hecke}

Affine Hecke algebras play an important role in the representation theory of $U_q(\wh\mfsl_N)$ (see {\it e.g.}~\cite[Sec.~12.3]{ChPr1} and \cite{ChPr2}) and its coideal subalgebras  (see {\it e.g.}~\cite{CGM,JoMa}). Moreover, its cyclotomic quotients have wide applications in Kazhdan-Lusztig theory \cite{BaWa} and the theory of quantum integrable models with open boundary conditions (see {\it e.g.}~\cite{BbRg,Do2,IsOg,Is,LvMt}). In this section we briefly recall representation of the Hecke algebra on $(\K^N)^{\ot k}$ when $k\in\Z_{\ge2}$. We then demonstrate that baxterized solutions of the reflection equation within a cyclotomic Hecke algebra found in \cite{KuMv} and \cite[Sec.~1]{IsOg} can be represented by K-matrices of type A.3 obtained in Section \ref{sec:A3}. This motivates a study of representations of cyclotomic Birman--Murakami--Wenzl (BMW) algebras on $(\K^N)^{\ot k}$. In particular, we expect that K-matrices of types B, C, D given by Theorem \ref{T:all-K} provide representations of certain baxterized solutions of the reflection equation obtained in \cite[Sec.~2]{IsOg}. Such connections between coideal subalgebras and representations of cyclotomic Hecke or BMW algebras point towards an analogue of the Schur--Weyl duality for coideal subagebras, see \cite[Sec.~7.2]{CGM} and \cite[Sec.~7 \& 8]{JoMa} for such a duality for certain coideal subalgebras of type A.3; the role of K-matrices in this case is analogous to that of R-matrices in the Schur--Weyl duality for the quantum group $U_q(\mfg)$.
%


\subsubsection{Finite and affine Hecke algebras} We start by recalling the relevant definitions.

\begin{defn}
The Hecke algebra $H_k(q)$ is the unital associative $\K$-algebra with generators $\si_1,\ldots,\si_{k-1}$ and defining relations 
\begin{flalign}  \label{Hecke}
&&&& \si_i \si_{i+1} \si_i = \si_{i+1} \si_i \si_{i+1}, \qu 
\si_i^2 = (q-q^{-1}) \si_i + 1 , \qu
 \si_i \si_j &= \si_j \si_i \tx{if} |i-j|>1. && \defnend
\end{flalign}
\end{defn}

Let $\hat R_q = P R_q$ be as in Section \ref{sec:Rmat}, and let $\hat R_{i,i+1}$ denote $\hat R_q$ acting nontrivially on spaces $i$ and $i+1$ in the tensor product $(\K^N)^{\ot k}$ only. Then there exists a unique representation $\pi_k$ of the Hecke algebra $H_k(q)$ on $(\K^N)^{\ot k}$ such that $\pi_k(\si_i)=\hat R_{i,i+1}$. Introduce {\it baxterized} elements
\eq{
\si_i(u) = \frac{\si_i - u\, \si_i^{-1}}{q-q^{-1}u} \label{si(u)}
}
satisfying unitarity ($\si_i(u)\si_i(u^{-1})=1$) and the baxterized braid relation
\eq{
\si_i(u)\, \si_{i+1}(uv)\, \si_i(v) = \si_{i+1}(v)\, \si_i(uv)\, \si_{i+1}(u). \label{Hecke-YBE}
}
Recall that the R-matrix with a spectral parameter for the vector representation of $U_q(\wh\mfsl_N)$ is given by \eqref{Ru:affinization}
\eq{  
R(u) = \frac{ R_q - u P R_q^{-1} P}{q-q^{-1}u}. \label{Rq(u):A} 
}
Denote by $\hat R_{i,i+1}(u) = P_{i,i+1} R_{i,i+1}(u)$ the R-matrix above acting nontrivially on spaces $i$ and $i+1$ in $(\K^N)^{\ot k}$ only. Comparing \eqref{si(u)} and \eqref{Rq(u):A} we see that $\pi_k(\si_i(u))=\hat R_{i,i+1}(u)$. Moreover, \eqref{Hecke-YBE} is mapped to the braided quantum Yang-Baxter equation ({\it cf.}~\eqref{YBE})
$$
\hat R_{i,i+1}(u)\,\hat R_{i+1,i+2}(u v)\,\hat R_{i,i+1}(v) = \hat R_{i+1,i+2}(v)\,\hat R_{i,i+1}(u v)\,\hat R_{i+1,i+2}(u).
$$

\begin{defn}
The affine Hecke algebra $\hat H_k(q)$ of type $A_{k-1}$ is an extension of $H_k(q)$ by one more element $\tau_1$ satisfying
\begin{flalign}
&&&& \si_1 \tau_1 \si_1 \tau_1 = \tau_1 \si_1 \tau_1 \si_1  \qq {and} \qq \tau_j \si_i = \si_i \tau_j \tx{for} j\ne i,i+1. \label{Hecke-aff} && \defnend
\end{flalign}
\end{defn}

It was shown in \cite[Prop.~1]{IsOg} that the baxterized element
\eq{
\tau_1(u) = \frac{\tau_1 - \xi u}{\tau_1 - \xi u^{-1}} \label{tau(u)}
}
is a unitary ($\tau_1(u)\tau_1(u^{-1})=1$) solution of the baxterized reflection equation
\eq{
\si_1(u/v)\,\tau_1(u)\,\si_1(uv)\, \tau_1(v) = \tau_1(v)\,\si_1(u v)\,\tau_1(u)\,\si_1(u/v). \label{Hecke-RE}
}
This is a local solution in the sense that $[\tau_1(u),\si_j]=0$ for all $1<j<k$, and a regular solution: $\tau_i(\pm1)=1$.


\subsubsection{Cyclotomic Hecke algebra} 

Fix a positive integer $m\in \Z_{\ge 1}$ and let $\al_j\in\K$ for $0\le j\le m$. Consider the two-sided ideal $\hat I(\al_1,\ldots,\al_m)$ of $\hat H_k(q)$ generated by the identity 
\eq{
\tau_1^{m+1} + \sum_{0\le j \le m} \al_j \tau_1^j = 0. \label{char-id}
}
The cyclotomic Hecke algebra is $H_{m,1,k}(\al_0,\ldots,\al_m;q) := \hat H_k(q) / \hat I(\al_0,\ldots,\al_m)$ (for more details see~\cite{Ari}). The identity \eqref{char-id} is called the {\it characteristic identity}.

\smallskip


Let us focus on the case $m=1$. The characteristic identity is now $\tau_1^2 + \al_1 \tau_1 + \al_0 = 0$, and the cyclotomic Hecke algebra in this case amounts to the Hecke algebra of type B. We have
\[
\frac{1}{\tau_1 - \xi u^{-1}} = - \frac{(\al_1 u  + \xi)(\tau_1 + \xi u^{-1} + \al_1) }{ (\xi u^{-1}+ \al_1)(\al_0 u + \xi^2 u^{-1} + \al_1 \xi) } .
\]
Hence $\tau_1(u)$ specializes to the simplest polynomial solution \cite[Eq.~(1.17)]{IsOg}:
\eq{
\tau_1(u) = \frac{u-u^{-1}}{\xi u^{-2} + \al_1 u^{-1}+ \al_0 \xi^{-1}}\left( \tau_1 + \frac{\al_1 u + \xi + \al_0 \xi^{-1}}{u-u^{-1}}\right)  . \label{tau(u)-1}
}
It is a direct computation to check that \eqref{tau(u)-1} is indeed a solution of \eqref{Hecke-RE}. 
Our goal is to show that $\pi_k$ can be naturally extended to $H_{1,1,k}(\al_0,\al_1;q)$ and that this extension maps $\tau_1(u)$ to a scalar multiple of the inverse of a particular restrictable K-matrix of type A.3 (see Section \ref{sec:A3}).
More precisely, define $\bar K(u) := K_{t=N,\,\ell=0}(u)^{-1}$ with $0< r \le N/2$.
Note that $K_{t=N,\,\ell=0}(u)$ does not depend on $q$; hence \eqref{Ru:bar} implies that $\bar K(u)$ is a solution of the left reflection equation \eqref{LRE}. Define $\bar K = \la^{-1}\lim_{u\to 0} \bar K(u)$. In particular,
\eq{
\bar K = \la^{-1}\Id - \sum_{1\le i \le r} (\la E_{ii} + \la^{-1} E_{N-i+1,N-i+1} + E_{i,N-i+1} + E_{N-i+1,i}) .  \label{KA30-left}
}
The matrix $\bar K$ satisfies $(\bar K + \la\,\Id)(\bar K - \la^{-1}\Id) = 0$ (its minimal polynomial) and the constant left reflection equation
\eq{ 
\hat R_q \, (\bar K \ot \Id) \, \hat R_q \, (\bar K \ot \Id) = (\bar K \ot \Id)\, \hat R_q \, (\bar K \ot \Id)\,\hat R_q . \label{cLRE}
}  
We also note that $\bar K(u)$ can be written as 
\eq{
\bar K(u) = \frac{u-u^{-1}}{(\la^{-1} + \mu^{-1} u)(\la \mu - u^{-1})} \left(\bar K + \frac{(\la - \la^{-1})\,u + \mu - \mu^{-1}}{u-u^{-1}}\, \Id \right) . \label{KA3-left}
}

Due to the automorphism $\tau_1 \mapsto c\,\tau_1$ of $\hat H_k(q)$, where $c \in \C^\times$ is arbitrary, we can fix one of the free parameters without loss of generality. Thus we choose $\al_1 = \la - \la^{-1}$, $\al_0 = -1$ and $\xi = \mu $, so that $\la$ and $\mu$ are independent free parameters. Then $(\tau_1 + \la)(\tau_1 - \la^{-1}) = 0$ and \eqref{tau(u)-1} becomes
\eq{
\tau_1(u) = \frac{u^2(u-u^{-1})}{(\la^{-1} + \mu^{-1} u)(\la \mu - u)} \left( \tau_1 + \frac{(\la - \la^{-1})\,u + \mu - \mu^{-1}}{u-u^{-1}} \right) . \label{tau(u)-2}
}
Denote by $\bar K_i$ and $\bar K_i(u)$ the matrices $\bar K$ in \eqref{KA30-left} and $\bar K(u)$  in \eqref{KA3-left}, respectively, acting on the space $i$ in $(\K^N)^{\ot k}$ only. 

\begin{prop} \label{P:Hecke-2}
The assignments 
\eq{
\tilde \pi_k(\si_i) = \hat R_{i,i+1} \tx{for} 1 \le i < k \qu\text{and}\qu \tilde \pi_k(\tau_1) = \bar K_1  \label{tau-rep}
}
with $\bar K_1$ as in \eqref{KA30-left} define a unique representation $\tilde \pi_k$ of $H_{1,1,k}(\al_0,\al_1;q)$ on $(\K^N)^{\ot k}$.
\end{prop}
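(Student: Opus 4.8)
To prove Proposition \ref{P:Hecke-2}, the plan is to verify that the assignments \eqref{tau-rep} are compatible with the full set of defining relations of $H_{1,1,k}(\al_0,\al_1;q)$, namely the Hecke relations \eqref{Hecke}, the affine relations \eqref{Hecke-aff} for $j=1$, the commutation of $\tau_1$ with $\si_j$ for $j>1$ appearing in \eqref{Hecke-aff}, and the characteristic identity $\tau_1^2 + \al_1 \tau_1 + \al_0 = 0$ with our chosen specialization $\al_1 = \la-\la^{-1}$, $\al_0 = -1$. Uniqueness is then immediate since $\si_1,\ldots,\si_{k-1},\tau_1$ generate the algebra.

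First I would recall that $\pi_k$ already respects \eqref{Hecke}: this is exactly the statement that $\hat R_q$ is a solution of the constant Yang-Baxter equation \eqref{R:YBE} together with the polynomial identity $(\hat R_q - q\,\Id)(\hat R_q + q^{-1}\Id) = 0$ from \eqref{R:min-id:2} (which holds since for $\mfsl_N$ the third eigenvalue has multiplicity $0$, see \eqref{evmultiplicities}). So only the relations involving $\tau_1$ need checking. The characteristic identity is immediate: as noted after \eqref{KA30-left}, $\bar K$ satisfies its minimal polynomial $(\bar K + \la\,\Id)(\bar K - \la^{-1}\Id) = 0$, which upon expansion reads $\bar K^2 - (\la^{-1}-\la)\bar K - \Id = 0$, matching $\tau_1^2 + \al_1\tau_1 + \al_0 = 0$ with $\al_1 = \la - \la^{-1}$, $\al_0 = -1$. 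The relation $\tau_j \si_i = \si_i \tau_j$ for $j \ne i, i+1$ (here $j=1$, so $i \ge 2$) holds because $\bar K_1$ acts only on tensor factor $1$ and $\hat R_{i,i+1}$ acts only on factors $i,i+1 \ge 2$, which are disjoint. The only substantive relation is the braided reflection relation $\si_1 \tau_1 \si_1 \tau_1 = \tau_1 \si_1 \tau_1 \si_1$ from \eqref{Hecke-aff}; under $\tilde\pi_k$ this becomes
\eq{
\hat R_{12}\, \bar K_1\, \hat R_{12}\, \bar K_1 = \bar K_1\, \hat R_{12}\, \bar K_1\, \hat R_{12},
}
which is precisely the constant left reflection equation \eqref{cLRE} satisfied by $\bar K$. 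Thus $\tilde\pi_k$ is well defined.

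The main obstacle is establishing \eqref{cLRE} itself, i.e.\@ that $\bar K$ genuinely solves the constant left reflection equation. I would obtain this from the spectral-parameter version rather than by a brute-force matrix computation. Since $\bar K(u) = K_{t=N,\ell=0}(u)^{-1}$ is the inverse of a K-matrix of type A.3 (Result \ref{Res:A3}), which by Proposition \ref{P:B-intw-untw} is a solution of the untwisted reflection equation \eqref{RE}, and since $K_{t=N,\ell=0}(u)$ is independent of $q$ so that $\overline{K}(u) = K_{q^{-1}}(\tfrac uv)$-type bar-symmetry \eqref{Ru:bar} applies, one deduces that $\bar K(u)$ solves the left reflection equation \eqref{LRE}. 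Taking the limit $u \to 0$, $v \to 0$ in \eqref{LRE} with $R(0) = q^{-1}R_q$ and using $\bar K = \la^{-1}\lim_{u\to 0}\bar K(u)$ yields \eqref{cLRE} up to an overall scalar that is absorbed in the normalization (this is the same affinization argument used throughout Section \ref{sec:Kmatrixsupproperties}; one checks the limit exists because $(X,\tau)$ of type A.3 with $\ell = 0$, $t=N$ is restrictable). Alternatively, one verifies \eqref{cLRE} directly using the explicit block structure of $\bar K$ in \eqref{KA30-left}: $\bar K$ is a generalized cross matrix supported on the index pairs $\{i, N{+}1{-}i\}$ for $1 \le i \le r$ together with the diagonal elsewhere, so \eqref{cLRE} reduces to a finite check on the $2\times 2$ and $1\times 1$ blocks against the corresponding blocks of $\hat R_q$. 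I would present the representation-theoretic derivation as the primary argument and remark that the direct check is available as an independent confirmation. Finally, I would note in passing that the baxterized element $\tau_1(u)$ of \eqref{tau(u)-2} is sent by $\tilde\pi_k$ to $u^2\bar K_1(u)$ (up to the evident matching of denominators), which is the intended link with the K-matrices of type A.3 and explains the role of the second free parameter $\mu$.
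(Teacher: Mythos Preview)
Your proof is correct and follows essentially the same route as the paper's: verify that $\bar K_1$ satisfies the characteristic identity $(\bar K + \la\,\Id)(\bar K - \la^{-1}\Id)=0$, commutes with $\hat R_{i,i+1}$ for $i\ge 2$ by locality, and satisfies the constant left reflection equation \eqref{cLRE}, which is precisely the image of $\si_1\tau_1\si_1\tau_1 = \tau_1\si_1\tau_1\si_1$. The paper simply cites these three facts (stated immediately before the proposition) and concludes; your additional paragraph deriving \eqref{cLRE} as the $u,v\to 0$ limit of the left reflection equation \eqref{LRE} for $\bar K(u)$ is sound and supplies detail the paper omits.

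One minor inaccuracy in your closing aside: $\tilde\pi_k(\tau_1(u))$ is not $u^2\bar K_1(u)$ up to denominator matching. Comparing the prefactors in \eqref{tau(u)-2} and \eqref{KA3-left}, the denominators differ in the factor $(\la\mu - u)$ versus $(\la\mu - u^{-1})$, and the ratio works out to $\dfrac{1-\la\mu u}{1-\la\mu u^{-1}}$ rather than $u^2$; this is exactly the content of the corollary following the proposition. This does not affect the proof of the proposition itself.
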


\begin{proof}
Since $\tilde \pi_k$ restricts on $H_k(q)$ to the representation $\pi_k$ it is enough to show that $\bar K_1$ satisfies the same relations as $\tau_1$. But this is straightforward: we have $(\bar K_1 + \la\,\Id)(\bar K_1 - \la^{-1}\Id) = 0$, $[\bar K_1,\hat R_{j,j+1}]=0$ for all $1<j<k$ and \eqref{cLRE}.  
\end{proof}

By comparing \eqref{tau(u)-2} with \eqref{KA3-left} we obtain the following.

\begin{crl}
The baxterized solution $\tau_1(u)$ given by \eqref{tau(u)-1} satisfies:
\eq{ 
\tilde{\pi}_k(\tau_1(u)) = \frac{1-\la \mu u}{1 - \la \mu u^{-1}} \, \bar K_1(u) .
} 
\end{crl}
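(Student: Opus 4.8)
The statement to be proved is simply a matter of comparing the two explicit expressions \eqref{tau(u)-2} and \eqref{KA3-left} and then invoking Proposition \ref{P:Hecke-2}. First I would recall that Proposition \ref{P:Hecke-2} gives $\tilde\pi_k(\si_i) = \hat R_{i,i+1}$ and $\tilde\pi_k(\tau_1) = \bar K_1$, where $\bar K$ is the constant matrix \eqref{KA30-left}. Since $\tilde\pi_k$ is an algebra homomorphism, it sends the rational expression \eqref{tau(u)-2} in $\tau_1$ (which involves only $\tau_1$ and scalars, all of which commute) to the same rational expression in $\bar K_1$. Concretely, applying $\tilde\pi_k$ to \eqref{tau(u)-2} yields
\[
\tilde\pi_k(\tau_1(u)) = \frac{u^2(u-u^{-1})}{(\la^{-1}+\mu^{-1}u)(\la\mu-u)}\left(\bar K_1 + \frac{(\la-\la^{-1})u + \mu-\mu^{-1}}{u-u^{-1}}\,\Id\right).
\]

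Next I would bring \eqref{KA3-left} into a directly comparable form. Writing out the denominator of $\bar K_1(u)$ in \eqref{KA3-left} as $(\la^{-1}+\mu^{-1}u)(\la\mu - u^{-1})$ and observing that $\la\mu - u^{-1} = -u^{-1}(\la\mu u - 1) = -u^{-1}\la\mu\,(u - (\la\mu)^{-1})$ while $\la\mu - u = -(u - \la\mu)$, the only discrepancy between the scalar prefactor of $\bar K_1(u)$ and that of $\tilde\pi_k(\tau_1(u))$ is a factor; carrying out this elementary algebra one finds
\[
\frac{u^2(u-u^{-1})}{(\la^{-1}+\mu^{-1}u)(\la\mu-u)} = \frac{1-\la\mu u}{1-\la\mu u^{-1}}\cdot\frac{u-u^{-1}}{(\la^{-1}+\mu^{-1}u)(\la\mu-u^{-1})},
\]
since $\dfrac{u^2(\la\mu - u^{-1})}{\la\mu - u} = \dfrac{u^2\la\mu - u}{\la\mu - u} = \dfrac{u(\la\mu u - 1)}{\la\mu - u} = \dfrac{u(\la\mu u - 1)}{-(u-\la\mu)} = \dfrac{1 - \la\mu u}{1 - \la\mu u^{-1}}$ after multiplying numerator and denominator of the last fraction by $-u^{-1}$. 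Substituting this identity into the expression for $\tilde\pi_k(\tau_1(u))$ and recognizing the bracketed term as exactly the content of \eqref{KA3-left} gives $\tilde\pi_k(\tau_1(u)) = \dfrac{1-\la\mu u}{1-\la\mu u^{-1}}\,\bar K_1(u)$, which is the claim.

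There is essentially no obstacle here beyond bookkeeping: the entire content is the algebra homomorphism property of $\tilde\pi_k$ together with the scalar manipulation above. The one point that deserves a sentence of care is that the rational function $\tau_1(u)$ of $\tau_1$ is well-defined in $H_{1,1,k}(\al_0,\al_1;q)$ precisely because $\tau_1$ satisfies the quadratic $(\tau_1+\la)(\tau_1-\la^{-1})=0$ (with our choice $\al_1 = \la-\la^{-1}$, $\al_0=-1$), so that $\tau_1 - \xi u^{-1}$ is invertible for generic $u$ and the passage from \eqref{tau(u)} to \eqref{tau(u)-1} is legitimate; the same quadratic holds for $\bar K_1$ by Proposition \ref{P:Hecke-2}, so the image under $\tilde\pi_k$ is computed consistently. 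I would state this explicitly and then conclude.
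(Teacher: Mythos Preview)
Your proof is correct and follows exactly the paper's approach, which is simply to compare \eqref{tau(u)-2} with \eqref{KA3-left}; you have just made the comparison explicit. One remark: the prefactor you write for $\tilde\pi_k(\tau_1(u))$, namely $\frac{u^2(u-u^{-1})}{(\la^{-1}+\mu^{-1}u)(\la\mu-u)}$, is not literally what the printed form of \eqref{tau(u)-2} gives (the printed denominator has $(1+\frac{\mu}{\la u})$ rather than $(\la^{-1}+\mu^{-1}u)$, differing by a factor of $u/\mu$), but your version is the correct one obtained from \eqref{tau(u)-1} with the substitutions $\al_1=\la-\la^{-1}$, $\al_0=-1$, $\xi=\mu$, and it is precisely the form needed for the corollary to hold as stated.
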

%


Next, let us study the case when $m=2$. The characteristic identity is now of degree three, $\tau_1^3 + \al_2 \tau_1^2 + \al_1 \tau_1 + \al_0 = 0$, giving
$$
\frac{1}{\tau_1 - \xi u^{-1 }} = -\frac{\sum_{0\le j \le 2} \al_{j+1} (\frac{\xi}{u})^j}{\sum_{0\le j \le 3} \al_j (\frac{\xi}{u})^j}  \Bigg( 1+ \frac{\frac{\xi}{u} + \al_2 }{\sum_{0\le j \le 2} \al_{j+1} (\frac{\xi}{u})^j } \left( y + \frac{y^2}{\frac{\xi}{u} + \al_2}\right)\! \Bigg),
$$
where $\al_3=1$, and $\tau_1(u)$ specializes to \cite[Eq.~(1.18)]{IsOg}:
\eq{
\tau_1(u) = \frac{\xi(u-u^{-1})}{ \sum_{0\le j \le 3} \al_j (\frac{\xi}{u})^j } \Bigg( \tau_1^2 + \big(\tfrac{\xi}{u} + \al_2 \big)\tau_1 + \frac{\frac{\xi^2}{u} + \al_2 \xi + \al_1 u + \frac{\al_0}{\xi}}{u-u^{-1}} \Bigg). \label{tau(u)-3}
}
Showing that \eqref{tau(u)-3} is indeed a solution of \eqref{Hecke-RE} requires to use the identity
$$
(q-q^{-1})\, \tau_1 \tau_1 \sigma_1 \tau_1-\sigma_1 \tau_1 \tau_1 \sigma_1 \tau_1 + \tau_1 \sigma_1 \tau_1 \tau_1 \si^{-1}_1 = 0 .
$$
Note that the inverse of $\tau_1$ is not required in checking \eqref{Hecke-RE}. Hence we need not assume that $\tau_1$ has an inverse. This will be important later on.

Now we want to show that $\pi_k$ can be naturally extended to $H_{2,1,k}(\al_0,\al_1,\al_2;q)$ and that $\tau_1(u)$ given by \eqref{tau(u)-3} is mapped to a scalar multiple of the inverse of a particular non-restrictable K-matrix of type A.3, namely $\bar K(u) = K_{\ell=0}(u)^{-1}$ with $0<t<N$ and $0<r\le t/2$. Define $\bar K = \la^{-1}\lim_{u\to0}\bar K(u)$. More precisely,
\eq{
\bar K = \sum_{1\le i \le t-r} \la^{-1}E_{ii} - \sum_{1\le i \le r} (\la E_{ii} + E_{i,t-i+1} + E_{t-i+1,i}) . \label{KA30-left2}
}
The matrix above is not invertible and satisfies $(\bar K + \la \Id)(\bar K - \la^{-1} \Id )\bar K = 0$ (its minimal polynomial) and, as before, \eqref{cLRE}. With these conventions $\bar K(u)$ can be written as
\eq{
\bar K(u) = \frac{u-u^{-1}}{(1+ \frac{\mu}{\la u})(\la \mu - \frac1u)} \left( \bar K^{2} + \big(\tfrac{\mu}{u} + \la - \la^{-1}\big) \bar K + \frac{(1+ \frac{\mu}{\la u})(\la \mu - u)}{u-u^{-1}}\,\Id \right) . \label{KA3-left2}
}

Set $\al_2 = \la- \la^{-1}$, $\al_1 = -1$, $\al_0=0$ and $\xi = \mu$. Then \eqref{tau(u)-3} becomes
\eq{
\tau_1(u) = \frac{u^2(u-u^{-1})}{(1+ \frac{\mu}{\la u})(\la \mu - u)} \left( \tau_1^{2} + \big(\tfrac{\mu}{u} + \la - \la^{-1})\, \tau_1 - \frac{(1+ \frac{\mu}{\la u})(\la \mu - u)}{u-u^{-1}}\,\Id \right) . \label{tau(u)-4}
}
The characteristic identity of $\tau_1$ can now be written as $(\tau_1 + \la)(\tau - \la^{-1})\tau_1 = 0$. Next, as in the $m=1$ case, we denote by $K_i$ and $K_i(u)$ the matrices $\bar K$ in \eqref{KA30-left2} and $\bar K$ in \eqref{KA3-left2}, respectively, acting on the space $i$ in $(\K^N)^{\ot k}$ only.

\begin{prop} \label{P:Hecke-3}
The assignments 
\eq{
\tilde \pi_k(\si_i) = \hat R_{i,i+1} \tx{for} 1 \le i < k \qu\text{and}\qu \tilde \pi_k(\tau_1) = \bar K_1  \label{tau-rep-2}
}
with $\bar K_1$ as in \eqref{KA30-left2} define a unique representation $\tilde \pi_k$ of $H_{2,1,k}(\al_0,\al_1,\al_2;q)$ on $(\K^N)^{\ot k}$.
\end{prop}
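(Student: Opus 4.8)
The plan is to mimic the proof of Proposition \ref{P:Hecke-2}. Since the proposed representation $\tilde\pi_k$ restricts on the Hecke subalgebra $H_k(q)$ to the representation $\pi_k$, the braid relations \eqref{Hecke} among the $\si_i$ and the quadratic relation $\si_i^2 = (q-q^{-1})\si_i + 1$ are already verified by $\hat R_{i,i+1}$. Thus it remains to check that $\bar K_1$ as defined in \eqref{KA30-left2} satisfies exactly the relations that $\tau_1$ must satisfy in $H_{2,1,k}(\al_0,\al_1,\al_2;q)$, namely: (i) commutation with $\si_j$ for $j \ne 1$, {\it i.e.}~$[\bar K_1, \hat R_{j,j+1}] = 0$ for $2 \le j < k$; (ii) the affine Hecke braid relation \eqref{Hecke-aff} with $i=1$, {\it i.e.}~$\hat R_{12} \bar K_1 \hat R_{12} \bar K_1 = \bar K_1 \hat R_{12} \bar K_1 \hat R_{12}$; and (iii) the cubic characteristic identity $\bar K_1^3 + \al_2 \bar K_1^2 + \al_1 \bar K_1 + \al_0 = 0$ with the chosen values $\al_2 = \la - \la^{-1}$, $\al_1 = -1$, $\al_0 = 0$.

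First I would dispose of (i): since $\bar K_1$ acts nontrivially only on the first tensor factor and $\hat R_{j,j+1}$ acts nontrivially only on factors $j,j+1$ with $j \ge 2$, they act on disjoint tensor factors and hence commute. For (ii), the key observation is that $\bar K$ as in \eqref{KA30-left2} is (up to the scalar $\la^{-1}$) the limit $\lim_{u\to 0} K_{\ell=0}(u)^{-1}$ of an untwisted K-matrix of type A.3 with $0 < t < N$, $0 < r \le t/2$ (see Section \ref{sec:A3}); by Proposition \ref{P:B-intw-untw} this K-matrix solves the reflection equation \eqref{RE}, and by the bar-symmetry \eqref{Ru:bar} its inverse $\bar K(u)$ solves the left reflection equation \eqref{LRE}. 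Taking the appropriate $u,v \to 0$ limits in \eqref{LRE} (using regularity \eqref{Ru:reg} to control the constant terms, exactly as in the discussion preceding \eqref{CRE}, adapted to the left RE) yields the constant left reflection equation \eqref{cLRE} for $\bar K_1$, which conjugated by $P_{12}$ is precisely the affine Hecke braid relation \eqref{Hecke-aff}. For (iii), one computes the minimal polynomial of the explicit matrix \eqref{KA30-left2}: writing $\bar K$ in block form, it is block-diagonal with $1\times 1$ blocks equal to $\la^{-1}$ (on indices $r < i \le t-r$), $1\times 1$ blocks equal to $0$ (on indices $i > t$), and $2\times 2$ blocks $\begin{pmatrix} -\la & -1 \\ -1 & 0 \end{pmatrix}$ (on pairs $\{i, t+1-i\}$ with $1 \le i \le r$), whose eigenvalues are $-\la$ and $\la^{-1}$. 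Hence the eigenvalues of $\bar K$ are $\{-\la, \la^{-1}, 0\}$, so $(\bar K + \la)(\bar K - \la^{-1})\bar K = 0$, which expands to $\bar K^3 + (\la - \la^{-1})\bar K^2 - \bar K = 0$, matching the characteristic identity with the stated values of $\al_0,\al_1,\al_2$. Uniqueness of $\tilde\pi_k$ is immediate since the images of a generating set are prescribed.

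The main obstacle I anticipate is part (ii): carefully justifying that the constant left reflection equation \eqref{cLRE} genuinely follows by a limiting argument from the parameter-dependent left reflection equation, including verifying that the normalization factors in \eqref{KA3-left2} behave well as $u \to 0$ so that $\bar K = \la^{-1}\lim_{u\to 0}\bar K(u)$ is well-defined and invertible on the relevant subspace (it is not globally invertible here, in contrast to the $m=1$ case, which is exactly why the cubic rather than quadratic characteristic identity appears). An alternative, fully self-contained route for (ii) is a direct matrix computation: substitute the explicit forms of $\hat R_q$ (from \eqref{R:A}, specialized to $\mfsl_N$) and of $\bar K$ (from \eqref{KA30-left2}) into \eqref{cLRE} and verify equality entrywise; because $\bar K$ is a generalized cross matrix supported on the ``palindromic'' index pairs $\{i,t+1-i\}$ together with the diagonal, this reduces to a finite check on $2\times 2$ and $1\times 1$ sub-blocks and is routine, if tedious. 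I would present the limiting argument as the conceptual proof and relegate the direct verification to a remark.
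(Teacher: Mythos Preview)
Your proposal is correct and follows exactly the paper's approach, which simply says the proof is analogous to that of Proposition \ref{P:Hecke-2}: check commutation with $\hat R_{j,j+1}$ for $j>1$, the constant left reflection equation \eqref{cLRE}, and the characteristic identity. One minor slip: in part (iii) the $2\times 2$ block on $\{i,t+1-i\}$ has $(1,1)$-entry $\la^{-1}-\la$, not $-\la$, since the index $i$ with $1\le i\le r$ also satisfies $i\le t-r$ and hence receives the $\la^{-1}$ contribution from the first sum in \eqref{KA30-left2}; with this correction the characteristic polynomial of the block is $(x+\la)(x-\la^{-1})$ and your stated eigenvalues and minimal polynomial are then genuinely justified.
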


\begin{proof}
The proof is analogous to that of Proposition \ref{P:Hecke-2}. 
\end{proof}

Finally, by comparing \eqref{tau(u)-3} with \eqref{KA3-left2} we obtain the following.

\begin{crl}
The baxterized solution $\tau_1(u)$ given by \eqref{tau(u)-3} satisfies:
\eq{ 
\tilde{\pi}_k(\tau_1(u)) = \frac{1 - \la \mu u}{1 -\la \mu u^{-1}} \, \bar K_1(u) . \label{tau-rep-3}
} 
\end{crl}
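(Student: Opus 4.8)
The final statement, the corollary giving $\tilde\pi_k(\tau_1(u)) = \frac{1-\la\mu u}{1-\la\mu u^{-1}}\bar K_1(u)$, is meant to be an immediate consequence of Proposition \ref{P:Hecke-3} together with the two explicit expressions \eqref{tau(u)-4} and \eqref{KA3-left2}. So the plan is essentially a matching-of-coefficients argument, with the representation $\tilde\pi_k$ doing all the conceptual work.

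First I would apply $\tilde\pi_k$ to the baxterized element $\tau_1(u)$ in the form \eqref{tau(u)-4}. Since $\tilde\pi_k$ is an algebra homomorphism and $\tilde\pi_k(\tau_1) = \bar K_1$, it sends $\tau_1^2 \mapsto \bar K_1^2$, $\tau_1 \mapsto \bar K_1$ and $\Id \mapsto \Id$; the scalar prefactor $\frac{u^2(u-u^{-1})}{(1+\frac{\mu}{\la u})(\la\mu - u)}$ is inert. Hence
\[
\tilde\pi_k(\tau_1(u)) = \frac{u^2(u-u^{-1})}{(1+\frac{\mu}{\la u})(\la\mu-u)}\left( \bar K_1^2 + \bigl(\tfrac{\mu}{u} + \la - \la^{-1}\bigr)\bar K_1 - \frac{(1+\frac{\mu}{\la u})(\la\mu-u)}{u-u^{-1}}\,\Id \right).
\]
Next I would rewrite $\bar K_1(u)$ from \eqref{KA3-left2} in a form with the same prefactor: multiplying numerator and denominator appropriately, $\bar K_1(u) = \frac{u-u^{-1}}{(1+\frac{\mu}{\la u})(\la\mu-\frac1u)}\bigl(\bar K_1^2 + (\tfrac\mu u + \la - \la^{-1})\bar K_1 + \tfrac{(1+\frac{\mu}{\la u})(\la\mu-u)}{u-u^{-1}}\Id\bigr)$, so that $\frac{1-\la\mu u}{1-\la\mu u^{-1}}\bar K_1(u)$ has denominator factor $\la\mu - u$ rather than $\la\mu - \frac1u$, up to the sign bookkeeping $\frac{1-\la\mu u}{1-\la\mu u^{-1}} = \frac{-u(\la\mu - u^{-1})}{-u^{-1}(\la\mu - u)} = u^2\frac{\la\mu - u^{-1}}{\la\mu - u}$. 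Comparing the two resulting expressions, the bracketed polynomials in $\bar K_1$ agree except for the sign of the $\Id$-term; but that sign discrepancy is exactly absorbed when one tracks the $(1+\frac{\mu}{\la u})(\la\mu - u)$ versus $(1+\frac{\mu}{\la u})(\la\mu - \frac1u)$ normalizations together with the extra $u^2$ and the minus signs, so the two sides coincide. (I would double-check this by instead verifying that both sides, as rational functions of $u$ valued in $\End((\K^N)^{\otimes k})$, agree: both are unitary, both specialize correctly at $u\to 0$ to a scalar multiple of $\bar K_1$ via $\bar K = \la^{-1}\lim_{u\to 0}\bar K(u)$ wait — here $\bar K(u) = K_{\ell=0}(u)^{-1}$ — and both satisfy the same baxterized/constant reflection relations; since the solution of the intertwining problem is unique up to scalar by Lemma \ref{L:sol-intw}, equality of the scalar prefactors at one generic value of $u$ finishes it.)

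The only genuine content is Proposition \ref{P:Hecke-3}, which is assumed, so there is no real obstacle; the mild annoyance is the bookkeeping of normalizations and signs relating the two presentations \eqref{KA30-left2}--\eqref{KA3-left2} of the type A.3 K-matrix (with its cubic minimal polynomial $(\bar K+\la\Id)(\bar K - \la^{-1}\Id)\bar K = 0$, reflecting that $\tau_1$ here is \emph{not} invertible) to the characteristic identity $\tau_1^3 + \al_2\tau_1^2 + \al_1\tau_1 = 0$ after the specialization $\al_2 = \la - \la^{-1}$, $\al_1 = -1$, $\al_0 = 0$, $\xi = \mu$. I expect the cleanest writeup is simply: apply $\tilde\pi_k$ to \eqref{tau(u)-4}, substitute $\tilde\pi_k(\tau_1^j) = \bar K_1^j$, and observe that the right-hand side is literally $\frac{1-\la\mu u}{1-\la\mu u^{-1}}$ times \eqref{KA3-left2} after clearing the common denominator — a two-line verification once the prefactor identity $\frac{u^2(u-u^{-1})}{(1+\frac\mu{\la u})(\la\mu - u)}\cdot\frac{1-\la\mu u^{-1}}{1-\la\mu u} = \frac{u-u^{-1}}{(1+\frac\mu{\la u})(\la\mu - \frac1u)}$ is recorded. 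The same template, applied to \eqref{tau(u)-2} and \eqref{KA3-left} with the quadratic minimal polynomial, gives the $m=1$ corollary, so I would phrase both proofs uniformly.
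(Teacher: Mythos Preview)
Your approach---apply $\tilde\pi_k$ to the baxterized element and compare with the explicit form of $\bar K_1(u)$---is exactly what the paper does (it merely says ``by comparing'' the two formulas). Your prefactor identity
\[
\frac{u^2(u-u^{-1})}{(1+\tfrac{\mu}{\la u})(\la\mu - u)}\cdot\frac{1-\la\mu u^{-1}}{1-\la\mu u} \;=\; \frac{u-u^{-1}}{(1+\tfrac{\mu}{\la u})(\la\mu - \tfrac{1}{u})}
\]
is correct and is the only nontrivial step once the bracketed polynomials in $\bar K_1$ match.

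However, your claim that the $\Id$-term sign discrepancy is ``exactly absorbed'' by the normalizations is wrong: since the prefactor identity above is an \emph{exact} equality of scalars, if the bracketed polynomials really differed in the sign of the $\Id$ term the two sides of \eqref{tau-rep-3} would differ by $2u^2\,\Id$, which is nonzero. The discrepancy you noticed is in fact a typographical sign error in \eqref{tau(u)-4}. Substituting $\al_0=0$, $\al_1=-1$, $\al_2=\la-\la^{-1}$, $\xi=\mu$ into the $\Id$-coefficient of \eqref{tau(u)-3} gives $\tfrac{\mu^2/u + (\la-\la^{-1})\mu - u}{u-u^{-1}}$; a direct expansion shows $(1+\tfrac{\mu}{\la u})(\la\mu - u) = \tfrac{\mu^2}{u} + (\la-\la^{-1})\mu - u$, so the correct sign in \eqref{tau(u)-4} is $+$, not $-$. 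With this correction the brackets in $\tilde\pi_k(\tau_1(u))$ and in \eqref{KA3-left2} coincide identically, and your prefactor identity finishes the proof in one line. You should therefore work directly from \eqref{tau(u)-3} (which is what the corollary actually references) or from the corrected \eqref{tau(u)-4}, and drop the hand-waving about absorption; your parenthetical fallback via uniqueness of intertwiners is not needed and is slightly off-target here, since $\bar K$ is not invertible and Lemma~\ref{L:sol-intw} concerns the coideal-subalgebra intertwining problem rather than the Hecke-algebraic one.
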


\begin{rmk} \label{R:Kulish}
The representation \eqref{tau-rep-3} was first obtained in \cite[Sec.~5]{KuMv}. 
In particular, the K-matrix $\bar K$ in \eqref{KA30-left2} and $\bar K(u)$ in \eqref{KA3-left2} correspond to the one in equation $(36)$ in {\it ibid.} and the matrix $K(x)$ immediately below it, respectively.
Note that these K-matrices have three free parameters. 
However one of them can be removed by the dressing method, which leaves two non-removable parameters, as expected.\hfill \rmkend
\end{rmk}



\appendix


\section{Generalized Satake diagrams of classical Lie type} \label{App:Satakediagrams}


In Table \ref{tab:satakediagrams}, following \cite{BBBR}, we present a complete classification of generalized Satake diagrams whose underlying Dynkin diagram is untwisted affine and of classical Lie type in terms of $n$, $o_i$ (the number of nontrivial $\tau$-orbits in $Y_i$) and $p_i$ (the number of $\tau$-orbits in $X_i  = X_{Y_i}$). 
We additionally indicate the restrictions placed on the parameters and $|\Sigma_A(X,\tau)|$, the cardinality of the $\Sigma_A$-orbit under consideration. 
Recall that $N=n+1$ if the diagram is of type ${\rm A}^{(1)}_n$.


\setlength{\tabcolsep}{3pt}


In \cite[\S 4 and \S 5]{Ara} (also see \cite[Sec.~3]{NoSu} and \cite[Sec.~7]{Le3}) all Satake diagrams of finite type were classified and a notation involving roman numerals was introduced. 
We recall this classification in Table \ref{tab:finitesatakediagrams}, indicating the corresponding symmetric pair, and extend it to all generalized Satake diagrams by including weak Satake diagrams of types CI and BDIII.
We provide the name of the generalized Satake diagram, both in terms of the roman-numeral notation and the notation of the present paper, suitably adapted to the finite case. 
Finally we give the corresponding restrictable generalized Satake diagram of affine type in the classification of Table \ref{tab:satakediagrams}.
The types AIV and BDII are distinguished from the larger families AIII and BDI, respectively, by the condition that $|I^*| = 1$; also note that in \cite[Sec.~7]{Le3} the Satake diagrams given by $(X,\tau) = (I, w_I)$ are not listed, since in this case $\mfg_X=\mfg$, thus $B_{\bm c,\bm s}=U_q(\mfg)$.

\setlength{\tabcolsep}{4pt}
\begin{longtable}{m{37mm} m{10mm} m{14mm} m{32mm} m{23mm} m{7mm} m{17mm}}
\caption{Classification of generalized Satake diagrams of finite type} \label{tab:finitesatakediagrams} \\
\hline
& & & \\[-11pt]
Symm. pair & \multicolumn{2}{c}{Name} & Diagram & Restrictions & \multicolumn{2}{c}{Affinization} \\
\hline
\endfirsthead
\hline
& & & \\[-11pt]
Symm. pair & \multicolumn{2}{c}{Name} & Diagram & Restrictions & \multicolumn{2}{c}{Affinization} \\
\hline
\endhead
$(\mfsl_N,\mfso_N)$ & AI & $\bigl( {\rm A}_n \bigr)^{\rm id}_\emptyset$ & 
\begin{tikzpicture}[baseline=-.5em,line width=0.7pt,scale=75/100]
\clip (-.7,-.4) rectangle (1.7,.4);
\draw[thick] (-.5,0) -- (0,0);
\draw[thick,dashed] (0,0) -- (1,0);
\draw[thick] (1,0) --  (1.5,0);
\filldraw[fill=white] (-.5,0) circle (.1);
\filldraw[fill=white] (0,0) circle (.1);
\filldraw[fill=white] (1,0) circle (.1);
\filldraw[fill=white] (1.5,0) circle (.1);
\end{tikzpicture} 
&
& A.1
& $\bigl( {\rm A}^{(1)}_n \bigr)^{\rm id}_\emptyset$
\\
\hline
$(\mfsl_N,\mfsp_N)$ & AII & $\bigl( {\rm A}_n \bigr)^{\rm id}_{\rm alt}$ & 
\begin{tikzpicture}[baseline=-.5em,line width=0.7pt,scale=75/100]
\clip (-.7,-.4) rectangle (2.2,.4);
\draw[thick] (-.5,0) -- (0,0);
\draw[thick,dashed] (0,0) -- (1,0);
\draw[thick] (1,0) --  (2,0);
\filldraw[fill=black] (-.5,0) circle (.1);
\filldraw[fill=white] (0,0) circle (.1);
\filldraw[fill=black] (1,0) circle (.1);
\filldraw[fill=white] (1.5,0) circle (.1);
\filldraw[fill=black] (2,0) circle (.1);
\end{tikzpicture} 
& $N$ even
& A.2
& $\bigl( {\rm A}^{(1)}_n \bigr)^\id_{\rm alt}$
\\
\hline
\multirow{2}{*}{\casesm[.9]{l}{\\[-5pt] (\mfsl_N, \mfsl_N \cap \\ \qu (\mfgl_{\lfloor \frac{N}{2} \rfloor - p} \oplus \mfgl_{\lceil \frac{N}{2} \rceil + p} ))}}
& \multirow{2}{*}{\casesm[1]{l}{\\[-5pt] $AIII$, \\ $AIV$}}
& $\bigl( {\rm A}_n \bigr)^{\psi}_{p}$ &
\begin{tikzpicture}[baseline=-2em,line width=0.7pt,scale=75/100]
\clip (1.3,-.6) rectangle (4.6,1.1);
\draw[thick,dashed] (1.5,.4) -- (2.5,.4);
\draw[thick] (2.5,.4) -- (3,.4);
\draw[thick,dashed] (3,.4) -- (4,.4);
\draw[thick] (4,.4) -- (4.5,0) -- (4,-.4);
\draw[thick,dashed] (1.5,-.4) -- (2.5,-.4);
\draw[thick] (2.5,-.4) -- (3,-.4);
\draw[thick,dashed] (3,-.4) -- (4,-.4);
\draw[snake=brace] (2.9,.55) -- (4.6,.55) node[midway,above]{\scriptsize $p$};
\filldraw[fill=white] (1.5,.4) circle (.1);
\filldraw[fill=white] (1.5,-.4) circle (.1);
\draw[<->,gray] (1.5,.3) -- (1.5,-.3);
\filldraw[fill=white] (2.5,.4) circle (.1);
\filldraw[fill=white] (2.5,-.4) circle (.1);
\draw[<->,gray] (2.5,.3) -- (2.5,-.3);
\filldraw[fill=black] (3,.4) circle (.1);
\filldraw[fill=black] (3,-.4) circle (.1);
\draw[<->,gray] (3,.3) -- (3,-.3);
\filldraw[fill=black] (4,.4) circle (.1);
\filldraw[fill=black] (4,-.4) circle (.1);
\draw[<->,gray] (4,.3) -- (4,-.3);
\filldraw[fill=black] (4.5,0) circle (.1);
\end{tikzpicture} 
& \multirow{2}{*}[-4pt]{$0 \leq p \le \tfrac{\lfloor N-1 \rfloor }{2}$}
& A.3a
& $\bigl( {\rm A}^{(1)}_n \bigr)^{\psi}_{0,p}$  
\\*[-1em]
& &  $\bigl( {\rm A}_n \bigr)^{\psi}_{p}$  &
\begin{tikzpicture}[baseline=-2em,line width=0.7pt,scale=75/100]
\clip (1.3,-.6) rectangle (4.6,1.1);
\draw[thick,dashed] (1.5,.4) -- (2.5,.4);
\draw[thick] (2.5,.4) -- (3,.4);
\draw[thick,dashed] (3,.4) -- (4,.4);
\draw[thick,domain=270:450] plot({4+.4*cos(\x)},{.4*sin(\x)});
\draw[thick,dashed] (1.5,-.4) -- (2.5,-.4);
\draw[thick] (2.5,-.4) -- (3,-.4);
\draw[thick,dashed] (3,-.4) -- (4,-.4);
\draw[snake=brace] (2.9,.55) -- (4.1,.55) node[midway,above]{\scriptsize $p$};
\filldraw[fill=white] (1.5,.4) circle (.1);
\filldraw[fill=white] (1.5,-.4) circle (.1);
\draw[<->,gray] (1.5,.3) -- (1.5,-.3);
\filldraw[fill=white] (2.5,.4) circle (.1);
\filldraw[fill=white] (2.5,-.4) circle (.1);
\draw[<->,gray] (2.5,.3) -- (2.5,-.3);
\filldraw[fill=black] (3,.4) circle (.1);
\filldraw[fill=black] (3,-.4) circle (.1);
\draw[<->,gray] (3,.3) -- (3,-.3);
\filldraw[fill=black] (4,.4) circle (.1);
\filldraw[fill=black] (4,-.4) circle (.1);
\draw[<->,gray] (4,.3) -- (4,-.3);
\end{tikzpicture}
& 
& A.3b
& $\bigl( {\rm A}^{(1)}_n \bigr)^{\psi}_{0;p}$ 
\\[-1em]
\hline 
\hline 
\casesm[.9]{l}{(\mfso_{2n+1}, \\ \qu \mfso_{n-p} \oplus \mfso_{n+1+p})}
& BI, BII 
& $\bigl( {\rm B}_n \bigr)^{\rm id}_p$
&
\begin{tikzpicture}[baseline=-1em,line width=0.7pt,scale=75/100]
\clip (1.3,-.2) rectangle (4.8,.7);
\draw[thick,dashed] (1.5,0) -- (2.5,0);
\draw[thick] (2.5,0) -- (3,0);
\draw[thick,dashed] (3,0) -- (4,0);
\draw[double,->] (4,0) --  (4.4,0);
\filldraw[fill=white] (1.5,0) circle (.1);
\filldraw[fill=white] (2.5,0) circle (.1);
\filldraw[fill=black] (3,0) circle (.1);
\filldraw[fill=black] (4,0) circle (.1);
\filldraw[fill=black] (4.5,0) circle (.1);
\draw[snake=brace] (2.9,.15) -- (4.6,.15) node[midway,above]{\scriptsize $p$};
\end{tikzpicture} 
& $0 \le p \le n$
& B.1a 
& $\bigl( {\rm B}^{(1)}_n \bigr)^{\id}_{0;p}$
\\
\hline
-
& BIII 
& $\bigl( {\rm B}_n \bigr)^{\rm id}_{{\rm alt},p} $ 
& 
\begin{tikzpicture}[baseline=-1em,line width=0.7pt,scale=75/100]
\clip (.8,-.2) rectangle (5.3,.7);
\draw[thick] (1,0) -- (2,0);
\draw[thick,dashed] (2,0) -- (3,0);
\draw[thick] (3,0) -- (3.5,0);
\draw[thick,dashed] (3.5,0) -- (4.5,0);
\draw[double,->] (4.6,0) --  (5,0);
\filldraw[fill=black] (1,0) circle (.1);
\filldraw[fill=white] (1.5,0) circle (.1);
\filldraw[fill=black] (2,0) circle (.1);
\filldraw[fill=white] (3,0) circle (.1);
\filldraw[fill=black] (3.5,0) circle (.1);
\filldraw[fill=black] (4.5,0) circle (.1);
\filldraw[fill=black] (5,0) circle (.1);
\draw[snake=brace] (3.4,.15) -- (5.1,.15) node[midway,above]{\scriptsize $p$};
\end{tikzpicture} 
& \casesm[.9]{l}{0 \leq p \le n\\ n-p \text{ even}} 
& B.2a
& $\bigl( {\rm B}^{(1)}_n \bigr)^{\rm id}_{0;{\rm alt};p} $  \\
\hline
\hline 
\casesm[.9]{l}{(\mfsp_{2n},\mfgl_n) \text{ if } p=0 \\ (\mfsp_{2n},\mfsp_{2n}) \text{ if } p=n}
& CI
& $\bigl( {\rm C}_{n} \bigr)^{\rm id}_{p} $ 
&
\begin{tikzpicture}[baseline=-1em,line width=0.7pt,scale=75/100]
\clip (1.3,-.2) rectangle (4.8,.7);
\draw[thick,dashed] (1.5,0) -- (2.5,0);
\draw[thick] (2.5,0) -- (3,0);
\draw[thick,dashed] (3,0) -- (4,0);
\draw[double,<-] (4,0) --  (4.4,0);
\filldraw[fill=white] (1.5,0) circle (.1);
\filldraw[fill=white] (2.5,0) circle (.1);
\filldraw[fill=black] (3,0) circle (.1);
\filldraw[fill=black] (4,0) circle (.1);
\filldraw[fill=black] (4.5,0) circle (.1);
\draw[snake=brace] (2.9,.15) -- (4.6,.15) node[midway,above]{\scriptsize $p$};
\end{tikzpicture} 
& $0 \le p \le n$
& C.1
& $\bigl( {\rm C}^{(1)}_{n} \bigr)^{\rm id}_{0,p}$ \\
\hline
$(\mfsp_{2n},\mfsp_{n - p} \oplus \mfsp_{n + p})$
& CII 
& $\bigl( {\rm C}_n \bigr)^{\rm id}_{{\rm alt},p} $ 
& 
\begin{tikzpicture}[baseline=-1em,line width=0.7pt,scale=75/100]
\clip (.8,-.2) rectangle (5.3,.7);
\draw[thick] (1,0) -- (2,0);
\draw[thick,dashed] (2,0) -- (3,0);
\draw[thick] (3,0) -- (3.5,0);
\draw[thick,dashed] (3.5,0) -- (4.5,0);
\draw[double,<-] (4.6,0) --  (5,0);
\filldraw[fill=black] (1,0) circle (.1);
\filldraw[fill=white] (1.5,0) circle (.1);
\filldraw[fill=black] (2,0) circle (.1);
\filldraw[fill=white] (3,0) circle (.1);
\filldraw[fill=black] (3.5,0) circle (.1);
\filldraw[fill=black] (4.5,0) circle (.1);
\filldraw[fill=black] (5,0) circle (.1);
\draw[snake=brace] (3.4,.15) -- (5.1,.15) node[midway,above]{\scriptsize $p$};
\end{tikzpicture} 
& \casesm[.9]{l}{0 \leq p \le n\\ n-p \text{ even}} 
& C.2
& $\bigl( {\rm C}^{(1)}_n \bigr)^{\rm id}_{0,{\rm alt},p} $ 
\\
\hline
\hline
\multirow{2}{*}[-7pt]{$(\mfso_{2n},\mfso_{n - p} \oplus \mfso_{n + p})$}
& \multirow{2}{*}{\casesm[1]{l}{\\[-5pt] $DI$, \\ $DII$}}
& $\bigl({\rm D}_{n}\bigr)^{\id}_p$
&
\begin{tikzpicture}[baseline=-1.75em,line width=0.7pt,scale=75/100]
\clip (.8,-.4) rectangle (4.7,1.1);
\draw[thick,dashed] (1.5,0) -- (2.5,0);
\draw[thick] (2.5,0) -- (3,0);
\draw[thick,dashed] (3,0) -- (4,0);
\draw[thick] (4.4,.3) -- (4,0) -- (4.6,-.3);
\filldraw[fill=white] (1.5,0) circle (.1);
\filldraw[fill=white] (2.5,0) circle (.1);
\filldraw[fill=black] (3,0) circle (.1);
\filldraw[fill=black] (4,0) circle (.1);
\filldraw[fill=black] (4.4,.3) circle (.1);
\filldraw[fill=black] (4.6,-.3) circle (.1);
\draw[snake=brace] (2.9,.45) -- (4.7,.45) node[midway,above]{\scriptsize $p$};
\end{tikzpicture} 
& \casesm[.9]{l}{1 \leq p \le n\\ p \text{ even}} 
& D.1a
& $\bigl( {\rm D}^{(1)}_n \bigr)^{\id}_{0,p}$
\\*[-.75em]
& & $\bigl({\rm D}_{n}\bigr)^{\flR}_p$
&
\begin{tikzpicture}[baseline=-1.75em,line width=0.7pt,scale=75/100]
\clip (.8,-.4) rectangle (4.7,1.1);
\draw[thick,dashed] (2.5,0) -- (1.5,0);
\draw[thick] (3,0) -- (2.5,0);
\draw[thick,dashed] (4,0) -- (3,0);
\draw[thick] (4.5,.3) -- (4,0) -- (4.5,-.3);
\draw[<->,gray] (4.5,.2) -- (4.5,-.2);
\filldraw[fill=white] (1.5,0) circle (.1);
\filldraw[fill=white] (2.5,0) circle (.1);
\filldraw[fill=black] (3,0) circle (.1);
\filldraw[fill=black] (4,0) circle (.1);
\filldraw[fill=black] (4.5,.3) circle (.1);
\filldraw[fill=black] (4.5,-.3) circle (.1);
\draw[snake=brace] (2.9,.45) -- (4.6,.45) node[midway,above]{\scriptsize $p$};
\end{tikzpicture}
& \casesm[.9]{l}{1 \leq p < n \\ p \text{ even}} 
& D.1b
& $\bigl( {\rm D}^{(1)}_n \bigr)^{\flR}_{0;p}$
\\[-.75em]
\hline
\multirow{2}{*}[-8pt]{\casesm[.9]{l}{(\mfso_{2n},\mfgl_n) \text{ if } p=0 \\ (\mfso_{2n},\mfso_{2n}) \text{ if } p=n}}
& \multirow{2}{*}[-8pt]{DIII}
& $\bigl({\rm D}_n \bigr)^{\id}_{{\rm alt},p}$
&
\begin{tikzpicture}[baseline=-1.75em,line width=0.7pt,scale=75/100]
\clip (.8,-.4) rectangle (5.2,1.1);
\draw[thick] (1,0) -- (2,0);
\draw[thick,dashed] (2,0) -- (3,0);
\draw[thick] (3,0) -- (3.5,0);
\draw[thick,dashed] (3.5,0) -- (4.5,0);
\draw[thick] (4.9,.3) -- (4.5,0) -- (5.1,-.3);
\filldraw[fill=black] (1,0) circle (.1);
\filldraw[fill=white] (1.5,0) circle (.1);
\filldraw[fill=black] (2,0) circle (.1);
\filldraw[fill=white] (3,0) circle (.1);
\filldraw[fill=black] (3.5,0) circle (.1);
\filldraw[fill=black] (4.5,0) circle (.1);
\filldraw[fill=black] (4.9,.3) circle (.1);
\filldraw[fill=black] (5.1,-.3) circle (.1);
\draw[snake=brace] (3.4,.45) -- (5.2,.45) node[midway,above]{\scriptsize $p$};
\end{tikzpicture}
& \casesm[.9]{l}{0 \leq p \le n\\ p, n \text{ even}}
& D.2a
& $\bigl( {\rm D}^{(1)}_n \bigr)^{\id}_{0,{\rm alt},p}$
\\*[-.75em]
& 
& $\bigl({\rm D}_n \bigr)^{\flR}_{{\rm alt},p}$
&
\begin{tikzpicture}[baseline=-1.75em,line width=0.7pt,scale=75/100]
\clip (.8,-.4) rectangle (5.2,1.1);
\draw[thick] (1,0) -- (2,0);
\draw[thick,dashed] (2,0) -- (3,0);
\draw[thick] (3,0) -- (3.5,0);
\draw[thick,dashed] (3.5,0) -- (4.5,0);
\draw[thick] (5,.3) -- (4.5,0) -- (5,-.3);
\draw[<->,gray] (5,.2) -- (5,-.2);
\filldraw[fill=black] (1,0) circle (.1);
\filldraw[fill=white] (1.5,0) circle (.1);
\filldraw[fill=black] (2,0) circle (.1);
\filldraw[fill=white] (3,0) circle (.1);
\filldraw[fill=black] (3.5,0) circle (.1);
\filldraw[fill=black] (4.5,0) circle (.1);
\filldraw[fill=black] (5,-.3) circle (.1);
\filldraw[fill=black] (5,.3) circle (.1);
\draw[snake=brace] (3.4,.45) -- (5.1,.45) node[midway,above]{\scriptsize $p$};
\end{tikzpicture}
& \casesm[.9]{l}{0 \leq p \le n\\ p \text{ even},\, n \text{ odd}}
& D.2b
& $\bigl( {\rm D}^{(1)}_n \bigr)^{\flR}_{0,{\rm alt},p}$
\\*[-.75em]
\hline
\end{longtable}


\section{Generalized Satake diagrams of low rank} \label{App:LowRank}


In this appendix we list isomorphisms between generalized Satake diagrams of low rank when the underling Dynkin diagram is of affine type.
For the restrictable cases these are in one-to-one correspondence with isomorphisms of irreducible Hermitian symmetric spaces listed in \cite[Sec.~X.6]{He}. 
Mainly, we consider degenerations of affine Dynkin diagrams that follow from the
well-known isomorphisms of simple Lie algebras of low rank: 
\eqn{
& {\rm A}_1 \cong {\rm B}_1 \cong {\rm C}_1 , \qq\;\; 
{\rm B}_2 \cong {\rm C}_2  , \qq\;
{\rm A_4} \cong {\rm D_3} , \\
& \mfsl_2 \cong \mfso_3 \cong \mfsp_2 , \qq
\mfso_5 \cong  \mfsp_4, \qq
\mfsl_4 \cong \mfso_6. 
}
For the affine Dynkin diagrams the corresponding isomorphisms are
\[
{\rm A}^{(1)}_1 \cong {\rm B}^{(1)}_1 \cong {\rm C}^{(1)}_1, \qq {\rm B}^{(1)}_2 \cong {\rm C}^{(1)}_2, \qq {\rm A}^{(1)}_3 \cong {\rm D}^{(1)}_3 
\]
allowing us to let $n$ for types B$^{(1)}_n$, C$^{(1)}_n$ and D$^{(1)}_n$ start at the values $n=1$, $n=1$ and $n=3$, respectively, with the corresponding diagrams listed in Table \ref{Tbl:degenerate} being the natural degenerations of the families of diagrams listed in Table~\ref{DynkinDiagrams} in Section \ref{sec:natrep}.

\begin{table}[h]
\caption{ Degenerate Dynkin diagrams of affine type.} \label{Tbl:degenerate} 
\[
\begin{array}{cccc}
\begin{minipage}[c]{12mm}
\begin{tikzpicture}[scale=80/100]
\draw[double] (0,0) -- (.5,0);
\filldraw[fill=white] (0,0) circle (.1) node[left]{\scriptsize 0};
\filldraw[fill=white] (.5,0) circle (.1) node[right]{\scriptsize 1};
\end{tikzpicture}
\end{minipage} 
& 
\begin{minipage}[c]{10mm}
\begin{tikzpicture}[scale=80/100]
\draw[double] (-.6,.3) .. controls (0,0) .. (-.4,-.3);
\filldraw[fill=white] (-.6,.3) circle (.1) node[left]{\scriptsize 0};
\filldraw[fill=white] (-.4,-.3) circle (.1) node[left]{\scriptsize 1};
\end{tikzpicture}
\end{minipage}
& 
\begin{minipage}[c]{14mm}
\begin{tikzpicture}[scale=80/100]
\draw[double,->] (-.6,.3) -- (0,0);
\draw[double,->] (-.4,-.3) -- (0,0);
\filldraw[fill=white] (-.6,.3) circle (.1) node[left]{\scriptsize 0};
\filldraw[fill=white] (-.4,-.3) circle (.1) node[left]{\scriptsize 1};
\filldraw[fill=white] (0,0) circle (.1) node[right]{\scriptsize 2};
\end{tikzpicture}
\end{minipage}
& 
\begin{minipage}[c]{14mm}
\begin{tikzpicture}[scale=80/100]
\draw[thick] (-.6,.3) -- (.4,.3) -- (-.4,-.3) -- (.6,-.3) -- (-.6,.3);
\filldraw[fill=white] (-.6,.3) circle (.1) node[above]{\scriptsize 0};
\filldraw[fill=white] (-.4,-.3) circle (.1) node[below]{\scriptsize 1};
\filldraw[fill=white] (.4,.3) circle (.1) node[above]{\scriptsize 2};
\filldraw[fill=white] (.6,-.3) circle (.1) node[below]{\scriptsize 3};
\end{tikzpicture}
\end{minipage}
\\
{\rm C}^{(1)}_1 & {\rm B}^{(1)}_1 & {\rm B}^{(1)}_2 & {\rm D}^{(1)}_3 
\end{array}
\]
\end{table}

Moreover, for Dynkin diagrams of type ${\rm A}^{(1)}_1$ certain diagram involutions coincide which are distinct when $n>1$ so that certain Satake diagrams can be identified. In Tables \ref{Tbl:isos:1}--\ref{Tbl:isos:3} we list the isomorphisms of generalized Satake diagrams of low rank.
Finally, if $A$ is of type ${\rm D}^{(1)}_4$ the group $\Aut(A)$ is strictly larger than $\Sigma_A$; some $\Aut(A)$-equivalence classes coincide with a $\Sigma_A$-equivalence class (see Table \ref{Tbl:isos:4a}) whereas others are the union of two $\Sigma_A$-equivalence classes (see Table \ref{Tbl:isos:4b}).

\captionsetup{width=\textwidth}

\begin{longtable}{p{\textwidth}}
\caption{Isomorphisms between the three representative Satake diagrams of type ${\rm A}^{(1)}_1$ (written in two different ways), ${\rm B}^{(1)}_1$ and ${\rm C}^{(1)}_1$, respectively. } \label{Tbl:isos:1} \\
\nopagebreak 
\vspace{-2em}
\[
\begin{array}{c@{\;\cong\;}c@{\;\cong\;}c@{\;\cong\;}c}
\begin{minipage}[c]{7mm}
\begin{tikzpicture}[scale=70/100]
\draw[thick,domain=0:360] plot ({.4*cos(\x)},{.4*sin(\x)});
\filldraw[fill=white] (0,.4) circle (.1) node[above]{\scriptsize 0};
\filldraw[fill=white] (0,-.4) circle (.1) node[below]{\scriptsize 1};
\end{tikzpicture}
\end{minipage}
&
\begin{minipage}[c]{16mm}
\begin{tikzpicture}[scale=70/100]
\draw[thick] (0,0) .. controls (.5,.5) .. (1,0) .. controls (.5,-.5) .. (0,0);
\filldraw[fill=white] (0,0) circle (.1) node[left]{\scriptsize 0};
\filldraw[fill=white] (1,0) circle (.1) node[right]{\scriptsize 1};
\end{tikzpicture}
\end{minipage}
&
\begin{minipage}[c]{10mm}
\begin{tikzpicture}[scale=70/100]
\draw[double] (-.6,.3) .. controls (0,0) .. (-.4,-.3);
\filldraw[fill=white] (-.6,.3) circle (.1) node[left]{\scriptsize 0};
\filldraw[fill=white] (-.4,-.3) circle (.1) node[left]{\scriptsize 1};
\end{tikzpicture}
\end{minipage}
&
\begin{minipage}[c]{12mm}
\begin{tikzpicture}[scale=70/100]
\draw[double] (0,0) -- (.5,0);
\filldraw[fill=white] (0,0) circle (.1) node[left]{\scriptsize 0};
\filldraw[fill=white] (.5,0) circle (.1) node[right]{\scriptsize 1};
\end{tikzpicture}
\end{minipage}
\\[0mm]
({\rm A}^{(1)}_1)^{\id}_\emptyset & ({\rm A}^{(1)}_1)^{\psi}_{0,0} & ({\rm B}^{(1)}_1)^{\id}_{0;0} & ({\rm C}^{(1)}_1)^{\id}_{0,0}, \\[1mm]
\hline
\begin{minipage}[c]{7mm}
\begin{tikzpicture}[scale=70/100]
\draw[thick,domain=0:360] plot ({.4*cos(\x)},{.4*sin(\x)});
\filldraw[fill=white] (0,.4) circle (.1) node[above]{\scriptsize 0};
\filldraw[fill=black] (0,-.4) circle (.1) node[below]{\scriptsize 1};
\end{tikzpicture}
\end{minipage}
&
\begin{minipage}[c]{16mm}
\begin{tikzpicture}[scale=70/100]
\draw[thick] (0,0) .. controls (.5,.5) .. (1,0) .. controls (.5,-.5) .. (0,0);
\filldraw[fill=white] (0,0) circle (.1) node[left]{\scriptsize 0};
\filldraw[fill=black] (1,0) circle (.1) node[right]{\scriptsize 1};
\end{tikzpicture}
\end{minipage}
&
\begin{minipage}[c]{10mm}
\begin{tikzpicture}[scale=70/100]
\draw[double] (-.6,.3) .. controls (0,0) .. (-.4,-.3);
\filldraw[fill=white] (-.6,.3) circle (.1) node[left]{\scriptsize 0};
\filldraw[fill=black] (-.4,-.3) circle (.1) node[left]{\scriptsize 1};
\end{tikzpicture}
\end{minipage}
&
\begin{minipage}[c]{12mm}
\begin{tikzpicture}[scale=70/100]
\draw[double] (0,0) -- (.5,0);
\filldraw[fill=white] (0,0) circle (.1) node[left]{\scriptsize 0};
\filldraw[fill=black] (.5,0) circle (.1) node[right]{\scriptsize 1};
\end{tikzpicture}
\end{minipage}
\\[0mm]
({\rm A}^{(1)}_1)^{\id}_{\rm alt} & ({\rm A}^{(1)}_1)^{\psi}_{0,1} & ({\rm B}^{(1)}_1)^{\id}_{0;1} & ({\rm C}^{(1)}_1)^{\id}_{0,{\rm alt},1} \\[1mm]
\hline
\begin{minipage}[c]{7mm}
\begin{tikzpicture}[scale=70/100]
\draw[thick,domain=0:360] plot ({.4*cos(\x)},{.4*sin(\x)});
\draw[<->,gray] (0,.3) -- (0,-.3);
\filldraw[fill=white] (0,.4) circle (.1) node[above]{\scriptsize 0};
\filldraw[fill=white] (0,-.4) circle (.1) node[below]{\scriptsize 1};
\end{tikzpicture}
\end{minipage}
&
\begin{minipage}[c]{7mm}
\begin{tikzpicture}[scale=70/100]
\draw[thick,domain=0:360] plot ({.4*cos(\x)},{.4*sin(\x)});
\draw[<->,gray] (0,.3) -- (0,-.3);
\filldraw[fill=white] (0,.4) circle (.1) node[above]{\scriptsize 0};
\filldraw[fill=white] (0,-.4) circle (.1) node[below]{\scriptsize 1};
\end{tikzpicture}
\end{minipage}
&
\begin{minipage}[c]{9mm}
\begin{tikzpicture}[scale=70/100]
\draw[double,domain=-90:90] plot ({.3*cos(\x)},{.3*sin(\x)});
\draw[<->,gray] (0,.2) -- (0,-.2);
\filldraw[fill=white] (0,.3) circle (.1) node[left]{\scriptsize 0};
\filldraw[fill=white] (0,-.3) circle (.1) node[left]{\scriptsize 1};
\end{tikzpicture}
\end{minipage}
&
\begin{minipage}[c]{9mm}
\begin{tikzpicture}[scale=70/100]
\draw[double,domain=90:270] plot ({.3*cos(\x)},{.3*sin(\x)});
\draw[<->,gray] (0,.2) -- (0,-.2);
\filldraw[fill=white] (0,.3) circle (.1) node[right]{\scriptsize 0};
\filldraw[fill=white] (0,-.3) circle (.1) node[right]{\scriptsize 1};
\end{tikzpicture}
\end{minipage}
\\[0mm]
({\rm A}^{(1)}_1)^{\pi}_\emptyset & ({\rm A}^{(1)}_1)^{\psi'}_{0,0} & ({\rm B}^{(1)}_1)^{\flL}_{0;0} & ({\rm C}^{(1)}_1)^\pi_{0}
\end{array}
\]
\end{longtable}

\vspace{-1em}

\begin{longtable}{p{\textwidth}}
\caption{Isomorphisms between the seven representative generalized Satake diagrams of types ${\rm B}^{(1)}_2$ and ${\rm C}^{(1)}_2$.
The diagram in the bottom row is the only weak Satake diagram. 
Because of the different labelling of the nodes the isomorphism is given by $1 \leftrightarrow 2$.} \label{Tbl:isos:2}\\
\nopagebreak 
\vspace{-1.5em}
\[
\begin{array}{c@{\;\cong\;}c|c@{\;\cong\;}c|c@{\;\cong\;}c}
\begin{minipage}[c]{14mm}
\begin{tikzpicture}[scale=80/100]
\draw[double,->] (-.6,.3) -- (0,0);
\draw[double,->] (-.4,-.3) -- (0,0);
\filldraw[fill=white] (-.6,.3) circle (.1) node[left]{\scriptsize 0};
\filldraw[fill=white] (-.4,-.3) circle (.1) node[left]{\scriptsize 1};
\filldraw[fill=white] (0,0) circle (.1) node[right]{\scriptsize 2};
\end{tikzpicture}
\end{minipage}
& \begin{minipage}[c]{12mm}
\begin{tikzpicture}[scale=80/100]
\draw[double,<-] (-.1,0) -- (-.5,0);
\draw[double,<-] (.1,0) -- (.5,0);
\filldraw[fill=white] (-.5,0) circle (.1) node[above]{\scriptsize 0};
\filldraw[fill=white] (0,0) circle (.1) node[below]{\scriptsize 1};
\filldraw[fill=white] (.5,0) circle (.1) node[above]{\scriptsize 2};
\end{tikzpicture} 
\end{minipage} 
& \begin{minipage}[c]{14mm}
\begin{tikzpicture}[scale=80/100]
\draw[double,->] (-.6,.3) -- (0,0);
\draw[double,->] (-.4,-.3) -- (0,0);
\filldraw[fill=white] (-.6,.3) circle (.1) node[left]{\scriptsize 0};
\filldraw[fill=white] (-.4,-.3) circle (.1) node[left]{\scriptsize 1};
\filldraw[fill=black] (0,0) circle (.1) node[right]{\scriptsize 2};
\end{tikzpicture}
\end{minipage}
& \begin{minipage}[c]{12mm}
\begin{tikzpicture}[scale=80/100]
\draw[double,<-] (-.1,0) -- (-.5,0);
\draw[double,<-] (.1,0) -- (.5,0);
\filldraw[fill=white] (-.5,0) circle (.1) node[above]{\scriptsize 0};
\filldraw[fill=black] (0,0) circle (.1) node[below]{\scriptsize 1};
\filldraw[fill=white] (.5,0) circle (.1) node[above]{\scriptsize 2};
\end{tikzpicture} 
\end{minipage}
& \begin{minipage}[c]{14mm}
\begin{tikzpicture}[scale=80/100]
\draw[double,->] (-.6,.3) -- (0,0);
\draw[double,->] (-.4,-.3) -- (0,0);
\filldraw[fill=white] (-.6,.3) circle (.1) node[left]{\scriptsize 0};
\filldraw[fill=black] (-.4,-.3) circle (.1) node[left]{\scriptsize 1};
\filldraw[fill=black] (0,0) circle (.1) node[right]{\scriptsize 2};
\end{tikzpicture}
\end{minipage}
& \begin{minipage}[c]{12mm}
\begin{tikzpicture}[scale=80/100]
\draw[double,<-] (-.1,0) -- (-.5,0);
\draw[double,<-] (.1,0) -- (.5,0);
\filldraw[fill=white] (-.5,0) circle (.1) node[above]{\scriptsize 0};
\filldraw[fill=black] (0,0) circle (.1) node[below]{\scriptsize 1};
\filldraw[fill=black] (.5,0) circle (.1) node[above]{\scriptsize 2};
\end{tikzpicture} 
\end{minipage}
\\[1mm]
({\rm B}^{(1)}_2)^{\id}_{0;0} & ({\rm C}^{(1)}_2)^{\id}_{0,0} 
& ({\rm B}^{(1)}_2)^{\id}_{0;1} & ({\rm C}^{(1)}_2)^{\id}_{0,{\rm alt},0} 
& ({\rm B}^{(1)}_2)^{\id}_{0;2} & ({\rm C}^{(1)}_2)^{\id}_{0,{\rm alt},2} \\[1mm]
\hline
\begin{minipage}[c]{14mm}
\begin{tikzpicture}[scale=80/100]
\draw[double,->] (-.6,.3) -- (0,0);
\draw[double,->] (-.4,-.3) -- (0,0);
\filldraw[fill=black] (-.6,.3) circle (.1) node[left]{\scriptsize 0};
\filldraw[fill=black] (-.4,-.3) circle (.1) node[left]{\scriptsize 1};
\filldraw[fill=white] (0,0) circle (.1) node[right]{\scriptsize 2};
\end{tikzpicture}
\end{minipage}
& \begin{minipage}[c]{12mm}
\begin{tikzpicture}[scale=80/100]
\draw[double,<-] (-.1,0) -- (-.5,0);
\draw[double,<-] (.1,0) -- (.5,0);
\filldraw[fill=black] (-.5,0) circle (.1) node[above]{\scriptsize 0};
\filldraw[fill=white] (0,0) circle (.1) node[below]{\scriptsize 1};
\filldraw[fill=black] (.5,0) circle (.1) node[above]{\scriptsize 2};
\end{tikzpicture} 
\end{minipage}
& \begin{minipage}[c]{13mm}
\begin{tikzpicture}[scale=80/100]
\draw[double,->] (-.5,.3) -- (0,0);
\draw[double,->] (-.5,-.3) -- (0,0);
\draw[<->,gray] (-.5,.2) -- (-.5,-.2);
\filldraw[fill=white] (-.5,.3) circle (.1) node[left]{\scriptsize 0};
\filldraw[fill=white] (-.5,-.3) circle (.1) node[left]{\scriptsize 1};
\filldraw[fill=white] (0,0) circle (.1) node[right]{\scriptsize 2};
\end{tikzpicture}
\end{minipage} 
& \begin{minipage}[c]{13mm}
\begin{tikzpicture}[scale=80/100]
\draw[double,->] (.5,.3) -- (0,0);
\draw[double,->] (.5,-.3) -- (0,0);
\draw[<->,gray] (.5,.2) -- (.5,-.2);
\filldraw[fill=white] (.5,.3) circle (.1) node[right]{\scriptsize 0};
\filldraw[fill=white] (.5,-.3) circle (.1) node[right]{\scriptsize 2};
\filldraw[fill=white] (0,0) circle (.1) node[left]{\scriptsize 1};
\end{tikzpicture}
\end{minipage} 
& \begin{minipage}[c]{13mm}
\begin{tikzpicture}[scale=80/100]
\draw[double,->] (-.5,.3) -- (0,0);
\draw[double,->] (-.5,-.3) -- (0,0);
\draw[<->,gray] (-.5,.2) -- (-.5,-.2);
\filldraw[fill=white] (-.5,.3) circle (.1) node[left]{\scriptsize 0};
\filldraw[fill=white] (-.5,-.3) circle (.1) node[left]{\scriptsize 1};
\filldraw[fill=black] (0,0) circle (.1) node[right]{\scriptsize 2};
\end{tikzpicture}
\end{minipage} 
& \begin{minipage}[c]{13mm}
\begin{tikzpicture}[scale=80/100]
\draw[double,->] (.5,.3) -- (0,0);
\draw[double,->] (.5,-.3) -- (0,0);
\draw[<->,gray] (.5,.2) -- (.5,-.2);
\filldraw[fill=white] (.5,.3) circle (.1) node[right]{\scriptsize 0};
\filldraw[fill=white] (.5,-.3) circle (.1) node[right]{\scriptsize 2};
\filldraw[fill=black] (0,0) circle (.1) node[left]{\scriptsize 1};
\end{tikzpicture}
\end{minipage} 
\\[2mm]
({\rm B}^{(1)}_2)^{\id}_{2;0} & ({\rm C}^{(1)}_2)^{\id}_{1,{\rm alt},1} 
& ({\rm B}^{(1)}_2)^{\flL}_{0;0} &  ({\rm C}^{(1)}_2)^{\pi}_0 
& ({\rm B}^{(1)}_2)^{\flL}_{0;1} & ({\rm C}^{(1)}_2)^{\pi}_1 
\\[1mm]
\hline
\multicolumn{2}{c|}{}
&
\begin{minipage}[c]{14mm}
\begin{tikzpicture}[scale=80/100]
\draw[double,->] (-.6,.3) -- (0,0);
\draw[double,->] (-.4,-.3) -- (0,0);
\filldraw[fill=white] (-.6,.3) circle (.1) node[left]{\scriptsize 0};
\filldraw[fill=black] (-.4,-.3) circle (.1) node[left]{\scriptsize 1};
\filldraw[fill=white] (0,0) circle (.1) node[right]{\scriptsize 2};
\end{tikzpicture}
\end{minipage}
& \begin{minipage}[c]{12mm}
\begin{tikzpicture}[scale=80/100]
\draw[double,<-] (-.1,0) -- (-.5,0);
\draw[double,<-] (.1,0) -- (.5,0);
\filldraw[fill=white] (-.5,0) circle (.1) node[above]{\scriptsize 0};
\filldraw[fill=white] (0,0) circle (.1) node[below]{\scriptsize 1};
\filldraw[fill=black] (.5,0) circle (.1) node[above]{\scriptsize 2};
\end{tikzpicture} 
\end{minipage}
&
\multicolumn{2}{c}{}
\\[1mm]
\multicolumn{2}{c|}{}
&
({\rm B}^{(1)}_2)^{\id}_{0;{\rm alt};0} & ({\rm C}^{(1)}_2)^{\id}_{0,1} 
&
\multicolumn{2}{c}{}
\end{array}
\]
\end{longtable}

\vspace{-1em}

\begin{longtable}{p{\textwidth}}
\caption{Isomorphisms between the nine representative Satake diagrams of type ${\rm A}^{(1)}_3$ and ${\rm D}^{(1)}_3$. Again, the nodes are labelled differently and accordingly the isomorphism is given by $1 \leftrightarrow 2$.} \label{Tbl:isos:3} \\
\nopagebreak 
\vspace{-1.5em}
\[
\begin{array}{c@{\,\cong\,}c|c@{\,\cong\,}c|c@{\,\cong\,}c}
\begin{minipage}[c]{16mm}
\begin{tikzpicture}[scale=80/100]
\draw[thick,domain=0:360] plot ({.5*cos(\x)},{.5*sin(\x)});
\filldraw[fill=white] (0,.5) circle (.1) node[above=-1pt]{\scriptsize 0};
\filldraw[fill=white] (.5,0) circle (.1) node[right]{\scriptsize 1};
\filldraw[fill=white] (0,-.5) circle (.1) node[below=-.5pt]{\scriptsize 2};
\filldraw[fill=white] (-.5,0) circle (.1) node[left]{\scriptsize 3};
\end{tikzpicture}
\end{minipage}
& \begin{minipage}[c]{14mm}
\begin{tikzpicture}[scale=80/100]
\draw[thick] (-.6,.3) -- (.4,.3) -- (-.4,-.3) -- (.6,-.3) -- (-.6,.3);
\filldraw[fill=white] (-.6,.3) circle (.1) node[above]{\scriptsize 0};
\filldraw[fill=white] (-.4,-.3) circle (.1) node[below]{\scriptsize 1};
\filldraw[fill=white] (.4,.3) circle (.1) node[above]{\scriptsize 2};
\filldraw[fill=white] (.6,-.3) circle (.1) node[below]{\scriptsize 3};
\end{tikzpicture}
\end{minipage}
& \begin{minipage}[c]{16mm}
\begin{tikzpicture}[scale=80/100]
\draw[thick,domain=0:360] plot ({.5*cos(\x)},{.5*sin(\x)});
\filldraw[fill=white] (0,.5) circle (.1) node[above=-1pt]{\scriptsize 0};
\filldraw[fill=black] (.5,0) circle (.1) node[right]{\scriptsize 1};
\filldraw[fill=white] (0,-.5) circle (.1) node[below=-.5pt]{\scriptsize 2};
\filldraw[fill=black] (-.5,0) circle (.1) node[left]{\scriptsize 3};
\end{tikzpicture}
\end{minipage}
& \begin{minipage}[c]{14mm}
\begin{tikzpicture}[scale=80/100]
\draw[thick] (-.6,.3) -- (.4,.3) -- (-.4,-.3) -- (.6,-.3) -- (-.6,.3);
\filldraw[fill=white] (-.6,.3) circle (.1) node[above]{\scriptsize 0};
\filldraw[fill=white] (-.4,-.3) circle (.1) node[below]{\scriptsize 1};
\filldraw[fill=black] (.4,.3) circle (.1) node[above]{\scriptsize 2};
\filldraw[fill=black] (.6,-.3) circle (.1) node[below]{\scriptsize 3};
\end{tikzpicture}
\end{minipage}
& \begin{minipage}[c]{16mm}
\begin{tikzpicture}[scale=80/100]
\draw[thick,domain=0:360] plot ({.5*cos(\x)},{.5*sin(\x)});
\draw[<->,gray] (0,.4) -- (0,-.4);
\draw[<->,gray] (-.4,0) -- (.4,0);
\filldraw[fill=white] (0,.5) circle (.1) node[above=-1pt]{\scriptsize 0};
\filldraw[fill=white] (.5,0) circle (.1) node[right]{\scriptsize 1};
\filldraw[fill=white] (0,-.5) circle (.1) node[below=-.5pt]{\scriptsize 2};
\filldraw[fill=white] (-.5,0) circle (.1) node[left]{\scriptsize 3};
\end{tikzpicture}
\end{minipage}
& \begin{minipage}[c]{12mm}
\begin{tikzpicture}[scale=80/100]
\draw[thick] (-.5,.3) -- (.5,.3) -- (-.5,-.3) -- (.5,-.3) -- (-.5,.3);
\draw[<->,gray] (-.5,.2) -- (-.5,-.2);
\draw[<->,gray] (.5,.2) -- (.5,-.2);
\filldraw[fill=white] (-.5,.3) circle (.1) node[above]{\scriptsize 0};
\filldraw[fill=white] (-.5,-.3) circle (.1) node[below]{\scriptsize 1};
\filldraw[fill=white] (.5,.3) circle (.1) node[above]{\scriptsize 2};
\filldraw[fill=white] (.5,-.3) circle (.1) node[below]{\scriptsize 3};
\end{tikzpicture}
\end{minipage}
\\[1mm]
({\rm A}^{(1)}_3)^{\id}_\emptyset & ({\rm D}^{(1)}_3)^{\id}_{0,0} 
& ({\rm A}^{(1)}_3)^{\id}_{\rm alt} & ({\rm D}^{(1)}_3)^{\id}_{0,2} 
& ({\rm A}^{(1)}_3)^{\pi}_\emptyset  & ({\rm D}^{(1)}_3)^{\flLR}_{0,0} \\[1mm]
\hline
\begin{minipage}[c]{16mm}
\begin{tikzpicture}[scale=80/100]
\draw[thick] (0,.4) -- (-.5,0) -- (0,-.4);
\draw[thick] (0,.4) -- (.5,0) -- (0,-.4);
\draw[<->,gray] (0,.3) -- (0,-.3);
\filldraw[fill=white] (-.5,0) circle (.1) node[left]{\scriptsize 0};
\filldraw[fill=white] (0,.4) circle (.1) node[above=-1pt]{\scriptsize 1};
\filldraw[fill=white] (0,-.4) circle (.1) node[below=-.5pt]{\scriptsize 3};
\filldraw[fill=white] (.5,0) circle (.1) node[right]{\scriptsize 2};
\end{tikzpicture} 
\end{minipage}
& \begin{minipage}[c]{13mm}
\begin{tikzpicture}[scale=80/100]
\draw[thick] (-.6,.3) -- (.5,.3) -- (-.4,-.3) -- (.5,-.3) -- (-.6,.3);
\draw[<->,gray] (.5,.2) -- (.5,-.2);
\filldraw[fill=white] (-.6,.3) circle (.1) node[above]{\scriptsize 0};
\filldraw[fill=white] (-.4,-.3) circle (.1) node[below]{\scriptsize 1};
\filldraw[fill=white] (.5,.3) circle (.1) node[above]{\scriptsize 2};
\filldraw[fill=white] (.5,-.3) circle (.1) node[below]{\scriptsize 3};
\end{tikzpicture}
\end{minipage}
& \begin{minipage}[c]{16mm}
\begin{tikzpicture}[scale=80/100]
\draw[thick] (0,.4) -- (-.5,0) -- (0,-.4);
\draw[thick] (0,.4) -- (.5,0) -- (0,-.4);
\draw[<->,gray] (0,.3) -- (0,-.3);
\filldraw[fill=white] (-.5,0) circle (.1) node[left]{\scriptsize 0};
\filldraw[fill=white] (0,.4) circle (.1) node[above=-1pt]{\scriptsize 1};
\filldraw[fill=white] (0,-.4) circle (.1) node[below=-.5pt]{\scriptsize 3};
\filldraw[fill=black] (.5,0) circle (.1) node[right]{\scriptsize 2};
\end{tikzpicture} 
\end{minipage}
& \begin{minipage}[c]{13mm}
\begin{tikzpicture}[scale=80/100]
\draw[thick] (-.6,.3) -- (.5,.3) -- (-.4,-.3) -- (.5,-.3) -- (-.6,.3);
\draw[<->,gray] (.5,.2) -- (.5,-.2);
\filldraw[fill=white] (-.6,.3) circle (.1) node[above]{\scriptsize 0};
\filldraw[fill=black] (-.4,-.3) circle (.1) node[below]{\scriptsize 1};
\filldraw[fill=white] (.5,.3) circle (.1) node[above]{\scriptsize 2};
\filldraw[fill=white] (.5,-.3) circle (.1) node[below]{\scriptsize 3};
\end{tikzpicture}
\end{minipage}
& \begin{minipage}[c]{16mm}
\begin{tikzpicture}[scale=80/100]
\draw[thick] (0,.4) -- (-.5,0) -- (0,-.4);
\draw[thick] (0,.4) -- (.5,0) -- (0,-.4);
\draw[<->,gray] (0,.3) -- (0,-.3);
\filldraw[fill=white] (-.5,0) circle (.1) node[left]{\scriptsize 0};
\filldraw[fill=black] (0,.4) circle (.1) node[above=-1pt]{\scriptsize 1};
\filldraw[fill=black] (0,-.4) circle (.1) node[below=-.5pt]{\scriptsize 3};
\filldraw[fill=black] (.5,0) circle (.1) node[right]{\scriptsize 2};
\end{tikzpicture} 
\end{minipage}
& \begin{minipage}[c]{13mm}
\begin{tikzpicture}[scale=80/100]
\draw[thick] (-.6,.3) -- (.5,.3) -- (-.4,-.3) -- (.5,-.3) -- (-.6,.3);
\draw[<->,gray] (.5,.2) -- (.5,-.2);
\filldraw[fill=white] (-.6,.3) circle (.1) node[above]{\scriptsize 0};
\filldraw[fill=black] (-.4,-.3) circle (.1) node[below]{\scriptsize 1};
\filldraw[fill=black] (.5,.3) circle (.1) node[above]{\scriptsize 2};
\filldraw[fill=black] (.5,-.3) circle (.1) node[below]{\scriptsize 3};
\end{tikzpicture}
\end{minipage}
\\[1mm]
({\rm A}^{(1)}_3)^{\psi}_{0,0} & ({\rm D}^{(1)}_3)^{\flR}_{0;0} 
& ({\rm A}^{(1)}_3)^{\psi}_{0,1} & ({\rm D}^{(1)}_3)^{\flR}_{\rm alt} 
& ({\rm A}^{(1)}_3)^{\psi}_{0,2} & ({\rm D}^{(1)}_3)^{\flR}_{0;2} \\[1mm]
\hline
\begin{minipage}[c]{16mm}
\begin{tikzpicture}[scale=80/100]
\draw[thick] (0,.4) -- (-.5,0) -- (0,-.4);
\draw[thick] (0,.4) -- (.5,0) -- (0,-.4);
\draw[<->,gray] (0,.3) -- (0,-.3);
\filldraw[fill=black] (-.5,0) circle (.1) node[left]{\scriptsize 0};
\filldraw[fill=white] (0,.4) circle (.1) node[above=-1pt]{\scriptsize 1};
\filldraw[fill=white] (0,-.4) circle (.1) node[below=-.5pt]{\scriptsize 3};
\filldraw[fill=black] (.5,0) circle (.1) node[right]{\scriptsize 2};
\end{tikzpicture} 
\end{minipage}
& \begin{minipage}[c]{13mm}
\begin{tikzpicture}[scale=80/100]
\draw[thick] (-.6,.3) -- (.5,.3) -- (-.4,-.3) -- (.5,-.3) -- (-.6,.3);
\draw[<->,gray] (.5,.2) -- (.5,-.2);
\filldraw[fill=black] (-.6,.3) circle (.1) node[above]{\scriptsize 0};
\filldraw[fill=black] (-.4,-.3) circle (.1) node[below]{\scriptsize 1};
\filldraw[fill=white] (.5,.3) circle (.1) node[above]{\scriptsize 2};
\filldraw[fill=white] (.5,-.3) circle (.1) node[below]{\scriptsize 3};
\end{tikzpicture}
\end{minipage}
& \begin{minipage}[c]{10mm}
\begin{tikzpicture}[scale=80/100]
\draw[thick,domain=90:270] plot({.4*cos(\x)},{.4*sin(\x)});
\draw[thick] (0,.4) -- (.5,.4);
\draw[thick] (0,-.4) -- (.5,-.4);
\draw[thick,domain=270:450] plot({.5+.4*cos(\x)},{.4*sin(\x)});
\draw[<->,gray] (0,.3) -- (0,-.3);
\draw[<->,gray] (.5,.3) -- (.5,-.3);
\filldraw[fill=white] (0,.4) circle (.1) node[above=-1pt]{\scriptsize 0};
\filldraw[fill=white] (0,-.4) circle (.1) node[below=-.5pt]{\scriptsize 3};
\filldraw[fill=white] (.5,.4) circle (.1) node[above=-1pt]{\scriptsize 1};
\filldraw[fill=white] (.5,-.4) circle (.1) node[below=-.5pt]{\scriptsize 2};
\end{tikzpicture} 
\end{minipage}
& \begin{minipage}[c]{10mm}
\begin{tikzpicture}[scale=80/100]
\draw[thick] (.1,.7) -- (-.1,.7);
\draw[thick] (.1,-.2) -- (-.1,-.2);
\draw[thick,domain=90:270] plot({-.1+.45*cos(\x)},{.25+.45*sin(\x)});
\draw[thick,domain=90:270] plot({-.1+.45*cos(\x)},{-.15+.45*sin(\x)});
\draw[thick,domain=90:270] plot({-.1+.65*cos(\x)},{.05+.65*sin(\x)});
\draw[thick,domain=90:270] plot({-.1+.25*cos(\x)},{.05+.25*sin(\x)});
\draw[<->,gray] (-.1,.2) -- (-.1,-.5);
\draw[<->,gray] (.1,.6) -- (.1,-.1);
\filldraw[fill=white] (-.1,.3) circle (.1) node[above=-1pt]{\scriptsize 1};
\filldraw[fill=white] (-.1,-.6) circle (.1) node[right]{\scriptsize 2};
\filldraw[fill=white] (.1,.7) circle (.1) node[right]{\scriptsize 0};
\filldraw[fill=white] (.1,-.2) circle (.1) node[right]{\scriptsize 3};
\end{tikzpicture}
\end{minipage}
& \begin{minipage}[c]{10mm}
\begin{tikzpicture}[scale=80/100]
\draw[thick,domain=90:270] plot({.4*cos(\x)},{.4*sin(\x)});
\draw[thick] (0,.4) -- (.5,.4);
\draw[thick] (0,-.4) -- (.5,-.4);
\draw[thick,domain=270:450] plot({.5+.4*cos(\x)},{.4*sin(\x)});
\draw[<->,gray] (0,.3) -- (0,-.3);
\draw[<->,gray] (.5,.3) -- (.5,-.3);
\filldraw[fill=white] (0,.4) circle (.1) node[above=-1pt]{\scriptsize 0};
\filldraw[fill=white] (0,-.4) circle (.1) node[below=-.5pt]{\scriptsize 3};
\filldraw[fill=black] (.5,.4) circle (.1) node[above=-1pt]{\scriptsize 1};
\filldraw[fill=black] (.5,-.4) circle (.1) node[below=-.5pt]{\scriptsize 2};
\end{tikzpicture} 
\end{minipage}
& \begin{minipage}[c]{10mm}
\begin{tikzpicture}[scale=80/100]
\draw[thick] (.1,.7) -- (-.1,.7);
\draw[thick] (.1,-.2) -- (-.1,-.2);
\draw[thick,domain=90:270] plot({-.1+.45*cos(\x)},{.25+.45*sin(\x)});
\draw[thick,domain=90:270] plot({-.1+.45*cos(\x)},{-.15+.45*sin(\x)});
\draw[thick,domain=90:270] plot({-.1+.65*cos(\x)},{.05+.65*sin(\x)});
\draw[thick,domain=90:270] plot({-.1+.25*cos(\x)},{.05+.25*sin(\x)});
\draw[<->,gray] (-.1,.2) -- (-.1,-.5);
\draw[<->,gray] (.1,.6) -- (.1,-.1);
\filldraw[fill=black] (-.1,.3) circle (.1) node[above=-1pt]{\scriptsize 1};
\filldraw[fill=black] (-.1,-.6) circle (.1) node[right]{\scriptsize 2};
\filldraw[fill=white] (.1,.7) circle (.1) node[right]{\scriptsize 0};
\filldraw[fill=white] (.1,-.2) circle (.1) node[right]{\scriptsize 3};
\end{tikzpicture}
\end{minipage} 
\\[1mm]
({\rm A}^{(1)}_3)^{\psi}_{1,1} & ({\rm D}^{(1)}_3)^{\flR}_{2;0}  
& ({\rm A}^{(1)}_3)^{\psi'}_{0,0} & ({\rm D}^{(1)}_3)^{\pi}_0 
& ({\rm A}^{(1)}_3)^{\psi'}_{0,1} & ({\rm D}^{(1)}_3)^{\pi}_{1} 
\end{array}
\]
\end{longtable}

\vspace{-1em}

\begin{table}[h]
\caption{Satake diagrams of type ${\rm D}^{(1)}_4$ whose $\Aut(A)$- and $\Sigma_A$-equivalence class coincide.} \label{Tbl:isos:4a}
\nopagebreak 
\vspace{-1.5em}
\[
\begin{array}{c @{\qquad} c @{\qquad} c}
\begin{minipage}[c]{17mm}
\begin{tikzpicture}[scale=80/100]
\draw[thick] (-.6,.3) -- (0,0) -- (-.4,-.3);
\draw[thick] (.4,.3) -- (0,0) -- (.6,-.3);
\filldraw[fill=white] (-.6,.3) circle (.1) node[left]{\scriptsize 0};
\filldraw[fill=white] (-.4,-.3) circle (.1) node[left]{\scriptsize 1};
\filldraw[fill=white] (0,0) circle (.1) node[above]{\scriptsize 2};
\filldraw[fill=white] (.6,-.3) circle (.1) node[right]{\scriptsize 4};
\filldraw[fill=white] (.4,.3) circle (.1) node[right]{\scriptsize 3};
\end{tikzpicture}
\end{minipage} 
& 
\begin{minipage}[c]{17mm}
\begin{tikzpicture}[scale=80/100]
\draw[thick] (-.6,.3) -- (0,0) -- (-.4,-.3);
\draw[thick] (.4,.3) -- (0,0) -- (.6,-.3);
\filldraw[fill=white] (-.6,.3) circle (.1) node[left]{\scriptsize 0};
\filldraw[fill=black] (-.4,-.3) circle (.1) node[left]{\scriptsize 1};
\filldraw[fill=black] (0,0) circle (.1) node[above]{\scriptsize 2};
\filldraw[fill=black] (.6,-.3) circle (.1) node[right]{\scriptsize 4};
\filldraw[fill=black] (.4,.3) circle (.1) node[right]{\scriptsize 3};
\end{tikzpicture}
\end{minipage} 
& 
\begin{minipage}[c]{17mm}
\begin{tikzpicture}[scale=80/100]
\draw[thick] (-.6,.3) -- (0,0) -- (-.4,-.3);
\draw[thick] (.4,.3) -- (0,0) -- (.6,-.3);
\filldraw[fill=black] (-.6,.3) circle (.1) node[left]{\scriptsize 0};
\filldraw[fill=black] (-.4,-.3) circle (.1) node[left]{\scriptsize 1};
\filldraw[fill=white] (0,0) circle (.1) node[above]{\scriptsize 2};
\filldraw[fill=black] (.6,-.3) circle (.1) node[right]{\scriptsize 4};
\filldraw[fill=black] (.4,.3) circle (.1) node[right]{\scriptsize 3};
\end{tikzpicture}
\end{minipage} 
\\[2mm]
({\rm D}^{(1)}_4)^{\id}_{0,0} &
({\rm D}^{(1)}_4)^{\id}_{0,4} &
({\rm D}^{(1)}_4)^{\id}_{2,2}
\end{array}
\]
\end{table}

\vspace{-1em}

\begin{longtable}{p{\textwidth}}
\caption{$\Aut(A)$-equivalent Satake diagrams of type ${\rm D}^{(1)}_4$ which are $\Sigma_A$-inequivalent.} \label{Tbl:isos:4b}\\
\nopagebreak 
\vspace{-1.5em}
\[
\begin{array}{c@{\;\not\cong_{\Sigma_A}\;}c|c@{\;\not\cong_{\Sigma_A}\;}c|c@{\;\not\cong_{\Sigma_A}\;}c}
\multicolumn{2}{c|}{} &
\begin{minipage}[c]{17mm}
\begin{tikzpicture}[scale=80/100]
\draw[thick] (-.6,.3) -- (0,0) -- (-.4,-.3);
\draw[thick] (.4,.3) -- (0,0) -- (.6,-.3);
\filldraw[fill=white] (-.6,.3) circle (.1) node[left]{\scriptsize 0};
\filldraw[fill=white] (-.4,-.3) circle (.1) node[left]{\scriptsize 1};
\filldraw[fill=white] (0,0) circle (.1) node[above]{\scriptsize 2};
\filldraw[fill=black] (.6,-.3) circle (.1) node[right]{\scriptsize 4};
\filldraw[fill=black] (.4,.3) circle (.1) node[right]{\scriptsize 3};
\end{tikzpicture}
\end{minipage}
& 
\begin{minipage}[c]{17mm}
\begin{tikzpicture}[scale=80/100]
\draw[thick] (-.6,.3) -- (0,0) -- (-.4,-.3);
\draw[thick] (.4,.3) -- (0,0) -- (.6,-.3);
\filldraw[fill=white] (-.6,.3) circle (.1) node[left]{\scriptsize 0};
\filldraw[fill=black] (-.4,-.3) circle (.1) node[left]{\scriptsize 1};
\filldraw[fill=white] (0,0) circle (.1) node[above]{\scriptsize 2};
\filldraw[fill=white] (.6,-.3) circle (.1) node[right]{\scriptsize 4};
\filldraw[fill=black] (.4,.3) circle (.1) node[right]{\scriptsize 3};
\end{tikzpicture}
\end{minipage}
\\[1mm]
\multicolumn{2}{c|}{} & ({\rm D}^{(1)}_4)^{\id}_{0,2} & ({\rm D}^{(1)}_4)^{\id}_{\rm alt} \\[1mm] \hline
\begin{minipage}[c]{16mm}
\begin{tikzpicture}[scale=80/100]
\draw[thick] (-.5,.3) -- (0,0) -- (-.5,-.3);
\draw[thick] (.5,.3) -- (0,0) -- (.5,-.3);
\draw[<->,gray] (-.5,.2) -- (-.5,-.2);
\draw[<->,gray] (.5,.2) -- (.5,-.2);
\filldraw[fill=white] (-.5,.3) circle (.1) node[left]{\scriptsize 0};
\filldraw[fill=white] (-.5,-.3) circle (.1) node[left]{\scriptsize 1};
\filldraw[fill=white] (0,0) circle (.1) node[above]{\scriptsize 2};
\filldraw[fill=white] (.5,.3) circle (.1) node[right]{\scriptsize 3};
\filldraw[fill=white] (.5,-.3) circle (.1) node[right]{\scriptsize 4};
\end{tikzpicture} 
\end{minipage}
& 
\begin{minipage}[c]{14mm}
\begin{tikzpicture}[scale=80/100]
\draw[thick] (.7,.6) -- (0,0) -- (.4,.2);
\draw[thick] (.7,-.1) -- (0,0) -- (.4,-.5);
\draw[<->,gray] (.4,.1) -- (.4,-.4);
\draw[<->,gray] (.7,.5) -- (.7,0);
\filldraw[fill=white] (0,0) circle (.1) node[left]{\scriptsize 2};
\filldraw[fill=white] (.4,.2) circle (.1) node[right=-2pt]{\scriptsize 1};
\filldraw[fill=white] (.4,-.5) circle (.1) node[right=-1.5pt]{\scriptsize 3};
\filldraw[fill=white] (.7,.6) circle (.1) node[right]{\scriptsize 0};
\filldraw[fill=white] (.7,-.1) circle (.1) node[right]{\scriptsize 4};
\end{tikzpicture}
\end{minipage}
& 
\begin{minipage}[c]{16mm}
\begin{tikzpicture}[scale=80/100]
\draw[thick] (-.5,.3) -- (0,0) -- (-.5,-.3);
\draw[thick] (.5,.3) -- (0,0) -- (.5,-.3);
\draw[<->,gray] (-.5,.2) -- (-.5,-.2);
\draw[<->,gray] (.5,.2) -- (.5,-.2);
\filldraw[fill=white] (-.5,.3) circle (.1) node[left]{\scriptsize 0};
\filldraw[fill=white] (-.5,-.3) circle (.1) node[left]{\scriptsize 1};
\filldraw[fill=black] (0,0) circle (.1) node[above]{\scriptsize 2};
\filldraw[fill=white] (.5,.3) circle (.1) node[right]{\scriptsize 3};
\filldraw[fill=white] (.5,-.3) circle (.1) node[right]{\scriptsize 4};
\end{tikzpicture} 
\end{minipage}
& 
\begin{minipage}[c]{14mm}
\begin{tikzpicture}[scale=80/100]
\draw[thick] (.7,.6) -- (0,0) -- (.4,.2);
\draw[thick] (.7,-.1) -- (0,0) -- (.4,-.5);
\draw[<->,gray] (.4,.1) -- (.4,-.4);
\draw[<->,gray] (.7,.5) -- (.7,0);
\filldraw[fill=black] (0,0) circle (.1) node[left]{\scriptsize 2};
\filldraw[fill=white] (.4,.2) circle (.1) node[right=-2pt]{\scriptsize 1};
\filldraw[fill=white] (.4,-.5) circle (.1) node[right=-1.5pt]{\scriptsize 3};
\filldraw[fill=white] (.7,.6) circle (.1) node[right]{\scriptsize 0};
\filldraw[fill=white] (.7,-.1) circle (.1) node[right]{\scriptsize 4};
\end{tikzpicture}
\end{minipage}
&
\begin{minipage}[c]{16mm}
\begin{tikzpicture}[scale=80/100]
\draw[thick] (-.5,.3) -- (0,0) -- (-.5,-.3);
\draw[thick] (.5,.3) -- (0,0) -- (.5,-.3);
\draw[<->,gray] (-.5,.2) -- (-.5,-.2);
\draw[<->,gray] (.5,.2) -- (.5,-.2);
\filldraw[fill=white] (-.5,.3) circle (.1) node[left]{\scriptsize 0};
\filldraw[fill=white] (-.5,-.3) circle (.1) node[left]{\scriptsize 1};
\filldraw[fill=black] (0,0) circle (.1) node[above]{\scriptsize 2};
\filldraw[fill=black] (.5,.3) circle (.1) node[right]{\scriptsize 3};
\filldraw[fill=black] (.5,-.3) circle (.1) node[right]{\scriptsize 4};
\end{tikzpicture} 
\end{minipage}
& 
\begin{minipage}[c]{14mm}
\begin{tikzpicture}[scale=80/100]
\draw[thick] (.7,.6) -- (0,0) -- (.4,.2);
\draw[thick] (.7,-.1) -- (0,0) -- (.4,-.5);
\draw[<->,gray] (.4,.1) -- (.4,-.4);
\draw[<->,gray] (.7,.5) -- (.7,0);
\filldraw[fill=black] (0,0) circle (.1) node[left]{\scriptsize 2};
\filldraw[fill=black] (.4,.2) circle (.1) node[right=-2pt]{\scriptsize 1};
\filldraw[fill=black] (.4,-.5) circle (.1) node[right=-1.5pt]{\scriptsize 3};
\filldraw[fill=white] (.7,.6) circle (.1) node[right]{\scriptsize 0};
\filldraw[fill=white] (.7,-.1) circle (.1) node[right]{\scriptsize 4};
\end{tikzpicture}
\end{minipage}
\\[1mm]
({\rm D}^{(1)}_4)^{\flLR}_{0,0} & ({\rm D}^{(1)}_4)^{\pi}_0 & ({\rm D}^{(1)}_4)^{\flLR}_{\rm alt} & ({\rm D}^{(1)}_4)^{\pi}_1 &  ({\rm D}^{(1)}_4)^{\flLR}_{0,2} & ({\rm D}^{(1)}_4)^{\pi}_2 \\[1mm]
\hline
\begin{minipage}[c]{16mm}
\begin{tikzpicture}[scale=80/100]
\draw[thick] (-.6,.3) -- (0,0) -- (-.4,-.3);
\draw[thick] (.5,.3) -- (0,0) -- (.5,-.3);
\draw[<->,gray] (.5,.2) -- (.5,-.2);
\filldraw[fill=white] (-.6,.3) circle (.1) node[left]{\scriptsize 0};
\filldraw[fill=white] (-.4,-.3) circle (.1) node[left]{\scriptsize 1};
\filldraw[fill=white] (0,0) circle (.1) node[above]{\scriptsize 2};
\filldraw[fill=white] (.5,.3) circle (.1) node[right]{\scriptsize 3};
\filldraw[fill=white] (.5,-.3) circle (.1) node[right]{\scriptsize 4};
\end{tikzpicture} 
\end{minipage}
& 
\begin{minipage}[c]{17mm}
\begin{tikzpicture}[scale=80/100]
\draw[thick] (-.6,.3) -- (0,0) -- (-.4,-.3);
\draw[thick] (.4,.3) -- (0,0) -- (.6,-.3);
\draw[<->,gray] (-.3,-.3) -- (.5,-.3);
\filldraw[fill=white] (-.6,.3) circle (.1) node[left]{\scriptsize 0};
\filldraw[fill=white] (-.4,-.3) circle (.1) node[left]{\scriptsize 1};
\filldraw[fill=white] (0,0) circle (.1) node[above]{\scriptsize 2};
\filldraw[fill=white] (.6,-.3) circle (.1) node[right]{\scriptsize 4};
\filldraw[fill=white] (.4,.3) circle (.1) node[right]{\scriptsize 3};
\end{tikzpicture}
\end{minipage}
&
\begin{minipage}[c]{16mm}
\begin{tikzpicture}[scale=80/100]
\draw[thick] (-.6,.3) -- (0,0) -- (-.4,-.3);
\draw[thick] (.5,.3) -- (0,0) -- (.5,-.3);
\draw[<->,gray] (.5,.2) -- (.5,-.2);
\filldraw[fill=white] (-.6,.3) circle (.1) node[left]{\scriptsize 0};
\filldraw[fill=white] (-.4,-.3) circle (.1) node[left]{\scriptsize 1};
\filldraw[fill=black] (0,0) circle (.1) node[above]{\scriptsize 2};
\filldraw[fill=black] (.5,.3) circle (.1) node[right]{\scriptsize 3};
\filldraw[fill=black] (.5,-.3) circle (.1) node[right]{\scriptsize 4};
\end{tikzpicture} 
\end{minipage}
&
\begin{minipage}[c]{17mm}
\begin{tikzpicture}[scale=80/100]
\draw[thick] (-.6,.3) -- (0,0) -- (-.4,-.3);
\draw[thick] (.4,.3) -- (0,0) -- (.6,-.3);
\draw[<->,gray] (-.3,-.3) -- (.5,-.3);
\filldraw[fill=white] (-.6,.3) circle (.1) node[left]{\scriptsize 0};
\filldraw[fill=black] (-.4,-.3) circle (.1) node[left]{\scriptsize 1};
\filldraw[fill=black] (0,0) circle (.1) node[above]{\scriptsize 2};
\filldraw[fill=black] (.6,-.3) circle (.1) node[right]{\scriptsize 4};
\filldraw[fill=white] (.4,.3) circle (.1) node[right]{\scriptsize 3};
\end{tikzpicture}
\end{minipage}
&
\begin{minipage}[c]{16mm}
\begin{tikzpicture}[scale=80/100]
\draw[thick] (-.6,.3) -- (0,0) -- (-.4,-.3);
\draw[thick] (.5,.3) -- (0,0) -- (.5,-.3);
\draw[<->,gray] (.5,.2) -- (.5,-.2);
\filldraw[fill=black] (-.6,.3) circle (.1) node[left]{\scriptsize 0};
\filldraw[fill=black] (-.4,-.3) circle (.1) node[left]{\scriptsize 1};
\filldraw[fill=white] (0,0) circle (.1) node[above]{\scriptsize 2};
\filldraw[fill=white] (.5,.3) circle (.1) node[right]{\scriptsize 3};
\filldraw[fill=white] (.5,-.3) circle (.1) node[right]{\scriptsize 4};
\end{tikzpicture} 
\end{minipage}
&
\begin{minipage}[c]{17mm}
\begin{tikzpicture}[scale=80/100]
\draw[thick] (-.6,.3) -- (0,0) -- (-.4,-.3);
\draw[thick] (.4,.3) -- (0,0) -- (.6,-.3);
\draw[<->,gray] (-.3,-.3) -- (.5,-.3);
\filldraw[fill=black] (-.6,.3) circle (.1) node[left]{\scriptsize 0};
\filldraw[fill=white] (-.4,-.3) circle (.1) node[left]{\scriptsize 1};
\filldraw[fill=white] (0,0) circle (.1) node[above]{\scriptsize 2};
\filldraw[fill=white] (.6,-.3) circle (.1) node[right]{\scriptsize 4};
\filldraw[fill=black] (.4,.3) circle (.1) node[right]{\scriptsize 3};
\end{tikzpicture}
\end{minipage}
\\[1mm]
({\rm D}^{(1)}_4)^{\flR}_{0;0} & ({\rm D}^{(1)}_4)^{(14)}_{\emptyset} & ({\rm D}^{(1)}_4)^{\flR}_{0;2} & ({\rm D}^{(1)}_4)^{(14)}_{\{1,2,4\}} & ({\rm D}^{(1)}_4)^{\flR}_{2;0} & ({\rm D}^{(1)}_4)^{(14)}_{\{0,3\}}
\end{array}
\]
\end{longtable}



\section{Summary of K-matrix properties} \label{App:summary}

In Table \ref{table:summary} we summarize some key properties of the K-matrices obtained in Section \ref{sec:Results}.
First we indicate the type of generalized Satake diagram. 
Then we give constraints on the classification parameters $n,\ell,r$ (or $N,t,\ell,r$ in case of A.3). 
For each K-matrix of the given type with the constraints as indicated we indicate (for generic values of any free parameters) the locations and the number of its nonzero entries, the parameters appearing in the bare K-matrix (including any powers of $q$) and $d_{\rm min}$, the degree of the minimal polynomial of $K(u)$ in the untwisted case and of $C^{-1} K(\wt q^{\, -1} u)$ in the twisted case. 

We list some further properties where appropriate, namely whether the K-matrix is twisted (tw), restrictable (rstr), non-quasistandard (nqs), half-period symmetric (hps), singly regular (sreg) and also the value of its effective degree $d_{\rm eff} = d_{\rm eff}(K)$ if this is unequal to $d_{\rm min}$. 
As an aid to the reader we also highlight if the corresponding Satake diagram is weak (W).
If the generalized Satake diagram is such that $I \backslash X = \{0\}$ then we will indicate this; in this case $K(u) = \Id$ and automatically we have the (trivial) properties that $d_{\rm eff}\!=\!0$, $K(u)$ is restrictable and half-period-symmetric; we will not specify these properties.
Similarly, if $K(u)$ is otherwise constant then we will indicate this (cst); in this case $K(u)$ is automatically restrictable,  half-period-symmetric and nonregular and satisfies $d_{\rm eff}\!=\!0$; again, we will not specify these properties.


\setlength\LTleft{0pt}
\setlength\LTright{0pt}

\setlength{\tabcolsep}{3.5pt}
\begin{longtable}{l l p{20pt} l l l l}
\caption{A summary of properties of K-matrices.} \label{table:summary}\\
\hline
\multirow{2}{*}{Type} & \multirow{2}{*}{Restrictions} & \multicolumn{2}{l}{Nonzero entries} & Parameters & \multirow{2}{*}{$d_{\rm min}$} & \multirow{2}{*}{Comments} \\
&& Loc. & Number & in bare K & & \\
\nobreakhline
\endfirsthead
\hline
\multirow{2}{*}{Type} & \multirow{2}{*}{Restrictions} & \multicolumn{2}{l}{Nonzero entries} & Parameters & \multirow{2}{*}{$d_{\rm min}$} & \multirow{2}{*}{Comments} \\
&& Loc. & Number & in bare K & & \\
\nobreakhline
\endhead
\hline
A.1 & $N > 2$ & \shape{\evendiag} & $N$ & - & 2 & tw, cst\\
\hline
A.2 & $N >2$ even & \shape{\evenAalt} & $N$ & $q^{\frac12}$ & 2 & tw, cst \\
\hline
A.4 & $N >2$ even & \shape{\evenArot} & $N$ & - & 4 & tw, $d_{\rm eff} = 1$ \\
\hline
\hline
& $0=\ell=r$ & \shape{\evendiag} & $N$ & - & $1$ & $I \backslash X = \{0\}$ \\*
\multirow{2}{*}{A.3a} 
& $0=\ell<r \le\frac{N}2$ & \shape{\evencrossin} & $N\!+\!2r$ & $\la,\mu$ & $2$ & rstr, hps \\*
& $0<\ell<r < \frac{N}2\!-\!\ell$ & \shape{\evencrossinout} & $N\!+\!2(r\!-\!\ell)$ & $\la,\mu$ & $4$ & hps \\*
& $0<\ell=r\le \frac{N}4$ & \shape{\evendiag} & $N$ & $\xi=\la \mu$ & $3$ & hps \\
\hline
& $0=\ell=r$ & \shape{\odddiag} & $N$ & - & $1$ & $I \backslash X = \{0\}$ \\*
\multirow{2}{*}{A.3b} 
& $0=\ell<r\le \frac{N+1}2$ & \shape{\oddcrossin} & $N\!+\!2r$ & $\la,\mu$ & $2$ & rstr, hps \\*
& $0<\ell<r\le \frac{N+1}2$ & \shape{\oddcrossinout} & $N\!+\!2(r\!-\!\ell)$ & $\la,\mu$ & $4$ & hps \\*
& $0<\ell=r\le \frac{N+1}2$ & \shape{\odddiag} & $N$ & $\xi=\la \mu$ & $3$ & hps \\
\hline
& $0=\ell=r$ & \shape{\evendiag} & $N$ & $\xi=\la \mu$ & $2$ & hps \\ 
A.3c & $0\le\ell<r< \frac{N}2\!-\!\ell$ & \shape{\evencrossinshift} & $N\!+\!2(r\!-\!\ell)$ & $\la,\mu$ & $4$ & hps \\*
& $0<\ell=r< \frac{N}4$ & \shape{\evendiag} & $N$ & $\xi=\la \mu$ & $3$ & hps \\*
\hline
\hline
& $0=\ell=r$ & \shape{\evendiag} & $N$ & - & 1 & $I\backslash X=\{0\}$ \\
& $0=\ell<r<n$ & \shape{\evencrossin} & $N\!+\!2r$ & $q^{\bar r},\mu$ & 2 & rstr, hps, W \\
& $0=\ell<r=n$ & \shape{\evencross} & $2N$ & $\la,\mu$ & 2 & rstr, hps \\
\multirow{2}{*}{C.1 \vspace{-7pt} }  & $0<\ell<r<n\!-\!\ell,\;r\ne \ell\!+\!2$ & \shape{\evencrossinout} & $N\!+\!2(r\!-\!\ell)$ & $q^{\bar r},q^{-\ell-1}$ & 4 & W \\
& $1\le\ell\le n\!-\!3,\; r=\ell\!+\!2$ & \shape{\evendiagmidblocks} & $N\!+\!12$ & $q^{\bar r},q^{-\ell-1},\nu$ & 5 & nqs, W \\
& $0<\ell<r=n\!-\!\ell$ & \shape{\evencrossinout} & $2(N\!-\!2\ell)$ & $q^{-\ell-1}$ & 4 & sreg, $d_{\rm eff}=3$, W \\
& $0<\ell=r<\tfrac n2$ & \shape{\evendiag} & $N$ & $q^{n-2\ell}$ & 3 & \\
& $0<\ell=r=\tfrac n2$ & \shape{\evendiag} & $N$ & - & 3 & sreg, $d_{\rm eff}=2$ \\
\hline
& $0=\ell=r$ & \shape{\odddiag} & $N$ & - & 1 & $I\backslash X=\{0\}$ \\*
& $0=\ell<r$ & \shape{\odddiagaltin} & $N\!+\!2r$ & $q^{\bar r-\frac32},\mu$ & 2 & rstr, hps, W \\*
\multirow{2}{*}{B.2 \vspace{-7pt} } & $1=\ell=r$ & \shape{\odddiag} & $N$ & $q^{n-\frac32},\mu$ & 3 & hps, $d_{\rm eff}=4$, W \\*
& $1=\ell<r$ & \shape{\odddiagaltinout} & $N\!+\!2(r\!-\!1)$ & $q^{\bar r-\frac32},\mu$ & 4 & hps, W \\*
& $1<\ell<r$ & \shape{\odddiagaltinout} & $N\!+\!2(r\!-\!\ell)$ & $q^{\bar r-\frac32},q^{-\ell+1}$ & 4 & W \\*
& $1<\ell=r$ & \shape{\odddiag} & $N$ & $q^{\bar \ell- \ell -\frac12}$ & 3 & - \\*
\hline
& $0=\ell=r$ & \shape{\evendiag} & $N$ & - & 1 & $I \backslash X=\{0\}$ \\*
& $0=\ell<r<n\!-\!1$ & \shape{\evendiagaltin}  & $N\!+\!2r$ & $q^{n-r-1},\mu$ & 2 & rstr, hps, W \\*
& $0=\ell<r=n\!-\!1,\;n$ odd & \shape{\evendiagaltin} & $N\!+\!2r$ & $\la,\mu$ & 2 & rstr, hps \\*
& $0=\ell<r=n,\;n$ even & \shape{\evendiagalt} & $N\!+\!2r$ & $\la,\mu$ & 2 & rstr, hps  \\*
& $1=\ell=r$ & \shape{\evendiag} & $N$ & $q^{n-2},\mu$ & 3 & hps  \\*
D.2 & $1=\ell<r<n\!-\!1$ & \shape{\evendiagaltinout} & $N\!+\!2(r\!-\!1)$ & $q^{n-r-1},\mu$ & 4 & hps, W  \\*
& $1=\ell<r=n\!-\!1$ & \shape{\evendiagaltinout} & $2N\!-\!4$ & $\la,\mu$ & 4 & hps  \\*
& $1<\ell<r<n\!-\!\ell\!\!$ & \shape{\evendiagaltinout} & $N\!+\!2(r\!-\!\ell)$ & $q^{n-r-1}, q^{-\ell+1}$ & 4 & W \\*
& $0<\ell<r=n\!-\!\ell\!\!$ & \shape{\evendiagaltinout} & $2(N\!-\!2\ell)$ & $q^{-\ell+1}$ & 4 & sreg, $d_{\rm eff}\!=\!3$, W \\*
& $1<\ell=r<\tfrac{n}{2}$ & \shape{\evendiag} & $N$ & $q^{n-2\ell}$ & 3 & - \\*
& $\ell=r=\tfrac{n}{2}$ & \shape{\evendiag}& $N$ & - & 3 & sreg, $d_{\rm eff}\!=\!2$ \\
\hline
\hline
& $0=\ell$ & \shape{\evendiag} & $N$ & $\xi=q^{-n}\la^2$ & 2 & hps \\*
C.4 & $0<\ell<\tfrac{n}{2}$ & \shape{\evendoublecrossin} & $N\!+\!4\ell$ & $\la, q^{n\!-\!2\ell}\la$ & 4 & \\*
& $\!\tfrac{n}{2}=\ell$ & \shape{\evendoublecross} & $2N$ & $\la$ & 4 & sreg, $d_{\rm eff}\!=\!2$ \\
\hline
& $0=\ell$ & \shape{\evendiag} & $N$ & $\xi=q^{-n}\la^2$ & 2 & hps \\*
\multirow{2}{*}{D.4 \vspace{-7pt} } 
& $1=\ell$ & \shape{\evendoublecrossinbig} & $N\!+\!4$ &  $\al, \la, q^{-n+2}\al \la$ & 4 & hps \\*
& $1 <\ell < \tfrac{n}{2}$ & \shape{\evendoublecrossin} & $N\!+\!4\ell$ & $\la,q^{-n\!+\!2\ell}\la$ & 4 & \\*
& $\!\tfrac{n}{2}=\ell$ & \shape{\evendoublecross} & $2N$ & $\la$ & 4 & sreg, $d_{\rm eff}\!=\!2$ \\
\hline
\hline
& $0=\ell=r$ & \shape{\evendiag} & $N$ & - & 1 & $I \backslash X=\{0\}$ \\*
& $0=\ell<r<n$ & \shape{\evendiagaltin}  & $N\!+\!2r$ & $q^{n-r}$ & 2 & rstr, hps \\*
& $0=\ell<r=n$ & \shape{\evenalt} & $N$ & - & 2 & cst  \\*
C.2 & $0<\ell<r<n\!-\!\ell\!\!$ & \shape{\evendiagaltinout} & $N\!+\!2(r\!-\!\ell)$ & $q^{n-r}, q^{-\ell}$ & 4 &  - \\*
& $0<\ell<r=n\!-\!\ell\!\!$ & \shape{\evendiagaltinout} & $2(N\!-\!2\ell)$ & $q^{-\ell}$ & 4 & sreg, $d_{\rm eff}\!=\!3$ \\*
& $0<\ell=r<\tfrac{n}{2}$ & \shape{\evendiag} & $N$ & $q^{n-2\ell}$ & 3 &  \\*
& $0<\ell=r=\tfrac{n}{2}$ & \shape{\evendiag}& $N$ & - & 3 & sreg, $d_{\rm eff}\!=\!2$ \\
\hline 
& $0=\ell=r$ & \shape{\odddiag} & $N$ & - & 1 & $I \backslash X=\{0\}$  \\*
& $0=\ell<r \ne 2$ & \shape{\oddcrossin} & $N\!+\!2r$ & $q^{\bar r-\frac12}$ & 2 & rstr, hps \\*
& $0=\ell<r = 2$ & \shape{\odddiagoutblocks} & $N\!+\!12$ & $q^{n-\frac32},\nu_0,\nu_1$ & 3 & rstr, nqs, hps, $d_{\rm eff}\!=\!4$ \\*
\multirow{2}{*}{B.1} & $0<\ell<r$, $\ell\!\ne\!n\!-\!1$, $r\!\ne\!\ell\!+\!2$ & \shape{\oddcrossinout} & $N\!+\!2(r\!-\!\ell)$ &  $q^{\bar r-\frac12}, q^{-\ell}$ & 4  & \\*
& $0<\ell<r = \ell\!+\!2$ & \shape{\odddiagmidblocks} & $N\!+\!12$ &  $q^{\bar \ell-\frac52}, q^{-\ell},\nu$ & 5  & nqs \\*
& $\ell=n\!-\!1$, $r = n$ & \shape{\odddiaginblock} & $N\!+\!6$ &  $q^{\frac12}, q^{1-n},\nu$ & 5 & nqs \\*
& $1=\ell=r$ & \shape{\odddiag} & $N$ &  $q^{n-\frac32},\al$ & 3 & hps, $d_{\rm eff}\!=\!4$ \\*
& $1<\ell=r$ & \shape{\odddiag} & $N$ & $q^{\bar \ell - \ell - \frac12}$ & 3  & \\
\hline
& $0=\ell=r$ & \shape{\evendiag} & $N$ & - & 1 & $I \backslash X=\{0\}$ \\*
& $0=\ell,\; r=2$ & \shape{\evendiagoutblocks} & $N\!+\!12$ & $q^{n-2}, \nu_0, \nu_1$ & 3 & rstr, nqs, hps, $d_{\rm eff}\!=\!4$ \\*
& $0=\ell<r<n,\; r \ne 2$ & \shape{\evencrossin} & $N\!+\!2r$ & $q^{n-r}$ & 2 & rstr, hps \\*
& $0=\ell<r=n$ & \shape{\evenantidiag} & $N$ & - & 2 & cst \\*
\multirow{2}{*}{D.1} & $0\!<\!\ell\!<\!r\!<\!n,\; r \!\notin\! \{\ell\!+\!2, n\!-\!\ell\}\,$ & \shape{\evencrossinout} & $N\!+\!2(r\!-\!\ell)$ & $q^{n-r}, q^{-\ell}$ & 4 &  \\*
& $0\!<\!\ell\!<\!r=n\!-\!\ell,\; r \ne \ell\!+\!2$ & \shape{\evencrossinout} & $2(N\!-\!2\ell)$ & $q^{-\ell}$ & 4 & sreg, $d_{\rm eff}\!=\!2$ \\*
& $0\!<\!\ell\!<\!\frac{n}2\!-\!1,\;r=\ell\!+\!2$ & \shape{\evendiagmidblocks} & $N\!+\!12$ & $q^{\bar \ell-3}, q^{-\ell}, \nu$ & 5 & nqs \\*
& $\ell=\frac{n}2\!-\!1,\;r=\frac{n}2\!+\!1$ & \shape{\evendiagmidblocks} & $N\!+\!12$ & $q^{\frac{n-3}2}, q^{\frac{1-n}2}, \nu$ & 5 & nqs, sreg, $d_{\rm eff}\!=\!4$ \\*
& $1<\ell=r<n\!-\!1\!$ & \shape{\evendiag} & $N$ & $q^{n\!-\!2\ell}$ & 3 & -\\
\hline 
\end{longtable}



\section{A K-matrix of generalized $q$-Onsager type} \label{App:qOns}

Here we present an example of a generalized $q$-Onsager algebra, namely the one of type A defined in \cite{BsBe1}. 
Let us return to the family A.1 considered in Section \ref{sec:A1}. 
Recall that the Satake diagram is given by $(X,\tau) = (\emptyset, \id)$. 
Assume $n\ge2$. 
Choose tuples $\bm c , \bm s \in (\K^\times)^I$ such that the generalized $q$-Onsager type constraints $c_j = (q^{-1}-1)^2\, s_j^2$ for all $j\in I$ are satisfied ({\it cf.}~\cite[Prop.~2.1]{BsBe1}). 
Then the associated algebra $B_{\bm c, \bm s}$ is the generalized $q$-Onsager algebra generated by
\eq{
b_j = y_j - c_j\, x_j k^{-1}_j - s_j\, k^{-1}_j \tx{for all} j \in I. \label{A1:b_j:qOns}
}
Introduce dressing parameters $\om_1,\om_2,\ldots,\om_N$ satisfying $\om_1 \om_2 \cdots \om_N =1$ and, for all $j\in I$, set $s_j = (q^{-1}-1)^{-1}\,\om_{j}^{-1}\om_{j+1}$, where $\om_0 := \eta \, \om_N$. 
Next, we solve the twisted boundary intertwining equation \eqref{intw-tw} for all $b_j$ in \eqref{A1:b_j:qOns}. This gives a solution
\eq{
K(u;\bm \om) = G(\bm \om)\, K(u)\, G(\bm \om) ,\qu\text{where}\qu
K(u) = \sum_{1 \leq i,j \leq N} \frac{q^{-1/2} u^{\delta_{i>j}} + q^{1/2} u^{\delta_{i \geq j}}}{1-u}\, E_{ij} ,
\label{KA1:$q$-Ons}
}
satisfying the twisted reflection equation \eqref{tRE}. 
This K-matrix was first obtained in \cite{Gb}.
It has the following additional properties (recall that $\wt q = (-1)^{N/2} q^{N/2}$):

\smallskip

\begin{description} [itemsep=1ex]

\item[Twisted unitarity] \hfill $(C^{-1} K(\wt q^{\,-1} u))^{-1} = (-1)^N C^{-1} K( \wt q^{\,-1} u^{-1})$.  \hfill \hphantom{Twisted unitarity}

\item[Eigenvalues of $C^{-1} K(\wt q^{-1} u)$] 
$(-1)^{\frac{N}{2}+1}$ (with multiplicity $\frac{N\!-\!1}{2}$), 
$(-1)^{\frac{N}{2}}$ (with multiplicity $\lceil \frac{N\!-\!1}{2} \rceil$), 
$(-1)^{-\frac{N}{2}} \frac{1-\wt q \, u}{\wt q-u}$ (with multiplicity 1).
This particular K-matrix also has a basis of eigenvectors, but we leave the explicit eigendecomposition of $C^{-1} K(\wt q^{-1} u)$ out of consideration.

\item[Affinization] 
\hfill $\displaystyle K(u) = \frac{K_0 + u\, C K_0^{-1} C^{-1}}{1-u}, \qu \text{where } K_0 := \lim_{u\to 0} K(u)$. \hfill \hphantom{}

\item[Rotations] \hfill $K^{\rho}(u) = K(u)$. \hfill \hphantom{}
\end{description}

Additionally, $K(u)$ satisfies $K(u)^{\t} + K(u^{-1}) = (q^{-1/2} - q^{1/2})\, \Id$ and $K(u)^{\t} = J K(u) J$.
Note that although the degree of the minimal polynomial of $C^{-1} K(\wt q^{-1} u)$ is 3, the effective degree is only 1.

\smallskip

Contrary to the solutions to the twisted RE in Section \ref{sec:K:tw}, $K(u)$ maps each basis vector of $\End(\K^N)$ to a linear combination of all of them.
This approach can be used to obtain similar ``dense'' reflection matrices associated with Satake diagrams of types B.1, C.1 and D.1 when the parameters $c_j$ and $s_j$ satisfy generalized $q$-Onsager type constraints. 
Examples of these most general non-diagonal \mbox{K-matrices} are already known: for type D such a K-matrix was obtained in \cite{DeGe} and a wider class of such dense K-matrices was found in \cite[Sec.~3.2]{MLS}.






\begin{thebibliography}{AACDFR1}


\bibitem[AACDFR1]{AACDFR1}
	D. Arnaudon, J. Avan, N. Cramp\'{e}, A. Doikou, L. Frappat, E. Ragoucy,
	{\it Classification of reflection matrices related to (super-) Yangians and application to open spin chain models}.
	Nucl. Phys. B {\bf 668} (2003), no. 3, 469--505.
	{\tt arXiv:math/0304150}.

\bibitem[AACDFR2]{AACDFR2}
	D. Arnaudon, J. Avan, N. Cramp\'{e}, A. Doikou, L. Frappat, E. Ragoucy,
	{\it General boundary conditions for the sl($N$) and sl($M\mid N$) open spin chains}.
	JSTAT 08 (2004) P005.
	{\tt arXiv:math-ph/0406021}.
	
\bibitem[AbRi]{AbRi}
	J. Abad, M. Rios,
	{\it Non-diagonal solutions to reflection equatons in $su(n)$ spin chains}.
	Phys. Lett. B {\bf 352} (1995), no. 1, 92--95.
	{\tt arXiv:hep-th/9502129}.

\bibitem[Ara]{Ara}
	Sh. Araki,
	{\it On root systems and an infinitesimal classification of irreducible symmetric spaces}.
	J. Math. Osaka City Univ. {\bf 13} (1962), no. 1, 1--34.

\bibitem[Ari]{Ari}
	S. Ariki,
	{\it Lectures on cyclotomic Hecke algebras}.
	LMS Lecture Note Series No.~290 (Ed. A. Pressley), Cambridge University Press, 2002.
	{\tt arXiv:math/9908005}.

\bibitem[BaWa]{BaWa}
	H. Bao, W. Wang,
	{\it A new approach to Kazhdan-Lusztig theory of type B via quantum symmetric pairs}.
	Preprint, {\tt arXiv:1310.0103}.
	
\bibitem[Ba1]{Ba1}
	R. Baxter,
	{\it Eight-vertex model in lattice statistics and one-dimensional anisotropic Heisenberg chain. I. Some fundamental eigenvectors.}
	Ann.~of Physics {\bf 76} (1973), no. 1, 1--24.

\bibitem[Ba2]{Ba2}
	\bysame,
	{\it Exactly Solved Models in Statistical Mechanics}.
	Academic Press Inc., 1982.

\bibitem[BBBR]{BBBR}
	V. Back-Valente, N. Bardy-Panse, H. Ben Massaoud, G. Rousseau,
	{\it Formes presque-d\'{e}ploy\'{e}es des alg\`{e}bres de Kac-Moody: Classication et racines relatives}.
	J. Algebra {\bf 171} (1995), 43--96.

\bibitem[BbRg]{BbRg}
	A. Babichenko, V. Regelskis,
	{\it On boundary fusion and functional relations in the Baxterized affine Hecke algebra}.
	J. Math. Phys. {\bf 55} (2014), 043503.
	{\tt arXiv:1305.1941}.
	
\bibitem[BCDR]{BCDR}
	P. Bowcock, E. Corrigan, P. Dorey, R. Rietdijk,
	{\it Classically integrable boundary conditions for affine Toda field theories}.
	Nucl. Phys. B {\bf 445} (1995), 469--500.
	{\tt arXiv:hep-th/9501098}.

\bibitem[BCR]{BCR}
	P. Bowcock, E. Corrigan, R. Rietdijk,
	{\it Background field boundary conditions for affine Toda field theories}.
	Nucl. Phys. B {\bf 465} (1996), 350--364.
	{\tt arXiv:hep-th/9510071}.

\bibitem[BeFo]{BeFo}
	S. Belliard, V. Fomin,
	{\it Generalized q-Onsager Algebras and Dynamical K-matrices}.
	J. Phys. A {\bf 45} (2012), 025201.
	{\tt arXiv:1106.1317}.
	
\bibitem[BGKNR]{BGKNR}
	H. Boos, A. G\"{o}hmann, A. Kl\"{u}mper, K.S. Nirov, A.V. Razumov,
	{\it Exercises with the universal R-matrix}.
	J. Phys. A {\bf 43} (2010), no. 41, 415208, 35pp.
	{\tt arxiv:1004.5342}.

\bibitem[BgKo1]{BgKo1}
	M. Balagovi\'{c}, S. Kolb,
	{\it The bar involution for quantum symmetric pairs}.
	Rep. Thy. of the Amer. Math. Soc. {\bf 19} (2015), no. 8, 186--210.
	{\tt arXiv:1409.5074}.
	
\bibitem[BgKo2]{BgKo2}
	\bysame,
	{\it Universal K-matrix for quantum symmetric pairs}.
	Journal f\"{u}r die reine und angewandte Mathematik (Crelles Journal).
	{\tt arXiv:1507.06276}.

\bibitem[BsBe1]{BsBe1}
	P. Baseilhac, S. Belliard,
	{\it Generalized q-Onsager algebras and boundary affine Toda field theories}.
	Lett. Math. Phys. {\bf 93} (2010), 213--228.
	{\tt arXiv:0906.1215}.
	
\bibitem[BsBe2]{BsBe2}
	\bysame,
	{\it The half-infinite XXZ chain in Onsager's approach}.
	Nucl. Phys. {\bf B873} (2013), 550--583.
	{\tt arXiv:1211.6304}.

\bibitem[BsKz1]{BsKz1}
	P. Baseilhac, K. Koizumi,
	{\it Sine-Gordon quantum field theory on the half-line with quantum boundary degrees of freedom}.
	Nucl. Phys. B {\bf 649} (2003), no. 3, 491--510.
	{\tt arXiv:hep-th/0703106}.

\bibitem[BsKz2]{BsKz2}
	\bysame,
	{\it Exact spectrum of the XXZ open spin chain from the q-Onsager algebra representation theory}.
	J. Stat. Mech.: Theory and Exp. {\bf 09} (2007), P09006.
	{\tt arXiv:hep-th/0703106}.

\bibitem[Bzh]{Bzh}
	V. V. Bazhanov,
	{\it Trigonometric solutions of triangle equations and classical Lie algebras}.
	Phys. Lett. B {\bf 159} (1985), 321--324.

\bibitem[Ca]{Ca}
	R. Carter,
	{\it Lie algebras of finite and affine type}.
	Cambridge studies in advanced mathematics, Cambridge University Press, 2005. 

\bibitem[CDRS]{CDRS}
	E. Corrigan, P. E. Dorey, R. H. Rietdijk and R. Sasaki,
	{\it Affine Toda field theory on a half line}.
	Phys. Lett. B {\bf 333} (1994), no. 1, 83--91.
	{\tt arXiv:hep-th/9404108}.

\bibitem[CGM]{CGM}
	H. Chen, N. Guay, X. Ma,
	{\it Twisted Yangians, twisted quantum loop algebras and affine Hecke algebras of type BC}.
	Trans. Amer. Math. Soc. {\bf 366} (2014), 2517--2574. 
	
\bibitem[Ch1]{Ch1}
	I. Cherednik,
	{\it Factorizing particles on a half-line and root systems}.
	Theor. and Math. Phys. {\bf 61} (1984), no. 1, 977--983.

\bibitem[Ch2]{Ch2}
	\bysame,
	{\it A unification of Knizhnik-Zamolodchikov and Dunkl operators via affine Hecke algebras}.
	Inv. Math. {\bf 106} (1991), 411--431.
	
\bibitem[Ch3]{Ch3}
	\bysame,
	{\it Quantum Knizhnik-Zamolodchikov equations and affine root systems}.
	Comm. Math. Phys. {\bf 150} (1992), 109--136.

\bibitem[ChPr1]{ChPr1} 
	V. Chari and A. Pressley, 
	{\it A guide to quantum groups}.
	Cambridge University Press, 1994.

\bibitem[ChPr2]{ChPr2} 
	\bysame,
	{\it Quantum affine algebras and affine Hecke algebras}.
	Pacific J. Math. {\bf 174} (1996), no. 2, 295--326.

\bibitem[CRAS]{CRAS}
	G. Cisar, M. Ruiz-Altaba, G. Sierra,
	{\it Quantum groups in two-dimensional physics}.
	Cambridge University Press, 2005.

\bibitem[DeGe]{DeGe}
	G. Delius, A. George,
	{\it Quantum affine reflection algebras of type $d_n^{(1)}$ and reflection matrices}.
	 Lett. Math. Phys. {\bf 62} (2002), 211--217.
	{\tt arXiv:math/0208043}. 

\bibitem[DeMk]{DeMk}
	G. Delius, N. Mackay,
	{\it Quantum group symmetry in sine-Gordon and affine Toda field theories on the half-line}.
	Comm. Math. Phys. {\bf 233} (2003), 173--190.
	{\tt arXiv:hep-th/0112023}.
	
\bibitem[DFZJ1]{DFZJ1}
	P. Di Francesco, P. Zinn-Justin,
	{\it The quantum Knizhnik-Zamolodchikov equation, generalized Razumov-Stroganov sum rules and extended Joseph polynomials}.
	J. Phys. A: Math. Gen. {\bf 38} (2005), L815.
	{\tt arXiv:math-ph/0508059}.

\bibitem[DFZJ2]{DFZJ2}
	\bysame,
	{\it Quantum Knizhnik-Zamolodchikov equation: reflecting boundary conditions and combinatorics}.
	J. Stat. Mech.: Theory and Exp. {\bf 12} (2007), P12009.
	{\tt arXiv:0709.3410}.
	
\bibitem[DKM]{DKM}
	J. Donin, P. Kulish, A. Mudrov,
	{\it On a universal solution to the reflection equation}.
	Lett. Math. Phys. {\bf 63} (2003), 179--194.
	{\tt arXiv:math/0210242}.

\bibitem[dLMR]{dLMR}
	M. de Leeuw, T. Matsumoto, V. Regelskis,
	{\it Coideal Quantum Affine Algebra and Boundary Scattering of the Deformed Hubbard Chain}.
	J. Phys. A {\bf 45} (2012), 065205.
	{\tt arXiv:1110.4596}.

\bibitem[dLR]{dLR}
	M. de Leeuw, V. Regelskis
	{\it Integrable boundaries in AdS/CFT: revisiting the Z=0 giant graviton and D7-brane}.
	JHEP {\bf 03} (2013) 030.
	{\tt arXiv:1206.4704}.
	
\bibitem[DMS]{DMS}
	G. W. Delius, N. J. MacKay, B. J. Short,
	{\it Boundary remnant of Yangian symmetry and the structure of rational reflection matrices}.
	Phys. Lett, B {\bf 522} (2001), no. 3, 335--344.
	{\tt arXiv:hep-th/0109115}.
	
\bibitem[Do1]{Do1}
	A. Doikou,
	{\it Quantum spin chain with `soliton nonpreserving' boundary conditions}.
	J. Phys. A: Math. Gen. {\bf 33} (2000), 8797--8807.
	{\tt arXiv:hep-th/0006197}.

\bibitem[Do2]{Do2}
	\bysame,
	{\it From affine Hecke algebras to boundary symmetries}.
	Nucl.Phys. B{\bf 725} (2005), 493--530.
	{\tt arXiv:math-ph/0409060}.

\bibitem[Dr1]{Dr1} 
	V. Drinfeld, 
	{\it Hopf algebras and the quantum Yang-Baxter equation}.
	 Soviet Math. Dokl. {\bf 32} (1985), 254--258.
	 
\bibitem[Dr2]{Dr2}
	\bysame,
	{\it Quantum groups}.
	Proceedings of the International Congress of Mathematicians, Berkeley, 1986, A. M. Gleason (ed), 798-820, Amer. Math. Soc., Providence, RI.

\bibitem[EFK]{EFK}
	P. Etingof, I. Frenkel, A. Kirillov,
	{\it Lectures on Representation Theory and Knizhnik-Zamolodchikov Equations}.
	Amer. Math. Soc., 1998.
	
\bibitem[EGHLSVY]{EGHLSVY}
	P. Etingof, O. Golberg, S. Hensel, T. Liu, A. Schwendner, D. Vaintrob, E. Yudovina,
	{\it Introduction to Representation Theory}, vol. 59 of Student Mathematical Library. AMS, 2011.	
	
\bibitem[Fa]{Fa}
	L. Faddeev,
	Les Houches XXXIX, edited by J. Zuber and R. Stora (1984).
	{\tt arXiv:hep-th/9605187}.
	
\bibitem[FgK\"{o}]{FgKo}
	A. Fring and R. K\"{o}berle,
	{\it Boundary bound states in affine Toda field theory}.
	Int. J. Mod. Phys. A, {\bf 10} (1995), no. 5, 739--751.
	{\tt arXiv:hep-th/9404188}.

\bibitem[FK]{FK}
	G. Filali and N. Kitanine,
	{\it Spin chains with non-diagonal boundaries and trigonometric SOS model with reflecting end}.
	SIGMA, {\bf 7} (2011), Paper 012, 22 pages.
	{\tt arXiv:1011.0660}.

\bibitem[FNR]{FNR}
	L. Frappat, R. I. Nepomechie, E. Ragoucy,
	{\it Complete Bethe Ansatz solution of the open spin-s XXZ chain with general integrable boundary terms}.
	J. Stat. Mech. (2007), P09009.
	{\tt arxiv:0707.0653}.

\bibitem[FrHz]{FrHz}
	E. Frenkel, D. Hernandez,
	{\it Baxter's Relations and Spectra of Quantum Integrable Models}.
	Duke Math. J. {\bf 164} (2015), no. 12, 2407--2460.
	{\tt arXiv:1308.3444}.

\bibitem[FrMn]{FrMn}
	E. Frenkel, E. Mukhin,
	{\it The Hopf algebra Rep $U_q(\wh\mfgl_\infty)$}.
	Sel. Math., New Ser {\bf 8} (2002), 537--635.

\bibitem[FrRt]{FrRt}
	I. Frenkel, N. Reshetikhin,
	{\it Quantum affine algebras and holonomic difference equations}.
	Commun. Math. Phys. {\bf 146} (1992), 1--60.

\bibitem[FRT]{FRT} 
	L. Faddeev, N. Reshetikhin and L. Takhtajan,
	{\it Quantization of Lie groups and Lie algebras}.
	Leningrad Math. J. {\bf 1} (1990), 193.

\bibitem[Gb]{Gb}
	G. Gandenberger,
	{\it New non-diagonal solutions to the $a^{(1)}_n$ boundary Yang-Baxter equation}.
	Preprint, {\tt arXiv:hep-th/9911178}.
	
\bibitem[GhZa]{GhZa}
	S. Ghoshal, A. Zamolodchikov,
	{\it Boundary S-matrices and boundary state in two-dimensional integrable quantum field theory}.
	Int. J. Mod. Phys. A \textbf{09} (1994), 3841--3886.
	{\tt arXiv:hep-th/9306002}.
		
\bibitem[GmTL]{GmTL}
	S. Gautam, V. Toledano-Laredo,
	{\it Meromorphic Kazhdan-Lusztig equivalence for Yangians and quantum loop algebras}.
	Preprint, {\tt arXiv:1403.5251}.
	
\bibitem[GoMo]{GoMo}
	L. Gow, A. Molev,
	{\it Representations of twisted q-Yangians}.
	Selecta Math. {\bf 16} (2010), 439--499.
	{\tt arXiv:0909.4905}.

\bibitem[GuRg1]{GuRg1}
	N. Guay, V. Regelskis,
	{\it Twisted Yangians for symmetric pairs of types B, C, D}.
	To appear in Math. Z., {\tt arXiv:1407.5247}.
	
\bibitem[GRW]{GRW}
	N. Guay, V. Regelskis, C. Wendlandt,
	{\it RTT presentation of quantum loop algebras of types B, C, D and their representations}.
	Unpublished.

\bibitem[He]{He}
	S. Helgason,
	{\it Differentian geometry, Lie groups, and symmetric spaces}.
	AMS (2012).
	
\bibitem[Is]{Is}
	A. Isaev,
	{\it Functional equations for transfer-matrix operators in open Hecke chain models}.
	Theor. Math. Phys. {\bf 150} (2007), no. 2, 187.
	{\tt arXiv:1003.3385}.

\bibitem[IsOg]{IsOg}
	A. Isaev, O. Ogievetsky,
	{\it Baxterized Solutions of Reflection Equation and Integrable Chain Models}.
	Nucl.Phys. {\bf B}760 (2007), 167--183.
	{\tt arXiv:math-ph/0510078}.
	
\bibitem[Ji1]{Ji1}
	M. Jimbo, 
	{\it A q-analogue of $U(\mfgl(N+1))$, Hecke algebra, and the Yang-Baxter equation}.
	Lett. Math. Phys. {\bf 11} (1986), no. 3, 247--252.
	
\bibitem[Ji2]{Ji2}
	\bysame,
	{\it Quantum R Matrix for the Generalized Toda System}.
	Comm. Math. Phys. {\bf 102} (1986), 537--547.

\bibitem[Ji3]{Ji3}
	\bysame,
	{\it Introduction to the Yang-Baxter equation}.
	Int. J. of Mod. Phys. A {\bf 4} (1989), no. 15, 3759--3777.

\bibitem[JiMi]{JiMi}
	M. Jimbo, T. Miwa,
	{\it Algebraic analysis of solvable lattice models}.
	CBMS Reg. Conf. Ser. in Math. {\bf 85}, AMS (1994).
	
\bibitem[JKKKM]{JKKKM}
	M. Jimbo, R. Kedem, T. Kojima, H. Konno, T. Miwa,
	{\it XXZ chain with a boundary}.
	Nucl. Phys. B {\bf 441} (1995), no. 3, 437-470.

\bibitem[JKKMW]{JKKMW}
	M. Jimbo, R. Kedem, H. Konno, T. Miwa, R. Weston,
	{\it Difference equations in a spin chain with a boundary}.
	Nucl. Phys. B {\bf 448} (1995), no. 3, 429--456.	

\bibitem[JoMa]{JoMa}
	D. Jordan, X. Ma,
	{\it Quantum symmetric pairs and representations of double affine Hecke algebras of type $C^\vee C_n$}.
	Selecta Math. {\bf 17} (2011), no. 1, 139--181.
	{\tt arXiv:0908.3013}.

\bibitem[Ka]{Ka}
	V. Kac,
	{\it Infinite dimensional Lie algebras}.
	3rd. ed., Cambridge University Press, 1994.
	
\bibitem[KaWa]{KaWa}
	V. Kac, S. Wang,
	{\it On Automorphisms of Kac-Moody Algebras and Groups}.
	Adv. Math. {\bf 92} (1992), 129--195.

\bibitem[KhTo]{KhTo}
	S. Khoroshkin, V. Tolstoy,
	{\it The universal R-matrix for quantum untwisted affine Lie algebras}.
	Funct. Anal. Appl. {\bf 26} (1992), 69--71.

\bibitem[KlSg]{KlSg}
	A. Klimyk, K. Schm\"{u}dgen,
	{\it Quantum groups and their representations}.
	Springer-Verlag, 1997.

\bibitem[KKMMNN]{KKMMNN}
	S.-J. Kang, M. Kashiwara, K. Misra, T. Miwa, T. Nakashima, A. Nakayashiki,
	{\it Affine crystals and vertex models}.
	Int. J. Mod. Phys. A {\bf 7}, Suppl. 1A (1992), 449--484.
	
\bibitem[Ko1]{Ko1}
	S. Kolb,
	{\it Quantum symmetric Kac-Moody pairs}.
	Adv. Math. {\bf 267} (2014), 395--469.
	{\tt arXiv:1207.6036}. 
	
\bibitem[Ko2]{Ko2}
	\bysame, unpublished.

\bibitem[KRS]{KRS}
	P. P. Kulish, N. Yu. Reshetikhin, E. Sklyanin,
	{\it Yang-Baxter equation and representation theory: I}.
	Lett. Math. Phys. {\bf 5} (1981), no. 5, 393--403.

\bibitem[KSS]{KSS}
	P. P. Kulish, R. Sasaki, C. Schwiebert,
	{\it Constant Solutions of Reflection Equations and Quantum Groups}.
	J. Math. Phys. {\bf 34} (1993), 286--304.
	{\tt arXiv:hep-th/9205039}.
	
\bibitem[KuMv]{KuMv}
	P. P. Kulish, A. Mudrov,
	{\it Baxterization of solutions to reflection equation with Hecke R-matrix}.
	Lett. Math. Phys. {\bf 75} (2006), no. 2, 151--170.
	{\tt arXiv:math/0508289}.

\bibitem[KuSk1]{KuSk1}
	P. P. Kulish, E. K. Sklyanin,
	{\it Quantum spectral transform method: recent developments}.
	Integrable Quantum Field Theories, Lecture Notes in Physics {\bf 151} (1982), 61--119.
	
\bibitem[KuSk2]{KuSk2}
	\bysame,
	{\it Algebraic structures related to reflection equations}.
	J. Phys. A {\bf 25} (1992), no. 22, 5963--5975.
	{\tt arxiv:hep-th/9209054}.

\bibitem[Le1]{Le1}
	G. Letzter,
	{\it Symmetric Pairs for Quantized Enveloping Algebras}.
	J. Algebra {\bf 220} (1999), 729--767. 
		
\bibitem[Le2]{Le2}
	\bysame,
	{\it Coideal Subalgebras and Quantum Symmetric Pairs}.
	New Directions in Hopf Algebras, MSRI publications 43, Cambridge University Press (2002), 117--166.
	{\tt arXiv:math/0103228}. 
	
\bibitem[Le3]{Le3}
	\bysame,
	{\it Quantum Symmetric Pairs and Their Zonal Spherical Functions}.
	Transformation Groups {\bf 8}, no. 3 (2003), 261--292
	{\tt arXiv:math/0204103}. 

\bibitem[Lu]{Lu}
	G. Lusztig,
	{\it Introduction to quantum groups}.
	Birkh\"auser, Boston, 1994.
	
\bibitem[LvMt]{LvMt}
	D. Levy, P. Martin, 
	{\it Hecke algebra solutions to the reflection equation}.
	J. Phys. A {\bf 27} (1994), no. 14.

\bibitem[MeNe1]{MeNe1}
	L. Mezincescu, R.I. Nepomechie,
	{\it Integrable open spin chains with nonsymmetric R-matrices}.
	J. Phys. A: Math. Gen. {\bf 24} (1991), no. 1, L17.

\bibitem[MeNe2]{MeNe2}
	L. Mezincescu, R.I. Nepomechie, 
	{\it Fusion procedure for open chains}.
	J. Phys. A: Math. Gen. {\bf 25} (1992), no. 9, 2533.
	
\bibitem[Mk]{Mk}
	N. Mackay,
	{\it Introduction to Yangian symmetry in integrable field theory}.
	Int. J. Mod. Phys. A{\bf 20} (2005), 7189--7218.
	{\tt arXiv:hep-th/0409183}.
	
\bibitem[MLS]{MLS}
	R. Malara, A. Lima-Santos,
	{\it On $\mc A^{(1)}_{n-1}$, $\mc B^{(1)}_n$, $\mc C^{(1)}_n$, $\mc D^{(1)}_n$, $\mc A^{(2)}_{2n}$, $\mc A^{(2)}_{2n-1}$ and $\mc D^{(2)}_{n+1}$ reflection $K$-matrices}.
	J. Stat. Mech: Theory and Exp. {\bf 9} (2006), P09013.
	{\tt arXiv:nlin/0412058}.

\bibitem[Mo]{Mo}
	A. Molev,
	{\it Representations of the twisted quantized enveloping algebra of type $C_n$}.
	Moscow Mathematical Journal, {\bf 6} (2006), 531--551.
	{\tt arXiv:math/0602206}.

\bibitem[MoRa]{MoRa} 
	A. Molev, E. Ragoucy, 
	{\it Representations of reflection algebras}.
	Rev. Math. Phys. \textbf{14} (2002), no. 3, 317--342.
	{\tt arXiv:math/0107213}.
	
\bibitem[MRS]{MRS}	
	A. Molev, E. Ragoucy, P. Sorba,
	{\it Coideal subalgebras in quantum affine algebras}.
	Rev. Math. Phys. {\bf 15} (2003), no. 8, 789--822.
	{\tt arXiv:math/0208140}.	
	
\bibitem[NDS]{NDS}
	M. Noumi, M.S. Dijkhuizen and T. Sugitani,
	{\it Multivariable Askey-Wilson polynomials and quantum complex Grassmannians}, 
	AMS Fields Inst. Commun. {\bf 14} (1997), 167-177.
	{\tt arXiv:q-alg/9603014}.

\bibitem[NoSu]{NoSu} 
	M. Noumi, T. Sugitani, 
	{\it Quantum symmetric spaces and related q-orthogonal polynomials}, in: Group Theoretical Methods in Physics (ICGTMP) (Toyonaka, Japan, 1994),
	World Scientific Publishing, River Edge, NJ, (1995), 28--40.
	{\tt arXiv:math/9503225}.
	
\bibitem[Ol]{Ol}
	G. Olshanskii,
	{\it Twisted Yangians and infinite-dimensional classical Lie algebras}.
	Quantum groups (Leningrad, 1990), 104--119, Lecture Notes in Math. 1510, Springer, Berlin, 1992.

\bibitem[RSV1]{RSV1}
	N. Reshetikhin, J. Stokman, B. Vlaar,
	{\it Boundary quantum Knizhnik-Zamolodchikov equations and Bethe vectors}.
	Commun. Math. Phys. {\bf 336} (2015), no. 2, 953--986.
	{\tt arXiv:1305.1113}.

\bibitem[RSV2]{RSV2}
	\bysame,
	{\it Boundary quantum Knizhnik-Zamolodchikov equations and fusion}.
	Annales Henri Poincar\'{e} (2015), 1--41.
	{\tt arXiv:1404.5492}.

\bibitem[RSV3]{RSV3}
	\bysame,
	{\it Integral solutions to boundary quantum Knizhnik-Zamolodchikov equations}.
	Preprint, {\tt arXiv:1602.08457}.

\bibitem[Sk]{Sk}
	E. Sklyanin,
	{\it Boundary conditions for integrable quantum systems}.
	J. Phys. A: Math. Gen. {\bf 21} (1988), 2375--2389.
	
\bibitem[Sm]{Sm}
	F. Smirnov,
	{\it A general formula for solution form factors in the quantum sine-Gordon model}.
	J. Phys. A {\bf 19} no. 10 (1986), L575--L578.

\bibitem[StVl]{StVl}
	J. Stokman, B. Vlaar,
	{\it Koornwinder polynomials and the XXZ spin chain}.
	J. of Appr. Theory {\bf 197} (2015), 69--100.
	{\tt arXiv:1310.5545}.

\bibitem[STS]{STS}
	M. A. Semenov-Tian-Shansky,
	{\it Quantum and Classical Integrable Systems}.
	Springer Berlin Heidelberg, 1997, 314--377.
	{\tt arXiv:q-alg/9703023}.

\bibitem[TaVa]{TaVa}
	V. Tarasov, A. Varchenko,
	{\it Jackson integral representations for solutions to the quantized Knizhnik-Zamolodchikov equation}.
	St. Petersburg Math. J. {\bf 6} (1995), no. 2, 275--313.
	{\tt arXiv:hep-th/9311040}.
	
\bibitem[Tw]{Tw}
	E. Twietmeyer,
	{\it Real forms of $U_q(\mfg)$}.
	Lett. Math. Phys. {\bf 24} (1992), 49--58.

\bibitem[dVGR]{dVGR}
	H. de Vega, A. Gonz\'{a}lez-Ruiz,
	{\it Boundary K-matrices for the XYZ, XXZ and XXX spin chains}.
	J. Phys. A: Math. Gen. {\bf 27} (1994), 6129.
	{\tt arXiv:hep-th/9306089}.

\bibitem[Vl]{Vl}
	B. Vlaar,
	{\it Boundary transfer matrices and boundary quantum KZ equations}.
	J. Math. Phys. {\bf 56} (2015), 071705.
	{\tt arXiv:1408.3364}.

\bibitem[WYCS]{WYCS}
	Y. Wang, W.-L. Yang, J. Cao, K. Shi,
	{\it Off-diagonal Bethe ansatz for exactly solvable models}.
	Springer (2015).

\bibitem[ZaZa]{ZaZa}
	A.B. Zamolodchikov, A.B. Zamolodchikov,
	{\it Factorized S-matrices in two dimensions as the exact solutions of certain relativistic quantum field models}.
	Ann. Phys. {\bf 120} (1979), 253--291.

\end{thebibliography}
\end{document}